\theoremstyle{definition}
\newtheorem{thm}{Theorem}
\newtheorem{cor}[thm]{Corollary}
\newtheorem{lem}[thm]{Lemma}
\newtheorem{prop}[thm]{Proposition}
\theoremstyle{definition}
\newtheorem{defn}{Definition}
\theoremstyle{definition}
\newtheorem{rem}{Remark}
\theoremstyle{definition}
\newtheorem{problem}{Problem}
\theoremstyle{definition}
\newtheorem{conj}{Conjecture}
\theoremstyle{definition}
\newtheorem{example}{Example}
\theoremstyle{definition}
\newcommand{\qqed}{\hfill $\square$}
\begin{document}

%%\begin{CJK}{GBK}{song}

\thispagestyle{empty}

\begin{center}
{\huge \textbf{Graph Set-colorings And Hypergraphs\\[12pt]
In Topological Coding}}\\[14pt]
{\Large \textbf{Bing Yao \footnote{~College of Mathematics and Statistics, Northwest Normal University, Lanzhou, 730070, CHINA \\ email: yybb918@163.com} and Fei Ma \footnote{~School of Computer Science, Northwestern Polytechnical University, Xi'an, 710072, CHINA \\ email: mafei123987@163.com}}\\[12pt]
(\today)}
\end{center}

\vskip 1cm

\begin{quote}
\textbf{Abstract:} We define new set-colorings: parameterized set-coloring, hyperedge-set coloring, distinguishing set-coloring, hypergraph-group coloring, \emph{etc.} We try to study parameterized hypergraph, hypergraph homomorphism, graphic groups based on hypergraphs, and to construct hypergraphs. We strength algebraic means in researching set-colorings and hypergraphs: (i) topological groups including graphic groups, graphic group homomorphisms, matrix groups, string groups, mixed-graphic groups, hypergraph groups, pan-groups, \emph{etc.}; (ii) topological lattices; and (iii) topological homomorphisms. We believe that exploring hypernetworks and their applications is important in mathematical theory and practical application. And we are aiming to apply the techniques of topology code theory in this article to (1) encrypt a network as a whole (homomorphic topology encryption and asymmetric topology cryptograph); (2) seek solutions for some difficult problems of graph theory; (3) investigate graph networks from DeepMind and GoogleBrain, since which generalizes and extends various approaches for neural networks that operate on graphs, and provides a straightforward interface for manipulating structured knowledge and producing structured behaviors for artificial intelligence.\\
\textbf{Mathematics Subject classification}: 05C15, 05C65, 05C60, 06B30, 22A26, 81Q35\\
\textbf{Keywords:} Set-coloring; hypergraph; graphic group; post-quantum cryptography; topology code theory; hypernetwork; graph network.
\end{quote}

\newpage

\pagenumbering{roman}
\tableofcontents

\newpage

\setcounter{page}{1}
\pagenumbering{arabic}

\section{Introduction}

\subsection{Researching background}

In the coming \emph{Quantum Computer Era}, we will face with the following information security challenges:
\begin{asparaenum}[$\bullet$ ]
\item The Shor algorithm can completely destroy the encryption mechanism based on RSA and elliptic curve cryptography as long as the quantum computer has enough logical qubits to perform operations. As known, Shor algorithm can effectively attack RSA, EIGamal, ECC public-key cryptography and DH key agreement protocols which are widely used at present. This indicates that RSA, EIGamal, ECC public-key cryptography and DH key agreement protocols will no longer be secure in the quantum computing environment.
\item There is also an algorithm called Grover, which can completely weaken AES encryption from 128 bits to 64 bits, and then it can be cracked by ordinary computer algorithms.
\item We are in a digital age, and are facing important researching topics of coming quantum computation, lattices and cryptography, privacy computation, privacy computation and hypergraphs and hypernetworks.
\item ChatGPT-series and Sora (AGI) by OpenAI, \emph{etc.} The attack of artificial intelligence equipped with quantum computing on information security will become even crazier.
\item Artificial intelligences occupy every corner of the world. Demis Hassabis, CEO of Google DeepMind, said: In the coming years, artificial intelligence - ultimately general artificial intelligence - may become one of the driving forces behind the greatest social, economic, and scientific changes in history.
\end{asparaenum}

\subsubsection{Information security in the era of quantum computers}

In 2016 the National Institute of Standards and Technology has initiated a standardization procedure for post-quantum cryptosystems. Such cryptosystems are usually based on NP-complete problems for two reasons: NP-complete problems are at least as hard as the hardest problems in NP, but solutions of such problems can be verified efficiently. As research on quantum computers advances, the cryptographic community is searching for cryptosystems that will survive attacks on quantum computers. This area of research is called \emph{post-quantum cryptography}. The main candidates for post-quantum cryptography are:
\begin{asparaenum}[$\bullet$ ]
\item \textbf{Code-based cryptography} is based on the NP-complete problem of decoding a random linear code.
\item \textbf{Lattice-based cryptography} is based on Conjectured security against quantum attacks; Algorithmic simplicity, efficiency, and parallelism; Strong security guarantees from worst-case hardness; NP-complete problems of finding the shortest vector.
\item \textbf{Multivariate cryptography} is based on the NP-complete problem of solving multivariate quadratic equations defined over some finite field.
\item \textbf{Isogeny-based cryptography} is based on finding the isogeny map between two super-singular elliptic curves.
\end{asparaenum}

Notice that the lattice difficulty problem is not only a classical number theory, but also an important research topic of computational complexity theory. Researchers have found that lattice theory has a wide range of applications in cryptanalysis and design. Many difficult problems in lattice have been proved to be NP-hard. So, this kind of cryptosystems are generally considered to have the characteristics of quantum attack resistance (Ref. \cite{Wang-Xiao-Yun-Liu-2014}).

Peikert, in \cite{Chris-Peikert-decade}, pointed: ``\emph{Lattice-based ciphers have the following advantages: Conjectured security against quantum attacks; Algorithmic simplicity, efficiency, and parallelism; Strong security guarantees from worst-case hardness.}''

\subsubsection{Hypergraphs in the era of post-quantum encryption}

As known, all things of high-dimensional data sets are interrelated and interact on each other, we need to study the complex structures of high-dimensional data sets, and the interaction between high-dimensional data sets, one of research tools is \textbf{\emph{hypergraph}}. Since the hypergraph theory was proposed systematically by Claude Berge in 1973, more and more attention has been paid to the research of hypergraph theory and its application.

Hypergraphs can tease out of big data sets proposed in ``\emph{How Big Data Carried Graph Theory Into New Dimensions}'' by Stephen Ornes \cite{how-big-data-graph-theory-20210819}. And large data sets in practical application show that the impact of groups often exceeds that of individuals, thereby, it is more and more important to study hypergraphs.

As a subset system of a finite set, hypergraph is the most general discrete structure, which is widely used in information science, life science and other fields. However, hypergraphs are more difficult to draw on paper than graphs, there are methods for the visualization of hypergraphs, such as Venn diagram, PAOH \emph{etc}. Professor Wang, in his book tilted ``\emph{Information Hypergraph Theory}'' \cite{Jianfang-Wang-Hypergraphs-2008}, has investigated the structure of vertex-intersected graphs of hypergraphs. He said: ``\emph{When computers become very powerful, the security theory of information science requires hypergraphs to procedure and protect information data}.''

\textbf{Some applications of hypergraphs are:}

Undirected hypergraphs are useful in modelling such things as satisfiability problems \cite{Uriel-Kim-Han-Eran-2006}, databases \cite{Beeri-Fagin-Maier-Yannakakis-1983}, machine learning \cite{Huang-Zhang-Yu-Jeffrey-Xu-2015}, and Steiner tree problems \cite{Brazil-M-Zachariasen-2015}. They have been extensively used in machine learning tasks as the data model and classifier regularization (mathematics) \cite{Zhou-Dengyong-Huang-Jiayuan-Scholkopf-Bernhard-2006}. The applications include recommender system (communities as hyperedges) \cite{Tan-Bu-Chen-Xu-Wang-He-2013}, image retrieval (correlations as hyperedges) \cite{Liu-Qingshan-Huang-Metaxas-Dimitris-2013}, and bioinformatics (biochemical interactions as hyperedges) \cite{Patro-Rob-Kingsoford-Carl-2013}. Representative hypergraph learning techniques include hypergraph spectral clustering that extends the spectral graph theory with hypergraph Laplacian \cite{Gao-Wang-Zha-Shen-Li-Wu-Xindong-2013}, and hypergraph semi-supervised learning that introduces extra hypergraph structural cost to restrict the learning results \cite{Tian-Hwang-TaeHyun-Kuang-Rui-2009}. For large scale hypergraphs, a distributed framework \cite{Tan-Bu-Chen-Xu-Wang-He-2013} built using Apache Spark is also available.

Directed hypergraphs can be used to model things including telephony applications \cite{Goldstein-A-1982}, detecting money laundering \cite{Ranshous-Stephen-more-2017}, operations research \cite{Ausiello-Giorgio-Laura-Luigi-2017}, and transportation planning. They can also be used to model Horn-satisfiability \cite{Gallo-Longo-Pallottino-Nguyen-1993}. Directed hypergraphs can be used to model things including telephony applications, detecting money laundering, operations research, and transportation planning, and can also be used to model Horn-satisfiability.

\subsubsection{An example of topology code theory}

Topological coding is a mathematical subbranch based on graph theory, algebra, probability and combinatorics, \emph{etc.} Many techniques of topology code theory can be used in asymmetric cryptography and anti-quantum computing. As a brief introduction to the encryption of topology encoding, we show an example as follows:

\begin{example}\label{exa:topological-signatures-set-coloring}
There are four set-colored graphs (also, four \emph{topological signatures} admitting set-colorings) in Fig.\ref{fig:11-example}, in which the set-colored graph $G_1$ admits a set-coloring $\theta_1$, and it corresponds its own Topcode-matrix $T_{code}(G_1,\theta_1)$ shown in Eq.(\ref{eqa:g1-topcode-matrices-set-colorings}), where the topological structure of each $G_i$ is $H_1$ shown in Fig.\ref{fig:00-example}(a).
\begin{equation}\label{eqa:g1-topcode-matrices-set-colorings}
\centering
{
\begin{split}
T_{code}(G_1,\theta_1)=\left(
\begin{array}{ccccccccc}
\{1\} & \{2,4\} & \{3,6\} & \{2,4\} & \theta_1(x_1) & \{3,6\} & \theta_1(x_1) & \theta_1(x_1) & \theta_1(x_1)\\
\{1\} & \{2\} & \{3\} & \{4\} & \{5\} & \{6\} & \{7\} & \{8\} & \{9\}\\
\theta_1(y_1) & \theta_1(y_1) & \theta_1(y_1) & \{4,7\} & \theta_1(y_1) & \{6,8\} & \{4,7\} & \{6,8\} & \{9\}
\end{array}
\right)
\end{split}}
\end{equation}
where $\theta_1(x_1)=\{5,7,8,9\}$ and $\theta_1(y_1)=\{1,2,3,5\}$ (Ref. Definition \ref{defn:total-coloring-Topcode-matrixs}).

\begin{figure}[h]
\centering
\includegraphics[width=16.4cm]{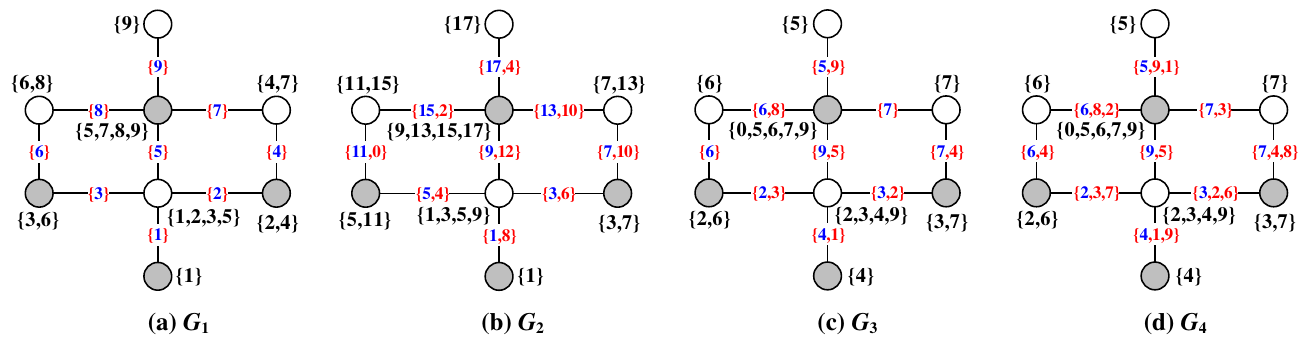}\\
\caption{\label{fig:11-example}{\small Four topological signatures made by the graphs admitting set-colorings based on $H_1$ shown in Fig.\ref{fig:00-example}(a).}}
\end{figure}

Let $R\,^i_{est}(R_{i,1},R_{i,2},\dots ,R_{i,k})$ be a constraint set with $k$ constraints. Moreover, about examples shown in Fig.\ref{fig:11-example}, we have:

\textbf{A.} Let
$${
\begin{split}
\mathcal{E}_1=&\big \{\{5,7,8,9\}, \{1,2,3,5\},\{1\}, \{2\}, \{3\}, \{4\}, \{5\}, \{6\},\\
& \, \{7\}, \{8\}, \{9\},\{2,4\}, \{3,6\}, \{2,4\}, \{6,8\}, \{4,7\}\big \}
\end{split}}
$$ be a hyperedge set based on a consecutive integer set $\Lambda_1=\{1,2,\dots ,9\}=[1,9]$. The connected $(8,9)$-graph $G_1$ shown in Fig.\ref{fig:11-example}(a) admits a \emph{graceful intersection set-coloring} $\theta_1:V(G_1)\cup E(G_1)\rightarrow \mathcal{E}_1$ subject to a constraint set $R\,^1_{est}(c_{A,1},c_{A,2})$ containing the following constraints:

$\mathbf{c_{A,1}}$: Each edge $uv\in E(G_1)$ is colored with an edge color set $\theta_1(uv)=\theta_1(u)\cap \theta_1(v)\neq \emptyset$.

$\mathbf{c_{A,2}}$: The edge color set $\theta_1(E(G_1))=[1,9]$, that is \emph{the graceful constraint}.

The set $\mathcal{E}_1$ is a hyperedge set holding $[1,9]=\bigcup _{e_i\in \mathcal{E}_1}e_i$ true, so $G_1$ is a vertex-intersected graph of a hypergraph $\mathcal{H}_{yper}=([1,9],\mathcal{E}_1)$ (Ref. Definition \ref{defn:set-hyperedge-sets}).

\textbf{B.} Let
$${
\begin{split}
\mathcal{E}_2=&\big \{\{17\}, \{11,15\},\{9,13,15,17\}, \{7,13\}, \{5,11\}, \{1,3,5,9\}, \{3,7\}, \{1\},\\
& \, \{17,4\}, \{15,2\}, \{13,10\},\{11,0\}, \{9,12\}, \{7,10\}, \{5,4\}, \{3,6\}, \{1,8\}\big \}
\end{split}}
$$ be a hyperedge set based on a consecutive integer set $\Lambda_2=\{1,2,\dots ,17\}\setminus \{12,14,16\}$. The connected $(8,9)$-graph $G_2$ shown in Fig.\ref{fig:11-example}(b) admits an \emph{odd-graceful intersection set-coloring} $\theta_2:V(G_2)\cup E(G_2)\rightarrow \mathcal{E}_2$ subject to a constraint set $R\,^2_{est}(c_{B,1},c_{B,2},c_{B,3})$ containing the following constraints:

$\mathbf{c_{B,1}}$: Each edge $xy\in E(G_2)$ is colored with an edge color set $\theta_2(xy)\supseteq \theta_2(x)\cap \theta_2(y)\neq \emptyset$;

$\mathbf{c_{B,2}}$: There is an odd-integer set $\{1,3,5, \dots ,17\}=[1,17]^o$ from edge color set $\theta_2(xy)$ for each edge $xy\in E(G_2)$, that is \emph{the odd-graceful constraint}.

$\mathbf{c_{B,3}}$: There are integers $a_{xy}\in \theta_2(xy)$, $a_{x}\in \theta_2(x)$ and $a_{y}\in \theta_2(y)$ holding each integer $a_{xy}=|a_{x}-a_{y}|$ to be odd.

\textbf{C.} Let
$${
\begin{split}
\mathcal{E}_3=&\big \{\{5\}, \{6\},\{0,5,6,7,9\}, \{7\}, \{2,6\}, \{2,3,4,9\}, \{3,7\}, \{4\},\\
& \, \{5,9\}, \{6,8\}, \{7\},\{6\}, \{9,5\}, \{7,4\}, \{2,3\}, \{3,2\}, \{4,1\}\big \}
\end{split}}
$$ be a hyperedge set based on a consecutive integer set $\Lambda_3=\{0,1,2,\dots ,9\}=[0,9]$. The connected $(8,9)$-graph $G_3$ shown in Fig.\ref{fig:11-example}(c) admits an \emph{edge-magic total intersection set-coloring} $\theta_3$ from $V(G_3)\cup E(G_3)$ to $\mathcal{E}_3$ subject to a constraint set $R\,^3_{est}(c_{C,1},c_{C,2})$ containing the following constraints:

$\mathbf{c_{C,1}}$: Each edge $uv\in E(G_3)$ is colored with an edge color set $\theta_3(uv)\supseteq \theta_3(u)\cap \theta_3(v)\neq \emptyset$.

$\mathbf{c_{C,2}}$: There are some integers $b_{uv}\in \theta_3(uv)$, $b_{u}\in \theta_3(u)$ and $b_{v}\in \theta_3(v)$ for each edge $uv\in E(G_3)$, such that the \emph{edge-magic constraint} $b_{u}+b_{uv}+b_{v}=14$ holds true.

\textbf{D.} Let
$${
\begin{split}
\mathcal{E}_4=&\big \{\{5\}, \{6\},\{0,5,6,7,9\}, \{7\}, \{2,6\}, \{2,3,4,9\}, \{3,7\}, \{4\},\\
& \, \{5,9,1\}, \{6,8,2\}, \{7,3\},\{6,4\}, \{9,5\}, \{7,4,8\}, \{2,3,7\}, \{3,2,6\}, \{4,1,9\}\big \}
\end{split}}
$$ be a hyperedge set based on a consecutive integer set $\Lambda_4=\{0,1,2,\dots ,9\}=[0,9]$. The connected $(8,9)$-graph $G_4$ shown in Fig.\ref{fig:11-example}(d) admits an \emph{edge-magic total intersection set-coloring} $\theta_4$ from $V(G_4)\cup E(G_4)$ to $\mathcal{E}_4$ subject to a constraint set $R\,^4_{est}(c_{D,1},c_{D,2},c_{D,3},c_{D,4})$ consisted of the following constraints:

$\mathbf{c_{D,1}}$: The color set of each edge $uv\in E(G_4)$ holds $\theta_4(uv)\supseteq \theta_4(u)\cap \theta_4(v)\neq \emptyset$.

$\mathbf{c_{D,2}}$: Each edge $uv\in E(G_4)$ holds the \emph{edge-magic constraint} $R_{u}+R_{uv}+R_{v}=14$ for some integers $R_{uv}\in \theta_4(uv)$, $R_{u}\in \theta_4(u)$ and $R_{v}\in \theta_4(v)$.

$\mathbf{c_{D,3}}$: Each edge $uv\in E(G_4)$ holds the \emph{felicitous-difference constraint} $\big ||d_{u}-d_{v}|-d_{uv}\big |=4$ for some integers $d_{uv}\in \theta_4(uv)$, $d_{u}\in \theta_4(u)$ and $d_{v}\in \theta_4(v)$.

$\mathbf{c_{D,4}}$: $\{\textrm{some }e_i\in \theta_4(uv): uv\in E(G_4)\}=[1,9]$.\qqed
\end{example}

\begin{rem}\label{rem:333333}
The examples shown in Fig.\ref{fig:11-example} enable us to obtain:

\textbf{(1) Number-based strings.} The Topcode-matrix $T_{code}(G_1,\theta_1)$ shown in Eq.(\ref{eqa:g1-topcode-matrices-set-colorings}) can produce the following number-based strings
\begin{equation}\label{eqa:two-number-based-strings}
{
\begin{split}
s_{1,public}&=1243624~5789~36~578957895789~123456789~123512351235~47~1235~6847689\\
s_{1,private}&=9867486~5321~74~532153215321~987654321~987598759875~63~9875~4263421
\end{split}}
\end{equation}
as a pair of keys. We can get
$$(27!)\times 294912=3211258267361780000000000000000000~(>2^{111})$$ number-based strings generated from the Topcode-matrix $T_{code}(G_1,\theta_1)$, like two number-based strings $s_{1,public}$ and $s_{1,private}$ shown in Eq.(\ref{eqa:two-number-based-strings}), each of these number-based strings has 57 bytes.

\begin{thm}\label{thm:666666}
$^*$ The number-based strings generated from the Topcode-matrix $T_{code}$ can be classified into two kinds $S_{public}$ and $S_{private}$, such that each number-based string $s\in S_{public}$ corresponds to a unique number-based string $s\,'\in S_{private}$, and they have the same cardinality $|S_{public}|=|S_{private}|$.
\end{thm}

\textbf{(2) The uniqueness of topological signatures.} There are five topological structures $H_1,H_2,H_3,H_4,H_5$ shown in Fig.\ref{fig:00-example}, such that each graph $H_i$ of them corresponds its own Topcode-matrix $T_{code}(H_i)=T_{code}$ shown in Eq.(\ref{eqa:general-topology-code-matrices}).

\begin{equation}\label{eqa:general-topology-code-matrices}
\centering
{
\begin{split}
T_{code}=\left(
\begin{array}{ccccccccc}
x_4 & x_3 & x_2 & x_3 & x_1 & x_2 & x_1 & x_1 & x_1\\
x_4y_1 & x_3y_1 & x_2y_1 & x_3y_2 & x_1y_1 & x_2y_3 & x_1y_2 & x_1y_3 & x_1y_4\\
y_1 & y_1 & y_1 & y_2 & y_1 & y_3 & y_2 & y_3 & y_4
\end{array}
\right)
\end{split}}
\end{equation}

\begin{figure}[h]
\centering
\includegraphics[width=16.4cm]{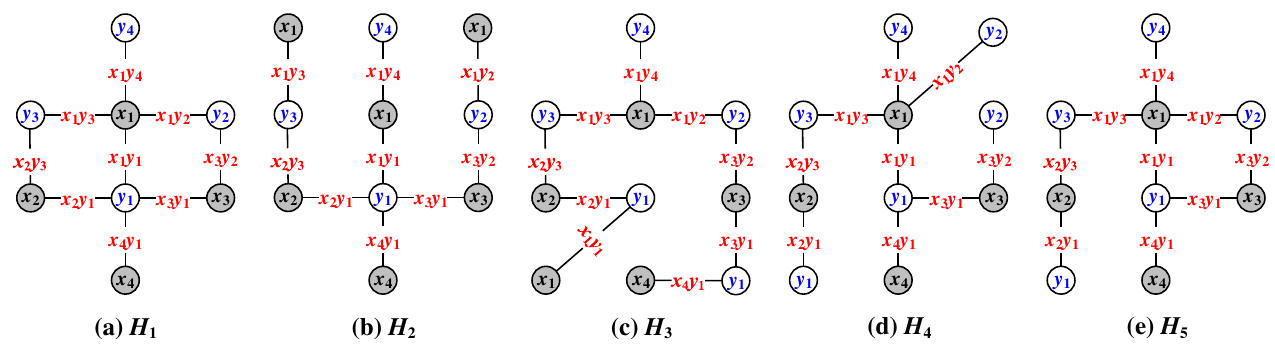}\\
\caption{\label{fig:00-example}{\small Different topological structures.}}
\end{figure}

However, two topological structures $H_i$ and $H_j$ with $i\neq j$ are not isomorphic from each other, namely, $H_i\not\cong H_j$, since two adjacent matrices $A(H_i)$ and $A(H_j)$ are not similar from each other, even two adjacent matrices $A(H_i)$ and $A(H_j)$ have the same order, but there is no a reversible matrix $P$ holding $A(H_i)=PA(H_i)P^{-1}$. This property is just the uniqueness of topological signatures for ensuring the security and uniqueness of identity authentication in practical application scenarios.

\begin{equation}\label{eqa:adjacent-matrices}
\centering
{
\begin{split}
A(H_1)=\left(
\begin{array}{c|cccccccc}
* & x_1 & x_2 & x_3 & x_4 & y_1 & y_2 & y_3 & y_4\\
\hline
x_1 & 0 & 0 & 0 & 0 & 1 & 1 & 1 & 1\\
x_2 & 0 & 0 & 0 & 0 & 1 & 0 & 1 & 0\\
x_3 & 0 & 0 & 0 & 0 & 1 & 1 & 0 & 0\\
x_4 & 0 & 0 & 0 & 0 & 1 & 0 & 0 & 0\\
y_1 & 1 & 1 & 1 & 1 & 0 & 0 & 0 & 0\\
y_2 & 1 & 0 & 1 & 0 & 0 & 0 & 0 & 0\\
y_3 & 1 & 1 & 0 & 0 & 0 & 0 & 0 & 0\\
y_4 & 1 & 0 & 0 & 0 & 0 & 0 & 0 & 0
\end{array}
\right)=\left(
\begin{array}{ccccccccc}
0 & 0 & 0 & 0 & 1 & 1 & 1 & 1\\
0 & 0 & 0 & 0 & 1 & 0 & 1 & 0\\
0 & 0 & 0 & 0 & 1 & 1 & 0 & 0\\
0 & 0 & 0 & 0 & 1 & 0 & 0 & 0\\
1 & 1 & 1 & 1 & 0 & 0 & 0 & 0\\
1 & 0 & 1 & 0 & 0 & 0 & 0 & 0\\
1 & 1 & 0 & 0 & 0 & 0 & 0 & 0\\
1 & 0 & 0 & 0 & 0 & 0 & 0 & 0
\end{array}
\right)_{8\times 8}
\end{split}}
\end{equation}
Other adjacent matrices are $A(H_2)_{10\times 10}$, $A(H_3)_{10\times 10}$, $A(H_4)_{10\times 10}$ and $A(H_5)_{9\times 9}$.

\vskip 0.4cm

Each set-colored graph $G_i$ with $i\in [1,4]$ shown in Fig.\ref{fig:11-example} corresponds four set-colored graphs $G_{i,j}$ holding $G_{i,j}\cong H_j$ for $j\in [2,5]$ shown in Fig.\ref{fig:00-example}, such that $T_{code}(G_i,\theta_i)=T_{code}(G_{i,j},\theta_{i,j})$ for $i\in [1,4]$ and $j\in [2,5]$, although two topological signatures $G_{i,j}\not\cong G_{i,s}$ for $j,s\in [2,5]$ and $j\neq s$.

\textbf{(3) The mixed set-colorings.} By Fig.\ref{fig:11-example} and Fig.\ref{fig:00-example}, notice that $H_1\cong G_i$ for each $i\in [1,4]$, we define a \emph{set-set-coloring} $F$ for the graph $H_1$ shown in Fig.\ref{fig:11-example}(a) as: $F(x)=\{\theta_i(x):~i\in [1,4]\}$ for each vertex $x\in V(H_1)=V(G_i)$ with $i\in [1,4]$, and $F(xy)=\{\theta_i(xy):~i\in [1,4]\}$ for each edge $xy\in E(H_1)=E(G_i)$ with $i\in [1,4]$.

A set-set-coloring is a \emph{compound set-coloring} (Ref. Definition \ref{defn:set-set-coloring}).\qqed
\end{rem}

In Fig.\ref{fig:topological-encryption}, we show a diagram for the algorithmic programming of the asymmetric topology encryption.

\begin{figure}[h]
\centering
\includegraphics[width=16.5cm]{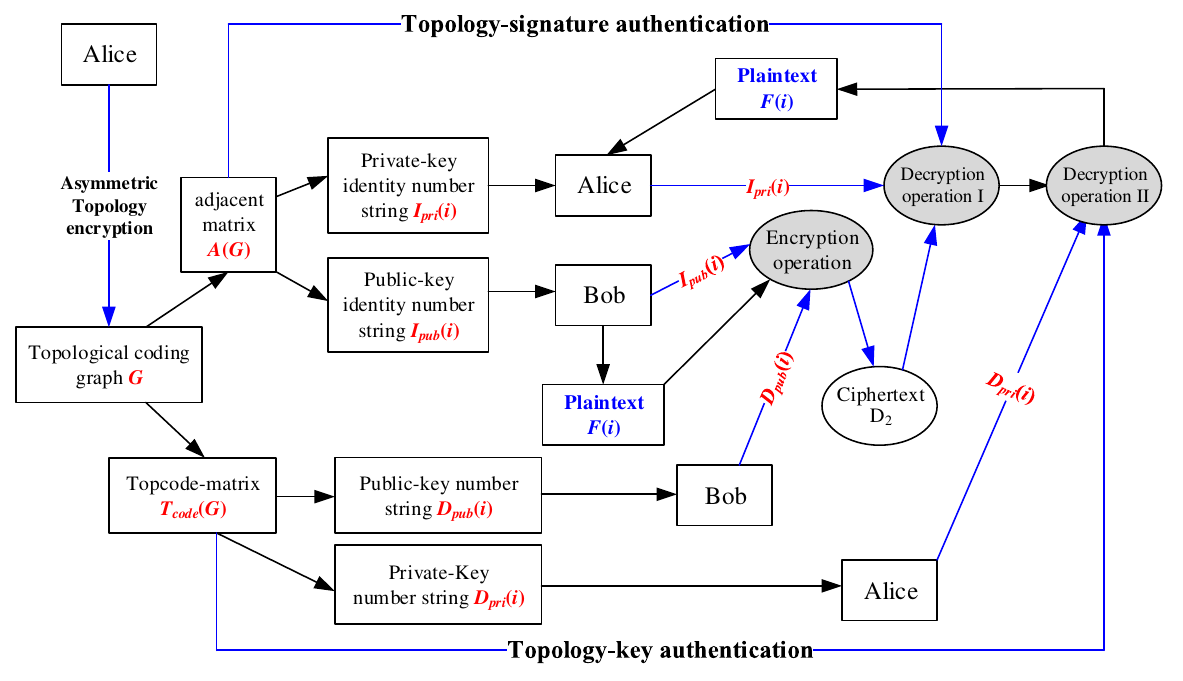}
\caption{\label{fig:topological-encryption}{\small A scheme for the asymmetric topology encryption, where Alice and Bob are a pair of communication users in a network.}}
\end{figure}

We propose the following FCGSC-problem (The problem of finding a set-colored graph admitting a $W$-constraint set-coloring):
\begin{quote}
\textbf{FCGSC-problem}: For a given $[0,9]$-string $s=c_1c_2\cdots c_n$ with $c_i\in [0,9]=\{0,1,2,\dots ,9\}$, \textbf{find} a set-colored graph $G$ admitting a $W$-constraint set-coloring $f$, such that the Topcode-matrix $T_{code}(G,f)$ deduces just the given number-based string $s$.
\end{quote}

\begin{rem}\label{rem:FCGSC-problem-difficulties}
To solve the FCGSC-problem, those people who attacking asymmetric topological encryption will encounter the following difficult points:
\begin{asparaenum}[\textrm{\textbf{Diff}}-1. ]
\item \textbf{Rewriting number-based string.} First, we need to write a number-based string $s=c_1c_2\cdots c_n$ with $c_i\in [0,9]$ as another number-based string $s^*=a_1a_2\cdots a_{3q}$, where each segment string $a_j=a_{j,1}a_{j,2}\cdots a_{j,b_j}$ for $j\in [1,3q]$, such that each $c_i$ of the number-based string $s$ appears in one and only one segment string $a_j$. As of now, there is no polynomial method reported to complete this task, so this is related with NP-type problems.
\item \textbf{Constructing a Topcode-matrix holding a $W$-constraint.} Using the number-based string $s^*=a_1a_2\cdots a_{3q}$ writes a Topcode-matrix $T_{code}$ of order $3\times q$ (Ref. Definition \ref{defn:topcode-matrix-definition} and Definition \ref{defn:total-coloring-Topcode-matrixs}), such that the number-based string $s^*$ is a permutation of the elements of the Topcode-matrix $T_{code}$, and $s^*$ holds some $W$-constraint $R^i_{est}(c_1,c_2,\cdots ,c_k)$ (Ref. Example \ref{exa:topological-signatures-set-coloring}). Unfortunately, there is no guarantee that there is no another Topcode-matrix $T\,'_{code}$, which can deduces the number-based string $s^*$.
\item \textbf{Subgraph Isomorphic NP-complete Problem.} Find a colored graph $G$ of $p$ vertices and $q$ edges admitting a $W$-constraint set-coloring $f$, such that the colored graph $G$ has its own Topcode-matrix $T_{code}(G,f)=T_{code}$.

\quad \textbf{Finding} the colored graph $G$ will meet the \textbf{Subgraph Isomorphic NP-complete Problem}, and moreover, for the number $n_p$ of graphs having $p$ vertices, we have two numbers $n_{23}$ and $n_{24}$ of different topological structures of graphs on 23 vertices and 24 vertices as follows
\begin{equation}\label{eqa:number-graphs-23-24-vertices}
{
\begin{split}
n_{23}&=559946939699792080597976380819462179812276348458981632\approx 2^{179}\\
n_{24}&=195704906302078447922174862416726256004122075267063365754368\approx 2^{197}
\end{split}}
\end{equation}
computed by Harary and Palmer \cite{Harary-Palmer-1973}.

\quad \textbf{Finding} the wanted colored graph $G$ is a terrible computational task for supercomputers, even for quantum computers as graphs have numerous vertices.

\quad \textbf{Finding} the $W$-constraint set-coloring $f$ of the colored graph $G$ also is sharp-P-hard, since the number of colorings of graphs is changing everyday, and the $W$-constraint $R^i_{est}(c_1,c_2,\cdots ,c_k)$ with $k$ constraints is related with a large number of unresolved mathematical conjectures.

\item Proposition \ref{prop:translated-number-based-string} tells us: A number-based string can be generated by the Topcode-matrices of two colored graphs $G$ and $H$, such that $G\not\cong H$. So, \textbf{Finding} the wanted colored graph $G$ is sharp-P-hard.\qqed
\end{asparaenum}
\end{rem}

By Remark \ref{rem:FCGSC-problem-difficulties}, we are able to claim:

(i) Using a given number-based string to find out a colored graph $G$ having its own Topcode-matrix $T_{code}(G,f)$ based on a $W$-constraint set-coloring $f$ is also NP-complete.

(ii) Due to the irreversibility and difficulty in cracking topological number-based strings, the absence of polynomial algorithms, and the presence of a large number of unresolved mathematical conjectures and challenges, any algorithm designed by using the topology encoding technology of this article has \emph{computable security} and \emph{provable security}, without the need for every practical application algorithm to undergo the demonstrations of computable security and provable security.

\subsubsection{Main topics in this article}

Graph set-colorings have been investigated by many researchers. Bollob\'{a}s and Thomason \cite{Bollobas-Thomason-2006} researched: ``\emph{An $r$-set coloring of a graph $G$ is an assignment of $r$ distinct colors to each vertex of the graph $G$ so that the sets of colors assigned to adjacent vertices are disjoint}.'' The \emph{sumset-labelling} was discussed in \cite{N-K-Sudev-2015}. Balister, Gy\H{o}i and Schelp \cite{Balister-Gyoi-Schelp-2011} discussed the strongly set-colorable graphs. Hegde \cite{S-M-Hegde-2009} introduced another type of set-coloring: ``\emph{A set-coloring of the graph $G$ is an assignment (function) of distinct subsets of a finite set $X$ of colors to the vertices of the graph, where the colors of the edges are obtained as the symmetric differences of the sets assigned to their end vertices which are also distinct.}''

Set-colorings serve to make more complex number-based strings from topology code theory for defending against the \emph{intelligent attacks} equipped with quantum computing and providing effective protection technology for the age of quantum computing. Graphs can be quickly converted into graphs admitting set-colorings, which will produce more complex number-based strings for serving information security, and moreover graph colorings and labelings are special set-colorings (Ref. \cite{Bing-Yao-arXiv:2207-03381}, \cite{Bing-et-al-arXiv-asymmetric-4520331}, \cite{Yao-Ma-arXiv-2201-13354v1}, \cite{Yao-Sun-Zhang-Mu-Sun-Wang-Su-Zhang-Yang-Yang-2018arXiv}, \cite{Yao-Wang-2106-15254v1}, \cite{Yao-Zhang-Sun-Mu-Sun-Wang-Wang-Ma-Su-Yang-Yang-Zhang-2018arXiv} and \cite{Yao-Sun-Zhang-Li-Yan-Zhang-Wang-ITOEC-2017}).

As known, graph colorings and labelings are special set-colorings, and more important is: Simplicity complex connects topology and graph theory, and provides a visual angle to observe hypergraphs through topological structure. Due to the limited number of visualization techniques for hypergraphs, it is difficult to master completely them.

\vskip 0.2cm

We, in this paper, have kept the many contents of the article \cite{Yao-Ma-arXiv-2201-13354v1}, and moreover add recent researching contents. We will focus on the following researching topics:
\begin{asparaenum}[\textrm{\textbf{Point}}-1.]
\item We will design vertex/edge-intersected graphs as a visualization of hypergraphs, and apply graph set-colorings for investigating hypergraphs.
\item Use set-type Topcode-matrices as an algebraic visualization of hypergraphs.
\item Set-colored graphs can be regarded as a non-direct or partial visualization of hypergraphs.
\item Study hyperedge sets of hypergraphs.
\item Researching set-colored graphs admitting set-colorings, especially related with hypergraphs.
\item Investigate set-colored graph homomorphism, hypergraph homomorphism.
\item Apply the theory of hypergraphs to information encryption.
\item Research various topological groups by means of algebraic methods.
\item Explore hypernetworks and hypernetworks lattices.
\item Try researching graph networks proposed by DeepMind and GoogleBrain.
\item Try more algebraic methods in studying hypergraphs.
\end{asparaenum}

\subsection{Terminology and notation}

Standard terminology and notation of graphs used here are cited from \cite{Bang-Jensen-Gutin-digraphs-2007}, \cite{Bondy-2008} and \cite{Gallian2022}, and all graphs mentioned here are \emph{simple} (namely, no \emph{multiple-edges} and \emph{loops}), unless otherwise stated, and \emph{graph colorings} and \emph{labelings} mentioned here are in \cite{Gallian2022} and \cite{Yao-Wang-2106-15254v1} if no definitions for them. We will employ the following notation and terminology:

\begin{asparaenum}[$\bullet$]
\item The \emph{number} (also, \emph{cardinality}) of elements of a set $X$ is denoted as $|X|$.
\item All non-negative integers are collected in the set $Z^0$, and all integers are in the set $Z$, so the positive integer set $Z^+=Z^0\setminus \{0\}$.
\item A $(p,q)$-graph $G$ is a \emph{topological structure} having $p$ vertices and $q$ edges, and $G$ has no multiple edge, loop and directed-edge, such that its own vertex set $V(G)$ holds the cardinality $|V(G)|=p$ and its own edge set $E(G)$ holds the cardinality $|E(G)|=q$. And the \emph{complementary graph} of the $(p,q)$-graph $G$ is denoted as $\overline{G}$, such that $|V(G)|=p=|V(\overline{G})|$ and $E(G)\cup E(\overline{G})=E(K_p)$, where $K_p$ is a \emph{complete graph} of $p$ vertices.
\item The \emph{degree} of a vertex $x$ in a $(p,q)$-graph $G$ is denoted as $\textrm{deg}_G(x)=|N_{ei}(x)|$, where $N_{ei}(x)$ is the set of \emph{neighbors} of the vertex $x$, such that each edge $xy\in E(G)$ for each neighbor vertex $y\in N_{ei}(x)$, also, we call $N_{ei}(x)$ \emph{adjacent neighbor set}.
\item A vertex $x$ in a graph is called a \emph{leaf} if its degree $\textrm{deg}_G(x)=1$.
\item The symbol $[a,b]$ stands for a consecutive integer set $\{a,a+1,a+2,\dots, b\}$ with two integers $a,b$ subject to $0\leq a<b$, and the notation $[\alpha,\beta]^o$ denotes an \emph{odd-set} $\{\alpha,\alpha+2,\dots, \beta\}$ with odd integers $\alpha,\beta$ holding $1\leq \alpha<\beta$ true.
\item Let $\Lambda$ be a set. The set of all subsets of $\Lambda$ is denoted as $\Lambda^2=\{X:~X\subseteq \Lambda\}$, called the \emph{power set} of $\Lambda$, and the power set $\Lambda^2$ contains no empty set at all. For example, for a given set $\Lambda=\{a,b,c,d,e\}$, the power set $\Lambda^2$ has its own elements $\{a\}$, $\{b\}$, $\{c\}$, $\{d\}$, $\{e\}$, $\{a,b\}$, $\{a,c\}$, $\{a,d\}$, $\{a,e\}$, $\{b,c\}$, $\{b,d\}$, $\{b,e\}$, $\{c, d\}$, $\{c, e\}$, $\{d,e\}$, $\{a,b,c\}$, $\{a,b,d\}$, $\{a,b,e\}$, $\{a,c,d\}$, $\{a,c,e\}$, $\{a,d,e\}$, $\{b,c,d\}$, $\{b,c,e\}$, $\{a,b,c,d\}$, $\{a,b,c,e\}$, $\{a,c,d,e\}$, $\{b,c,d,e\}$ and $\Lambda$ itself.

\qquad Moreover, the integer set $[1,4]=\{1,2,3,4\}$ induces a power set $[1,4]^2=\big \{\{1\},\{2\},\{3\},\{4\}$, $\{1,2\}$, $\{1,3\},\{1,4\}$, $\{2,3\}$, $\{2,4\}$, $\{3,4\}$, $\{1,2,3\}$, $\{1,2,4\}$, $\{1,3,4\}$, $\{2,3,4\}$, $[1,4]\big \}$, the number of subsets of the integer set $[1,4]$ is $\big |[1,4]^2\big |=2\,^4-1=15$, in total.
\item A \emph{sequence} $\textbf{\textrm{d}}=(m_1,m_2,\dots,m_n)=\{m_k\}^n_{k=1}$ consists of positive integers $m_1, m_2, \dots , m_n$. If a graph $G$ has its \emph{degree-sequence} ${\textrm{deg}}(G)=\textbf{\textrm{d}}$, then $\textbf{\textrm{d}}$ is \emph{graphical} by Lemma \ref{thm:basic-degree-sequence-lemma}, and we call $\textbf{\textrm{d}}$ \emph{degree-sequence}, and each $m_i$ \emph{degree component}, and $n=L_{\textrm{ength}}(\textbf{\textrm{d}})$ \emph{length} of $\textbf{\textrm{d}}$.
\item For integers $r,k\geq 0$ and $s,d\geq 1$, we define a \emph{parameterized set} as follows
\begin{equation}\label{eqa:two-parameterized-sets11}
S_{s,k,r,d}=\big \{k+rd,k+(r+1)d,\dots ,k+(r+s)d\big \}
\end{equation} and define a \emph{parameterized odd-set} as follows
\begin{equation}\label{eqa:two-parameterized-sets22}
O_{s,k,r,d}=\big \{k+[2(r+1)-1]d,k+[2(r+2)-1]d,\dots ,k+[2(r+s)-1]d\big \}
\end{equation}
\item A \emph{tree} is a connected and acyclic graph. A \emph{caterpillar} $T$ is a tree, if removing all of leaves from the caterpillar $T$, the remainder is just a \emph{path}. A \emph{lobster} is a tree too, and the deletion of all leaves of the lobster produces a caterpillar.
\item A topological isomorphism $G\cong H$ is a configuration identity on two graphs $G$ and $H$, it is independent of the colorings and drawing methods of these two graphs.
\end{asparaenum}

\begin{lem} \label{thm:basic-degree-sequence-lemma}
\cite{Bondy-2008} (Erd\"{o}s-Galia Theorem) A sequence $\textbf{\textrm{d}}=(m_k)^n_{k=1}$ with $m_{i}\geq m_{i+1}\geq 0$ to be \emph{degree-sequence} of a $(p,q)$-graph graph $G$ if and only if the sum $\sum^n_{i=1}m_i=2q$ and
\begin{equation}\label{eqa:basic-degree-sequence-lemma}
\sum^k_{i=1}m_i\leq k(k-1)+\sum ^n_{j=k+1}\min\{k,m_j\},~ 1\leq k\leq n
\end{equation}
\end{lem}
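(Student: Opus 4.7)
The plan is to prove the ``only if'' and ``if'' directions separately, following the classical pattern used by Erd\H{o}s and Gallai. The degree-sum identity $\sum_{i=1}^{n} m_i=2q$ is immediate from the handshake lemma, and I would dispose of it first in both directions. The substantive content is entirely in the family of inequalities Eq.(\ref{eqa:basic-degree-sequence-lemma}).

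For the necessity direction, I would list the vertices of a realizing graph $G$ as $v_1,v_2,\dots,v_n$ with $\textrm{deg}_G(v_i)=m_i$ and fix $k\in[1,n]$. Writing $S_k=\{v_1,\dots,v_k\}$, I would partition the degree-sum of $S_k$ as
\begin{equation*}
\sum_{i=1}^{k} m_i \;=\; 2\,|E(G[S_k])|\,+\,e\bigl(S_k,\,V(G)\setminus S_k\bigr).
\end{equation*}
Because $G[S_k]$ has at most $\binom{k}{2}$ edges, the first term contributes at most $k(k-1)$. For the crossing term, each vertex $v_j$ with $j>k$ sends at most $\min\{k,m_j\}$ edges into $S_k$, the first bound coming from $|S_k|=k$ and the second from $\textrm{deg}_G(v_j)=m_j$. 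Summing over $j>k$ yields the required bound.

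The sufficiency direction is the substantial part, and I would handle it by induction on $n$ together with a Havel--Hakimi-style reduction. Given a sequence $\textbf{\textrm{d}}=(m_k)^n_{k=1}$ satisfying the hypotheses, I would form the reduced sequence $\textbf{\textrm{d}}'$ by deleting $m_1$, subtracting $1$ from each of the next $m_1$ terms, and re-sorting into non-increasing order. If I can show that $\textbf{\textrm{d}}'$ is itself a non-negative non-increasing sequence satisfying both Erd\H{o}s--Gallai conditions, then the induction hypothesis supplies a graph $G'$ realizing $\textbf{\textrm{d}}'$, and re-attaching a vertex $v_1$ joined to the first $m_1$ vertices of $G'$ yields a graph realizing $\textbf{\textrm{d}}$.

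The main obstacle is exactly the step ``the reduced sequence still satisfies the inequalities'': the subtraction can disturb the ordering and thus shift indices, making the inequality at each $k$ awkward to compare before and after the reduction. To manage this cleanly, I would invoke the Tripathi--Vijay refinement, which shows that it suffices to verify the inequalities only at those indices $k$ with $m_k>m_{k+1}$ together with $k=n$, drastically shrinking the case analysis and avoiding the re-sort headache at intermediate indices. An alternative route, if one prefers to avoid that refinement, is the classical edge-switch argument: start from any realization of a sequence dominated by $\textbf{\textrm{d}}$ in the majorization order, then repeatedly apply $2$-switches (replacing edges $xu,\,yv$ by $xv,\,yu$ while preserving simpleness) to migrate edges so that $v_1$ becomes adjacent precisely to the top $m_1$ other vertices, after which induction proceeds as above.
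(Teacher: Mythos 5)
The paper does not prove this lemma at all: it is stated as a cited classical result (attributed to \cite{Bondy-2008}), so there is no internal proof to compare your proposal against. Judged on its own, your necessity argument is complete and correct: decomposing $\sum_{i=1}^{k}m_i$ as twice the number of edges inside $S_k$ plus the number of edges crossing to $V(G)\setminus S_k$, and bounding the two terms by $k(k-1)$ and $\sum_{j>k}\min\{k,m_j\}$ respectively, is exactly the standard proof of that direction, and the handshake observation disposes of $\sum m_i=2q$.

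The sufficiency half, however, is an outline that names the hard step without executing it, and this is a genuine gap. In the Havel--Hakimi route, the entire content of the induction is the claim that the reduced, re-sorted sequence $\textbf{\textrm{d}}'$ still satisfies Eq.(\ref{eqa:basic-degree-sequence-lemma}); invoking the Tripathi--Vijay refinement only shrinks the set of indices $k$ at which the inequality must be checked, it does not perform the verification, and nowhere in your sketch do the hypothesized inequalities for $\textbf{\textrm{d}}$ actually get used to derive those for $\textbf{\textrm{d}}'$. The 2-switch alternative has the same problem in a different place: you would need to exhibit a starting subrealization and prove that a vertex whose degree is still deficient can always gain an edge via a 2-switch without destroying the degrees already fixed, and it is precisely at that step that the Erd\H{o}s--Gallai inequalities enter (they guarantee the deficient vertex has an available non-neighbour to switch toward). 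As written, neither route ever touches the hypothesis, so neither can yet be called a proof; either one can be completed, but the completion is where essentially all of the work of the theorem lives.
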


\begin{defn}\label{defn:topcode-matrix-definition}
\cite{Yao-Sun-Zhao-Li-Yan-2017} A \emph{Topcode-matrix} (or \emph{topology code theory matrix}) is defined as
\begin{equation}\label{eqa:Topcode-matrix}
\centering
{
\begin{split}
T_{code}= \left(
\begin{array}{ccccc}
x_{1} & x_{2} & \cdots & x_{q}\\
e_{1} & e_{2} & \cdots & e_{q}\\
y_{1} & y_{2} & \cdots & y_{q}
\end{array}
\right)_{3\times q}=
\left(\begin{array}{c}
X\\
E\\
Y
\end{array} \right)=(X,~E,~Y)^{T}
\end{split}}
\end{equation}\\
where \emph{v-vector} $X=(x_1, x_2, \cdots, x_q)$, \emph{e-vector} $E=(e_1$, $e_2 $, $ \cdots $, $e_q)$, and \emph{v-vector} $Y=(y_1, y_2, \cdots, y_q)$ consist of non-negative integers $e_i$, $x_i$ and $y_i$ for $i\in [1,q]$. We say $T_{code}$ to be \emph{$W$-constraint} if there exists an equation $W$ such that $W[x_i, e_i, y_i]=0$ for $i\in [1,q]$, and call $x_i$ and $y_i$ to be the \emph{ends} of $e_i$, as well as $q$ is the \emph{size} of $T_{code}$.\qqed
\end{defn}

A Topcode-matrix $T_{code}$ defined in Definition \ref{defn:topcode-matrix-definition} is \emph{graphable} if there is a $(p,q)$-graph $G$ having its own edge set $E(G)=\{e_{1},e_{2},\dots ,e_{q}\}$ holding $e_i=x_iy_i$ for each $i\in [1,q]$. Lemma \ref{thm:basic-degree-sequence-lemma} can help us to determine whether a Topcode-matrix is graphable.

\subsection{Labelings and colorings of graphs}

Many of graph colorings and labelings of graph theory were introduced in \cite{Gallian2022} and \cite{Yao-Wang-2106-15254v1}.

\begin{defn}\label{defn:totally-normal-labeling}
\cite{su-yan-yao-2018} \textbf{Distinctiveness of labeling and coloring.} Suppose that a $(p,q)$-graph $G$ admits a coloring $f:V(G)\rightarrow [m,n]$ or a total coloring $f: V(G)\cup E(G)\rightarrow [m,n]$, we denote the set of vertex colors of the graph $G$ as
\begin{equation}\label{eqa:555555}
f(V(G))=\{f(x):x\in V(G)\}
\end{equation} and the set of edge colors of the graph $G$ by
\begin{equation}\label{eqa:555555}
f(E(G))=\{f(uv):uv\in E(G)\}
\end{equation} and the total color set by $f(V(G)\cup E(G))=f(V(G))\cup f(E(G))$.

(i) If $|f(V(G))|=p$, then $f$ is called \emph{vertex labeling} of the graph $G$, otherwise \emph{vertex coloring};

(ii) When as the cardinality $|f(E(G))|=q$, $f$ is called \emph{edge labeling} of the graph $G$, otherwise \emph{edge coloring}; and

(iii) If $|f(V(G)\cup E(G))|=p+q$, we call $f$ \emph{total labeling}, otherwise \emph{total coloring}.\qqed
\end{defn}

\begin{defn} \label{defn:basic-W-type-labelings}
\cite{Gallian2022, Yao-Sun-Zhang-Mu-Sun-Wang-Su-Zhang-Yang-Yang-2018arXiv, Zhou-Yao-Chen-Tao2012} Suppose that a connected $(p,q)$-graph $G$ admits a coloring $\theta:V(G)\rightarrow \{0,1,2,\dots ,M\}$. For each edge $xy\in E(G)$, the induced edge color is defined as $\theta(xy)=|\theta(x)-\theta(y)|$. Write the vertex color set by $\theta(V(G))=\{\theta(u):u\in V(G)\}$, and the edge color set by
$\theta(E(G))=\{\theta(xy):xy\in E(G)\}$. There are the following constraints:

B-1. $|\theta(V(G))|=p$;

B-2. $\theta(V(G))\subseteq [0,q]$, $\min \theta(V(G))=0$;

B-3. $\theta(V(G))\subset [0,2q-1]$, $\min \theta(V(G))=0$;

B-4. $\theta(E(G))=\{\theta(xy):xy\in E(G)\}=[1,q]$;

B-5. $\theta(E(G))=\{\theta(xy):xy\in E(G)\}=[1,2q-1]^o$;

B-6. $G$ is a bipartite graph with the bipartition $(X,Y)$ such that $\max\{\theta(x):x\in X\}< \min\{\theta(y):y\in Y\}$ ($\max \theta(X)<\min \theta(Y)$ for short);

B-7. $G$ is a tree having a perfect matching $M$ such that $\theta(x)+\theta(y)=q$ for each matching edge $xy\in M$; and

B-8. $G$ is a tree having a perfect matching $M$ such that $\theta(x)+\theta(y)=2q-1$ for each matching edge $xy\in M$.

\noindent \textbf{Then}:
\begin{asparaenum}[\textbf{\textrm{Lac}}-1.]
\item A \emph{graceful labeling} $\theta$ satisfies B-1, B-2 and B-4 at the same time.
\item A \emph{set-ordered graceful labeling} $\theta$ holds B-1, B-2, B-4 and B-6 true.
\item A \emph{strongly graceful labeling} $\theta$ holds B-1, B-2, B-4 and
B-7 true.
\item A \emph{set-ordered strongly graceful labeling} $\theta$ holds B-1, B-2, B-4, B-6 and B-7 true.
\item An \emph{odd-graceful labeling} $\theta$ holds B-1, B-3 and B-5 true.
\item A \emph{set-ordered odd-graceful labeling} $\theta$ abides B-1, B-3, B-5 and B-6.
\item A \emph{strongly odd-graceful labeling} $\theta$ holds B-1, B-3, B-5 and B-8, simultaneously.
\item A \emph{set-ordered strongly odd-graceful labeling} $\theta$ holds B-1, B-3, B-5, B-6 and B-8 true.\qqed
\end{asparaenum}
\end{defn}

\begin{defn} \label{defn:2020arXiv-gracefully-total-coloring}
\cite{Bing-Yao-2020arXiv} Suppose that a connected $(p,q)$-graph $G$ ($\neq K_p$) admits a total coloring $f:V(G)\cup E(G)\rightarrow [1,M]$, and it is allowed $f(x)=f(y)$ for some vertices $x,y\in V(G)$ and $xy\not \in E(G)$. If $|f(V(G))|< p$, and the edge color set
$$
f(E(G))=\big \{f(uv)=|f(u)-f(v)|:uv\in E(G)\big \}=[1,q]
$$ then we call $f$ \emph{gracefully total coloring}.\qqed
\end{defn}

\begin{defn} \label{defn:e-odd-graceful-v-matching-labeling}
\cite{Yao-Ma-arXiv-2201-13354v1} If each $(p_i,q_i)$-graph $G_i$ admits a labeling $f_i$ such that $f_i(x)\neq f_i(y)$ for distinct vertices $x,y\in V(G_i)$, and each edge color set
$$
f_i(E(G_i))=\big \{f_i(u_jv_j)=|f_i(u_j)-f_i(v_j)|:~u_jv_j\in E(G_i)\big \}=\big [1,2q_i-1\big ]^o
$$ and $\bigcup ^m_{i=1}f_i(V(G_i))=[0,M]$ with $m\geq 2$, then we say that the \emph{edge-odd-graceful graph base} $\textbf{B}=(G_1,G_2,\dots , G_m)$ admits an \emph{edge-odd-graceful vertex-matching labeling} $F$ defined by $F=\uplus ^m_{i=1}f_i$, and $|f_i(V(G_i))|=p_i$ for $i\in [1,m]$, since each $f_i$ is a vertex labeling (Ref. Definition \ref{defn:totally-normal-labeling}).\qqed
\end{defn}

\begin{defn} \label{defn:general-definition-set-colorings}
$^*$ Suppose that a graph $G$ admits a total set-coloring
$$
\theta:V(G)\cup E(G)\rightarrow S_{et}=\{e_{1},e_{2},\dots ,e_{m}\}
$$ where $S_{et}$ is the set of sets $e_{1},e_{2},\dots ,e_{m}$. There are the following various set-type colorings:
\begin{asparaenum}[(i) ]
\item If the vertex color set $\theta(V(G))=\emptyset$ and the edge color set $|\theta(E(G))|\geq 1$, we call $\theta$ \emph{edge set-coloring} of the graph $G$.
\item If the edge color set $\theta(E(G))=\emptyset$ and the vertex color set $|\theta(V(G))|\geq 1$, we call $\theta$ \emph{vertex set-coloring} of the graph $G$.
\item If the vertex color set $|\theta(V(G))|\geq 1$ and the edge color set $|\theta(E(G))|\geq 1$, we call $\theta$ \emph{total set-coloring} of the graph $G$.
\end{asparaenum}

Moreover, as $\theta$ is a total set-coloring of the graph $G$, we have:
\begin{asparaenum}[\textbf{Par}-1. ]
\item If there is a $W$-constraint equation $W[\theta(u),\theta(uv),\theta(v)]=0$ for each edge $uv\in E(G)$ holding true, we call $\theta$ \emph{$W$-constraint total set-coloring} of the graph $G$.
\item If the set $S_{et}=\{e_{1},e_{2},\dots ,e_{m}\}$ holds $|e_{i}|=1$ and $e_{i}\subset Z^0$ for $i\in [1,m]$, then $\theta$ is a \emph{popular $W$-constraint coloring/labeling} as the $W$-constraint equation $W[\theta(u),\theta(uv),\theta(v)]=0$ for each edge $uv\in E(G)$ holding true (Ref. \cite{Gallian2022, Yao-Wang-2106-15254v1}).\qqed
\item If the set $S_{et}=\{e_{1},e_{2},\dots ,e_{m}\}$ is a hypergraph set $\mathcal{E}$ of a hypergraph $\mathcal{H}_{yper}=(\Lambda,\mathcal{E})$, we call $\theta$ \emph{total hyperedge set-coloring} of the graph $G$. Moreover, $\theta$ is a \emph{total intersected-hyperedge set-coloring} if each edge $uv\in E(G)$ holds $\theta(u)\cap \theta(v)\subseteq \theta(uv)$ with $\theta(u)\cap \theta(v)\neq \emptyset$.
\end{asparaenum}
\end{defn}

\begin{defn}\label{defn:total-coloring-Topcode-matrixs}
$^*$ Let $E(G)=\{e_i=x_iy_i:~i\in [1,q]\}$ be the edge set of a $(p,q)$-graph $G$, and let $f:V(G)\cup E(G)\rightarrow S_{pan}$ be a \emph{total pan-coloring} subject to a constraint set $R_{est}(c_1,c_2,\dots, c_m)$ with $m\geq 1$, where $S_{pan}$ is a \emph{pan-set}. Then we call the following matrix
\begin{equation}\label{eqa:topcode-matrix-total-coloring}
{
\begin{split}
T_{code}(G,f)= \left(
\begin{array}{cccccccccc}
f(x_1) & f(x_2) & \cdots & f(x_q)\\
f(e_1) & f(e_2) & \cdots & f(e_q)\\
f(y_1) & f(y_2) & \cdots & f(y_q)
\end{array}
\right)_{3\times q}=(f(X),f(E),f(Y))^T_{3\times q}
\end{split}}
\end{equation} \emph{Topcode-matrix}, where \emph{v-vector} $f(X)=(f(x_1)$, $f(x_2)$, $\dots$, $f(x_q))$, \emph{e-vector} $f(E)=(f(e_1),f(e_2)$, $\dots $, $f(e_q))$ and \emph{v-vector} $f(Y)=(f(y_1),f(y_2),\dots $, $f(y_q))$, such that each constraint $c_i$ of the constraint set $R_{est}(c_1,c_2,\dots, c_m)$ holds true.\qqed
\end{defn}

\begin{rem}\label{rem:topcode-matrix-homomorphisms-isomor}
About Definition \ref{defn:total-coloring-Topcode-matrixs} we point out:

(i) The total pan-coloring $f$ of the graph $G$ defined in Definition \ref{defn:total-coloring-Topcode-matrixs}, often, is a popular coloring/labeling introduced in \cite{Gallian2022} and \cite{Yao-Wang-2106-15254v1}, or a pan-coloring, or a set-coloring, or a graphic coloring, or a graphic group coloring, or a matrix coloring, or a hyperedge set-coloring, or a thing-coloring. Correspondingly, the pan-set $S_{pan}$ is a number set, or a coloring set, or a set-set, or a graph set, or a matrix set, or a hyperedge set, or any thing set, \emph{etc}.

(ii) The constraint set $R_{est}(c_1,c_2,\dots, c_m)$ consists of a $W$-constraint, or a group of constraints.

(iii) There are more colored graphs $H$ corresponding to the Topcode-matrix $T_{code}(G,f)$ shown in Eq.(\ref{eqa:topcode-matrix-total-coloring}), such that each colored graph $H$ is graph homomorphism to $G$, and $G\not\cong H$. We collect these colored graphs in to the graph set $G_{raph}(T_{code})$, where $T_{code}=T_{code}(G,f)$, such that each graph $H\in G_{raph}(T_{code})$ corresponds its own Topcode-matrix $T_{code}(H,g)=T_{code}$.\qqed
\end{rem}

Techniques of Topcode-matrices and Remark \ref{rem:topcode-matrix-homomorphisms-isomor} enable us to obtain the following results:
\begin{prop}\label{prop:translated-number-based-string}
$^*$ (i) Each simple graph can be translated into a number-based string.

(ii) A number-based string can be generated by the Topcode-matrices of two colored graphs $G$ and $H$ with $G\not\cong H$.
\end{prop}
This is just ``\emph{codes are related with graphs, conversely, graphs are as codes}'' proposed by many researchers of computer and information security.

\subsection{Graph operations}

Many network problems in reality are composed of small block (modular) networks. Graph just organically combines these small blocks into a whole, which is also the most natural and reasonable technical means. By splitting and refining the network, the minimal structural features have been obtained. The minimal structural features of networks can help us to understand the structure and topological properties of networks.

\emph{Graph operations are the soul of topological structures of graphs}.

\subsubsection{Graph operations by adding or removing vertices and edges}

There are some simple graph operations as follows:

\begin{asparaenum}[$\bullet$ ]
\item Removing an edge $uv$ from a graph $G$ produces an \emph{edge-removed graph}, denoted as $G-uv$.
\item adding a new edge $xy\not \in E(G)$ to the graph $G$ makes an \emph{edge-added graph}, written as $G+xy$.
\item $G-w$ is a \emph{vertex-removed graph} after deleting the vertex $w$ from $G$, and removing those edges with one end to be this vertex $w$.
\item A \emph{ve-added graph} $G+\{y,x_1y,x_2y,\dots ,x_sy\}$ is obtained by adding a new vertex $y$ to a graph $G$, and join $y$ with vertices $x_1,x_2,\dots ,x_s$ of the graph $G$ by new edges $x_1y,x_2y,\dots ,x_sy$, respectively.
\item By those edge-removed graph $G-uv$ and vertex-removed graph $G-w$, we have a \emph{vertex-set-removed graph} $G-S$ for a vertex proper subset $S\subset V(G)$, as well as an \emph{edge-set-removed graph} $G-E^*$ for an edge subset $E^*\subset E(G)$.
\item Particularly, an \emph{edge-set-added graph} $G+E\,'$ is obtained by adding each edge of an edge set $E\,'\subset E(\overline{G})$ to a graph $G$, where $\overline{G}$ is the complement of the graph $G$.
\end{asparaenum}

\subsubsection{Vertex-splitting and vertex-coinciding operations}

\begin{defn} \label{defn:vertex-split-coinciding-operations}
\cite{Yao-Zhang-Sun-Mu-Sun-Wang-Wang-Ma-Su-Yang-Yang-Zhang-2018arXiv, Yao-Sun-Zhang-Mu-Wang-Jin-Xu-2018} \textbf{Vertex-splitting operation.} We vertex-split a vertex $u$ of a graph $G$ with $\textrm{deg}(u)\geq 2$ into two vertices $u\,'$ and $u\,''$, such that the neighbor set $N_{ei}(u)=N_{ei}(u\,')\cup N_{ei}(u\,'')$ with $|N_{ei}(u\,')|\geq 1$, $|N_{ei}(u\,'')|\geq 1$ and $N_{ei}(u\,')\cap N_{ei}(u\,'')=\emptyset$, the resultant graph is denoted as $G\wedge u$, called \emph{vertex-split graph} (see an example shown in Fig.\ref{fig:11-vertex-split-coin} (a)$\rightarrow$(b)). Moreover, we select randomly a proper subset $S$ of vertex set $V(G)$, and implement the vertex-splitting operation to each vertex of the proper subset $S$, the resultant graph is denoted as $G\wedge S$.

\textbf{Vertex-coinciding operation.} (Also, called the \emph{non-common neighbor vertex-coinciding operation}) A vertex-coinciding operation is the \emph{inverse} of a vertex-splitting operation, and vice versa. If two vertices $u\,'$ and $u\,''$ of a graph $H$ holds $|N_{ei}(u\,')|\geq 1$, $|N_{ei}(u\,'')|\geq 1$ and $N_{ei}(u\,')\cap N_{ei}(u\,'')=\emptyset$ true, we vertex-coincide $u\,'$ and $u\,''$ into one vertex $u=u\,'\bullet u\,''$, such that the neighbor set $N_{ei}(u)=N_{ei}(u\,')\cup N_{ei}(u\,'')$, the resultant graph is denoted as $G=H(u\,'\bullet u\,'')$, called \emph{vertex-coincided graph} (see a scheme shown in Fig.\ref{fig:11-vertex-split-coin} (b)$\rightarrow$(a)).\qqed
\end{defn}

\begin{figure}[h]
\centering
\includegraphics[width=16.4cm]{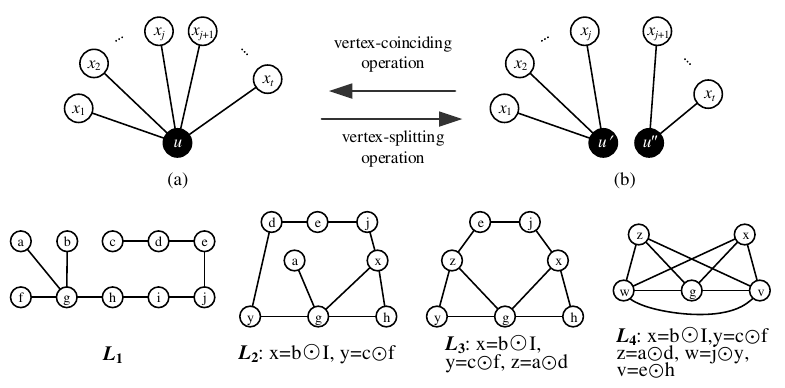}\\
\caption{\label{fig:11-vertex-split-coin}{\small The vertex-coinciding and the vertex-splitting operations defined in Definition \ref{defn:vertex-split-coinciding-operations}.}}
\end{figure}

\begin{rem}\label{rem:vertex-coinciding-operations}
If two vertex-disjoint colored graphs $G$ and $H$ have $k$ pairs of vertices with each pair of vertices is colored with the same color, then we, by the vertex-coinciding operation defined in Definition \ref{defn:vertex-split-coinciding-operations}, vertex-coincide each pair of vertices from the colored graph $G$ and the colored graph $H$ into one, the resultant graph is denoted as $G[\bullet _k]H$, called \emph{vertex-coincided graph} hereafter, and moreover we have two cardinalities
$$
|V(G[\bullet _k]H)|=|V(G)|+|V(H)|-k,~|E(G[\bullet _k]H)|=|E(G)|+|E(H)|
$$ for the vertex set and edge set of $G[\bullet _k]H$, respectively. Clearly, the vertex-coincided graph $G[\bullet _k]H$ holds for the case of two vertex-disjoint uncolored graphs $G$ and $H$ too.\qqed
\end{rem}

\begin{problem}\label{question:trees-same-number-leaves}
Let $S_{plit}(G,leaf)$ be the set of trees with the same number of leaves after vertex-splitting a connected graph $G$ (see Fig.\ref{fig:G-splt-trees-same-leaves}). \textbf{Determine} $S_{plit}(G,leaf)$.
\end{problem}

\begin{figure}[h]
\centering
\includegraphics[width=16.4cm]{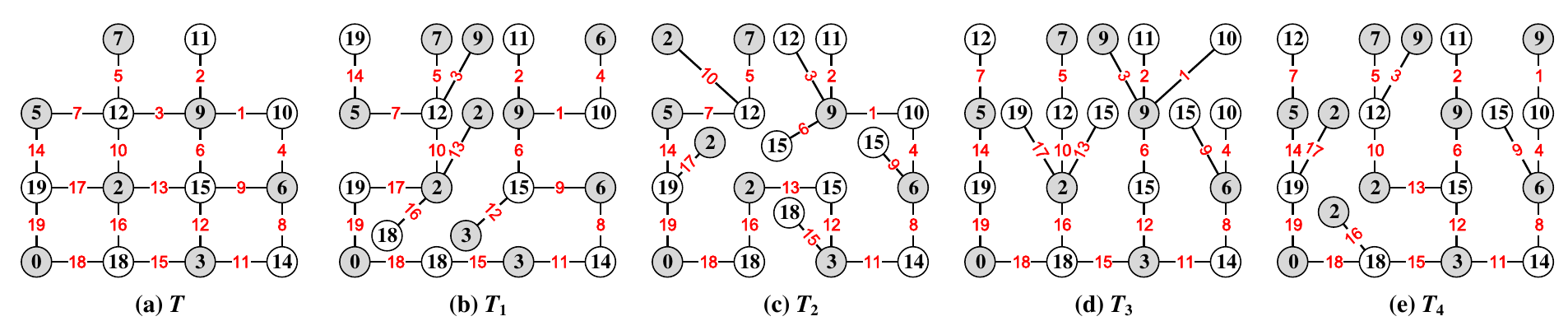}\\
\caption{\label{fig:G-splt-trees-same-leaves}{\small A scheme for illustrating Problem \ref{question:trees-same-number-leaves}.}}
\end{figure}

\begin{thm}\label{thm:666666}
A connected graph $G$ can be vertex-split into two edge-disjoint graphs $G_1$ and $G_2$ holding each maximal degree $\Delta(G_i)\leq \frac{1}{2}[\Delta(G)+1]$ for $i=1,2$, and moreover three total chromatic numbers hold
\begin{equation}\label{eqa:555555}
\chi\,''(G)\leq \chi\,''(G_1)+\chi\,''(G_2)
\end{equation} by the vertex-coinciding operation.
\end{thm}

\begin{rem}\label{rem:333333}
Let $\Delta(G)$ be the maximum degree of a graph $G$, and let $K(G)$ be the maximum clique number of the graph $G$. For the chromatic number $\chi(G)$ and the total chromatic number $\chi\,''(G)$ of a graph $G$, there are two longstanding conjectures:
$${
\begin{split}
&\textrm{Reed's conjecture: }~\chi(G)\leq \left \lceil \frac{\Delta(G)+1+K(G)}{2} \right \rceil \\
&\textrm{Behzad and Vizing's conjecture: } ~\chi\,''(G)\leq \Delta(G)+2
\end{split}}
$$ proposed by Bruce Reed (1998), Behzad (1965), Vizing (1964), respectively.\qqed
\end{rem}

\begin{defn} \label{defn:set-colored-graphs-split-coincide}
$^*$ Let $S_{et}$ be a set of sets, and let $G$ be a connected graph.

\textbf{A. The vertex-splitting operation of set-colored graphs.} Suppose that a connected graph $G$ admits a total set-coloring $F:V(G)\cup E(G)\rightarrow S_{et}$, such that $F(uv)=F(u)\cap F(v)$ for each edge $uv\in E(G)$. By the vertex-splitting operation of Definition \ref{defn:vertex-split-coinciding-operations}, we vertex-split a vertex $u$ of the connected graph $G$ if degree $\textrm{deg}_G(u)\geq 2$ into two vertices $u\,',v\,'$ holding $N_{ei}(u)=N_{ei}(u\,')\cup N_{ei}(v\,')$ and $|N_{ei}(u\,')|\geq 1$ and $|N_{ei}(v\,')|\geq 1$, and $N_{ei}(u\,')\cap N_{ei}(v\,')=\emptyset$, and define a new total set-coloring $F\,'$ for the \emph{vertex-split graph} $G\wedge u$ as:

(A-1) $F\,'(u\,')\cup F\,'(v\,')=F(u)$,

(A-2) $F\,'(u\,'x)=F\,'(u\,')\cap F\,'(x)=F\,'(u\,')\cap F(x)$ for each vertex $x\in N_{ei}(u\,')$,

(A-3) $F\,'(v\,'y)=F\,'(v\,')\cap F\,'(y)=F\,'(v\,')\cap F(y)$ for each vertex $y\in N_{ei}(v\,')$,

(A-4) each element
$$
w\in \big (V(G\wedge u)\cup E(G\wedge u)\big )\setminus \big (\{u\,',v\,'\}\cup \{x, u\,'x:x\in N_{ei}(u\,')\}\cup \{y, v\,'y:y\in N_{ei}(v\,')\}\big )
$$ holding $w\in V(G)\cup E(G)$ is colored with $F\,'(w)=F(w)$.

\textbf{B. The vertex-coinciding operation of set-colored graphs.} Suppose that a graph $G$ admits a total set-coloring $f:V(G)\cup E(G)\rightarrow S_{et}$, such that $f(xy)=f(x)\cap f(y)$ for each edge $xy\in E(G)$. If there are two vertices $a$ and $b$ holding two neighbor sets $N_{ei}(a)\cap N_{ei}(b)=\emptyset$, by Definition \ref{defn:vertex-split-coinciding-operations}, we vertex-coincide these two vertices into one vertex $w=a\bullet b$ and $N_{ei}(w)=N_{ei}(a)\cup N_{ei}(b)$, and defined a new total set-coloring $g$ for the \emph{vertex-coincided graph} $H=G(a\bullet b)$ as follows:

(B-1) $g(w)=f(a)\cup f(b)$,

(B-2) $g(wx)=f(ax)$ for each vertex $x\in N_{ei}(a)\subset N_{ei}(w)$,

(B-3) $g(wy)=f(by)$ for each vertex $y\in N_{ei}(b)\subset N_{ei}(w)$,

(B-4) $g(z)=f(z)$ for each element $z\in \big (V(H)\cup E(H)\big )\setminus \{w,wz;z\in N_{ei}(w)\}$.\\
Two vertex sets holds $|V(H)|=|V(G)|-1$ and, two edge sets holds $|E(H)|=|E(G)|$.\qqed
\end{defn}

\subsubsection{Edge-coinciding and edge-splitting operations}

\begin{defn} \label{defn:edge-split-coinciding-operations}
\cite{Yao-Zhang-Sun-Mu-Sun-Wang-Wang-Ma-Su-Yang-Yang-Zhang-2018arXiv} \textbf{Edge-splitting operation.} For an edge $uv$ of a graph $G$ with $\textrm{deg}(u)\geq 2$ and $\textrm{deg}(v)\geq 2$, we remove the edge $uv$ from $G$ first, next we vertex-split, respectively, two end vertices $u$ and $v$ of the edge $uv$ into vertices $u\,'$ and $u\,''$, $v\,'$ and $v\,''$. And then we add a new edge $u\,'v\,'$ to join two vertices $u\,'$ and $v\,'$ together, and add another new edge $u\,''v\,''$ to join two vertices $u\,''$ and $v\,''$ together, respectively. The resultant graph is denoted as $G\wedge uv$, see Fig.\ref{fig:11-edge-leaf-split-coin} (a)$\rightarrow$(c). We call the procedure of obtaining $G\wedge uv$ \emph{edge-splitting operation}.

Here, it is allowed that two adjacent neighbor sets $|N_{ei}(u\,')|=1$ and $|N_{ei}(v\,'')|=1$, see Fig.\ref{fig:11-edge-leaf-split-coin}(a)$\rightarrow$(b), in this case, we call the procedure of obtaining $G\wedge uv$ \emph{leaf-splitting operation}, or \emph{train-hook splitting operation}, they are \emph{particular cases} of the edge-splitting operation. The \emph{inverse} of a train-hook splitting operation is called \emph{train-hook coinciding operation}.

\textbf{Edge-coinciding operation.} For two edges $u\,'v\,'$ and $u\,''v\,''$ of a graph $H$, if the adjacent neighbor sets $N_{ei}(u\,')\cap N_{ei}(u\,'')=\emptyset$ and $N_{ei}(v\,')\cap N_{ei}(v\,'')=\emptyset$, we edge-coincide two edges $u\,'v\,'$ and $u\,''v\,''$ into one edge $uv=u\,'v\,'\ominus u\,''v\,''$ with $u=u\,'\bullet u\,''$ and $v=v\,'\bullet v\,''$. The resultant graph $H(u\,'v\,'\ominus u\,''v\,'')$ is the result of doing the \emph{edge-coinciding operation} to $H$, see Fig.\ref{fig:11-edge-leaf-split-coin}(c)$\rightarrow$(a). Also, $H(u\,'v\,'\ominus u\,''v\,'')$ is the result of doing the \emph{leaf-coinciding operation} to $H$ as $|N_{ei}(u\,'')|=1$ and $|N_{ei}(v\,')|=1$ (see Fig.\ref{fig:11-edge-leaf-split-coin}(b)$\rightarrow$(a)).\qqed
\end{defn}

\begin{figure}[h]
\centering
\includegraphics[width=16.4cm]{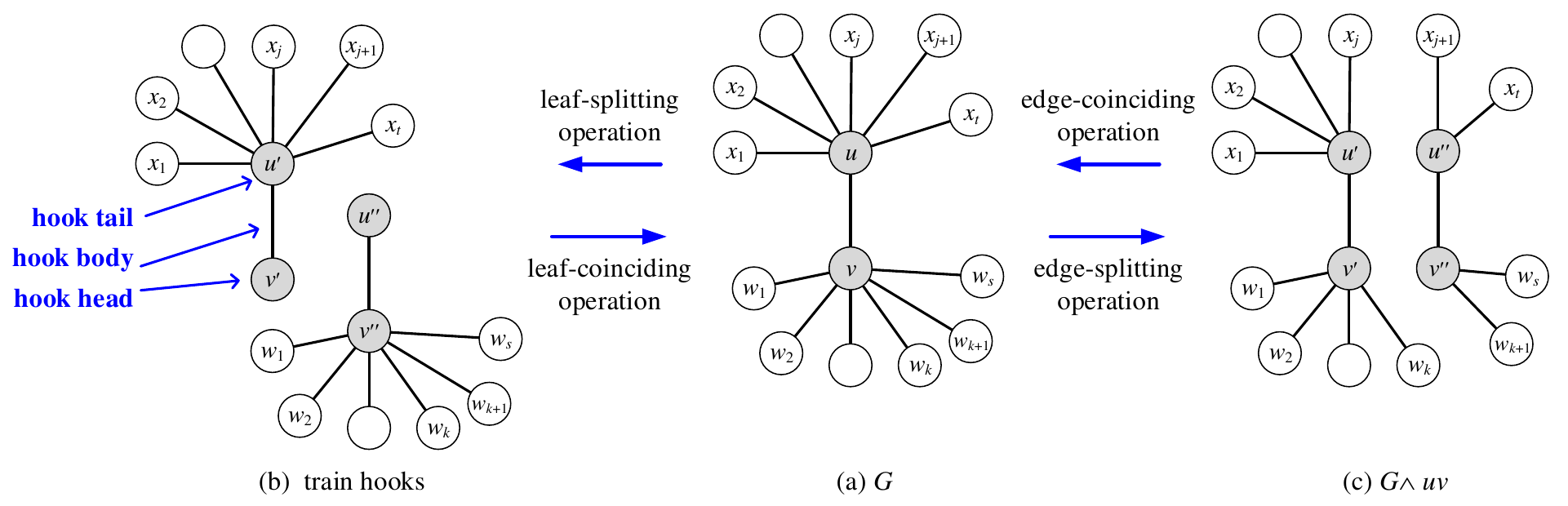}\\
\caption{\label{fig:11-edge-leaf-split-coin}{\small The edge-coinciding and the edge-splitting operations defined in Definition \ref{defn:edge-split-coinciding-operations}, cited from \cite{Yao-Zhang-Sun-Mu-Sun-Wang-Wang-Ma-Su-Yang-Yang-Zhang-2018arXiv}.}}
\end{figure}

\subsubsection{$\Omega$-coinciding and $\Omega$-splitting operations}

\begin{defn} \label{defn:W-splitting-coinciding-operation}
\cite{Yao-Su-Ma-Wang-Yang-arXiv-2202-03993v1} Let $\Omega$ be a proper subgraph of a graph $G$. We do a $\Omega$-splitting operation to $G$ in the following way \cite{Yao-Su-Sun-Wang-Graph-Operations-2021}:

(i) Removing the edges of $E(\Omega)$ from the proper subgraph $\Omega$;

(ii) Vertex-split each vertex $x_i\in V(\Omega)=\{x_i:i\in [1,m]\}$ into two vertices $x\,'_i$ and $x\,''_i$, such that $N_{ei}(x_i)\setminus V(\Omega)=N_{ei}(x\,'_i)\cup N_{ei}(x\,''_i)$ with $N_{ei}(x\,'_i)\cap N_{ei}(x\,''_i)=\emptyset$;

(iii) adding new edges to the vertex set $\big \{x\,'_i:i\in [1,m]\big \}$ produces a graph $H_1$ holding $H_1\cong \Omega$ true, and then adding new edges to the vertex set $\big \{x\,''_i:i\in [1,m]\big \}$ makes another graph $H_2$ holding $H_2\cong \Omega$ true, such that each edge $x_ix_j\in E(\Omega)$ corresponds an edge $x\,'_ix\,'_j\in E(H_1)$ and an edge $x\,''_ix\,''_j\in E(H_2)$, and vice versa.

The resultant graph is written as $G\wedge \Omega$, and it has the following properties:

(i) Both $\Omega$-type graphs $H_1$ and $H_2$ are two vertex disjoint isomorphic subgraphs of the $\Omega$-split graph $G\wedge \Omega$, namely, $V(H_1)\cap V(H_2)=\emptyset$;

(ii) each $H_i$ is joined with a vertex $w_i\in V(G\wedge \Omega)\setminus \big [V(H_1)\cup V(H_2)\big ]$ for $i=1,2$; and

(iii) no a common vertex $u^*\in V(G\wedge \Omega)\setminus \big [V(H_1)\cup V(H_2)\big ]$ holds $u^*x_i\in E(H_i)$ for $i=1,2$.

We call the process of obtaining the \emph{$\Omega$-split graph} $G\wedge \Omega$ \emph{$\Omega$-splitting operation}. Conversely, the process of obtaining $G$ from $G\wedge \Omega$ by the vertex-coinciding operation and the edge-coinciding operation defined in Definition \ref{defn:vertex-split-coinciding-operations} and Definition \ref{defn:edge-split-coinciding-operations} is called \emph{$\Omega$-coinciding operation}, since $N_{ei}(x\,'_i)\cap N_{ei}(x\,''_i)=\emptyset$ for $i\in [1,m]$. \qqed
\end{defn}

\begin{rem}\label{rem:edge-split-coinciding-operation}
In Definition \ref{defn:W-splitting-coinciding-operation}, if $G\wedge \Omega$ is disconnected, so $G$ has two vertex-disjoint components $G_1$ and $G_2$ holding $\Omega=H_i\subset G_i$ for $i=1,2$, we then write $G=G_1\big [\ominus ^{\Omega}_k\big ]G_2$.

For vertex disjoint graphs $T_s$ with $s\in [1,n]$, if each graph $T_s$ contains a subgraph $\Omega$ of $k$ vertices, we get an $\Omega$-coincided graph denoted as
\begin{equation}\label{eqa:555555}
\big [\ominus ^{\Omega}_k\big ]^n_{s=1}T_s=\Big (\cdots \big (T_1\big [\ominus ^{\Omega}_k\big ]T_2\big )\big [\ominus ^{\Omega}_k\big ]T_3\cdots \Big )\big [\ominus ^{\Omega}_k\big ]T_n
\end{equation} For a permutation $T_{j_1}, T_{j_2}, \dots ,T_{j_s}$ of $T_1,T_2,\dots ,T_n$, we have $\big [\ominus ^{\Omega}_k\big ]^n_{r=1}T_{j_r}$. So, there are $n!$ $\Omega$-coincided graphs in total.

About Definition \ref{defn:W-splitting-coinciding-operation}, we have the following particular cases:

\textbf{Case 1.} If $\Omega$ is a cycle $C$ of $k$ vertices, we write $T_1\big [\ominus ^{cyc}_k\big ]T_2$ by ``$\big [\ominus ^{cyc}_k\big ]$'' replacing ``$\big [\ominus ^{\Omega}_k\big ]$'', similarly, $T_1\big [\ominus ^{path}_k\big ]T_2$ if $\Omega$ is a path of $k$ vertices, and $T_1\big [\ominus ^{tree}_k\big ]T_2$ if $\Omega$ is a tree of $k$ vertices, since cycles, paths and trees are \emph{linear-type graphs} in various applications.

\textbf{Case 2.} If $\Omega$ is a complete graph $K_n$ of $n$ vertices, we have $T_1\big [\ominus ^{K}_n\big ]T_2$ if $K_n$ is a subgraph of two vertex disjoint graphs $T_i$ for $i=1,2$.

\textbf{Case 3.} If $\Omega$ is a cycle $C$ of $k$ vertices in a maximal planar graph $G$, the cycle-split graph $G\wedge C$ has just two vertex disjoint components $G^C_{out}$ and $G^C_{in}$, called \emph{semi-maximal planar graphs}, where $G^C_{out}$ is in the \emph{infinite plane}, and $G^C_{in}$ is inside of the graph $G$. Thereby, we write $G=G^C_{out}\big [\ominus^C_k\big ]G^C_{in}$ hereafter (Ref. \cite{Jin-Xu-Maximal-Science-Press-2019}).

\textbf{Case 4.} If $\Omega$ is a complete graph $K_1$ of one vertex, we write $T_1\big [\ominus ^{K}_1\big ]T_2=T_1[\bullet]T_2$, that is, the graph $\Omega$ shrinks to a vertex.\qqed
\end{rem}

\begin{problem}\label{problem:xxxxxxxxx}
\cite{Yao-Su-Sun-Wang-Graph-Operations-2021} \textbf{Characterize} the following particular cycle-coincided graphs:
\begin{asparaenum}[\textbf{\textrm{Planep}}-1. ]
\item A graph $G$ can be expressed as $G=H_i[\ominus ^{cyc}_{k_i}]G_i$ for $i\in [1,m]$ with $m\geq 1$ by the cycle-coinciding operation, where two vertex disjoint graphs $H_i$ and $G_i$ for $i\in [1,m]$ contain cycles with the same length $k_i$.

\item A cycle-coincided graph $L_i=H^*\big [\ominus ^{cyc}_{k_i}\big ]G_i$ for $i\in [1,m]$, where any pair of vertex disjoint graphs $H^*$ and each $G_i$ contain cycles with the same length $k_i$, where $H^*$ is like a fixed ``point'' under the cycle-coinciding operation. Furthermore, we get a cycle-coincided graph

\begin{equation}\label{eqa:555555}
[\ominus ^{cyc}]^m_{j=1}L_j=\big [\ominus ^{cyc}\big ]^m_{j=1}\Big (H^*[\ominus ^{cyc}_{k_j}]G_j \Big )=\Big (\cdots \big (H^*\big [\ominus ^{cyc}_{k_1}\big ]G_1\big )[\ominus ^{cyc}_{k_2}]G_2\cdots \Big )\big [\ominus ^{cyc}_{k_m}\big ]G_m
\end{equation} also, called \emph{kaleidoscope}.

\item A cycle-coincided graph $B=H^*\big [\ominus ^{cyc}_{n}\big ]^m_{i=1}G_i$ is like a ``\emph{super book}'', where $H^*$ and each $G_i$ contain cycles with the same length $n$, so each $G_i$ is a \emph{book page} and $H^*$ is the \emph{book back} of the super book.
\item If $\Omega$ is a path of $k$ vertices, $H^*\big [\ominus ^{path}_k\big ]^m_{i=1}G_i$ is ``\emph{topological-page book}'', where the book back $H^*$ and each topological-page $G_i$ contain paths of $k$ vertices.
\end{asparaenum}
\end{problem}

\begin{problem}\label{qeu:uniquely-4-colorable-mpgs}
\cite{Yao-Su-Ma-Wang-Yang-arXiv-2202-03993v1} Let $C$ be a $k$-cycle of a maximal planar graph $G$ with $k\geq 3$, so $G=G^{C}_{out}[\ominus^{cyc}_k]G^{C}_{in}$, and write $G^{C}_{out}=G_{out}$ (as a \emph{public-key}) and $G^{C}_{in}=G_{in}$ (as a \emph{private-key}) if there is no confusion. We have:
\begin{asparaenum}[\textbf{\textrm{MPG}}-1. ]
\item For each triangle $C=K_3$, $G=G_{out}\big [\ominus^{cyc}_3\big ]G_{in}$ holds $G_{out}=G$ and $G_{in}=K_3$, we call $G$ a \emph{no-$3$-cycle split maximal planar graph}.
\item For each $k$-cycle $C$ with $4\leq k <|V(G)|$, if the edge-removed graph $G_{in}-E(C)$ in $G=G_{out}\big [\ominus^{cyc}_k\big ]G_{in}$ is a tree $T$, we call $G_{in}$ a \emph{cycle-chord semi-maximal planar graph} if $V(C)=V(T)$, $G_{in}$ a \emph{tree-pure semi-maximal planar graph} if $|V(C)|<|V(T)|$, refer to \cite{Jin-Xu-Maximal-Science-Press-2019}.
\item For a maximal planar graph $G\neq K_4$, if $G=G_{out}(1)\big [\ominus^{cyc}_3\big ]G_{in}(1)$ with $G_{out}(1)$ is a maximal planar graph being not $K_4$ and $G_{in}(1)=K_4$, we have $G_{out}(1)=G_{out}(2)\big [\ominus^{cyc}_3\big ]G_{in}(2)$ with $G_{out}(2)$ is a maximal planar graph being not $K_4$ and $G_{in}(2)=K_4$, go on in this way, we get $G_{out}(k-1)=G_{out}(k)\big [\ominus^{cyc}_3\big ]G_{in}(k)$ with $G_{out}(k)$ is a maximal planar graph being not $K_4$ and $G_{in}(k)=K_4$ for $k\in [1,m]$, where $G=G_{out}(0)$, $G_{out}(m-1)=G_{out}(m)\big [\ominus^{cyc}_3\big ]G_{in}(m)$ with $G_{out}(m)=G_{in}(m)=K_4$. So, $G$ is a \emph{recursive maximal planar graph} and admits a proper vertex $4$-coloring $f$, such that $V(G)=\bigcup^4_{k=1} V_k(G)$ and $f(x)=k$ for $x\in V_k(G)$ with $k\in [1,4]$. Uniquely 4-colorable Maximal Planar Graph Conjecture \cite{Greenwell-Kronk-Uniquely-4c-1973}: A recursive maximal planar graph $G$ is uniquely 4-colorable, that is, each set $V_k(G)$ in $V(G)=\bigcup^4_{k=1} V_k(G)$ is not changed by any two $4$-colorings of the recursive maximal planar graph $G$.
\end{asparaenum}
\end{problem}

Now, we define the so-called \emph{$\Omega$-type graph-split connectivity} for a connected graph $G$:

\begin{defn} \label{defn:W-type-graph-split-connectivity}
\cite{Yao-Su-Ma-Wang-Yang-arXiv-2202-03993v1} Let $H$ be a $\Omega$-type proper subgraph of a connected graph $G$. If the $\Omega$-split graph $G\wedge \Omega$ is disconnected, we call the following parameter
$$\min \{|V(H)|:~G\wedge \Omega\textrm{ is disconnected},~\textrm{and }H~ \textrm{is a $\Omega$-type proper subgraph of}~G\}$$ \emph{$\Omega$-type graph-split connectivity} of the connected graph $G$, denoted as $\kappa_{W}(G)$.\qqed
\end{defn}

\begin{thm}\label{thm:vertex-splitting-connectivity-traditional}
\cite{Wang-Su-Yao-2021-computer-science} The \emph{vertex-splitting connectivity} of a connected graph is equivalent to its own \emph{vertex
connectivity}.
\end{thm}

\begin{rem}\label{rem:333333}
About Definition \ref{defn:W-type-graph-split-connectivity}, we have:

(i) ``$\Omega$-type'' may be one of path, cycle, complete graph, tree, bipartite complete graph, particular graph, and so on.

(ii) If $H$ is a graph consisted of edges, then the $\Omega$-type graph-split connectivity $\kappa_{W}(G)=\kappa\,'(G)$, the traditional \emph{edge connectivity} of graphs; and if $H$ is a graph consisted of vertices and edges, then the $\Omega$-type graph-split connectivity $\kappa_{W}(G)=\kappa(G)$ or $\kappa_{W}(G)=\kappa\,''(G)$ for the traditional \emph{vertex connectivity} $\kappa(G)$, or the traditional \emph{total connectivity} $\kappa\,''(G)$. Notice that the $\Omega$-split graph $G\wedge \Omega$ differs from the vertex-removed graph $G-V(H)$, since $G\wedge \Omega$ keeps all information of the original graph $G$.\qqed
\end{rem}

\begin{problem}\label{qeu:444444}
\cite{Yao-Su-Ma-Wang-Yang-arXiv-2202-03993v1} \textbf{Determine} $\Omega$-type graph-split connectivities $\kappa_{path}$, $\kappa_{cycle}$ and $\kappa_{tree}$ for connected graphs, where $\kappa_w(G)$ is defined in Definition \ref{defn:W-type-graph-split-connectivity}, and $W=$path, cycle, tree.
\end{problem}

\begin{rem}\label{rem:333333}
Many network problems in reality are composed of small block (modular) networks. Graph just organically combines them into a whole, which is also the most natural and reasonable technical means. By splitting and refining the network, the minimal structural features have been obtained. The minimal structural features of networks can help us to understand the structure and topological properties of networks.

Because our vertex-splitting connectivity is equivalent to the traditional vertex connectivity (Ref. Definition \ref{defn:W-type-graph-split-connectivity} and Theorem \ref{defn:W-type-graph-split-connectivity}), so the reliability of topology code theory has been proved.\qqed
\end{rem}

\subsubsection{Operations on graph homomorphisms}

Homomorphic encryption is a cryptographic technique based on computational complexity theory of mathematical puzzles in cloud computing, e-commerce, Internet of Things, mobile code \emph{etc}. The homomorphic encrypted data is processed to get an output, and the output is decrypted to get the same output as the unencrypted raw data processed in the same way, in other word, homomorphic encryption is required to achieve data security.

\begin{defn}\label{defn:definition-graph-homomorphism}
\cite{Bondy-2008} A \emph{graph homomorphism} $G\rightarrow H$ from a graph $G$ into another graph $H$ is a coloring $f: V(G) \rightarrow V(H)$ such that each edge $f(u)f(v)\in E(H)$ if and only if each edge $uv\in E(G)$.\qqed
\end{defn}

\begin{example}\label{exa:8888888888}
In Fig.\ref{fig:11-vertex-split-coin}, there are three graph homomorphisms $L_i\rightarrow _{\textrm{v-coin}}L_{i+1}$ for $i\in [1,3]$ obtained by the vertex-coinciding operation defined in Definition \ref{defn:vertex-split-coinciding-operations}, and we have three \emph{graph anti-homomorphisms} $L_k\rightarrow _{\textrm{v-split}}L_{k-1}$ for $k\in [2,4]$ obtained by the vertex-splitting operation defined in Definition \ref{defn:vertex-split-coinciding-operations}.\qqed
\end{example}

\begin{defn}\label{defn:11-topo-auth-faithful}
\cite{Gena-Hahn-Claude-Tardif-1997} A graph homomorphism $\varphi: G \rightarrow H$ is called \emph{faithful} if $\varphi(G)$ is an induced subgraph of the graph $H$, and called \emph{full} if $uv\in E(G)$ if and only if $\varphi(u)\varphi(v)\in E(H)$.\qqed
\end{defn}

\begin{thm}\label{thm:bijective-graph-homomorphism}
\cite{Gena-Hahn-Claude-Tardif-1997} A faithful bijective graph homomorphism $\varphi: G \rightarrow H$ is $G \cong H$.
\end{thm}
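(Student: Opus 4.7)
My plan is to unpack the three hypotheses (homomorphism, faithful, bijective) in turn, show that together they force $\varphi$ to preserve edges in both directions, and then conclude by invoking the standard characterization of isomorphism.

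First, I would use bijectivity to pin down the vertex set of the image. Since $\varphi:V(G)\to V(H)$ is a bijection, the induced subgraph $\varphi(G)$ of $H$ supplied by the faithfulness hypothesis has $V(\varphi(G))=\varphi(V(G))=V(H)$. Consequently $\varphi(G)$ is a spanning induced subgraph of $H$, and an induced subgraph on the full vertex set of $H$ coincides with $H$ itself; that is, $E(\varphi(G))=E(H)$.

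Second, I would translate the identity $\varphi(G)=H$ into the ``only if'' direction that the homomorphism hypothesis does not give for free. The definition of a homomorphism already yields $\varphi(u)\varphi(v)\in E(H)$ whenever $uv\in E(G)$. For the converse, suppose $\varphi(u)\varphi(v)\in E(H)$. Because $\varphi$ is a bijection, $u$ and $v$ are the unique preimages of $\varphi(u)$ and $\varphi(v)$; because $\varphi(G)$ is an induced subgraph of $H$ with $V(\varphi(G))=V(H)$, the edge $\varphi(u)\varphi(v)$ lies in $\varphi(G)$, which forces $uv\in E(G)$. Hence $\varphi$ is in fact a full homomorphism in the sense of Definition \ref{defn:11-topo-auth-faithful}.

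Finally, since $\varphi$ is a bijection between $V(G)$ and $V(H)$ satisfying
\begin{equation*}
uv\in E(G)\iff \varphi(u)\varphi(v)\in E(H),
\end{equation*}
it is a graph isomorphism and we conclude $G\cong H$. The only step that requires a moment of care is the observation that an induced subgraph spanning all of $V(H)$ must equal $H$ as a graph; once that is in hand, the rest is essentially bookkeeping on the definitions, so I do not anticipate a substantive obstacle.
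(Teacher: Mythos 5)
Your proof is correct. Note that the paper does not prove this statement at all — it is quoted from Hahn and Tardif with a citation — so there is no in-paper argument to compare against; your definitional unpacking (bijectivity makes $\varphi(G)$ a spanning induced subgraph, hence equal to $H$, and injectivity then gives the reverse edge implication) is the standard and complete way to establish it.
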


\begin{thm}\label{thm:homomorphic-two-more-graphs}
$^*$ A graph can be graph homomorphic to two or more graphs that are not isomorphic to each other.
\end{thm}

\begin{problem}\label{question:444444}
In Theorem \ref{thm:homomorphic-two-more-graphs}, a graph can be graph homomorphic to each graph of a graph set $H_{omo}(G)$, conversely, each graph $L\in H_{omo}(G)$ can be vertex-split into $G$, also, graph anti-homomorphisms. \textbf{Determine} the graph set $H_{omo}(G)$ for a connected graph $G$.
\end{problem}

\begin{defn}\label{defn:gracefully-graph-homomorphism}
\cite{Bing-Yao-Hongyu-Wang-arXiv-2020-homomorphisms, Bing-Yao-Hongyu-Wang-graph-homomorphisms-2020} Let $G\rightarrow H$ be a graph homomorphism from a $(p,q)$-graph $G$ to another $(p\,',q\,')$-graph $H$ based on a coloring $\alpha: V(G) \rightarrow V(H)$ such that each edge $\alpha(u)\alpha(v)\in E(H)$ if and only if each edge $uv\in E(G)$. The graph $G$ admits a total coloring $f$, and the graph $H$ admits a total coloring $g$, so $G\rightarrow H$ is a \emph{totally-colored graph homomorphism}. Write $f(E(G))=\{f(uv):uv\in E(G)\}$, $g(E(H))=\{g(\alpha(u)\alpha(v)):\alpha(u)\alpha(v)\in E(H)\}$. There are constraints as follows:
\begin{asparaenum}[\textbf{\textrm{C}}-1. ]
\item \label{bipartite} the vertex set $V(G)=X\cup Y$ with $X\cap Y=\emptyset$, each edge $uv\in E(G)$ holds $u\in X$ and $v\in Y$ true; and the vertex set $V(H)=X_H\cup Y_H$ with $X_H\cap Y_H=\emptyset$, each edge $\alpha(u)\alpha(v)\in E(G)$ holds $\alpha(u)\in X_H$ and $\alpha(v)\in Y_H$ true;
\item \label{edge-difference} each edge color $f(uv)=|f(u)-f(v)|$ for each edge $uv\in E(G)$, and the edge set $g(\alpha(u)\alpha(v))=|g(\alpha(u))-g(\alpha(v))|$ for each edge $\alpha(u)\alpha(v)\in E(H)$;
\item \label{edge-homomorphism}each edge color $f(uv)=g(\alpha(u)\alpha(v))$ for each edge $uv\in E(G)$;
\item \label{vertex-color-set} the vertex colors $f(x)\in [1,q+1]$ for $x\in V(G)$ and $g(y)\in [1,q\,'+1]$ with $y\in V(H)$;
\item \label{odd-vertex-color-set} the vertex colors $f(x)\in [1,2q+2]$ for $x\in V(G)$ and $g(y)\in [1,2q\,'+2]$ with $y\in V(H)$;
\item \label{grace-color-set} the edge color set $[1,q]=f(E(G))=g(E(H))=[1,q\,']$;
\item \label{odd-grace-color-set} the edge color set $[1,2q-1]^o=f(E(G))=g(E(H))=[1,2q\,'-1]^o$; and
\item \label{set-ordered} the set-ordered constraint $\max f(X)<\min f(Y)$ and $\max g(X_H)<\min g(Y_H)$.
\end{asparaenum}
\noindent \textbf{We say the graph homomorphism $G\rightarrow H$ to be}:
\begin{asparaenum}[(i) ]
\item \emph{bipartite graph homomorphism} if \textbf{C}-\ref{bipartite} holds true.
\item \emph{graceful graph homomorphism} if \textbf{C}-\ref{edge-difference}, \textbf{C}-\ref{edge-homomorphism}, \textbf{C}-\ref{vertex-color-set} and \textbf{C}-\ref{grace-color-set} hold true.
\item \emph{set-ordered graceful graph homomorphism} if \textbf{C}-\ref{bipartite}, \textbf{C}-\ref{edge-difference}, \textbf{C}-\ref{edge-homomorphism}, \textbf{C}-\ref{grace-color-set}, \textbf{C}-\ref{vertex-color-set} and \textbf{C}-\ref{set-ordered} hold true.
\item \emph{odd-graceful graph homomorphism} if \textbf{C}-\ref{bipartite}, \textbf{C}-\ref{edge-difference}, \textbf{C}-\ref{edge-homomorphism}, \textbf{C}-\ref{odd-vertex-color-set} and \textbf{C}-\ref{odd-grace-color-set} hold true.
\item \emph{set-ordered odd-graceful graph homomorphism} if \textbf{C}-\ref{bipartite}, \textbf{C}-\ref{edge-difference}, \textbf{C}-\ref{edge-homomorphism}, \textbf{C}-\ref{odd-vertex-color-set}, \textbf{C}-\ref{odd-grace-color-set} and \textbf{C}-\ref{set-ordered} hold true.\qqed
\end{asparaenum}
\end{defn}

\begin{defn} \label{defn:W-constraint-coloring-graph-homomorphism}
$^*$ A \emph{$W$-constraint colored graph homomorphism} $G\rightarrow _{color}H$ is defined as: A graph $G$ admits a $W$-constraint coloring $F$ and another graph $H$ admits a $W$-constraint coloring $F^*$, and there is a graph homomorphism $\varphi:V(G)\rightarrow V(H)$, such that the $W$-constraint $W[F(u),F(uv),F(v)]=0$ holds true if and only if the $W$-constraint $W[F^*(\varphi(u))$, $F^*(\varphi(u)\varphi(v))$, $F^*(\varphi(v))]=0$ holds true.\qqed
\end{defn}

\begin{thm}\label{thm:graph-set-graph-isomorphism}
$^*$ Each totally colored and connected graph $H$ corresponds a totally colored graph set $G_{raph}(H)$, such that each totally colored graph $T\in G_{raph}(H)$ is totally colored graph isomorphism to $H$, namely, $T\rightarrow _{color}H$.
\end{thm}

\begin{thm}\label{thm:2222222222222}
$^*$ Suppose that a graph $G$ admits a $W$-constraint coloring $f$ and another graph $H$ admits a $W$-constraint coloring $h$, and there is a graph homomorphism $\varphi:V(G)\rightarrow V(H)$, such that $G\rightarrow _{color}H$. If there is another graph homomorphism $\phi:V(H)\rightarrow V(G)$, such that $H\rightarrow _{color}G$, then $G\cong H$ with $V(G)=V(H)$ and $E(H)=E(G)$, such that $f(x)=h(x)$ for each vertex $x\in V(H)=V(G)$ and $f(uv)=h(uv)$ for each edge $uv\in E(H)=E(G)$.
\end{thm}

\section{Colorings And Labelings Based On Sets}

\begin{defn} \label{defn:matching-set-set-colorings}
$^*$ Let $S$ be a set, and its elements are all sets, so we call $S$ \emph{set-set}. A graph $G$ admits a \emph{set-coloring} $\alpha: X\rightarrow S$ to be \emph{full} if the color set $\alpha(X)=S$, where $X$ is a subset of the total set $V(G)\cup E(G)$.

If $\alpha$ is not full, namely, $\alpha(X)\subset S$, and there is another graph $H$ admitting a set-coloring $\beta: Y\rightarrow S$ with $Y\subset V(H)\cup E(H)$ and its color set $\beta(Y)\subset S$, such that two color sets $\alpha(X)\cup \beta(Y)=S$, then two set-colorings $\alpha$ and $\beta$ are a matching of colorings based on the set-set $S$, and two graphs $G$ (as a \emph{private topological signature}) and $H$ (as a \emph{public topological signature}) are matching from each other based on the set-sets.\qqed
\end{defn}

\subsection{Set-colorings}

\begin{defn}\label{defn:55-set-labeling}
\cite{Yao-Sun-Zhang-Mu-Sun-Wang-Su-Zhang-Yang-Yang-2018arXiv} Let $G$ be a $(p,q)$-graph, and $[0,p+q]^2$ be the power set of the integer set $[0,p+q]$.

(i) A total set-coloring $F: V(G)\cup E(G)\rightarrow [0, p+q]^2$ is called \emph{total set-labeling} of the graph $G$ if two sets $F(x)\neq F(y)$ for distinct elements $x,y\in V(G)\cup E(G)$.

(ii) A vertex set-coloring $F: V(G) \rightarrow [0, p+q]^2$ is called \emph{vertex set-labeling} of the graph $G$ if two sets $F(x)\neq F(y)$ for distinct vertices $x,y\in V(G)$.

(iii) An edge set-coloring $F: E(G) \rightarrow [0, p+q]^2$ is called \emph{edge set-labeling} of the graph $G$ if two sets $F(uv)\neq F(xy)$ for distinct edges $uv, xy\in E(G)$.

(iv) A vertex set-coloring $F: V(G) \rightarrow [0, p+q]^2$ and a proper edge coloring $g: E(G) \rightarrow [a, b]$ are called \emph{v-set e-proper labeling $(F,g)$} of the graph $G$ if two sets $F(x)\neq F(y)$ for distinct vertices $x,y\in V(G)$ and two edge colors $g(uv)\neq g(wz)$ for distinct edges $uv, wz\in E(G)$.

(v) An edge set-coloring $F: E(G) \rightarrow [0, p+q]^2$ and a proper vertex coloring $f: V(G) \rightarrow [a,b]$ are called \emph{e-set v-proper labeling $(F,f)$} of the graph $G$ if two edges sets $F(uv)\neq F(wz)$ for distinct edges $uv, wz\in E(G)$ and two vertex colors $f(x)\neq f(y)$ for distinct vertices $x,y\in V(G)$.\qqed
\end{defn}

\begin{defn}\label{defn:new-set-colorings}
\cite{Yao-Wang-2106-15254v1} Let $G$ be a $(p,q)$-graph, and let ``$W$-constraint'' be one of constraints on the existing graph colorings and graph labelings of graph theory, and the set $[0,p+q]^2$ be the \emph{power set} of subsets of the consecutive integer set $[0,p+q]$.

(i) A \emph{$W$-constraint ve-set-coloring} $F$ of the graph $G$ holds $F: V(G)\cup E(G)\rightarrow [0, p+q]^2$ such that two sets $F(x)\neq F(y)$ for two adjacent or incident elements $x,y\in V(G)\cup E(G)$ holding the $W$-constraint $W[F(u), F(uv), F(v)]=0$ for each edge $uv\in E(G)$.

(ii) A \emph{$W$-constraint v-set-coloring} $F$ of the graph $G$ holds $F: V(G) \rightarrow [0, p+q]^2$ such that two sets $F(x)\neq F(y)$ for each edge $xy\in E(G)$ holding the $W$-constraint $W[F(u), F(v)]=0$ for each edge $uv\in E(G)$.

(iii) A \emph{$W$-constraint e-set-coloring} $F$ of the graph $G$ holds $F: E(G) \rightarrow [0, p+q]^2$ such that two sets $F(uv)\neq F(uw)$ for two adjacent edges $uv, uw\in E(G)$ holding the $W$-constraint.

(iv) An \emph{e-proper $W$-constraint v-set-coloring} $(F,g)$ of the graph $G$ is consisted of a vertex set-coloring $F: V(G) \rightarrow [0, p+q]^2$ and a proper edge coloring $g: E(G) \rightarrow [a, b]$ such that two sets $F(x)\neq F(y)$ for each edge $xy\in E(G)$ and two adjacent edge colors $g(uv)\neq g(uw)$ for two adjacent edges $uv, uw\in E(G)$ holding the $W$-constraint $W[F(u), g(uv), F(v)]=0$ for each edge $uv\in E(G)$.

(v) A \emph{v-proper $W$-constraint e-set-coloring} $(F,f)$ of the graph $G$ is consisted of an edge set-coloring $F: E(G) \rightarrow [0, p+q]^2$ and a proper vertex coloring $f: V(G) \rightarrow [a,b]$, such that two sets $F(uv)\neq F(uw)$ for two adjacent edges $uv, uw\in E(G)$, and $f(x)\neq f(y)$ for each edge $xy\in E(G)$, as well as the $W$-constraint $W[f(u), F(uv), f(v)]=0$ for each edge $uv\in E(G)$.\qqed
\end{defn}

\begin{rem}\label{rem:333333}
Suppose that a $(p,q)$-graph $G$ admits a $W$-constraint ve-set-coloring $F: V(G)\cup E(G)\rightarrow [0, p+q]^2$ defined in Definition \ref{defn:new-set-colorings}. $T_{code}(G,F)$ derives $(3q)!$ set-based strings $S_i=C_{i,1}C_{i,2}\cdots C_{i,3q}$ with $i\in [1,(3q)!]$, where each $C_{i,j}$ is a number-based set $C_{i,j}=\{a_{i,j,1},a_{i,j,2}$, $\dots $, $a_{i,j,b(i,j)}\}$ with $b(i,j)\geq 1$. A set-based strings $S_i$ exports $\prod^{3q}_{k=1}b(i,j)!$ number-based strings. Thereby, the set-coloring Topcode-matrix $T_{code}(G,F)$ derives $(3q)!\prod^{3q}_{k=1}b(i,j)!$ number-based strings, in total. \qqed
\end{rem}

\begin{defn} \label{defn:55-v-set-e-proper-more-labelings}
\cite{Yao-Wang-2106-15254v1} A \emph{v-set e-proper $W$-constraint labeling} (resp. $\varepsilon$-\emph{coloring}) of a $(p,q)$-graph $G$ is a total coloring $f:V(G)\cup E(G)\rightarrow \Omega$, where $\Omega$ consists of numbers and sets, such that $f(u)$ is a set for each vertex $u\in V(G)$, and the edge color $f(xy)$ for each edge $xy\in E(G)$ is a number, and the edge color set $f(E(G))$ satisfies the given $W$-constraint $W[f(u), f(uv), f(v)]=0$ for each edge $uv\in E(G)$.\qqed
\end{defn}

\subsection{Set-labeling}

\begin{defn}\label{defn:graceful-intersection}
\cite{Yao-Zhang-Sun-Mu-Sun-Wang-Wang-Ma-Su-Yang-Yang-Zhang-2018arXiv} Suppose that a $(p,q)$-graph $G$ admits a set-labeling $F:V(G)\rightarrow [1,q]^2$~(resp. $[1,2q-1]^2)$, and induces an edge set-color $F(uv)=F(u)\cap F(v)$ for each edge $uv\in E(G)$. If we select a \emph{representative} $a_{uv}\in F(uv)$ for each edge color set $F(uv)$ such that $\{a_{uv}:~uv\in E(G)\}=[1,q]$ (resp. $[1,2q-1]^o$), then $F$ is called \emph{graceful-intersection (resp. odd-graceful-intersection) total set-labeling} of the graph $G$.\qqed
\end{defn}

\begin{thm}\label{thm:total-set-labelings}
\cite{Yao-Zhang-Sun-Mu-Sun-Wang-Wang-Ma-Su-Yang-Yang-Zhang-2018arXiv} Each tree $T$ admits a \emph{graceful-intersection (resp. an odd-graceful-intersection) total set-labeling} (see an example shown in Fig.\ref{fig:0-graceful-intersection}(a)).
\end{thm}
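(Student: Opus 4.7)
The plan is to give a direct constructive proof using incidence sets. Given a tree $T$ with $p$ vertices and $q$ edges, I would first fix an arbitrary bijection between the edges and the target label set: in the graceful case, number the edges $e_1,e_2,\ldots,e_q$ by the labels $1,2,\ldots,q$; in the odd-graceful case, use the bijection $e_i \leftrightarrow 2i-1$ onto $[1,2q-1]^o$. Then for each vertex $v\in V(T)$ define
$$F(v)=\{\,i : v \text{ is an endpoint of } e_i\,\}\subseteq [1,q]$$
(respectively the analogous subset of $[1,2q-1]^o$). This is the natural incidence labeling, where each vertex carries the list of labels of the edges meeting it.

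Next I would verify the three requirements of Definition \ref{defn:graceful-intersection}. First, distinctness: in a tree with at least three vertices each edge is incident to exactly two vertices, so $F(v)$ recovers $v$ as the unique common endpoint of the edges it lists, forcing $F(v)\neq F(w)$ for distinct $v,w$. Second, nonempty intersection: for each edge $e_i=uv$ the index $i$ lies in both $F(u)$ and $F(v)$, so $F(uv)=F(u)\cap F(v)\ni i$. Third, the representative condition: setting $a_{uv}:=i$ whenever $uv=e_i$ gives $\{a_{uv}:uv\in E(T)\}=[1,q]$ (respectively $[1,2q-1]^o$) because the labels are in bijection with the edges. As a free bonus, since two vertices of a tree share at most one edge, $F(uv)=\{i\}$ is forced, so the representative $a_{uv}$ is uniquely determined — the labeling is essentially canonical.

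The main obstacle, minor as it is, is the degenerate case $T=K_2$: with $q=1$ the only non-empty subset of $[1,1]$ is $\{1\}$ itself, so two distinct vertex sets do not exist inside $[1,q]^2$; I would handle this by flagging that the theorem is understood for $p\geq 3$ (consistent with the paper's convention that $\Lambda^2$ excludes the empty set) rather than contriving a workaround. The odd-graceful-intersection clause follows from the identical argument after replacing $[1,q]$ by $[1,2q-1]^o$ throughout, so one construction settles both parts of the theorem at once, and the example in Fig.\ \ref{fig:0-graceful-intersection}(a) is an instance of exactly this recipe.
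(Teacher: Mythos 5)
Your construction is correct, and its core is the same incidence-set idea the paper uses: Theorem~\ref{thm:total-set-labelings} itself is stated here without proof (it is cited from an external reference), but the identical statement reappears as Theorem~\ref{thm:producing-set-colorings-from-labelings}, whose proof sets $F_{grace}(x)=\{f(xy):y\in N(x)\}$ for a gracefully total coloring $f$ --- that is, each vertex receives the set of labels of its incident edges, exactly as in your proposal. The genuine difference is upstream: the paper first invokes Theorem~\ref{thm:tree-graceful-total-coloringss} to obtain an edge labeling of the form $f(uv)=|f(u)-f(v)|$ with edge color set $[1,q]$, whereas you observe that Definition~\ref{defn:graceful-intersection} places no constraint tying the representative $a_{uv}$ to vertex numbers, so an arbitrary bijection $E(T)\to[1,q]$ (resp.\ $[1,2q-1]^o$) already suffices. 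This makes your argument self-contained and strictly more elementary; what the paper's route buys is the extra graceful structure on the representatives, which is what later results such as Corollary~\ref{thm:99999} and the VSET-coloring constructions actually exploit. Two small points: your claim that $F(v)$ ``recovers $v$ as the unique common endpoint'' is loose when $v$ is a leaf (a singleton $\{i\}$ has two common endpoints); the correct statement is that $F(v)=F(w)$ for $v\neq w$ forces both to be degree-one endpoints of the same edge, which only occurs in $K_2$ --- precisely the degenerate case you flag, and which the paper's own proof silently shares. Both proofs also assign a leaf and its incident edge the same singleton set, so neither yields distinctness across $V(T)\cup E(T)$; this is consistent with Definition~\ref{defn:graceful-intersection}, which only demands vertex distinctness and the representative condition.
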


\begin{thm}\label{thm:rainbow-total-set-labelings}
\cite{Yao-Zhang-Sun-Mu-Sun-Wang-Wang-Ma-Su-Yang-Yang-Zhang-2018arXiv} Each tree $T$ of $q$ edges admits a \emph{regular rainbow intersection total set-labeling} based on a \emph{regular rainbow set-sequence} $\{[1,k]\}^{q}_{k=1}$ (see an example shown in Fig.\ref{fig:0-graceful-intersection}(b)).
\end{thm}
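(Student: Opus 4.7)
The plan is to build the labeling explicitly via a depth-first traversal of $T$. Root $T$ at an arbitrary vertex $r$ and run a DFS from $r$, indexing the edges $e_1,e_2,\dots,e_q$ in \emph{post-order}: the edge joining a child $v$ to its parent $u$ receives its index exactly when the DFS finishes processing the subtree rooted at $v$. Then define
$$
F(v)=[1,M(v)],\qquad M(v)=\max\{k:e_k\text{ is incident to }v\},
$$
for every $v\in V(T)$, so that $F:V(T)\to [1,q]^2$ is a vertex set-labeling of the kind required by Definition \ref{defn:graceful-intersection}.

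The key structural observation is that for every non-root vertex $v$ whose parent-edge is $e_k$, one has $M(v)=k$. Indeed, the edges incident to $v$ are $e_k$ together with the edges from $v$ to its own children in the rooted tree. Every child-subtree of $v$ is entirely processed (and therefore all its incident edges are already indexed) before the DFS backtracks across $e_k$, so the children-edges of $v$ receive labels strictly smaller than $k$. Consequently $F(v)=[1,k]$, and in particular $F(r)=[1,q]$ because the largest index must appear at the root.

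Given this, fix any edge $e_k=uv$ with $u$ the parent of $v$. Then $F(v)=[1,k]$, while $F(u)=[1,M(u)]$ with $M(u)\geq k$ (since $e_k$ is incident to $u$). Two down-closed integer intervals intersect in the shorter one, so
$$
F(u)\cap F(v)=[1,\min(k,M(u))]=[1,k].
$$
Hence the induced color of $e_k$ is exactly the $k$-th term of the sequence $\{[1,k]\}^{q}_{k=1}$, each term is realised exactly once, and we obtain the required regular rainbow intersection total set-labeling of $T$.

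The main obstacle is the post-order claim that $M(v)$ equals the parent-edge label of $v$; once this is isolated, everything reduces to the elementary identity $[1,a]\cap[1,b]=[1,\min(a,b)]$. A purely combinatorial reformulation, useful for an alternative inductive write-up, is to exhibit a bijection $\sigma\colon E(T)\to[1,q]$ such that every edge is the largest-labelled edge at (at least) one of its endpoints; the DFS post-order supplies exactly such a $\sigma$, after which the verification of the intersection condition is immediate.
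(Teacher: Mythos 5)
The paper does not actually prove Theorem~\ref{thm:rainbow-total-set-labelings}: it imports the statement from \cite{Yao-Zhang-Sun-Mu-Sun-Wang-Wang-Ma-Su-Yang-Yang-Zhang-2018arXiv} and offers only the example in Fig.~\ref{fig:0-graceful-intersection}(b), so your argument has to stand on its own. As a construction it is sound and arguably more transparent than an example: the post-order claim ($M(v)=k$ whenever $e_k$ is the parent-edge of $v$, because every child-edge of $v$ is indexed before the DFS backtracks across $e_k$) is correct, $M(r)=q$ holds since the last-indexed edge is incident to the root, and the identity $[1,a]\cap[1,b]=[1,\min(a,b)]$ then delivers $F(u)\cap F(v)=[1,k]$ for the edge $e_k$, realising each term of $\{[1,k]\}^{q}_{k=1}$ exactly once. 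Your closing reformulation (a bijection $\sigma:E(T)\to[1,q]$ under which every edge is the largest-indexed edge at one of its endpoints) is exactly the right invariant for an inductive write-up as well.

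One point you should address explicitly: your map $F$ is not a \emph{set-labeling} in the sense the paper attaches to that word (Definition~\ref{defn:set-coloring-definitions}, condition~(\ref{adjacent-vertex-labeling}) requires $F(u)\neq F(v)$ for adjacent $u,v$), because the root $r$ and the child $v_0$ whose parent-edge is $e_q$ both receive $[1,q]$, and $rv_0$ is an edge. This is not a defect of your particular traversal: any labeling with vertex sets inside $[1,q]$ whose edge intersections realise $[1,q]$ must assign $[1,q]$ to both endpoints of that edge (the star $K_{1,q}$ already forces this), so the notion of ``regular rainbow intersection total set-labeling'' cannot be demanding adjacent-distinct vertex sets over the ground set $[1,q]$. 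Either say this, or sidestep it by enlarging the ground set (e.g.\ replace $F(r)$ by $[1,q+1]$; no intersection along any edge changes, since no other vertex set contains $q+1$). With that one sentence added, the proof is complete.
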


\begin{figure}[h]
\centering
\includegraphics[width=16cm]{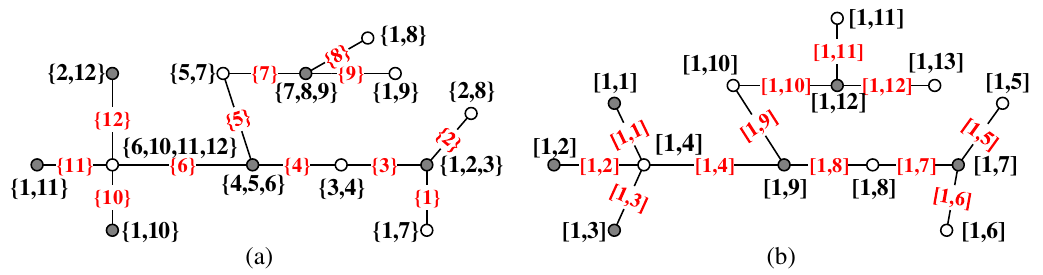}\\
\caption{\label{fig:0-graceful-intersection}{\small Left tree admits a graceful-intersection total set-labeling for illustrating Theorem \ref{thm:total-set-labelings}; Right tree admits a regular rainbow intersection total set-labeling for illustrating Theorem \ref{thm:rainbow-total-set-labelings}, cited from \cite{Yao-Zhang-Sun-Mu-Sun-Wang-Wang-Ma-Su-Yang-Yang-Zhang-2018arXiv}.}}
\end{figure}

\begin{defn} \label{defn:set-coloring-definitions}
\cite{Yao-Sun-Zhang-Li-Yan-Zhang-Wang-ITOEC-2017} Let a $(p,q)$-graph $G$ with integers $q\geq p-1\geq 2$ admit a set-coloring $F: X\rightarrow S$, where $X$ is a subset of $V(G)\cup E(G)$, $S$ is a subset of the power set $[0,pq]^2$ of a consecutive integer set $[0,pq]$, and let $R_{est}(c_1,c_2,\dots, c_m)$ be a constraint set. There are the following constraints:
\begin{asparaenum}[(a)]
\item \label{vertex-set} $X=V(G)$;
\item \label{edge-set} $X=E(G)$;
\item \label{total-set} $X=V(G)\cup E(G)$;
\item \label{adjacent-vertex-labeling} $F(u)\not =F(v)$ if each edge $uv\in E(G)$ (it may happen $F(u)\cap F(v)\neq \emptyset$);
\item \label{adjacent-edge-labeling} $F(uv)\not =F(uw)$ for any pair of adjacent edges $uv$ and $uw$ of the graph $G$ (it may happen $F(uv)\cap F(uw)\neq \emptyset$);
\item \label{vertex-labeling} $|F(V(G))|=p$, also, $F(x)\not =F(y)$ for any pair of vertices $x$ and $y$ of the graph $G$;
\item \label{edge-labeling} $|F(E(G))|=q$, so $F(xy)\not =F(uv)$ for distinct edges $uv$ and $xy$ of the graph $G$;
\item \label{edge-induced} An edge coloring $F\,': E(G)\rightarrow S$ is induced by $F$ subject to a constraint set $R_{est}(c_1,c_2,\dots, c_m)$, that is, each edge $uv\in E(G)$ is colored by the set $F\,'(uv)$ such that each $c\in F\,'(uv)$ is generated by some $a\in F(u)$, $b\in F(v)$ and holds one constraint or more constraints of $R_{est}(c_1,c_2,\dots, c_m)$;
\item \label{induced-edge-labeling} $|F\,'(E(G))|=q$ by the definition of (\ref{edge-induced}).
\end{asparaenum}

\noindent \textbf{We call}:
\begin{asparaenum}[(1)]
\item $F$ \emph{strong vertex set-labeling} of the graph $G$ if both (\ref{vertex-set}) and (\ref{vertex-labeling}) hold true.
\item $F$ \emph{strong edge-set-labeling} of the graph $G$ if both (\ref{edge-set}) and (\ref{edge-labeling}) hold true.
\item $F\,'$ \emph{strongly induced edge-set-labeling} of the graph $G$ if both (\ref{edge-labeling}) and (\ref{edge-induced}) hold true.
\item $F$ \emph{strongly total set-labeling} of the graph $G$ if (\ref{total-set}), (\ref{vertex-labeling}) and (\ref{edge-labeling}) hold true.
\item $(F,F\,')$ \emph{strong set-coloring} subject to a constraint set $R_{est}(c_1,c_2,\dots, c_m)$ if (\ref{vertex-set}), (\ref{vertex-labeling}), (\ref{edge-induced}) and (\ref{induced-edge-labeling}) hold true.
\end{asparaenum}

\begin{asparaenum}[(1')]
\item $F$ \emph{set-labeling} of the graph $G$ if it satisfies (\ref{vertex-set}) and (\ref{adjacent-vertex-labeling}) simultaneously.
\item $F$ an \emph{edge-set-labeling} of the graph $G$ if it satisfies (\ref{edge-set}) and (\ref{adjacent-edge-labeling}) simultaneously.
\item $F$ \emph{total set-coloring} of the graph $G$ if it satisfies (\ref{total-set}), (\ref{adjacent-vertex-labeling}) and (\ref{adjacent-edge-labeling}) simultaneously.
\item $(F,F\,')$ \emph{set-coloring} subject to the constraint set $R_{est}(c_1,c_2,\dots, c_m)$ if (\ref{vertex-set}), (\ref{adjacent-vertex-labeling}), (\ref{adjacent-edge-labeling}) and (\ref{edge-induced}) are true simultaneously.
\end{asparaenum}

\begin{asparaenum}[(1'')]
\item $F$ \emph{pseudo-vertex set-labeling} of the graph $G$ if it holds (\ref{vertex-set}), but not (\ref{adjacent-vertex-labeling}).
\item $F$ \emph{pseudo-edge set-labeling} of the graph $G$ if it holds (\ref{edge-set}), but not (\ref{adjacent-edge-labeling}).
\item $F$ \emph{pseudo-total set-coloring} of the graph $G$ if it holds (\ref{total-set}), but not (\ref{adjacent-vertex-labeling}), or but not (\ref{adjacent-edge-labeling}), or not both (\ref{adjacent-vertex-labeling}) and (\ref{adjacent-edge-labeling}).\qqed
\end{asparaenum}
\end{defn}

Hereafter, we say ``a set-coloring $(F,F\,')$ subject to the constraint set $R_{est}(c_1,c_2,\dots, c_m)$'' defined in Definition \ref{defn:set-coloring-definitions}, and say ``a total set-coloring $\psi$ subject to the constraint set $R_{est}(c_1,c_2,\dots, c_m)$'' defined in Definition \ref{defn:set-coloring-definitions}.

\begin{defn} \label{defn:111111}
\cite{Yao-Ma-arXiv-2201-13354v1} Let $v_s=\min \{|F(x)|:x\in V(G)\}$ and $v_l=\max \{|F(y)|:y\in V(G)\}$ in Definition \ref{defn:set-coloring-definitions}. The coloring $F$ is called \emph{$\alpha$-uniformly vertex set-labeling} of the graph $G$ if $v_s=v_l=\alpha$. Similarly, there are two parameters $e_s=\min \{|F\,'(uv)|:~uv\in E(G)\}$ and $e_l=\max \{|F\,'(xy)|:~xy\in E(G)\}$. The coloring $F\,'$ is called \emph{$\beta$-uniformly edge set-labeling} of the graph $G$ if $e_s=e_l=\beta$. As $\alpha=\beta=1$ above, $(F,F\,')$ is just a popular labeling of graph theory (Ref. \cite{Gallian2022}). For another group of parameters
\begin{equation}\label{eqa:555555}
t_s=\min \{|\psi(x)|:x\in V(G)\cup E(G)\},~t_l=\max \{|\psi(y)|:~y\in V(G)\cup E(G)\}
\end{equation} from Definition \ref{defn:set-coloring-definitions}, and we call $\psi$ \emph{$k$-uniformly total set-coloring} if $k=t_s=t_l$.\qqed
\end{defn}

\begin{rem}\label{rem:more-set-labeling-colorings}
\cite{Yao-Wang-2106-15254v1} For a (strongly) total set-labeling $\psi$ subject to the constraint set $R_{est}(c_1,c_2,\dots, c_m)$ defined in Definition \ref{defn:set-coloring-definitions}, we point out that three numbers $a\in \psi(u)$, $b\in \psi(v)$ and $c\in \psi(uv)$ correspond a constraint $c_i\in R_{est}(c_1,c_2,\dots, c_m)$, by graph colorings (resp. labelings), such that $c_i$ holds one of the following constraints:
\begin{asparaenum}[(a)]
\item \label{odd-graceful} the form $|a-b|=c$ inducing \emph{graceful labelings, or odd-graceful labelings, or odd-elegant labelings, or vertex (distinguishing) coloring if $c\neq 0$}.

\item \label{edge-magic-total} the edge-magic constraint $a+b+c=k$ inducing \emph{edge-magic total labelings} for $k\geq 1$.

\item \label{filicitous} the form $a+b=c$ ($\bmod~\eta$) inducing \emph{felicitous labelings, or harmonious labelings}.

\item \label{edge-magic-graceful} the felicitous-difference constraint $|a+b-c|=k$ inducing \emph{felicitous-difference graceful labelings}.

\item \label{parameters-edge-magic-graceful} the form $|a+b-\lambda c|=k$ inducing \emph{$(k,\lambda)$-edge magic graceful labelings}, or \emph{$(k,\lambda)$-odd-magic graceful labelings}.

\item \label{parameters-edge-magic-total} the form $a+b=k+\lambda c$ inducing \emph{$(k,\lambda)$-magic total labelings}, or \emph{$(k,\lambda)$-odd-magic total labelings}.

\item \label{total-coloring} $a\neq b$, $b\neq c$ and $c\neq a$ induce \emph{total colorings, vertex distinguishing total colorings, list-colorings}.

\item \label{edge-coloring} $c\in \psi(uv)$ and $c\,'\in \psi(uv')$ hold $c\neq c\,'$ inducing \emph{edge colorings}.

\item \label{couple-edge-magic-total} the form $a+b+c=k^+$, or the form $|a+b-c|=k^-$ inducing \emph{$(k^+,k^-)$-couple edge-magic total labelings}.

\item \label{two-magics-magic-graceful} the form $|a+b-c|=k_1$, or the form $|a+b-c|=k_2$ inducing \emph{$(k_1,k_2)$-edge-magic graceful labelings}.\qqed
\end{asparaenum}
\end{rem}

\begin{example}\label{exa:8888888888}
The \emph{first example} is about a \emph{strong set-coloring} $(F,F\,')$ in which the graphical structure is shown in Fig.\ref{fig:set-password-22}(a). We color each vertex $u$ with a set $F(u)$ such that $F(x)\not =F(y)$ for any pair of vertices $x,y$; there are some $a\in F(u)$ and $b\in F(v)$ to hold the unique constraint $|a-b|=c$ subject to $R_{est}(c_1)$ that induces the edge set $F\,'(uv)$ with $c\in F\,'(uv)$ such that $F\,'(uv)\not =F\,'(xy)$ for any pair of edges $uv$ and $xy$.

A \emph{strongly total set-labeling} $\psi$, as the \emph{second example}, is shown in Fig.\ref{fig:set-password-22}(b) with $\psi(x)\not =\psi(y)$ for any two elements $x,y\in V(G)\cup E(G)$, and for an edge $uv$, each $c\in \psi(uv)$ corresponds some $a\in \psi(u)$ and $b\in \psi(v)$ such that at least one of two constraints $|a-b|=c$ and $|a+b-c|=4$ holds true.

The \emph{third example} on a \emph{strongly total set-labeling} $\theta$ subject to the constraint set $R^*_{est}(c_1,c_2,c_3)$ is shown in Fig.\ref{fig:set-password-22}(c), where

$c_1:~|a-b|=c$ for $c\in \theta(uv)$, $a\in \theta(u)$ and $b\in \theta(v)$;

$c_2:~|a\,'+b\,'-c\,'|=4$ for $c\,'\in \theta(uv)$, $a\,'\in \theta(u)$ and $b\,'\in \theta(v)$; and

$c_3:~a\,''+b\,''=c\,''~(\bmod~6)$ for $c\,''\in \theta(uv)$, $a\,''\in \theta(u)$ and $b\,''\in \theta(v)$.

Thereby, each number $c\in \theta(uv)$ corresponds some numbers $a\in \theta(u)$ and $b\in \theta(v)$ such that they hold at least one of three constraints of $R^*_{est}(c_1,c_2,c_3)$.

A \emph{strongly total set-labeling} $(\phi,\phi\,')$ subject to the constraint set $R^*_{est}(c_1,c_2,c_3)$ shown in Fig.\ref{fig:set-password-33} holds: $\theta(x)=\phi (x)$ for any $x\in V(G)$ and $\theta(uv)\subseteq \phi\,'(uv)$ for each edge $uv\in E(G)$, where $\theta$ is defined in Fig.\ref{fig:set-password-22}(c). We say $(\phi,\phi\,')$ to be the \emph{maximally strong set-coloring} subject to the constraint set $R^*_{est}(c_1,c_2,c_3)$.\qqed
\end{example}

\begin{figure}[h]
\centering
\includegraphics[width=15cm]{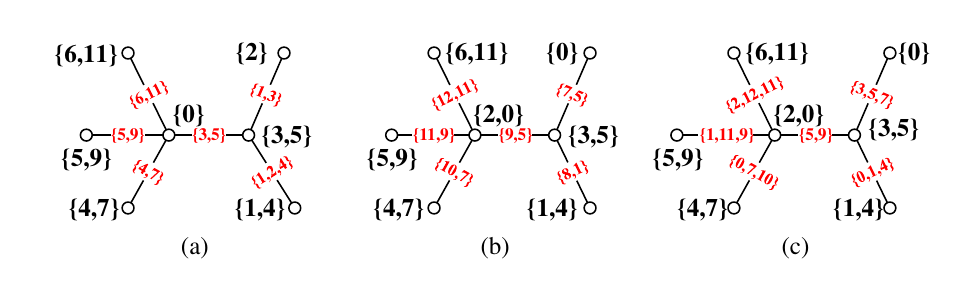}\\
{\small \caption{\label{fig:set-password-22} (a) A strong set-coloring $(F,F\,')$; (b) a strongly total set-labeling $\psi$; (c) another strongly total set-labeling $\theta$, cited from \cite{Yao-Zhang-Sun-Mu-Sun-Wang-Wang-Ma-Su-Yang-Yang-Zhang-2018arXiv}.}}
\end{figure}

\begin{problem}\label{qeu:444444}
By Remark \ref{rem:more-set-labeling-colorings}, suppose that a $(p,q)$-graph $G$ admits a set-labeling $F:V(G)\rightarrow X$, so $F(V(G))=F(V_{=1})\cup F(V_{\geq 2})$ with $V(G)=V_{=1}\cup V_{\geq 2}$ and $V_{=1}\cap V_{\geq 2}=\emptyset $, where
$$
F(V_{=1})=\big \{|F(u)|=1:u\in V_{=1}\big \},\quad F(V_{\geq 2})=\big \{|F(w)|\geq 2:w\in V_{\geq 2}\big \}
$$ \textbf{Find} a $W$-constraint set-coloring $F$ for a graph $G$ holding $|F(V_{=1})|\geq |g(V_{=1})|$ and $|F(V_{\geq 2})|\leq |g(V_{\geq 2})|$ for each $W$-constraint set-labeling $g$ of the graph $G$, here $W$-constraint $\in \{$graceful, odd-graceful, elegant, odd-elegant, edge-magic total, \emph{etc.}$\}$.
\end{problem}

\begin{figure}[h]
\centering
\includegraphics[width=10.6cm]{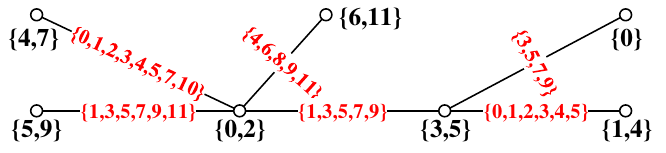}\\
{\small \caption{\label{fig:set-password-33} A strongly set-coloring $(\phi,\phi\,')$ subject to $R^*_{est}(c_1,c_2,c_3)$, cited from \cite{Yao-Zhang-Sun-Mu-Sun-Wang-Wang-Ma-Su-Yang-Yang-Zhang-2018arXiv}.}}
\end{figure}

Since each simple and connected $(p,q)$-graph $G$ can be vertex-split into a tree of $q+1$ vertices, so we have the following results:

\begin{thm} \label{them:v-set-e-proper-W-type-colorings}
\cite{Wang-Wang-Yao2019-Euler-Split} Each simple and connected $(p,q)$-graph $G$ can be vertex-split into a tree $T$ of $q+1$ vertices by the vertex-splitting operation, and admits a \emph{v-set e-proper $W$-constraint coloring} (Ref. Definition \ref{defn:55-set-labeling} and Definition \ref{defn:55-v-set-e-proper-more-labelings}) if $T$ admits a \emph{$W$-constraint coloring}.
\end{thm}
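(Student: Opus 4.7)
The plan is to prove the two halves of the statement in sequence: first, iteratively vertex-split $G$ into a tree $T$ on $q+1$ vertices; second, lift a $W$-type coloring of $T$ back to a v-set e-proper $W$-type coloring of $G$.

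For the first half, I would induct on the \emph{cyclomatic number} $c(G)=q-p+1$ of the connected $(p,q)$-graph $G$. If $c(G)=0$ then $G$ is itself a tree with $p=q+1$ vertices and nothing is needed. If $c(G)\geq 1$, fix a spanning tree $T^{*}$ of $G$ and pick any non-tree edge $e=uv\in E(G)\setminus E(T^{*})$; both endpoints of $e$ satisfy $\deg_G(u),\deg_G(v)\geq 2$, since each already has a tree-incident edge plus the extra edge $e$. Apply the vertex-splitting operation of Definition \ref{defn:vertex-split-coinciding-operations} at $u$ with $N(u\,')=\{v\}$ and $N(u\,'')=N_G(u)\setminus\{v\}$, so that $u\,'$ becomes a pendant leaf attached to $v$ by $e$. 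The resulting $G\wedge u$ has $p+1$ vertices, still $q$ edges, remains connected (the image of $T^{*}$ with $u\,''$ in place of $u$ is still a spanning tree), and has cyclomatic number $c(G)-1$. By induction, repeating this peels off all $c(G)=q-p+1$ non-tree edges, leaving a tree $T$ with $p+c(G)=q+1$ vertices and $q$ edges.

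For the second half, let $\pi:V(T)\to V(G)$ be the natural projection that records, for each vertex of $T$, the original vertex of $G$ from which it descended through the sequence of splits. Since vertex-splitting never identifies or deletes edges, there is a canonical bijection $\sigma:E(T)\to E(G)$ given by $\sigma(xy)=\pi(x)\pi(y)$. Given a $W$-type coloring $f$ on $T$ (proper in the sense of Definition \ref{defn:basic-W-type-labelings}, so injective on $V(T)$ with a distinct induced edge-condition coming from Remark \ref{rem:more-set-labeling-colorings}), define
\[
F(u)\;=\;\{\, f(x) : x\in\pi^{-1}(u)\,\}, \qquad g(uv)\;=\;f\bigl(\sigma^{-1}(uv)\bigr).
\]
Because the preimages $\pi^{-1}(u)$ partition $V(T)$ and $f$ is injective, the sets $F(u)$ are pairwise disjoint, hence in particular all distinct, meeting the requirement of Definition \ref{defn:55-set-labeling}(iv). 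Because $\sigma$ is a bijection and $f$ is edge-proper on $T$, the induced edge map $g$ is a proper edge coloring of $G$ taking values in the same color range as $f$. Finally, the $W$-edge relation (for example $g(uv)=|f(x)-f(y)|$ for graceful-type $W$, with $f(x)\in F(u),\ f(y)\in F(v)$) transfers intact through $\sigma$, so $(F,g)$ is a v-set e-proper $W$-type coloring of $G$ in the sense of Definitions \ref{defn:55-v-set-e-proper-more-labelings} and \ref{defn:new-set-colorings}.

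The main obstacle is choosing the vertex to split at each stage of the first half so that connectivity is preserved while the cycle rank strictly drops; splitting a non-tree edge off as a pendant via a spanning tree avoids the pitfall that splitting at an arbitrary cycle vertex could disconnect $G$ or fail to reduce $c(G)$. Once $T$ and $\pi$ are secured, the second half is essentially bookkeeping: pulling $f$-values back along $\pi$ to form vertex sets, and transporting edge labels across the edge bijection $\sigma$.
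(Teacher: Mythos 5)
Your argument is correct, but note that the paper itself offers no proof of this statement: it is imported by citation from an external reference, preceded only by the remark that every connected $(p,q)$-graph can be vertex-split into a tree of $q+1$ vertices. Judged against the techniques the paper actually deploys elsewhere (Step 3.1 of PSCS-algorithm-3 and the proof of Theorem \ref{thm:graph-five-edge-join-total-set-labelings}), your route is essentially the same one: peel off the $q-p+1$ surplus edges one at a time by splitting an endpoint into a pendant vertex, then pull the tree's coloring back by assigning to each vertex $u$ of $G$ the set of labels of its split-images. Your spanning-tree bookkeeping for choosing which edge to detach is a slightly more careful formulation than the paper's ``pick an edge in a cycle and split with $|N(x\,')|=1$,'' but both preserve connectivity and drop the cyclomatic number by one, so the difference is cosmetic. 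The only point worth flagging is your reliance on $f$ being injective on $V(T)$ to conclude that the sets $F(u)$ are pairwise distinct, as required by Definition \ref{defn:55-set-labeling}(iv); this holds for the $W$-type \emph{labelings} of Definition \ref{defn:basic-W-type-labelings} but not for the relaxed total \emph{colorings} such as that of Definition \ref{defn:2020arXiv-gracefully-total-coloring}, where repeated vertex colors are allowed and two singleton preimages could receive equal sets. Since you state the injectivity hypothesis explicitly and the theorem is silent on which $W$-types are intended, this is a scope restriction rather than an error.
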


\begin{thm}\label{thm:pseudo-v-set-e-proper-graceful}
\cite{Yao-Mu-Sun-Sun-Zhang-Wang-Su-Zhang-Yang-Zhao-Wang-Ma-Yao-Yang-Xie2019} Every connected graph admits a \emph{v-set e-proper graceful labeling} defined in Definition \ref{defn:55-v-set-e-proper-more-labelings}.
\end{thm}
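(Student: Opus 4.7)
The plan is to leverage Theorem~\ref{them:v-set-e-proper-W-type-colorings}, which says any simple connected $(p,q)$-graph $G$ can be vertex-split into a tree $T$ on $q+1$ vertices sharing the same edge set, and moreover that a $W$-type coloring of $T$ pulls back to a v-set e-proper $W$-type coloring of $G$ via vertex-coinciding. Concretely, the sequence of splits partitions $V(T)$ into pairwise disjoint non-empty subsets $\{S_u\}_{u\in V(G)}$ indexed by the vertices of $G$, and the edges of $G$ stand in natural bijection with those of $T$. Given a numerical vertex labeling $f: V(T)\to[0,q]$, one sets $F(u)=\{f(w):w\in S_u\}$ for each $u\in V(G)$, and $g(uv)=|f(x)-f(y)|$ for the unique edge $xy\in E(T)$ corresponding to $uv\in E(G)$. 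Injectivity of $f$ combined with the disjointness of the $S_u$ immediately gives $F(u)\cap F(v)=\emptyset$ (hence $F(u)\neq F(v)$) for distinct $u,v$, while the edge labels of $g$ simply copy the induced edge labels of $f$.

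So the task reduces to finding, among all trees reachable from $G$ by vertex-splitting, one that admits a classical graceful labeling; that labeling then pulls back to the desired v-set e-proper graceful labeling of $G$. To sidestep the (still-open) Graceful Tree Conjecture, my plan is to target a \emph{caterpillar}, exploiting Rosa's classical theorem that every caterpillar is graceful. The proof therefore hinges on the structural claim that \emph{every simple connected graph admits a sequence of vertex-splits terminating in a caterpillar}.

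I plan to establish this structural claim by induction on the cycle rank $q-p+1$. At each step, pick a vertex $u$ of degree at least $2$ lying on a cycle and apply a leaf-splitting operation (Definition~\ref{defn:edge-split-coinciding-operations}) that peels off exactly one neighbor as a pendant, strictly reducing the cycle rank while keeping the graph connected. Once the rank hits zero the graph is a tree, and any remaining branching beyond a spine-plus-leaves shape can be straightened out by additional leaf-splittings at high-degree internal vertices until only a path with pendant leaves remains. The graceful labeling of the resulting caterpillar then pulls back through the vertex-coinciding bijection to furnish the required $(F,g)$ on $G$.

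The hard part will be controlling the splits simultaneously: each leaf-splitting must preserve connectedness, must keep progress toward a caterpillar (rather than drifting into a deeper lobster or spider), and must respect the degree constraints of Definition~\ref{defn:vertex-split-coinciding-operations}. The delicate case is when a candidate vertex $u$ is a cut vertex of the current graph, where splitting off a neighbor risks disconnecting the result; here one must either pick a different $u$ or accompany the split with an auxiliary edge rerouting. If the direct inductive construction becomes too case-heavy, my backup plan is an Eulerian-style argument: pair up the odd-degree vertices of $G$ with virtual edges so the enlarged graph is Eulerian, trace a closed Eulerian walk, and unfold this walk into a path spine by splitting each repeated vertex occurrence, finally reattaching the virtual edges as pendants to realise a caterpillar.
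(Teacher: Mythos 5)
The paper does not actually prove this theorem (it is imported by citation), so the comparison has to be with the mechanism the paper develops around it: Theorem~\ref{them:v-set-e-proper-W-type-colorings} (vertex-split $G$ into a tree $T$ with the same edge set and pull a coloring of $T$ back through the vertex-coinciding bijection), together with the direct VSET-style constructions of Theorem~\ref{thm:build-hyperedge-set} and Theorem~\ref{thm:graceful-inter-total-set-colorings-number}, which build the vertex sets from an arbitrary bijection $E(G)\rightarrow[1,q]$ rather than from a graceful labeling of a tree. Your pullback mechanism in the first paragraph is sound and matches the paper's; the problem is the structural claim you hang everything on.

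The claim that every simple connected graph admits a sequence of vertex-splits terminating in a caterpillar is false, and the failure is decisive rather than a technicality. A tree on $p$ vertices has $p-1$ edges; a vertex-split (Definition~\ref{defn:vertex-split-coinciding-operations}) preserves the edge count and adds a vertex, so the result has $p+1$ vertices and $p-1$ edges and is necessarily disconnected. Hence once the cycle rank reaches zero, \emph{no} further vertex-split is available, and if $G$ is itself a non-caterpillar tree (e.g.\ the spider with three legs of length two), the admissible sequence of splits is empty and you are stuck with a tree whose gracefulness is exactly the open Graceful Tree Conjecture you were trying to avoid. Your proposed fix, "straightening out" branching by leaf-splitting, does not help: leaf-splitting is a special case of the edge-splitting operation of Definition~\ref{defn:edge-split-coinciding-operations}, which deletes one edge and adds two, so it increases $|E|$ by one. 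That destroys the bijection between $E(G)$ and the edges of the target tree on which your whole pullback rests -- after coinciding, two edges of the caterpillar collapse onto one edge of $G$, the induced edge labels are no longer distinct, and the label set is $[1,q']$ for some $q'>q$ rather than $[1,q]$. The Eulerian backup fails for the same reason: the virtual edges pairing odd-degree vertices are not edges of $G$. The way out is to abandon graceful trees entirely: fix any bijection $h:E(G)\rightarrow[1,q]$, set $F(u)=\{h(uv):v\in N(u)\}$ (adjoining $0$ to one vertex per edge as needed) and $g(uv)=h(uv)=|h(uv)-0|$; distinctness of the incident-edge label sets gives $F(x)\neq F(y)$, and $g(E(G))=[1,q]$ by construction. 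This is precisely the VSET-coloring idea the paper uses for its own constructive results, and it needs no conjecture.
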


\begin{thm} \label{them:graph-edge-graceful-set-labelings}
\cite{Yao-Sun-Zhang-Li-Zhao2017} Each simple and connected $(p,q)$-graph $G$ admits a \emph{v-set e-proper graceful coloring} $f: V(G)\rightarrow [0,q]^2$ defined in Definition \ref{defn:55-v-set-e-proper-more-labelings} and Definition \ref{defn:55-set-labeling}, such that each edge $uv$ is colored with an induced edge color $f(uv)=|a_u-b_v|$ for some $a_u\in f(u)$ and $b_v\in f(v)$, and the edge color set $f(E(G))$ $=\{f(uv):uv\in E(G)\}=[1,q]$.
\end{thm}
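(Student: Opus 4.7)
The plan is to reduce the claim to a graceful labeling of a tree produced by vertex-splitting, then to reverse the splittings and package the pre-images as vertex sets.

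First, since $G$ is connected with $p$ vertices and $q$ edges, iterated vertex-splitting (Definition \ref{defn:vertex-split-coinciding-operations}) applied to cycle-incident vertices with degree at least two reduces $G$ to a tree $T$ on $q+1$ vertices and $q$ edges, exactly as in the preparatory Theorem \ref{them:v-set-e-proper-W-type-colorings}. The composite operation yields a natural surjection $\pi:V(T)\to V(G)$ whose fibres $\pi^{-1}(u)=\{u_1,\dots,u_{k_u}\}$ record which tree-vertices are coincided back into $u$, and a bijection $E(T)\to E(G)$ sending each tree-edge $u_iv_j$ to the edge $uv$ of $G$.

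Second, I would invoke a graceful labeling $g:V(T)\to[0,q]$ of $T$ with edge-difference set $\{|g(x)-g(y)|:xy\in E(T)\}=[1,q]$. In the authors' framework this is furnished, for the singleton-set case of a tree, by Theorem \ref{thm:pseudo-v-set-e-proper-graceful}, which I will use as a black box; in practice one exploits the freedom in how the vertex-splittings are carried out to steer $T$ into a family in which gracefulness is constructively available (caterpillars, lobsters, symmetric trees, spiders, \emph{etc.}). I then define $f:V(G)\to[0,q]^2$ by $f(u)=\{g(u_1),\dots,g(u_{k_u})\}$ and, for each edge $uv\in E(G)$ with unique lift $u_iv_j\in E(T)$, set $f(uv)=g(u_iv_j)=|g(u_i)-g(v_j)|$.

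Finally, I would verify the three prescribed properties. Injectivity of $g$ forces $f(u)\cap f(v)=\emptyset$ whenever $u\ne v$ in $V(G)$, so $f$ is certainly a v-set coloring in the sense of Definition \ref{defn:new-set-colorings}; the induced edge colour takes the required form $f(uv)=|a_u-b_v|$ with $a_u=g(u_i)\in f(u)$ and $b_v=g(v_j)\in f(v)$; and the bijection $E(T)\leftrightarrow E(G)$ combined with $g(E(T))=[1,q]$ yields $f(E(G))=[1,q]$. The main obstacle is the middle step, namely ensuring that the particular tree $T$ produced by the splittings really does admit a graceful labeling. Since the Graceful Tree Conjecture is still open in full generality, this either has to be absorbed into Theorem \ref{thm:pseudo-v-set-e-proper-graceful}, or one must use the latitude in the splitting procedure to force $T$ into a restricted class for which graceful labelings are known; the coincidence bookkeeping above then provides only the routine packaging on top of this genuine combinatorial input.
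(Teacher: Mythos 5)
The paper does not actually prove Theorem~\ref{them:graph-edge-graceful-set-labelings} in this document — it is quoted from \cite{Yao-Sun-Zhang-Li-Zhao2017} — but the surrounding machinery (Theorem~\ref{them:v-set-e-proper-W-type-colorings}, Theorem~\ref{thm:tree-graceful-total-coloringss}) makes clear what the intended argument is, and your outline follows the same skeleton: vertex-split $G$ into a tree $T$ on $q+1$ vertices, color $T$, then coincide back and package each fibre $\pi^{-1}(u)$ as the vertex set $f(u)$. That part of your write-up, including the bijection $E(T)\leftrightarrow E(G)$ and the verification that $f(E(G))=[1,q]$, is correct.

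The genuine gap is in your middle step. You ask for a \emph{graceful labeling} $g:V(T)\to[0,q]$ of the tree $T$, i.e.\ an injective vertex labeling with edge-difference set $[1,q]$. As you yourself note, this is the Graceful Tree Conjecture, and neither of your two escape routes closes the hole: steering the splittings so that $T$ lands in a known-graceful class is not justified (nothing guarantees an arbitrary connected $G$ can be split into, say, a caterpillar or lobster), and invoking Theorem~\ref{thm:pseudo-v-set-e-proper-graceful} as a black box is essentially circular, since that statement is the same assertion you are trying to prove. The tool the paper actually supplies to avoid this is Definition~\ref{defn:2020arXiv-gracefully-total-coloring} together with Theorem~\ref{thm:tree-graceful-total-coloringss}: every tree admits a \emph{gracefully total coloring}, which keeps the requirement $f(E(T))=[1,q]$ but explicitly permits repeated vertex colors, and therefore does not depend on any open conjecture. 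Since the conclusion of Theorem~\ref{them:graph-edge-graceful-set-labelings} only demands that the edge colors $|a_u-b_v|$ exhaust $[1,q]$, a gracefully total coloring of $T$ suffices; what you lose is your one-line argument that the fibres are pairwise disjoint, so you must add a short separate check (or a small adjustment of the coloring) ensuring $f(x)\neq f(y)$ for distinct vertices of $G$, as required by Definition~\ref{defn:55-set-labeling}(iv). That residual bookkeeping is minor; replacing the graceful labeling by a gracefully total coloring is the substantive repair your proof needs.
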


\begin{defn}\label{defn:total-set-coloring}
\cite{Yao-Sun-Zhang-Li-Zhao2017} A \emph{strongly total set-labeling} $\psi$ of a $(p,q)$-graph $G$ subject to the constraint set $R_{est}(c_1,c_2,\dots, c_m)$ is a total coloring $\psi:V(G)\cup E(G)\rightarrow S$, where $R_{est}(c_1,c_2,\dots, c_m)$ is a given constraint set and $S$ is a set of subsets of the set $[0,pq]$, such that $\psi(x)\not =\psi(y)$ for any pair of elements $x,y\in V(G)\cup E(G)$, and each element of the label set $\psi(uv)$ for each edge $uv\in E(G)$ holds at least one constraint $c_i\in R_{est}(c_1,c_2,\dots, c_m)$.\qqed
\end{defn}

\begin{thm} \label{them:each-graph-set-labelings}
\cite{Yao-Sun-Zhang-Li-Zhao2017} Each simple and connected $(p,q)$-graph $G$ admits a strongly total set-labeling $F$ such that
\begin{equation}\label{eqa:theorem-11}
\max \big \{|F(x)|:~x\in V(G)\cup E(G)\big \}=\Delta(G)+1
\end{equation}
\begin{equation}\label{eqa:theorem-22}
\max \big \{c:~c\in F(x),x\in V(G)\cup E(G)\big \}\in [1,p+q].
\end{equation}
\end{thm}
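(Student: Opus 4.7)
The plan is to construct $F$ explicitly from a bijective edge enumeration plus a distinguishing vertex tag, so that the incidence structure of $G$ directly delivers the bound $\Delta(G)+1$. Enumerate $V(G)=\{u_1,\ldots,u_p\}$ and $E(G)=\{e_1,\ldots,e_q\}$, and introduce a bijection $g\colon E(G)\to[1,q]$ via $g(e_j)=j$ together with vertex tags $\tau(u_i)=q+i\in[q+1,p+q]$. For each vertex $u$ I will set $F(u)=\{g(e):e\text{ is incident to }u\}\cup\{\tau(u)\}$, and for each edge $e_j$ I will set $F(e_j)=\{j\}$.

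I would then verify the four required properties in sequence. For the cardinality bound (\ref{eqa:theorem-11}), every vertex set has $|F(u)|=\deg_G(u)+1$, with equality $\Delta(G)+1$ attained at any vertex of maximum degree, and every edge set is a singleton, so $\max |F(x)|=\Delta(G)+1$ exactly. For the range bound (\ref{eqa:theorem-22}), the largest integer used is $\tau(u_p)=p+q$, so all labels lie in $[1,p+q]$. For pairwise distinctness across $V(G)\cup E(G)$, the tag $\tau(u_i)=q+i$ uniquely identifies $F(u_i)$ among vertex sets, while $F(e_j)\subseteq[1,q]$ avoids all tags, simultaneously separating edge sets from vertex sets and from one another. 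Finally, for the strong constraint demanded by Definition \ref{defn:total-set-coloring}, I will take $Rs(m)$ to contain the intersection-type relation ``$c\in F(u)\cap F(v)$'' (equivalently the numerical equality $a=b=c$ with $a\in F(u)$ and $b\in F(v)$); since by construction $g(e_j)\in F(u)\cap F(v)$ whenever $e_j=uv$, every element of every edge set trivially obeys it.

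The main obstacle I anticipate is whether such an intersection/equality constraint is admissible in $Rs(m)$; if the intended constraint set must be drawn only from the numerical palette itemised in Remark \ref{rem:more-set-labeling-colorings}, then I would instead start from the v-set e-proper graceful coloring furnished by Theorem \ref{them:graph-edge-graceful-set-labelings}, whose induced edge color $f(uv)=|a_u-b_v|$ automatically yields the graceful constraint $|a-b|=c$ for $c\in F(uv)$. In that variant I would augment each vertex set by a single tag in $[q+1,p+q]$ to enforce distinctness and set $F(uv)=\{f(uv)\}$; the delicate bookkeeping step is to ensure that the enlargement by a tag does not inflate $|F(u)|$ beyond $\Delta(G)+1$, which forces the underlying graceful v-set coloring to be arranged with $|f(u)|\le \deg_G(u)$, and that no singleton edge set accidentally equals a tagged vertex set, which is secured by placing tags strictly above $q$.
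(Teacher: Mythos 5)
The paper does not actually prove Theorem \ref{them:each-graph-set-labelings}; it is quoted from \cite{Yao-Sun-Zhang-Li-Zhao2017} with no argument supplied, so there is no in-paper proof to compare against. Judged on its own, your primary construction is correct and is almost certainly the intended one: the shape of the bound $\Delta(G)+1$ in Eq.(\ref{eqa:theorem-11}) is exactly what ``incident-edge labels plus one distinguishing tag'' produces. Your verification is sound: $|F(u)|=\deg_G(u)+1$ with the maximum attained at a maximum-degree vertex, all labels lie in $[1,p+q]$ so Eq.(\ref{eqa:theorem-22}) holds, the tags $q+i$ separate vertex sets from one another, and the singleton edge sets $\{j\}$ are mutually distinct and (for $p\geq 2$) have smaller cardinality than every vertex set, giving the full pairwise distinctness required by Definition \ref{defn:total-set-coloring}. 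Your worry about admissibility of the membership/intersection constraint is not a real obstacle: the paper itself routinely uses $c_0:\ F(uv)\supseteq F(u)\cap F(v)\neq\emptyset$ as a legitimate element of a constraint set (Definition \ref{defn:intersected-graph-hypergraph}), and Definition \ref{defn:total-set-coloring} leaves $Rs(m)$ at the prover's disposal. One caution on your fallback route: Theorem \ref{them:graph-edge-graceful-set-labelings} as stated gives no control of $|f(u)|$ in terms of $\deg_G(u)$, so the ``delicate bookkeeping'' you flag there is a genuine unresolved step in that variant; but since the primary construction already closes the proof, the fallback is unnecessary.
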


\begin{thm} \label{them:tree-edge-magic-total-labeling}
\cite{Yao-Sun-Zhang-Li-Zhao2017} If a tree admits a super edge-magic total labeling, then it admits a $2$-uniform strongly total set-coloring.
\end{thm}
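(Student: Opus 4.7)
The plan is to inflate the given super edge-magic total labeling to $2$-element sets by pairing each $f$-label with a common auxiliary element. Let $f:V(T)\cup E(T)\rightarrow [1,p+q]$ be a super edge-magic total labeling of the tree $T$ with magic constant $k$, so that $f(u)+f(v)+f(uv)=k$ for every edge $uv\in E(T)$, the vertex labels fill $[1,p]$, and the edge labels fill $[p+1,p+q]$. Define the candidate total set-coloring $\psi$ by
\begin{equation*}
\psi(u)=\{0,f(u)\}\text{ for each } u\in V(T),\qquad \psi(uv)=\{0,f(uv)\}\text{ for each } uv\in E(T).
\end{equation*}
Since $f$ is injective on $V(T)\cup E(T)$ with strictly positive values, the $p+q$ sets thus obtained are pairwise distinct and each has cardinality exactly $2$, supplying both the $2$-uniform condition and the ``pairwise distinct sets'' requirement of Definition \ref{defn:total-set-coloring}. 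Moreover the whole image sits inside $\{0\}\cup [1,p+q]\subseteq [0,pq]$ for $p\geq 2$, so the ambient universe condition is met.

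Next I would take the constraint set $R_{est}(2)=\{c_1,c_2\}$ with $c_1$: ``$a+b+c=k$'' and $c_2$: ``$a+b+c=0$'', both of which are instances of the edge-magic total constraint in Remark \ref{rem:more-set-labeling-colorings}(\ref{edge-magic-total}), just with different magic constants. For each edge $uv\in E(T)$ and each $c\in \psi(uv)$ I verify there exist $a\in \psi(u)$ and $b\in \psi(v)$ satisfying some $c_i$. If $c=f(uv)$, choose $a=f(u)\in \psi(u)$ and $b=f(v)\in \psi(v)$; then $a+b+c=k$ by super edge-magicness, i.e.\,$c_1$. If $c=0$, choose $a=0\in \psi(u)$ and $b=0\in \psi(v)$; then $a+b+c=0$, i.e.\,$c_2$. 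This exhausts both elements of every $\psi(uv)$, so $\psi$ is a strongly total set-coloring subject to $R_{est}(2)$, and it is $2$-uniform by construction.

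The main subtlety I anticipate is whether using the degenerate magic constant $0$ in $c_2$ is considered legitimate within the author's constraint-set framework, since $c_2$ essentially ``wastes'' one element of each set on a trivial witness. If a stricter reading is required, I would replace the zero partner by a large shift: set $\psi(u)=\{f(u),f(u)+N\}$ and $\psi(uv)=\{f(uv),f(uv)+N\}$ for $N$ chosen large enough that the two halves of the sets are well-separated, and declare $R_{est}(2)=\{a+b+c=k,\ a+b+c=k+3N\}$. Witnesses for the lower and upper halves come from matching lower-to-lower and upper-to-upper respectively, distinctness again follows from injectivity of $f$, and the only bookkeeping is to check $p+q+N\leq pq$, which holds for all trees beyond the smallest cases. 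Either way, the essential content of the theorem is that super edge-magicness provides the one nontrivial witness per edge, and any systematic ``partnering'' scheme produces the $2$-uniformity for free.
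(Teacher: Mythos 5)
First, a point of comparison: this paper does not actually prove Theorem \ref{them:tree-edge-magic-total-labeling}; it is quoted from \cite{Yao-Sun-Zhang-Li-Zhao2017} with no argument given, so there is no in-paper proof to measure your construction against. Judged against the paper's definitions alone (Definition \ref{defn:total-set-coloring} together with the uniformity condition of Definition \ref{defn:111111}), your padding construction does tick every formal box: the $p+q$ sets $\{0,f(x)\}$ are pairwise distinct, each has cardinality $2$, and every element of every edge set admits witnesses for one of your two constraints. The only outright slip is the claim that the image lies in $[0,pq]$ for all $p\geq 2$: for $p=2$ one has $p+q=3>2=pq$, and more generally the bound $p+q\leq pq$ needs $p\geq 3$ for trees; this is cosmetic.

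The substantive problem is the one you half-diagnose yourself, and it is worse than a matter of taste. Your argument never uses the hypothesis. Neither the ``super'' property (vertex labels filling $[1,p]$, edge labels filling $[p+1,p+q]$) nor even edge-magicness is load-bearing: the identical padding trick, with $c_1$ replaced by whatever single relation the given labeling happens to satisfy, converts \emph{any} injective total labeling of \emph{any} graph into a ``$2$-uniform strongly total set-coloring,'' because the constraint set $R_{est}(m)$ is chosen by the prover and the second element of every set is discharged by the degenerate relation $a+b+c=0$ (which, on non-negative labels, is satisfiable only by the padded zeros). A theorem whose proof does not consume its hypothesis is a signal that the definitions are being gamed rather than met in spirit; within the authors' framework the constraints are expected to be drawn from the meaningful menu of Remark \ref{rem:more-set-labeling-colorings}, where $a+b+c=k$ is intended as a genuine edge-magic relation among nontrivial labels. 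The natural proof that does use the hypothesis exploits the standard characterization of super edge-magic labelings (the endpoint sums $f(u)+f(v)$ occupy $q$ consecutive integers): for instance, color $\psi(uv)=\{f(uv),\,f(u)+f(v)\}$ and pair each vertex label $f(u)$ with a second value determined by the magic constant, so that \emph{both} elements of every edge set are induced by one of the two honest constraints $a+b+c=k$ and $a+b=c$, with distinctness of all sets following from $f(V)=[1,p]$ and $f(E)=[p+1,p+q]$. Your fallback with the shift $N$ has exactly the same defect (the upper halves are again self-certifying), so I would not accept either version as a proof of this theorem as intended, even though neither violates the letter of Definition \ref{defn:total-set-coloring}.
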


\subsection{Connections between colorings and set-colorings}

\begin{defn} \label{defn:111111}
\cite{Yao-Sun-Zhang-Li-Yan-Zhang-Wang-ITOEC-2017} Let $f$ be a proper edge coloring of a graph $G$, and $C_{ne}(x)=\{f(xy):y\in N_{ei}(x)\}$ be the set of colored edges incident with the vertex $x\in V(G)$. If $C_{ne}(x)\neq C_{ne}(w)$ for any pair of distinct vertices $x,w\in V(G)$, we then call $f$ \emph{vertex distinguishing edge coloring} of the graph $G$.\qqed
\end{defn}

\begin{lem} \label{components-cycles}
\cite{Yao-Sun-Zhang-Li-Yan-Zhang-Wang-ITOEC-2017} Each vertex distinguishing edge coloring of a graph $G$ induces a strong set-labeling $F$ with \begin{equation}\label{eqa:degrees}
\Delta(G)=\max \big \{|F(x)|:~x\in V(G)\big \},\quad \delta(G)=\min \big \{|F(x)|:~x\in V(G)\big \}.
\end{equation}
\end{lem}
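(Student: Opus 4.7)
The plan is to unwind the definitions and set $F(x):=C_{ne}(x)=\{f(xy):y\in N(x)\}$ for every vertex $x\in V(G)$, where $f$ denotes the given vertex distinguishing edge coloring. This $F$ is clearly a vertex set mapping into subsets of the color palette, so it is a candidate for a strong vertex set-labeling in the sense of Definition \ref{defn:set-coloring-definitions}, item (1).

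First I would check the injectivity required by the ``strong'' condition (\ref{vertex-labeling}) from Definition \ref{defn:set-coloring-definitions}: that $F(x)\neq F(y)$ whenever $x\neq y$. But this is precisely the defining property of a vertex distinguishing edge coloring as recalled just before the lemma, namely $C_{ne}(x)\neq C_{ne}(w)$ for all distinct $x,w\in V(G)$. Hence $|F(V(G))|=p$, so $F$ is a strong vertex set-labeling.

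Next I would compute the cardinalities $|F(x)|$. Since $f$ is a \emph{proper} edge coloring, the edges $xy$ with $y\in N(x)$ all receive pairwise distinct colors, so
\begin{equation*}
|F(x)|=|\{f(xy):y\in N(x)\}|=|N(x)|=\textrm{deg}_G(x).
\end{equation*}
Taking the maximum and minimum of this identity over $x\in V(G)$ yields $\max\{|F(x)|:x\in V(G)\}=\Delta(G)$ and $\min\{|F(x)|:x\in V(G)\}=\delta(G)$, which is exactly Eq.\,(\ref{eqa:degrees}).

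There is essentially no obstacle here: the statement is a direct translation of the vertex distinguishing edge coloring into the language of set-labelings, with the proper-coloring hypothesis contributing the degree count and the vertex-distinguishing hypothesis contributing the injectivity. The only thing one has to be careful about is to cite the correct clauses of Definition \ref{defn:set-coloring-definitions} so that ``strong set-labeling'' is verified, rather than merely a ``set-labeling'' in the weaker sense (1').
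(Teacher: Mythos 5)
Your proof is correct and is the direct, expected argument: the paper itself gives no proof of Lemma~\ref{components-cycles} (it is imported from the cited reference), so there is nothing to diverge from. Setting $F(x)=C_{ne}(x)$, the vertex-distinguishing hypothesis gives the injectivity needed for clauses (\ref{vertex-set}) and (\ref{vertex-labeling}) of Definition~\ref{defn:set-coloring-definitions}, and properness gives $|F(x)|=\textrm{deg}_G(x)$, which yields Eq.~(\ref{eqa:degrees}) immediately.
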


\begin{thm} \label{thm:theorem11}
\cite{Yao-Sun-Zhang-Li-Yan-Zhang-Wang-ITOEC-2017} If a set-labeling $F$ of a graph $G$ holds: $|F(x)|\geq \textrm{deg}_G(x)$ for each vertex $x\in V(G)$, and $|F(u)\cap F(v)|=1$ for each edge $uv\in E(G)$, and $F(u)\cap F(v)\neq F(u)\cap F(w)$ for two adjacent edges $uv,uw\in E(G)$, then $F$ induces a \emph{proper edge coloring} of the graph $G$.
\end{thm}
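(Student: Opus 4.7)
The plan is to define the induced edge coloring explicitly and then verify the proper-edge-coloring condition directly from the two intersection hypotheses. Specifically, since $|F(u)\cap F(v)|=1$ for every edge $uv\in E(G)$, I would let $\varphi(uv)$ denote the unique element of $F(u)\cap F(v)$. This gives a well-defined map $\varphi:E(G)\to \bigcup_{x\in V(G)}F(x)$, and my task reduces to checking that $\varphi(uv)\neq \varphi(uw)$ whenever $uv$ and $uw$ are two distinct edges sharing the vertex $u$.

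First I would observe that, by the third hypothesis, the singletons $F(u)\cap F(v)$ and $F(u)\cap F(w)$ are not equal; because each is a one-element set, inequality as sets forces their unique elements to differ. Hence $\varphi(uv)\neq\varphi(uw)$, which is precisely the definition of a proper edge coloring.

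The degree hypothesis $|F(x)|\geq \textrm{deg}_G(x)$ plays a consistency role rather than appearing as a separate step: at a vertex $x$ there are $\textrm{deg}_G(x)$ incident edges, and the singletons $\{\varphi(xy):y\in N(x)\}$ are pairwise distinct subsets of $F(x)$, so the inequality is automatically compatible with (indeed forced by) the other two requirements. I would mention this briefly to reassure the reader that the hypotheses are not in conflict.

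There is essentially no main obstacle: the result is a direct consequence of unwinding the definitions. The only thing one might want to add for cleanliness is to note the \emph{well-definedness} of $\varphi$ (since $|F(u)\cap F(v)|=1$ gives a unique color per edge), and perhaps to remark that $\varphi$ is the natural inverse construction to the strong set-labeling induced by a vertex-distinguishing edge coloring in Lemma~\ref{components-cycles}. Thus the entire argument should fit comfortably in a short paragraph.
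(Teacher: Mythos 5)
Your proof is correct. The paper itself gives no proof of this theorem --- it is stated as a result cited from an external reference --- so there is nothing to compare against; your argument (take $\varphi(uv)$ to be the unique element of the singleton $F(u)\cap F(v)$, and note that inequality of two singletons forces their elements to differ, giving properness at every shared vertex) is the natural direct unwinding of the definitions, and your observation that the degree hypothesis $|F(x)|\geq \textrm{deg}_G(x)$ is actually implied by the other two conditions rather than needed is also accurate.
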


\begin{defn} \label{defn:adjacent-1-common-edge-coloring}
\cite{Yao-Sun-Zhang-Li-Yan-Zhang-Wang-ITOEC-2017} An \emph{adjacent $1$-common edge coloring} $f$ of the graph $G$ satisfies: $f$ is a proper edge coloring,
$$
\big |\big \{f(ux_i):x_i\in N_{ei}(u)\big \}\cap \big \{f(vy_j):y_j\in N_{ei}(v)\big \}\big |=1
$$ for each edge $uv\in E(G)$, and $C_{ne}(u)\cup C_{ne}(v)\neq C_{ne}(u)\cup C_{ne}(w)$, also,
$$
\big \{f(ux_i):x_i\in N_{ei}(u)\big \}\cap \big \{f(vy_j):y_j\in N_{ei}(v)\big \}\neq \big \{f(ux_i):x_i\in N_{ei}(u)\big \}\cap \big \{f(wz_k):z_k\in N_{ei}(w)\big \}
$$ for any pair of adjacent edges $uv,uw\in E(G)$. \qqed
\end{defn}

By Theorem \ref{thm:theorem11} we obtain the following result:
\begin{thm} \label{thm:theorem22}
\cite{Yao-Sun-Zhang-Li-Yan-Zhang-Wang-ITOEC-2017} A set-labeling $F$ of a graph $G$ holds: $|F(x)|\geq \textrm{deg}_G(x)$ for each vertex $x\in V(G)$, and $|F(u)\cap F(v)|=1$ for each edge $uv\in E(G)$, and $F(u)\cap F(v)\neq F(u)\cap F(w)$ for any pair of adjacent edges $uv,uw\in E(G)$, then the set-labeling $F$ induces an adjacent $1$-common edge coloring of the graph $G$.
\end{thm}

We, by Definition \ref{defn:adjacent-1-common-edge-coloring}, define a new parameter
\begin{equation}\label{eqa:555555}
\chi\,'_{set}(G)=\min_{f}\max\{f(uv):~uv\in E(G)\}
\end{equation} over all adjacent $1$-common edge colorings of the graph $G$. It is not hard to show that $\chi\,'_{set}(K_n)=\frac{1}{2}n(n-1)$, and
\begin{equation}\label{eqa:555555}
\chi\,'_{set}(G)=\max \big \{\textrm{deg}_G(u)+\textrm{deg}_G(v)-1:~uv\in E(G)\big \}
\end{equation} if the graph $G$ is a tree.

A vertex $w$ of a graph $G_k$ admitting a set-labeling $F_k:V(G_k)\rightarrow S$, where $S$ is a set of subsets of a consecutive integer set $[1,M]$, has the family of $F(w_1),F(w_2),\dots ,F(w_{M(w)})$ with $M(w)=\textrm{deg}_{G_k}(w)$ and $w_i\in N_{ei}(w)$, which satisfies $|X|\leq \big |\bigcup _{i\in X} F(w_i)\big |$ for every subset $X\subset [1,M(w)]$. Excellently, according to the famous Philip Hall's theorem (1935, see Bollob\'{a}s' book \cite{Bela-Bollobas}), there exists a \emph{system of distinct representative} $R_{ep}(w)=\{z_1,z_2,\dots ,z_{M(w)}\}$ with pairwise distinct $z_j\in F(w_j)$ of the family for $j\in [1,M(w]$.

\begin{example}\label{exa:8888888888}
We take a consecutive integer set $X=[1,6]$, and let
$A_1=A_2=\{1,2\}$, $A_3=\{2,3\}$ and $A_4=\{1,4,5,6\}$. For $\mathcal{A}=\{A_1, A_2, A_3,A_4\}$, the set $[1, 4]$ is a
system of distinct representative. If we have a new family $\mathcal{B}$ by adding another set $A_5 =\{2, 3\}$ to $\mathcal{A}$, then it is impossible to find a \emph{transversal} for the family $\mathcal{B}$. Notice that $S=\{1,2,3,5\}\subset [1,5]$. But, $|S|> \big |\bigcup _{i\in S} A_i\big |=|\{1,2,3\}|=3$.\qqed
\end{example}

We show a result as follows:

\begin{thm} \label{thm:box}
\cite{Yao-Sun-Zhang-Li-Yan-Zhang-Wang-ITOEC-2017} Suppose that a graph $G_1$ admits a set-labeling $F_1: V(G_1)\rightarrow S$, where $S$ is a set of subsets of $[1,M]$, and $F_1(u)\cap F_1(v)\neq \emptyset$ for each edge $uv\in E(G_1)$. And there are graphs $G_k=G_{k-1}-u_{k-1}$ for $u_{k-1}\in V(G_{k-1})$ with $k\geq 2$, and for each graph $G_k$ with $k\in [2,|G_1|-2]$, there exists a set-labeling $F_k(x)=F_{k-1}(x)$ if $x\not\in N_{ei}(u_{k-1})$, and
$$F_k(y)=F_{k-1}(y)\setminus [F_{k-1}(y)\cap F_{k-1}(u_{k-1})]
$$ if $y\in N_{ei}(u_{k-1})$; as well as $F_k(u)\cap F_k(v)\neq \emptyset$ for each edge $uv\in E(G_k)$. If two \emph{systems of distinct representatives} for each edge $uv\in E(G_k)$ holds $|R_{ep}(u)\cap R_{ep}(v)|=1$ with $k\in [2,|G_1|-2]$, then $F_1$ induces a proper edge coloring of the graph $G_1$.
\end{thm}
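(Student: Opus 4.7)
The plan is to construct a proper edge-coloring $\phi$ of $G_1$ using the nested sequence $G_1\supset G_2\supset\cdots\supset G_{|G_1|-2}$. For each edge $uv\in E(G_1)$, let $j(uv):=\max\{k:uv\in E(G_k)\}$ and define $\phi(uv):=c_{uv}$, the unique element of $R_{ep}^{(j(uv))}(u)\cap R_{ep}^{(j(uv))}(v)$ (the SDRs being computed with respect to $F_{j(uv)}$); its existence and uniqueness come directly from the size-one hypothesis. The base case concerns edges surviving into the small graph $G_{|G_1|-2}$, where the SDR property at each surviving vertex already forces the incident edges to receive pairwise distinct colors.

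The propriety check at an arbitrary vertex $u$ proceeds as follows. Given two adjacent edges $uv,uw\in E(G_1)$, assume without loss of generality that $v$ is removed no later than $w$, and set $j_1:=j(uv)$, $j_2:=j(uw)$ with $j_1<j_2$ (equality being ruled out because each step removes a unique vertex). Then $v=u_{j_1}$, and the stripping formula gives $F_{j_1+1}(u)=F_{j_1}(u)\setminus F_{j_1}(v)$; by the nestedness of the labels under successive strippings, $F_{j_2}(u)\subseteq F_{j_1+1}(u)\subseteq F_{j_1}(u)\setminus F_{j_1}(v)$. The key claim to establish is that $\phi(uv)\in F_{j_1}(u)\cap F_{j_1}(v)$ and $\phi(uw)\in F_{j_2}(u)\cap F_{j_2}(w)$. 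Granting this, $\phi(uw)\in F_{j_2}(u)$ is not in $F_{j_1}(v)$, whereas $\phi(uv)\in F_{j_1}(v)$, so $\phi(uv)\neq \phi(uw)$.

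The hard part is the key claim above, namely that the unique element of $R_{ep}^{(k)}(u)\cap R_{ep}^{(k)}(v)$ for an edge $uv$ of $G_k$ actually lies in $F_k(u)\cap F_k(v)$. A priori, the common element $c$ is only the representative of \emph{some} neighbor of $u$ (via $R_{ep}^{(k)}(u)$) and of \emph{some} neighbor of $v$ (via $R_{ep}^{(k)}(v)$), and nothing immediately forces those neighbors to be $v$ and $u$ respectively. The argument will exploit the Hall-theorem flexibility invoked in the paper just before the statement: if $c=r_u(y')=r_v(w')$ with $y'\neq v$ or $w'\neq u$, one could swap representatives in the SDRs to produce a second common element of $R_{ep}^{(k)}(u)$ and $R_{ep}^{(k)}(v)$, contradicting $|R_{ep}^{(k)}(u)\cap R_{ep}^{(k)}(v)|=1$. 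This pins down $y'=v$ and $w'=u$, giving $c\in F_k(v)\cap F_k(u)$, which is exactly what the color-separation argument needs.
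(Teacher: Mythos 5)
First, a point of reference: the paper states Theorem \ref{thm:box} with a citation and gives no proof of its own, so there is no in-paper argument to compare yours against; what follows is an assessment of your proposal on its own terms. Your overall architecture is sensible --- coloring each edge at the last level $j(uv)$ at which it survives, and using the stripping identity $F_{j_1+1}(u)=F_{j_1}(u)\setminus F_{j_1}(v)$ when $v=u_{j_1}$ to separate the colors of $uv$ and $uw$ --- and you have correctly located the crux in what you call the key claim. But that claim is exactly where the proposal breaks down. Recall that $R_{ep}(u)$ is an SDR for the family $\{F(y):y\in N(u)\}$, so its elements live in $\bigcup_{y\in N(u)}F(y)$, not in $F(u)$; the unique element $c$ of $R_{ep}(u)\cap R_{ep}(v)$ is a priori only some $r_u(y\,')\in F(y\,')$ with $y\,'\in N(u)$ that coincides with some $r_v(w\,')\in F(w\,')$ with $w\,'\in N(v)$, and nothing forces $y\,'=v$ and $w\,'=u$. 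Your proposed repair --- swap representatives to manufacture a second common element and contradict $|R_{ep}(u)\cap R_{ep}(v)|=1$ --- does not work as stated: the hypothesis of the theorem constrains the \emph{given} pair of SDRs, and after a swap you are looking at a \emph{different} SDR, for which the size-one condition is not assumed and to which no contradiction attaches. Moreover, even granting Hall-type flexibility, a swap that lands a new element simultaneously in both of the original sets $R_{ep}(u)$ and $R_{ep}(v)$ need not exist (the element $c$ need not belong to $F(v)$ at all, e.g.\ $F(u)=\{1,2\}$, $F(v)=\{1,3\}$ with $r_u(v)=3$ and $r_v(u)=2$ and $c$ supplied by third parties). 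Since your color-separation step uses $\phi(uv)\in F_{j_1}(v)$ and $\phi(uw)\in F_{j_2}(u)$ in an essential way, the gap is fatal to the argument as written; you would need either to strengthen the hypothesis (e.g.\ require the size-one condition for \emph{all} admissible SDRs, or require the common element to be the mutually aligned representative $r_u(v)=r_v(u)\in F(u)\cap F(v)$) or to choose a different color assignment whose membership in $F_k(u)\cap F_k(v)$ is guaranteed by construction.

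Two smaller points. Your case analysis assumes every pair of adjacent edges $uv,uw$ either survives to the terminal graph $G_{|G_1|-2}$ (your base case) or loses $v$ or $w$ first; it omits the case where $u$ itself is the vertex $u_{j}$ removed at some step, which kills both edges simultaneously with $j(uv)=j(uw)=j$ and neither $v$ nor $w$ removed. That case would be saved by the pairwise distinctness of representatives within the single SDR $R_{ep}^{(j)}(u)$ --- the same mechanism as your base case --- but only again under the unproven key claim that $\phi(uv)$ and $\phi(uw)$ are the representatives of $v$ and $w$ respectively in that SDR. Finally, note that the theorem's hypothesis ranges over $k\in[2,|G_1|-2]$ and says nothing about SDRs in $G_1$ itself, so your definition of $\phi$ on edges with $j(uv)=1$ (if any exist, i.e.\ edges incident to $u_1$) is not covered by the stated assumptions; any complete write-up has to confront this mismatch in the statement.
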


\begin{defn}\label{defn:set-coloring}
\cite{Yao-Ma-arXiv-2201-13354v1} A vertex set-labeling $F$ of a $(p,q)$-graph $G$ is a coloring $F:V(G)\rightarrow S$ such that $F(u)\not =F(v)$ for any pair of vertices $u,v$ of the graph $G$, where $S$ is a set of subsets of the set $[0,pq]$, and $F(x)\not =F(y)$ for any pair of vertices $x$ and $y$ of the graph $G$. An edge set-labeling $F\,'$ induced by the vertex set-labeling $F$ is subjected to a constraint set $R_{est}(c_1,c_2,\dots, c_m)$ based on $F$, such that the edge set-labeling $F\,'(uv)$ of each edge $uv\in E(G)$ holds the constraint set $R_{est}(c_1,c_2,\dots, c_m)$ true, and $F\,'(xy)\not =F\,'(uv)$ for any pair of distinct edges $uv$ and $xy$ of the graph $G$. We call $(F,F\,')$ a \emph{strong set-coloring} subject to the constraint set $R_{est}(c_1,c_2,\dots, c_m)$, and $F\,'$ \emph{induced edge-set-labeling} over $F$.\qqed
\end{defn}

\begin{problem}\label{question:444444}
The set-labelings and set-colorings defined in Definition \ref{defn:set-coloring-definitions} can be optimal in this way: $S$ is the power set of a consecutive integer set $\{0,1,\dots ,\chi_\epsilon (G)\}=[0,\chi_\epsilon (G)]$ such that $G$ admits a set-labeling or a set-coloring defined in Definition \ref{defn:set-coloring-definitions}, and but $S$ is not the power set of any consecutive integer set $[0,M]$ if $M<\chi_\epsilon (G)$, where $\epsilon$ is a combinatoric of some conditions of (a)-(i) stated in Definition \ref{defn:set-coloring-definitions}, and $\chi_\epsilon (G)$ is called an \emph{$\epsilon$-chromatic number} of the graph $G$. For example, $\epsilon=\{$(a), (d)$\}$ if only about a \emph{set-labeling} of the graph $G$. So, we determine the $\epsilon$-chromatic number $\chi_\epsilon (G)$ for a fixed $\epsilon$. As known, there are many long-standing conjectures in graph colorings and graph labelings, so we believe that there are new open problems on the set-colorings, or set-labelings defined in Definition \ref{defn:set-coloring-definitions}.
\end{problem}

\begin{problem}\label{question:444444}
\textbf{Define} mixed set-colorings, or set-labelings of graphs in order to design more complicated topological codes.
\end{problem}

\begin{problem}\label{question:444444}
Construction of lager scale of graphs admitting set-colorings, or set-labelings by smaller size of graphs admitting the same type of set-colorings, or set-labelings. Trees are first object for constructing such graphs.
\end{problem}

\begin{problem}\label{question:444444}
Notice that a new-type of matrices defined by set-colorings, or set-labelings of graphs goes into sight of our research, although we do not know more properties about such matrices, called \emph{set-matrices}.
Thereby, we define a set-matrix for a simple $(p,q)$-graph $G$ admitting an edge-set-labeling $F\,'$ as:
$S_e(G)=(A_{ij})_{q\times q}$ such that
\begin{equation}\label{eqa:555555}
A_{ij}=\left\{
\begin{array}{ll}
F\,'(u_iu_j),& u_iu_j\in E(G);\\
\emptyset,& \textrm{otherwise}.
\end{array}
\right.
\end{equation}

Suppose that a simple $(p,q)$-graph $G$ admits a set-labeling $F$ defined on its vertex set $V(G)$; we define an operation ``$[\bullet]$'' on two sets, and then define a set-matrix $S_v(G)=(B_{ij})_{q\times q}$ of the graph $G$ based on the set-labeling $F$ as:
\begin{equation}\label{eqa:555555}
B_{ij}=\left\{
\begin{array}{ll}
F(u_i)[\bullet]F(u_j),& u_iu_j\in E(G);\\
\emptyset,& \textrm{otherwise}.
\end{array}
\right.
\end{equation} where the result of each operation $F(u_i)[\bullet]F(u_j)$ is still a set.\qqed
\end{problem}

\begin{problem}\label{question:444444}
We change the condition in Definition \ref{defn:adjacent-1-common-edge-coloring} by the following one:
$$
\{f(uu_i):u_i\in N_{ei}(u)\}\cap \{f(vv_j):v_j\in N_{ei}(v)\}\neq \{f(xx_i):x_i\in N_{ei}(x)\}\cap \{f(yy_j):y_j\in N_{ei}(y)\}
$$ for any two edges $uv,xy\in E(G)$ and keep other conditions in original, then we obtain the \emph{vertex 1-common-edge-coloring} of a graph $G$. By the distinguishing total colorings introduced in \cite{Yang-Yao-Ren-2016}, we can define a \emph{set-set-coloring} $F^*$ of a graph $G$ such that each vertex $u$ of the graph $G$ is colored by a set $F^*(u)$ consisted of sets $F_i(u)$ with $i\in [1,u_k]$. Studying set-set-colorings of graphs is a new topic in hypergraphs.
\end{problem}

\begin{problem}\label{question:444444}
In \cite{N-K-Sudev-2015}, Sudev defined a set-coloring $(F,F\,')$ of a graph $H$ such that $F\,'(uv)=F\,'(xy)$ for any pair of edges $uv$ and $xy$ of the graph $H$. It may be interesting to find graphs admitting the set-coloring $(F,F\,')$ mentioned above. Obviously, finding such graphs is a challenge with many unknown parts.
\end{problem}

\begin{problem}\label{question:444444}
We focus on particular set-colorings, or set-labelings of a graph $G$, such as:

(i) No two edges $uv$ and $xy$ of the graph $G$ hold $|F\,'(uv)|=|F\,'(xy)|$ in a set-coloring $(F,F\,')$ of the graph $G$.

(ii) No two vertices $x$ and $y$ of the graph $G$ hold $|F(x)|=|F(y)|$ in a set-labeling $F$ of the graph $G$.

(iii) No two elements $\alpha$ and $\beta$ of $V(G)\cup E(G)$ hold $|F(\alpha)|=|F(\beta)|$ in a total set-coloring $F$.

(iv) A graph $G$ admits a set-coloring $(F,F\,')$ subject to two different constraint sets $R_{set}(m_1)$ and $R_{set}(m_2)$, respectively, or more constraint sets. Conversely, the graph $G$ admits two different set-colorings $(F_1,F\,'_1)$ and $(F_2,F\,'_2)$ subject to a constraint set $R_{est}(c_1,c_2,\dots, c_m)$ only.
\end{problem}

\begin{problem}\label{question:444444}
For a simple and connected graph $G$ admitting adjacent $1$-common edge colorings, \textbf{determine} the parameter $\chi\,'_{set}(G)=\min_{f}\max\{f(uv):~uv\in E(G)\}$ over all adjacent $1$-common edge colorings of the connected graph $G$ (Ref. Definition \ref{defn:adjacent-1-common-edge-coloring}). It may be possible to consider $\chi\,'_{set}(G)$ if $G$ is a planar graph.
\end{problem}

\subsection{Parameterized colorings}

\subsubsection{Traditional parameterized colorings}

\begin{defn} \label{defn:kd-w-type-colorings}
\cite{Yao-Su-Wang-Hui-Sun-ITAIC2020} Let $G$ be a bipartite and connected $(p,q)$-graph, then its vertex set $V(G)=X\cup Y$ with $X\cap Y=\emptyset$ such that each edge $uv\in E(G)$ holds $u\in X$ and $v\in Y$. Let integers $a,k,m\geq 0$, $d\geq 1$ and $q\geq 1$. We have two parameterized sets
\begin{equation}\label{eqa:555555}
{
\begin{split}
S_{m,k,a,d}=& \big \{k+ad,k+(a+1)d,\dots ,k+(a+m)d\big \},\\
O_{2q-1,k,d}=& \big \{k+d,k+3d,\dots ,k+(2q-1)d\big \}
\end{split}}
\end{equation} with two \emph{cardinalities} $|S_{m,k,a,d}|=m+1$ and $|O_{2q-1,k,d}|=q$. Suppose that the bipartite and connected $(p,q)$-graph $G$ admits a coloring
\begin{equation}\label{eqa:555555}
f:X\rightarrow S_{m,0,0,d}=\big \{0,d,\dots ,md\big \},~f:Y\cup E(G)\rightarrow S_{n,k,0,d}=\big \{k,k+d,\dots ,k+nd\big \}
\end{equation} with integers $k\geq 0$ and $d\geq 1$, here it is allowed $f(x)=f(y)$ for some distinct vertices $x,y\in V(G)$. Let $c$ be a non-negative integer. We define the following \emph{parameterized colorings}:
\begin{asparaenum}[\textbf{\textrm{Ptol}}-1. ]
\item If edge color $f(uv)=|f(u)-f(v)|$ for each edge $uv\in E(G)$, and two color sets
\begin{equation}\label{eqa:555555}
f(E(G))=S_{q-1,k,0,d},\quad f(V(G)\cup E(G))\subseteq S_{m,0,0,d}\cup S_{q-1,k,0,d}
\end{equation} then $f$ is called a \emph{$(k,d)$-gracefully total coloring}; and moreover $f$ is called a \emph{strongly $(k,d)$-graceful total coloring} if $f(x)+f(y)=k+(q-1)d$ for each matching edge $xy$ of a matching $M$ of the graph $G$.
\item If edge color $f(uv)=|f(u)-f(v)|$ for each edge $uv\in E(G)$,
$$f(E(G))=O_{2q-1,k,d},~f(V(G)\cup E(G))\subseteq S_{m,0,0,d}\cup S_{2q-1,k,0,d}$$ then $f$ is called a \emph{$(k,d)$-odd-gracefully total coloring}; and moreover $f$ is called a \emph{strongly $(k,d)$-odd-graceful total coloring} if $f(x)+f(y)=k+(2q-1)d$ for each matching edge $xy$ of a matching $M$ of the graph $G$.
\item If there is a color set
$$\big \{f(u)+f(uv)+f(v):uv\in E(G)\big \}=\big \{2k+2ad,2k+2(a+1)d,\dots ,2k+2(a+q-1)d\big \}
$$ with $a\geq 0$ and the total color set $f(V(G)\cup E(G))\subseteq S_{m,0,0,d}\cup S_{2(a+q-1),k,a,d}$, then $f$ is called a \emph{$(k,d)$-edge antimagic total coloring}.
\item If edge color $f(uv)=f(u)+f(v)~(\bmod^*qd)$ defined by
\begin{equation}\label{eqa:555555}
f(uv)-k=\big [f(u)+f(v)-k\big ]~(\bmod ~qd),~uv\in E(G)
\end{equation} and the edge color set $f(E(G))=S_{q-1,k,0,d}$, then we call $f$ \emph{$(k,d)$-harmonious total coloring}.
\item If edge color $f(uv)=f(u)+f(v)~(\bmod^*qd)$ defined by $f(uv)-k=[f(u)+f(v)-k](\bmod ~qd)$ for each edge $uv\in E(G)$, and the edge color set $f(E(G))=O_{2q-1,k,d}$, then we call $f$ \emph{$(k,d)$-odd-elegant total coloring}.
\item If \emph{edge-magic constraint} $f(u)+f(uv)+f(v)=c$ for each edge $uv\in E(G)$, the edge color set $f(E(G))=S_{q-1,k,0,d}$, and the vertex color set $f(V(G))\subseteq S_{m,0,0,d}\cup S_{q-1,k,0,d}$, then $f$ is called \emph{strongly edge-magic $(k,d)$-total coloring}; and moreover $f$ is called \emph{edge-magic $(k,d)$-total coloring} if the cardinality $|f(E(G))|\leq q$ and $f(u)+f(uv)+f(v)=c$ for each edge $uv\in E(G)$.
\item If \emph{edge-difference constraint} $f(uv)+|f(u)-f(v)|=c$ for each edge $uv\in E(G)$ and the edge color set $f(E(G))=S_{q-1,k,0,d}$, then $f$ is called \emph{strongly edge-difference $(k,d)$-total coloring}; and moreover $f$ is called \emph{edge-difference $(k,d)$-total coloring} if the cardinality $|f(E(G))|\leq q$ and $f(uv)+|f(u)-f(v)|=c$ for each edge $uv\in E(G)$.
\item If \emph{felicitous-difference constraint} $|f(u)+f(v)-f(uv)|=c$ for each edge $uv\in E(G)$ and the edge color set $f(E(G))=S_{q-1,k,0,d}$, then $f$ is called \emph{strongly felicitous-difference $(k,d)$-total coloring}; and moreover we call $f$ \emph{felicitous-difference $(k,d)$-total coloring} if the cardinality $|f(E(G))|\leq q$ and $\big |f(u)+f(v)-f(uv)\big |=c$ for each edge $uv\in E(G)$.
\item If \emph{graceful-difference constraint} $\big ||f(u)-f(v)|-f(uv)\big |=c$ for each edge $uv\in E(G)$ and the edge color set $f(E(G))=S_{q-1,k,0,d}$, then we call $f$ to be \emph{strongly graceful-difference $(k,d)$-total coloring}; and we call $f$ \emph{graceful-difference $(k,d)$-total coloring} if the cardinality $|f(E(G))|\leq q$ and $\big ||f(u)-f(v)|-f(uv)\big |=c$ for each edge $uv\in E(G)$.\qqed
\end{asparaenum}
\end{defn}

\begin{figure}[h]
\centering
\includegraphics[width=16.4cm]{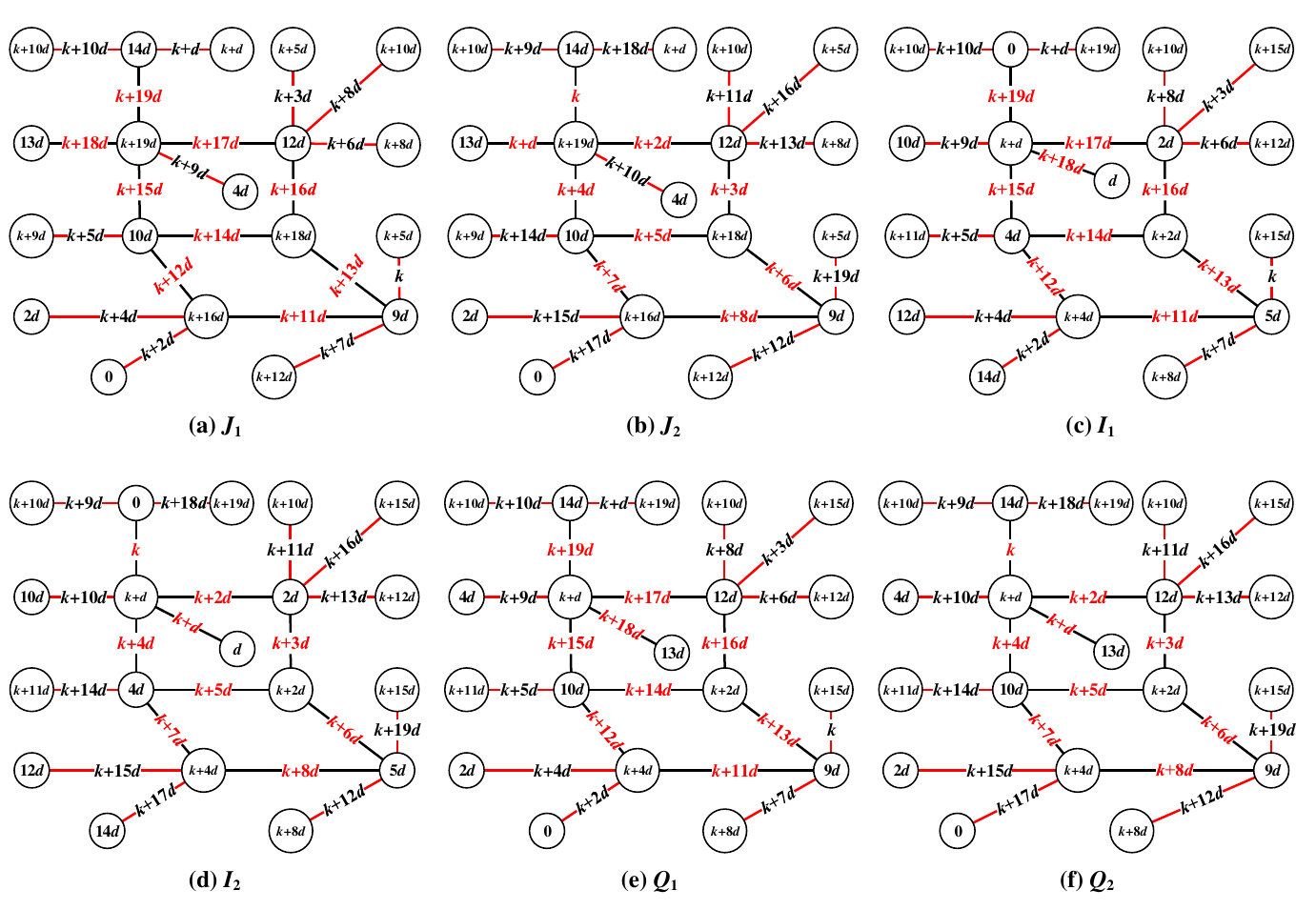}\\
\caption{\label{fig:66-parameter-labelings}{\small A scheme for understanding some parameterized colorings and labelings defined in Definition \ref{defn:kd-w-type-colorings}.}}
\end{figure}

\begin{example}\label{exa:8888888888}
By Definition \ref{defn:kd-w-type-colorings} and Fig.\ref{fig:66-parameter-labelings}, we have

(i) The $(k,d)$-total colored graph $J_1$ admits a \emph{felicitous-difference $(k,d)$-total coloring} $f_1$ holding $\big |f_1(u)+f_1(v)-f_1(uv)\big |=14d$ for each edge $uv\in E(J_1)$.

(ii) The $(k,d)$-total colored graph $J_2$ admits an \emph{edge-magic $(k,d)$-total coloring} $f_2$ holding $f_2(u)+f_2(uv)+f_2(v)=2k+33d$ for each edge $uv\in E(J_2)$.

(iii) The $(k,d)$-total colored graph $I_1$ admits an \emph{edge-magic $(k,d)$-total coloring} $h_1$ holding $h_1(u)+h_1(uv)+h_1(v)=2k+20d$ for each edge $uv\in E(I_1)$.

(iv) The $(k,d)$-total colored graph $I_2$ admits a \emph{felicitous-difference $(k,d)$-total coloring} $h_2$ holding $\big |h_2(u)+h_2(v)-h_2(uv)\big |=d$ for each edge $uv\in E(I_2)$.

(v) The $(k,d)$-total colored graph $Q_1$ admits an \emph{edge-difference $(k,d)$-total coloring} $g_1$ holding $g_1(uv)+\big |g_1(u)-g_1(v)\big |=2k+6d$ for each edge $uv\in E(Q_1)$.

(vi) The $(k,d)$-total colored graph $Q_2$ admits a \emph{pan-edge-difference $(k,d)$-total coloring} $g_2$.
\qqed
\end{example}

\begin{rem}\label{rem:333333}
In Definition \ref{defn:kd-w-type-colorings}, we have four \emph{magic-constraints}: the edge-magic constraint, the edge-difference constraint, the felicitous-difference constraint and the graceful-difference constraint.\qqed
\end{rem}

\begin{thm}\label{thm:tree-graceful-total-coloringss}
\cite{Yao-Su-Wang-Hui-Sun-ITAIC2020} Each tree admits a $(k,d)$-gracefully total coloring defined in Definition \ref{defn:kd-w-type-colorings}, also, a set-ordered gracefully total coloring as $(k,d)=(1,1)$, and a set-ordered odd-gracefully total coloring as $(k,d)=(1,2)$.
\end{thm}

\subsubsection{Parameterized Topcode-matrices}

If there is no confusion, we omit ``$3\times q$ order'' in the following discussion, or add a sentence `` the Topcode-matrices $I$, $T_{code}(G)$ and $P_{(k,d)}(G)$ have the same order''. For \emph{bipartite graphs}, especially, we define the \emph{unite Topcode-matrix} as follows
\begin{equation}\label{eqa:unit-Topcode-matrix}
{
\begin{split}
I\,^0=\left(
\begin{array}{ccccc}
0 & 0 & \cdots & 0\\
1 & 1 & \cdots & 1\\
1 & 1 & \cdots & 1
\end{array}
\right)_{3\times q}=(X\,^0,~E\,^0,~Y\,^0)^T
\end{split}}
\end{equation} with two vertex-vectors $X\,^0=(0, 0, \dots ,0)_{1\times q}$ and $Y\,^0=(1, 1, \dots ,1)_{1\times q}$, and the edge-vector $E\,^0=(1, 1, \dots ,1)_{1\times q}$.

\begin{defn} \label{defn:bipartite-parameterized-topcode-matrix}
\cite{Bing-Yao-arXiv:2207-03381} Let $G$ be a bipartite $(p,q)$-graph with $V(G)=X\cup Y$ and $X\cap Y=\emptyset$, and let $k,d$ be non-negative integers. If $G$ admits a set-ordered $W$-constraint coloring $f$, that is the \emph{set-ordered constraint} $\max f(X)<\min f(Y)$, so we get a \emph{parameterized Topcode-matrix} defined by
\begin{equation}\label{eqa:definition-parameterized-topcode-matrix}
{
\begin{split}
P_{ara}(G,F|k,d)&=k\cdot I\,^0+d\cdot T_{code}(G,f)\\
&=\left(
\begin{array}{rrrrr}
f(u_1)d & f(u_2)d & \cdots & f(u_q)d\\
k+f(u_1v_1)d & k+f(u_2v_2)d & \cdots & k+f(u_qv_q)d\\
k+f(v_1)d & k+f(v_2)d & \cdots & k+f(v_q)d
\end{array}
\right)
\end{split}}
\end{equation}
where three Topcode-matrices $I\,^0$, $T_{code}(G,f)$ and $P_{ara}(G,F|k,d)$ have the same $3\times q$ order, and $F$ is a \emph{$W$-constraint parameterized coloring} of the bipartite $(p,q)$-graph $G$, as well as
\begin{equation}\label{eqa:set-type-topcode-matrix}
\centering
{
\begin{split}
T_{code}(G,f)= \left(
\begin{array}{ccccc}
f(u_1) & f(u_2) & \cdots & f(u_q)\\
f(u_1v_1) & f(u_2v_2) & \cdots & f(u_qv_q)\\
f(v_1) & f(v_2) & \cdots & f(v_q)
\end{array}
\right)
\end{split}}
\end{equation} hoding the $W$-constraint $f(u_kv_k)=W\langle f(u_k), f(v_k)\rangle $ for each edge $u_kv_k\in E(G)$ with $u_k\in X$ and $v_k\in Y$.\qqed
\end{defn}

\begin{defn}\label{defn:pan-Topcode-matrix}
\cite{Yao-Wang-Ma-Su-Wang-Sun-2020ITNEC} A \emph{pan-Topcode-matrix} is defined as $P_{code}=(X_{pan}, E_{pan}, Y_{pan}~)^{T}$ with three vectors
$$X_{pan}=(\alpha_1, \alpha_2, \dots , \alpha_q),~E_{pan}=(\gamma_1, \gamma_2, \dots , \gamma_q),~Y_{pan}=(\beta_1, \beta_2, \dots , \beta_q)$$ and $\alpha_j,\beta_j$ are the ends of $\gamma_j$. If there exits a constraint $W$ such that $\gamma_j=W\langle \alpha_j,\beta_j\rangle$ for each $j\in [1,q]$, then the pan-Topcode-matrix $P_{code}$ is \emph{$W$-constraint valued}.\qqed
\end{defn}

\begin{defn} \label{defn:evaluated-topcode-matrix}
\cite{Bing-et-al-arXiv-asymmetric-4520331} If $x_i:=\alpha_i$, $e_i:=\gamma_i$ and $y_i:=\beta_i$ in a Topcode-matrix $T_{code}$ defined in Definition \ref{defn:topcode-matrix-definition}, we get another Topcode-matrix $T^{evalu}_{code}=\big (X_{(:)},~E_{(:)},~Y_{(:)}\big )^{T}$ withe three vectors
$$X_{(:)}=(\alpha_1, \alpha_2, \dots, \alpha_q),~E_{(:)}=(\gamma_1, \gamma_2, \dots, \gamma_q)\textrm{ and }Y_{(:)}=(\beta_1, \beta_2, \dots, \beta_q)
$$ We call the matrix $T^{evalu}_{code}$ \emph{assignment Topcode-matrix} of $T_{code}$, and denote this face as $T_{code}:=T^{evalu}_{code}$. Moreover, there are Topcode-matrices $T^{evalu}_{code}(1)$, $T^{evalu}_{code}(2)$, $\dots$, $T^{evalu}_{code}(m)$ holding
\begin{equation}\label{eqa:555555}
T^{evalu}_{code}(k):=T^{evalu}_{code}(k+1)
\end{equation} for $k\in [1,m-1]$.\qqed
\end{defn}

\begin{rem}\label{rem:333333}
In Definition \ref{defn:pan-Topcode-matrix}, the elements $\alpha_i,\gamma_i,\beta_i$ of the pan-Topcode-matrix $P_{code}$ are graphs, matrices, vectors, strings, formulae, articles, any things if there are connections be tween them, then the pan-Topcode-matrices show these related things in topological structures.

The generalization $T^{gener}_{code}$ of a Topcode-matrix $T_{code}$ is that each of elements in the Topcode-matrix is a \emph{thing} in the world, such that the Topcode-matrix $T^{gener}_{code}$ brings these $3q$ things together topologically by a mathematical constraint, or a group of mathematical constraints for getting a complete ``mathematical story''.\qqed
\end{rem}

\begin{rem}\label{rem:333333}
\textbf{The assignment Topcode-matrices.} By Definition \ref{defn:evaluated-topcode-matrix}, we have a \emph{assignment Topcode-matrix} $T_{code}(G,f):=P_{ara}(G,F|k,d)$ defined in Eq.(\ref{eqa:definition-parameterized-topcode-matrix}), which converts a parameterized number-based string \begin{equation}\label{eqa:strings-plane-kd-coordinates}
s(k,d)=c_{1}(k,d)c_{2}(k,d)\cdots c_{3q}(k,d)
\end{equation} with longer bytes made by $P_{ara}(G,F|k,d)$ defined in Definition \ref{defn:bipartite-parameterized-topcode-matrix} to a string $s=c_{1}c_{2}\cdots c_{3q}$ with shorter bytes made by $T_{code}(G,f)$.

\textbf{The fractional strings.} The parameterized Topcode-matrix $P_{ara}(G,F|k,d)$ is useful in the discussion of fractional strings. The limitation $(k,d)\rightarrow (k_0,d_0)$ enables us to induce real-valued strings. For example, a \emph{fractional $(k_n,d_n)$-string} $s^*_n=c^*_{n,1}c^*_{n,2}\cdots c^*_{n,m}$ holds:

(i) there is at least one $c^*_{n,j}$ to be a positive fractional number;

(ii) $(k_n,d_n)\rightarrow (k_0,d_0)$ as $n\rightarrow \infty$;

(iii) there is a positive integer $M_n$ for each $n$ holding
$$
M_n[\bullet ]s^*_n=(M_n\cdot c^*_{n,1})(M_n\cdot c^*_{n,2})\cdots (M_n\cdot c^*_{n,m})=s_n
$$ such that $s_n$ is just a \emph{proper number-based string} with positive integer $M_n\cdot c^*_{n,j}$ for $j\in [1,m]$. In other words, the research of fractional strings can be translated into the investigation of proper number-based strings.\qqed
\end{rem}

\begin{defn} \label{defn:plane-coordinate-string-sequence}
\cite{Bing-et-al-arXiv-asymmetric-4520331} We call a parameterized number-based string $s(k,d)$ made by the parameterized Topcode-matrix $P_{ara}(G,F|k,d)$ defined in Eq.(\ref{eqa:definition-parameterized-topcode-matrix}) \emph{plane-curve-attached string} if $k=f(d)$, or $d=g(k)$ for $f$ and $g$ are funtions of onr variable. Let $pc(x,y)=0$ be a \emph{plane curve} defined on a domain $[\alpha,\beta]^{r}$ for $0\leq \alpha<\beta$. If there are positive integer points $(k_n,d_n)\in [\alpha,\beta]^{r}$ holding $pc(k_n,d_n)=0$ for integers $k_n,d_n\geq 0$ with $n\in [1,m]$, then we get a \emph{plane-curve-attached string sequence} $\{s(k_n,d_n)\}^m_{n=1}$ based on the plane curve $pc(x,y)=0$.\qqed
\end{defn}

\begin{thm}\label{thm:one-encryption-one-time}
$^*$ By Definition \ref{defn:plane-coordinate-string-sequence}, there are infinite plane-curve-attached string sequences based on a parameterized Topcode-matrix $P_{ara}(G,F|k,d)$ defined in Eq.(\ref{eqa:definition-parameterized-topcode-matrix}) and infinite plane curves, which provides the theoretical basis for the one-encryption one-time (also one-time pad) first invented by Major Joseph Mauborgne and Gilbert Vernam of AT\&T in 1917.
\end{thm}

\begin{rem}\label{rem:333333}
For a \emph{public-key graph} $G$ admitting a $W$-constraint parameterized coloring $F$, we use this \emph{parameter-colored graph} $G$ and a plane curve $pc(x,y)=0$ to form a \emph{private-key graph} in a \emph{topological signature authentication}, the private-key graph is denoted as $H=\langle G,F,pc(x,y)=0\rangle$. Since there are infinite real-valued functions and there are infinite integer points in a plane curve, so we can get infinite number-based strings to encrypt or to decrypt a file consisted of many segments in the method of \emph{asymmetric topology cryptography}, and these number-based strings are random since the plane curve are taken randomly in the private-key graphs like as $H=\langle G,F,pc(x,y)=0\rangle$.

If the plane curve $pc(x,y)=0$ is an elliptic curve: $y^2=x^3+ ax^2+bx+c$ defined on a finite field $[0,A_{prime}]$ with a prime number $A_{prime}$, then deciphering the plane-curve-attached string sequence $\{s(k_n,d_n)\}^m_{n=1}$ is even more difficult, even impossible.\qqed
\end{rem}

\subsubsection{Parameterized string-colorings and set-colorings}

\begin{defn} \label{defn:more-string-total-coloring}
$^*$ Let $G$ be a bipartite $(p,q)$-graph, and its vertex set $V(G)=X\cup Y$ with $X\cap Y=\emptyset$ such that each edge $uv\in E(G)$ holds $u\in X$ and $v\in Y$. There are a group of $W$-constraint $(k_s,d_s)$-colorings
\begin{equation}\label{eqa:555555}
f_s:X\rightarrow S_{m,0,0,d}=\big \{0,d,\dots ,md\big \},~f_s:Y\cup E(G)\rightarrow S_{n,k,0,d}=\big \{k,k+d,\dots ,k+nd\big \}
\end{equation} here it is allowed $f_s(u)=f_s(w)$ for some distinct vertices $u,w\in V(G)$) for $s\in [1,B]$ with integer $B\geq 2$, such that the $W$-constraint $(k_s,d_s)$-coloring $f_s$ is one of gracefully $(k_s,d_s)$-total coloring, odd-gracefully $(k_s,d_s)$-total coloring, edge anti-magic $(k_s,d_s)$-total coloring, harmonious $(k_s,d_s)$-total coloring, odd-elegant $(k_s,d_s)$-total coloring, edge-magic $(k_s,d_s)$-total coloring, edge-difference $(k_s,d_s)$-total coloring, felicitous-difference $(k_s,d_s)$-total coloring, graceful-difference $(k_s,d_s)$-total coloring, odd-edge edge-magic $(k_s,d_s)$-total coloring, odd-edge edge-difference $(k_s,d_s)$-total coloring, odd-edge felicitous-difference $(k_s,d_s)$-total coloring, odd-edge graceful-difference $(k_s,d_s)$-total coloring. and so on. We have:

(i) The bipartite $(p,q)$-graph $G$ admits a \emph{parameterized total string-coloring} $F$ holding
\begin{equation}\label{eqa:555555}
{
\begin{split}
F(u)&=f_{i_1}(u)f_{i_2}(u)\cdots f_{i_B}(u),\quad F(uv)=f_{j_1}(uv)f_{j_2}(uv)\cdots f_{j_B}(uv),\\
F(v)&=f_{s_1}(v)f_{s_2}(v)\cdots f_{s_B}(v)
\end{split}}
\end{equation} true for each edge $uv\in E(G)$, where $f_{i_1}(u)f_{i_2}(u)\cdots f_{i_B}(u)$ is a permutation of $f_1(u),f_2(u),\cdots $, $f_B(u)$, $f_{j_1}(uv)f_{j_2}(uv)\cdots f_{j_B}(uv)$ is a permutation of $f_1(uv),f_2(uv),\cdots ,f_B(uv)$ and $f_{s_1}(v)f_{s_2}(v)$ $\cdots f_{s_B}(v)$ is a permutation of $f_1(v),f_2(v),\cdots ,f_B(v)$.

Hence, there are $(B!)^3$ parameterized total string-colorings in total.

(ii) The bipartite $(p,q)$-graph $G$ admits a \emph{parameterized total set-coloring} $\theta$ holding
\begin{equation}\label{eqa:555555}
{
\begin{split}
\theta(u)&=\big \{f_1(u),f_2(u),\dots ,f_B(u)\big \},\quad \theta(uv)=\big \{f_1(uv),f_2(uv),\dots ,f_B(uv)\big \},\\
\theta(v)&=\big \{f_1(v),f_2(v),\dots ,f_B(v)\big \}
\end{split}}
\end{equation} true for each edge $uv\in E(G)$.

(iii) The bipartite $(p,q)$-graph $G$ admits a \emph{parameterized total vector-coloring} $\alpha$ holding
\begin{equation}\label{eqa:555555}
{
\begin{split}
\alpha(u)&=\big (f_1(u),f_2(u),\dots ,f_B(u)\big ),\quad \alpha(uv)=\big (f_1(uv),f_2(uv),\dots ,f_B(uv)\big ),\\
\alpha(v)&=\big (f_1(v),f_2(v),\dots ,f_B(v)\big )
\end{split}}
\end{equation} true for each edge $uv\in E(G)$

Similarly with (i), there are $(B!)^3$ parameterized total vector-colorings, in total.

(iv) $^*$ A bipartite $(p,q)$-graph $G$ admits a \emph{parameterized total hyperedge set-coloring} $\varphi: V(G)\cup E(G)\rightarrow \mathcal{E}\in \mathcal{E}(\Lambda^2)$, where $\Lambda=\{f_1,f_2,\dots ,f_B\}$. The, and the set-coloring $\varphi$ satisfies the following constraints:

(1) $|\varphi(u)|=|e_u|\geq \textrm{deg}_G(u)$ for each vertex $u\in V(G)$ and $e_u\in \mathcal{E}$;

(2) $|\varphi(uv)|=|e_{uv}|\geq 1$ for each edge $uv\in E(G)$ and $e_{uv}\in \mathcal{E}$;

(3) each $f_s\in \Lambda$ is in $\varphi(w)$ for some $w\in V(G)\cup E(G)$;

(4) each pair of adjacent edges $xy$ and $xz$ with $y,z\in N_{ei}(x)$ holds $\varphi(xy)\neq \varphi(xz)$;

(5) each $a_{uv}\in \varphi(uv)$ for each edge $uv\in E(G)$ corresponds $a_{u}\in \varphi(u)$ and $a_{v}\in \varphi(v)$, such that $W[f_s(a_{u}), f_s(a_{uv}), f_s(a_{v})]=0$ for some $f_s\in \Lambda$;

(6) each $b_{u}\in \varphi(u)$ (resp. $b_{v}\in \varphi(v)$) for each edge $uv\in E(G)$ corresponds $b_{uv}\in \varphi(uv)$ and $b_{v}\in \varphi(v)$ (resp. $b_{u}\in \varphi(u)$), such that $W[f_i(b_{u}), f_i(b_{uv}), f_i(b_{v})]=0$ for some $f_i\in \Lambda$;

(7) $\varphi(V(G)\cup E(G))=\Lambda$.\qqed
\end{defn}

\begin{defn} \label{defn:n-di-set-colorings-definition}
\cite{Bing-et-al-arXiv-asymmetric-4520331} \textbf{Homogeneous $(abc)$-magic set-colorings.} Let $\textbf{\textrm{S}}_{et}(\leq n)$ be the set of integer sets of form $\{\alpha_{1},\alpha_{2},\dots ,\alpha_{m}\}$ with each number $\alpha_{j}\in Z^0\setminus \{0\}$ for $j\in [1,m]$ and $m\leq n$. A $(p,q)$-graph $G$ admits a \emph{$\{W_i\}^A_{i=1}$-constraint total set-coloring} $\psi : V(G)\cup E(G)\rightarrow \textbf{\textrm{S}}_{et}(\leq n)$, such that each edge $u_kv_k\in E(G)=\{u_kv_k:k\in [1,q]\}$ holds
\begin{equation}\label{eqa:n-di-set-coloringss}
{
\begin{split}
&\psi(u_k)=\big \{a_{k,1},a_{k,2},\dots ,a_{k,n}\big \},~\psi(v_k)=\big \{b_{k,1},b_{k,2},\dots ,b_{k,n}\big \},\\
&\psi(u_kv_k)=\big \{c_{k,1},c_{k,2},\dots ,c_{k,n}\big \}
\end{split}}
\end{equation} subject to the $W_i$-constraint $W_i[\psi(u_k),\psi(u_kv_k),\psi(v_k)]=0$ for some $i\in [1,A]$. Let $\lambda$ and $\gamma$ be constants, there are the following $(abc)$-magic set-constraints:

\begin{asparaenum}[\textbf{\textrm{Set}}-1. ]
\item \label{4set-magic:uniform-edge-magic} Each $j\in [1,n]$ holds the \emph{edge-magic constraint} $a_{k,j}+b_{k,j}+c_{k,j}=\lambda$ true, denoted as
\begin{equation}\label{eqa:555555}
\psi(w_k)[+]\psi(w_kz_k)[+]\psi(z_k)=\lambda
\end{equation}
\item \label{4set-magic:uniform-edge-difference} Each $j\in [1,n]$ holds the \emph{edge-difference constraint} $c_{k,j}+|a_{k,j}-b_{k,j}|=\lambda$ true, denoted as
\begin{equation}\label{eqa:555555}
\psi(w_kz_k)[+]|\psi(w_k)[-]\psi(z_k)|=\lambda
\end{equation}
\item \label{4set-magic:uniform-graceful-difference} Each $j\in [1,n]$ holds the \emph{graceful-difference constraint} $\big ||a_{k,j}-b_{k,j}|-c_{k,j}\big |=\lambda$ true, denoted as
\begin{equation}\label{eqa:555555}
\big ||\psi(w_k)[-]\psi(z_k)|[-]\psi(w_kz_k)\big |=\lambda
\end{equation}
\item \label{4set-magic:uniform-felicitous-difference} Each $j\in [1,n]$ holds the \emph{felicitous-difference constraint} $|a_{k,j}+b_{k,j}-c_{k,j}|=\lambda$ true, denoted as
\begin{equation}\label{eqa:555555}
|\psi(w_k)[+]\psi(z_k)[-]\psi(w_kz_k)|=\lambda
\end{equation}
\item \label{4set-magic:weak-edge-magic} Some $r\in [1,n]$ holds the \emph{edge-magic constraint} $a_{k,r}+b_{k,r}+c_{k,r}=\gamma$ true, but not all, denoted as $\partial_r \langle \psi(w_k)[+]\psi(w_kz_k)[+]\psi(z_k)=\gamma \rangle$.
\item \label{4set-magic:weak-edge-difference} Some $s\in [1,n]$ holds the \emph{edge-difference constraint} $c_{k,s}+|a_{k,s}-b_{k,s}|=\gamma$ true, but not all, denoted as $\partial_s \langle \psi(w_kz_k)[+]|\psi(w_k)[-]\psi(z_k)|=\gamma \rangle$.
\item \label{4set-magic:weak-graceful-difference} Some $t\in [1,n]$ holds the \emph{graceful-difference constraint} $\big ||a_{k,t}-b_{k,t}|-c_{k,t}\big |=\gamma$ true, but not all, denoted as $\partial_t \langle \big ||\psi(w_k)[-]\psi(z_k)|[-]\psi(w_kz_k)\big |=\gamma \rangle$.
\item \label{4set-magic:weak-felicitous-difference} Some $d\in [1,n]$ holds the \emph{felicitous-difference constraint} $|a_{k,d}+b_{k,d}-c_{k,d}|=\gamma$ true, but not all, denoted as $\partial_d \langle |\psi(w_k)[+]\psi(z_k)[-]\psi(w_kz_k)|=\gamma \rangle$.
\end{asparaenum}
\textbf{We call the total set-coloring $\psi$ to be}
\begin{asparaenum}[\textbf{\textrm{Setabc}}-1. ]
\item a \emph{component edge-magic total set-coloring} if it holds Set-\ref{4set-magic:uniform-edge-magic} true.
\item a \emph{component edge-difference total set-coloring} if it holds Set-\ref{4set-magic:uniform-edge-difference} true.
\item a \emph{component graceful-difference total set-coloring} if it holds Set-\ref{4set-magic:uniform-graceful-difference} true.
\item a \emph{component felicitous-difference total set-coloring} if it holds Set-\ref{4set-magic:uniform-felicitous-difference} true.
\item a \emph{weak-component edge-magic total set-coloring} if it holds Set-\ref{4set-magic:weak-edge-magic} true.
\item a \emph{weak-component edge-difference total set-coloring} if it holds Set-\ref{4set-magic:weak-edge-difference} true.
\item a \emph{weak-component graceful-difference total set-coloring} if it holds Set-\ref{4set-magic:weak-graceful-difference} true.
\item a \emph{weak-component felicitous-difference total set-coloring} if it holds Set-\ref{4set-magic:weak-felicitous-difference} true.\qqed
\end{asparaenum}
\end{defn}

\begin{rem}\label{rem:333333}
The $W$-constraint $W_i\langle \psi(u_k),\psi(u_kv_k),\psi(v_k)\rangle=0$ in Definition \ref{defn:n-di-set-colorings-definition} is a group of constraints. Moreover, by the non-homogeneous idea, we can set the colors of vertices and edges as
\begin{equation}\label{eqa:different-dimendion-set-coloringss}
{
\begin{split}
&\psi(u_k)=\big \{a_{k,1},a_{k,2},\dots ,a_{k,n(k,r)}\big \},~\psi(v_k)=\big \{b_{k,1},b_{k,2},\dots ,b_{k,n(k,s)}\big \},\\
&\psi(u_kv_k)=\big \{c_{k,1},c_{k,2},\dots ,c_{k,n(k,t)}\big \}
\end{split}}
\end{equation} for each edge $u_kv_k\in E(G)=\{u_kv_k:k\in [1,q]\}$ under a \emph{$\{W_i\}^A_{i=1}$-constraint total set-coloring} $\psi : V(G)\cup E(G)\rightarrow \textbf{\textrm{S}}_{et}(\leq n)$ of a $(p,q)$-graph $G$. We modify the conditions of Definition \ref{defn:n-di-set-colorings-definition} slightly, and then get the same set-colorings defined in Definition \ref{defn:n-di-set-colorings-definition}.

For example, we set: If each number $c_{k,j}\in \psi(u_kv_k)$ corresponds to some $a_{k,r}\in \psi(u_k)$ and some $b_{k,s}\in \psi(v_k)$ holding the edge-magic constraint $a_{k,r}+b_{k,s}+c_{k,j}=\lambda$ true; each number $a_{k,r}\in \psi(u_k)$ corresponds to some $c_{k,j}\in \psi(u_kv_k)$ and some $b_{k,s}\in \psi(v_k)$ holding the edge-magic constraint $a_{k,r}+b_{k,s}+c_{k,j}=\lambda$ true; and each number $b_{k,s}\in \psi(v_k)$ corresponds to some $a_{k,r}\in \psi(u_k)$ and some $c_{k,j}\in \psi(u_kv_k)$ holding the edge-magic constraint $a_{k,r}+b_{k,s}+c_{k,j}=\lambda$ true. Then we call the total set-coloring $\psi$ \emph{component edge-magic total set-coloring}.

The set-colorings defined in Definition \ref{defn:n-di-set-colorings-definition} can be related with the vertex-intersected graphs of hypergraphs as the set $\textbf{\textrm{S}}_{et}(\leq n)$ appeared in Definition \ref{defn:n-di-set-colorings-definition} is a \emph{hyperedge set} $\mathcal{E}$ holding $\Lambda=\bigcup_{e\in \mathcal{E}}e$, where $\Lambda$ is a set of finite numbers.\qqed
\end{rem}

\begin{prop}\label{prop:99999}
\cite{Bing-et-al-arXiv-asymmetric-4520331} For a fixed set $U$, there are more groups of different sets $S_a,S_b,S_c$ holding:

(i) The \emph{set-edge-magic constraint} $S_a\cup S_b\cup S_c=U$.

(ii) The \emph{set-edge-difference constraint} $S_c\cup \big (S_a\setminus S_b\big )=U$.

(iii) The \emph{set-felicitous-difference constraint} $\big (S_a\cup S_b\big )\setminus S_c=U$.

(iv) The \emph{set-graceful-difference constraint} $\big (S_a\setminus S_b\big )\setminus S_c=U$.
\end{prop}

\begin{defn} \label{defn:111111}
\cite{Bing-et-al-arXiv-asymmetric-4520331} Let $S_{et}$ be a set of sets. Suppose that a graph $G$ admits a total set-coloring $F:V(G)\cup E(G)\rightarrow S_{et}$, such that $F(u)=S_{u}\in S_{et}$, $F(v)=S_{v}\in S_{et}$, and $F(uv)=S_{uv}\in S_{et}$ for each edge $uv\in E(G)$.

(i) If there is a fixed set $U$, such that each edge $uv\in E(G)$ holds the \emph{set-edge-magic constraint} $S_{u}\cup S_{v}\cup S_{uv}=U$ true, we say $F$ \emph{set-edge-magic total set-coloring}.

(ii) If there is a fixed set $U$, such that each edge $uv\in E(G)$ holds one of \emph{set-edge-difference constraints} $S_{uv}\cup (S_{u}\setminus S_{v})=U$ and $S_{uv}\cup (S_{v}\setminus S_{u})=U$ true, we say $F$ \emph{set-edge-difference total set-coloring}.

(iii) If there is a fixed set $U$, such that each edge $uv\in E(G)$ holds one of \emph{set-felicitous-difference constraints} $(S_{u}\cup S_{v})\setminus S_{uv}=U$ and $S_{uv}\setminus (S_{v}\cup S_{u})=U$ true, we say $F$ \emph{set-felicitous-difference total set-coloring}.

(iv) If there is a fixed set $U$, such that each edge $uv\in E(G)$ holds one of \emph{set-graceful-difference constraints} $(S_{u}\setminus S_{v})\setminus S_{uv}=U$, $(S_{v}\setminus S_{u})\setminus S_{uv}=U$, $S_{uv}\setminus (S_{u}\setminus S_{v})=U$ and $S_{uv}\setminus (S_{v}\setminus S_{u})=U$ true, we say $F$ \emph{set-graceful-difference total set-coloring}.\qqed
\end{defn}

\begin{thm}\label{thm:666666}
\cite{Bing-et-al-arXiv-asymmetric-4520331} Each connected $(p,q)$-graph $G$ admits a \emph{proper total string-coloring}
$$f:V(G)\cup E(G)\rightarrow \big \{a_{i}b_{i}:~a_{i},b_{i}\in [1,\Delta(G)-k]\big \}
$$ with $k\leq \Delta(G)-\sqrt{3[1+\Delta(G)]}$ if $\Delta(G)\geq 6$.
\end{thm}

\begin{conj}\label{conj:0000000000}
$^*$ Each connected $(p,q)$-graph $G$ admits a \emph{proper total string-coloring}
$$
f:V(G)\cup E(G)\rightarrow \big \{a_{i}b_{i}:~a_{i},b_{i}\in \big [1,\sqrt{\Delta(G)+2}\big ]\big \}
$$
\end{conj}

\begin{defn}\label{defn:graphs-vs-total-graphs}
\textbf{A transformation for colorings.} The \emph{total graph} $T(G)$ of a graph $G$ is a graph such that

(i) the vertex set $V(T(G))$ of $T(G)$ holds $V(T(G))=V(G)\cup E(G)$; and

(ii) two vertices are adjacent in $T(G)$ if and only if their corresponding elements are either adjacent or incident in the graph $G$.

Then, a total coloring $f_t$ of the graph $G$ becomes a (proper) vertex coloring $g_v$ of the total graph $T(G)$, that is, $f_t\sim g_v$. \qqed
\end{defn}

\begin{defn}\label{defn:graceful-intersection}
\cite{Yao-Zhang-Sun-Mu-Sun-Wang-Wang-Ma-Su-Yang-Yang-Zhang-2018arXiv} Suppose that a $(p,q)$-graph $G$ admits a set-labeling $F:V(G)\rightarrow [1,q]^2$~(resp. $[1,2q-1]^2)$, and induces an edge set-color $F(uv)=F(u)\cap F(v)$ for each edge $uv \in E(G)$. If we can select a \emph{representative} $a_{uv}\in F(uv)$ for each edge color set $F(uv)$ such that $\{a_{uv}:~uv\in E(G)\}=[1,q]$ (resp. $[1,2q-1]^o$), then $F$ is called \emph{graceful-intersection (resp. odd-graceful-intersection) total set-labeling} of the graph $G$.\qqed
\end{defn}

\begin{thm}\label{thm:graceful-total-set-labelings}
\cite{Yao-Zhang-Sun-Mu-Sun-Wang-Wang-Ma-Su-Yang-Yang-Zhang-2018arXiv} Each tree $T$ admits a \emph{graceful-intersection (resp. an odd-graceful-intersection) total set-labeling}.
\end{thm}

We define a \emph{regular rainbow set-sequence} $\big \{R_k\big \}^{q}_1$ as: $R_k=[1,k]$ with $k\in [1,q]$, where $[1,1]=\{1\}$.

\begin{thm}\label{thm:rainbow-total-set-labelings}
\cite{Yao-Zhang-Sun-Mu-Sun-Wang-Wang-Ma-Su-Yang-Yang-Zhang-2018arXiv} Each tree $T$ of $q$ edges admits a \emph{regular rainbow intersection total set-labeling} based on a \emph{regular rainbow set-sequence} $\big \{[1,k]\big \}^{q}_{k=1}$.
\end{thm}
\begin{proof} Suppose that a vertex $x$ is a leaf of a tree $T$ of $q$ edges, so the vertex-removed graph $T-x$ is just a tree of $(q-1)$ edges. Assume that $T-x$ admits a regular rainbow set-sequence $\big \{R_k\big \}^{q-1}_1$ total set-labeling $f$. Let $y$ be adjacent with $x$ in $T$. We define a labeling $g$ of the tree $T$ in this way: $g(w)=f(w)$ for $w\in V(T)\setminus \{y,x\}$, $g(y)=R_{q+1}=[1,q+1]$ and $g(x)=R_q=[1,q]$. Therefore, we have $g(u_iv_j)=g(u_i)\cap g(v_j)=[1,i]\cap [1,j]$ for $u_iv_j\in E(T)\setminus \{xy\}$, and $g(xy)=g(x)\cap g(y)=[1,q]$, and $g(s)\neq g(t)$ for any pair of vertices $s$ and $t$. We claim that $g$ is a regular rainbow intersection total set-labeling of $T$ by the hypothesis of induction.
\end{proof}

\begin{rem}\label{rem:333333}
Each tree admits a regular odd-rainbow intersection total set-labeling based on a \emph{regular odd-rainbow set-sequence} $\big \{R_k\big \}^{q}_1$ defined as: $R_k=[1,2k-1]$ with $k\in [1,q]$, where $[1,1]=\{1\}$. Moreover, we can define a \emph{regular Fibonacci-rainbow set-sequence} $\big \{R_k\big \}^{q}_1$ by $R_1=[1,1]$, $R_2=[1,1]$, and $R_{k+1}=R_{k-1}\cup R_{k}$ with $k\in [2,q]$; or a $\tau$-term Fibonacci-rainbow set-sequence $\big \{\tau,R_i\big \}^{q}_1$ holds: $R_i=[1,a_i]$ with $a_i>1$ and $i\in [1,q]$, and $R_k=\sum ^{k-1}_{i=k-\tau}R_i$ with $k>\tau$. It may be an interesting research on various rainbow set-sequences for non-tree graphs.\qqed
\end{rem}

\subsection{Number-based sequence colorings}

Sequence colorings are a class of specific set-colorings, strictly speaking.

\begin{defn}\label{defn:sequence-coloring}
\cite{Yao-Su-Wang-Hui-Sun-ITAIC2020} Let $G$ be a $(p, q)$-graph, and let a sequence $A_M=\{a_i\}_1^M$ hold $0\leq a_i< a_{i+1}$ for $i\in [1, M-1]$ and $p\leq M$, and let another sequence $B_q=\{b_j\}_1^q$ hold $0\leq b_j< b_{j+1}$ for $j\in [1, q-1]$, and let $k$ be a constant. The $(p, q)$-graph $G$ admits a mapping $f:S\rightarrow C$ with $f(S)=\{f(x):x\in S\}$ and there are the following constraints:
\begin{asparaenum}[Rec-1. ]
\item \label{rescondi:vertex-mapping} $S=V(G)$ and $C=A_M$;
\item \label{rescondi:total-mapping} $S=V(G)\cup E(G)$ and $C=A_M\cup B_q$;
\item \label{rescondi:adjacent-different} $f(u)\neq f(v)$ for any edge $uv\in E(G)$;
\item \label{rescondi:incident-different} $f(u)\neq f(uv)$ and $f(v)\neq f(uv)$ for each edge $uv\in E(G)$;
\item \label{rescondi:adjacent-edge-different} $f(uv)\neq f(uw)$ for distinct vertices $v, w\in N_{ei}(u)$;
\item \label{rescondi:not-full-1} $f(E(G))\subseteq B_q$;
\item \label{rescondi:all-full} $f(V(G))=A_M$;
\item \label{rescondi:edge-set-full} $f(E(G))=B_q$;
\item \label{rescondi:induced-edge-label} a function $O$ holding $f(uv)=O(f(u), f(v))$ for each edge $uv\in E(G)$;
\item \label{rescondi:magic-equation} a function $F$ holding $F(f(u), f(uv), f(v))=k$ for each edge $uv\in E(G)$;
\item \label{rescondi:set-ordered} $G$ is a bipartite graph with vertex bipartition $(X, Y)$ holding the set-ordered constraint $\min f(X)<\max f(Y)$.
\end{asparaenum}
\quad We refer to $f$ as:

\noindent ------ \emph{traditional-type}

\begin{asparaenum}[\textrm{Coloring}-1.]
\item an \emph{edge-induced sequence coloring} of the graph $G$ if Rec-\ref{rescondi:vertex-mapping} and Rec-\ref{rescondi:induced-edge-label} hold true;
\item a \emph{graceful edge-induced sequence coloring} of the graph $G$ if Rec-\ref{rescondi:vertex-mapping}, Rec-\ref{rescondi:edge-set-full} and Rec-\ref{rescondi:induced-edge-label} hold true, where $f(uv)=O(f(u), f(v))=|f(u)-f(v)|$;
\item a \emph{set-ordered edge-induced sequence coloring} of the graph $G$ if Rec-\ref{rescondi:vertex-mapping}, Rec-\ref{rescondi:induced-edge-label} and Rec-\ref{rescondi:set-ordered} hold true;
\item a \emph{set-ordered graceful edge-induced sequence coloring} of the graph $G$ if Rec-\ref{rescondi:vertex-mapping}, Rec-\ref{rescondi:edge-set-full}, Rec-\ref{rescondi:induced-edge-label} and Rec-\ref{rescondi:set-ordered} hold true, where $f(uv)=O(f(u), f(v))=|f(u)-f(v)|$;

\item a \emph{sequence graceful labelling} of the graph $G$ if Rec-\ref{rescondi:vertex-mapping}, Rec-\ref{rescondi:edge-set-full}, Rec-\ref{rescondi:all-full} and Rec-\ref{rescondi:induced-edge-label} hold true, where $f(uv)=O(f(u), f(v))=|f(u)-f(v)|$;

\item a \emph{set-ordered sequence graceful coloring} of the graph $G$ if Rec-\ref{rescondi:vertex-mapping}, Rec-\ref{rescondi:edge-set-full}, Rec-\ref{rescondi:set-ordered} and Rec-\ref{rescondi:induced-edge-label} hold true, where $f(uv)=O(f(u), f(v))=|f(u)-f(v)|$;
\item a \emph{set-ordered sequence graceful labelling} of the graph $G$ if Rec-\ref{rescondi:vertex-mapping}, Rec-\ref{rescondi:edge-set-full}, Rec-\ref{rescondi:all-full}, Rec-\ref{rescondi:set-ordered} and Rec-\ref{rescondi:induced-edge-label} hold true, where $f(uv)=O(f(u), f(v))=|f(u)-f(v)|$;

\item a \emph{sequence total coloring} of the graph $G$ if Rec-\ref{rescondi:total-mapping} and Rec-\ref{rescondi:adjacent-different} hold true;
\item a \emph{proper sequence total coloring} of the graph $G$ if Rec-\ref{rescondi:total-mapping}, Rec-\ref{rescondi:adjacent-different}, Rec-\ref{rescondi:incident-different} and Rec-\ref{rescondi:adjacent-edge-different} hold true;

------ \emph{graceful-type}

\item a \emph{graceful sequence total coloring} of the graph $G$ if Rec-\ref{rescondi:total-mapping}, Rec-\ref{rescondi:adjacent-different}, Rec-\ref{rescondi:edge-set-full} and Rec-\ref{rescondi:induced-edge-label} hold true, where $f(uv)=O(f(u), f(v))=|f(u)-f(v)|$;
\item a \emph{graceful sequence proper total coloring} of the graph $G$ if Rec-\ref{rescondi:total-mapping}, Rec-\ref{rescondi:adjacent-different}, Rec-\ref{rescondi:incident-different}, Rec-\ref{rescondi:adjacent-edge-different}, Rec-\ref{rescondi:edge-set-full} and Rec-\ref{rescondi:induced-edge-label} hold true, where $f(uv)=O(f(u), f(v))=|f(u)-f(v)|$;

\item a \emph{proper graceful-total sequence coloring} of the graph $G$ if Rec-\ref{rescondi:total-mapping}, Rec-\ref{rescondi:adjacent-different}, Rec-\ref{rescondi:incident-different}, Rec-\ref{rescondi:adjacent-edge-different}, Rec-\ref{rescondi:edge-set-full}, Rec-\ref{rescondi:all-full} and Rec-\ref{rescondi:induced-edge-label} hold true, where $f(uv)=O(f(u), f(v))=|f(u)-f(v)|$;

------ \emph{felicitous-type}

\item a \emph{felicitous sequence total coloring} of the graph $G$ if Rec-\ref{rescondi:total-mapping}, Rec-\ref{rescondi:adjacent-different}, Rec-\ref{rescondi:not-full-1} and Rec-\ref{rescondi:induced-edge-label} hold true, where $f(uv)=O(f(u), f(v))=f(u)+f(v)$~($\bmod^*~q^*$);
\item a \emph{felicitous sequence proper total coloring} of the graph $G$ if Rec-\ref{rescondi:total-mapping}, Rec-\ref{rescondi:adjacent-different}, Rec-\ref{rescondi:incident-different}, Rec-\ref{rescondi:adjacent-edge-different}, Rec-\ref{rescondi:not-full-1} and Rec-\ref{rescondi:induced-edge-label} hold true, where $f(uv)=O(f(u), f(v))=f(u)+f(v)$~($\bmod^*~q^*$);

\item a \emph{set-ordered felicitous sequence total coloring} of the graph $G$ if Rec-\ref{rescondi:total-mapping}, Rec-\ref{rescondi:adjacent-different}, Rec-\ref{rescondi:set-ordered}, Rec-\ref{rescondi:not-full-1} and Rec-\ref{rescondi:induced-edge-label} hold true, where $f(uv)=O(f(u), f(v))=f(u)+f(v)$~($\bmod^*~q^*$);
\item a \emph{set-ordered felicitous sequence proper total coloring} of the graph $G$ if Rec-\ref{rescondi:total-mapping}, Rec-\ref{rescondi:adjacent-different}, Rec-\ref{rescondi:incident-different}, Rec-\ref{rescondi:adjacent-edge-different}, Rec-\ref{rescondi:set-ordered}, Rec-\ref{rescondi:not-full-1} and Rec-\ref{rescondi:induced-edge-label} hold true, where $f(uv)=O(f(u), f(v))=f(u)+f(v)$~($\bmod^*~q^*$);

------ \emph{magic-constraints}

\item a \emph{sequence edge-magic total coloring} of the graph $G$ if Rec-\ref{rescondi:total-mapping}, Rec-\ref{rescondi:adjacent-different}, Rec-\ref{rescondi:adjacent-edge-different} and Rec-\ref{rescondi:magic-equation} hold true, where $F(f(u), f(uv), f(v))=f(u)+f(uv)+f(v)=k$;
\item a \emph{sequence proper edge-magic total coloring} of the graph $G$ if Rec-\ref{rescondi:total-mapping}, Rec-\ref{rescondi:adjacent-different}, Rec-\ref{rescondi:incident-different}, Rec-\ref{rescondi:adjacent-edge-different} and Rec-\ref{rescondi:magic-equation} hold true, where $F(f(u), f(uv), f(v))=f(u)+f(uv)+f(v)=k$;

\item a \emph{set-ordered sequence edge-magic total coloring} of the graph $G$ if Rec-\ref{rescondi:total-mapping}, Rec-\ref{rescondi:adjacent-different}, Rec-\ref{rescondi:adjacent-edge-different}, Rec-\ref{rescondi:set-ordered} and Rec-\ref{rescondi:magic-equation} hold true, where $F(f(u), f(uv), f(v))=f(u)+f(uv)+f(v)=k$;
\item a \emph{set-ordered sequence proper edge-magic total coloring} of the graph $G$ if Rec-\ref{rescondi:total-mapping}, Rec-\ref{rescondi:adjacent-different}, Rec-\ref{rescondi:incident-different}, Rec-\ref{rescondi:adjacent-edge-different}, Rec-\ref{rescondi:set-ordered} and Rec-\ref{rescondi:magic-equation} hold true, where $F(f(u), f(uv), f(v))=f(u)+f(uv)+f(v)=k$;

\item a \emph{sequence edge-difference total coloring} of the graph $G$ if Rec-\ref{rescondi:total-mapping}, Rec-\ref{rescondi:adjacent-different}, Rec-\ref{rescondi:adjacent-edge-different} and Rec-\ref{rescondi:magic-equation} hold true, where $F(f(u), f(uv), f(v))=f(uv)+|f(u)-f(v)|=k$;

\item a \emph{sequence proper edge-difference total coloring} of the graph $G$ if Rec-\ref{rescondi:total-mapping}, Rec-\ref{rescondi:adjacent-different}, Rec-\ref{rescondi:incident-different}, Rec-\ref{rescondi:adjacent-edge-different} and Rec-\ref{rescondi:magic-equation} hold true, where $F(f(u), f(uv), f(v))=f(uv)+|f(u)-f(v)|=k$;

\item a \emph{set-ordered sequence edge-difference total coloring} of the graph $G$ if Rec-\ref{rescondi:total-mapping}, Rec-\ref{rescondi:adjacent-different}, Rec-\ref{rescondi:adjacent-edge-different}, Rec-\ref{rescondi:set-ordered} and Rec-\ref{rescondi:magic-equation} hold true, where $F(f(u), f(uv), f(v))=f(uv)+|f(u)-f(v)|=k$;
\item a \emph{set-ordered sequence proper edge-difference total coloring} of the graph $G$ if Rec-\ref{rescondi:total-mapping}, Rec-\ref{rescondi:adjacent-different}, Rec-\ref{rescondi:incident-different}, Rec-\ref{rescondi:adjacent-edge-different}, Rec-\ref{rescondi:set-ordered} and Rec-\ref{rescondi:magic-equation} hold true, where $F(f(u), f(uv), f(v))=f(uv)+|f(u)-f(v)|=k$;

\item a \emph{sequence graceful-difference total coloring} of the graph $G$ if Rec-\ref{rescondi:total-mapping}, Rec-\ref{rescondi:adjacent-different}, Rec-\ref{rescondi:adjacent-edge-different} and Rec-\ref{rescondi:magic-equation} hold true, where $F(f(u), f(uv), f(v))=\big |f(uv)-|f(u)-f(v)|\big |=k$;

\item a \emph{sequence proper graceful-difference total coloring} of the graph $G$ if Rec-\ref{rescondi:total-mapping}, Rec-\ref{rescondi:adjacent-different}, Rec-\ref{rescondi:incident-different}, Rec-\ref{rescondi:adjacent-edge-different} and Rec-\ref{rescondi:magic-equation} hold true, where $F(f(u), f(uv), f(v))=\big |f(uv)-|f(u)-f(v)|\big |=k$;

\item a \emph{set-ordered sequence graceful-difference total coloring} of the graph $G$ if Rec-\ref{rescondi:total-mapping}, Rec-\ref{rescondi:adjacent-different}, Rec-\ref{rescondi:adjacent-edge-different}, Rec-\ref{rescondi:set-ordered} and Rec-\ref{rescondi:magic-equation} hold true, where $F(f(u), f(uv), f(v))=\big |f(uv)-|f(u)-f(v)|\big |=k$;
\item a \emph{set-ordered sequence proper graceful-difference total coloring} of the graph $G$ if Rec-\ref{rescondi:total-mapping}, Rec-\ref{rescondi:adjacent-different}, Rec-\ref{rescondi:incident-different}, Rec-\ref{rescondi:adjacent-edge-different}, Rec-\ref{rescondi:set-ordered} and Rec-\ref{rescondi:magic-equation} hold true, where $F(f(u), f(uv), f(v))=\big |f(uv)-|f(u)-f(v)|\big |=k$;

\item a \emph{sequence gracefully-total coloring} of the graph $G$ if Rec-\ref{rescondi:total-mapping}, Rec-\ref{rescondi:adjacent-different}, Rec-\ref{rescondi:adjacent-edge-different} and Rec-\ref{rescondi:magic-equation} hold true, where $F(f(u), f(uv), f(v))=|f(u)+f(v)-f(uv)|=k$;

\item a \emph{sequence proper gracefully-total coloring} of the graph $G$ if Rec-\ref{rescondi:total-mapping}, Rec-\ref{rescondi:adjacent-different}, Rec-\ref{rescondi:incident-different}, Rec-\ref{rescondi:adjacent-edge-different} and Rec-\ref{rescondi:magic-equation} hold true, where $F(f(u), f(uv), f(v))=|f(u)+f(v)-f(uv)|=k$;

\item a \emph{set-ordered sequence gracefully-total coloring} of the graph $G$ if Rec-\ref{rescondi:total-mapping}, Rec-\ref{rescondi:adjacent-different}, Rec-\ref{rescondi:adjacent-edge-different}, Rec-\ref{rescondi:set-ordered} and Rec-\ref{rescondi:magic-equation} hold true, where $F(f(u), f(uv), f(v))=|f(u)+f(v)-f(uv)|=k$;
\item a \emph{set-ordered sequence proper gracefully-total coloring} of the graph $G$ if Rec-\ref{rescondi:total-mapping}, Rec-\ref{rescondi:adjacent-different}, Rec-\ref{rescondi:incident-different}, Rec-\ref{rescondi:adjacent-edge-different}, Rec-\ref{rescondi:set-ordered} and Rec-\ref{rescondi:magic-equation} hold true, where $F(f(u), f(uv), f(v))=|f(u)+f(v)-f(uv)|=k$;

------ \emph{gcd-type}

\item a \emph{maxi-common-factor sequence total coloring} of the graph $G$ if Rec-\ref{rescondi:total-mapping}, Rec-\ref{rescondi:adjacent-different}, Rec-\ref{rescondi:adjacent-edge-different} and Rec-\ref{rescondi:induced-edge-label} hold true, where $f(uv)=O(f(u), f(v))=\textrm{gcd}(f(u), f(v))$;
\item a \emph{gracefully maxi-common-factor sequence total coloring} of the graph $G$ if Rec-\ref{rescondi:total-mapping}, Rec-\ref{rescondi:adjacent-different}, Rec-\ref{rescondi:adjacent-edge-different}, Rec-\ref{rescondi:edge-set-full} and Rec-\ref{rescondi:induced-edge-label} hold true, where $f(uv)=O(f(u), f(v))=\textrm{gcd}(f(u), f(v))$.\qqed
\end{asparaenum}
\end{defn}

\begin{thm}\label{thm:graceful-total-sequence-coloring}
\cite{Yao-Su-Wang-Hui-Sun-ITAIC2020} Every tree $T$ with diameter $D(T)\geq 3$ and $s+1=\left \lceil \frac{D(T)}{2}\right \rceil $ admits at least $2^{s}$ different \emph{gracefully total sequence colorings} if two sequences $A_M, B_q$ holding $0<b_j-a_i\in B_q$ for $a_i\in A_M$ and $b_j\in B_q$.
\end{thm}

\begin{lem}\label{thm:adding-leaves-keep-sequence-colorings}
\cite{Yao-Su-Wang-Hui-Sun-ITAIC2020} Suppose that a bipartite and connected graph $G$ admits a gracefully total sequence coloring based on two sequences $A_M, B_q$ holding $0<b_j-a_i\in B_q$ for $a_i\in A_M$ and $b_j\in B_q$, then a new bipartite and connected graph obtained by adding randomly leaves to $G$ admits a \emph{gracefully total sequence coloring} based on two sequences $A\,'_M, B\,'_q$ holding $0<b\,'_j-a\,'_i\in B\,'_q$ for $a\,'_i\in A\,'_M$ and $b\,'_j\in B\,'_q$.
\end{lem}

\begin{defn} \label{defn:111111}
\textbf{Colorings based on abstract sequences \cite{Yao-Su-Wang-Hui-Sun-ITAIC2020} .} Suppose that a $(p,q)$-graph $G$ admits a \emph{graceful coloring} $f:V(G)\cup E(G)\rightarrow [0,M]$ such that the edge color set
$$
f(E(G))=\{f(wz)=|f(w)-f(z)|:wz\in E(G)\}=[1,q]
$$ Let $C_M=\big \{c_i\big \}^M_{i=1}$ and $D_q=\big \{d_j\big \}^q_{j=1}$ be two \emph{abstract sequences}. We define a new coloring $f^*$ by $f^*(w)=c_i$ if $f(w)=i$ for vertex $w\in V(G)$, and $f^*(wz)=d_j$ if $f(wz)=j$ for edge $wz\in E(G)$. Then we call $f^*$ a \emph{graceful abstract-sequence coloring} of the graph $G$ if $f^*(E(G))=D_q$. Here an abstract sequence $C_M=\{c_i\}^M_{i=1}$ or $D_q=\{d_j\}^q_{j=1}$ is consisted of any things in the world. Thereby, $f^*$ is an \emph{abstract substitution} of $f$, conversely, $f$ is \emph{mapping homomorphism} to $f^*$. Let $E(G)=\{e_i=x_iy_i:~i\in [1,q]\}$. Then this $(p,q)$-graph $G$ has its own another Topcode-matrix $T^*_{code}(G)$ defined as
\begin{equation}\label{eqa:topcode-matrix-22}
{
\begin{split}
T^*_{code}(G)&= \left(
\begin{array}{cccccccccc}
f^*(x_1) & f^*(x_2) & \cdots & f^*(x_q)\\
f^*(e_1) & f^*(e_2) & \cdots & f^*(e_q)\\
f^*(y_1) & f^*(y_2) & \cdots & f^*(y_q)
\end{array}
\right)_{3\times q}\\
&=(f^*(x_i),f^*(e_i),f^*(y_i))^T_{e_i\in E(G)}
\end{split}}
\end{equation} In particular case of a Fibonacci-Lucas sequence, we have $f^*(w)=c_i=F[w_i, z_i]_n$ if $f(w)=i$ for vertex $w\in V(G)$ and $f^*(wz)=d_j=F[w_j, z_j]_n$ if $f(wz)=j$ for edge $wz\in E(G)$.\qqed
\end{defn}

Theorem \ref{thm:tree-graceful-total-coloringss} tells us that each tree $T$ admits a $(k,d)$-graceful total coloring, also, a set-ordered graceful total coloring as $(k, d)=(1, 1)$. Thereby, we have
\begin{thm}\label{thm:each-tree-graceful-abstract-coloring}
\cite{Yao-Su-Wang-Hui-Sun-ITAIC2020} Each tree admits a set-ordered graceful abstract-sequence coloring.
\end{thm}

\subsection{Distinguishing set-colorings}

Suppose that a graph $G$ admits a \textbf{\emph{coloring}} $\eta:S\rightarrow S_{et}$, where $S\subseteq V(G)\cup E(G)$ and $S_{et}$ is the set of sets $e_{1},e_{2},\dots ,e_{m}$. We will use the following neighbor color sets:

\begin{equation}\label{eqa:distinguishing-set-colorings}
{
\begin{split}
\mathcal{E}_v(x,\eta)=&\, \big \{\eta(y):y\in N_{ei}(x)\big \}\textrm{ (\emph{local v-set-color set})};\\
\mathcal{E}_v[x,\eta]=&\, \big \{\eta(x)\big \}\cup \mathcal{E}_v(x,\eta)\textrm{ (\emph{closed-local v-set-color set})};\\
\mathcal{E}_e(x,\eta)=&\, \big \{\eta(xy):y\in N_{ei}(x)\big \}\textrm{ (\emph{local e-set-color set})};\\
\mathcal{E}_e[x,\eta]=&\, \big \{\eta(x)\big \}\cup \mathcal{E}_e(x,\eta)\textrm{ (\emph{closed-local e-set-color set})};\\
\mathcal{E}_{ve}(x,\eta)=&\, \mathcal{E}_v(x,\eta)\cup \mathcal{E}_e(x,\eta)\textrm{ (\emph{local ve-set-color set})};\\
\mathcal{E}_{ve}[x,\eta]=&\, \big \{\eta(x)\big \}\cup \mathcal{E}_v(x,\eta)\cup \mathcal{E}_e(x,\eta)\textrm{ (\emph{closed-local ve-set-color set})};\\
\mathcal{E}_{ve}\big \{x,\eta\big \}=&\, \big \{\mathcal{E}_e(x,\eta), \mathcal{E}_e[x,\eta], \mathcal{E}_v[x,\eta],\mathcal{E}_{ve}[x,\eta]\big \}\textrm{ (\emph{closed-local $(4)$-set-color set})}.
\end{split}}
\end{equation}

Motivated from the distinguishing colorings introduced in \cite{Yao-Yang-Yao-2020-distinguishing}, we present the following distinguishing set-colorings:

\begin{defn}\label{defn:more-distinguishing-set-colorings}
$^*$ Suppose that a graph $G$ admits a coloring $\eta:S\rightarrow S_{et}=\{e_{1},e_{2},\dots ,e_{m}\}$ with each $e_{i}$ is a set, and the set $S\subseteq V(G)\cup E(G)$. By the color sets shown in Eq.(\ref{eqa:distinguishing-set-colorings}) and $x\in V(G)$, there are the following \textbf{constraints}:
\begin{asparaenum}[\textrm{\textbf{Co}}-1. ]
\item \label{asp:vertex-only} (Vertex-set) $S=V(G)$;
\item \label{asp:edge-only} (Edge-set) $S=E(G)$;
\item \label{asp:total} (Total-set) $S=V(G)\cup E(G)$;
\item \label{asp:adjacent-vertices} (Adjacent-vertices) $\eta(u)\neq \eta(v)$ for each edge $uv\in E(G)$;
\item \label{asp:adjacent-edges} (Adjacent-edges) $\eta(xy)\neq \eta(xw)$ for distinct vertices $y,w\in N_{ei}(x)$;
\item \label{asp:adjacent-vertices-edge} (Incident vertices and edges) $\eta(u)\neq \eta(uv)$ and $\eta(v)\neq \eta(uv)$ for each edge $uv\in E(G)$;
\item \label{asp:universal-vertices} (No-adjacent-vertices) $\eta(u)\neq \eta(x)$ for $ux\not\in E(G)$;
\item \label{asp:universal-edges} (No-adjacent-edges) $\eta(xy)\neq \eta(uv)$ for $xy,uv\in E(G)$ with $x\neq u$, $x\neq v$, $y\neq u$ and $y\neq v$;
\item \label{asp:neighbor-distinguishing} (Local vertex distinguishing) $\mathcal{E}_v(x,\eta)\neq \mathcal{E}_v(y,\eta)$ for each $y\in N_{ei}(x)$;
\item \label{asp:closed-neighbor-distinguishing} (Closed local vertex distinguishing) $\mathcal{E}_v[x,\eta]\neq \mathcal{E}_v[y,\eta]$ for each $y\in N_{ei}(x)$;
\item \label{asp:neighbor-e-distinguishing}(Local edge distinguishing) $\mathcal{E}_e(x,\eta)\neq \mathcal{E}_e(y,\eta)$ for each $y\in N_{ei}(x)$ ;
\item \label{asp:closed-neighbor-e-distinguishing} (Closed local ve-distinguishing) $\mathcal{E}_e[x,\eta]\neq \mathcal{E}_e[y,\eta]$ for each $y\in N_{ei}(x)$;
\item \label{asp:neighbor-ve-distinguishing} (Local ve-distinguishing) $\mathcal{E}_{ve}(x,\eta)\neq \mathcal{E}_{ve}(y,\eta)$ for each $y\in N_{ei}(x)$;
\item \label{asp:closed-neighbor-ve-distinguishing} (Closed local ve-distinguishing) $\mathcal{E}_{ve}[x,\eta]\neq \mathcal{E}_{ve}[y,\eta]$ for each $y\in N_{ei}(x)$;

\item \label{asp:universal-v-distinguishing} (Universal vertex distinguishing) $\mathcal{E}_v(x,\eta)\neq \mathcal{E}_v(w,\eta)$ for distinct vertices $x,w\in V(G)$;
\item \label{asp:closed-universal-v-distinguishing} (Closed universal vertex distinguishing) $\mathcal{E}_v[x,\eta]\neq \mathcal{E}_v[w,\eta]$ for distinct vertices $x,w\in V(G)$;
\item \label{asp:universal-e-distinguishing} (Universal edge distinguishing) $\mathcal{E}_e(x,\eta)\neq \mathcal{E}_e(w,\eta)$ for distinct vertices $x,w\in V(G)$;
\item \label{asp:closed-universal-e-distinguishing} (Universal edge distinguishing) $\mathcal{E}_e[x,\eta]\neq \mathcal{E}_e[w,\eta]$ for distinct vertices $x,w\in V(G)$;
\item \label{asp:universal-ve-distinguishing} (Universal ve-distinguishing) $\mathcal{E}_{ve}(x,\eta)\neq \mathcal{E}_{ve}(w,\eta)$ for distinct vertices $x,w\in V(G)$;
\item \label{asp:closed-universal-ve-distinguishing} (Closed universal ve-distinguishing) $\mathcal{E}_{ve}[x,\eta]\neq \mathcal{E}_{ve}[w,\eta]$ for distinct vertices $x,w\in V(G)$;
\item \label{asp:local-totally-ve-distinguishing} (Local (4)-totally ve-distinguishing) $\mathcal{E}_{ve}\{x,\eta\}\neq \mathcal{E}_{ve}\{y,\eta\}$ for each $y\in N_{ei}(x)$.

------ \emph{\textbf{distance}}

\item \label{asp:distance-v-distinguishing} ($\beta$-distance vertex distinguishing) $\mathcal{E}_v(u,\eta)\neq \mathcal{E}_v(v,\eta)$ for distinct vertices $u$ and $v$ with distance $d(u,v)\leq \beta$;
\item \label{asp:distance-closed-v-distinguishing} ($\beta$-distance closed vertex distinguishing) $\mathcal{E}_v[u,\eta]\neq \mathcal{E}_v[v,\eta]$ for distinct vertices $u$ and $v$ with
distance $d (u,v)\leq \beta$;
\item \label{asp:distance-e-distinguishing} ($\beta$-distance edge distinguishing) $\mathcal{E}_e(u,\eta)\neq \mathcal{E}_e(v,\eta)$ for distinct vertices $u$ and $v$ with distance $d(u,v)\leq \beta$;
\item \label{asp:distance-closed-e-distinguishing}($\beta$-distance closed edge distinguishing) $\mathcal{E}_e[u,\eta]\neq \mathcal{E}_e[v,\eta]$ for distinct vertices $u$ and $v$ with
distance $d (u,v)\leq \beta$ ;
\item \label{asp:distance-ve-distinguishing} ($\beta$-distance total distinguishing) $\mathcal{E}_{ve}(u,\eta)\neq \mathcal{E}_{ve}(v,\eta)$ for distinct vertices $u$ and $v$ with distance $d(u,v)\leq \beta$;
\item \label{asp:distance-closed-ve-distinguishing} ($\beta$-distance closed total distinguishing) $\mathcal{E}_{ve}[u,\eta]\neq \mathcal{E}_{ve}[v,\eta]$ for distinct vertices $u$ and $v$ with distance $d (u,v)\leq \beta$;

------ \emph{\textbf{equitable, acyclic}}

\item \label{asp:equitable-constraint} (Equitable sets) If $S\subseteq V(G)\cup E(G)$ holds $S=\bigcup ^k_{i=1}S_i$ such that no two elements of each subset $S_i$ with $i\in [1,k]$ are adjacent or
incident in $G$, also, subset $S_i$ is called an \emph{independent} (\emph{stable}) \emph{set}. $\big||S_i|-|S_j|\big |\leq 1$ with $i,j\in [1,k]$;
\item \label{asp:acyclic-constraint-3} (Acyclic property) By Co-\ref{asp:equitable-constraint}, the induced subgraph by $S_i\cup S_j$ with $i\neq j$ contains no cycle.
\end{asparaenum}

\noindent \textbf{Then, the coloring $\eta$ is}:

\noindent ------ \emph{proper}

\begin{asparaenum}[\textrm{\textbf{Setc}}-1. ]
\item a \emph{proper v-set-coloring} if the constraints Co-\ref{asp:vertex-only} and Co-\ref{asp:adjacent-vertices} hold true;
\item a \emph{proper e-set-coloring} if the constraints Co-\ref{asp:edge-only} and Co-\ref{asp:adjacent-edges} hold true;
\item a \emph{proper total set-coloring} if the constraints Co-\ref{asp:total}, Co-\ref{asp:adjacent-vertices}, Co-\ref{asp:adjacent-edges} and Co-\ref{asp:adjacent-vertices-edge} hold true;
\item a \emph{labeling} if the constraints Co-\ref{asp:vertex-only}, Co-\ref{asp:adjacent-vertices} and Co-\ref{asp:universal-vertices} hold true;

------ \emph{improper}

\item a \emph{v-set-coloring} if the constraint Co-\ref{asp:vertex-only} holds true;
\item an \emph{e-set-coloring} if the constraint Co-\ref{asp:edge-only} holds true;
\item a \emph{total set-coloring} if the constraint Co-\ref{asp:total} holds true;

------ \emph{improper distinguishing}

\item an \emph{adjacent-vertex distinguishing v-set-coloring} if the constraints Co-\ref{asp:vertex-only} and Co-\ref{asp:neighbor-distinguishing} hold true;
\item an \emph{adjacent-vertex distinguishing closed v-set-coloring} if the constraints Co-\ref{asp:vertex-only} and Co-\ref{asp:closed-neighbor-distinguishing} hold true;
\item an \emph{adjacent-vertex distinguishing e-set-coloring} if the constraints Co-\ref{asp:edge-only} and Co-\ref{asp:neighbor-e-distinguishing} hold true;
\item an \emph{adjacent-vertex distinguishing closed e-set-coloring} if the constraints Co-\ref{asp:edge-only} and Co-\ref{asp:closed-neighbor-e-distinguishing} hold true;
\item an \emph{adjacent-vertex distinguishing total set-coloring} if the constraints Co-\ref{asp:total} and Co-\ref{asp:neighbor-ve-distinguishing} hold true;
\item an \emph{adjacent-vertex distinguishing closed total set-coloring} if the constraints Co-\ref{asp:total} and Co-\ref{asp:closed-neighbor-ve-distinguishing} hold true;

------ \emph{local proper}

\item an \emph{adjacent-vertex distinguishing proper v-set-coloring} if the constraints Co-\ref{asp:vertex-only}, Co-\ref{asp:adjacent-vertices} and Co-\ref{asp:neighbor-distinguishing} hold true;
\item an \emph{adjacent-vertex distinguishing closed proper v-set-coloring} if the constraints Co-\ref{asp:vertex-only}, Co-\ref{asp:adjacent-vertices} and Co-\ref{asp:closed-neighbor-distinguishing} hold true;

\item an \emph{adjacent-vertex distinguishing proper e-set-coloring} if the constraints Co-\ref{asp:edge-only}, Co-\ref{asp:adjacent-edges} and Co-\ref{asp:neighbor-e-distinguishing} hold true;
\item an \emph{adjacent-vertex distinguishing closed proper e-set-coloring} if the constraints Co-\ref{asp:edge-only}, Co-\ref{asp:adjacent-edges} and Co-\ref{asp:closed-neighbor-e-distinguishing} hold true;
\item an \emph{adjacent-vertex distinguishing proper total set-coloring} if the constraints Co-\ref{asp:total}, Co-\ref{asp:adjacent-vertices}, Co-\ref{asp:adjacent-edges}, Co-\ref{asp:adjacent-vertices-edge} and Co-\ref{asp:neighbor-ve-distinguishing} hold true;
\item an \emph{adjacent-vertex distinguishing closed proper total set-coloring} if the constraints Co-\ref{asp:total}, Co-\ref{asp:adjacent-vertices}, Co-\ref{asp:adjacent-edges}, Co-\ref{asp:adjacent-vertices-edge} and Co-\ref{asp:closed-neighbor-ve-distinguishing} hold true;

------ \emph{distance}

\item a \emph{$\beta$-distance vertex distinguishing proper v-set-coloring} if the constraints Co-\ref{asp:vertex-only}, Co-\ref{asp:adjacent-vertices} and Co-\ref{asp:distance-v-distinguishing} hold true;
\item a \emph{$\beta$-distance vertex distinguishing closed proper v-set-coloring} if the constraints Co-\ref{asp:vertex-only}, Co-\ref{asp:adjacent-vertices} and Co-\ref{asp:distance-closed-v-distinguishing};

\item \label{Distance-distinguishing1} a \emph{$\beta$-distance vertex distinguishing proper e-set-coloring} if the constraints Co-\ref{asp:edge-only}, Co-\ref{asp:adjacent-edges} and Co-\ref{asp:distance-e-distinguishing} hold true;
\item a \emph{$\beta$-distance vertex distinguishing closed proper e-set-coloring} if the constraints Co-\ref{asp:edge-only}, Co-\ref{asp:adjacent-edges} and Co-\ref{asp:distance-closed-e-distinguishing};
\item a \emph{$\beta$-distance vertex distinguishing proper total set-coloring} if the constraints Co-\ref{asp:total}, Co-\ref{asp:adjacent-vertices}, Co-\ref{asp:adjacent-edges}, Co-\ref{asp:adjacent-vertices-edge} and Co-\ref{asp:distance-ve-distinguishing} hold true;

\item a \emph{$\beta$-distance vertex distinguishing closed proper total set-coloring} if the constraints Co-\ref{asp:total}, Co-\ref{asp:adjacent-vertices}, Co-\ref{asp:adjacent-edges}, Co-\ref{asp:adjacent-vertices-edge} and Co-\ref{asp:distance-closed-ve-distinguishing};

------ \emph{$(4)$-adjacent}

\item a \emph{$(4)$-adjacent-vertex distinguishing closed proper total set-coloring} if the constraints Co-\ref{asp:total}, Co-\ref{asp:adjacent-vertices}, Co-\ref{asp:adjacent-edges}, Co-\ref{asp:adjacent-vertices-edge} and Co-\ref{asp:local-totally-ve-distinguishing} hold true;

------ \emph{universal proper}

\item a \emph{vertex distinguishing proper v-set-coloring} if the constraints Co-\ref{asp:vertex-only}, Co-\ref{asp:adjacent-vertices} and Co-\ref{asp:universal-v-distinguishing} hold true;
\item a \emph{vertex distinguishing closed proper v-set-coloring} if the constraints Co-\ref{asp:vertex-only}, Co-\ref{asp:adjacent-vertices} and Co-\ref{asp:closed-universal-v-distinguishing} hold true;
\item a \emph{vertex distinguishing proper e-set-coloring} if the constraints Co-\ref{asp:edge-only}, Co-\ref{asp:adjacent-edges} and Co-\ref{asp:universal-e-distinguishing} hold true;
\item a \emph{vertex distinguishing closed proper e-set-coloring} if the constraints Co-\ref{asp:edge-only}, Co-\ref{asp:adjacent-edges} and Co-\ref{asp:closed-universal-e-distinguishing} hold true;

\item a \emph{vertex distinguishing proper total set-coloring} if the constraints Co-\ref{asp:total}, Co-\ref{asp:adjacent-vertices}, Co-\ref{asp:adjacent-edges}, Co-\ref{asp:adjacent-vertices-edge} and Co-\ref{asp:universal-ve-distinguishing} hold true;
\item a \emph{vertex distinguishing closed proper total set-coloring} if the constraints Co-\ref{asp:total}, Co-\ref{asp:adjacent-vertices}, Co-\ref{asp:adjacent-edges}, Co-\ref{asp:adjacent-vertices-edge} and Co-\ref{asp:closed-universal-ve-distinguishing} hold true;

------ \emph{equitable}

\item an \emph{equitably adjacent-vertex distinguishing proper e-set-coloring} if the constraints Co-\ref{asp:edge-only}, Co-\ref{asp:adjacent-edges}, Co-\ref{asp:neighbor-e-distinguishing} and Co-\ref{asp:equitable-constraint} hold true;
\item an \emph{equitably adjacent-vertex distinguishing closed proper e-set-coloring} if the constraints Co-\ref{asp:edge-only}, Co-\ref{asp:adjacent-edges}, Co-\ref{asp:closed-neighbor-e-distinguishing} and Co-\ref{asp:equitable-constraint} hold true;
\item an \emph{equitably adjacent-vertex distinguishing proper total set-coloring} if the constraints Co-\ref{asp:total}, Co-\ref{asp:adjacent-vertices}, Co-\ref{asp:adjacent-edges}, Co-\ref{asp:adjacent-vertices-edge}, Co-\ref{asp:neighbor-ve-distinguishing} and Co-\ref{asp:equitable-constraint} hold true;

\item an \emph{equitably vertex distinguishing proper e-set-coloring} if the constraints Co-\ref{asp:edge-only}, Co-\ref{asp:adjacent-edges}, Co-\ref{asp:universal-e-distinguishing} and Co-\ref{asp:equitable-constraint} hold true;
\item an \emph{equitably vertex distinguishing proper total set-coloring} if the constraints Co-\ref{asp:total}, Co-\ref{asp:adjacent-vertices}, Co-\ref{asp:adjacent-edges}, Co-\ref{asp:adjacent-vertices-edge}, Co-\ref{asp:closed-universal-ve-distinguishing} and Co-\ref{asp:equitable-constraint} hold true;

------ \emph{acyclic}

\item an \emph{acyclic adjacent-vertex distinguishing proper e-set-coloring} if the constraints Co-\ref{asp:edge-only}, Co-\ref{asp:adjacent-edges}, Co-\ref{asp:neighbor-e-distinguishing} and Co-\ref{asp:acyclic-constraint-3} hold true;
\item \label{color:acyclic-adjacent-vertex-distinguishing-total} an \emph{acyclic adjacent-vertex distinguishing proper total set-coloring} if the constraints Co-\ref{asp:total}, Co-\ref{asp:adjacent-vertices}, Co-\ref{asp:adjacent-edges}, Co-\ref{asp:adjacent-vertices-edge}, Co-\ref{asp:neighbor-ve-distinguishing} and Co-\ref{asp:acyclic-constraint-3} hold true;

\item an \emph{acyclic vertex distinguishing closed proper total set-coloring} if the constraints Co-\ref{asp:total}, Co-\ref{asp:adjacent-vertices}, Co-\ref{asp:adjacent-edges}, Co-\ref{asp:adjacent-vertices-edge}, Co-\ref{asp:closed-neighbor-ve-distinguishing} and Co-\ref{asp:acyclic-constraint-3} hold true.\qqed
\end{asparaenum}
\end{defn}

\begin{problem}\label{question:444444}
\textbf{Find} the smallest number $m$ of elements of the set-set $S_{et}=\{e_{1},e_{2},\dots ,e_{m}\}$, such that the Setc-$k$ set-coloring for $k\in [1,40]$ in Definition \ref{defn:more-distinguishing-set-colorings} holds true. However, \textbf{determining} the smallest number $m$ for all set-sets $S_{et}$ will be related with many \emph{graph coloring parameters} introduced in \cite{Bondy-2008}, \cite{Gallian2022} and \cite{Yao-Wang-2106-15254v1}.

In graph theory, there are chromatic numbers and chromatic indexes as follows: the \emph{total chromatic number} $\chi\,''(G)$, the \emph{adjacent-vertex distinguishing chromatic index} $\chi\,'_{as}(G)$, the \emph{adjacent-vertex distinguishing closed chromatic index} $\chi\,'_{cas}(G)$, the \emph{adjacent-vertex distinguishing total chromatic number} $\chi\,''_{as}(G)$, the \emph{adjacent-vertex distinguishing closed total chromatic number} $\chi\,''_{cas}(G)$, the $(4)$-\emph{adjacent-vertex distinguishing closed chromatic number} $\chi\,''_{(4)cas}(G)$, the \emph{strongly chromatic index} $\chi\,'_s(G)$, the \emph{vertex distinguishing total chromatic number} $\chi\,''_s(G)$, the \emph{adjacent-vertex distinguishing chromatic index} $\chi\,'_{eas}(G)$, the \emph{equitably adjacent-vertex distinguishing total chromatic number} $\chi\,''_{eas}(G)$, the \emph{equitably vertex distinguishing total chromatic number} $\chi\,''_{es}(G)$, the \emph{acyclic adjacent-vertex distinguishing chromatic index} $\chi\,'_{aas}(G)$, the \emph{acyclic adjacent-vertex distinguishing total chromatic number} $\chi\,''_{aas}(G)$, and the \emph{acyclic adjacent-vertex distinguishing closed total chromatic number} $\chi\,''_{caas}(G)$.

Since determining the chromatic number is NP-hard (Ref. \cite{Garey-M-R-Johnson-1979} and \cite{Garey-Johnson-Stockmeyer-1974}), we can get many sharp-P-complete and sharp-P-hard problems from graph chromatic numbers and graph chromatic indexes.

Various vertex distinguishing colorings of graph colorings of graph theory are not difficult to induce some set-colorings, or set-labelings defined here. However, it seems to be not easy to induce graph colorings and graph labelings by means of given set-colorings, or set-labelings, although set-colorings, or set-labelings are useful in designing complex number-based strings for asymmetric topology cryptography.\qqed
\end{problem}

The set-coloring has been introduced in \cite{Yao-Yang-Yao-2020-distinguishing}, and the hypergraph-coloring has been investigated in \cite{Bing-et-al-arXiv-asymmetric-4520331}.

\begin{defn} \label{defn:111111}
$^*$ By Definition \ref{defn:more-distinguishing-set-colorings}, Definition \ref{defn:hypergraph-basic-definition} and Remark \ref{rem:hypergraph-terminology-notations}, we have the following distinguishing-type set-colorings:

(i) If the color set $S_{et}=\mathcal{E}$ is a hyperedge set $\mathcal{E} \in \mathcal{E}(\Lambda^2)$ of a hypergraph $\mathcal{H}_{yper}=(\Lambda,\mathcal{E})$, immediately, we can obtain various \emph{distinguishing-type hypergraph-colorings} obtained by ``hypergraph-coloring'' to replace ``set-coloring'' defined in Definition \ref{defn:more-distinguishing-set-colorings}.

(ii) If the color set $S_{et}$ is an every-zero hypergraph group $\big \{G(\mathcal{E});[+][-]\big \}$ generated by a hyperedge set $\mathcal{E}\in \mathcal{E}(\Lambda^2)$ of a hypergraph $\mathcal{H}_{yper}=(\Lambda,\mathcal{E})$, then we can get various \emph{distinguishing-type graphic-group colorings} by ``graphic-group coloring'' to replace ``set-coloring'' defined in Definition \ref{defn:more-distinguishing-set-colorings}.

(iii) There are \emph{distinguishing-type sequence coloring}, \emph{distinguishing-type string-coloring}, \emph{etc}.\qqed
\end{defn}

\section{Set-Colorings And Hypergraphs}

\subsection{Concepts of hypergraphs}

\begin{defn}\label{defn:hypergraph-basic-definition}
\cite{Jianfang-Wang-Hypergraphs-2008} A \emph{hyperedge set} $\mathcal{E}$ is a family of distinct non-empty subsets $e_1,e_2$, $\dots $, $e_n$ of the power set $\Lambda^2$ based on a finite set $\Lambda=\{x_1,x_2,\dots ,x_n\}$, and satisfies:

(i) Each element $e\in \mathcal{E}$, called \emph{hyperedge}, holds $e\neq \emptyset $ true;

(ii) $\Lambda=\bigcup _{e\in \mathcal{E}}e$, where each element of $\Lambda$ is called a \emph{vertex}.

The symbol $\mathcal{H}_{yper}=(\Lambda,\mathcal{E})$ stands for a \emph{hypergraph} with its own \emph{hyperedge set} $\mathcal{E}$ defined on the \emph{vertex set} $\Lambda$, and the cardinality $|\mathcal{E}|$ is the \emph{size}, and the cardinality $|\Lambda|$ is the \emph{order} of the hypergraph $\mathcal{H}_{yper}$. \qqed
\end{defn}

\begin{example}\label{exa:8888888888}
Fig.\ref{fig:1-example-hypergraph} shows us a hypergraph $\mathcal{H}_{yper}=(\Lambda,\mathcal{E})$ with its own vertex set $\Lambda=[1,15]$ and its own hyperedge set $\mathcal{E}=\{e_1,e_2,e_3,e_4\}$.\qqed
\end{example}

\begin{figure}[h]
\centering
\includegraphics[width=15cm]{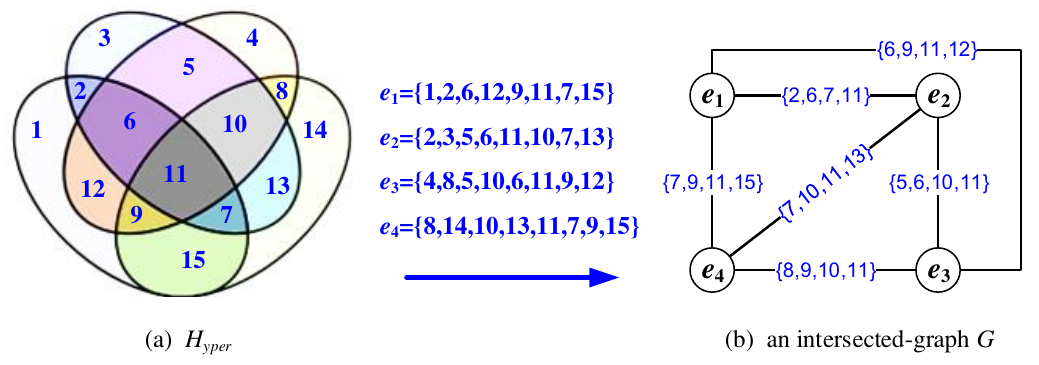}\\
\caption{\label{fig:1-example-hypergraph}{\small An example from an $8$-uniform hypergraph $H_{yper}$ to a vertex-intersected graph $G$ admitting a set-coloring subject to the constraint set $R_{set}(1)$, where (a) Venn's four-set diagram using four ellipses.}}
\end{figure}

\begin{rem}\label{rem:hypergraph-terminology-notations}
About Definition \ref{defn:hypergraph-basic-definition} and a finite set $\Lambda=\{x_1,x_2,\dots ,x_n\}$, there are the following terminology and notation of hypergraphs:
\begin{asparaenum}[$\bullet$~]
\item The set $\mathcal{E}\big (\Lambda^2\big )=\{\mathcal{E}_i:~i\in [1,n(\Lambda)]\}$ is called \emph{hypergraph set} based on the power set $\Lambda^2$, where each $\mathcal{E}_i$ is a hyperedge set, and $n(\Lambda)$ is the number of hyperedge sets based on the power set $\Lambda^2$, and the cardinality $|\Lambda^2|=2^n-1$. So, we have $n(\Lambda)$ hypergraphs $\mathcal{H}_{yper}=(\Lambda,\mathcal{E})$ for each hyperedge set $\mathcal{E}\in \mathcal{E}\big (\Lambda^2\big )$.
\item A hyperedge set $\mathcal{E}$ is \emph{proper} if any subset $e\in \mathcal{E}$ is not a subset of each $e\,'\in \mathcal{E}\setminus e$.
\item \cite{Jianfang-Wang-Hypergraphs-2008} An \emph{ear} $e\in \mathcal{E}$ holds:

\qquad (i) $e\cap e\,'=\emptyset$ for any hyperedge $e\,'\in \mathcal{E}\setminus \{e\}$; or

\qquad (ii) there exists another hyperedge $e^*\in \mathcal{E}$, such that each vertex of $e\setminus e^*$ is not in any element of $\mathcal{E}\setminus \{e\}$.

\item An \emph{isolated vertex} $x\in \Lambda$ belongs to a unique hyperedge $e_i\in \mathcal{E}$, such that $x\not\in e_j\in \mathcal{E}$ if $i\neq j$.
\item If each \emph{hyperedge} $e\in \mathcal{E}$ has its cardinality $|e|=r$, then we call $\mathcal{H}_{yper}=(\Lambda,\mathcal{E})$ \emph{$r$-uniform hypergraph}.
\item \cite{Jianfang-Wang-Hypergraphs-2008} A \emph{partial hypergraph} of a hypergraph $\mathcal{H}_{yper}=(\Lambda,\mathcal{E})$ has its own hyperedge set $\mathcal{E}^*\subset \mathcal{E}$.
\item \cite{Jianfang-Wang-Hypergraphs-2008} The \emph{Graham reduction} of a hyperedge set $\mathcal{E}$ is obtained by doing repeatedly

\qquad GR-1: delete a vertex $x$ if $x$ is an isolated vertex;

\qquad GR-2: delete $e_i$ if $e_i\subseteq e_j$ for $i\neq j$.
\item A hypergraph $\mathcal{H}_{yper}=(\Lambda,\mathcal{E})$ is called \emph{reduced hypergraph}, or \emph{simple hypergraph} if its hyperedge set $\mathcal{E}$ is the result of Graham reduction.
\item Suppose that $x_{i_1},x_{i_2},\dots ,x_{i_n}$ are a permutation of vertices of a hypergraph $\mathcal{H}_{yper}=(\Lambda,\mathcal{E})$, where $x_{i_j}\in e_{j-1}\cap e_j$, $x_{i_{n-1}}\in e_{n-1}\cap e_n$ and $x_{i_n}\in e_{n}\cap e_1$, then a \emph{hyperpath} is defined as $\mathcal{P}(e_1,e_n)=e_1e_2\cdots e_n$, and a \emph{hypercycle} is defined as $\mathcal{C}=e_1e_2\cdots e_ne_1$, and moreover $\mathcal{C}=e_1e_2\cdots e_ne_1$ is a \emph{Hamilton hypercycle} if $n=|\Lambda|$.
\item If each pair of subsets $e$ and $e\,'$ of the hyperedge set $\mathcal{E}$ corresponds a hyperpath $\mathcal{P}(e,e\,')$, then the hypergraph $\mathcal{H}_{yper}=(\Lambda,\mathcal{E})$ is \emph{connected}.
\item The \emph{hyperedge norm} $||\mathcal{E}||$ of a proper hyperedge set $\mathcal{E}\in \mathcal{E}\big (\Lambda^2\big )$ is determined as $||\mathcal{E}||=\sum_{e\in \mathcal{E}} |e|$, clearly,
\begin{equation}\label{eqa:555555}
n=\min \big \{||\mathcal{E}||: \mathcal{E}\in \mathcal{E}\big (\Lambda^2\big )\big \},\quad \frac{n(n-1)\cdots (n-k+1)}{k!}\leq \max \big \{||\mathcal{E}||: \mathcal{E}\in \mathcal{E}\big (\Lambda^2\big )\big \}
\end{equation}
\item For each vertex $x_j\in \Lambda$ with $j\in [1,n]$, the number of $x_j$ appeared in the subsets $e_{i,1}$, $e_{i,2}$, $\dots $, $e_{i,b_j}$ of a hyperedge set $\mathcal{E}_i$ is denoted as $b_j=\textrm{deg}_{\mathcal{E}_i}(x_i)$, called \emph{hypervertex degree}.
\item The hyperedge degree of a hyperedge $e\in \mathcal{E}$ is defined in Definition \ref{defn:more-terminology-group}.\qqed
\end{asparaenum}
\end{rem}

\begin{rem}\label{rem:important-terminology-hypergraphs}
About Definition \ref{defn:hypergraph-basic-definition}, we have: Since each $x_i\in \Lambda$ is a number in many books of hypergraphs in general. We can consider other cases: each $x_i\in \Lambda$ is a graph, or a matrix, or a string, or a vector, or a set, or a hypergraph, or any thing in the world.

In real application, a hypergraph $\mathcal{H}_{yper}=(\Lambda,\mathcal{E})$ is a network, and each subset $e_i\in \mathcal{E}$ can be seen as a community, or a local network \emph{etc}.\qqed
\end{rem}

\begin{problem}\label{question:444444}
We propose the following questions:
\begin{asparaenum}[\textbf{Hyper}-1]
\item For each integer $m$ subject to $n<m<n(n-1)$, \textbf{is} there a proper hyperedge set $\mathcal{E}^*\in \mathcal{E}(\Lambda^2)$ such that the hyperedge norm $||\mathcal{E}^*||=m$? \textbf{Find} connections between the elements of the hypergraph set $\mathcal{E}(\Lambda^2)$.
\item \textbf{What} connections are there in $\mathcal{H}_{yper}=(\Lambda,\mathcal{E})$ and $\mathcal{H}\,'_{yper}=(\Lambda,\mathcal{E}\,')$, as $\mathcal{E}\neq \mathcal{E}\,'$ based on the same set $\Lambda$.
\item \textbf{Is} there a connection between two hypergraphs $\mathcal{H}_{yper}=(\Lambda,\mathcal{E})$ and $\mathcal{H}^*_{yper}=(\Lambda^*,\mathcal{E}^*)$, as $\Lambda\neq \Lambda^*$, $\Lambda\not\subset \Lambda^*$ and $\Lambda^*\not\subset \Lambda$?
\item About the hypergraph set $\mathcal{E}\big (\Lambda^2\big )=\{\mathcal{E}_i:~i\in [1,n(\Lambda)]\}$, however, no report is for computing the number $n(\Lambda)$ of all hyperedge sets based on a finite set $\Lambda$. \textbf{Determine} the number $n(\Lambda)$ of all hyperedge sets of the finite set $\Lambda$.
\item Since $\overline{\mathcal{E}}=\{\overline{e}_i:i\in [1,b]\}=\{\Lambda \setminus \{e_i\}:i\in [1,b]\}$, \textbf{find} each matching $(\mathcal{E},\overline{\mathcal{E}})$ of hyperedge sets $\mathcal{E}$ (as a \emph{private-key}) and $\overline{\mathcal{E}}$ (as a \emph{public-key}) in $\mathcal{E}\big (\Lambda^2\big )$.
\end{asparaenum}
\end{problem}

\begin{problem}\label{question:444444}
\textbf{Extreme problem.} \textbf{Find} a finite set $\Lambda_{*}$ for which a graph $G$ admits a total set-coloring $F:V(G)\cup E(G)\rightarrow \mathcal{E}^{*}\in \mathcal{E}(\Lambda^2_{*})$ with $\Lambda_{*}=\bigcup_{e\in \mathcal{E}^{*}}e$, such that the graph $G$ admits a total set-coloring $f:V(G)\cup E(G)\rightarrow \mathcal{E}\in \mathcal{E}(\Lambda^2)$ with $\Lambda=\bigcup_{s\in \mathcal{E}}s$ holds $|\Lambda_{*}|\leq |\Lambda|$ and one or more of the following constraints:
\begin{asparaenum}[(i) ]
\item Two hyperedge sets $\mathcal{E}^{*}\in \mathcal{E}(\Lambda^2_{*})$ and $\mathcal{E}\in \mathcal{E}(\Lambda^2)$ are proper.
\item Each subset $e\in \mathcal{E}^{*}\in \mathcal{E}(\Lambda^2_{*})$ corresponds another subset $e\,'\in \mathcal{E}^{*}\in \mathcal{E}(\Lambda^2_{*})$ holding $e\cap e\,'\neq \emptyset$.
\item $F(uv)\supseteq F(u)\cap F(v)\neq \emptyset$ for each edge $uv\in E(G)$.
\item Each subset $s\in \mathcal{E}\in \mathcal{E}(\Lambda^2)$ corresponds another subset $s\,'\in \mathcal{E}\in \mathcal{E}(\Lambda^2)$ holding $s\cap s\,'\neq \emptyset$.
\item $f(xy)\supseteq f(x)\cap f(y)\neq \emptyset$ for each edge $xy\in E(G)$.
\end{asparaenum}
\end{problem}

\begin{thm}\label{thm:666666}
$^*$ If the 4-color conjecture of maximal planar graphs holds true, then each maximal planar graph $G$ admits a \emph{proper total set-coloring}
$$F:V(G)\cup E(G)\rightarrow \mathcal{E}=\big \{\{1,2\},\{1,3\},\{2,3\},\{1,2,3\}\big \}
$$ or, a \emph{proper total string-coloring} $F:V(G)\cup E(G)\rightarrow \big \{11,22,12,21\big \}$, such that $F(uv)=F(u)\cap F(v)$ for each edge $uv\in E(G)$.
\end{thm}

\begin{prop}\label{prop:99999}
$^*$ Let $\Lambda=\{x_1,x_2,\dots ,x_n\}$ be a finite set. We can observe the following results:

(i) Each subset $e\in \Lambda^2$ is in some hyperedge set $\mathcal{E}_i\in \mathcal{E}\big (\Lambda^2\big )$. Otherwise, we have a new hyperedge set $\mathcal{E}^*_i=\mathcal{E}_i\cup \{e\}$, such that $\mathcal{E}^*_i\not\in \mathcal{E}\big (\Lambda^2\big )$, a contradiction with the definition of $\mathcal{E}\big (\Lambda^2\big )$.

(ii) Any pair of two hyperedge sets $\mathcal{E}_i$ and $\mathcal{E}_j$ of $\mathcal{E}\big (\Lambda^2\big )$ holds the \emph{union operation} $\mathcal{E}_i\cup \mathcal{E}_j\in \mathcal{E}\big (\Lambda^2\big )$ true, where
\begin{equation}\label{eqa:union-operation-hypergraph}
\mathcal{E}_i\cup \mathcal{E}_j=\big [\mathcal{E}_i \setminus (\mathcal{E}_i\cap \mathcal{E}_j)\big ]\bigcup \big [\mathcal{E}_j \setminus (\mathcal{E}_i\cap \mathcal{E}_j)\big ]\bigcup (\mathcal{E}_i\cap \mathcal{E}_j)
\end{equation}

(iii) There are two particular hyperedge sets $\mathcal{E}_0=\Lambda=\{y_1,y_2, \dots , y_m\}$ and $\mathcal{E}\,'=\big \{\{y_1\}$, $\{y_2\}$, $\dots $, $\{y_m\}\big \}$. Any hyperedge set $\mathcal{E}_i\in \mathcal{E}\big (\Lambda^2\big )$ holds $2=|\mathcal{E}_0|+1\leq |\mathcal{E}_i|\leq 2^m-2-m$ true.
\end{prop}

\begin{thm}\label{thm:key-matching-equitable-uniform}
$^*$ The hypergraph set $\mathcal{E}\big (\Lambda^2\big )$ holds $|\mathcal{E}\big (\Lambda^2\big )\setminus \mathcal{E}_0|=$even with $\mathcal{E}_0=\Lambda$, such that the hypergraph set $\mathcal{E}\big (\Lambda^2\big ))\setminus \mathcal{E}_0=X\cup X_{comp}$ with $X\cap X_{comp}=\emptyset$, each hyperedge set $\mathcal{E}\in X$ corresponds to a hyperedge set $\overline{\mathcal{E}}\in X_{comp}$, such that each set $\overline{e}_i\in \overline{\mathcal{E}}$ equals to $\overline{e}_i=\Lambda\setminus e_i$ with each subset $e_i\in \mathcal{E}$ for $i\in [1,m]$, where $\mathcal{E}=\{e_i\}^m_{i=1}$ and $\overline{\mathcal{E}}=\{\overline{e}_i\}^m_{i=1}$, and $|X|=| X_{comp}|$. Moreover, there are:

(i) If a hyperedge set $\mathcal{E}\in X$ is \emph{$k$-uniform}, namely, $|e_i|=|e_j|=k$ for any pair of sets $e_i$ and $e_j$ of the hyperedge set $\mathcal{E}$ (as a \emph{private-key}), then $\overline{\mathcal{E}}$ (as a \emph{public-key}) is $\overline{k}$-uniform too, where $\overline{k}=|\Lambda|-k$.

(ii) If a hyperedge set $\mathcal{E}\in X$ is \emph{equitable}, namely, $\big ||e_i|-|e_j|\big |\leq 1$ for any pair of sets $e_i$ and $e_j$ of the hyperedge set $\mathcal{E}$ (as a \emph{private-key}), then $\overline{\mathcal{E}}$ (as a \emph{public-key}) is equitable too.

(iii) If $\mathcal{E}\in X$ (as a \emph{private-key}) is inequality from each other, namely, $|e_i|\neq |e_j|$ as $i\neq j$, then $\overline{\mathcal{E}}$ (as a \emph{public-key}) is inequality from each other too.
\end{thm}

\begin{cor}\label{cor:99999}
$^*$ If a graph $G$ admits a total set-coloring $F:V(G)\cup E(G)\rightarrow \mathcal{E}_0\in \mathcal{E}\big (\Lambda^2\big )$, then the graph $G$ admits total set-colorings $F_i:V(G)\cup E(G)\rightarrow \mathcal{E}_i\in\{G(\mathcal{E}_0);[+][-]\}$ for $i\in [1,m]$, where the hypergraph group coloring is defined in Definition \ref{defn:general-defi-hypergraph-groups}.
\end{cor}

\begin{problem}\label{qeu:444444}
Theorem \ref{thm:key-matching-equitable-uniform} enables us to build up a key-matching pair of sets $X$ (as a public-key set) and $X_{comp}$ (as a private-key set). However, it is important \textbf{to know} the cardinality $|X|$ and the topological structures of hyperedge sets of two sets $X$ and $X_{comp}$ for real applications.\qqed
\end{problem}

\begin{defn} \label{defn:set-hyperedge-sets}
\cite{Yao-Ma-arXiv-2201-13354v1} Let $H^{yper}_{color}(G)$ be the set of hyperedge sets of all set-colorings of a connected graph $G$, such that each hyperedge set $\mathcal{E}\in H^{yper}_{color}(G)$ defines a set-coloring $F:V(G)\rightarrow \mathcal{E}$, where $\bigcup_{e\in \mathcal{E}}e=\Lambda$, and the connected graph $G$ is a vertex-intersected graph of a hypergraph $\mathcal{H}_{yper}=(\Lambda,\mathcal{E})$. For another set-coloring
$$F\,':V(G)\rightarrow \mathcal{E}\,'\in H^{yper}_{color}(G)
$$ with $\bigcup_{e\in \mathcal{E}\,'}e=\Lambda\,'$, we defined the third set-coloring $F^*:V(G)\rightarrow \mathcal{E}^*$, such that $F^*(x)=F(x)\cup F\,'(x)$ for each vertex $x\in V(G)$ and $F^*(uv)=F(uv)\cup F\,'(uv)$ for each edge $uv\in E(G)$. Clearly, $\mathcal{E}^*=\mathcal{E}\cup \mathcal{E}\,'$, and

(i) $\Lambda^*=\Lambda=\Lambda\,'$; or

(ii) $\Lambda^*=\Lambda \cup \Lambda\,'$ if $\Lambda \neq \Lambda\,'$. \\
We call the set-coloring $F^*$ \emph{union set-coloring} of two set-colorings $F$ and $F\,'$, the hyperedge set $F^*$ \emph{hyperedge union set} of two hyperedge sets $\mathcal{E}$ and $\mathcal{E}\,'$. \qqed
\end{defn}

\begin{thm}\label{thm:666666}
$^*$ (i) Any connected graph $H$ admits a total set-coloring $f:V(H)\cup E(H)\rightarrow \mathcal{E}\in \mathcal{E}(\Lambda^2)$ with $\Lambda=\bigcup_{e\in \mathcal{E}}e$, such that $f(u)\cap f(v)\neq \emptyset$ for each edge $uv\in E(G)$.

(ii) A complete graph $K_n$ of $n$ vertices admits a total set-coloring $F:V(K_n)\cup E(K_n)\rightarrow \mathcal{E}\in \mathcal{E}([1,n]^2)$ with $[1,n]=\bigcup_{e\in \mathcal{E}}e$, such that $F(uv)\supseteq F(u)\cap F(v)\neq \emptyset$ for each edge $uv\in E(K_n)$, and the hyperedge set $\mathcal{E}$ is \emph{proper}.
\end{thm}

\begin{problem}\label{qeu:444444}
The set $H^{yper}_{color}(G)$ defined in Definition \ref{defn:set-hyperedge-sets} can be classified into two parts
\begin{equation}\label{eqa:555555}
H^{yper}_{color}(G)=I^{yper}_{color}(G)\bigcup N^{yper}_{color}(G)
\end{equation} such that each hyperedge set of $I^{yper}_{color}(G)$ (as a \emph{public-key set}) is not the union set of any two sets of the set $H^{yper}_{color}(G)$ (as a \emph{private-key set}), but each hyperedge set of $N^{yper}_{color}(G)$ is the hyperedge union set of some two hyperedge sets of $H^{yper}_{color}(G)$. \textbf{Determine} the hyperedge set $I^{yper}_{color}(G)$ for a connected graph $G$.
\end{problem}

Motivated from Definition \ref{defn:set-hyperedge-sets}, we have the following the union operation of hyperedge sets:

\begin{prop}\label{prop:hyperedge-sets-unions}
$^*$ \textbf{Union operation of hyperedge sets.} Let two finite sets $\Lambda_a$ (as a \emph{public-ky set}) and $\Lambda_b$ (as a \emph{private-ky set}) hold $\Lambda_a\cap \Lambda_b\neq \emptyset$, and $\mathcal{E}_a\in \mathcal{E}(\Lambda_a^2)$ and $\mathcal{E}_b\in \mathcal{E}(\Lambda_b^2)$. So, there is a new hyperedge set $\mathcal{E}^*=\mathcal{E}_a\cup \mathcal{E}_b$ (as a \emph{authentication set}) holding the new finite set $\Lambda=\bigcup_{e\in \mathcal{E}^*}e$ true, where $\Lambda=\Lambda_a\cup \Lambda_b$.
\end{prop}

\begin{rem}\label{rem:333333}
By Proposition \ref{prop:hyperedge-sets-unions}, there is some hyperedge set $\mathcal{E}_k\in \mathcal{E}\big (\Lambda^2\big )$ holding the hyperedge set $\mathcal{E}_k\setminus \mathcal{E}^*_k=\mathcal{E}_k\,''\in \mathcal{E}(\Lambda_k^2)$ with $k=a,b$, such that the hyperedge set $\mathcal{E}_k= \mathcal{E}^*_k\cup \mathcal{E}_k\,''$, however, not necessarily the set $ \mathcal{E}^*_k\in \mathcal{E}(\Lambda_k^2)$, even the set $\mathcal{E}^*_k$ is not a hyperedge set based on two finite sets $\Lambda_a$ and $\Lambda_b$.

(i) If the finite set $\Lambda_a$ is a proper subset of the finite set $\Lambda_b$, namely $\Lambda_a\subset \Lambda_b$, then the result of Proposition \ref{prop:hyperedge-sets-unions} is still valid.

(ii) If $\Lambda_a\cap \Lambda_b=\emptyset$ in Proposition \ref{prop:hyperedge-sets-unions}, then the \emph{union hyperedge set} $\mathcal{E}= \mathcal{E}\,'\cup \mathcal{E}\,''$ with $\mathcal{E}\,'\in \mathcal{E}(\Lambda_a^2)$ and $\mathcal{E}\,''\in \mathcal{E}(\Lambda_b^2)$ forms a \emph{hyperedge dis-connected hypergraph}.\qqed
\end{rem}

\subsection{Hypergraph homomorphism}

In \cite{Yao-Su-Ma-Wang-Yang-arXiv-2202-03993v1}, the authors have introduced (colored) graph homomorphism, $W$-constraint graph homomorphism, graph-operation graph homomorphism. We will study \emph{hypergraph homomorphisms} in this subsection.

\begin{prop}\label{prop:99999}
$^*$ For a vertex-intersected graph $H$ of a hypergraph $\mathcal{H}_{yper}=(\Lambda,\mathcal{E})$ and another vertex-intersected graph $G$ of the hypergraph $\mathcal{H}^*_{yper}=(\Lambda,\mathcal{E}^*)$, if there is a graph homomorphism $H\rightarrow G$, then we have a \emph{hypergraph homomorphism}
$$
\mathcal{H}_{yper}=(\Lambda,\mathcal{E})\rightarrow \mathcal{H}^*_{yper}=(\Lambda,\mathcal{E}^*)
$$ so we have projected $\mathcal{H}_{yper}=(\Lambda,\mathcal{E})$ onto $\mathcal{H}^*_{yper}=(\Lambda,\mathcal{E}^*)$.
\end{prop}

\begin{defn} \label{defn:hypergraph-operation-homomorphisms}
$^*$ \textbf{Hypergraph homomorphism.} For two hyperedge sets $\mathcal{E}_i$ and $\mathcal{E}_j$ of the hypergraph set $\mathcal{E}\big (\Lambda^2\big )$, we have:

(i) If the exists a coloring $\varphi:\mathcal{E}_i\rightarrow \mathcal{E}_j$ such that each hyperedge edge $e_{i,s}\in \mathcal{E}_i$ corresponds to its own image $\varphi(e_{i,s})\in \mathcal{E}_j$, and moreover two hyperedges $e_{i,s}$ and $e_{i,t}$ hold a property $P$ in $\mathcal{E}_i$ if and only if two hyperedges $\varphi(e_{i,s})$ and $\varphi(e_{i,t})$ of the hyperedge set $\mathcal{E}_j$ hold this property $P$ too, we say that $\mathcal{E}_i$ is \emph{hypergraph homomorphic} to $\mathcal{E}_j$ based on the property $P$, and this \emph{$P$-property hypergraph homomorphism} is denoted as $\mathcal{E}_i\rightarrow _P\mathcal{E}_j$.

(ii) For an operation ``$[\bullet]$'' on the hypergraph set $\mathcal{E}\big (\Lambda^2\big )$, a \emph{hypergraph $[\bullet]$-operation homomorphism} $(\mathcal{E}_i, \mathcal{E}_j)\rightarrow \mathcal{E}_{i[\bullet] j}$ if $\mathcal{E}_i[\bullet] \mathcal{E}_j=\mathcal{E}_{i[\bullet] j}\in \mathcal{E}\big (\Lambda^2\big )$. \qqed
\end{defn}

\begin{problem}\label{qeu:444444}
As the operation $[\bullet]=\bigcup$ in Definition \ref{defn:hypergraph-operation-homomorphisms}, we have a hypergraph $\bigcup$-operation homomorphism $(\mathcal{E}_i, \mathcal{E}_j)\rightarrow \mathcal{E}_{i\cup j}$ since $\mathcal{E}_{i\cup j}=\mathcal{E}_i\cup \mathcal{E}_j$ defined in Formula (\ref{eqa:union-operation-hypergraph}). However, it is useful to find more hypergraph $[\bullet]$-operation homomorphisms.
\end{problem}

\begin{defn} \label{defn:111111}
$^*$ By Definition \ref{defn:hypergraph-basic-definition} and Definition \ref{defn:hypergraph-operation-homomorphisms}, we have:

(i) A graph $G$ admits a \emph{total hypergraph-set coloring} $\beta:V(G)\cup E(G)\rightarrow \mathcal{E}\big (\Lambda^2\big )$, such that $\beta(uv)=\beta(u)\cup \beta(v)$ for each edge $uv\in E(G)$ holds the hypergraph $\bigcup$-operation homomorphism
\begin{equation}\label{eqa:555555}
(\beta(u), \beta(v))=(\mathcal{E}_i, \mathcal{E}_j)\rightarrow \mathcal{E}_{i\cup j}=\beta(uv)
\end{equation} where the hypergraph set $\mathcal{E}\big (\Lambda^2\big )$ is defined in Definition \ref{defn:hypergraph-basic-definition}.

(ii) A graph $G$ admits a \emph{hypergraph-set vertex coloring} $\theta:V(G)\rightarrow \mathcal{E}\big (\Lambda^2\big )$, such that each edge $uv\in E(G)$ holds a hypergraph $P$-property homomorphism $\theta(u)=\mathcal{E}_i\rightarrow _P\mathcal{E}_j=\theta(v)$ true.\qqed
\end{defn}

Motivated from Definition \ref{defn:definition-graph-homomorphism} and Definition \ref{defn:W-constraint-coloring-graph-homomorphism}, we have:

\begin{defn} \label{defn:111111}
$^*$ Let $G[\mathcal{E}]$ be a set-colored graph admitting a set-coloring $F:V(G)\rightarrow \mathcal{E}$ on a graph $G$ and a hypergraph $\mathcal{H}_{yper}=(\Lambda,\mathcal{E})$, and let $H[\mathcal{E}^*]$ be a set-colored graph admitting a set-coloring $F^*:V(H)\rightarrow \mathcal{E}^*$ on another graph $H$ and another hypergraph $\mathcal{H}_{yper}=(\Lambda_*,\mathcal{E}^*)$.

Since $F(uv)=F(u)[\bullet]F(v)$ if and only if $F^*(\varphi(u)\varphi(v))=F^*(\varphi(u))[\bullet]F^*(\varphi(v))$ under the coloring $\varphi: \Lambda \rightarrow \Lambda_*$ and an operation ``$[\bullet]$'', we get a \emph{set-colored graph homomorphism} $G[\mathcal{E}]\rightarrow H[\mathcal{E}^*]$, and a \emph{hypergraph homomorphism} $(\Lambda,\mathcal{E})\rightarrow (\Lambda_*,\mathcal{E}^*)$.\qqed
\end{defn}

\begin{example}\label{exa:8888888888}
According to Fig.\ref{fig:v-split-homomorphism} and Fig.\ref{fig:v-split-homomorphism-1}, we have four set-colored graph homomorphisms $S_k[\mathcal{E}_k]\rightarrow S_{k-1}[\mathcal{E}_{k-1}]$ for $k\in [1,4]$, where the set-colored graph $S_0=G$ shown in Fig.\ref{fig:1-example-hypergraph}(b) is a vertex-intersected graph of an $8$-uniform hypergraph shown in Fig.\ref{fig:1-example-hypergraph} (a). Conversely, each set-colored graph $S_i$ is a result of doing the vertex-splitting operation to $S_{i-1}$ with $i\in [1,4]$. More or less, we have shown the \emph{hyperedge-splitting operation} and the \emph{hyperedge-coinciding operation} of hypergraphs.\qqed
\end{example}

\begin{figure}[h]
\centering
\includegraphics[width=16.4cm]{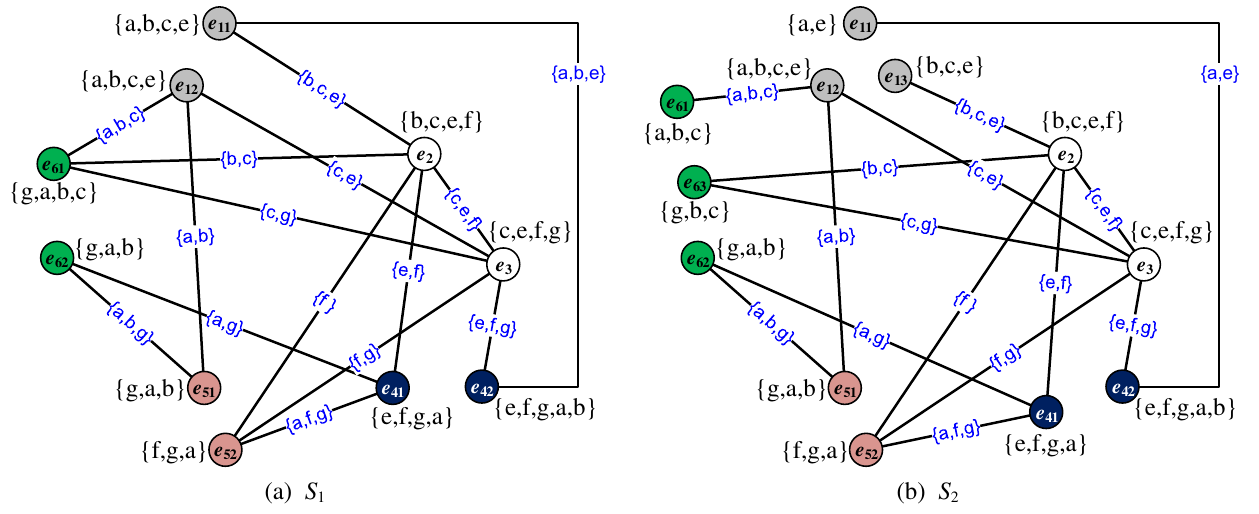}\\
\caption{\label{fig:v-split-homomorphism}{\small The first scheme for illustrating the hyperedge-splitting operation and the hyperedge-coinciding operation.}}
\end{figure}

\begin{defn} \label{defn:hyperedge-coinciding-splitting-homomorphism}
$^*$ \textbf{Hyperedge homomorphism.} Let $\mathcal{E}\big (\Lambda^2\big )$ be the hypergraph set defined on a finite set $\Lambda$. If there are two hyperedges $e_1,e_2\in \mathcal{E}^*\in \mathcal{E}\big (\Lambda^2\big )$ holding $|e_1\cup e_2|=|e_1|+|e_2|$, then we get a \emph{hyperedge homomorphism} $\mathcal{E}^*\rightarrow \mathcal{E}\in \mathcal{E}\big (\Lambda^2\big )$, where $e_1\cup e_2\in \mathcal{E}$, and $\mathcal{E}^*\setminus \{e_1,e_2\}=\mathcal{E}\setminus \{e_1\cup e_2\}$, this operation process is called \emph{hyperedge-coinciding operation} on hyperedge sets. Conversely, we split the hyperedge $e_1\cup e_2\in \mathcal{E}$ into two hyperedges $e_1$ and $e_2$ to obtain the hyperedge set $\mathcal{E}^*$, this operation process is called \emph{hyperedge-splitting operation} on hyperedge sets.\qqed
\end{defn}

\begin{rem}\label{rem:333333}
If $e_{\cap}=e_1\cap e_2\neq \emptyset $ in Definition \ref{defn:hyperedge-coinciding-splitting-homomorphism}, we hyperedge-split the hyperedge $e_1\cup e_2\in \mathcal{E}$ into two hyperedges $e_1=(e_1\cup e_2)\setminus \{e_2\}\cup e_{\cap}$ and $e_2=(e_1\cup e_2)\setminus \{e_1\}\cup e_{\cap}$. In other words, doing the hyperedge-splitting operation to a hypergraph set $\mathcal{E}$ can obtain two or more hypergraph sets $\mathcal{E}^*$ holding $\mathcal{E}^*\rightarrow \mathcal{E}$ true. For understanding the above hyperedge-splitting operation, refer to Fig.\ref{fig:v-split-homomorphism}, Fig.\ref{fig:v-split-homomorphism-1} and Fig.\ref{fig:v-split-homomorphism-2}, and we have the hyperedge homomorphisms:
$$
S_4\rightarrow _{hyperedge}S_3\rightarrow _{hyperedge}S_2\rightarrow _{hyperedge}S_1\rightarrow _{hyperedge}S
$$ by the hyperedge-splitting operation and the hyperedge-coinciding operation.\qqed
\end{rem}

\begin{figure}[h]
\centering
\includegraphics[width=16.4cm]{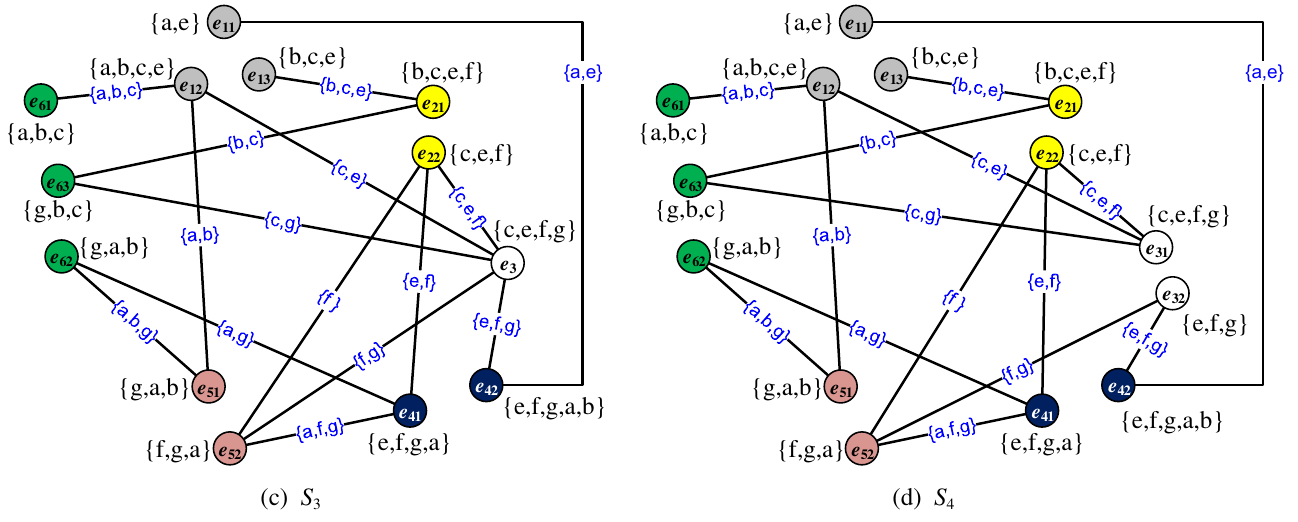}\\
\caption{\label{fig:v-split-homomorphism-1}{\small The second scheme for illustrating the hyperedge-splitting operation and the hyperedge-coinciding operation.}}
\end{figure}

\begin{thm}\label{thm:666666}
$^*$ Let $\mathcal{E}\big (\Lambda^2\big )$ be the hypergraph set defined on a finite set $\Lambda$. If each hyperedge set $\mathcal{E}\in \mathcal{E}\big (\Lambda^2\big )$ with some hyperedge $e$ holding $|e|\geq 2$ is hyperedge homomorphism to another hyperedge set $\mathcal{E}^*\in \mathcal{E}\big (\Lambda^2\big )$, then we have a hypergraph homomorphism $(\Lambda,\mathcal{E})\rightarrow (\Lambda,\mathcal{E}^*)$.
\end{thm}

\begin{figure}[h]
\centering
\includegraphics[width=11cm]{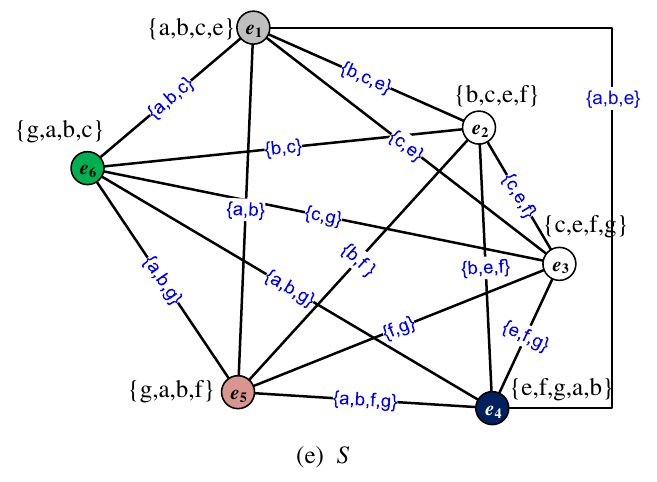}\\
\caption{\label{fig:v-split-homomorphism-2}{\small The third scheme for illustrating the hyperedge-splitting operation and the hyperedge-coinciding operation.}}
\end{figure}

\subsection{Strong hyperedge sets}

\begin{defn} \label{defn:strong-proper-hyperedge-sets}
$^*$ Let each set-set $\mathcal{E}_r(m,n_r)=\{e_{r,1},e_{r,2},\dots ,e_{r,n_r}\}$ with $r\in [1,A_m]$ be generated from a consecutive integer set $[1,m]$ such that each subset $e_{r,j}\in [1,m]^2$ for $j\in [1,n_r]$ and $r\in [1,A_m]$.

\begin{asparaenum}[\textbf{\textrm{Sthyset}}-1]
\item A \emph{strong hyperedge set} $\mathcal{E}_r(m,n_r)$ satisfies: Any pair of subsets $e_{r,i}$ and $e_{r,j}$ with $i\neq j$ holds

\qquad (1-i) $e_{r,i}\cap e_{r,j}\neq \emptyset$; and

\qquad (1-ii) $e_{r,i}\not \subset e_{r,j}$ and $e_{r,j}\not \subset e_{r,i}$.

\item A \emph{proper hyperedge set} $\mathcal{E}_r(m,n_r)$ with $r\in [1,A_m]$ satisfies:

\qquad (2-i) Any pair of subsets $e_{r,i}$ and $e_{r,j}$ holds $e_{r,i}\not \subset e_{r,j}$ and $e_{r,j}\not \subset e_{r,i}$ when $i\neq j$;

\qquad (2-ii) Each subset $e_{r,s}\in \mathcal{E}_r(m,n_r)$ corresponds another subset set $e_{r,t}\in \mathcal{E}_r(m,n_r)$ holding $e_{r,s}\cap e_{r,t}\neq\emptyset$.

\item A \emph{perfect hypermatching} of a hypergraph $\mathcal{H}_{yper}=(\Lambda,\mathcal{E})$ is a collection of hyperedges $M_1,M_2, \dots,M_m$ $\subseteq \mathcal{E}$, such that $M_i\cap M_j=\emptyset $ for $i\neq j$ and $\bigcup ^m_{i=1}M_i=\Lambda$.
\item \cite{Jianfang-Wang-Hypergraphs-2008} If a hyperedge set $\mathcal{E}=\bigcup^m_{j=1} \mathcal{E}_j$ holds $\mathcal{E}_i\cap \mathcal{E}_j=\emptyset$ for $i\neq j$, and each hyperedge $e\in \mathcal{E}$ belongs to one $\mathcal{E}_j$ and $e\not\in \bigcup^m_{k=1,k\neq j} \mathcal{E}_k$, then $\{\mathcal{E}_1,\mathcal{E}_2,\dots ,\mathcal{E}_m\}$ is called \emph{decomposition} of $\mathcal{E}$.
\item \cite{Jianfang-Wang-Hypergraphs-2008} A hyperedge set $\mathcal{E}$ is \emph{irreducible} if each hyperedge $e\in \mathcal{E}$ does not hold $e\subseteq e\,'$ for any hyperedge $e\,'\in \mathcal{E}$.\qqed
\end{asparaenum}
\end{defn}

\begin{problem}\label{qeu:444444}
\textbf{Compute} the exact value of each one of numbers $n_r$ and $A_m$ for proper hyperedge sets, or strong hyperedge sets introduced in Definition \ref{defn:strong-proper-hyperedge-sets}.
\end{problem}

\begin{example}\label{exa:8888888888}
\textbf{Build up} strong hyperedge sets $\mathcal{E}_r(m, n_r)=\{X_{r,1}, X_{r,2}, \dots , X_{r,n_r}\}$ from a consecutive integer set $[1, m]$, such that each set $X_{r,s}$ holds $|X_{r,s}|\geq r+1$ for $s\in [1, n_r]$ and $r\in [1,B_m]$, as well as each set-set $\mathcal{E}_r(m, n_r)$ holds the conditions of strong hyperedge set introduced in Definition \ref{defn:strong-proper-hyperedge-sets} true.

We construct particular sets $\mathcal{E}_t(m, t)=\{X_{t,1}, X_{t,2}, \dots , X_{t,t}\}$ with $t\in[2,m-1]$ and $|X_{t,s}|=t+1$ for $s\in [1, t]$ in the following procedure:

$X_{t,1}=\big \{[1,t]\cup \{a_i\}:~a_i\in [1,m]\setminus [1,t]\big \}\bigcup [2,m]$, and $|X_{r,1}|={m-t \choose 1}+1$;

$X_{t,2}=\big \{[2,t]\cup \{a_1,a_2\}:~a_i\in [1,m]\setminus [1,t]\big \}$, and $|X_{r,2}|={m-t \choose 1}+{m-t \choose 2}$;

$\cdots \cdots \cdots $

$X_{t,t}=\big \{\{t\}\cup \{a_1,a_2,\dots ,a_t\}:~a_i\in [1,m]\setminus [1,t]\big \}$, and $|X_{r,t}|={m-t \choose 1}+{m-t \choose 2}+\cdots +{m-t \choose t}$.

\vskip 0.4cm

\textbf{Case 1.} $m=4$. $S_1(4, 4)=\big \{\{$1, 2$\}$, $\{$1, 3$\}$, $\{$1, 4$\}$, $\{$2, 3, 4$\}\big \}$, since ${3 \choose 1}+1=4$.

$S_2(4, 3)=\big \{\{$1, 2, 3$\}$, $\{$1, 2, 4$\}$, $\{$2, 3, 4$\}\big \}$, since ${2 \choose 1}+{2 \choose 2}=3$.

\textbf{Case 2.} $m=5$. $S_1(5, 5)=\big \{\{$1, 2$\}$, $\{$1, 3$\}$, $\{$1, 4$\}$, $\{$1, 5$\}$, $\{$2, 3, 4, 5$\}\big \}$, since ${4 \choose 1}+1=5$.

$S_2(5, 7)=\big \{\{$1, 2, 3$\}$, $\{$1, 2, 4$\}$, $\{$1, 2, 5$\}$, $\{$2, 3, 4$\}$, $\{$2, 3, 5$\}$, $\{$2, 4, 5$\}$, $\{$3, 4, 5$\}\big \}$, since ${3 \choose 1}+{3 \choose 2}+1=7$.

$S_3(5, 3)=\big \{\{$1, 2, 3, 4$\}$, $\{$1, 2, 3, 5$\}$, $\{$2, 3, 4, 5$\}\big \}$, since ${2 \choose 1}+{2 \choose 2}=3$.

\textbf{Case 3.} $m=6$. $S_1(6, 6)=\big \{\{$1, 2$\}$, $\{$1, 3$\}$, $\{$1, 4$\}$, $\{$1, 5$\}$, $\{$1, 6$\}$, $\{$2, 3, 4, 5, 6$\}\big \}$, since ${5 \choose 1}+1=6$.

$S_2(6, 10)=\big \{\{$1, 2, 3$\}$, $\{$1, 2, 4$\}$, $\{$1, 2, 5$\}$, $\{$1, 2, 6$\}$, $\{$2, 3, 4$\}$, $\{$2, 3, 5$\}$, $\{$2, 3, 6$\}$, $\{$2, 4, 5$\}$, $\{$2, 4, 6$\}$, $\{$2, 5, 6$\}\big \}$, since ${4 \choose 1}+{4 \choose 2}=10$.

$S_3(6, 7)=\big \{\{$1, 2, 3, 4$\}$, $\{$1, 2, 3, 5$\}$, $\{$1, 2, 3, 6$\}$, $\{$2, 3, 4, 5$\}$, $\{$2, 3, 4, 6$\}$, $\{$2, 3, 5, 6$\}$, $\{$3, 4, 5, 6$\}\big \}$, since ${3 \choose 1}+{3 \choose 2}+{3 \choose 3}=7$.

\textbf{Case 4.} $m=7$. $S_1(7, 7)=\big \{\{$1, 2$\}$, $\{$1, 3$\}$, $\{$1, 4$\}$, $\{$1, 5$\}$, $\{$1, 6$\}$, $\{$1, 7$\}$, $\{$2, 3, 4, 5, 6, 7$\}\big \}$, since ${6 \choose 1}+1=7$.

$S_2(7, 15)=\big \{\{$1, 2, 3$\}$, $\{$1, 2, 4$\}$, $\{$1, 2, 5$\}$, $\{$1, 2, 6$\}$, $\{$1, 2, 7$\}$, $\{$2, 3, 4$\}$, $\{$2, 3, 5$\}$, $\{$2, 3, 6$\}$, $\{$2, 3, 7$\}$, $\{$2, 4, 5$\}$, $\{$2, 4, 6$\}$, $\{$2, 4, 7$\}$, $\{$2, 5, 6$\}$, $\{$2, 5, 7$\}$, $\{$2, 6, 7$\}\big \}$, since ${5 \choose 1}+{5 \choose 2}=15$.

$S_3(7, 14)=\big \{\{$1, 2, 3, 4$\}$, $\{$1, 2, 3, 5$\}$, $\{$1, 2, 3, 6$\}$, $\{$1, 2, 3, 7$\}$, $\{$2, 3, 4, 5$\}$, $\{$2, 3, 4, 6$\}$, $\{$2, 3, 4, 7$\}$, $\{$2, 3, 5, 6$\}$, $\{$2, 3, 5, 7$\}$, $\{$2, 3, 6, 7$\}$, $\{$3, 4, 5, 6$\}$, $\{$3, 4, 5, 7$\}$, $\{$3, 4, 6, 7$\}$, $\{$3, 5, 6, 7$\}\big \}$, since ${4 \choose 1}+{4 \choose 2}+{4 \choose 3}=14$.

$S_4(7, 7)=\big \{\{$1, 2, 3, 4, 5$\}$, $\{$1, 2, 3, 4, 6$\}$, $\{$1, 2, 3, 4, 7$\}$, $\{$2, 3, 4, 5, 6$\}$, $\{$2, 3, 4, 5, 7$\}$, $\{$2, 3, 4, 6, 7$\}$, $\{$3, 4, 5, 6, 7$\}\big \}$, since ${3 \choose 1}+{3 \choose 2}+{3 \choose 3}=7$.

\textbf{Case 5.} $m=8$. $S_1(8, 8)=\big \{\{$1, 2$\}$, $\{$1, 3$\}$, $\{$1, 4$\}$, $\{$1, 5$\}$, $\{$1, 6$\}$, $\{$1, 7$\}$, $\{$1, 8$\}$, $\{$2, 3, 4, 5, 6, 7, 8$\}\big \}$, since ${7 \choose 1}+1=8$.

$S_2(8, 21)=\big \{\{$1, 2, 3$\}$, $\{$1, 2, 4$\}$, $\{$1, 2, 5$\}$, $\{$1, 2, 6$\}$, $\{$1, 2, 7$\}$, $\{$1, 2, 8$\}$, $\{$2, 3, 4$\}$, $\{$2, 3, 5$\}$, $\{$2, 3, 6$\}$, $\{$2, 3, 7$\}$, $\{$2, 3, 8$\}$, $\{$2, 4, 5$\}$, $\{$2, 4, 6$\}$, $\{$2, 4, 7$\}$, $\{$2, 4, 8$\}$, $\{$2, 5, 6$\}$, $\{$2, 5, 7$\}$, $\{$2, 5, 8$\}$, $\{$2, 6, 7$\}$, $\{$2, 6, 8$\}$, $\{$2, 7, 8$\}\big \}$, since ${6 \choose 1}+{6 \choose 2}=21$.

$S_3(8, 25)=\big \{\{$1, 2, 3, 4$\}$, $\{$1, 2, 3, 5$\}$, $\{$1, 2, 3, 6$\}$, $\{$1, 2, 3, 7$\}$, $\{$1, 2, 3, 8$\}$, $\{$2, 3, 4, 5$\}$, $\{$2, 3, 4, 6$\}$, $\{$2, 3, 4, 7$\}$, $\{$2, 3, 4, 8$\}$, $\{$2, 3, 5, 6$\}$, $\{$2, 3, 5, 7$\}$, $\{$2, 3, 5, 8$\}$, $\{$2, 3, 6, 7$\}$, $\{$2, 3, 6, 8$\}$, $\{$2, 3, 7, 8$\}$, $\{$3, 4, 5, 6$\}$, $\{$3, 4, 5, 7$\}$, $\{$3, 4, 5, 8$\}$, $\{$3, 4, 6, 7$\}$, $\{$3, 4, 6, 8$\}$, $\{$3, 4, 7, 8$\}$, $\{$3, 5, 6, 7$\}$, $\{$3, 5, 6, 8$\}$, $\{$3, 5, 7, 8$\}$, $\{$3, 6, 7, 8$\}\big \}$, since ${5 \choose 1}+{5 \choose 2}+{5 \choose 3}=25$.

$S_4(8, 15)=\big \{\{$1, 2, 3, 4, 5$\}$, $\{$1, 2, 3, 4, 6$\}$, $\{$1, 2, 3, 4, 7$\}$, $\{$1, 2, 3, 4, 8$\}$, $\{$2, 3, 4, 5, 6$\}$, $\{$2, 3, 4, 5, 7$\}$, $\{$2, 3, 4, 5, 8$\}$, $\{$2, 3, 4, 6, 7$\}$, $\{$2, 3, 4, 6, 8 $\}$, $\{$2, 3, 4, 7, 8$\}$, $\{$3, 4, 5, 6, 7$\}$, $\{$3, 4, 5, 6, 8$\}$, $\{$3, 4, 5, 7, 8$\}$, $\{$3, 4, 6, 7, 8$\}$, $\{$4, 5, 6, 7, 8$\}\big \}$, since ${4 \choose 1}+{4 \choose 2}+{4 \choose 3}+{4 \choose 4}=15$.

$S_5(8, 7)=\big \{\{$1, 2, 3, 4, 5, 6$\}$, $\{$1, 2, 3, 4, 5, 7$\}$, $\{$1, 2, 3, 4, 5, 8$\}$, $\{$2, 3, 4, 5, 6, 7$\}$, $\{$2, 3, 4, 5, 6, 8$\}$, $\{$2, 3, 4, 5, 7, 8$\}$, $\{$3, 4, 5, 6, 7, 8$\}\big \}$, since ${3 \choose 1}+{3 \choose 2}+{3 \choose 3}=7$.

\textbf{Case 6.} $m=9$. $S_1(9, 9)=\big \{\{$1, 2$\}$, $\{$1, 3$\}$, $\{$1, 4$\}$, $\{$1, 5$\}$, $\{$1, 6$\}$, $\{$1, 7$\}$, $\{$1, 8$\}$, $\{$1, 9$\}$, $\{$2, 3, 4, 5, 6, 7, 8, 9$\}\big \}$, since ${8 \choose 1}+1=9$.

$S_2(9, 28)=\big \{\{$1, 2, 3$\}$, $\{$1, 2, 4$\}$, $\{$1, 2, 5$\}$, $\{$1, 2, 6$\}$, $\{$1, 2, 7$\}$, $\{$1, 2, 8$\}$, $\{$1, 2, 9$\}$, $\{$2, 3, 4$\}$, $\{$2, 3, 5$\}$, $\{$2, 3, 6$\}$, $\{$2, 3, 7$\}$, $\{$2, 3, 8$\}$, $\{$2, 3, 9$\}$, $\{$2, 4, 5$\}$, $\{$2, 4, 6$\}$, $\{$2, 4, 7$\}$, $\{$2, 4, 8$\}$, $\{$2, 4, 9$\}$, $\{$2, 5, 6$\}$, $\{$2, 5, 7$\}$, $\{$2, 5, 8$\}$, $\{$2, 5, 9$\}$, $\{$2, 6, 7$\}$, $\{$2, 6, 8$\}$, $\{$2, 6, 9$\}$, $\{$2, 7, 8$\}$, $\{$2, 7, 9$\}$, $\{$2, 8, 9$\}\big \}$, since ${7 \choose 1}+{7 \choose 2}=28$.

$S_3(9, 41)=\big \{\{$1, 2, 3, 4$\}$, $\{$1, 2, 3, 5$\}$, $\{$1, 2, 3, 6$\}$, $\{$1, 2, 3, 7$\}$, $\{$1, 2, 3, 8$\}$, $\{$1, 2, 3, 9$\}$, $\{$2, 3, 4, 5$\}$, $\{$2, 3, 4, 6$\}$, $\{$2, 3, 4, 7$\}$, $\{$2, 3, 4, 8$\}$, $\{$2, 3, 4, 9$\}$, $\{$2, 3, 5, 6$\}$, $\{$2, 3, 5, 7$\}$, $\{$2, 3, 5, 8$\}$, $\{$2, 3, 5, 9$\}$, $\{$2, 3, 6, 7$\}$, $\{$2, 3, 6, 8$\}$, $\{$2, 3, 6, 9$\}$, $\{$2, 3, 7, 8$\}$, $\{$2, 3, 7, 9$\}$, $\{$2, 3, 8, 9$\}$, $\{$3, 4, 5, 6$\}$, $\{$3, 4, 5, 7$\}$, $\{$3, 4, 5, 8$\}$, $\{$3, 4, 5, 9$\}$, $\{$3, 4, 6, 7$\}$, $\{$3, 4, 6, 8$\}$, $\{$3, 4, 6, 9$\}$, $\{$3, 4, 7, 8$\}$, $\{$3, 4, 7, 9$\}$, $\{$3, 4, 8, 9$\}$, $\{$3, 5, 6, 7$\}$, $\{$3, 5, 6, 8$\}$, $\{$3, 5, 6, 9$\}$, $\{$3, 5, 7, 8$\}$, $\{$3, 5, 7, 9$\}$, $\{$3, 5, 8, 9$\}$, $\{$3, 6, 7, 8$\}$, $\{$3, 6, 7, 9$\}$, $\{$3, 6, 8, 9$\}$, $\{$3, 7, 8, 9$\}\big \}$, since ${6 \choose 1}+{6 \choose 2}+{6 \choose 3}=41$.

$S_4(9, 30)=\big \{\{$1, 2, 3, 4, 5$\}$, $\{$1, 2, 3, 4, 6$\}$, $\{$1, 2, 3, 4, 7$\}$, $\{$1, 2, 3, 4, 8$\}$, $\{$1, 2, 3, 4, 9$\}$, $\{$2, 3, 4, 5, 6$\}$, $\{$2, 3, 4, 5, 7$\}$, $\{$2, 3, 4, 5, 8$\}$, $\{$2, 3, 4, 5, 9$\}$, $\{$2, 3, 4, 6, 7$\}$, $\{$2, 3, 4, 6, 8$\}$, $\{$2, 3, 4, 6, 9$\}$, $\{$2, 3, 4, 7, 8$\}$, $\{$2, 3, 4, 7, 9$\}$, $\{$2, 3, 4, 8, 9$\}$, $\{$3, 4, 5, 6, 7$\}$, $\{$3, 4, 5, 6, 8$\}$, $\{$3, 4, 5, 6, 9$\}$, $\{$3, 4, 5, 7, 8$\}$, $\{$3, 4, 5, 7, 9$\}$, $\{$3, 4, 5, 8, 9$\}$, $\{$3, 4, 6, 7, 8$\}$, $\{$3, 4, 6, 7, 9$\}$, $\{$3, 4, 6, 8, 9$\}$, $\{$3, 4, 7, 8, 9$\}$, $\{$4, 5, 6, 7, 8$\}$, $\{$4, 5, 6, 7, 9$\}$, $\{$4, 5, 6, 8, 9$\}$, $\{$4, 5, 7, 8, 9$\}$, $\{$4, 6, 7, 8, 9$\}\big \}$, since ${5 \choose 1}+{5 \choose 2}+{5 \choose 3}+{5 \choose 4}=30$.

$S_5(9, 15)=\big \{\{$1, 2, 3, 4, 5, 6$\}$, $\{$1, 2, 3, 4, 5, 7$\}$, $\{$1, 2, 3, 4, 5, 8$\}$, $\{$1, 2, 3, 4, 5, 9$\}$, $\{$2, 3, 4, 5, 6, 7$\}$, $\{$2, 3, 4, 5, 6, 8$\}$, $\{$2, 3, 4, 5, 6, 9$\}$, $\{$2, 3, 4, 5, 7, 8$\}$, $\{$2, 3, 4, 5, 7, 9$\}$, $\{$2, 3, 4, 5, 8, 9$\}$, $\{$3, 4, 5, 6, 7, 8$\}$, $\{$3, 4, 5, 6, 7, 9$\}$, $\{$3, 4, 5, 6, 8, 9$\}$, $\{$3, 4, 5, 7, 8, 9$\}$, $\{$4, 5, 6, 7, 8, 9$\}\big \}$, since ${4 \choose 1}+{4 \choose 2}+{4 \choose 3}+{4 \choose 4}=15$.

$S_6(9, 7)=\big \{\{$1, 2, 3, 4, 5, 6, 7$\}$, $\{$1, 2, 3, 4, 5, 6, 8$\}$, $\{$1, 2, 3, 4, 5, 6, 9$\}$, $\{$2, 3, 4, 5, 6, 7, 8$\}$, $\{$2, 3, 4, 5, 6, 7, 9$\}$, $\{$2, 3, 4, 5, 6, 8, 9$\}$, $\{$3, 4, 5, 6, 7, 8, 9$\}\big \}$, since ${3 \choose 1}+{3 \choose 2}+{3 \choose 3}=7$.

$S_7(9, 3)=\big \{\{$1, 2, 3, 4, 5, 6, 7, 8$\}$, $\{$1, 2, 3, 4, 5, 6, 7, 9$\}$, $\{$2, 3, 4, 5, 6, 7, 8, 9$\}\big \}$, since ${2 \choose 1}+{2 \choose 2}=3$.

In the above sets $X_{r,s}\in \mathcal{E}_r(m, n_r)$ with $r\in [2, 7]$ and $m\in [4, 9]$, we can see $|X_{r,s}|=r+1$ for $s\in [1, n_r]$ and $r\in [2, 7]$, except $S_1(k, k)$ for $k\in [4, 9]$.\qqed
\end{example}

\begin{problem}\label{qeu:strong-hyperedge-set-condition-all}
Notice that each subset $X\subseteq \mathcal{E}_r(m,n_r)$ satisfies the strong hyperedge set condition defined in Definition \ref{defn:strong-proper-hyperedge-sets}. \textbf{Find} all hyperedge sets of the power set $[1,m]^2$, such that each hyperedge set $\mathcal{E}_k$ has at least two subsets of the power set $[1,m]^2$ and $\bigcup_{e\in \mathcal{E}_k}e=[1,m]$, and holds the strong hyperedge set condition: For any pair of two subsets $e_{i}$ and $e_{j}$, we have $e_{i}\cap e_{j}\neq \emptyset$, $e_{i}\not \subset e_{j}$ and $e_{j}\not \subset e_{i}$ when $i\neq j$.
\end{problem}

\begin{example}\label{exa:8888888888}
For a given set $[1, 7]$, as an example for Problem \ref{qeu:strong-hyperedge-set-condition-all}, we take a proper subset $S^*_3=\{$1, 4, 7$\}\subset [1, 7]$, so the remainder set is $\{$2, 3, 5, 6$\}=[1, 7]\setminus S^*_3$. Then we have a hyperedge set $S^*_3(7, 14)=\big \{\{$1, 4, 7, 2$\}$, $\{$1, 4, 7, 3$\}$, $\{$1, 4, 7, 5$\}$, $\{$1, 4, 7, 6$\}$; $\{$4, 7, 2, 3$\}$, $\{$4, 7, 2, 5$\}$, $\{$4, 7, 2, 6$\}$, $\{$4, 7, 3, 5$\}$, $\{$4, 7, 3, 6$\}$, $\{$4, 7, 5, 6$\}$; $\{$7, 2, 3, 5$\}$, $\{$7, 2, 3, 6$\}$, $\{$7, 2, 5, 6$\}$, $\{$7, 3, 5, 6$\}\big \}$, since ${4 \choose 1}+{4 \choose 2}+{4 \choose 3}=14$. Clearly, the hyperedge set $S^*_3(7, 14)$ holds the condition of strong hyperedge set defined in Definition \ref{defn:strong-proper-hyperedge-sets}.\qqed
\end{example}

\begin{conj}\label{conj:strong-hyperedge-set-proper-edge-set-coloring}
$^*$ For each connected graph $G$, there is a hyperedge set $\mathcal{E}=\{e_1,e_2,\dots ,e_m\}$ satisfying $\Lambda =\bigcup_{e\in \mathcal{E}}e$ with $\frac{\Delta(G)+1}{2}\leq |\Lambda|\leq\Delta(G)+1$ and $e_1\cap e_2\cap \cdots \cap e_m=\emptyset$, such that $G$ admits a strong hyperedge-set proper edge set-coloring $F: E(G)\rightarrow \mathcal{E}$ with $|F(uv)|\geq 2$ for each edge $uv\in E(G)$ (Ref. Definition \ref{defn:tradition-vs-set-colorings} and Definition \ref{defn:strong-proper-hyperedge-sets}).
\end{conj}

\begin{conj}\label{conj:strong-hyperedge-set-proper-total-set-coloring}
$^*$ For each connected graph $H$, there is a hyperedge set $\mathcal{E}_T=\{e_1,e_2,\dots ,e_m\}$ holding $\Lambda_T =\bigcup_{e\in \mathcal{E}_T}e$ with the cardinality $\frac{\Delta(H)+1}{2}\leq |\Lambda_T|\leq \Delta(H)+2$ and $e_1\cap e_2\cap \cdots \cap e_m=\emptyset$, such that $H$ admits a strong hyperedge-set proper total set-coloring $F: V(H)\cup E(H)\rightarrow \mathcal{E}$ with $|F(w)|\geq 2$ for each element $w\in V(H)\cup E(H)$ (Ref. Definition \ref{defn:tradition-vs-set-colorings} and Definition \ref{defn:strong-proper-hyperedge-sets}).
\end{conj}

Motivated from the harmonious coloring, we present a new coloring as follows:

\begin{defn} \label{defn:local-proper-harmonious-coloring}
$^*$ A graph $G$ admits a \emph{proper local-harmonious coloring} $f:V(G)\rightarrow [1,k]$ if

(i) $f(x)\neq f(y)$ for each edge $xy\in E(G)$;

(ii) each edge $xy$ is colored by an induced color set $f(xy)=\{f(x),f(y)\}$; and

(iii) $f(xy)\neq f(uv)$ for any pair of adjacent edges $uv$ and $uw$ with $v,w\in N_{ei}(u)$.\\
We call the extremal number
\begin{equation}\label{eqa:555555}
\chi_{lh}(G)=\min_f \big \{k: ~f:V(G)\rightarrow [1,k]\textrm{ is a proper local-harmonious coloring of }G \big \}
\end{equation} the \emph{local-harmonious chromatic number} of the graph $G$.\qqed
\end{defn}

\begin{thm}\label{thm:local-harmonious-coloring-hyperedge}
$^*$ Let each $\mathcal{E}^{|2|}_i$ with $i\in [1,m]$ be a hyperedge set holding $|e|=2$ for each hyperedge $e\in \mathcal{E}^{|2|}_i$, and $e\neq e\,'$ for $e,e\,'\in \mathcal{E}^{|2|}_i$, and $\Lambda=\bigcup _{e\in \mathcal{E}^{|2|}_i}e$. If a graph $G$ admits a proper edge set-coloring $F:E(G)\rightarrow \mathcal{E}^{|2|}_i$ for each hyperedge set $\mathcal{E}^{|2|}_i$
 with $i\in [1,m]$, then the graph $G$ admits a \emph{proper local-harmonious coloring} $f:V(G)\rightarrow [1,k]$ defined in Definition \ref{defn:local-proper-harmonious-coloring} when as $\Lambda=[1,k]$.
\end{thm}

\subsection{Every-zero hypergraph groups based on consecutive integer sets}

\begin{defn}\label{defn:hypergraph-group-definition}
$^*$ \textbf{Every-zero hypergraph group.} Suppose that the vertex set $\Lambda_{[1,N]}=[1,N]$ is a consecutive integer set. For getting a structural representation of the hypergraph set $\mathcal{E}\big (\Lambda_{[1,N]}^2\big )$ of all hyperedge sets defined on a consecutive integer set $\Lambda_{[1,N]}=[1,N]$, we take a particular hyperedge set $\mathcal{E}_1=\{e_{1,1},e_{1,2},\dots $, $e_{1,a_1}\}\in \mathcal{E}\big (\Lambda_{[1,N]}^2\big )$, where $a_1\geq 2$ and $e_{1,s}=\{x_{1,s,1},x_{1,s,2},\dots ,x_{1,s,m_s}\}$ holding $x_{1,s,t}\in [1,N]$ with $t\in [1,m_s]$ and $s\in [1,a_1]$.

We use this especial hyperedge set $\mathcal{E}_1$ to make hyperedge sets $\mathcal{E}_i=\{e_{i,1},e_{i,2},\dots ,e_{i,a_i}\}$ with $i\in [1,N]$ and $e_{i,s}=\{x_{i,s,1},x_{i,s,2},\dots ,x_{i,s,m_s}\}$ holding $x_{i,s,t}=x_{1,s,t}+(i-1)~(\bmod~N)\in [1,N]$ for $t\in [1,m_s]$ and $s\in [1,a_1]$. The set of hyperedge sets $\mathcal{E}_i$ generated by $\mathcal{E}_1$ is denoted as $G(\mathcal{E}_1)$.

We get an \emph{every-zero hypergraph group} $\big \{G(\mathcal{E}_1);[+][-]\big \}$ since any pair of hyperedge sets $\mathcal{E}_i$ and $\mathcal{E}_j$ of the set $G(\mathcal{E}_1)$ holds the finite module Abelian additive operation $\mathcal{E}_i[+_k]\mathcal{E}_j$ true, where the finite module Abelian additive operation
\begin{equation}\label{eqa:every-zero-hypergraph-group11}
\mathcal{E}_i[+_k]\mathcal{E}_j:=\mathcal{E}_i[+]\mathcal{E}_j[-]\mathcal{E}_k=\mathcal{E}_\lambda
\end{equation} for any preappointed \emph{zero} $\mathcal{E}_k\in G(\mathcal{E}_1)$ is defined as follows:
\begin{equation}\label{eqa:hypergraph-group-operation}
{
\begin{split}
\big (x_{i,s,t}+x_{j,s,t}\big )-x_{k,s,t}&=x_{1,s,t}+(i-1)+x_{1,s,t}+(j-1)-\big [x_{1,s,t}+(k-1)\big ]~(\bmod~N)\\
&=x_{1,s,t}+(i+j-k-1)~(\bmod~N)\\
&=x_{\lambda,s,t}\in [1,N]
\end{split}}
\end{equation} where $\lambda=i+j-k-1~(\bmod~N)$, $x_{i,s,t}\in e_{i,s}\in \mathcal{E}_i$, $x_{j,s,t}\in e_{j,s}\in \mathcal{E}_j$ and $x_{k,s,t}\in e_{k,s}\in \mathcal{E}_k$ with $t\in [1,m_s]$ and $s\in [1,a_1]$.\qqed
\end{defn}

The every-zero hypergraph group $\big \{G(\mathcal{E}_1);[+][-]\big \}$ defined in Definition \ref{defn:hypergraph-group-definition} has the following properties:

(1) \textbf{Zero}. Each hyperedge set $\mathcal{E}_k\in G(\mathcal{E}_1)$ can be as \emph{zero}.

(2) \textbf{Inverse}. Since $\mathcal{E}_i[+]\mathcal{E}_{2k-i}[-]\mathcal{E}_k=\mathcal{E}_k$, then each hyperedge set $\mathcal{E}_i\in G(\mathcal{E}_1)$ has its own \emph{inverse} $\mathcal{E}_{2k-i}\in G(\mathcal{E}_1)$.

(3) \textbf{Uniqueness and Closureness}. If $\mathcal{E}_i[+]\mathcal{E}_j[-]\mathcal{E}_k=\mathcal{E}_\lambda$ and $\mathcal{E}_i[+]\mathcal{E}_j[-]\mathcal{E}_k=\mathcal{E}_\mu$, we have the indices $\lambda=i+j-k~(\bmod~N)$ and $\mu=i+j-k~(\bmod~N)$, immediately, $\lambda=\mu$. Closureness follows Eq.(\ref{eqa:hypergraph-group-operation}).

(4) \textbf{Associative law}. $\mathcal{E}_i[+_k]\mathcal{E}_j=\mathcal{E}_j[+_k]\mathcal{E}_i$.

(5) \textbf{Commutative law}. $\big (\mathcal{E}_i[+_k]\mathcal{E}_j\big )[+_k]\mathcal{E}_l=\mathcal{E}_i[+_k]\big (\mathcal{E}_j[+_k]\mathcal{E}_l\big )$.

\vskip 0.4cm

In general, by the finite module Abelian additive operation, each hyperedge set $\mathcal{E}\in \mathcal{E}\big (\Lambda_{[1,N]}^2\big )$ forms an every-zero hypergraph group $\big \{G(\mathcal{E});[+][-]\big \}$, then we get a structural representation of the hypergraph set $\mathcal{E}\big (\Lambda_{[1,N]}^2\big )$ as follows:

\begin{thm}\label{thm:kinds-every-zero-number-based-string-groups}
$^*$ For a consecutive integer set $\Lambda_{[1,N]}=[1,N]$, the \emph{hypergraph set} $\mathcal{E}\big (\Lambda_{[1,N]}^2\big )$ contains all hyperedge sets defined on the vertex set $\Lambda_{[1,N]}=[1,N]$ (Ref. Definition \ref{defn:hypergraph-basic-definition}).

(i) The hypergraph set $\mathcal{E}\big (\Lambda_{[1,N]}^2\big )$ can be classified into several kinds of every-zero number-based set-groups, such that each hyperedge set $\mathcal{E}\in \mathcal{E}\big (\Lambda_{[1,N]}^2\big )$ is in an every-zero number-based set-group of the hypergraph set $\mathcal{E}\big (\Lambda_{[1,N]}^2\big )$.

(ii) Suppose that a hyperedge set $\mathcal{E}\in \mathcal{E}\big (\Lambda_{[1,N]}^2\big )$ holds that there are hyperedges $e_i,e_j\in \mathcal{E}$, such that $e_i\cap e_j\neq \emptyset$, then there are at least $N-1$ hyperedge sets of the hypergraph set $\mathcal{E}\big (\Lambda_{[1,N]}^2\big )$ satisfy the property $P$ if the hyperedge set $\mathcal{E}$ has a property $P$.
\end{thm}

We, by Theorem \ref{thm:kinds-every-zero-number-based-string-groups}, have the following result for real applications:

\begin{cor}\label{cor:things-hyperedge-set group}
$^*$ \textbf{Every-zero hyperedge-set group.} For a finite set $\Lambda=\{a_1,a_2,\dots ,a_n\}$ with each $a_i$ is a general data, we define the finite module Abelian additive operation on the set $\Lambda$ as
\begin{equation}\label{eqa:Abelian-additive-operation-general-data}
a_i[+_k]a_j:=a_i[+]a_j[-]a_k=a_{\lambda}\in \Lambda
\end{equation} for a preappointed \emph{zero} $a_k$, if $\lambda=i+j-k~(\bmod~n)$. In other words, the finite module Abelian additive operation is only for the index set $[1,n]=\{1,2,\dots ,n\}$ of elements of the set $\Lambda$. Naturally, we obtain an \emph{every-zero hyperedge-set group} $\big \{G(\mathcal{E});[+][-]\big \}$ for a hyperedge set $\mathcal{E}\in \mathcal{E}\big (\Lambda^2\big )$, where the hyperedge set $G(\mathcal{E})=\{\mathcal{E}_1,\mathcal{E}_2,\dots ,\mathcal{E}_n\}$ holding Eq.(\ref{eqa:Abelian-additive-operation-general-data}) on the elements of subset $\{a_{i,j,1},a_{i,j,2},\dots ,a_{i,j,b(i,j)}\}= e_{i,j}\in \mathcal{E}_i$ with $j\in [1,b]$ and $i\in [1,n]$, where $a_{i,j,s}\in \Lambda$ with $s\in [1,b(i,j)]$, and each cardinality $|\mathcal{E}_i|=b$ for $i\in [1,n]$.
\end{cor}

\subsubsection{Graph colorings based on hypergraph sets and hypergraph groups}

\begin{defn} \label{defn:general-defi-hypergraph-groups}
$^*$ Let $G$ be a graph, and let $\mathcal{E}\big (\Lambda^2\big )$ be a hypergraph set defined in Definition \ref{defn:hypergraph-basic-definition}, and let $\big \{G(\mathcal{E}_1);[+][-]\big \}$ be an every-zero hypergraph group defined in Definition \ref{defn:hypergraph-group-definition}. We define:

(i) \textbf{The \textbf{$[\bullet]$-graphs of hypergraphs}.} The graph $G$ admits a \emph{hyperedge total coloring} $f:V(G)\cup E(G)\rightarrow \mathcal{E}\in \mathcal{E}\big (\Lambda^2\big )$, such that each edge $uv\in E(G)$ satisfies that $f(u)=e_i$, $f(v)=e_j$ and $f(uv)\supseteq e_i[\bullet]e_j=f(u)[\bullet]f(v)$ under an operation ``$[\bullet]$'' based on hyperedge sets. So, we call the colored graph $G$ as the \emph{$[\bullet]$-graph} of a hypergraph $\mathcal{H}_{yper}=(\Lambda,\mathcal{E})$ (Ref. the vertex-intersected graphs of hypergraphs).

(ii) \textbf{\textbf{Hypergraph-group coloring}.} The graph $G$ admits a \emph{total hypergraph-group coloring}
$$\eta:V(G)\cup E(G)\rightarrow \big \{G(\mathcal{E});[+][-]\big \}
$$ with the hyperedge set $G(\mathcal{E})=\{\mathcal{E}_1,\mathcal{E}_2,\dots ,\mathcal{E}_n\}$, such that each edge $uv\in E(G)$ satisfies that hypergraph $\eta(u)=\mathcal{E}_i$, $\eta(v)=\mathcal{E}_j$ and
\begin{equation}\label{eqa:555555}
\eta(uv)=\mathcal{E}_\lambda=\mathcal{E}_i[+]\mathcal{E}_j[-]\mathcal{E}_k=\eta(u)[+]\eta(v)[-]\mathcal{E}_k
\end{equation} with $\lambda=i+j-k~(\bmod~N)$ for any preappointed \emph{zero} $\mathcal{E}_k\in \big \{G(\mathcal{E});[+][-]\big \}$.\qqed
\end{defn}

\subsubsection{Networks overall encrypted by hypergraph groups}

\textbf{Network overall topological encryption algorithm (NOTE-algorithm).}

\textbf{Input:} A dynamic network $N(t)$ with $t\in [\alpha, \beta]$, and a thing set $S_{thing}$ of things $s_{1},s_{2},\dots ,s_{N}$ with $N\geq 2$.

\textbf{Output:} A dynamic network $N_{thing}(t)$ with $t\in [\alpha, \beta]$ encrypted by the thing set $S_{thing}$.

\textbf{Initialization.} A set $S_{thing}$ of things $s_{1},s_{2},\dots ,s_{N}$ with $N\geq 2$ is substituted by a consecutive integer set $\Lambda_{[1,N]}=[1,N]$ for the overall topological encryption of using hypergraphs.

\textbf{Hypergraph-group encryption.} We, for the dynamic network $N(t)$ at time $t\in [\alpha, \beta]$, define a \emph{total hypergraph-group coloring} $$F_t:V(N(t))\cup E(N(t))\rightarrow \{G(\mathcal{E},t);[+][-]\}
$$ for a hyperedge set $\mathcal{E}$ of the hypergraph set $\mathcal{E}\big (\Lambda_{[1,N]}^2\big )$ based on the vertex set $\Lambda_{[1,N]}=[1,N]$, where $\{G(\mathcal{E},t);[+][-]\}=\{\mathcal{E}_1(t),\mathcal{E}_2(t),\dots ,\mathcal{E}_N(t)\}$ with $\mathcal{E}_1(t)=\mathcal{E}(t)$, such that each edge $uv\in E(N(t))$ holds $F_t(u)\neq F_t(v)$ and
$$F_t(uv)=F_t(u)[+]F_t(v)[-]\mathcal{E}_{k}(t)=\mathcal{E}_{i}(t)[+]\mathcal{E}_{j}(t)[-]\mathcal{E}_{k}(t)=\mathcal{E}_{\lambda}(t)
$$ with $\lambda=i+j-k~(\bmod~N)$ for any preappointed \emph{zero} $\mathcal{E}_{k}(t)\in \{G(\mathcal{E},t);[+][-]\}$ at time $t\in [\alpha, \beta]$. The set-colored dynamic network is denoted as $N_{color}(t)$.

\textbf{Thing encryption.} The substitution of elements of the set-color set $Q=F_t(V(N_{color}(t))\cup E(N_{color}(t)))$: We by each thing $s_{i}\in S_{thing}=\{s_{1},s_{2},\dots ,s_{N}\}$ replace a hyperedge set $\mathcal{E}_{i(t)}\in Q$, and then we get a dynamic network colored by the thing set $S_{thing}$ with $N\geq 2$, we write this dynamic network overall encrypted by the thing set $S_{thing}$ as $N_{thing}(t)$ (Ref. Corollary \ref{cor:things-hyperedge-set group}).

\begin{rem}\label{rem:333333}
The NOTE-algorithm has the following theocratical guarantee and computational security:
\begin{asparaenum}[(i) ]
\item Since two dynamic networks $N(t_i)\neq N(t_j)$ if $t_i\neq t_j$, thus, two every-zero hypergraph groups $\{G(\mathcal{E},t_i);[+][-]\}\neq \{G(\mathcal{E},t_j);[+][-]\}$.
\item The preappointed \emph{zero} $\mathcal{E}_{k}(t)\in \{G(\mathcal{E},t);[+][-]\}$ changes randomly over time $t\in [\alpha, \beta]$, in other words, two zeros $\mathcal{E}_{k}(t_i)\neq \mathcal{E}_{k}(t_j)$ for $t_i\neq t_j$.
\item By graph theory, we can get $|F_t(V(N(t)))|\leq \Delta(N(t))\leq N$ for holding $F_t(u)\neq F_t(v)$ for each edge $uv\in E(N(t))$.
\item Theorem \ref{thm:kinds-every-zero-number-based-string-groups} tells us: The hypergraph set $\mathcal{E}\big (\Lambda_{[1,N]}^2\big )$ can be classified into several kinds of every-zero number-based set-groups, such that each hyperedge set $\mathcal{E}\in \mathcal{E}(\Lambda_{[1,N]}^2)$ is in an every-zero number-based set-group of the set $\mathcal{E}\big (\Lambda_{[1,N]}^2\big )$. Thereby, the every-zero hypergraph group $\{G(\mathcal{E},t);[+][-]\}$ can run over on the hypergraph set $\mathcal{E}\big (\Lambda_{[1,N]}^2\big )$.
\item Each thing $s_{i}\in S_{thing}=\{s_{1},s_{2},\dots ,s_{N}\}$ can be a colored graph, or a vector, or a matrix, or a Topcode-matrix, or a hypergraph, or a number-based string, even a novel, or a story, or a poem, or an essay, \emph{etc.} (Ref. the every-zero hyperedge-set group above). Our goal is to increase the time cost for decipherers, and protects passwords created by using topology technology in the era of quantum computers.\qqed
\end{asparaenum}
\end{rem}

\subsection{The vertex/edge-intersected graphs of hypergraphs}

The vertex/edge-intersected graphs are some visualization tools of hypergraphs, which can help us understand, study, and apply hypergraphs.

\begin{defn}\label{defn:vertex-intersected-graph-hypergraph}
\cite{Yao-Ma-arXiv-2201-13354v1} Let $\mathcal{E}$ be a \emph{hyperedge set} defined on a finite set $\Lambda=\{x_1,x_2,\dots ,x_m\}$ (Ref. Definition \ref{defn:hypergraph-basic-definition}). Suppose that a $(p,q)$-graph $H$ admits a proper total set-labeling $F: V(H)\cup E(H)\rightarrow \mathcal{E}$ with $F(x)\neq F(y)$ for each edge $xy\in E(H)$, and $R_{est}(c_0,c_1,c_2,\dots ,c_m)$ with $m\geq 0$ is a constraint set, such that each edge $uv$ of $E(H)$ is colored with an edge color set $F(uv)$ and

(i) the first constraint $c_0$: $F(uv)\supseteq F(u)\cap F(v)\neq \emptyset$.

(ii) the $k$th constraint $c_k$: There is a function $\varphi_k$ for some $k\in [1,m]$, we have three numbers $c_{uv}\in F(uv)$, $a_u\in F(u)$ and $b_v\in F(v)$ holding the $k$th constraint $c_k:\varphi_k[a_u,c_{uv},b_v]=0$ true.

If a pair of hyperedges $e,e\,'\in \mathcal{E}$ with $|e\cap e\,'|\geq 1$ corresponds an edge $xy\in E(H)$, such that $F(x)=e$, $F(xy)\supseteq e\cap e\,'$ and $F(y)=e\,'$, then we call $H$ \emph{vertex-intersected graph} of the hypergraph $\mathcal{H}_{yper}=(\Lambda,\mathcal{E})$ subject to the constraint set $R_{est}(c_0,c_1,c_2,\dots ,c_m)$.\qqed
\end{defn}

\begin{rem}\label{rem:333333}
About Definition \ref{defn:vertex-intersected-graph-hypergraph}, we notice that:

(i) The total color set $F(V(H)\cup E(H))=\mathcal{E}$.

(ii) Each vertex $u$ of a vertex-intersected graph $H$ of the hypergraph $\mathcal{H}_{yper}=(\Lambda,\mathcal{E})$ corresponds to a hyperedge $e\in \mathcal{E}$, and each edge $uv$ of a vertex-intersected graph $H$ corresponds to $e\cap e\,'$ as $u$ corresponds to $e\in \mathcal{E}$ and $v$ corresponds to $e\,'\in \mathcal{E}$.

(iii) Each one of path, cycle and Hamilton cycle in a vertex-intersected graph $H$ of a hypergraph $\mathcal{H}_{yper}=(\Lambda,\mathcal{E})$ differs from that in the hypergraph $\mathcal{H}_{yper}=(\Lambda,\mathcal{E})$.

(iv) A vertex-intersected graph $H$ holds $F(uv)\not \subseteq F(uw)$ and $F(uw)\not \subseteq F(uc)$ for two adjacent edges $uv,uw\in E(H)$ with $v,w\in N_{ei}(u)$, then we say $H$ to be \emph{strong}.\qqed
\end{rem}

\begin{figure}[h]
\centering
\includegraphics[width=14.4cm]{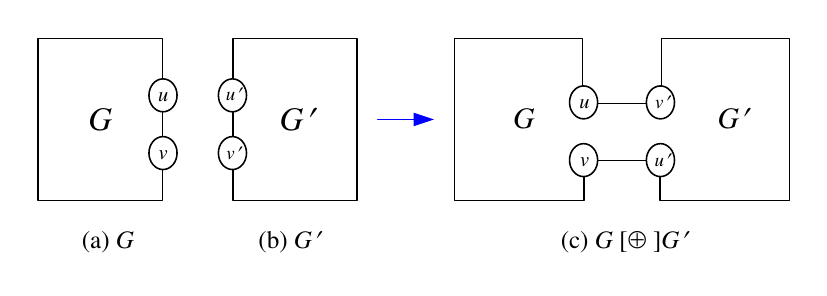}\\
\caption{\label{fig:inginite-v-intersedts}{\small A diagram for illustrating Theorem \ref{thm:infinite-v-intersected-graphs}.}}
\end{figure}

In Fig.\ref{fig:inginite-v-intersedts}, a graph $G$ is a vertex-intersected graph of a hypergraph, and the graph $G\,'$ is a copy of the vertex-intersected graph $G$, so the graph $G[\oplus]G\,'$ is a vertex-intersected graph of the hypergraph after removing two edges $uv$ and $u\,'v\,'$ from two graphs $G$ and $G\,'$, and joining the vertex $u$ with the vertex $v\,'$ by a new edge $uv\,'$, and joining the vertex $v$ with the vertex $u\,'$ by a new edge $vu\,'$.

\begin{thm}\label{thm:infinite-v-intersected-graphs}
$^*$ A hypergraph $\mathcal{H}_{yper}=(\Lambda,\mathcal{E})$ has infinite vertex-intersected graphs.
\end{thm}

\begin{problem}\label{qeu:444444}
Based on Definition \ref{defn:vertex-intersected-graph-hypergraph}, we propose the following questions:

(i) \textbf{Find} large integer $m\geq 0$ for the constraint set $R_{est}(c_0,c_1,c_2,\dots ,c_m)$ appeared in Definition \ref{defn:vertex-intersected-graph-hypergraph}.

(ii) \textbf{How} many hyperedge sets based on a finite set $\Lambda$ are there?

(iii) \textbf{Characterize} a vertex-intersected graph $H$ of a hypergraph $\mathcal{H}_{yper}=(\Lambda,\mathcal{E})$ defined in Definition \ref{defn:vertex-intersected-graph-hypergraph}, such that

(iii-a) $F(uv)\supseteq F(u)\cap F(v)\neq \emptyset$ and $|F(u)\cap F(v)|\geq k\geq 2$ for each edge $uv\in E(H)$.

(iii-b) $F(uv)\supseteq F(u)\cap F(v)\neq \emptyset$ and $\big ||F(u)|-|F(v)|\big |=1$ for each edge $uv\in E(H)$.

(iv) Since, there are many vertex-intersected graphs of a hypergraph $\mathcal{H}_{yper}=(\Lambda,\mathcal{E})$ defined in Definition \ref{defn:vertex-intersected-graph-hypergraph}, find a vertex-intersected graph $H^*$ of the hypergraph $\mathcal{H}_{yper}=(\Lambda,\mathcal{E})$, such that any proper subgraph $T\subset H^*$ is not a vertex-intersected graph of the hypergraph $\mathcal{H}_{yper}=(\Lambda,\mathcal{E})$.
\end{problem}

\begin{defn} \label{defn:edge-intersected-graph-hypergraphs}
$^*$ An \emph{edge-intersected graph} $L$ of a hypergraph $\mathcal{H}_{yper}=(\Lambda,\mathcal{E})$ is a colored graph admitting an edge set-coloring $\psi:E(L)\rightarrow \mathcal{E}$, such that each hyperedge $e\in \mathcal{E}$ corresponds to an edge $uv\in E(L)$ holding $e=\psi(uv)$ true, and each vertex $x\in V(L)$ admits an induced vertex set-color $\psi(x)$ defined by
\begin{equation}\label{eqa:555555}
\psi(x)=\psi(xy_1)\cap \psi(xy_2)\cap \cdots \cap \psi(xy_d)\neq \emptyset
\end{equation} for $y_i\in N_{ei}(x)=\{y_i:i\in [1,d]\}$ with vertex degree $d=\textrm{deg}_{L}(x)$, and $|\psi(E(L))|=|\mathcal{E}|$.\qqed
\end{defn}

\begin{rem}\label{rem:333333}
In Definition \ref{defn:edge-intersected-graph-hypergraphs}, the condition $\psi(x)\neq \emptyset $ is \emph{stronger}, which makes that a hypergraph $\mathcal{H}_{yper}=(\Lambda,\mathcal{E})$ may correspond to more edge-intersected graphs.\qqed
\end{rem}

\begin{problem}\label{question:444444}
We consider some particular vertex/edge-intersected graphs as follows:
\begin{asparaenum}[\textbf{Extre}-1. ]
\item If the hyperedge set $\mathcal{E}$ is a strong hyperedge set defined in Definition \ref{defn:strong-proper-hyperedge-sets}, then a vertex-intersected graph $H$ of the hypergraph $\mathcal{H}_{yper}=(\Lambda,\mathcal{E})$ is a complete graph.
\item If the hyperedge set $\mathcal{E}$ is a proper hyperedge set defined in Definition \ref{defn:strong-proper-hyperedge-sets}, then a vertex-intersected graph $H$ of the hypergraph $\mathcal{H}_{yper}=(\Lambda,\mathcal{E})$ is \emph{connected}.
\item A vertex-intersected graph of a hypergraph $\mathcal{H}_{yper}=(\Lambda,\mathcal{E})$ in Definition \ref{defn:vertex-intersected-graph-hypergraph} is one of planar graph, $H$-graph, bipartite graph, Euler graph, or some particular graphs.
\end{asparaenum}
\end{problem}

\begin{figure}[h]
\centering
\includegraphics[width=16.5cm]{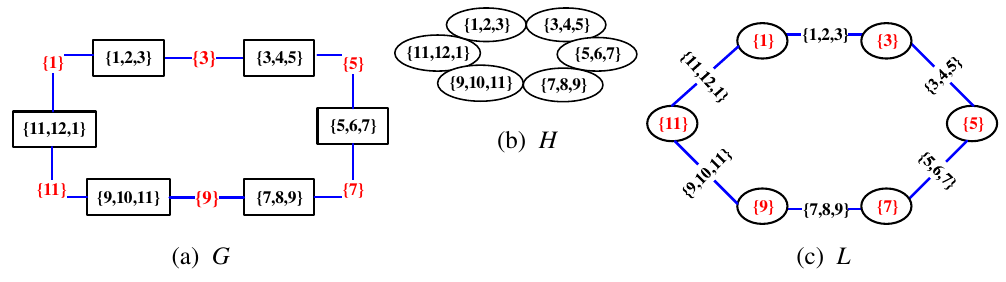}\\
\caption{\label{fig:inserted-graph-hypercycle}{\small (a) A vertex-intersected graph $G$ defined in Definition \ref{defn:vertex-intersected-graph-hypergraph} having a Hamilton cycle; (c) an edge-intersected graph $L$ defined in Definition \ref{defn:edge-intersected-graph-hypergraphs}.}}
\end{figure}

\begin{problem}\label{question:444444}
In Fig.\ref{fig:inserted-graph-hypercycle}, there are a hypergraph $\mathcal{H}_{yper}=(\Lambda,\mathcal{E})$ with $\Lambda=[1,12]$ and a hyperedge set
$$
\mathcal{E}=\big \{\{1,2,3\},\{3,4,5\},\{5,6,7\},\{7,8,9\},\{9,10,11\},\{11,12,1\}\big \}
$$ such that a vertex-intersected graph $G$ and the edge-intersected graph $L$ of a hypergraph $\mathcal{H}_{yper}=(\Lambda,\mathcal{E})$ have the same topological structure, namely, $G\cong L$. \textbf{Find} some conditions for a vertex-intersected graph $G_{vin}$ (as a \emph{private-key}) and the edge-intersected graph $G_{ein}$ (as a \emph{public-key}) of a hypergraph $\mathcal{H}_{yper}=(\Lambda,\mathcal{E})$, such that $G_{vin}\cong G_{ein}$.
\end{problem}
\begin{defn} \label{defn:more-terminology-group}
\cite{Yao-Ma-arXiv-2201-13354v1} About a vertex-intersected graph $H$ of a hypergraph $\mathcal{H}_{yper}=(\Lambda,\mathcal{E})$ defined in Definition \ref{defn:vertex-intersected-graph-hypergraph}, we, by means of terminology of graphs, define:
\begin{asparaenum}[\textbf{\textrm{Ter}}-1. ]
\item Each hyperedge $e\in \mathcal{E}$ has its own \emph{hyperedge degree} $\textrm{deg}_{\mathcal{E}}(e)=\textrm{deg}_H(x)$ if $F(x)=e$ for $x\in V(H)$.
\item The \emph{hyperedge degree sequence} $\{\textrm{deg}_{\mathcal{E}}(e_1),\textrm{deg}_{\mathcal{E}}(e_2),\dots, \textrm{deg}_{\mathcal{E}}(e_n)\}$ with $e_i\in \mathcal{E}$ satisfies Erd\"{o}s-Galia Theorem. In fact, each hyperedge degree
\begin{equation}\label{eqa:555555}
\textrm{deg}_{\mathcal{E}}(e_i)=\Big |\big \{e_j:e_i\cap e_j\neq \emptyset, e_j\in \mathcal{E}\setminus \{e_i\}\big \}\Big |
\end{equation}
\item If each hyperedge $e\in \mathcal{E}$ has its own hyperedge degree to be even, then $\mathcal{E}$ is called an \emph{Euler's hyperedge set}.
\item A \emph{hyperedge path} $\mathcal{P}$ in a hypergraph $\mathcal{H}_{yper}=(\Lambda,\mathcal{E})$ is
\begin{equation}\label{eqa:555555}
\mathcal{P}(e_1,e_m)=e_1e_2\cdots e_m=e_1(e_1\cap e_2)e_2(e_2\cap e_3)\cdots (e_{m-1}\cap e_m)e_m
\end{equation} with hyperedge intersections $e_{i}\cap e_{i+1}\neq \emptyset$ for $i\in [1,m-1]$, and each hyperedge $e_i$ is not an ear for $i\in [2,m-1]$. Moreover, the hyperedge path $\mathcal{P}$ is \emph{pure} if $e_1$ and $e_m$ are not ears of $\mathcal{E}$. If hyperedge intersections $|e_{i}\cap e_{i+1}|\geq r$ for $i\in [1,m-1]$, we call $\mathcal{P}$ $r$-\emph{uniform hyperedge path}.
\item If each pair of hyperedges $e,e\,'$ of $\mathcal{E}$ corresponds a hyperedge path $\mathcal{P}(e,e\,')$, then $\mathcal{H}_{yper}=(\Lambda,\mathcal{E})$ is a \emph{hyperedge connected hypergraph}, correspondingly, and its vertex-intersected graph is \emph{hyperedge connected}.
\item A \emph{hyperedge cycle} of a hypergraph $\mathcal{H}_{yper}=(\Lambda,\mathcal{E})$ is
\begin{equation}\label{eqa:555555}
\mathcal{C}_m=e_1e_2\cdots e_me_1=e_1(e_1\cap e_2)e_2(e_2\cap e_3)\cdots (e_{m-1}\cap e_m)e_m(e_m\cap e_1)e_1
\end{equation} with hyperedge intersections $e_{i}\cap e_{i+1}\neq \emptyset$ for $i\in [1,m-1]$ and $e_m\cap e_1\neq \emptyset$, as well as each $e_j$ with $j\in [1,m]$ is not an ear of $\mathcal{E}$. Furthermore, $\mathcal{C}_m$ is called \emph{hyperedge Hamilton cycle} if $m=|\mathcal{E}|$, and we call $\mathcal{C}_m$ $r$-\emph{uniform hyperedge cycle} if $|e_{i}\cap e_{i+1}|\geq r$ for $i\in [1,m-1]$ and $|e_m\cap e_1|\geq r$. By the way, if $\Lambda=\{x_1,x_2,\dots,x_m\}$, and $x_i\in e_{i}\cap e_{i+1}$ with $i\in [1,m-1]$, $x_m\in e_{m}\cap e_{1}$, we call $\mathcal{C}_m$ \emph{hypervertex Hamilton cycle}.

\item If a vertex-intersected graph $H$ is bipartite, then we have the hyperedge set $\mathcal{E}=X_{\mathcal{E}}\cup Y_{\mathcal{E}}$ with $X_{\mathcal{E}}\cap Y_{\mathcal{E}}=\emptyset$, such that any two hyperedges $e,e\,'\in X_{\mathcal{E}}$ (resp. $e,e\,'\in Y_{\mathcal{E}}$) satisfies $e\cap e\,'=\emptyset$.
\item A \emph{spanning hypertree} $\mathcal{T}$ of a vertex-intersected graph $H$ holds that each vertex color set $F(x)$ is not an ear of $\mathcal{E}$ if $x\not\in L(\mathcal{T})$, where $L(\mathcal{T})$ is the set of all leaves of $\mathcal{T}$, and $\mathcal{T}$ contains no hyperedge cycle.
\item If a vertex-intersected graph $H$ admits a proper vertex coloring $\theta:V(H)\rightarrow [1,\chi(H)]$, then the hyperedge set $\mathcal{E}$ admits a proper hyperedge coloring $\theta:\mathcal{E}\rightarrow [1,\chi(H)]$ such that $\theta(e)$ differs from $\theta(e\,')$ if $e\,'\cap e\neq \emptyset $ for any pair of subsets $e,e\,'\in \mathcal{E}$.
\item If a vertex-intersected graph $H$ is connected and the hyperedge set $\mathcal{E}$ contains no ear, so the \emph{diameter} $D(H)$ of a vertex-intersected graph $H$ is defined by
$$\max \{d(x,y): d(x,y)\textrm{ is the length of a shortest path between two vertices $x$ and $y$ in } H\}
$$ then the \emph{hyperdiameter} $D(\mathcal{E})$ of the hyperedge set $\mathcal{E}$ is defined by $D(\mathcal{E})=D(H)$.

\item A \emph{dominating hyperedge set} $\mathcal{E}_{domi}$ is a proper subset of the hyperedge set $\mathcal{E}$ and holds: Each hyperedge $e\in \mathcal{E}\setminus \mathcal{E}_{domi}$ corresponds some hyperedge $e^*\in \mathcal{E}_{domi}$ such that $e\cap e^*\neq \emptyset$.
\item The \emph{dual} $\mathcal{H}_{dual}$ of a hypergraph $\mathcal{H}_{yper}=(\Lambda,\mathcal{E})$ is also a hypergraph having its own vertex set $\Lambda_{dual}=\mathcal{E}=\{e_1,e_2,\dots ,e_n\}$ and its own hyperedge set $\mathcal{E}_{dual}=\{X_j\}^n_{j=1}$ with $X_j=\{e_i:x_j\in e_i\}$ and $n=|\mathcal{E}|$. Clearly, the dual of the hypergraph $\mathcal{H}_{dual}$ is just the original hypergraph $\mathcal{H}_{yper}=(\Lambda,\mathcal{E})$.\qqed
\end{asparaenum}
\end{defn}

\begin{rem}\label{rem:333333}
The problem of determining whether there exists a spanning tree in a given connected hypergraph is NP-complete, even when restricted to 3-regular linear hypergraphs or 4-uniform hypergraphs.\qqed
\end{rem}

\begin{thm} \label{them:hyperedge-Hamilton-cycles}
Let $G$ be a vertex-intersected graph of a hypergraph $\mathcal{H}_{yper}=(\Lambda,\mathcal{E})$, so a vertex-intersected graph $G$ admits a proper total set-labeling $F: V(G)\cup E(G)\rightarrow \mathcal{E}$ with $F(x)\neq F(y)$ for each edge $xy\in E(G)$ defined in Definition \ref{defn:vertex-intersected-graph-hypergraph}. Then we have:
\begin{asparaenum}[(1) ]
\item \cite{Yao-Ma-arXiv-2201-13354v1} If a vertex-intersected graph $G$ contains a Hamilton cycle, then the hypergraph $\mathcal{H}_{yper}$ contains a \emph{hyperedge Hamilton cycle}.
\item $^{*}$ If a vertex-intersected graph $G$ is a tree, then the hypergraph $\mathcal{H}_{yper}$ is acyclic by the Graham reduction defined in \cite{Jianfang-Wang-Hypergraphs-2008}.
\item $^{*}$ If a vertex-intersected graph $G$ holds $|F(E(G))|=|E(G)|$ and $F(uv)\cap F(xy)=\emptyset$ for any pair of edges $uv$ and $xy$ of $E(G)$, and a vertex-intersected graph $G$ is not a tree, then the hypergraph $\mathcal{H}_{yper}$ contains a \emph{hyperedge cycle}.
\end{asparaenum}
\end{thm}

\begin{example}\label{exa:8888888888}
In Fig.\ref{fig:1-example-hypergraph}, we can observe:

(a) An $8$-uniform hypergraph $\mathcal{H}_{yper}=(\Lambda,\mathcal{E})$ with its vertex set $\Lambda=[1,15]$ and hyperedge set $\mathcal{E}=\{e_1,e_2,e_3,e_4\}$;

(b) a vertex-intersected graph $G$ of the $8$-uniform hypergraph $\mathcal{H}_{yper}$ in (a), where $G$ admits a set-coloring $F:V(G)\rightarrow \mathcal{E}$ such that each edge $uv$ is colored with $F(uv)$ holding $F(uv)\supseteq F(u)\cap F(v)\neq \emptyset$, and $|F(uv)|=4$ for each edge $uv\in E(G)$, as well as $[1,15]=\Lambda=\bigcup _{e_i\in \mathcal{E}}e_i$.

We get a \emph{$7$-uniform adjacent hypergraph} $\overline{\mathcal{H}}_{yper}=(\Lambda,\overline{\mathcal{E}})$ defined in \cite{Jianfang-Wang-Hypergraphs-2008}, where $\overline{\mathcal{E}}=\{\overline{e}_1,\overline{e}_2,\overline{e}_3,\overline{e}_4\}$ with $\overline{e}_1=\{$3, 4, 5, 8, 10, 13, 14$\}$, $\overline{e}_2=\{$1, 4, 8, 9, 12, 14, 15$\}$, $\overline{e}_3=\{$1, 2, 3, 7, 13, 14, 15$\}$, $\overline{e}_4=\{$1, 2, 3, ,4, 5, 6, 12$\}$. Each edge of a vertex-intersected graph $\overline{H}$ of the adjacent hypergraph $\overline{\mathcal{H}}_{yper}$ is colored with a set having cardinality 3. Thereby, we call $\mathcal{H}_{yper}=(\Lambda,\mathcal{E})$ and $\overline{\mathcal{H}}_{yper}=(\Lambda,\overline{\mathcal{E}})$ \emph{ve-double uniform hypergraphs}.\qqed
\end{example}

\begin{example}\label{exa:8888888888}
We have the \emph{dual hypergraph} $\mathcal{H}_{dual}=(\Lambda_{dual},\mathcal{E}_{dual})$ of an $8$-uniform hypergraph $\mathcal{H}_{yper}=(\Lambda,\mathcal{E})$ shown in Fig.\ref{fig:1-example-hypergraph} (a) with its vertex set $\Lambda_{dual}=\mathcal{E}=\{e_1,e_2,e_3,e_4\}$, and its hyperedge set
$$
\mathcal{E}_{dual}=\big \{X_j\big \}^{15}_{j=1}=\big \{\{e_i:x_j\in e_i\in \mathcal{E}\}\big \}^{15}_{j=1}
$$ where $X_1=\{e_1\}$, $X_2=\{e_1,e_2\}$, $X_3=\{e_2\}$, $X_4=\{e_3\}$, $X_5=\{e_2,e_3\}$, $X_6=\{e_1,e_2,e_3\}$, $X_7=\{e_1,e_2,e_4\}$, $X_8=\{e_3,e_4\}$, $X_9=\{e_1,e_3,e_4\}$, $X_{10}=\{e_2,e_3,e_4\}$, $X_{11}=\{e_1,e_2,e_3,e_4\}$, $X_{12}=\{e_1,e_3\}$, $X_{13}=\{e_2,e_4\}$, $X_{14}=\{e_4\}$ and $X_{15}=\{e_1,e_4\}$.

In the above dual hypergraph $\mathcal{H}_{dual}$, we have the hyperedge degree $\textrm{deg}(e_i)=8$ with $i\in [1,4]$, so the hyperedge degree sequence $\{\textrm{deg}(e_1),\textrm{deg}(e_2),\textrm{deg}(e_3),\textrm{deg}(e_4)\}$ satisfies the Erd\"{o}s-Galia Theorem. The vertex-intersected graph $G_{dual}$ of the dual hypergraph $\mathcal{H}_{dual}$ admits a set-coloring $F_{dual}:V(G_{dual})\rightarrow \mathcal{E}_{dual}$, such that each induced edge color
$$
F_{dual}(u_iv_j)\supseteq F_{dual}(u_i)\cap F_{dual}(v_j)=X_i\cap X_j\neq \emptyset
$$ holds true.\qqed
\end{example}

\begin{example}\label{exa:8888888888}
Fig.\ref{fig:coloring-hypergraph} (b) shows us a vertex-intersected graph $G_{yper}$ of a hypergraph $\mathcal{H}_{yper}=(\Lambda,\mathcal{E})$ subject to the constraint set $R_{est}(c_1)$, and $\mathcal{H}_{yper}$ has its own vertex set $\Lambda=[1,12]$ and one hyperedge set
\begin{equation}\label{eqa:example-hypergraph}
{
\begin{split}
\mathcal{E}=&\big \{\{2,12\},\{1,11\},\{1,10\},\{6,10,11,12\},\{4,5,6\},\{5,7\}, \{7,8,9\},\\
& \{1,8\},\{4,9\},\{3,4\},\{2,8\},\{1,2,3\},\{1,7\}\big \}
\end{split}}
\end{equation} since $[1,12]=\bigcup _{e\in \mathcal{E}}e$.

Moreover, a vertex-intersected graph $G_{yper}$ of the hypergraph $\mathcal{H}_{yper}=([1,12]$, $\mathcal{E})$ contains a clique $\big \{\{1,2,3\},\{1,11\}$, $\{1,10\}$, $\{1,8\}$, $\{1,7\}\big \}$, and four hyperedge cycles $\big \{\{1,2,3\},\{2,8\},\{2,12\}\big \}$,
$\big \{ \{2,8\},\{1,8\},\{7,8,9\}\big \}$, $\big \{\{4,5,6\},\{4,9\},\{3,4\}\big \}$ and $\big \{\{1,7\},\{7,8,9\},\{5,7\}\big \}$.

Furthermore, the intersected-hypergraph $G_{yper}$ has a \emph{Hamilton hyperedge cycle}
$${
\begin{split}
C_{yper}=&\{2,12\}\{\textbf{\textcolor[rgb]{0.00,0.00,1.00}{12}}\}\{6,10,11,12\}\{\textbf{\textcolor[rgb]{0.00,0.00,1.00}{6}}\}\{4,5,6\}\{\textbf{\textcolor[rgb]{0.00,0.00,1.00}{5}}\}\{5,7\}
\{\textbf{\textcolor[rgb]{0.00,0.00,1.00}{7}}\}\{7,8,9\}\{\textbf{\textcolor[rgb]{0.00,0.00,1.00}{9}}\}\{4,9\}\{\textbf{\textcolor[rgb]{0.00,0.00,1.00}{4}}\}\{3,4\}\\
&\{\textbf{\textcolor[rgb]{0.00,0.00,1.00}{3}}\}\{1,2,3\}\{\textbf{\textcolor[rgb]{0.00,0.00,1.00}{1}}\}\{1,7\}\{\textbf{\textcolor[rgb]{0.00,0.00,1.00}{1}}\}
\{1,10\}\{\textbf{\textcolor[rgb]{0.00,0.00,1.00}{1}}\}\{1,11\}\{\textbf{\textcolor[rgb]{0.00,0.00,1.00}{1}}\}\{1,8\}\{\textbf{\textcolor[rgb]{0.00,0.00,1.00}{8}}\}\{2,8\}\{\textbf{\textcolor[rgb]{0.00,0.00,1.00}{2}}\}\{2,12\}
\end{split}}
$$ because of a vertex-intersected graph $G_{yper}$ has no \emph{ear}.

\begin{figure}[h]
\centering
\includegraphics[width=16.4cm]{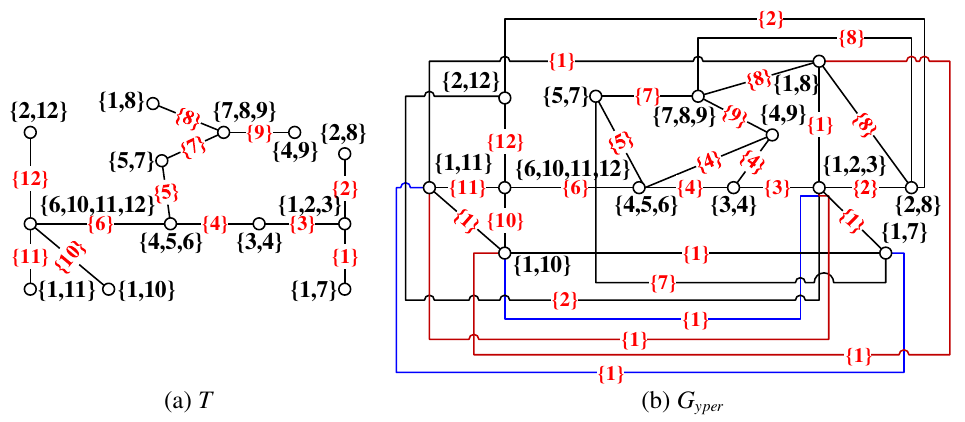}\\
\caption{\label{fig:coloring-hypergraph}{\small (a) A tree $T$ admits a graceful-intersection total set-labeling $F: V(T)\cup E(T)\rightarrow \mathcal{E}$, where $\mathcal{E}$ is defined in Eq.(\ref{eqa:example-hypergraph}); (b) a vertex-intersected graph $G_{yper}$ of a hypergraph $\mathcal{H}_{yper}=(\Lambda,\mathcal{E})$ subject to the constraint set $R_{est}(c_1)$, and $T$ is a spanning tree of a vertex-intersected graph $G_{yper}$.}}
\end{figure}

It is noticeable, there are two or more graphs admitting graceful-intersection total set-labelings defined on a unique hyperedge set $\mathcal{E}$, see examples $T,T_1,T_2$ and $T_3$ shown in Fig.\ref{fig:coloring-hypergraph} (a) and Fig.\ref{fig:more-trees-one-hypergraph} (a), Fig.\ref{fig:more-trees-one-hypergraph} (b) and Fig.\ref{fig:more-trees-one-hypergraph} (c). Clearly, a vertex-intersected graph $G_{yper}$ contains each of $T,T_1,T_2,T_3$ as its set-colored subgraphs. Notice that each set-colored tree $T_i$ has no \emph{ear}.\qqed
\end{example}

\begin{figure}[h]
\centering
\includegraphics[width=16.4cm]{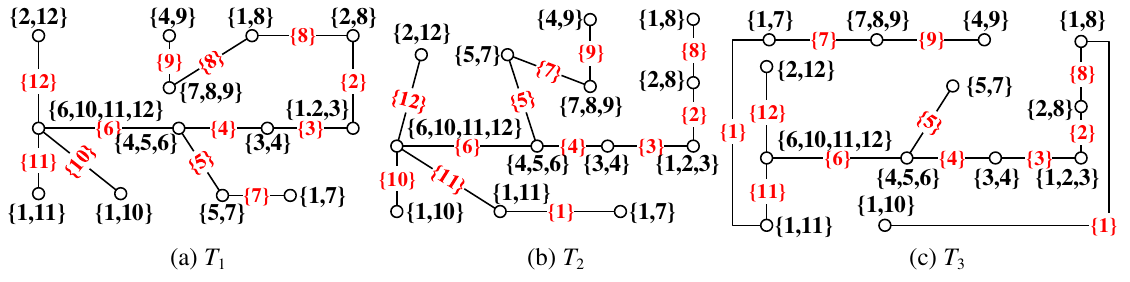}\\
\caption{\label{fig:more-trees-one-hypergraph}{\small Three trees $T_1,T_2,T_3$ admit three graceful-intersection total set-labelings defined on a unique hyperedge set $\mathcal{E}$ defined in Eq.(\ref{eqa:example-hypergraph}), however, $T_i\not\cong T_j$ for $i\neq j$.}}
\end{figure}

\begin{example}\label{exa:8888888888}
In Fig.\ref{fig:K4-join-K6}, we show a set-colored graph $K^*=K_4 \bowtie K_6$, where the hyperedge set $\mathcal{E}=\{e_1$, $e_2,\dots ,e_{10}\}$ defined on a finite set $\Lambda=\{a,b,c,e,f,g,h,i,j,k\}$, each vertex color set of $K_4$ is an ear, however each vertex color set of $K_6$ is not an ear, and moreover $K^*$ contains non-ear-hyperedge cycles $C_6$ and ear-hyperedge cycles $C_4$. The operation ``$\bowtie$'' is called \emph{intersection operation}.\qqed
\end{example}

\begin{figure}[h]
\centering
\includegraphics[width=16.4cm]{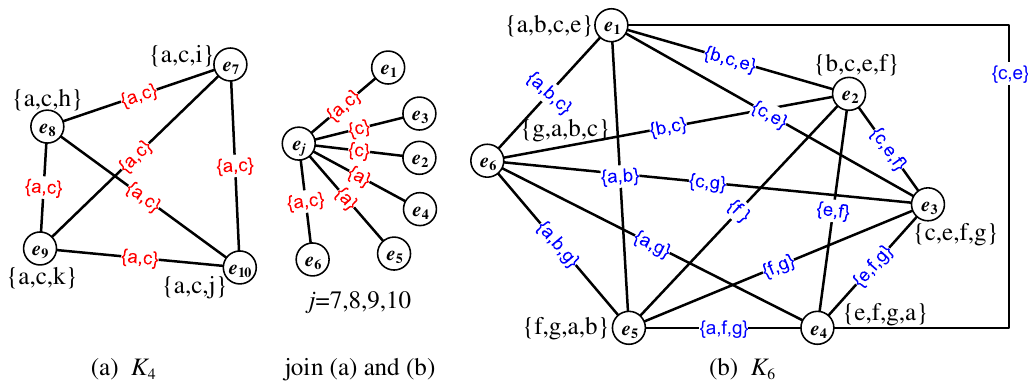}\\
\caption{\label{fig:K4-join-K6}{\small A set-colored graph $K^*=K_4 \bowtie K_6$, where (a) a complete graph $K_4$; (b) a complete graph $K_6$.}}
\end{figure}

\begin{rem}\label{rem:333333}
The concepts of hyperedge path, hyperedge cycle and Hamilton hyperedge cycle are defined here for distinguishing popular path, cycle and Hamilton cycle of graphs. A graph $G$ has cycles or paths, however, if $G$ is a set-colored graph defined by a set-coloring $F:V(G)\rightarrow \mathcal{E}$, where $\mathcal{E}$ is a hyperedge set defined on a finite set $\Lambda$, it may happen that $G$ has no hyperedge cycle or hyperedge path. Others, such as hyperedge degree, hyperedge coloring, hyperdiameter \emph{etc.}, are defined here for providing methods of measuring hypergraphs.

In graphs, a vertex $x$ has its own degree $\textrm{deg}(x)=|\{y: y\in N_{ei}(x)\}|=|N_{ei}(x)|$, and an edge $uv$ has its own degree $\textrm{deg}(uv)=|N_{ei}(u)|+|N_{ei}(v)|-2$. In hypergraphs, a hypervertex $x\in \Lambda$ has its own \emph{hypervertex degree} $\textrm{deg}_{\mathcal{E}}(x)=\big |\big \{e_i: x\in e_i\in \mathcal{E}\big \}\big |$, and a hyperedge $e_i\in \mathcal{E}$ has its own \emph{hyperedge degree} $\textrm{deg}(e_i)=\big |\big \{e_j: e_i\cap e_j\neq \emptyset,e_j\in \mathcal{E}\setminus \{e_i\}\big \}\big |$. Thereby, the hypervertex degree is a concept only related with a unique hyperedge set, not for other hyperedge sets.\qqed
\end{rem}

\begin{prop}\label{qeu:99999}
$^*$ If a hypergraph $\mathcal{H}_{yper}=(\Lambda,\mathcal{E})$ has its own vertex-intersected graphs to be connected, then each bipartition $(\mathcal{E}_1,\mathcal{E}_2)$ of $\mathcal{E}$ holds that there exists a pair of hyperedges $e\in \mathcal{E}_1$ and $e\,'\in \mathcal{E}_2$ satisfying $e\cap e\,'\neq \emptyset$.
\end{prop}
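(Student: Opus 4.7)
The plan is to argue by contraposition (equivalently, by contradiction), translating the hypergraph intersection condition directly into the edge/non-edge structure of the intersected-graph $H$ via Definition \ref{defn:intersected-graph-hypergraph}. The key observation is that the intersected-graph is built precisely so that two of its vertices are adjacent exactly when the corresponding hyperedges intersect; so a bipartition of $\mathcal{E}$ with no cross-intersections would force a partition of $V(H)$ with no cross-edges.

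First, I would fix an arbitrary bipartition $(\mathcal{E}_1,\mathcal{E}_2)$ of $\mathcal{E}$, with $\mathcal{E}_1,\mathcal{E}_2\neq \emptyset$ and $\mathcal{E}_1\cap \mathcal{E}_2=\emptyset$, and let $F:V(H)\rightarrow \mathcal{E}$ denote the bijective set-labeling giving $H$ its status as the intersected-graph. Set $V_i=\{x\in V(H):F(x)\in \mathcal{E}_i\}$ for $i=1,2$. Then $V_1\cup V_2=V(H)$, $V_1\cap V_2=\emptyset$, and both $V_1,V_2$ are non-empty because $F$ is a bijection onto $\mathcal{E}$.

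Next, I would suppose for contradiction that $e\cap e\,'=\emptyset$ for every pair $e\in \mathcal{E}_1$ and $e\,'\in \mathcal{E}_2$. Pick any $x\in V_1$ and $y\in V_2$; by the definition of the intersected-graph, if $xy\in E(H)$ then $F(x)\cap F(y)\neq \emptyset$, which contradicts the assumption since $F(x)\in \mathcal{E}_1$ and $F(y)\in \mathcal{E}_2$. Hence there is no edge of $H$ with one endpoint in $V_1$ and the other in $V_2$, so $(V_1,V_2)$ is a disconnecting vertex partition of $H$. This contradicts the hypothesis that $H$ is connected, completing the proof.

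I do not anticipate any serious obstacle: the statement is essentially a reformulation of the fact that the \emph{intersection graph} of a family of sets is connected if and only if the family cannot be split into two mutually disjoint subfamilies. The only subtle point worth spelling out is that the set-labeling $F$ in Definition \ref{defn:intersected-graph-hypergraph} puts \emph{every} hyperedge of $\mathcal{E}$ onto some vertex of $H$ (so $V_1,V_2$ are indeed both non-empty), and that the ``intersected-graph'' definition makes adjacency coincide exactly with non-empty intersection of hyperedge labels, so both the forward and the contrapositive direction of the translation $xy\in E(H)\iff F(x)\cap F(y)\neq \emptyset$ can be used.
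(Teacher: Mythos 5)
Your argument is correct, and in fact the paper states this proposition without supplying any proof at all, so there is nothing to compare it against; your contrapositive argument is the natural one and would serve as the missing proof. The translation is exactly right: letting $F:V(H)\rightarrow \mathcal{E}$ be the set-labeling of the intersected-graph and $V_i=\{x\in V(H):F(x)\in \mathcal{E}_i\}$, the condition $c_0$ of Definition \ref{defn:intersected-graph-hypergraph} gives $F(u)\cap F(v)\neq\emptyset$ for every edge $uv$, so a bipartition of $\mathcal{E}$ with no cross-intersection yields a partition $(V_1,V_2)$ of $V(H)$ with no cross-edges, contradicting connectedness. The one point that genuinely needs care --- and which you correctly flagged --- is that $F$ must be surjective onto $\mathcal{E}$ for both $V_1$ and $V_2$ to be non-empty; Definition \ref{defn:intersected-graph-hypergraph} only explicitly forces hyperedges that meet some other hyperedge to appear as vertex labels, so strictly speaking one must read the definition as putting $V(H)$ in bijection with $\mathcal{E}$ (as the paper's examples all do), otherwise a hyperedge disjoint from all others could be placed alone in $\mathcal{E}_1$ and falsify the statement even for a connected $H$. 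With that reading fixed, your proof is complete.
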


\begin{problem}\label{qeu:zzzzzzzzzzz}
\textbf{Characterize} vertex-intersected graphs having at least one of the following properties: (i) double-uniform; (ii) non-ear; and (iii) each vertex color set is an ear.
\end{problem}

\begin{figure}[h]
\centering
\includegraphics[width=15.4cm]{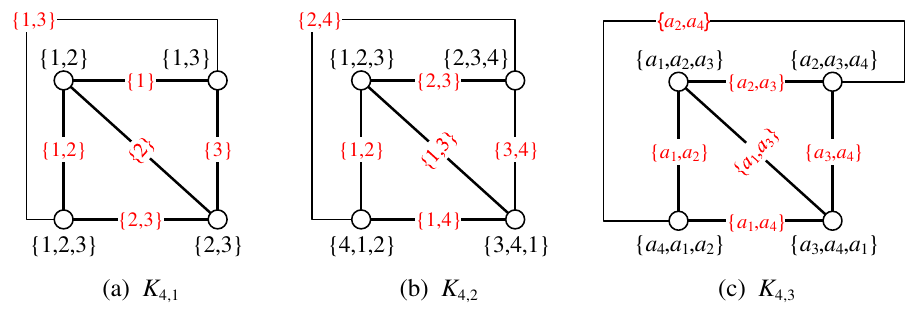}\\
\caption{\label{fig:optimal-set-matrix} {\small Three different set-colorings of the complete graph $K_4$.}}
\end{figure}

\begin{thm}\label{thm:theorem3535}
\cite{Yao-Ma-arXiv-2201-13354v1} For any connected $(p,q)$-graph $G$, there is a set-coloring $F:V(G)\rightarrow \mathcal{E}$, where the hyperedge set $\mathcal{E}\subset \Lambda^2$ and $\bigcup_{e\in \mathcal{E}}e=\Lambda$, such that $F(x)\neq F(y)$ for distinct vertices $x,y\in V(G)$ and induced edge colors $F(uv)=F(u)\cap F(v)$ for each edge $uv\in E(G)$ holding $F(xy)\neq F(wz)$ for distinct edges $xy,wz\in E(G)$, as well as $|\Lambda|$ is the smallest one for any set-coloring $F^*:V(G)\rightarrow \mathcal{E}^*$ with $\mathcal{E}^*\subset \mathcal{E}(\Lambda^2_{*})$ and $\bigcup_{e\in \mathcal{E}^*}e=\Lambda_{*}$, and moreover $F(x)\not\subset F(y)$ for distinct vertices $x,y\in V(G)$ and $F(xy)\not\subset F(wz)$ for distinct edges $xy,wz\in E(G)$.
\end{thm}

For understanding Theorem \ref{thm:theorem3535}, see two examples $K_{4,1}$ and $K_{4,2}$ shown in Fig.\ref{fig:optimal-set-matrix}.

\begin{problem}\label{qeu:444444}
Let $\mathcal{E}\big (\Lambda^2\big )=\big \{\mathcal{E}_1,\mathcal{E}_2,\dots ,\mathcal{E}_{n(G)}\big \}$ be the set of hyperedge sets, such that each hyperedge set $\mathcal{E}_i$ holds $\Lambda=\bigcup _{e\in \mathcal{E}_i}e$ and forms a hypergraph $\mathcal{H}^i_{yper}=(\Lambda, \mathcal{E}_i)$, as well as $G_i$ is a vertex-intersected graph of the hypergraph $\mathcal{H}^i_{yper}=(\Lambda, \mathcal{E}_i)$. \textbf{Find} some connections between two hypergraphs $\mathcal{H}^i_{yper}=(\Lambda, \mathcal{E}_i)$ and $\mathcal{H}^j_{yper}=(\Lambda, \mathcal{E}_j)$ if $i\neq j$, and \textbf{estimate} the value of the number $n(G)$.
\end{problem}

\begin{rem}\label{rem:333333}
Notice that a vertex-intersected graph of a hypergraph $\mathcal{H}_{yper}=(\Lambda,\mathcal{E})$ subject to the constraint set $R_{est}(c_1)$ appeared in \cite{Jianfang-Wang-Hypergraphs-2008}.

(i) One set-type Topcode-matrix $T^{set}_{code}=(X^{set},~E^{set},~Y^{set})^{T}$ defined in Definition \ref{defn:set-type-topcode-matrix-definition} may correspond two or more vertex-intersected graphs. There are four set-colored graphs $T_1,T_2,T_3,T_4$ shown in Fig.\ref{fig:vertex-split-k-4}, they correspond the set-type Topcode-matrix $H^{yper}_{code}(K_{4,2})$ shown in Eq.(\ref{eqa:hyper-graph-set-topcode-matrix}). In the topological structure of view, $K_{4,2}$ shown in Fig.\ref{fig:optimal-set-matrix}(b), $T_1,T_2,T_3$ and $T_4$ are not isomorphic from each other. In fact, these four set-colored graphs $T_1,T_2,T_3,T_4$ are obtained from the set-colored graph $K_{4,2}$ and the vertex-splitting operation, conversely, each set-colored graph $T_i$ with $i\in[1,4]$ is \emph{graph homomorphism} to $K_{4,2}$, that is, $T_i\rightarrow K_{4,2}$.

(ii) By the set-type Topcode-matrix $T^{set}_{code}(H)$ defined in Eq.(\ref{eqa:set-type-topcode-matrix}), we have a \emph{string-type Topcode-matrix} $T^{string}_{code}(H)=(X^*,~E^*,~Y^*)^{T}$ with \emph{v-vector} $X^*=(x_1, x_2, \dots, x_q)$, \emph{e-vector} $E^*=(e_1$, $e_2 $, $ \dots $, $e_q)$ and \emph{v-vector} $Y^*=(y_1, y_2, \dots, y_q)$ consist of number-based strings $x_i$, $e_i$ and $y_i$ for $i\in [1,q]$, where each number-based string $x_i$ with $i\in [1,q]$ is a permutation of $a_{i,1},a_{i,2},\dots ,a_{i,A_i}$ with $a_{i,j}\in F(x_i)$ and cardinality $A_i=|F(x_i)|$, each number-based string $e_i$ is a permutation of $c_{i,1},c_{i,2},\dots ,c_{i,C_i}$ with $c_{i,j}\in F(e_i)$ and cardinality $C_i=|F(e_i)|$, and each number-based string $y_i$ is a permutation of $b_{i,1},b_{i,2},\dots ,b_{i,B_i}$ with $b_{i,j}\in F(y_i)$ and cardinality $B_i=|F(y_i)|$.

Thereby, one set-type Topcode-matrix $T^{set}_{code}(H)$ enables us to obtain $n(ABC)$ numbered-based string Topcode-matrices like $T^{string}_{code}(H)$, and we get $n(ABC)\cdot (3q)!$ number-based strings in total, where the number
\begin{equation}\label{eqa:555555}
n(ABC)=\prod^q_{i=1}(A_i)!\prod^q_{i=1}(C_i)!\prod^q_{i=1}(B_i)!
\end{equation}

It is meaningful to explore more applications of vertex-intersected graphs of hypergraphs defined in Definition \ref{defn:vertex-intersected-graph-hypergraph}.\qqed
\end{rem}

Observe a set-colored graph $K_{4,2}$ shown in Fig.\ref{fig:optimal-set-matrix}, and this graph $K_{4,2}$ corresponds its own set-type Topcode-matrix $T^{set}_{code}(K_{4,2})$ as follows:

\begin{equation}\label{eqa:hyper-graph-set-topcode-matrix}
\centering
{
\begin{split}
T^{set}_{code}(K_{4,2})= \left(
\begin{array}{cccccc}
\{1,2,3\} & \{1,2,3\} & \{1,2,3\} & \{4,1,2\} & \{2,3,4\} & \{2,3,4\}\\
\{1,2\} & \{1,3\} & \{2,3\} & \{1,4\} & \{2,4\} & \{3,4\}\\
\{4,1,2\} & \{3,4,1\} & \{2,3,4\} & \{3,4,1\} & \{4,1,2\} & \{3,4,1\}
\end{array}
\right)
\end{split}}
\end{equation}

\begin{figure}[h]
\centering
\includegraphics[width=16.4cm]{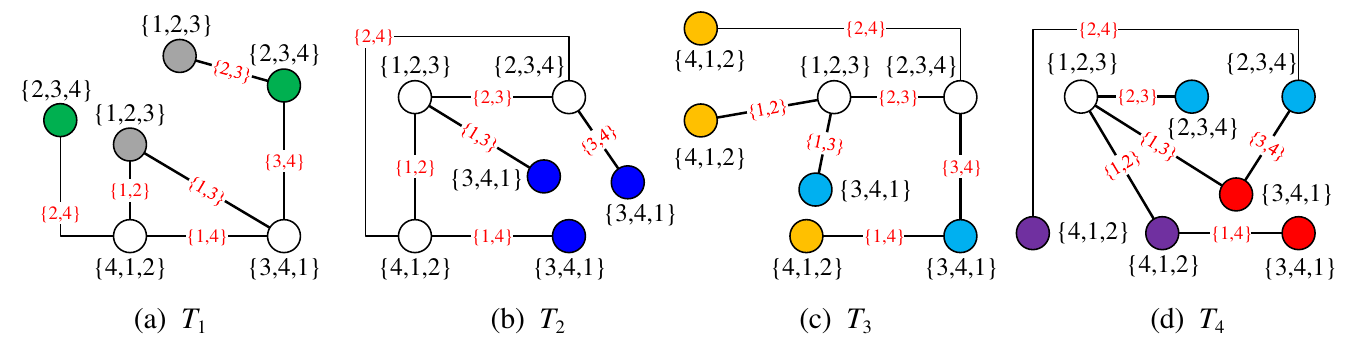}\\
\caption{\label{fig:vertex-split-k-4} {\small Four set-colored graphs correspond a unique set-type Topcode-matrix $H^{yper}_{code}(K_{4,2})$ shown in Eq.(\ref{eqa:hyper-graph-set-topcode-matrix}).}}
\end{figure}

\begin{thm}\label{thm:666666}
\cite{Yao-Ma-arXiv-2201-13354v1} If a $(p,q)$-graph $G$ admits a graceful-intersection total set-labeling defined on a hyperedge set $\mathcal{E}\subseteq [0,q]^2$ with $\bigcup_{e_i\in \mathcal{E}} e_i=[0,q]$, then a vertex-intersected graph of the hypergraph $\mathcal{H}_{yper}=([0,q],\mathcal{E})$ contains each of $(p,q)$-graphs admitting graceful-intersection total set-labelings defined on the hyperedge set $\mathcal{E}$.
\end{thm}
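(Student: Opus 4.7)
The plan is to exhibit, for any candidate $(p,q)$-graph $G'$ admitting a graceful-intersection total set-labeling on the hyperedge set $\mathcal{E}$, an explicit embedding into the intersected-graph $H$ of $\mathcal{H}_{yper}=([0,q],\mathcal{E})$. Conceptually, $H$ is the ``largest'' graph whose vertex-labels form a bijection with $\mathcal{E}$ and whose edges are dictated precisely by nonempty hyperedge intersections; every other set-labeled graph on $\mathcal{E}$ whose edges require intersecting labels must therefore be caught inside $H$ as a subgraph.

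Concretely, I would first fix the canonical set-coloring $F\colon V(H)\rightarrow \mathcal{E}$ of the intersected-graph, which by Definition \ref{defn:intersected-graph-hypergraph} is a bijection satisfying $xy\in E(H)$ iff $F(x)\cap F(y)\neq\emptyset$. Next, given any $(p,q)$-graph $G'$ admitting a graceful-intersection total set-labeling $F'\colon V(G')\rightarrow \mathcal{E}$ (Definition \ref{defn:graceful-intersection}), I would invoke the fact that a total set-labeling is injective on $V(G')$, so $F'$ selects $p$ distinct hyperedges of $\mathcal{E}$. Then I would define $\varphi\colon V(G')\rightarrow V(H)$ by $\varphi(v')=F^{-1}(F'(v'))$, which is well-defined and injective.

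The key edge-preservation step is: for every $u'v'\in E(G')$, the graceful-intersection hypothesis forces $F'(u')\cap F'(v')\neq\emptyset$ (a representative $a_{u'v'}$ must be chosen from this set). Translating through $\varphi$,
\begin{equation}
F(\varphi(u'))\cap F(\varphi(v'))=F'(u')\cap F'(v')\neq \emptyset,
\end{equation}
so by the defining property of $H$ we get $\varphi(u')\varphi(v')\in E(H)$. Hence $\varphi$ is an injective graph homomorphism of $G'$ into $H$, which is exactly what ``$H$ contains $G'$ as a subgraph'' means.

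The only real obstacle is bookkeeping, not mathematics: one must check that the graceful-intersection conditions imposed on $G$ and on each $G'$ are compatible (same ambient hyperedge set $\mathcal{E}$ with $\bigcup_{e\in\mathcal{E}}e=[0,q]$), and that the injectivity built into the total set-labeling convention (Definition \ref{defn:totally-normal-labeling} and Definition \ref{defn:55-set-labeling}) is enough to make $\varphi$ an injection into $V(H)$. I would close by remarking that the graceful condition itself (the existence of representatives filling $[1,q]$) is used only to guarantee $F'(u')\cap F'(v')\neq\emptyset$ for each $u'v'\in E(G')$, so the conclusion in fact extends to any intersection-type total set-labeling sharing the hyperedge set $\mathcal{E}$ with $G$, which also explains why distinct graceful-intersection witnesses $T,T_1,T_2,T_3$ in Fig.\ref{fig:coloring-hypergraph} and Fig.\ref{fig:more-trees-one-hypergraph} all sit inside the same $G_{yper}$.
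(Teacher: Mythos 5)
The paper states this theorem with no proof at all, so there is no argument of the authors' to compare yours against; your proposal is correct and supplies what is surely the intended argument. The two points on which it genuinely rests are exactly the ones you isolate: (i) the vertex map $\varphi=F^{-1}\circ F'$ is injective because a \emph{total} set-labeling assigns distinct hyperedges to distinct vertices (Definition \ref{defn:55-set-labeling}); without that injectivity the statement could fail, since the intersected-graph carries only one vertex per hyperedge; and (ii) every edge $u'v'$ of $G'$ forces $F'(u')\cap F'(v')\neq\emptyset$, because the graceful condition requires choosing a representative from $F'(u'v')=F'(u')\cap F'(v')$, while Definition \ref{defn:intersected-graph-hypergraph} guarantees that \emph{every} intersecting pair of hyperedges is realized as an edge of the intersected-graph, so the image of each edge of $G'$ is an edge of $H$. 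Your closing remark that gracefulness is used only to force the edge intersections to be nonempty is also accurate, and is consistent with the paper's illustration of $T,T_1,T_2,T_3$ all sitting inside $G_{yper}$ as set-colored subgraphs.
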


\begin{problem}\label{qeu:444444}
Suppose that a connected graph $G$ admits a graceful-intersection set-coloring $F$ defined on a hyperedge set $\mathcal{E}$ based on a set $\Lambda$, and $G$ is a vertex-intersected graph of the hypergraph $\mathcal{H}_{yper}=(\Lambda,\mathcal{E})$. \textbf{Does} the hyperedge set $\mathcal{E}=F(V(G))$ contain a perfect hypermatching $\{M_1,M_2, \dots,M_s\}\subset \mathcal{E}$ such that $M_i\cap M_j=\emptyset $ for $i\neq j$ and $\bigcup ^s_{i=1}M_i=\Lambda$?
\end{problem}

\begin{problem}\label{question:444444}
$^*$ If a hyperedge set $\mathcal{E}$ defined on a finite set $\Lambda$ holds $\Lambda =\bigcup_{e,e\,'\in\mathcal{E}}(e\cap e\,')$ true, then $\mathcal{E}^*=\{e\cap e\,':e,e\,'\in\mathcal{E}\}$ is a hyperedge set, \textbf{characterize} a vertex-intersected graph of a hypergraph $\mathcal{H}_{yper}=(\Lambda,\mathcal{E}^*)$.
\end{problem}

\textbf{The hypergraph vertex-intersected graph base} is $\textbf{\textrm{G}}_{yper}(t)=\{G_1(t), G_2(t),\dots ,G_{n}(t)\}$ with $t\in [a,b]$, where each graph $G_i(t)$ is a vertex-intersected graph of a hypergraph $\mathcal{H}_{yper}(t)=(\Lambda(t),\mathcal{E}_i(t))$, and each hyperedge set $\mathcal{E}_i(t)\in \mathcal{E}(\Lambda^2(t))=\{\mathcal{E}_1(t), \mathcal{E}_2(t),\dots ,\mathcal{E}_{n}(t)\}$ based on the vertex set $\Lambda(t)=\{s_1(t)$, $ s_2(t)$, $\dots$ ,$s_{m}(t)\}$, which is a thing set. Since each vertex-intersected graph $G_i(t)$ admits a total set-coloring $F_i:V(G_i(t))\cup E(G_i(t))\rightarrow \mathcal{E}_{i}(t)$, such that each edge $uv\in E(G_i(t))$ holds $F_i(uv)\supseteq F_i(u)\cap F_i(v)$ with $F_i(u)\cap F_i(v)\neq \emptyset$.

Do the non-multi-edge vertex-coinciding operation to a vertex-intersected graph $G_i(t)$ and another vertex-intersected graph $G_j(t)$, we get a vertex-coincided graph $L_{i,j}(t)=G_i(t)[\bullet_{k(i,j)}]G_j(t)$, and it admits a total set-coloring $F_{i,j}$ defined by

(i) $F_{i,j}(w_{i,j})=F_i(u_i)\cup F_i(v_i)$, where coincided vertices $w_{i,j}=u_i\bullet v_i$ for $u_i\in E(G_i(t))$ and $v_i\in E(G_j(t))$ with $i\in [1,k(i,j)]$);

(ii) $F_{i,j}(x)=F_i(x)$ for $x\in (V(G_i(t)\cup E(G_i(t))\setminus \{u_i:i\in [1,k(i,j)]\}$; and

(iii) $F_{i,j}(y)=F_j(y)$ for $y\in (V(G_j(t)\cup E(G_j(t))\setminus \{v_i:i\in [1,k(i,j)]\}$.

We call the following set
\begin{equation}\label{eqa:pan-lattices55}
{
\begin{split}
\textbf{\textrm{L}}_{yper}(Z^0[\textbf{\textrm{O}}]\textbf{\textrm{G}}(t))=\Big \{\big [\bullet_{\varphi}\big ]^n_{k=1}e_kG_k(t):~b_k\in Z^0,G_k(t)\in \textbf{\textrm{G}}_{yper}(t)\Big \},~\sum^n_{k=1}e_k\geq 1
\end{split}}
\end{equation} \emph{hypergraph vertex-intersected graph graphic lattice}, such that each graph $L\in \textbf{\textrm{L}}_{yper}(Z^0[\textbf{\textrm{O}}]\textbf{\textrm{G}}(t))$ has its own edge number $|E(L)|=\sum ^n_{k=1}e_k|E(G_k(t))|$, and is a vertex-intersected graph of a hypergraph $\mathcal{H}_{yper}(t)=(\Lambda(t),\mathcal{E}_k(t))$ based on some hyperedge set $\mathcal{E}_k(t)\in \mathcal{E}(\Lambda^2(t))$.

\begin{thm}\label{thm:666666}
$^*$ The hypergraph vertex-intersected graph graphic lattice $\textbf{\textrm{L}}_{yper}(Z^0[\textbf{\textrm{O}}]\textbf{\textrm{G}}(t))$ in Eg.(\ref{eqa:pan-lattices55}) has shown: There are infinite vertex-intersected graphs for a hypergraph $\mathcal{H}_{yper}=(\Lambda, \mathcal{E})$.
\end{thm}

\subsection{Parameterized hypergraphs}

We will use the following terminology and natation:

\begin{asparaenum}[$\bullet$~]
\item For a \emph{parameterized set} $\Lambda_{(m,b,n,k,a,d)}=S_{m,0,b,d}\cup S_{n,k,a,d}$ with
\begin{equation}\label{eqa:555555}
S_{m,0,b,d}=\big \{bd,(b+1)d\dots, md\big \},~ S_{n,k,a,d}=\big \{k+ad,k+(a+1)d,\dots ,k+(a+n)d\big \}
\end{equation} the power set $\Lambda^2_{(m,b,n,k,a,d)}$ collects all subsets of the parameterized set $\Lambda_{(m,b,n,k,a,d)}$.

\item A \emph{parameterized hyperedge set} $\mathcal{E}^P\subset \Lambda^2_{(m,b,n,k,a,d)}$ holds $\bigcup _{e\in \mathcal{E}^P}e=\Lambda_{(m,b,n,k,a,d)}$ true.
\end{asparaenum}

Motivated from the hypergraph definition, we present the parameterized hypergraph as follows:

\begin{defn}\label{defn:parameterized-hypergraph-basic-definition}
\cite{Bing-Yao-arXiv:2207-03381} A \emph{parameterized hypergraph} $\mathcal{P}_{hyper}=(\Lambda_{(m,b,n,k,a,d)},\mathcal{E}^P)$ defined on a parameterized hypervertex set $\Lambda_{(m,b,n,k,a,d)}$ holds:

(i) Each element of $\mathcal{E}^P$ is not empty and called a \emph{parameterized hyperedge};

(ii) $\Lambda_{(m,b,n,k,a,d)}=\bigcup _{e\in \mathcal{E}^P}e$, where each element of $\Lambda_{(m,b,n,k,a,d)}$ is called a \emph{vertex}, and the cardinality $|\Lambda_{(m,b,n,k,a,d)}|$ is the \emph{order} of $\mathcal{P}_{hyper}$;

(iii) $\mathcal{E}^P$ is called \emph{parameterized hyperedge set}, and the cardinality $|\mathcal{E}^P|$ is the \emph{size} of $\mathcal{P}_{hyper}$. \qqed
\end{defn}

\begin{problem}\label{problem:99999}
By Definition \ref{defn:parameterized-hypergraph-basic-definition}, let $S(\Lambda^P)=\big \{\mathcal{E}^P_1,\mathcal{E}^P_2,\dots ,\mathcal{E}^P_{M}\big \}$ be the set of parameterized hyperedge sets defined on a parameterized set $\Lambda_{(m,b,n,k,a,d)}$, so that each parameterized hyperedge set $\mathcal{E}^P_i$ with $i\in [1,M]$ holds $\Lambda_{(m,b,n,k,a,d)}=\bigcup _{e\in \mathcal{E}^P_i}e$ true. \textbf{Estimate} the number $M$ of parameterized hypergraphs defined on the parameterized set $\Lambda_{(m,b,n,k,a,d)}$.
\end{problem}

\begin{defn} \label{defn:operation-graphs-pa-hypergraph}
\cite{Bing-Yao-arXiv:2207-03381} Let $\textbf{\textrm{O}}=(O_1,O_2,\dots ,O_m)$ be an operation set with $m\geq 1$, and if an element $c$ is obtained by implementing an operation $O_i\in \textbf{\textrm{O}}$ to other two elements $a,b$, we write this fact as $c=a[O_i]b$. Suppose that a $(p,q)$-graph $G$ admits a proper $(k,d)$-total set-coloring $F: V(G)\cup E(G)\rightarrow \mathcal{E}^P$, where $\mathcal{E}^P$ is the parameterized hyperedge set of a parameterized hypergraph $\mathcal{P}_{hyper}=(\Lambda_{(m,b,n,k,a,d)},\mathcal{E}^P)$ defined in Definition \ref{defn:parameterized-hypergraph-basic-definition}. There are the following constraints:
\begin{asparaenum}[\textrm{\textbf{Pahy}}-1.]
\item \label{pahyper:parameterized-graph-3} Only one operation $O_k\in \textbf{\textrm{O}}$ holds $c_{uv}=a_u[O_k]b_v$ for each edge $uv\in E(G)$, where $a_u\in F(u)$, $b_v\in F(v)$ and $c_{uv}\in F(uv)$.
\item \label{pahyper:parameterized-graph-1} For each operation $O_i\in \textbf{\textrm{O}}$, each edge $uv\in E(G)$ holds $c_{uv}=a_u[O_i]b_v$ for $a_u\in F(u)$, $b_v\in F(v)$ and $c_{uv}\in F(uv)$.
\item \label{pahyper:parameterized-graph-2} Each $z\in F(uv)$ for each edge $uv\in E(G)$ corresponds to an operation $O_j\in \textbf{\textrm{O}}$, such that $z=x[O_j]y$ for some $x\in F(u)$ and $y\in F(v)$.
\item \label{pahyper:parameterized-graph-v} Each $a_x\in F(x)$ for any vertex $x\in V(G)$ corresponds to an operation $O_s\in \textbf{\textrm{O}}$ and an adjacent vertex $y\in N_{ei}(x)$, such that $z_{xy}=a_x[O_s]b_y$ for some $z_{xy}\in F(xy)$ and $b_y\in F(y)$.
\item \label{pahyper:parameterized-graph-4} Each operation $O_t\in \textbf{\textrm{O}}$ corresponds to some edge $xy\in E(G)$ holding $c_{xy}=a_x[O_t]b_y$ for $a_x\in F(x)$, $b_y\in F(y)$ and $c_{xy}\in F(xy)$.
\item \label{pahyper:parameterized-graph-must} If there are three different sets $e_i,e_j,e_k\in F(V(G))$ and an operation $O_r\in \textbf{\textrm{O}}$ holding $e_i[O_r]e_j\subseteq e_k$, then there exists an edge $xy\in E(G)$, such that $F(x)=e_i$, $F(y)=e_j$ and $F(xy)=e_k$.
\end{asparaenum}
\noindent \textbf{Then $G$ is}:
\begin{asparaenum}[\textrm{\textbf{Ograph}}-1.]
\item a \emph{$(k,d)$-$\textbf{\textrm{O}}$-vertex operation graph} of $\mathcal{P}_{hyper}$ if Pahy-\ref{pahyper:parameterized-graph-1} and Pahy-\ref{pahyper:parameterized-graph-must} hold true.
\item a \emph{$(k,d)$-$\textbf{\textrm{O}}$-edge operation graph} of $\mathcal{P}_{hyper}$ if Pahy-\ref{pahyper:parameterized-graph-1}, Pahy-\ref{pahyper:parameterized-graph-2} and Pahy-\ref{pahyper:parameterized-graph-must} hold true.
\item a \emph{$(k,d)$-$\textbf{\textrm{O}}$-total operation graph} of $\mathcal{P}_{hyper}$ if Pahy-\ref{pahyper:parameterized-graph-1}, Pahy-\ref{pahyper:parameterized-graph-2}, Pahy-\ref{pahyper:parameterized-graph-v} and Pahy-\ref{pahyper:parameterized-graph-must} hold true.
\item a \emph{$(k,d)$-edge operation graph} of $\mathcal{P}_{hyper}$ if Pahy-\ref{pahyper:parameterized-graph-2} and Pahy-\ref{pahyper:parameterized-graph-must} hold true.
\item a \emph{$(k,d)$-vertex operation graph} of $\mathcal{P}_{hyper}$ if Pahy-\ref{pahyper:parameterized-graph-v} and Pahy-\ref{pahyper:parameterized-graph-must} hold true.
\item a \emph{$(k,d)$-total operation graph} of $\mathcal{P}_{hyper}$ if Pahy-\ref{pahyper:parameterized-graph-2}, Pahy-\ref{pahyper:parameterized-graph-v} and Pahy-\ref{pahyper:parameterized-graph-must} hold true.
\item a \emph{$(k,d)$-non-uniform operation graph} of $\mathcal{P}_{hyper}$ if Pahy-\ref{pahyper:parameterized-graph-4} and Pahy-\ref{pahyper:parameterized-graph-must} hold true.
\item a \emph{$(k,d)$-one operation graph} of $\mathcal{P}_{hyper}$ if Pahy-\ref{pahyper:parameterized-graph-3} and Pahy-\ref{pahyper:parameterized-graph-must} hold true.\qqed
\end{asparaenum}
\end{defn}

\begin{example}\label{exa:8888888888}
If the operation set $\textbf{\textrm{O}}$ contains only one operation ``$\cap$'', which is the \emph{intersection operation} on sets, and the $(p,q)$-graph $G$ satisfies Pahy-\ref{pahyper:parameterized-graph-1} and Pahy-\ref{pahyper:parameterized-graph-must} in Definition \ref{defn:operation-graphs-pa-hypergraph}, then $G$ is a $(k,d)$-one operation graph of the parameterized hypergraph $\mathcal{P}_{hyper}$, also, $G$ is called \emph{$(k,d)$-vertex-intersected graph} of the parameterized hypergraph $\mathcal{P}_{hyper}$. As $(k,d)=(1,1)$, the graph $G$ is just the \emph{vertex-intersected graph} of a hypergraph $H_{hyper}$ defined in \cite{Jianfang-Wang-Hypergraphs-2008}.\qqed
\end{example}

\begin{thm}\label{thm:graph-admits-6-set-colorings}
\cite{Bing-Yao-arXiv:2207-03381} Each connected graph $G$ admits each one of the following $W$-constraint $(k,d)$-total set-colorings for $W$-constraint $\in \{$graceful, harmonious, edge-difference, edge-magic, felicitous-difference, graceful-difference$\}$.
\end{thm}
\begin{proof} Since a connected graph $G$ can be vertex-split into a tree $T$, such that we have a graph homomorphism $T\rightarrow G$ under a mapping $\theta:V(T)\rightarrow V(G)$, and each tree admits a graceful $(k,d)$-total coloring $g_1$, a harmonious $(k,d)$-total coloring $g_2$, an edge-difference $(k,d)$-total coloring $g_3$, an edge-magic $(k,d)$-total coloring $g_4$, a felicitous-difference $(k,d)$-total coloring $g_5$ and a graceful-difference $(k,d)$-total coloring $g_6$. Then the connected graph $G$ admits a $(k,d)$-total set-coloring $F$ defined by $F(u)=\{g_i(u^*):i\in [1,6],u^*\in V(T)\}$ for each vertex $u\in V(G)$ with $u=\theta(u^*)$ for $u^*\in V(T)$, and $F(uv)=\{g_i(u^*v^*):i\in [1,6],u^*v^*\in E(T)\}$ for each edge $uv\in E(G)$ with $uv=\theta(u^*)\theta(v^*)$ for each edge $u^*v^*\in E(T)$.

The proof of the theorem is complete.
\end{proof}

\begin{example}\label{exa:complete-graph-K-4}
Theorem \ref{thm:graph-admits-6-set-colorings} tells us: Each connected graph $G$ admits each one of the following $W$-constraint $(k,d)$-total set-colorings for $W$-constraint $\in \{$graceful, harmonious, edge-difference, edge-magic, felicitous-difference, graceful-difference$\}$. In Fig.\ref{fig:k-d-operation-graph}, doing the vertex-split to a complete graph $K_4$ produces a tree $T$, such that $E(K_4)=E(T)$, and ``$T\rightarrow K_4$'' is a graph homomorphism from the tree $T$ into the complete graph $K_4$. And moreover, there are six colored trees $T_i$ for $i\in [1,6]$, where

$T_1$ admits a graceful $(k,d)$-total labeling $f_1$;

$T_2$ admits a harmonious $(k,d)$-total labeling $f_2$;

$T_3$ admits a felicitous-difference $(k,d)$-total labeling $f_3$;

$T_4$ admits an edge-magic $(k,d)$-total labeling $f_4$;

$T_5$ admits an edge-difference $(k,d)$-total labeling $f_5$; and

$T_6$ admits a graceful-difference $(k,d)$-total labeling $f_6$.

\begin{figure}[h]
\centering
\includegraphics[width=16.4cm]{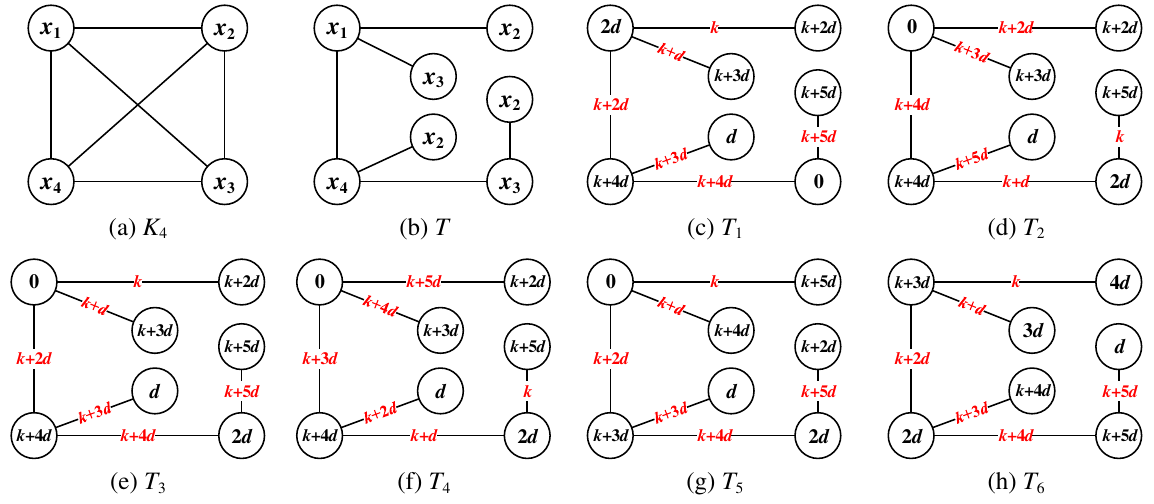}\\
\caption{\label{fig:k-d-operation-graph}{\small (a) A complete graph $K_4$; (b)-(h) six colored trees $T_i$ for $i\in [1,6]$.}}
\end{figure}

By Fig.\ref{fig:k-d-operation-graph}, we get a parameterized hypergraph $\mathcal{P}^*_{hyper}=(\Lambda_{(4,0,5,k,0,d)},\mathcal{E}^P)$ with the parameterized hypervertex set
$$\Lambda_{(4,0,5,k,0,d)}=S_{4,0,0,d}\cup S_{5,k,0,d}=\big \{0,d,2d,3d,4d\}\cup \{k,k+d,k+2d,k+3d,k+4d,k+5d\big \}$$
and the parameterized hyperedge set $\mathcal{E}^P=\{e_i:~i\in [1,10]\}$ containing $e_1=\{0,2d,k+3d\}$,

$e_2=\big \{0,d,3d,4d,k+2d,k+4d,k+5d\big \}$, $e_3=\big \{0,2d,3d,k+3d,k+4d,k+5d\big \}$,

$e_4=\big \{d,2d,k+3d,k+4d\big \}$, $e_5=\big \{0,k,k+2d,k+5d\big \}$,

$e_6=\big \{k+d,k+3d,k+4d\big \}$, $e_7=\big \{k+2d,k+3d,k+4d\big \}$,

$e_8=\big \{k,k+5d\big \}$, $e_9=\big \{d,k+2d,k+3d,k+5d\big \}$ and $e_{10}=\big \{k+d,k+4d\big \}$.

\vskip 0.2cm

Clearly, $\Lambda_{(m,b,n,k,a,d)}=\bigcup _{e_i\in \mathcal{E}^P}e_i$. The complete graph $K_4$ admits a proper $(k,d)$-total set-coloring $F: V(K_4)\cup E(K_4)\rightarrow \mathcal{E}^P$, where $F(x_1)=e_1$, $F(x_2)=e_2$, $F(x_3)=e_3$, $F(x_4)=e_4$, $F(x_1x_2)=e_5$, $F(x_1x_3)=e_6$, $F(x_1x_4)=e_7$, $F(x_2x_3)=e_8$, $F(x_2x_4)=e_9$ and $F(x_3x_4)=e_{10}$.

We have an operation set $\textbf{\textrm{O}}=(O_1,O_2,\dots ,O_7)$, where
\begin{asparaenum}[(1) ]
\item The operation $O_1$ is the graceful $(k,d)$-total labeling $f_1$, such that the constraint $f_1(uv)=|f_1(u)-f_1(v)|$ for each edge $uv \in E(T_1)$ and $f_1(E(T_1))=S_{5,k,0,d}$.
\item The operation $O_2$ is the harmonious $(k,d)$-total labeling $f_2$, such that the constraint $f_2(uv)=f_2(u)+f_2(v)~(\bmod~6d)$ for each edge $uv \in E(T_2)$ and $f_2(E(T_2))=S_{5,k,0,d}$.
\item The operation $O_3$ is the felicitous-difference $(k,d)$-total labeling $f_3$, such that the felicitous-difference constraint $|f_3(u)+f_2(v)-f_3(uv)|=2d$ for each edge $uv \in E(T_3)$ and $f_3(E(T_3))=S_{5,k,0,d}$.
\item The operation $O_4$ is the edge-magic $(k,d)$-total labeling $f_4$, such that the edge-magic constraint $f_4(u)+f_4(uv)+f_4(v)=2k+7d$ for each edge $uv \in E(T_4)$ and $f_4(E(T_4))=S_{5,k,0,d}$.
\item The operation $O_5$ is the edge-difference $(k,d)$-total labeling $f_5$, such that the edge-difference constraint $f_5(uv)+|f_5(u)-f_5(v)|=2k+5d$ for each edge $uv \in E(T_5)$ and $f_5(E(T_5))=S_{5,k,0,d}$.
\item The operation $O_6$ is the graceful-difference $(k,d)$-total labeling $f_6$, such that the graceful-difference constraint $\big ||f_6(u)-f_6(v)|-f_6(uv)\big |=d$ for each edge $uv \in E(T_6)$ and $f_6(E(T_6))=S_{5,k,0,d}$.
\item The operation $O_7$ is the intersection operation ``$\bigcap $'', such that $F(x_ix_j)\bigcap [F(x_i)\cap F(x_j)]\neq \emptyset $ for each edge $x_ix_j \in E(K_4)$.
\end{asparaenum}

Thereby, we claim that the complete graph $K_4$ is every one of the operation graphs Ograph-1, Ograph-2, Ograph-3, Ograph-4, Ograph-5 and Ograph-6 of the parameterized hypergraph $\mathcal{P}^*_{hyper}=(\Lambda_{(4,0,5,k,0,d)},\mathcal{E}^P)$ according to Definition \ref{defn:operation-graphs-pa-hypergraph}.\qqed
\end{example}

\begin{figure}[h]
\centering
\includegraphics[width=16.4cm]{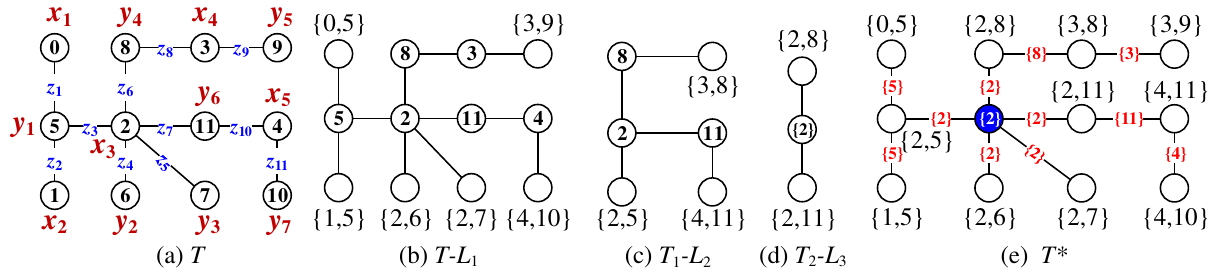}\\
\caption{\label{fig:VSET-coloring-algorithm-1}{\small An example for understanding the proof of Theorem \ref{thm:build-hyperedge-set}, cited from \cite{Yao-Ma-arXiv-2201-13354v1}.}}
\end{figure}

\begin{example}\label{exa:8888888888}
In Fig.\ref{fig:VSET-coloring-algorithm-1} (a), the tree $T$ has its own Topcode-matrix as follows
{\small
\begin{equation}\label{eqa:VSET-coloring-algorithm-matrix-1}
\centering
{
\begin{split}
T_{code}(T)&= \left(
\begin{array}{cccccccccccc}
x_1 & x_2 & x_3 & x_3 & x_3 & x_3 & x_3 & x_4 & x_4 & x_5 & x_5\\
x_1y_1 & x_2y_1 & x_3y_1 & x_3y_2 & x_3y_3 & x_3y_4 & x_3y_6 & x_4y_4 & x_4y_5 & x_5y_6 & x_5y_7\\
y_1 & y_1 & y_1 & y_2 & y_3 & y_4 & y_6 & y_4 & y_5 & y_6 &y_7
\end{array}
\right)_{3\times 11}\\
&=(X_T,E_T,Y_T)^{T}_{3\times 11}
\end{split}}
\end{equation}
}with the vertex-vector $X_T=(x_1, x_2, x_3, x_3, x_3, x_3, x_3, x_4, x_4, x_5, x_5)$, the edge-vector
$$E_T=(x_1y_1, x_2y_1, x_3y_1, x_3y_2, x_3y_3, x_3y_4, x_3y_6, x_4y_4, x_4y_5, x_5y_6, x_5y_7)=(z_1,z_2,\cdots ,z_{11})$$
and the vertex-vector $Y_T=(y_1, y_1, y_1, y_2, y_3, y_4, y_6, y_4, y_5,y_6,y_7)$, where $V(T)=X_T\cup Y_T$ and $E(T)=E_T$.

\vskip 0.4cm

Notice that the tree $T$ admits a vertex labeling $f$ holding $f(u)\neq f(v)$ for any pair of distinct vertices $u,v\in V(T)$ shown in Fig.\ref{fig:VSET-coloring-algorithm-1} (a), then we get the colored Topcode-matrix
{\footnotesize
\begin{equation}\label{eqa:VSET-coloring-algorithm-matrix-22}
\centering
{
\begin{split}
T_{code}(T,f)= \left(
\begin{array}{cccccccccccc}
0 & 1 & 2 & 2 & 2 & 2 & 2 & 3 & 3 & 4 & 4\\
f(z_1) & f(z_2) & f(z_3) & f(z_4) & f(z_5) & f(z_6) & f(z_7) & f(z_8) & f(z_9) & f(z_{10}) & f(z_{11})\\
5 & 5 & 5 & 6 & 7 & 8 & 11 & 8 & 9 & 11 & 10
\end{array}
\right)
\end{split}}
\end{equation}
}with $f(z_i)=z_i$ for $i\in [1,11]$.

For \emph{bipartite graphs}, especially, we define the \emph{unite Topcode-matrix} as follows
\begin{equation}\label{eqa:unit-Topcode-matrix}
{
\begin{split}
I\,^0=\left(
\begin{array}{ccccc}
0 & 0 & \cdots & 0\\
1 & 1 & \cdots & 1\\
1 & 1 & \cdots & 1
\end{array}
\right)_{3\times q}=(X\,^0,~E\,^0,~Y\,^0)^T
\end{split}}
\end{equation} with two vertex-vectors $X\,^0=(0, 0, \dots ,0)_{1\times q}$ and $Y\,^0=(1, 1, \dots ,1)_{1\times q}$, and the edge-vector $E\,^0=(1, 1, \dots ,1)_{1\times q}$.

By Eq.(\ref{eqa:unit-Topcode-matrix}) and Eq.(\ref{eqa:VSET-coloring-algorithm-matrix-22}), the tree $T$ has its own parameterized Topcode-matrix $P_{ara}(T,F|k,d)$ defined as
{\footnotesize
\begin{equation}\label{eqa:VSET-coloring-algorithm-matrix-33}
\centering
{
\begin{split}
&\quad P_{ara}(T,F)=k\cdot I\,^0+d\cdot T_{code}(T,f)\\
&= \left(
\begin{array}{cccccccccccc}
0 & d & 2d & 2d & 2d & 2d & 2d & 3d & 3d & 4d & 4d\\
F(z_1) & F(z_2) & F(z_3) & F(z_4) & F(z_5) & F(z_6) & F(z_7) & F(z_8) & F(z_9) & F(z_{10}) & F(z_{11})\\
k+5d & k+5d & k+5d & k+6d & k+7d & k+8d & k+11d & k+8d & k+9d & k+11d & k+10d
\end{array}
\right)
\end{split}}
\end{equation}
}with $F(z_i)=k+f(z_i)\cdot d$ for $i\in [1,11]$, where we can define $f(z_i)$ to be a number obtained by some $W$-constraint coloring of graph theory.

Fig.\ref{fig:VSET-coloring-algorithm-1} (e) shows us a set-coloring $g$ of the tree $T^*$ as follows:

\textbf{(a-1)} $g(x_1)=\{0,5\}$, $g(x_2)=\{1,5\}$, $g(x_3)=\{2\}$, $g(x_4)=\{3,8\}$ and $g(x_5)=\{4,11\}$;

\textbf{(a-2)} $g(y_1)=\{2,5\}$, $g(y_2)=\{2,6\}$, $g(y_3)=\{2,7\}$, $g(y_4)=\{2,8\}$, $g(y_5)=\{3.9\}$, $g(y_6)=\{2,11\}$ and $g(y_7)=\{4,10\}$; and

\textbf{(a-3)} $g(z_1)=\{5\}$, $g(z_2)=\{5\}$, $g(z_3)=\{2\}$, $g(z_4)=\{2\}$, $g(z_5)=\{2\}$, $g(z_6)=\{2\}$, $g(z_7)=\{2\}$, $g(z_8)=\{8\}$, $g(z_9)=\{3\}$, $g(z_{10})=\{11\}$, $g(z_{11})=\{4\}$.

\vskip 0.4cm

We have a parameterized hypervertex set
$${
\begin{split}
\Lambda_{(4,0,11,k,0,d)}&=S_{4,0,0,d}\cup S_{11,k,5,d}\\
&=\big \{0,d,2d,3d,4d\big \}\cup \big \{k+5d,k+6d,k+7d,k+8d,k+9d,k+10d,k+11d\big \}
\end{split}}
$$

Thereby, the tree $T$ admits a $(k,d)$-total set-coloring $F_{k,d}:V(T)\cup E(T)\rightarrow \Lambda^2_{(4,0,11,k,0,d)}$ defined as:

\textbf{(b-1)} The vertex $(k,d)$-colors are $F_{k,d}(x_1)=\{0,k+5d\}$, $F_{k,d}(x_2)=\{d,k+5d\}$, $F_{k,d}(x_3)=\{2d\}$, $F_{k,d}(x_4)=\{3d,k+8d\}$ and $F_{k,d}(x_5)=\{4d,k+11d\}$;

\textbf{(b-2)} The vertex $(k,d)$-colors are $F_{k,d}(y_1)=\{2d,k+5d\}$, $F_{k,d}(y_2)=\{2d,k+6d\}$, $F_{k,d}(y_3)=\{2d,k+7d\}$, $F_{k,d}(y_4)=\{2d,k+8d\}$, $F_{k,d}(y_5)=\{3d,k+9d\}$, $F_{k,d}(y_6)=\{2d,k+11d\}$ and $F_{k,d}(y_7)=\{4d,k+10d\}$; and

\textbf{(b-3)} The edge $(k,d)$-colors are $F_{k,d}(z_1)=\{k+5d\}$, $F_{k,d}(z_2)=\{k+5d\}$, $F_{k,d}(z_3)=\{2d\}$, $F_{k,d}(z_4)=\{2d\}$, $F_{k,d}(z_5)=\{2d\}$, $F_{k,d}(z_6)=\{2d\}$, $F_{k,d}(z_7)=\{2d\}$, $F_{k,d}(z_8)=\{k+8d\}$, $F_{k,d}(z_9)=\{3d\}$, $F_{k,d}(z_{10})=\{k+11d\}$, $F_{k,d}(z_{11})=\{4d\}$.

\vskip 0.4cm

We get a \emph{parameterized hypergraph} $\mathcal{P}_{hyper}=(\Lambda_{(4,0,11,k,0,d)},\mathcal{E}^P)$, where the parameterized hyperedge set $\mathcal{E}^P=\{e_j:j\in [1,12]\}$ with elements $e_1=\{0,k+5d\}$, $e_2=\{d,k+5d\}$, $e_3=\{2d\}$, $e_4=\{3d,k+8d\}$ and $e_5=\{4d,k+11d\}$, $e_6=\{2d,k+5d\}$, $e_7=\{2d,k+6d\}$, $e_8=\{2d,k+7d\}$, $e_9=\{2d,k+8d\}$, $e_{10}=\{3d,k+9d\}$, $e_{11}=\{2d,k+11d\}$ and $e_{12}=\{4d,k+10d\}$.

Since $\Lambda_{(4,0,11,k,0,d)}=\bigcup _{e_j\in \mathcal{E}^P}e_j$, then the colored tree $T$ admitting the $(k,d)$-total set-coloring $F_{k,d}$ is a subgraph of some vertex-intersected graph of the parameterized hypergraph $\mathcal{P}_{hyper}$.\qqed
\end{example}

\subsubsection{PWCSC-algorithms on colored trees}

The sentence ``Producing $W$-constraint set-coloring algorithm'' is abbreviated as ``PWCSC-algorithm'' in the following discussion. The content of this subsection is cited from \cite{Bing-Yao-arXiv:2207-03381}.

\vskip 0.4cm

\textbf{PWCSC-algorithm-A for ordered-path.}

\textbf{Initialization-A.} Suppose that $T$ is a tree admitting a $W$-constraint labeling $f$ holding $f(uv)\neq f(xy)$ for any pair of edges $uv$ and $xy$ of the tree $T$, and each edge $uv\in E(T)$ holds the $W$-constraint $f(uv)=W\langle f(u),f(v)\rangle $, as well as $|f(V(T))|=|V(T)|$.

\vskip 0.2cm

\textbf{Step A-1.} Do the VSET-coloring algorithm introduced in the proof of Theorem \ref{thm:build-hyperedge-set} to $T$ first. We select a longest path
$$P_1=w^1_1w^1_2w^1_3\cdots w^1_{m_1-2}w^1_{m_1-1}w^1_{m_1}
$$ of the tree $T$, then we have the neighbor set $NN_{ei}(w^1_2)=L(w^1_2)\cup \big \{w^1_3\big \}$, where the leaf set $L(w^1_2)=\big \{w^1_1, v^1_{2,1}, v^1_{2,2}$, $ \dots $, $v^1_{2,d_2}\big \}$ with $d_2=\textrm{deg}_T(w^1_2)-2$, and another adjacent neighbor set $NN_{ei}(w^1_{m_1-1})=\big \{w^1_{m_1-2}\big \}\cup L(w^1_{m_1-1})$ with the leaf set $L(w^1_{m_1-1})=\big \{w^1_{m_1}, u^1_{m_1,1}, u^1_{m_1,2},\dots , u^1_{m_1,d_{m_1}}\big \}$, where $d_{m_1}=\textrm{deg}_T(w^1_{m_1-1})$ $-2$. We define a total set-coloring $F$ for the three $T$ as: The vertex set-colors are
\begin{equation}\label{eqa:step-a-11}
F_{path}(x)=\big \{f(x),f(w^1_2)\big \}_1,~x\in L(w^1_2);\quad F_{path}(y)=\big \{f(y),f(w^1_{m_1-1})\big \}_1,~y\in L(w^1_{m_1-1})
\end{equation}

\textbf{Step A-2.} We get a tree $T_1=T-\big [L(w^1_2)\cup L(w^1_{m_1-1})\big ]$ by removing all leaves of two vertices $w^1_2$ and $w^1_{m_1-1}$ of the tree $T$, and then do the VSET-coloring algorithm to $T_1$. Notice that the tree $T_1$ admits the set-coloring $F$, so we select a longest path
$$P_2=w^2_1w^2_2w^2_3\cdots w^2_{m_2-2}w^2_{m_2-1}w^2_{m_2}
$$ of $T_1$, then we have the neighbor set $NN_{ei}(w^2_2)=L(w^2_2)\cup \big \{w^2_3\big \}$, where the leaf set $L(w^2_2)=\big \{w^2_1, v^2_{2,1}$, $ v^2_{2,2}$, $\dots $, $v^2_{2,d_2}\big \}$ with $d_2=\textrm{deg}_T(w^2_2)-2$, and another neighbor set $NN_{ei}(w^2_{m_2-1})=\big \{w^2_{m_2-2}\big \}\cup L(w^2_{m_2-1})$ with the leaf set $L(w^2_{m_2-1})=\big \{w^2_{m_2}, u^2_{m_2,1}, u^2_{m_2,2},\dots , u^2_{m_2,d_{m_2}}\big \}$, where $d_{m_2}=\textrm{deg}_T(w^2_{m_2-1})-2$. We get the following vertex set-colors
\begin{equation}\label{eqa:step-a-22}
F_{path}(x)=\big \{f(x),f(w^2_2)\big \}_2,~x\in L(w^2_2);\quad F_{path}(y)=\big \{f(y),f(w^2_{m_2-1})\big \}_2,~y\in L(w^2_{m_2-1})
\end{equation}

\textbf{Step A-3.} If the tree $T_2=T_1-\big [L(w^2_2)\cup L(w^2_{m_2-1})\big ]$ has its own diameter $D(T_2)\geq 3$, then we goto Step A-2.

\vskip 0.2cm

\textbf{Step A-4.} After $k-1$ times, we get the tree $T_k=T_{k-1}-\big [L(w^k_2)\cup L(w^k_{m_k-1})\big ]$ to be a star $K_{1,n}$ with its own diameter $D(K_{1,n})=2$, then we have the vertex set $V(K_{1,n})=\big \{x_0, y_i:i\in [1,n]\big \}$ and the edge set $E(K_{1,n})=\big \{x_0y_i:i\in [1,n]\big \}$. We have the following vertex set-colors
\begin{equation}\label{eqa:step-a-kk}
F_{path}(x_0)=\big \{f(x_0)\big \}_{k+1};\quad F_{path}(y_i)=\big \{f(y_i),f(x_0)\big \}_{k+1},~y_i\in L(x_0)
\end{equation} Notice that $|F_{path}(u)|=2$ for $u\in V(T)\setminus \big \{x_0\big \}$, and $|F_{path}(x_0)|=1$.

\textbf{Step A-5.} By the $W$-constraint we recolor the edges of the tree $T$ as follows:
\begin{equation}\label{eqa:step-a-ee}
F_{path}(uv)=[F_{path}(u)\cap F_{path}(v)]\cup \big \{W\langle a,b\rangle:~a\in F_{path}(u),~b\in F_{path}(v)\big \},~uv\in E(T)
\end{equation} since $F_{path}(u)\cap F_{path}(v)\neq \emptyset$.

\textbf{Step A-6.} Return the $W$-constraint proper total set-coloring $F$ of the tree $T$, since $F_{path}(s)\neq F_{path}(t)$ for any pair of adjacent, or incident elements $s,t\in V(T)\cup E(T)$.

\vskip 0.2cm

By the PWCSC-algorithm-A for ordered-path, we present a result as follows:

\begin{thm}\label{thm:set-ordered-graceful-PWCSC-algorithm-A for ordered-path}
\cite{Bing-Yao-arXiv:2207-03381} If a tree $T$ admits a set-ordered $W$-constraint labeling, then $T$ admits a $W$-constraint proper total set-coloring $F$ obtained by the PWCSC-algorithm-A for ordered-path, such that $|F(u)\cap F(v)|=1$ and $|F(uv)|\geq 2$ for each edge $uv\in E(T)$, and $|F(x)|=2$ for $x\in V(T)\setminus \{x_0\}$, and $|F(x_0)|=1$.
\end{thm}

\begin{example}\label{exa:PWCSC-algorithm-A-ordered-path-11}
An example for understanding the above PWCSC-algorithm-A for ordered-path is shown in Fig.\ref{fig:VSET-coloring-algorithm-2}. A tree $T$ shown in Fig.\ref{fig:VSET-coloring-algorithm-2} (a) admits a set-ordered graceful labeling $f$, such that the set-ordered constraint $\max f(X)=f(x_6)<f(y_1)=\min f(Y)$ holds true, where $X=\{x_i:i\in [1,6]\}$ and $Y=\{y_j:j\in [1,8]\}$. The last tree $T_4$ is a star $K_{1,4}$ shown in Fig.\ref{fig:VSET-coloring-algorithm-2} (e), $T_4$ admits a $W$-constraint proper total set-coloring. The tree $T_6$ admits a graceful proper total set-coloring $F$ satisfied Theorem \ref{thm:set-ordered-graceful-PWCSC-algorithm-A for ordered-path}. We have the following facts:

\textbf{Fact-1. }The tree $T$ has its own Topcode-matrix $T_{code}(T,f)$ as
{\small
\begin{equation}\label{eqa:PWCSC-algorithm-A for ordered-path-11}
\centering
{
\begin{split}
T_{code}(T,f)=\left(
\begin{array}{cccccccccccccc}
5 & 5 & 5 & 4 & 4 & 4 & 4 & 3 & 2 & 1 & 0 & 0 & 0\\
1 & 2 & 3 & 4 & 5 & 6 & 7 & 8 & 9 & 10 & 11 & 12 & 13\\
6 & 7 & 8 & 8 & 9 & 10 & 11 & 11 & 11 & 11 & 11 & 12 & 13
\end{array}
\right)_{3\times 13}=(X_T,E_T,Y_T)^T
\end{split}}
\end{equation}
}with
$${
\begin{split}
X_T=&(5, 5, 5, 4, 4, 4, 4, 3, 2, 1, 0, 0, 0)\\
=&\big (f(x_6),f(x_6),f(x_6),f(x_5),f(x_5),f(x_5),f(x_5),f(x_4),f(x_3),f(x_2),f(x_1),f(x_1),f(x_1)\big )\\
E_T=&(1, 2, 3, 4, 5, 6, 7, 8, 9, 10, 11, 12, 13)\\
=&\big (f(x_6y_1),f(x_6y_2),f(x_6y_3),f(x_5y_3),f(x_5y_4),f(x_5y_5),f(x_5y_6),f(x_4y_6),f(x_3y_6),f(x_2y_6),\\
&f(x_1y_6),f(x_1y_7),f(x_1y_8)\big )\\
Y_T=&(6, 7, 8, 8, 9, 10, 11, 11, 11, 11, 11, 12, 13)\\
=&\big (f(y_1),f(y_2),f(y_3),f(y_3),f(y_4),f(y_5),f(y_6),f(y_6),f(y_6),f(y_6),f(y_6),f(y_7),f(y_8)\big )
\end{split}}
$$Clearly, $\max X_T=5<6=\min Y_T$, and $|E_T|=[1,13]$.

\vskip 0.2cm

\textbf{Fact-2. }The tree $T$ has its own parameterized Topcode-matrix $P_{ara}(T,\theta|k,d)$ defined as
\begin{equation}\label{eqa:VSET-coloring-algorithm-matrix-33}
\centering
{
\begin{split}
P_{ara}(T,\theta|k,d)=k\cdot I\,^0+d\cdot T_{code}(T,f)=(X_P,E_P,Y_P)^T
\end{split}}
\end{equation} with the vertex $(k,d)$-color vectors $X_P,Y_P$ and the edge $(k,d)$-color vector $E_P$, where
$${
\begin{split}
X_P=&(5d,~5d,~5d,~4d,~4d,~4d,~4d,~3d,~2d,~d,~0,~0,~0)\\
E_P=&\big (k+d,~ k+2d,~ k+3d,~ k+4d,~ k+5d,~ k+6d,~ k+7d,~ k+8d,~ k+9d,~ k+10d,~ \\
&k+11d,~k+12d,~ k+13d\big )\\
Y_P=&\big (k+6d,~ k+7d,~ k+8d,~ k+8d,~ k+9d,~ k+10d,~ k+11d,~ k+11d,~ k+11d,~ k+11d,\\
&k+11d,~ k+12d,~ k+13d\big )
\end{split}}$$

\textbf{Fact-3. }By the graceful proper total set-coloring $F$, the tree $T_6$ has its own \emph{set-type Topcode-matrix} $S_{et}(T_6,F)=(X_{et},E_{et},Y_{et})^T$ with
\begin{equation}\label{eqa:set-coloring}
{
\begin{split}
X_{et}=&\big (F(x_6),F(x_6),F(x_6),F(x_5),F(x_5),F(x_5),F(x_5),F(x_4),F(x_3),F(x_2),\\
&F(x_1),F(x_1),F(x_1)\big )\\
E_{et}=&\big (F(x_6y_1),F(x_6y_2),F(x_6y_3),F(x_5y_3),F(x_5y_4),F(x_5y_5),F(x_5y_6),F(x_4y_6),\\
&F(x_3y_6),F(x_2y_6),F(x_1y_6),F(x_1y_7),F(x_1y_8)\big )\\
Y_{et}=&\big (F(y_1),F(y_2),F(y_3),F(y_3),F(y_4),F(y_5),F(y_6),F(y_6),F(y_6),F(y_6),F(y_6),\\
&F(y_7),F(y_8)\big )
\end{split}}
\end{equation} where

(i) $F(x_1)=\{11,0\}_2$, $F(x_2)=\{11,1\}_2$, $F(x_3)=\{11,2\}_2$, $F(x_4)=\{11,3\}_2$, $F(x_5)=\{4\}$, $F(x_6)=\{8,5\}_2$;

(ii) $F(y_1)=\{5,6\}_1$, $F(y_2)=\{5,7\}_1$, $F(y_3)=\{4,8\}_3$, $F(y_4)=\{4,9\}_3$, $F(y_5)=\{4,10\}_3$, $F(y_6)=\{4,11\}_3$, $F(y_7)=\{0,12\}_1$, $F(y_8)=\{0,13\}_1$;

(iii) By the $W$-constraint Eq.(\ref{eqa:step-a-ee}), the edge colors are

$F(x_6y_1)=\{5\}\cup \{0,1,2,3\}$, $F(x_6y_2)=\{5\}\cup \{0,1,3\}$, $F(x_6y_3)=\{8\}\cup \{0,1,4,3\}$,

$F(x_5y_3)=\{4\}\cup \{0\}$, $F(x_5y_4)=\{4\}\cup \{0,5\}$, $F(x_5y_5)=\{4\}\cup \{0,6\}$, $F(x_5y_6)=\{4\}\cup \{0,7\}$,

$F(x_4y_6)=\{11\}\cup \{0,1,7,8\}$, $F(x_3y_6)=\{11\}\cup \{0,2,7,9\}$, $F(x_2y_6)=\{11\}\cup \{0,3,7,10\}$,

$F(x_1y_6)=\{11\}\cup \{0,4,7\}$, $F(x_1y_7)=\{0\}\cup \{1,11,12\}$, $F(x_1y_8)=\{0\}\cup \{2,11,13\}$.

\textbf{Fact-4. }By Eq.(\ref{eqa:VSET-coloring-algorithm-matrix-33}), we get a $(k,d)$-total set-coloring $F_{k,d}$ of the tree $T_6$ and the \emph{$(k,d)$-set-type Topcode-matrix} $S_{et}(T_6,F_{k,d})=(X^{et}_{k,d},E^{et}_{k,d},Y^{et}_{k,d})^T$ with
\begin{equation}\label{eqa:k-dset-coloring}
{
\begin{split}
X^{et}_{k,d}=&\big (F_{k,d}(x_6),F_{k,d}(x_6),F_{k,d}(x_6),F_{k,d}(x_5),F_{k,d}(x_5),F_{k,d}(x_5),F_{k,d}(x_5),F_{k,d}(x_4),F_{k,d}(x_3),\\
&F_{k,d}(x_2),F_{k,d}(x_1),F_{k,d}(x_1),F_{k,d}(x_1)\big )\\
E^{et}_{k,d}=&\big (F_{k,d}(x_6y_1),F_{k,d}(x_6y_2),F_{k,d}(x_6y_3),F_{k,d}(x_5y_3),F_{k,d}(x_5y_4),F_{k,d}(x_5y_5),F_{k,d}(x_5y_6),\\
&F_{k,d}(x_4y_6),F_{k,d}(x_3y_6),F_{k,d}(x_2y_6),F_{k,d}(x_1y_6),F_{k,d}(x_1y_7),F_{k,d}(x_1y_8)\big )\\
Y^{et}_{k,d}=&\big (F_{k,d}(y_1),F_{k,d}(y_2),F_{k,d}(y_3),F_{k,d}(y_3),F_{k,d}(y_4),F_{k,d}(y_5),F_{k,d}(y_6),F_{k,d}(y_6),F_{k,d}(y_6),\\
&F_{k,d}(y_6),F_{k,d}(y_6),F_{k,d}(y_7),F_{k,d}(y_8)\big )
\end{split}}
\end{equation} where the vertex set-$(k,d)$-colors and the edge set-$(k,d)$-colors are

(1) $F_{k,d}(x_1)=\{k+11d,0\}_2$, $F_{k,d}(x_2)=\{k+11d,d\}_2$, $F_{k,d}(x_3)=\{k+11d,2d\}_2$, $F_{k,d}(x_4)=\{k+11d,3d\}_2$, $F_{k,d}(x_5)=\{4d\}$, $F_{k,d}(x_6)=\{k+8d,5d\}_2$;

(2) $F_{k,d}(y_1)=\{5d,k+6d\}_1$, $F_{k,d}(y_2)=\{5d,k+7d\}_1$, $F_{k,d}(y_3)=\{4d,k+8d\}_3$, $F_{k,d}(y_4)=\{4d,k+9d\}_3$, $F_{k,d}(y_5)=\{4d,k+10d\}_3$, $F_{k,d}(y_6)=\{4d,k+11d\}_3$, $F_{k,d}(y_7)=\{0,k+12d\}_1$, $F_{k,d}(y_8)=\{0,k+13d\}_1$;

(3) By the $W$-constraint Eq.(\ref{eqa:step-a-ee}), the edge $(k,d)$-colors are

$F_{k,d}(x_6y_1)=\{5d\}\cup \{0,k+d,2d,k+3d\}$, $F_{k,d}(x_6y_2)=\{5d\}\cup \{0,k+2d,k+d,k+3d\}$,

$F_{k,d}(x_6y_3)=\{k+8d\}\cup \{0,d,k+3d,k+4d\}$, $F_{k,d}(x_5y_3)=\{4d\}\cup \{0\}$,

$F_{k,d}(x_5y_4)=\{4d\}\cup \{0,k+5d\}$, $F_{k,d}(x_5y_5)=\{4d\}\cup \{0,k+6d\}$,

$F_{k,d}(x_5y_6)=\{4d\}\cup \{0,k+7d\}$, $F_{k,d}(x_4y_6)=\{k+11d\}\cup \{0,d,k+7d,k+8d\}$,

$F_{k,d}(x_3y_6)=\{k+11d\}\cup \{0,2d,k+7d,k+9d\}$, $F_{k,d}(x_2y_6)=\{k+11d\}\cup \{0,3d,k+7d,k+10d\}$,

$F_{k,d}(x_1y_6)=\{k+11d\}\cup \{0,4d,k+7d\}$, $F_{k,d}(x_1y_7)=\{0\}\cup \{k+d,k+11d,k+12d\}$,

$F_{k,d}(x_1y_8)=\{0\}\cup \{2d,k+11d,k+13d\}$.

\textbf{Fact-5. }As the tree $T$ shown in Fig.\ref{fig:VSET-coloring-algorithm-2} (a) is selected as a \emph{topological public-key}, then the tree $T_6$ shown in Fig.\ref{fig:VSET-coloring-algorithm-2} (g) is just a \emph{topological private-key}. Thereby, the bytes of a number-based string $D_T$ induced from the Topcode-matrix $T_{code}(T,f)$ is shorter than that of a number-based string $D_{T_6}$ from the set-type Topcode-matrix $S_{et}(T_6,F)$ defined in Fact-3, or the $(k,d)$-set-type Topcode-matrix $S_{et}(T_6,F_{k,d})$ defined in Fact-4, since they are related with the ordered paths of the trees $T$ and $T_6$ according to the PWCSC-algorithm-A for ordered-path. \qqed
\end{example}

\begin{figure}[h]
\centering
\includegraphics[width=16.4cm]{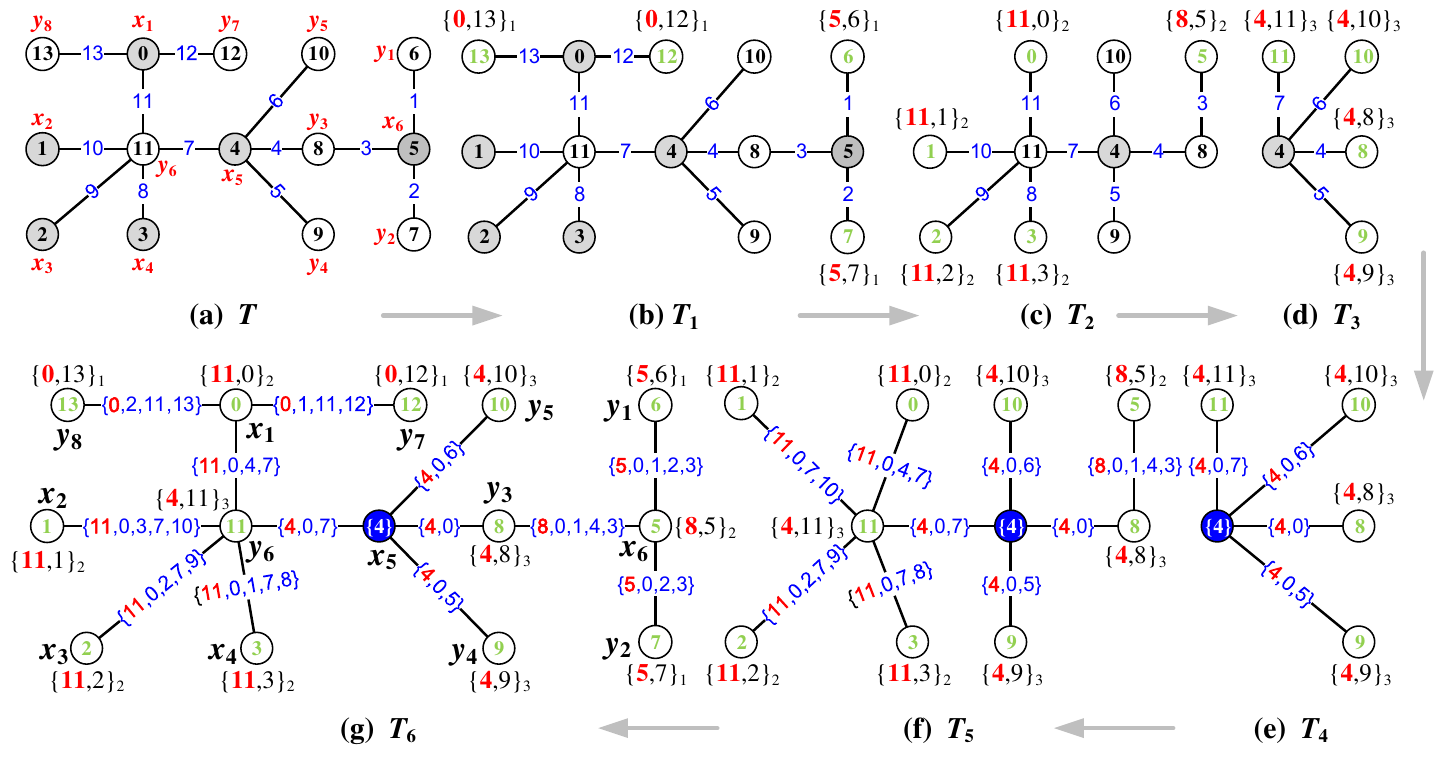}\\
\caption{\label{fig:VSET-coloring-algorithm-2}{\small An example for understanding the PWCSC-algorithm-A for ordered-path.}}
\end{figure}

\begin{thm}\label{thm:n-times-PWCSC-algorithm-A-ordered-path}
\cite{Bing-Yao-arXiv:2207-03381} After $n$ times of doing the PWCSC-algorithm-A for ordered-path to a tree $T$ admitting a set-ordered $W$-constraint labeling, we get a $W$-constraint set-coloring $F_n$ of the tree $T$ and $|F_n(u)\cap F_n(v)|\geq n=\lfloor \frac{m}{2}\rfloor$ for each edge $uv\in E(T)$ and $F_n(x)\neq F_n(y)$ for distinct vertices $x,y\in V(T)$, where $D(T)=m$ is the diameter of the tree $T$.
\end{thm}

See an example shown in Fig.\ref{fig:VSET-coloring-algorithm-3} for understanding Theorem \ref{thm:n-times-PWCSC-algorithm-A-ordered-path}. By Example \ref{exa:PWCSC-algorithm-A-ordered-path-11}, we present a theorem as follows:

\begin{figure}[h]
\centering
\includegraphics[width=16.4cm]{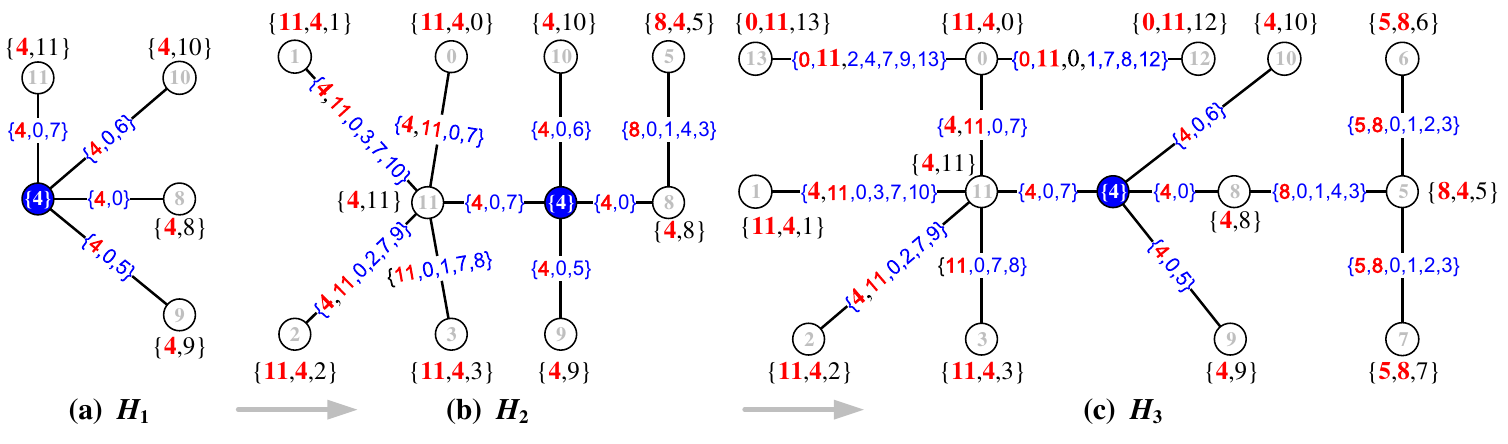}\\
\caption{\label{fig:VSET-coloring-algorithm-3}{\small The set-colored tree $H_3$ obtained by doing the PWCSC-algorithm-A for ordered-path to $D_6$ shown in Fig.\ref{fig:VSET-coloring-algorithm-2} (g).}}
\end{figure}

\begin{thm}\label{thm:set-ordered-w-cond-set-co}
\cite{Bing-Yao-arXiv:2207-03381} If a tree admits a set-ordered $W$-constraint labeling, then it admits:

(i) a $W$-constraint proper total set-coloring;

(ii) a $W$-constraint proper $(k,d)$-total coloring; and

(ii) a $W$-constraint proper $(k,d)$-total set-coloring;
\end{thm}

\vskip 0.4cm

\textbf{PWCSC-algorithm-B for level-leaf.}

\textbf{Initialization-B.} Suppose that $T$ is a tree admitting a $W$-constraint labeling $f$ holding $|f(V(T))|=|V(T)|$, and the induced edge color $f(uv)$ for each edge $uv\in E(T)$ holds the $W$-constraint $f(uv)=W\langle f(u),f(v)\rangle $.

\textbf{Step B-1.} Let $L(T)$ be the set of leaves of the tree $T$. Define a set-coloring $F_{le}$ for the tree $T$ as: $F_{le}(x)=\{f(x),f(u)\}$ for $x\in L(T)$ and $xu\in E(T)$.

\textbf{Step B-2.} Let $L(T_1)$ be the set of leaves of the tree $T_1$, where the tree $T_1=T-L(T)$. Color each leaf $y\in L(T_1)$ with $F_{le}(y)=\{f(y),f(v)\}$ if the edge $yv\in E(T)$.

\textbf{Step B-3.} After doing $k-1$ times Step B-2, if tree $T_k=T_{k-1}-L(T_{k-1})$ has its own diameter $D(T_k)\geq 3$ goto Step B-2. Otherwise, $T_k$ is a star $K_{1,n}$ with the vertex set $V(K_{1,n})=\{x_0, y_i:i\in [1,n]\}$ and the edge set $E(K_{1,n})=\{x_0y_i:i\in [1,n]\}$. Color each leaf $y_i\in L(T_k)$ with $F_{le}(y_i)=\{f(y_i),f(x_0)\}$ and $F_{le}(x_0)=\{f(x_0)\}$.

\textbf{Step B-4.} By the $W$-constraint we color each edge $uv\in E(T)$ with
\begin{equation}\label{eqa:w-condition-step-b-ee}
F_{le}(uv)=\big [F_{le}(u)\cap F_{le}(v)\big ]\cup \big \{W\langle a,b\rangle:~a\in F_{le}(u),~b\in F_{le}(v)\big \}
\end{equation}since $F_{le}(u)\cap F_{le}(v)\neq \emptyset$.

\textbf{Step B-5.} Return the $W$-constraint proper total set-coloring $F_{le}$ of the tree $T$ by the conditions in Initialization.

\vskip 0.2cm

\textbf{PWCSC-algorithm-C for neighbor-vertex.}

\textbf{Initialization-C.} Suppose that a tree $T$ admits a $W$-constraint labeling $f$ holding $|f(V(T))|=|V(T)|$, and the induced edge color $f(uv)$ for each edge $uv\in E(T)$ holds the $W$-constraint $f(uv)=W\langle f(u),f(v)\rangle $.

\textbf{Step C-1.} Define a total set-coloring $F_{nv}$ as: $F_{nv}(x)=\{f(y):y\in N_{ei}(x)\}$ for each $x\in V(T)$. Clearly, $F_{nv}(x)\neq F_{nv}(u)$ for distinct $x,u\in V(T)$, since $|f(V(T))|=|V(T)|$.

\textbf{Step C-2.} By the $W$-constraint we color the edges of the tree $T$ as
\begin{equation}\label{eqa:w-condition-step-c2-ee}
F_{nv}(uv)=\big [F_{nv}(u)\cap F_{nv}(v)\big ]\cup \big \{W\langle a,b\rangle:~a\in F_{nv}(u),~b\in F_{nv}(v)\big \},~uv\in E(T)
\end{equation}since $F_{nv}(u)\cap F_{nv}(v)\neq \emptyset$.

\textbf{Step C-3.} Return the $W$-constraint proper total set-coloring $F_{nv}$ of the tree $T$.

\vskip 0.2cm

\textbf{PWCSC-algorithm-D for neighbor-edge.}

\textbf{Initialization-D.} Suppose that $T$ is a tree admitting a $W$-constraint total labeling $f$ holding $f(uv)\neq f(xy)$ for any pair of edges $uv$ and $xy$ of the tree $T$, and each edge $uv\in E(T)$ holds the $W$-constraint $f(uv)=W\langle f(u),f(v)\rangle $.

\textbf{Step D-1.} Define a total set-coloring $F_{ne}$ by setting $F_{ne}(x)=\{f(xz):z\in N_{ei}(x)\}$ for each $x\in V(T)$, clearly, $F_{ne}(x)\neq F_{ne}(u)$ for distinct vertices $x,u\in V(T)$, since $|f(E(T))|=|E(T)|$.

\textbf{Step D-2.} By the $W$-constraint we color the edges of the tree $T$ as
\begin{equation}\label{eqa:w-condition-step-c2-ee}
F_{ne}(uv)=\big [F_{ne}(u)\cap F_{ne}(v)\big ]\cup \big \{W\langle a,b\rangle:~a\in F_{ne}(u),~b\in F_{ne}(v)\big \},~uv\in E(T)
\end{equation}since $F_{ne}(u)\cap F_{ne}(v)\neq \emptyset$.

\textbf{Step D-3.} Return the $W$-constraint proper total set-coloring $F_{ne}$ of the tree $T$.

\vskip 0.2cm

\textbf{PWCSC-algorithm-E for neighbor-edge-vertex.}

\textbf{Initialization-E.} Suppose that $T$ is a tree admitting a $W$-constraint total labeling $f$ holding $f(uv)\neq f(xy)$ for any pair of distinct edges $uv,xy\in E(T)$, and each edge $uv\in E(T)$ holds the $W$-constraint $f(uv)=W\langle f(u),f(v)\rangle $.

\textbf{Step E-1.} Define a total set-coloring $F_{nve}$ by setting
$$
F_{nve}(x)=\{f(y):y\in N_{ei}(x)\}\cup \{f(xz):z\in N_{ei}(x)\},~x\in V(T)
$$ Clearly, $F_{nve}(x)\neq F_{nve}(u)$ for distinct vertices $x,u\in V(T)$, since $|f(E(T))|=|E(T)|$.

\textbf{Step E-2.} By the $W$-constraint we color the edges of the tree $T$ as
\begin{equation}\label{eqa:w-condition-step-c2-ee}
F_{nve}(uv)=\big [F_{nve}(u)\cap F_{nve}(v)\big ]\cup \big \{W\langle a,b\rangle:~a\in F_{nve}(u),~b\in F_{nve}(v)\big \},~uv\in E(T)
\end{equation}since $F_{nve}(uv) \subseteq F_{nve}(u)\cap F_{nve}(v)$.

\textbf{Step E-3.} Return the $W$-constraint proper total set-coloring $F_{nve}$ of the tree $T$.

\subsubsection{Graph homomorphisms with parameterized set-colorings}

Since a connected non-tree $(p,q)$-graph $G$ can be vertex-split into a tree $T$ of $q+1$ vertices by doing the vertex-splitting tree-operation once time, so we have a set $T_{ree}(G)$ of trees obtained from vertex-splitting $G$, such that each tree $T\in T_{ree}(G)$ is graph homomorphism into $G$, that is $T\rightarrow G$. So, we can use a connected non-tree $(p,q)$-graph $G$ and its tree set $T_{ree}(G)$ in asymmetric cryptosystem, and make topological number-based strings generated from the graph $G$ and the tree set $T_{ree}(G)$.

\vskip 0.4cm

\textbf{Situation-A.} Suppose that a connected non-tree $(p,q)$-graph $G$ is not colored by any coloring. Since each tree $T\in T_{ree}(G)$ admits each one of the colorings: graceful $(k,d)$-total coloring, harmonious $(k,d)$-total coloring, (odd-edge) edge-magic $(k,d)$-total coloring, (odd-edge) graceful-difference $(k,d)$-total coloring, (odd-edge) edge-difference $(k,d)$-total coloring, (odd-edge) felicitous-difference $(k,d)$-total coloring and edge-antimagic $(k,d)$-total coloring introduced in Definition \ref{defn:kd-w-type-colorings} and \cite{Yao-Su-Ma-Wang-Yang-arXiv-2202-03993v1}.

Suppose that a tree $T\in T_{ree}(G)$ admits a $W$-constraint coloring/labeling $f$. By the PWCSC-algorithms introduced above, the tree $T$ admits a $W$-constraint set-coloring $F_f$ induced from the $W$-constraint coloring/labeling $f$.

Using the graph homomorphism $T\rightarrow G$ defined on a vertex coloring $\varphi:V(T)\rightarrow V(G)$, such that $\varphi(u)\varphi(v)\in E(G)$ for each edge $uv\in E(T)$, thereby, this graph $G$ admits a $W$-constraint set-coloring $F^*_f$ defined as:

(i) $F^*_f(w)=\{F_f(x): \varphi(x)=w,~x\in V(T)\}$ for $w\in V(G)$, and

(ii) $F^*_f(\varphi(u)\varphi(v))=F_f(uv)$ for $uv \in E(T)$.

\vskip 0.2cm

\textbf{Analysis of complexity of Situation-A}:
\begin{asparaenum}[\textbf{\textrm{Complexity-A.}}1.]
\item \textbf{Determining} the tree set $T_{ree}(G)$ obtained by vertex-splitting the connected non-tree $(p,q)$-graph $G$ into trees will meet the Subgraph Isomorphic Problem, although each tree $T\in T_{ree}(G)$ has exactly $q$ edges, since there are 279,793,450 trees of $26$ vertices and 5,759,636,510 rooted trees of $26$ vertices.
\item There is no way to know \textbf{how many} $W$-constraint colorings/labelings admitted by each tree $T\in T_{ree}(G)$, and moreover \textbf{no algorithm} can realize all colorings/labelings holding a fixed $W$-constraint for each tree $T\in T_{ree}(G)$, since it is sharp-P-hard.
\end{asparaenum}

\vskip 0.2cm

\textbf{Situation-B.} Suppose that a connected non-tree $(p,q)$-graph $G$ admits a $W$-constraint coloring/labeling $g$, so each tree $H\in T_{ree}(G)$ admits a $W$-constraint coloring/labeling $g^*$ induced by the $W$-constraint coloring/labeling $g$.

Since there is a vertex coloring $\theta:V(H)\rightarrow V(G)$ with $\theta(u)\theta(v)\in E(G)$ for each edge $uv\in E(H)$, the \emph{colored graph homomorphism} $H\rightarrow G$ means a \emph{Topcode-matrix homomorphism}
\begin{equation}\label{eqa:topcode-matrix-homomorphism}
T_{code}(H,g^*)\rightarrow T_{code}(G,g)
\end{equation}Thereby, we have two graph sets $S_{graph}[T_{code}(H,g^*)]$ and $S_{graph}[T_{code}(G,g)]$, such that

(a) Each graph $J\in S_{graph}[T_{code}(H,g^*)]$ admits a $W$-constraint coloring/labeling $f_J$ holding $T_{code}(J,f_J)=T_{code}(H,g^*)$; and

(b) each graph $I\in S_{graph}[T_{code}(G,g)]$ holds $T_{code}(I,f_I)=T_{code}(G,g)$ for a $W$-constraint coloring/labeling $f_I$ admitted by $I$.

Hence, we get a \emph{graph-set homomorphism}
\begin{equation}\label{eqa:graph-set-homomorphism}
S_{graph}[T_{code}(H,g^*)]\rightarrow S_{graph}[T_{code}(G,g)]
\end{equation}

\vskip 0.2cm

\textbf{Analysis of complexity of Situation-B}:

\begin{asparaenum}[\textbf{\textrm{Complexity-B.}}1.]
\item \textbf{Determining} two graph sets $S_{graph}[T_{code}(H,g^*)]$ and $S_{graph}[T_{code}(G,g)]$ will meet the Subgraph Isomorphic Problem, a NP-hard problem.
\item If a graph $I\in S_{graph}[T_{code}(G,g)]$ is a \emph{public-key}, \textbf{no algorithm} is for finding a \emph{private-key} $J\in S_{graph}[T_{code}(H,g^*)]$, such that $J\rightarrow I$ is just a colored graph homomorphism.
\end{asparaenum}

\begin{figure}[h]
\centering
\includegraphics[width=16.4cm]{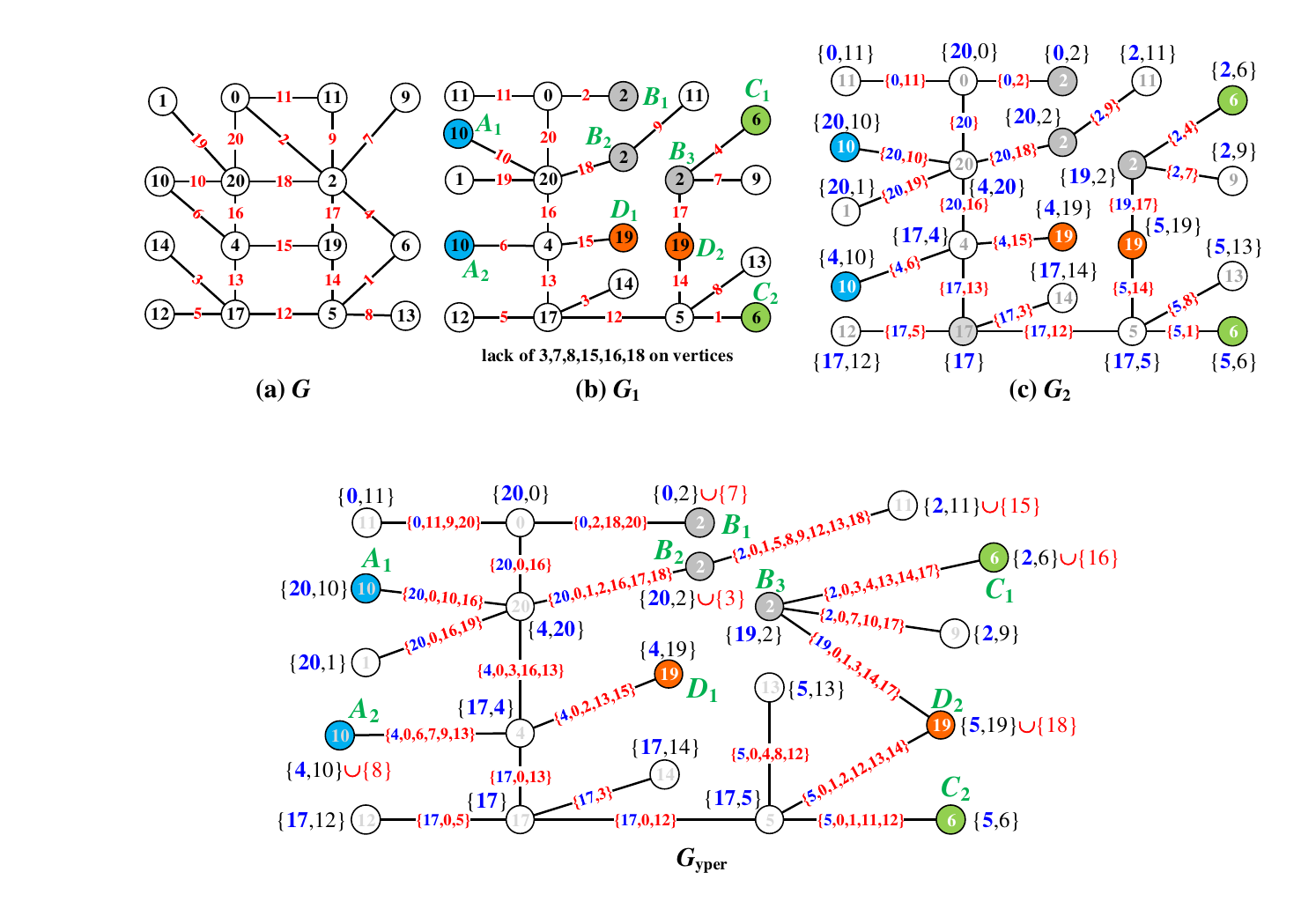}\\
\caption{\label{fig:PWCSC-algorithm-cycle-11}{\small An example for understanding the Situation-B, cited from \cite{Yao-Ma-arXiv-2201-13354v1}.}}
\end{figure}

\begin{figure}[h]
\centering
\includegraphics[width=13.4cm]{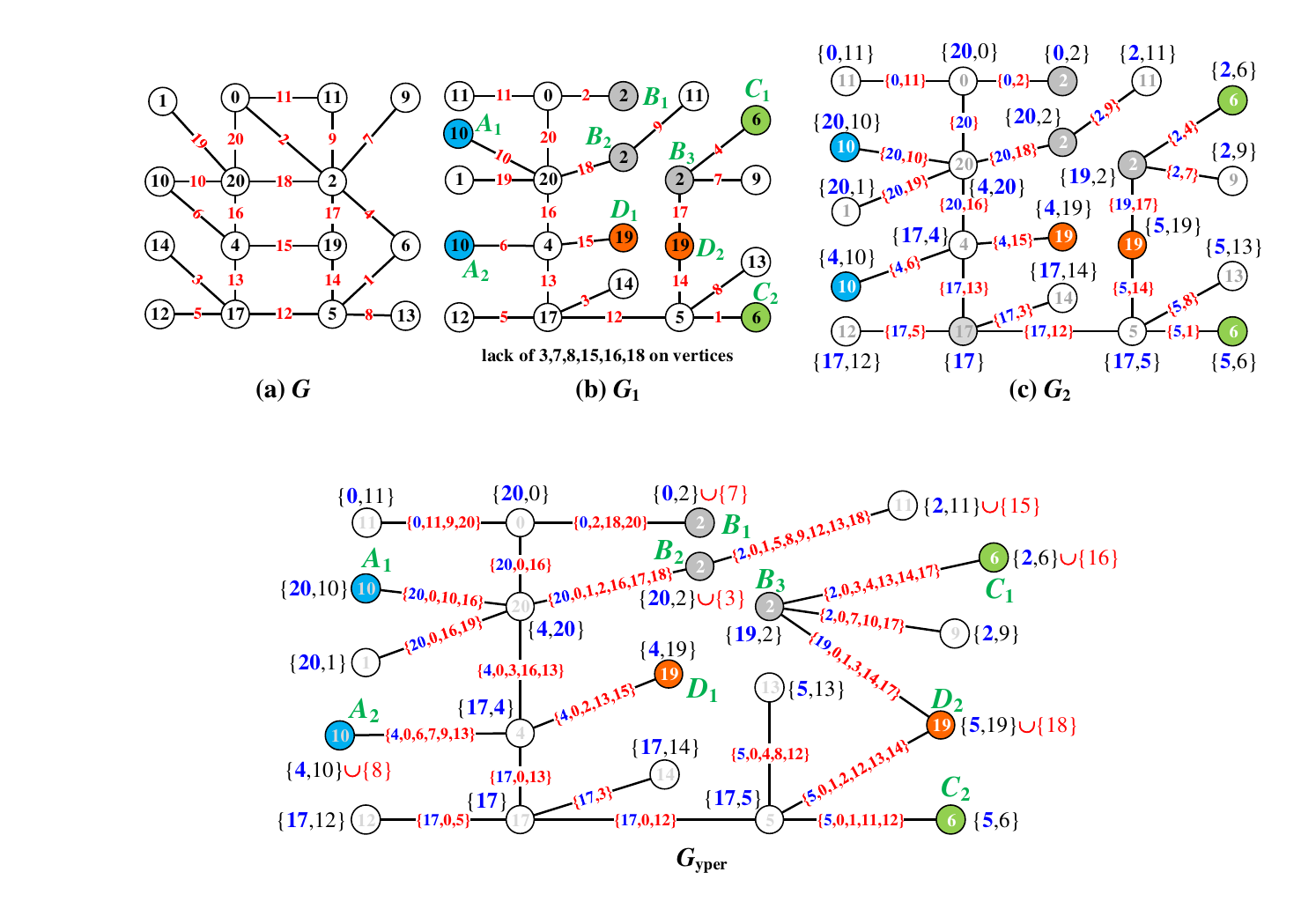}\\
\caption{\label{fig:PWCSC-algorithm-cycle-22}{\small Another example for understanding the Situation-B, cited from \cite{Yao-Ma-arXiv-2201-13354v1}.}}
\end{figure}

\subsubsection{Normal set-colorings based on hyperedge sets}

\begin{defn} \label{defn:tradition-vs-set-colorings}
\cite{Bing-Yao-arXiv:2207-03381} Let $G$ be a $(p,q)$-graph, and let $\Lambda$ be a finite set of numbers. There is a \emph{set-coloring} $F: S\rightarrow \mathcal{E}$ for a \emph{hyperedge set} $\mathcal{E}\subseteq \Lambda^2$, where $S\subseteq V(G)\cup E(G)$. There are the following constraints conditions:
\begin{asparaenum}[\textbf{\textrm{Nset}}-1.]
\item \label{old:vertex} $S=V(G)$.
\item \label{old:edge} $S=E(G)$.
\item \label{old:total} $V(G)\cup E(G)$.
\item \label{old:adjacent-vertex} $F(u)\neq F(v)$ for each edge $uv\in E(G)$.
\item \label{old:adjacent-edge} $F(uv)\neq F(uw)$ for adjacent edges $uv,uw\in E(G)$, where $w\in N_{ei}(u)$ and $u\in V(G)$.
\item \label{old:incident-edge-vertex} $F(u)\neq F(uv)$ and $F(v)\neq F(uv)$ for each edge $uv\in E(G)$.
\item \label{old:hyperedge-set} $\Lambda=\bigcup _{e\in \mathcal{E}}e$.
\item \label{old:join-oper-verticeice} $F(u)\cap F(v)\neq \emptyset$ for each edge $uv\in E(G)$.
\item \label{old:join-oper-dajacent-edges} $F(uv)\cap F(uw)\neq \emptyset$ for adjacent edges $uv,uw\in E(G)$.
\item \label{old:join-oper-vertex-edge} $F(uv)\cap F(u)\neq \emptyset$ and $F(uv)\cap F(v)\neq \emptyset$ for each edge $uv\in E(G)$.
\end{asparaenum}
\textbf{Then $F$ is}

\noindent --------- \emph{traditional set-colorings}

\begin{asparaenum}[\textbf{\textrm{Setc}}-1.]
\item a \emph{proper set-coloring} if Setcond-\ref{old:vertex} and Setcond-\ref{old:adjacent-vertex} hold true.
\item a \emph{proper edge set-coloring} if Setcond-\ref{old:edge} and Setcond-\ref{old:adjacent-edge} hold true.
\item a \emph{proper total set-coloring} if Setcond-\ref{old:total}, Setcond-\ref{old:adjacent-vertex}, Setcond-\ref{old:adjacent-edge} and Setcond-\ref{old:incident-edge-vertex} hold true.

\noindent --------- \emph{hyperedge set-colorings}

\item a \emph{proper hyperedge set-coloring} if Setcond-\ref{old:vertex}, Setcond-\ref{old:adjacent-vertex} and Setcond-\ref{old:hyperedge-set} hold true.
\item a \emph{proper edge hyperedge set-coloring} if Setcond-\ref{old:edge}, Setcond-\ref{old:adjacent-edge} and Setcond-\ref{old:hyperedge-set} hold true.
\item a \emph{proper total hyperedge set-coloring} if Setcond-\ref{old:total}, Setcond-\ref{old:adjacent-vertex}, Setcond-\ref{old:adjacent-edge}, Setcond-\ref{old:incident-edge-vertex} and Setcond-\ref{old:hyperedge-set} hold true.

\noindent --------- \emph{intersected set-colorings}

\item a \emph{proper intersected set-coloring} if Setcond-\ref{old:vertex}, Setcond-\ref{old:adjacent-vertex} and Setcond-\ref{old:join-oper-verticeice} hold true.
\item a \emph{proper edge intersected set-coloring} if Setcond-\ref{old:edge}, Setcond-\ref{old:adjacent-edge} and Setcond-\ref{old:join-oper-dajacent-edges} hold true.
\item a \emph{proper totally intersected set-coloring} if Setcond-\ref{old:total}, Setcond-\ref{old:adjacent-vertex}, Setcond-\ref{old:adjacent-edge}, Setcond-\ref{old:incident-edge-vertex}, Setcond-\ref{old:join-oper-verticeice}, Setcond-\ref{old:join-oper-dajacent-edges} and Setcond-\ref{old:join-oper-vertex-edge} hold true.

\noindent --------- \emph{intersected hyperedge set-colorings}

\item a \emph{proper intersected hyperedge set-coloring} if Setcond-\ref{old:vertex}, Setcond-\ref{old:adjacent-vertex}, Setcond-\ref{old:hyperedge-set} and Setcond-\ref{old:join-oper-verticeice} hold true.
\item a \emph{proper edge intersected hyperedge set-coloring} if Setcond-\ref{old:edge}, Setcond-\ref{old:adjacent-edge}, Setcond-\ref{old:hyperedge-set} and Setcond-\ref{old:join-oper-dajacent-edges} hold true.
\item a \emph{proper all-intersected hyperedge set-coloring} if Setcond-\ref{old:total}, Setcond-\ref{old:adjacent-vertex}, Setcond-\ref{old:adjacent-edge}, Setcond-\ref{old:incident-edge-vertex}, Setcond-\ref{old:hyperedge-set}, Setcond-\ref{old:join-oper-verticeice}, Setcond-\ref{old:join-oper-dajacent-edges} and Setcond-\ref{old:join-oper-vertex-edge} hold true.\qqed
\end{asparaenum}
\end{defn}

\begin{problem}\label{qeu:444444}
For a $(p,q)$-graph $G$ appeared in Definition \ref{defn:tradition-vs-set-colorings}, \textbf{consider} the following extremum questions:
\begin{asparaenum}[\textbf{\textrm{Eque}}-1.]
\item \textbf{Find} the \emph{extremum set-number} $\Lambda_{ex}(G)=\min_F \{|\Lambda|\}$ over all proper set-colorings.
\item \textbf{Find} the \emph{extremum set-index} $\Lambda\,'_{ex}(G)=\min_F \{|\Lambda|\}$ over all proper edge set-colorings.
\item \textbf{Find} the \emph{extremum total set-number} $\Lambda\,''_{ex}(G)=\min_F \{|\Lambda|\}$ over all proper total set-colorings.
\end{asparaenum}
\end{problem}

\begin{problem}\label{qeu:444444}
\textbf{Consider} the set-colorings defined in Definition \ref{defn:tradition-vs-set-colorings} based on the following cases:
\begin{asparaenum}[\textbf{\textrm{Case}}-1.]
\item Each hyperedge $e\in \mathcal{E}$ has its own cardinality $|e|\geq 2$.
\item Any two hyperedges $e,e\,'\in \mathcal{E}$ hold $e\not \subset e\,'$ and $e\,'\not \subset e$.
\item Each hyperedge $e\in \mathcal{E}$ holds $|e|=k\geq 2$, so the corresponding to each set-coloring is $k$-uniformed.
\item The sequence $\{|e_i|\}$ for $\mathcal{E}=\{e_i:i\in [1,n]\}$ forms a series, such as arithmetic progression, geometric series, Fibonacci series, \emph{etc}.
\end{asparaenum}
\end{problem}

\begin{defn} \label{defn:normai-set-coloring-intersected-graph}
\cite{Bing-Yao-arXiv:2207-03381} (A-1) If a graph $G$ admits a \emph{proper intersected hyperedge set-coloring} defined in Definition \ref{defn:tradition-vs-set-colorings}, and each pair of hyperedges $e,e\,' \in \mathcal{E}$ holding $e\cap e\,'\neq \emptyset$ corresponds to an edge $xy\in E(G)$ with $F(x)=e$ and $F(y)=e\,'$, then we call $G$ a \emph{vertex-intersected graph} of the hypergraph $\mathcal{H}_{yper}=(\Lambda,\mathcal{E})$ defined in \cite{Jianfang-Wang-Hypergraphs-2008}.

(A-2) Suppose that $F$ is the proper intersected hyperedge set-coloring of a vertex-intersected graph $G$ of some hypergraph $\mathcal{H}_{yper}=(\Lambda,\mathcal{E})$, then a vertex-intersected graph $G$ admits an \emph{edge-intersected total set-coloring} $F^*$ defined by $F^*(u)=F(u)$ for each vertex $u\in V(G)$, and $F^*(xy)=F(x)\cap F(y)$ for each edge $xy\in E(G)$.

(B-1) If a graph $G$ admits a \emph{proper edge intersected hyperedge set-coloring} defined in Definition \ref{defn:tradition-vs-set-colorings}, such that any two hyperedges $e,e\,' \in \mathcal{E}$ holding $e\cap e\,'\neq \emptyset$ correspond to two edges $xy,xw\in E(G)$ with $F(xy)=e$ and $F(xw)=e\,'$, then we call $G$ an \emph{edge-intersected graph} of the hypergraph $\mathcal{H}_{yper}=(\Lambda,\mathcal{E})$.

(B-2) Suppose that $\varphi$ is the proper edge intersected hyperedge set-coloring of the edge-intersected graph $G$ of some hypergraph $\mathcal{H}_{yper}=(\Lambda,\mathcal{E})$, then the edge-intersected graph $G$ admits a \emph{vertex-intersected total set-coloring} $\varphi^*$ defined as: $\varphi^*(u)=\{F(uv)\cap F(uw):~v,w\in N_{ei}(u)\}$ for each vertex $u\in V(G)$, and each edge $xy\in E(G)$ is colored with $\varphi^*(xy)=F(xy)$.\qqed
\end{defn}

For a set $S\,^i_X=\{a\,^i_1,a\,^i_2,\dots , a\,^i_{s(i)}\}$ with integer $s(i)\geq 1$ and $i\in [1,n]$ and integers $k,d\geq 0$, we define the following two operations:
\begin{equation}\label{eqa:555555}
{
\begin{split}
d\cdot \langle S\,^i_X\rangle =&\big \{d\cdot a\,^i_1,d\cdot a\,^i_2,\dots , d\cdot a\,^i_{s(i)}\big \}\\
k[+]d \langle S\,^i_X\rangle =&\big \{k+d\cdot a\,^i_1,k+d\cdot a\,^i_2,\dots , k+d\cdot a\,^i_{s(i)}\big \}
\end{split}}
\end{equation}
And moreover, for a set $S^*=\{S\,^1_X,S\,^2_X,\dots ,S\,^n_X\}$, we have two operations:

(i) $d \langle S^*\rangle =\{d\langle S\,^1_X\rangle, d\langle S\,^2_X\rangle,\dots , d\langle S\,^n_X\rangle\}$;

(ii) $k[+]d\langle S^*\rangle =\big \{k[+]d\langle S\,^1_X\rangle, k[+]d\langle S\,^2_X\rangle,\dots , k[+]d\langle S\,^n_X\rangle\big \}$.

\begin{defn} \label{defn:normai-kd-type-set-coloring}
\cite{Bing-Yao-arXiv:2207-03381} A connected bipartite $(p,q)$-graph $G$ has its own vertex set bipartition $V(G)=X\cup Y$ with $X\cap Y=\emptyset$ and admits a $W$-constraint set-coloring $F$ defined in Definition \ref{defn:tradition-vs-set-colorings}, then $G$ has its own set-type Topcode-matrix $T_{code}(G,F)=(X_S,E_S,Y_S)^T$, where each element in three \emph{set vectors} $X_S,E_S$ and $Y_S$ is a \emph{set}. So we have a set-type parameterized Topcode-matrix
\begin{equation}\label{eqa:66666666666}
\centering
{
\begin{split}
P^{set}_{ara}(G,\Phi|k,d)=k\cdot I\,^0[+]d\cdot T_{code}(G,F)=(d\cdot X_S,~k[+]d\cdot E_S,~k[+]d\cdot Y_S)^T
\end{split}}
\end{equation} which defines a $(k,d)$-type set-coloring
\begin{equation}\label{eqa:555555}
\Phi:S\rightarrow \mathcal{E}^P,~S\subseteq V(G)\cup E(G),~\mathcal{E}^P\subseteq \Lambda^2_{(m,b,n,k,a,d)}
\end{equation} for the connected bipartite $(p,q)$-graph $G$.\qqed
\end{defn}

\subsection{Pan-operation graphs of hypergraphs}

We will extend the vertex-intersected graphs of hypergraphs defined in Definition \ref{defn:vertex-intersected-graph-hypergraph} to general situations in this subsection.

\begin{defn} \label{defn:pan-operation-graph-a-hypergraph}
$^*$ \textbf{Pan-operation graphs of hypergraphs.} Let $[\bullet]$ be an operation on sets, and let $\mathcal{E}$ be a hyperedge set defined on a finite set $\Lambda$ such that each hyperedge $e\in \mathcal{E}$ corresponds another hyperedge $e\,'\in \mathcal{E}$ holding $e[\bullet] e\,'$ to be one subset of the power set $\Lambda^2$, and $\Lambda=\bigcup _{e\in \mathcal{E}}e$, as well as $R_{est}(c_0,c_1,c_2,\dots ,c_m)$ be a constraint set with $m\geq 0$, in which the first constraint $c_0:= e[\bullet] e\,'\in \Lambda^2$. If a graph $G$ admits a total set-coloring $\theta: V(H)\cup E(H)\rightarrow \mathcal{E}$ holding $\theta(x)\neq \theta(y)$ for each edge $uv\in E(G)$, and

(i) the first constraint $c_0: ~\theta(uv)\supseteq \theta(u)[\bullet]\theta(v)\neq \emptyset$.

(ii) For some $k\in [1,m]$, there is a function $\pi_k$, such that the $k$th constraint $c_k:=\pi_k (b_{u},b_{uv},b_{v})=0$ with $k\in [1,m]$ for $b_{uv}\in \theta(uv)$, $b_{u}\in \theta(u)$ and $b_{v}\in \theta(v)$.

(iii) If a pair of hyperedges $e_i,e_j\in \mathcal{E}$ holding $e_i[\bullet] e_j\in \Lambda^2$, then there is an edge $x_iy_j$ of the graph $G$, such that $\theta(x_i)=e_i$ and $\theta(y_j)=e_j$.

We call $G$ \emph{pan-operation graph} of the hypergraph $\mathcal{H}_{yper}=(\Lambda,\mathcal{E})$. \qqed
\end{defn}

\begin{example}\label{exa:8888888888}
Let $\mathcal{E}$ be a hyperedge set defined on a consecutive integer set $\Lambda=[1,q]$. For a fixed integer $k>0$, there exists a number $c_{i,j}\in e_{i,j}\in \mathcal{E}$, such that the operation $e_i[\bullet]e_j:=\varphi(a_i,c_{i,j},b_j)=k$ for some numbers $a_i\in e_i$ and $b_j\in e_j$, where $\varphi(a_i,c_{i,j},b_j)=k$ is one of $a_i+c_{i,j}+b_j=k$, $c_{i,j}+|a_i-b_j|=k$, $|a_i+b_j-c_{i,j}|=k$ and $\big ||a_i-b_j|-c_{i,j}\big |=k$, refer to Definition \ref{defn:kd-w-type-colorings}.

For a fixed integer $k>0$, another hyperedge set $\mathcal{E}^*$ defined on a consecutive integer set $\Lambda=[1,q]$ holds $|e|\geq 3$ for each hyperedge $e\in \mathcal{E}^*$. We have sets $\varphi(e,k)=\{a_i,b_i,c_i\in e:\varphi(a_i,c_i,b_i)=k\}$ for each hyperedge $e\in \mathcal{E}^*$. Each hyperedge $e\in \mathcal{E}^*$ corresponds another hyperedge $e\,'\in \mathcal{E}^*$, such that the operation $e_i[\bullet]e_j$ to hold $\varphi(e,k)\cap \varphi(e\,',k)\neq \emptyset$.\qqed
\end{example}

\begin{example}\label{exa:8888888888}
Let $G$ be a bipartite $(p,q)$-graph with vertex set bipartition $(X,Y)$, and let $\mathcal{E}$ be a hyperedge set defined on a consecutive integer set $\Lambda=[1,q]$. The graph $G$ admits a set-coloring $F:V(G)\rightarrow \mathcal{E}$, such that
\begin{equation}\label{eqa:555555}
\max\{\max |F(x)|:~x\in X\}<\min\{\max |F(y)|:~y\in Y\}
\end{equation}
And the operation $e_i[\bullet]e_j$ means that each edge $xy\in E(G)$ corresponds some hyperedge $e_{x,y}\in \mathcal{E}$ containing a number $c_{x,y}=a_y-b_x>0$ for some numbers $a_y\in F(y)$ and $b_x\in F(x)$. Moreover, if the set $\{c_{x,y}\in e_{x,y}:xy\in E(G),e_{x,y}\in \mathcal{E}\}=[1,q]$, we call $F$ a \emph{set-ordered e-graceful set-coloring} based on the hyperedge set $\mathcal{E}$.\qqed
\end{example}

\begin{defn}\label{defn:every-zero-total-graphic-group}
\cite{Yao-Wang-2106-15254v1} Let $G_1,G_2,\dots, G_{p+q}$ be the copies of a $(p,q)$-graph $G$ admitting a magic total labeling $h:V(G)\cup E(G)\rightarrow [1,p+q]$ such that $h(u)+h(v)=k+h(uv)$ for each edge $uv\in E(G)$. Each graph $G_i$ with $i\in [1,p+q]$ admits a magic total labeling $h_i:V(G_i)\cup E(G_i)\rightarrow [1,p+q]$ holding
\begin{equation}\label{eqa:555555}
h_i(x)+h_i(y)=k_i+h_i(xy)~(\textrm{mod}~p+q)
\end{equation} true for each edge $xy\in E(G_i)$, and the index set $\{k_1,k_2,\dots ,k_{p+q}\}=[1,p+q]$, and we write $F_{p+q}(G,h)=\{G_1,G_2$, $\dots$, $G_{p+q}\}$. For any preappointed \emph{zero} $G_k\in F_{p+q}(G,h)$, we have
\begin{equation}\label{eqa:graphic-group-1111}
h_i(w)+h_i(w)-h_k(w)=h_{\lambda}(w)
\end{equation}
with $\lambda=i+j-k~(\bmod~p+q)$ for each $w\in V(G)\cup E(G)$, and we call the set $F_{p+q}(G,h)$ \emph{every-zero total graphic group}, and rewrite it as $\{F_{p+q}(G,h);[+][-]\}$.\qqed
\end{defn}

\begin{defn}\label{defn:graphic-group-definition}
\cite{Yao-Zhang-Sun-Mu-Sun-Wang-Wang-Ma-Su-Yang-Yang-Zhang-2018arXiv} An \emph{every-zero graphic group} $\{F_{n}(H);[+][-]\}$ made by a colored graph $H$ admitting a $\varepsilon$-labeling $h$ contains its own elements $H_i\in F_{n}(H)=\{H_i:i\in [1,n]\}$ admitting a $\varepsilon$-labeling $h_i$ induced by $h$ with $i\in [1,n]$ for $n\geq q$ ($n\geq 2q-1$), and hold the finite module Abelian additive operation
\begin{equation}\label{eqa:graphic-group-Abelian-additive-operation}
H_i[+_k]H_j:=H_i[+]H_j[-]H_k
\end{equation}
defined as
\begin{equation}\label{eqa:graphic-group-definition}
h_i(x)+h_j(x)-h_k(x)=h_{\lambda}(x),\quad x\in V(H)
\end{equation}
with $\lambda=i+j-k\,(\bmod\,n)$ for any preappointed \emph{zero} $H_k\in F_{n}(H)$.\qqed
\end{defn}

\begin{example}\label{exa:graphic-group-example}
Refer to Definition \ref{defn:every-zero-total-graphic-group} and Definition \ref{defn:graphic-group-definition}. Let $\{F(G);[+][-]\}$ be an \emph{every-zero graphic group} based on a graph set $F(G)=\{G_1,G_2,\dots ,G_q\}$ obtained by a $(p,q)$-graph $G_1$ admitting a graceful labeling $f_1$, where each graph $G_i$ holds $G_i\cong G_1$ and admits a labeling $f_i$ defined as $f_i(x)=f_1(x)+i-1~(\bmod~q)$ for $x\in V(G_i)=V(G_1)$ and $f_i(xy)=|f_i(x)-f_i(y)|$ for $xy\in E(G_i)=E(G_1)$ with $i\in [1,q]$. Selecting arbitrarily any graph $G_k\in F(G)$ as the preappointed \emph{zero}, the finite module Abelian additive operation
\begin{equation}\label{eqa:Abelian-additive-operation-graphic group}
G_i[+_k]G_j:=G_i[+]G_j[-]G_k
\end{equation}
is defined by
\begin{equation}\label{eqa:555555}
f_i(x)+f_j(x)-f_k(x)=f_{\beta}(x),\quad x\in V(G_i)=V(G_j)=V(G_k)=V(G_1)
\end{equation}
with $\beta=i+j-k~(\bmod~q)$ for any preappointed \emph{zero} $G_k\in F(G)$. It is not hard to show that $\{F(G);[+][-]\}$ holds Zero, Inverse, Uniqueness, Closure and Associative law.\qqed
\end{example}

\begin{example}\label{exa:8888888888}
Let $\mathcal{E}$ be a hyperedge set based on a finite set $\Lambda=F(G)$ shown in Example \ref{exa:graphic-group-example} and the finite module Abelian additive operation Eg.(\ref{eqa:Abelian-additive-operation-graphic group}).

\textbf{Case 1.} Suppose that a $(p\,',q\,')$-graph $H$ admits an \emph{e-index-graceful set-labeling} $F:V(H)\rightarrow \mathcal{E}$, and the induced edge color $F(uv)=\{G_\beta\}$ obtained by $G_i[+_k]G_j=G_\beta$ with $G_i\in F(u)$ and $G_j\in F(v)$ and $\beta=i+j-k~(\bmod~q)$, such that the index set $\{r:\{G_r\}=F(uv), uv\in E(H)\}=[1,q\,']$. It is noticeable, the operation $G_i[+_k]G_j$ means the operation $e_i[\bullet]e_j$ defined in Definition \ref{defn:pan-operation-graph-a-hypergraph}, since $e_i=F(u)$ and $e_j=F(v)$.

\textbf{Case 2.} Suppose that a $(p\,',q\,')$-graph $H$ admits a total set-labeling $F^*:V(H)\cup E(H)\rightarrow \mathcal{E}$, for each edge $uv\in E(H)$, there are $G_i\in F^*(u)$, $G_j\in F^*(v)$ and $G_\beta\in F^*(uv)$ holding $G_i[+_k]G_j=G_\beta$ with $\beta=i+j-k~(\bmod~q)$, Here, the operation $e_i[\bullet]e_j$ defined in Definition \ref{defn:pan-operation-graph-a-hypergraph} is that $G_i[+_k]G_j=G_\beta\in F^*(uv)$ based on $e_i=F(u)$ and $e_j=F(v)$. However, $F(uv)\neq F^*(uv)$ for each edge $uv\in E(H)$ in general, where the labeling $F$ is defined in the above \textbf{Case 1}.\qqed
\end{example}

\begin{problem}\label{question:444444}
A connected $(p,q)$-graph $G$ can be vertex-split into trees $T_1,T_2,\dots, T_m$ of $q+1$ vertices. Suppose that each tree $T_i$ admits different proper total colorings $f_{i,1},f_{i,2},\dots, f_{i,a_i}$ with $i\in [1,m]$ and $a_i\geq 1$, and let $T_{i,j}$ be the tree obtained by coloring totally the vertices and edges of the each tree $T_i$ with the proper total coloring $f_{i,j}$ with $j\in [1,a_i]$ and $i\in [1,m]$. So, vertex-coinciding each colored tree $T_{i,j}$ produces the original connected $(p,q)$-graph $G$ admitting a total coloring $F_{i,j}$, there are the following cases:

(i) $F_{i,j}(uv)$ for each edge $uv\in E(G)$ is a number, however it may happen $F_{i,j}(uv)=F_{i,j}(uw)$ for $v,w\in N_{ei}(u)$ for some vertex $u\in V(G)$;

(ii) $F_{i,j}(x)$ for each vertex $x\in V(G)$ is a set, since it may happen that $x$ is the result of vertex-coinciding two or more vertices of the tree $T_{i,j}$, that is, $x=x_{i,1}\bullet x_{i,2}\bullet \cdots \bullet x_{i,b}$ for $x_{i,s}\in V(T_{i,j})$ (Ref. Definition \ref{defn:vertex-split-coinciding-operations}).

(iii) \textbf{Good-property}: $F_{i,j}(V(G)\cup E(G))=[1,\chi\,''(G)]$, where $\chi\,''(G)$ is the total chromatic number of the connected $(p,q)$-graph $G$; $|F_{i,j}(y)|=1$ for each vertex $y\in V(G)$; and $F_{i,j}(xy)\neq F_{i,j}(xz)$ for $y,z\in N_{ei}(x)$ for each vertex $x\in V(G)$.

Let $V_{\textrm{sp-co}}(G)=\{T_{i,j}:j\in [1,a_i],i\in [1,m]\}$. Clearly, each totally colored tree $T_{i,j}\in V_{\textrm{sp-co}}(G)$ is \emph{colored graph homomorphism} to $G$, also, $T_{i,j}\rightarrow G$. \textbf{Find} totally colored trees $T_{i,j}$ from the set $V_{\textrm{sp-co}}(G)$, such that $T_{i,j}$ is colored graph homomorphism to $G$ admitting a total coloring $F_{i,j}$, which has the Good-property.

We set a \emph{base} $T_{splt}(G)=(T_1,T_2,\dots, T_m)$ with $T_k\rightarrow G$, each graph in the \emph{tree-graph lattice}
\begin{equation}\label{eqa:tree-graph-lattice5}
\textrm{\textbf{L}}(Z^0[\bullet]T_{splt}(G))=\big \{[\bullet]^m_{k=1}a_kT_k: a_k\in Z^0,T_k\in T_{splt}(G)\big \}
\end{equation} with $\sum^m_{k=1}a_k\geq 1$ can be vertex-split into trees $a_1T_1,a_2T_2,\dots, a_mT_m$, where the vertex-coinciding operation $[\bullet]$ is defined in Definition \ref{defn:vertex-split-coinciding-operations} and Remark \ref{rem:vertex-coinciding-operations}.
\end{problem}

\begin{defn} \label{defn:pan-hypergraphs}
$^*$ \textbf{Pan-hypergraphs.} Let $S_{thing}$ be a set of things, and let $[\bullet]$ be an operation on sets. Then each hyperedge set $\mathcal{E}\in \mathcal{E}\big (S^2_{thing} \big )$ defines a \emph{pan-hypergraph} $\mathcal{H}_{yper}=(S_{thing},\mathcal{E})$ subject a constraint set $R_{est}(c_1,c_2,\dots ,c_m)$ with $c_1:=[\bullet]$ and $m\geq 1$ if each hyperedge $e\in \mathcal{E}$ corresponds another hyperedge $e\,'\in \mathcal{E}$ holding $e[\bullet] e\,'\in S^2_{thing}$ by Definition \ref{defn:pan-operation-graph-a-hypergraph}.\qqed
\end{defn}

\section{Subgraphs Of Vertex-Intersected Graphs}

\begin{problem}\label{question:subgraphs-vertex-intersected}
For a given connected graph $G$ admitting a total set-coloring $F:V(G)\cup E(G)\rightarrow \mathcal{E}$, how to judge the given graph $G$ to be a vertex-intersected graph of a hypergraph $\mathcal{H}_{yper}=(\Lambda,\mathcal{E})$?
\end{problem}

We can answer partly Problem \ref{question:subgraphs-vertex-intersected} as follows:

(i) If $G$ is a complete graph, then $G$ is a vertex-intersected graph.

(ii) If there are no vertices $x,y$ of the graph $G$ holding $xy\not\in E(G)$ and $F(x)\cap F(y)\neq \emptyset$, then $G$ is a vertex-intersected graph.

\subsection{VSETC-algorithm}

\begin{thm}\label{thm:build-hyperedge-set}
\cite{Yao-Ma-arXiv-2201-13354v1} If a tree $T$ admits a coloring $f:V(T)\rightarrow [1,p-1]$ with $p=|V(T)|$ and $f(z)\neq f(y)$ for any pair of distinct vertices $x,y$, then $T$ admits a set-coloring defined on a hyperedge set $\mathcal{E}$ such that $T$ is a subgraph of a vertex-intersected graph of the hypergraph $\mathcal{H}_{yper}=([0,p-1],\mathcal{E})$.
\end{thm}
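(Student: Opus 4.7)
The plan is to build the hyperedge set $\mathcal{E}$ directly from the given injective vertex labeling $f$ via a closed-neighborhood construction, then check that the three clauses required by Definition \ref{defn:intersected-graph-hypergraph} (non-empty intersections along tree edges, ground set equal to $[0,p-1]$, and all hyperedges distinct) are satisfied. I will read the codomain of $f$ as $[0,p-1]$ so that $f$ is a bijection onto the prescribed ground set (the statement as written with $[1,p-1]$ is inconsistent with $|V(T)|=p$ and injectivity, and this reading is the only one that matches the hypergraph vertex set $[0,p-1]$).

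For each vertex $u\in V(T)$, I would set
\begin{equation}\label{eqa:plan-closed-nbhd}
F(u)=\{f(u)\}\cup\{f(v):v\in N_T(u)\}\subseteq [0,p-1],
\end{equation}
and take $\mathcal{E}=\{F(u):u\in V(T)\}$. Three routine verifications then follow: (i) every $F(u)$ is non-empty; (ii) $\bigcup_{e\in\mathcal{E}}e=[0,p-1]$ because $\{f(u):u\in V(T)\}=[0,p-1]$ by bijectivity of $f$; and (iii) for each edge $uv\in E(T)$, both $f(u)$ and $f(v)$ belong to $F(u)\cap F(v)$, so $F(u)\cap F(v)\neq\emptyset$. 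Consequently the map $u\mapsto F(u)$ sends $V(T)$ into the vertex set of the intersected-graph of $\mathcal{H}_{yper}=([0,p-1],\mathcal{E})$ and sends each edge of $T$ to an edge of that intersected-graph, exhibiting $T$ as a subgraph.

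The one genuine obstacle is verifying that the sets $F(u)$ are pairwise distinct, which is needed both so that $|\mathcal{E}|=p$ (so that $u\mapsto F(u)$ is a bijection onto $V$ of the intersected-graph) and so that $\mathcal{E}$ qualifies as a hyperedge set in the sense of Definition \ref{defn:hypergraph-basic-definition}. The argument I would give: suppose $F(u)=F(w)$ with $u\neq w$. Since $f$ is injective and $f(u)\neq f(w)$, membership $f(u)\in F(w)$ forces $u\in N_T(w)$, and symmetrically $w\in N_T(u)$, so $uw\in E(T)$. Stripping $\{f(u),f(w)\}$ from both sides, one is left with $f(N_T(u)\setminus\{w\})=f(N_T(w)\setminus\{u\})$; but $T$ is a tree, so deleting the edge $uw$ separates $T$ into two components containing $N_T(u)\setminus\{w\}$ and $N_T(w)\setminus\{u\}$ respectively, making these neighbor sets disjoint. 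By injectivity of $f$, both must be empty, which forces $T=K_2$. Thus distinctness holds for every tree with $p\geq 3$.

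For the residual case $T=K_2$, I would handle it by a direct construction: take $\mathcal{E}=\{\{0\},\{0,1\}\}$ (or $\{\{0,1\},\{1\}\}$), which covers $[0,1]$, consists of two distinct non-empty sets, and has non-empty intersection, giving the single edge of $K_2$ as a subgraph of the resulting intersected-graph. Combining this with the closed-neighborhood construction (\ref{eqa:plan-closed-nbhd}) for $p\geq 3$ finishes the proof. The main place where the tree hypothesis is used—and hence the only nontrivial step—is the component-separation argument that delivers distinctness of the $F(u)$; everything else is bookkeeping on $f$ and the definitions.
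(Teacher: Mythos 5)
Your proof is correct, but it takes a genuinely different route from the paper's. The paper proves Theorem \ref{thm:build-hyperedge-set} via its ``VSET-coloring algorithm'': it iteratively peels leaves off $T$, assigning each leaf $w$ at each stage the two-element set $\{f(w),f(v)\}$, where $v$ is its unique remaining neighbor, and finally giving the central vertex of the residual star the singleton $\{f(u_0)\}$; edge colors are then the pairwise intersections. You instead define $F(u)$ in one stroke as the $f$-image of the closed neighborhood of $u$, which avoids the algorithmic bookkeeping entirely. Your version also supplies two checks the paper omits: first, pairwise distinctness of the hyperedges --- needed both for $\mathcal{E}$ to be a family of \emph{distinct} nonempty subsets as Definition \ref{defn:hypergraph-basic-definition} requires, and for $u\mapsto F(u)$ to be injective so that $T$ genuinely embeds as a subgraph of the intersected-graph --- which you obtain from a clean edge-deletion/component-separation argument isolating $K_2$ as the only exception; and second, the degenerate case $T=K_2$ itself, together with the observation that the stated codomain $[1,p-1]$ cannot support an injection from $p$ vertices and must be read as $[0,p-1]$ (the paper's own proof is likewise inconsistent on this point, concluding with $\Lambda=[1,p-1]$ against the theorem's $[0,p-1]$). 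What the paper's construction buys in exchange is that every hyperedge has cardinality at most $2$ and that the peeling scheme is reused verbatim as the engine of the later PSCS-algorithms; what yours buys is a shorter, self-contained argument with no unaddressed gaps.
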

\begin{proof} According the hypothesis of the theorem, we define a set-coloring $F$ for the tree $T$ by means of the following so-called \textbf{VSETC-algorithm}:

\textbf{Step 1.} Each leaf $w_j$ of the tree $T$ is set-colored with $F(w_j)=\{f(w_j),f(v)\}$, where the edge $w_jv\in E(T)$.

\textbf{Step 2.} Each leaf $w^1_j$ of the tree $T_1=T-L(T)$, where $L(T)$ is the set of leaves of the tree $T$, is colored by $F(w^1_j)=\{f(w^1_j),f(z)\}$, where the edge $w^1_jz\in E(T_1)$.

\textbf{Step 3.} Each leaf $w^r_j$ of the tree $T_r=T-L(T_{r-1})$, where $L(T_{r-1})$ is the set of leaves of $T_{r-1}$, is colored by $F(w^r_j)=\{f(w^r_j),f(u)\}$, where the edge $w^r_ju\in E(T_r)$.

\textbf{Step 4.} Suppose $T_k$ is a star $K_{1,m}$ with vertex set $V(K_{1,m})=\big \{w^k_j,u_0:j\in [1,m]\big \}$ and edge set $E(K_{1,m})=\big \{u_0w^k_1,u_0w^k_2,\dots ,u_0w^k_m\big \}$, we color each leaf $w^k_j$ with $F(w^k_j)=\big \{f(w^k_j),f(u_0)\big \}$, and $F(u_0)=\{f(u_0)\}$.

\textbf{Step 5.} The above steps show $F(uv)=F(u)\cap F(v)\neq \emptyset$ for each edge $uv\in E(T)$.

Thereby, the tree $T$ admits the set-coloring $F$ subject to the constraint set $R_{est}(c_0)$, such that $c_0$ holds $F(uv)=F(u)\cap F(v)\neq \emptyset$ for each edge $uv\in E(T)$. So, $T$ is a subgraph of a vertex-intersected graph of a hypergraph $\mathcal{H}_{yper}=(\Lambda,\mathcal{E})$ with its hyperedge set $\mathcal{E}=F(V(T))$, and $\Lambda=[1,p-1]=\bigcup _{e\in \mathcal{E}}e$, we are done.
\end{proof}

\begin{figure}[h]
\centering
\includegraphics[width=16.4cm]{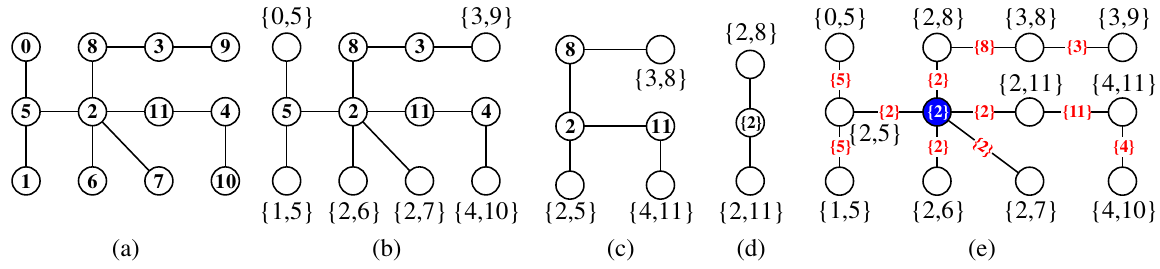}\\
\caption{\label{fig:get-set-coloring-00}{\small An example for understanding the proof of Theorem \ref{thm:build-hyperedge-set}.}}
\end{figure}

\begin{cor}\label{thm:99999}
\cite{Yao-Ma-arXiv-2201-13354v1} If a tree $T$ admits a graceful labeling, then $T$ admits a graceful set-coloring $F$ subject to the constraint set $R_{est}(c_0,c_1)$, such that the constraint $c_0$ holds $F(uv)\supseteq F(u)\cap F(v)\neq \emptyset$ for each edge $uv\in E(T)$, and the constraint $c_1$ holds $a_{uv}=|a_{u}-a_{v}|$ for some $a_{uv}\in F(uv)$, $a_u\in F(u)$ and $a_v\in F(v)$, which derives a consecutive integer set
$$
\{a_{uv}=|a_{u}-a_{v}|:uv\in E(T)\}=[1,|E(T)|\,]
$$ and moreover $T$ is a subgraph of a vertex-intersected graph of the hypergraph $\mathcal{H}_{yper}=(\Lambda,\mathcal{E})$ with $\mathcal{E}=F(V(T))$ and $\Lambda=[0,|E(T)|\,]=\bigcup_{e\in \mathcal{E}}e$.
\end{cor}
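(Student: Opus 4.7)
The plan is to invoke Theorem \ref{thm:build-hyperedge-set} with the graceful labeling $f$ playing the role of the initial injective vertex map, and then to enlarge the induced edge set-colors just enough to witness the graceful constraint $c_1$. Because $f$ is graceful, $f$ is an injection of $V(T)$ into $[0,|E(T)|]$ (in fact a bijection onto $[0,|E(T)|]$, since $|V(T)|=|E(T)|+1$), which is exactly what the VSET-coloring algorithm in Theorem \ref{thm:build-hyperedge-set} requires. Running that algorithm verbatim yields a set-coloring $F:V(T)\to \mathcal{E}$ satisfying $c_0$: for every edge $uv\in E(T)$ the intersection $F(u)\cap F(v)$ is nonempty. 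A key bookkeeping fact readable off the algorithm — and used repeatedly below — is that $f(x)\in F(x)$ for every $x\in V(T)$, because each leaf of an intermediate tree is colored $\{f(w),f(\mathrm{parent}(w))\}$ and the final center $u_0$ receives $\{f(u_0)\}$.

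Second, I would enrich each edge set-color by setting
\begin{equation*}
F(uv) \;:=\; \bigl(F(u)\cap F(v)\bigr)\cup \bigl\{\,|f(u)-f(v)|\,\bigr\}
\end{equation*}
for every edge $uv\in E(T)$. This preserves $c_0$ (the original intersection is still a subset of $F(uv)$ and is nonempty). Moreover, taking $a_u=f(u)\in F(u)$, $a_v=f(v)\in F(v)$ and $a_{uv}=|f(u)-f(v)|\in F(uv)$ witnesses $c_1$ at the edge $uv$. Since $f$ is graceful, the resulting family $\{a_{uv}:uv\in E(T)\}=\{|f(u)-f(v)|:uv\in E(T)\}$ is exactly $[1,|E(T)|]$, as required. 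Finally, setting $\mathcal{E}=F(V(T))$ and $\Lambda=\bigcup_{e\in\mathcal{E}}e$, the inclusion $\Lambda\subseteq [0,|E(T)|]$ holds because every element in any $F(x)$ is of the form $f(y)$ for some $y\in V(T)$, and the reverse inclusion is immediate from $f(x)\in F(x)$ together with $f$ being a bijection onto $[0,|E(T)|]$. Since each edge $uv\in E(T)$ has $F(u)\cap F(v)\neq\emptyset$, the defining property of the intersected-graph of $\mathcal{H}_{yper}=(\Lambda,\mathcal{E})$ places $uv$ in that intersected-graph; hence $T$ is a subgraph.

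The main point to verify carefully — and the only place where the argument is not entirely formal — is that adjoining $|f(u)-f(v)|$ to $F(uv)$ is genuinely harmless. One must check that it does not spoil the proper set-labeling condition $F(x)\neq F(y)$ on adjacent vertices (it does not, since we only modify edge-colors), and that it does not conflict with the bound $\Lambda=[0,|E(T)|]$ (it does not, since $|f(u)-f(v)|\in[1,|E(T)|]$ is itself $f(z)$ for some vertex $z$, so no new element enters $\Lambda$). After these checks the statement follows from Theorem \ref{thm:build-hyperedge-set} together with the graceful property of $f$. No distinctness of edge set-colors is demanded by the corollary, so no further optimization of $F$ is needed.
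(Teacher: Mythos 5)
Your proposal is correct and follows the route the paper intends: the corollary is stated without its own proof precisely because it is meant to be read off from Theorem \ref{thm:build-hyperedge-set} by feeding the graceful labeling $f$ (a bijection of $V(T)$ onto $[0,|E(T)|]$) into the VSET-coloring algorithm, which is exactly what you do. Your one addition — explicitly enlarging each $F(uv)$ to contain $|f(u)-f(v)|$ so that $c_1$ is witnessed and the derived set equals $[1,|E(T)|]$ — is a step the paper leaves implicit but which is genuinely needed (the raw algorithm outputs $F(uv)=F(u)\cap F(v)$, which typically does not contain the graceful difference), and your checks that this preserves $c_0$, the properness of the vertex set-coloring, and $\Lambda=[0,|E(T)|]$ are all sound.
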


\begin{example}\label{exa:multiple-topological-authentication}
A \emph{multiple topological authentication} is shown in Fig.\ref{fig:22-multiple-authen}, we observe that a lobster $T$ is as a \emph{topological public-key}, and other lobsters $T_1,T_2,T_3,T_4,T_5,T_6$ form a group of \emph{topological private-keys}, where

\begin{asparaenum}[(i) ]
\item $T_2$ admits a \emph{pan-edge-magic total labeling} $f_2$ holding the \emph{edge-magic constraint} $f_2(x_i)+f_2(e_i)+f_2(y_i)=16$ for each edge $e_i=x_iy_i\in E(T_2)$;

\item $T_3$ admits a \emph{pan-edge-magic total labeling} $f_3$ holding the \emph{edge-magic constraint} $f_3(x_i)+f_3(e_i)+f_3(y_i)=27$ for each edge $e_i=x_iy_i\in E(T_3)$;

\item $T_4$ admits a \emph{felicitous labeling} $f_4$ holding the \emph{harmonious constraint} $f_4(e_i)=f_4(x_i)+f_4(y_i)~(\bmod~11)$ for each edge $e_i=x_iy_i\in E(T_4)$;

\item $T_5$ admits an \emph{edge-magic graceful labeling} $f_5$ holding the \emph{felicitous-difference constraint} $\big |f_5(x_i)+f_5(y_i)-f_5(e_i)\big |=4$ for each edge $e_i=x_iy_i\in E(T_5)$;

\item $T_6$ admits an \emph{edge-odd-graceful labeling} $f_6$ holding $\{f_6(x_i)+f_6(y_i)+f_6(e_i):e_i=x_iy_i\in E(T_6)\}=[16,26]$.\qqed
\end{asparaenum}
\end{example}

\begin{defn} \label{defn:22-one-v-multiple-e-labeling}
\cite{Yao-Ma-arXiv-2201-13354v1} Suppose that a graph $G$ admits a vertex labeling $f:V(G)\rightarrow [a,b]$ with $f(x)\neq f(y)$ for distinct vertices $x,y\in V(G)$. If there is a group of edge labelings $f_1,f_2,\dots,f_m$ of the graph $G$ induced by $f$ such that each edge labeling $f_k$ holds a $W_k$-constraint $W_k[f(u),f_k(uv),f(v)]=0$ for each edge $uv\in E(G)$, then we call $f$ \emph{one-v multiple-e labeling} of the graph $G$.\qqed
\end{defn}

\begin{figure}[h]
\centering
\includegraphics[width=16cm]{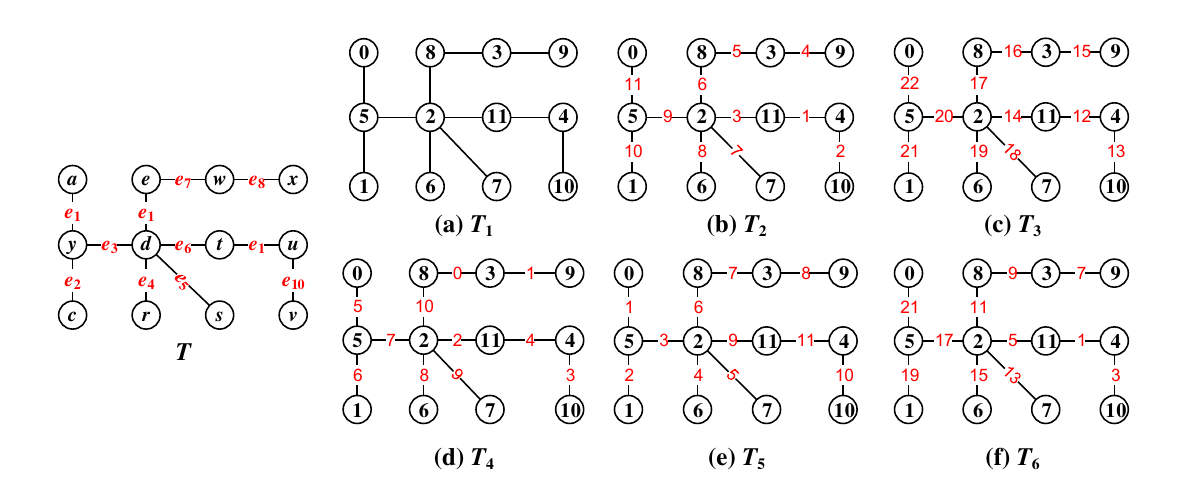}\\
\caption{\label{fig:22-multiple-authen}{\small A lobster $T_1$ admits a vertex labeling $f_1$ shown in (a), and this labeling $f_1$ induces other labelings $f_2,f_3,f_4,f_5$ and $f_6$, cited from \cite{Yao-Wang-2106-15254v1}.}}
\end{figure}

\begin{example}\label{exa:8888888888}
In Example \ref{exa:multiple-topological-authentication}, the lobster $T$ shown in Fig.\ref{fig:22-multiple-authen} admits a one-v multiple-e labeling defined in Definition \ref{defn:22-one-v-multiple-e-labeling} consisted of $f_1:V(T)\rightarrow [0,11]$ and $f_k$ induced by $f_1$ as follows:
\begin{asparaenum}[$\textbf{c}_1$:]
\item a \emph{pan-edge-magic total labeling} $f_2$ holds the \emph{edge-magic constraint} $f_2(x_i)+f_2(x_iy_i)+f_2(y_i)=16$ for each edge $x_iy_i\in E(T)$;

\item a \emph{pan-edge-magic total labeling} $f_3$ holds the \emph{edge-magic constraint} $f_3(x_i)+f_3(x_iy_i)+f_3(y_i)=27$ for each edge $x_iy_i\in E(T)$;

\item a \emph{felicitous labeling} $f_4$ holds the \emph{harmonious constraint} $f_4(x_iy_i)=f_4(x_i)+f_4(y_i)~(\bmod~11)$ for each edge $x_iy_i\in E(T)$;

\item an \emph{edge-magic graceful labeling} $f_5$ holds the \emph{felicitous-difference constraint} $\big |f_5(x_i)+f_5(y_i)-f_5(x_iy_i)\big |=4$ for each edge $x_iy_i\in E(T)$;

\item an \emph{edge-odd-graceful labeling} $f_6$ holds $\{f_6(x_i)+f_6(y_i)+f_6(x_iy_i):x_iy_i\in E(T)\}=[16,26]$.
\end{asparaenum}

Then we get a set-coloring $F$ of the lobster $T$ defined by $F(w)=\{f_1(w)\}$ for $w\in V(T)$, and
$$F(x_iy_i)=\{f_2(x_iy_i),f_3(x_iy_i),f_4(x_iy_i),f_5(x_iy_i),f_6(x_iy_i)\},~x_iy_i\in E(T)$$
\end{example}

The set-colored graph $L$ shown in Fig.\ref{fig:build-intersected-graphs} (a) is the above lobster $T$ colored with a set-coloring. Moreover, we, by using the VSETC-algorithm introduced in the proof of Theorem \ref{thm:build-hyperedge-set}, get a graph $L_{yper}$ shown in Fig.\ref{fig:build-intersected-graphs} (b) admitting a set-coloring $\varphi$, and $L_{yper}$ is a subgraph of a vertex-intersected graph of the hypergraph $\mathcal{H}^*_{yper}=([0,11],\mathcal{E}^*)$ subject to the constraint set $R_{est}(c_0,c_1$, $c_2$, $ c_3$, $c_4$, $c_5)$, where the hyperedge set
\begin{equation}\label{eqa:555555}
{
\begin{split}
\mathcal{E}^*=\big \{\{0,5\},\{1,5\},\{2,5\},\{2\},\{2,6\},\{2,7\},\{2,8\},\{3,8\},\{3,9\},\{2,11\},\{4,11\},\{4,10\}\big \}
\end{split}}
\end{equation}
holding $[0,11]=\bigcup_{e\in \mathcal{E}^*}e$, such that the vertex color set $\varphi(V(L_{yper}))=\mathcal{E}^*$ and the edge color set
\begin{equation}\label{eqa:555555}
{
\begin{split}
\varphi(E(L_{yper}))=&\big \{\{5,1,11,22,21\}, \{5,2,3,7,9,17,20\}, \{5,2,6,10,19,21\}, \{2,4,8,15,19\}, \\
&\{2,5,7,9,13,18\}, \{2,5,3,9,14\}, \{2,6,10,11,17\}, \{8,0,5,7,9,16\},\\
& \{3,1,4,7,8,15\},\{11,1,4,12\}, \{4,2,3,10,13\} \big \}
\end{split}}
\end{equation}
The Graham reduction of the hyperedge set $\mathcal{E}^*$ is an empty set, and $\mathcal{E}^*$ contains an ear set $\big \{\{0,5\},\{1,5\},\{2,6\},\{2,7\},\{3,9\},\{4,10\}\big \}$. The set-coloring $F$ of the set-colored graph $L$ and the set-coloring $\varphi$ of the set-colored graph $L_{yper}$ hold
$$F(V(L)\cup E(L))=\varphi(V(L_{yper})\cup E(L_{yper}))\setminus \mathcal{X}
$$ with $\mathcal{X}=\big \{\{2\},\{3\},\{4\},\{5\},\{8\},\{11\} \big \}$.\qqed

\begin{figure}[h]
\centering
\includegraphics[width=16.4cm]{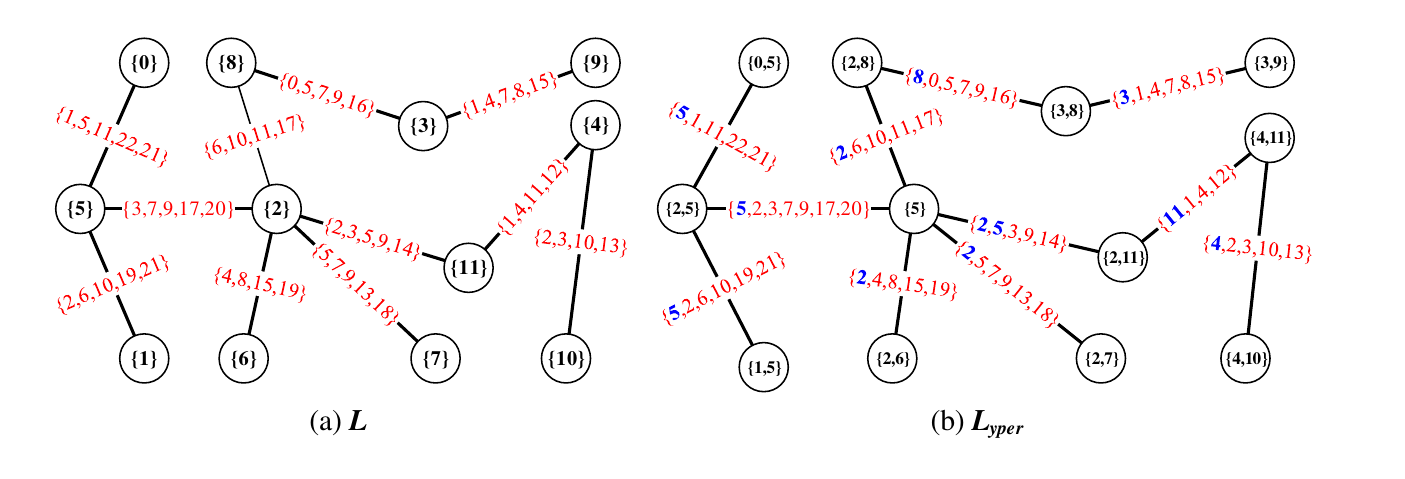}\\
\caption{\label{fig:build-intersected-graphs}{\small (a) A graph $L$ admits a set-coloring, but the graph $L$ is not a subgraph of any vertex-intersected graph; (b) $L_{yper}$ is a subgraphs of some vertex-intersected graph.}}
\end{figure}

\begin{thm}\label{thm:666666}
\cite{Yao-Ma-arXiv-2201-13354v1} If a graph $G$ admits a one-v multiple-e labeling, which induces a set-coloring $\varphi$, then there is another graph $G^*$ admitting this set-coloring $\varphi$ such that $G^*\cong G$, and the graph $G^*$ is a subgraph of a vertex-intersected graph of a hypergraph $\mathcal{H}_{yper}=(\Lambda,\mathcal{E})$.
\end{thm}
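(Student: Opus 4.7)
The plan is to explicitly construct, for the graph $G$ equipped with the one-v multiple-e labeling $(f;f_1,\dots,f_m)$ from Definition \ref{defn:22-one-v-multiple-e-labeling}, an enriched set-coloring $\varphi$ on an isomorphic copy $G^{*}$ of $G$ that (i) retains all of the labeling data and (ii) forces $\varphi(u)\cap\varphi(v)\neq\emptyset$ on every edge $uv\in E(G^{*})$. Once this is done, the hypergraph $\mathcal{H}_{yper}=(\Lambda,\mathcal{E})$ with $\mathcal{E}=\{\varphi(v):v\in V(G^{*})\}$ and $\Lambda=\bigcup_{e\in\mathcal{E}} e$ will automatically have $G^{*}$ as a subgraph of its intersected-graph, matching Definition \ref{defn:intersected-graph-hypergraph}.

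The concrete construction I would carry out is, for every vertex $v\in V(G)$,
\[
\varphi(v)\;=\;\{f(v)\}\cup\{f_k(vw):w\in N_G(v),\ k\in[1,m]\},
\]
so that $\varphi(v)$ packages together the vertex label of $v$ and every edge label on edges incident to $v$. For any edge $uv\in E(G)$, each of $f_1(uv),\dots,f_m(uv)$ lies in both $\varphi(u)$ and $\varphi(v)$, whence
\[
\varphi(u)\cap\varphi(v)\;\supseteq\;\{f_k(uv):k\in[1,m]\}\;\neq\;\emptyset,
\]
and I would set the induced edge set-color to $\varphi(uv)=\varphi(u)\cap\varphi(v)$. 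Taking $G^{*}$ to be $G$ equipped with this $\varphi$ gives $G^{*}\cong G$. The constraint set $R_{est}(m+1)=\{c_0,c_1,\dots,c_m\}$ of Definition \ref{defn:intersected-graph-hypergraph} is then witnessed by $c_0\!:\varphi(uv)\supseteq\varphi(u)\cap\varphi(v)\neq\emptyset$ and, for each $k\in[1,m]$, by the equation $e_k(f(u),f_k(uv),f(v))=0$ delivered by the one-v multiple-e labeling, with the triple $(a_u,c_{uv},b_v)=(f(u),f_k(uv),f(v))$.

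To close the argument I would set $\mathcal{E}=\{\varphi(v):v\in V(G^{*})\}$ and $\Lambda=\bigcup_{e\in\mathcal{E}} e$, so that $\mathcal{H}_{yper}=(\Lambda,\mathcal{E})$ is a hypergraph in the sense of Definition \ref{defn:hypergraph-basic-definition}. The map $v\mapsto\varphi(v)$ identifies $V(G^{*})$ with (part of) the vertex set of the intersected-graph of $\mathcal{H}_{yper}$, and by the intersection computation above it sends every edge of $G^{*}$ to a pair of intersecting hyperedges, i.e.\ to an edge of the intersected-graph; hence $G^{*}$ embeds as a subgraph.

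The principal obstacle is the properness requirement $\varphi(u)\neq\varphi(v)$ for adjacent vertices demanded by Definition \ref{defn:intersected-graph-hypergraph}. Generically this is free, since $f$ is injective and $f(u)$ will not normally coincide with any $f_k$-label on an edge incident to $v$; but symmetric configurations (for instance, two twin vertices joined by a single edge whose only differing entries are $f(u)$ and $f(v)$) can in principle produce the pathological equality $\varphi(u)=\varphi(v)$. I would handle this by first shifting the ranges of $f$ and of each $f_k$ into pairwise disjoint integer intervals, or equivalently by adjoining a private distinguishing token $t_v$ to each $\varphi(v)$; both modifications preserve the intersection guarantee on edges while restoring injectivity on adjacent pairs, and neither affects the final embedding of $G^{*}$ into the intersected-graph of $\mathcal{H}_{yper}$.
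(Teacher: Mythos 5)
Your construction is correct, but it is not the route the paper takes. The paper leaves this theorem without a formal proof; the intended argument is the worked example surrounding Fig.\ref{fig:build-intersected-graphs}, where the enriched coloring $\varphi$ is produced by the VSET-coloring algorithm of Theorem \ref{thm:build-hyperedge-set}: the graph (after vertex-splitting into a tree, if $G$ is not a tree) is peeled leaf-layer by leaf-layer, each vertex $w$ receives the two-element set $\{f(w),f(z)\}$ with $z$ its neighbour one layer inward, so that adjacent layers share a label, and the several induced edge labels $f_k(uv)$ are packed into the \emph{edge} color sets. You instead pack the edge labels into the \emph{vertex} color sets, $\varphi(v)=\{f(v)\}\cup\{f_k(vw):w\in N_G(v),\,k\in[1,m]\}$, so that every edge $uv$ automatically carries the common elements $f_k(uv)$; this is closer in spirit to the construction used in the proof of Theorem \ref{thm:producing-set-colorings-from-labelings}. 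Your version needs no leaf-peeling order and no detour through trees for non-tree $G$, so it is more uniform and more directly general; the paper's version keeps the vertex color sets small (cardinality two) and reuses machinery established earlier. One point to sharpen: for $G^*$ to embed as a subgraph of the intersected-graph you need $v\mapsto\varphi(v)$ to be injective on all of $V(G)$, not merely on adjacent pairs, since distinct hyperedges of $\mathcal{E}$ must be distinct sets and the embedding identifies vertices with equal colors. Your proposed repair --- shifting the ranges of $f$ and of the $f_k$ into pairwise disjoint intervals, or adjoining a private token to each $\varphi(v)$ --- does deliver global injectivity because $f$ itself is injective on $V(G)$; you should state the requirement and the fix at that global level rather than only for adjacent vertices. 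With that adjustment the argument is complete.
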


\begin{thm}\label{thm:producing-set-colorings-from-labelings}
\cite{Yao-Ma-arXiv-2201-13354v1} Each tree admits a graceful-intersection total set-labeling and an odd-graceful-intersection total set-labeling.
\end{thm}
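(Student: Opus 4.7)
The plan is to give a short, direct construction that avoids any reliance on the graceful tree conjecture. Let $T$ be a tree with $q$ edges, and fix any enumeration $e_1, e_2, \ldots, e_q$ of its edge set (obtained, say, by a BFS traversal from an arbitrary root). I will define
\[
F(v) = \{\, i \in [1,q] : v \text{ is an endpoint of } e_i \,\} \subseteq [1,q]
\]
for each $v \in V(T)$. Since every vertex of a tree is incident with at least one edge, each $F(v)$ is a nonempty element of $[1,q]^2$, so $F: V(T) \to [1,q]^2$ is well defined.

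Next I verify the graceful-intersection condition of Definition \ref{defn:graceful-intersection}. For every edge $e_j = uv \in E(T)$, the index $j$ lies in both $F(u)$ and $F(v)$, so the induced edge color set $F(uv) = F(u) \cap F(v) \ni j$ is nonempty. Selecting the representative $a_{e_j} = j$ for every edge gives $\{a_{uv} : uv \in E(T)\} = [1,q]$, exactly as required. Because $T$ is simple, no two distinct edges share both endpoints, so in fact $F(u) \cap F(v) = \{j\}$ for $e_j = uv$, and the induced edge color sets are forced to be the singletons $\{1\}, \{2\}, \ldots, \{q\}$, which are automatically pairwise distinct.

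Distinctness of $F$ on vertices is also immediate for all nontrivial trees: if $F(x) = F(y)$ for $x \neq y$, then every edge $e_i$ with $i \in F(x)$ must have both $x$ and $y$ as endpoints, which is impossible in a simple graph as soon as $q \geq 2$. The odd-graceful-intersection half of the theorem is proved by the same template after replacing the edge enumeration $e_j \mapsto j$ with $e_j \mapsto 2j - 1$, so that $F$ now lands in $[1, 2q-1]^2$ and the representatives fill out $[1, 2q-1]^o$.

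The only delicate point is the degenerate case $T \cong K_2$ (with $q = 1$), where $[1,1]^2 = \{\{1\}\}$ leaves no room to distinguish the two endpoints; this boundary case is either excluded by convention or handled by permitting $F(u) = F(v) = \{1\}$, since the representative condition is still satisfied. Beyond this trivial case, the proof reduces to the single incidence observation above, which is why no deep combinatorial machinery or appeal to the graceful tree conjecture is needed.
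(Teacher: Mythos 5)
Your proof is correct, and the core construction is the same one the paper uses: color each vertex with the set of labels of its incident edges, so that each edge's intersection set is the singleton containing that edge's own label, which forces the system of distinct representatives to be exactly $[1,q]$ (resp.\ $[1,2q-1]^o$). The difference is where the edge labels come from. The paper first invokes its Theorem \ref{thm:tree-graceful-total-coloringss} (every tree admits a gracefully total coloring, and an odd-gracefully total coloring) and takes $F_{grace}(x)=\{f(xy):y\in N(x)\}$ for the induced edge colors $f(uv)=|f(u)-f(v)|$; you observe that nothing in Definition \ref{defn:graceful-intersection} actually uses the graceful structure --- any bijection $E(T)\to[1,q]$ (e.g.\ from a BFS enumeration) suffices --- so your argument is self-contained and drops the dependence on the cited coloring theorem. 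What the paper's route buys is that the resulting set-labeling is \emph{induced by} a graceful (or odd-graceful) labeling, which is the feature exploited later (e.g.\ in Corollary \ref{thm:99999} and the PSCS-algorithms, where the constraint $a_{uv}=|a_u-a_v|$ is added to the restrictive set); your route buys economy and makes clear that the bare statement of the theorem is elementary. Your handling of distinctness of the vertex sets and of the degenerate case $T\cong K_2$ is sound, and that boundary case is equally unaddressed in the paper's own proof, so it is not a gap you introduced.
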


\begin{problem}\label{qeu:444444}
\textbf{Find} graceful labelings and odd-graceful labelings of trees corresponding to some graceful-intersection total set-labelings and some odd-graceful-intersection total set-labelings of the trees.
\end{problem}

\begin{thm}\label{thm:each-graph-admits-set-colorings}
\cite{Yao-Ma-arXiv-2201-13354v1} Each connected $(p,q)$-graph $G$ admits a set-coloring $F:V(G)\rightarrow \mathcal{E}$ with a hyperedge set $\mathcal{E}$ defined on a consecutive integer set $\Lambda=[1,p]$, such that $F(x)\neq F(y)$ for distinct vertices $x,y\in V(G)$, and each edge $uv\in E(G)$ is colored by an induced hyperedge color set $F(uv)$ holding $F(uv)\supseteq F(u)\cap F(v)\neq \emptyset$.
\end{thm}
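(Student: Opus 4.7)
The plan is to construct the desired set-coloring explicitly from an arbitrary bijective vertex indexing. First, I would fix any bijection $i\colon V(G)\to[1,p]$, writing $v_k$ for the vertex with $i(v_k)=k$, and then define
\[
F(v_k)=\{k\}\cup\{j\in[1,p]\colon v_j\in N(v_k)\text{ and } j>k\}
\]
for every $k\in[1,p]$. In words, $v_k$ is coloured by the set consisting of its own index together with the indices of those neighbours of $v_k$ in $G$ that come after $v_k$ in the chosen ordering. Taking $\mathcal{E}=\{F(v_k)\colon k\in[1,p]\}$ and, for each edge $uv\in E(G)$, setting $F(uv)=F(u)\cap F(v)$ (or any superset of it), yields the candidate set-coloring that I would then verify.

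The verification splits into two short checks. For \emph{distinctness}, observe that $\min F(v_k)=k$ by construction, because no element contributed by the forward-neighbour part is smaller than $k$; hence distinct indices force distinct minima and therefore $F(v_k)\neq F(v_l)$ whenever $k\neq l$. This is in fact stronger than the theorem requires. For \emph{non-empty intersection along edges}, take an edge $v_kv_l\in E(G)$ with $k<l$: then $l\in F(v_k)$ (since $v_l\in N(v_k)$ and $l>k$) and trivially $l\in F(v_l)$, so $l\in F(v_k)\cap F(v_l)\neq\emptyset$, and consequently the induced edge color set satisfies $F(v_kv_l)\supseteq F(v_k)\cap F(v_l)\neq\emptyset$ as required.

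Finally, I would confirm that $\mathcal{E}$ is a hyperedge set on $\Lambda=[1,p]$ in the sense of Definition \ref{defn:hypergraph-basic-definition}: every $F(v_k)$ is non-empty (it contains $k$), the sets $F(v_k)$ are pairwise distinct, and $\bigcup_{e\in\mathcal{E}}e=[1,p]$ because $k\in F(v_k)$ for every $k$. Connectivity of $G$ plays no essential role in the construction, though the theorem is naturally phrased for that setting. The main subtle point, and the place where a naive attempt breaks down, is the distinctness requirement: the obvious try $F(v)=\{i(u)\colon u\in N[v]\}$ collapses on closed twins (in $K_p$ every vertex would receive the full set $[1,p]$), so orienting the neighbour contribution by ``only record larger indices'' is exactly what keeps the minima separated. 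Once this trick is in place, the remaining verifications are immediate.
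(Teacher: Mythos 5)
Your proof is correct, and it takes a genuinely different and more elementary route than the paper's. The paper proceeds structurally: it extracts a spanning tree $T$ of $G$, invokes Theorem~\ref{thm:producing-set-colorings-from-labelings} to equip $T$ with a graceful-intersection total set-labeling, and then restores the chords of $E(G)\setminus E(T)$ one at a time by induction, in each step either reusing the existing intersection $\varphi(u_1)\cap\varphi(v_1)$ or, when that intersection is empty, enlarging both endpoint sets by a fresh subset $e_1\in\Lambda^2\setminus\mathcal{E}_T$ drawn from the power set. Your construction is a direct, one-shot definition $F(v_k)=\{k\}\cup\{j:v_j\in N(v_k),\,j>k\}$ from an arbitrary vertex ordering, with the two verifications ($\min F(v_k)=k$ forces distinctness; $l\in F(v_k)\cap F(v_l)$ for an edge $v_kv_l$ with $k<l$) being immediate. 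What your approach buys: it is self-contained (no dependence on the tree labeling machinery), it visibly needs no connectivity hypothesis, it keeps the ground set at exactly $[1,p]$ without any bookkeeping, and your remark about why the symmetric closed-neighborhood coloring fails on $K_p$ correctly isolates the one non-trivial idea. What the paper's approach buys instead is compatibility with the graceful-type constraint sets $R_{est}(m)$ that the surrounding sections care about: by seeding the coloring with a graceful-intersection labeling of a spanning tree, the authors retain quantitative control over representatives of the edge color sets, which your purely combinatorial coloring does not attempt to provide (and the theorem as stated does not require).
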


\begin{thm}\label{thm:graceful-inter-total-set-colorings-number}
\cite{Yao-Ma-arXiv-2201-13354v1} Each connected $(p,q)$-graph $G$ admits at least $q!$ graceful-intersection total set-colorings defined on hyperedge sets $\mathcal{E}$ holding $[1,q]=\bigcup_{e\in \mathcal{E}}e$ true.
\end{thm}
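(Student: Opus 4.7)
The plan is to exhibit one explicit graceful-intersection total set-coloring of $G$ and then use the symmetric group $S_q$ acting on the ground set $[1,q]$ to manufacture $q!$ pairwise distinct such colorings. First I fix any ordering $e_1,e_2,\dots,e_q$ of the edges of $G$ and define the \emph{incidence set-coloring} $F:V(G)\rightarrow [1,q]^2$ by
\[
F(v)=\{\,i\in[1,q]:e_i\text{ is incident with }v\,\},\quad v\in V(G).
\]
Since $G$ is simple, each edge $e_k=uv$ is the unique edge incident with both of its endpoints, so the induced edge color is the singleton $F(uv)=F(u)\cap F(v)=\{k\}$. Choosing representatives $a_{uv}=k$ for every edge gives $\{a_{uv}:uv\in E(G)\}=[1,q]$, and obviously $\bigcup_{v\in V(G)}F(v)=[1,q]$. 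The vertex-distinctness $F(x)\ne F(y)$ demanded by a set-labeling holds automatically whenever $q\ge 2$, because two distinct vertices of a simple graph cannot share the entire collection of their incident edges; the degenerate case $G=K_2$ (where $q!=1$) is trivial. Hence $F$ is a graceful-intersection total set-coloring in the sense of Definition \ref{defn:graceful-intersection}, defined on the hyperedge set $\mathcal{E}=F(V(G))$ with $\bigcup_{e\in\mathcal{E}}e=[1,q]$.

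Next, for each permutation $\pi\in S_q$, define $F_\pi:V(G)\rightarrow [1,q]^2$ by $F_\pi(v)=\{\pi(i):i\in F(v)\}$. Bijectivity of $\pi$ yields $F_\pi(u)\cap F_\pi(v)=\pi\bigl(F(u)\cap F(v)\bigr)$, so for the edge $e_k=uv$ the induced edge color is the singleton $F_\pi(uv)=\{\pi(k)\}$, and the representatives $\{\pi(k):k\in[1,q]\}$ again form $[1,q]$. Vertex-distinctness of $F_\pi$ is inherited from $F$ through $\pi$, and the hyperedge set $\mathcal{E}_\pi=F_\pi(V(G))$ still satisfies $\bigcup_{e\in\mathcal{E}_\pi}e=[1,q]$. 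Therefore every $F_\pi$ is a graceful-intersection total set-coloring on a hyperedge set with union $[1,q]$.

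To finish, I check that the colorings $\{F_\pi\}_{\pi\in S_q}$ are pairwise distinct. If $\pi\ne\sigma$, choose $k$ with $\pi(k)\ne\sigma(k)$; the edge $e_k$ then receives edge color $\{\pi(k)\}$ under $F_\pi$ but $\{\sigma(k)\}$ under $F_\sigma$, so $F_\pi\ne F_\sigma$ as total set-colorings. Ranging over $S_q$ thereby produces at least $q!$ graceful-intersection total set-colorings of $G$ on hyperedge sets with union $[1,q]$, as required.

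The only step that requires real verification is confirming that the incidence coloring $F$ actually satisfies every clause of Definition \ref{defn:graceful-intersection}; once that is in hand, the $S_q$-action step is purely formal, because the graceful-intersection property is manifestly preserved under any bijective relabeling of the ground set $[1,q]$. As an alternative base construction one could invoke Theorem \ref{thm:producing-set-colorings-from-labelings} on a spanning tree of $G$ and then extend to $G$ via the argument of Theorem \ref{thm:each-graph-admits-set-colorings}, but the direct incidence construction above avoids any appeal to graceful labelings of trees and is what I would use.
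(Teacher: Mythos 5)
Your proposal is correct and takes essentially the same route as the paper: the paper also assigns labels $1,\dots,q$ to the edges via a chosen permutation, colors each vertex with the set of labels of its incident edges and each edge with the singleton of its own label, and counts the $q!$ edge-orderings, which is exactly your $S_q$-action on the ground set viewed from the other side. Your write-up is in fact a little more careful, since you explicitly verify that distinct permutations yield distinct colorings and that the vertex sets are pairwise distinct (modulo the trivial $K_2$ case), points the paper passes over with ``Clearly \dots we are done.''
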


\begin{thm}\label{thm:find-perfect-hypermatchings}
\cite{Yao-Ma-arXiv-2201-13354v1} If a tree $T$ of $q$ edges admits a graceful-intersection total set-labeling $F:V(T)\rightarrow \mathcal{E}\subseteq [0,q]^2$ with $\bigcup_{e\in \mathcal{E}} e=[1,q]$, then the tree $T$ contains an independent vertex set $X$ so that the vertex color set $\{e=F(x):x\in X\}$ is just a perfect hypermatching $\mathcal{M}\subset \mathcal{E}$, and $\bigcup_{e\in \mathcal{M}} e=[1,q]$.
\end{thm}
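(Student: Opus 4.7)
The plan is to construct $F$ explicitly from a gracefully total coloring of $T$ and then take $X$ to be one side of the bipartition of $T$. By Theorem \ref{thm:tree-graceful-total-coloringss}, the tree $T$ carries a gracefully total coloring $f$ with $f(E(T))=[1,q]$; following the proof of Theorem \ref{thm:producing-set-colorings-from-labelings}, I would set $F(x)=\{f(xy):y\in N(x)\}$ for every $x\in V(T)$ and $F(uv)=\{f(uv)\}$ for every edge, so that $F$ is a graceful-intersection total set-labeling with $|F(x)|=\deg_T(x)$ (the edge labels incident to $x$ being pairwise distinct integers).

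Since $T$ is a tree, it is bipartite; I fix a bipartition $(X,Y)$ and declare this $X$ to be the desired independent set. The essential claim is then that $F(u_1)\cap F(u_2)=\emptyset$ for any distinct $u_1,u_2\in X$. The argument is short: if $c\in F(u_1)\cap F(u_2)$, then $c=f(u_1y_1)=f(u_2y_2)$ for some $y_1\in N(u_1)$ and $y_2\in N(u_2)$; distinctness of edge labels forces the edges $u_1y_1$ and $u_2y_2$ to coincide, i.e., $\{u_1,y_1\}=\{u_2,y_2\}$. With $u_1\neq u_2$, this leaves only $u_1=y_2$ and $u_2=y_1$, which would put the edge $u_1u_2$ into $E(T)$, contradicting the independence of $X$.

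For the covering property, every edge of the bipartite tree $T$ has exactly one endpoint in $X$, so $\bigcup_{u\in X}F(u)=\bigcup_{u\in X}\{f(uv):v\in N(u)\}=\{f(e):e\in E(T)\}=[1,q]$. Combined with the disjointness above, this proves that $\mathcal{M}=\{F(x):x\in X\}$ is a perfect hypermatching of the hyperedge set $\mathcal{E}=F(V(T))$ with $\bigcup_{e\in\mathcal{M}}e=[1,q]$, as required.

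The main pitfall, in my view, is the temptation to attempt the statement for an arbitrary graceful-intersection total set-labeling, which is \emph{false} in general: one can inflate $F(x)$ (keeping the representatives intact) and spoil the disjointness. The theorem must be read as an existence statement, and the correct $F$ is the canonical one inherited from a gracefully total coloring of $T$. Once that choice is made, distinctness of edge colors together with independence of $X$ do all the work, and no induction or case analysis on the tree structure is needed.
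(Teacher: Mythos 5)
Your proof is correct, but it takes a genuinely different route from the paper's. The paper argues by induction on leaf removal: for a star the claim is immediate; otherwise it deletes a leaf $u$ with neighbor $v$, invokes the induction hypothesis on $T-u$ to get a labeling $F\,'$ with a perfect hypermatching $\mathcal{M}\,'$, and then splits into two cases according to whether $F\,'(v)\in\mathcal{M}\,'$, either absorbing the new label $q$ into $F\,'(v)$ or adjoining the singleton $\{q\}$ as a new hyperedge of the matching. You instead build the canonical $F$ from a gracefully total coloring (exactly as in the proof of Theorem \ref{thm:producing-set-colorings-from-labelings}), take $X$ to be one side of the bipartition, and get disjointness from the bijectivity of $f$ on $E(T)$ plus independence of $X$, and coverage from the fact that every edge meets $X$ exactly once. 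Your argument is shorter, avoids the case analysis and the bookkeeping of which vertex set realizes $\mathcal{M}$, and in fact delivers more: both $\{F(x):x\in X\}$ and $\{F(y):y\in Y\}$ are perfect hypermatchings partitioning $\mathcal{E}$, which is essentially Theorem \ref{thm:set-ordered-perfect-hypermatchings} obtained without the set-ordered hypothesis. What the paper's induction buys is self-containedness (it constructs its labeling from scratch rather than leaning on Theorem \ref{thm:tree-graceful-total-coloringss}) and a labeling in which the matching hyperedges can be singletons, which is the form reused elsewhere in the paper. Your closing caveat — that the statement cannot hold for an arbitrary graceful-intersection total set-labeling and must be read existentially — is accurate and consistent with how the paper's own proof proceeds.
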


\begin{thm}\label{thm:set-ordered-perfect-hypermatchings}
\cite{Yao-Ma-arXiv-2201-13354v1} If a tree $T$ of $q$ edges admits a set-ordered graceful labeling $f$ holding the set-ordered constraint $\max f(X)<\min f(Y)$ for the bipartition $(X,Y)$ of $V(T)$ (Ref. Definition \ref{defn:basic-W-type-labelings}), then this set-ordered graceful labeling $f$ induces a graceful-intersection total set-labeling $F:V(T)\cup E(T)\rightarrow \mathcal{E}^*\subseteq [0,q]^2$ with $\bigcup_{e\in \mathcal{E}^*} e=[1,q]$, such that two vertex color sets $M_X=\{e=F(x):x\in X\}$ and $M_Y=\{e\,'=F(y):y\in Y\}$ are two perfect hypermatchings of the set-colored tree $T$ holding $M_X\cup M_Y=\mathcal{E}^*$ and $M_X\cap M_Y=\emptyset$ true.
\end{thm}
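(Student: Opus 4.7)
The plan is to reuse the same construction that appears in the proof of Theorem \ref{thm:producing-set-colorings-from-labelings}, namely
\[
F(x)=\{f(xy):y\in N(x)\},\qquad F(uv)=\{f(uv)\},
\]
and then verify the two extra structural statements about $M_X$ and $M_Y$ that this theorem adds. For any edge $uv\in E(T)$, a label $f(uw)$ with $w\in N(u)$ lies in $F(v)$ only if some edge $vz$ satisfies $f(vz)=f(uw)$; because $f$ is graceful (distinct edges get distinct labels) this forces $uw=vz$, hence $w=v$. Thus $F(u)\cap F(v)=\{f(uv)\}$, the chosen representative $a_{uv}=f(uv)$ runs through all of $[1,q]$ by gracefulness, and the range $\mathcal{E}^{*}=F(V(T))\subseteq[0,q]^{2}$ satisfies $\bigcup_{e\in\mathcal{E}^{*}}e=[1,q]$. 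So $F$ is a graceful-intersection total set-labeling, which takes care of the first half of the conclusion.

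Next I would exploit the set-ordered hypothesis $\max f(X)<\min f(Y)$. A standard consequence is that $(X,Y)$ must coincide with the (unique) bipartition of the tree $T$: no edge can have both endpoints in $X$ (resp.\ in $Y$), else its label would repeat one of the labels forced by the other part. Hence every edge of $T$ has exactly one endpoint in $X$ and one in $Y$. For two distinct vertices $x,x'\in X$, the edges incident to $x$ and those incident to $x'$ form disjoint families, and since $f$ is graceful their labels are also disjoint, giving $F(x)\cap F(x')=\emptyset$. Summing over $x\in X$ yields $\bigcup_{x\in X}F(x)=\{f(e):e\in E(T)\}=[1,q]$, so $M_X$ is a perfect hypermatching on $[1,q]$; the symmetric argument gives $M_Y$. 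The equality $M_X\cup M_Y=\mathcal{E}^{*}$ is immediate from $V(T)=X\cup Y$.

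The step I expect to be the main obstacle is $M_X\cap M_Y=\emptyset$, i.e.\ checking that no $x\in X$ and $y\in Y$ can produce equal color sets $F(x)=F(y)$. The plan here is to reuse the injectivity lemma implicit in the construction: an equality $F(x)=F(y)$ would mean that $x$ and $y$ are incident to exactly the same multiset of edge labels, and since graceful labels distinguish edges, $x$ and $y$ are incident to the very same edges of $T$. In a tree this can only happen if both have degree $1$ and share their single incident edge, which degenerates to $T\cong K_{2}$. Assuming $q\ge 2$ (the only case for which the claim is non-vacuous, since $F$ already fails to be a set-labeling on $K_{2}$), this degenerate possibility is ruled out, so $M_X\cap M_Y=\emptyset$ and the proof is complete.
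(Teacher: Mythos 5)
Your proof is correct and follows essentially the same route as the paper's: build $F(x)=\{f(xy):y\in N(x)\}$ from the graceful labeling, observe that every edge of $T$ has exactly one endpoint in each part of the bipartition, and use the injectivity of $f$ on $E(T)$ to conclude that the sets $F(x)$, $x\in X$, are pairwise disjoint with union $[1,q]$ (symmetrically for $Y$). You actually go further than the paper in two useful respects: you supply an argument for $M_X\cap M_Y=\emptyset$, a part of the stated conclusion that the paper's proof never addresses, and you correctly flag that this part fails for $T\cong K_2$ (where $F(x)=F(y)=\{1\}$), so the implicit hypothesis $q\ge 2$ is needed.
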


\begin{problem}\label{qeu:444444}
\textbf{Characterize} trees admitting set-ordered graceful labelings by means of perfect hypermatchings generated from graceful-intersection total set-labelings.
\end{problem}

\subsection{PSCS-algorithms}

For producing set-colorings (PSCS) of graphs, we introduce the following methods in terms of algorithmic representations.

\textbf{PSCS-algorithm-1.} Suppose that $T$ is a tree admitting a $W$-constraint coloring or a $W$-constraint labeling $f$ holding $f(uv)\neq f(xy)$ for any pair of edges $uv$ and $xy$ of the tree $T$, and $|f(V(T))|=|V(T)|$, then we have:

\textbf{Step 1.1.} Apply the VSETC-algorithm introduced in the proof of Theorem \ref{thm:build-hyperedge-set} to the tree $T$ first, such that $T$ admits a $W$-constraint set-coloring $F_1$ induced by the labeling $f$, and each leaf $w$ of the tree $T$ is colored with $F_1(w)$ holding $|F_1(w)\cap F_1(z)|=1$, where the vertex $z$ is adjacent with $w$ in $T$, and $F_1(x)\neq F_1(u)$ for two distinct vertices $x,u\in V(T)$ since $f(uv)\neq f(xy)$ for any pair of edges $uv$ and $xy$ of the tree $T$.

\textbf{Step 1.2.} Do the VSETC-algorithm introduced in the proof of Theorem \ref{thm:build-hyperedge-set} to the tree $T$ once time, here, the tree $T$ admits a $W$-constraint set-coloring $F_2$ induced by $F_1$ such that each each leaf $w$ of the tree $T$ and its adjacent vertex $z$ are colored with $F_2(w)$ and $F_2(z)$ that hold $|F_2(w)\cap F_2(z)|\geq 2$, clearly, $F_2(x)\neq F_2(u)$ for two distinct vertices $x,u\in V(T)$ since $f(uv)\neq f(xy)$ for any pair of edges $uv$ and $xy$ of the tree $T$.

\textbf{Step 1.$k$.} After $k$ times of doing the VSETC-algorithm to the tree $T$, we get a $W$-constraint set-coloring $F_k$ of the tree $T$ and $|F_k(w)\cap F_k(z)|\geq k$ for each leaf $w$ and its adjacent vertex $z$. Suppose that the diameter $D(T)=m$, then $|F_k(w)\cap F_k(z)|\geq k=\lfloor \frac{m}{2}\rfloor$ for each leaf $w$ and its adjacent vertex $z$, as well as $F_k(x)\neq F_k(u)$ for two distinct vertices $x,u\in V(T)$.

\vskip 0.4cm

See trees $D_1,D_2,D_3,D_4$ shown in Fig.\ref{fig:make-set-coloring-11} and Fig.\ref{fig:make-set-coloring-22} for understanding the PSCS-algorithm-1. Moreover, $F_k(V(T))=\mathcal{E}_k$ is a \emph{hyperedge set} defined on the vertex color set $\Lambda_k=f(V(T))$.

\begin{figure}[h]
\centering
\includegraphics[width=16.4cm]{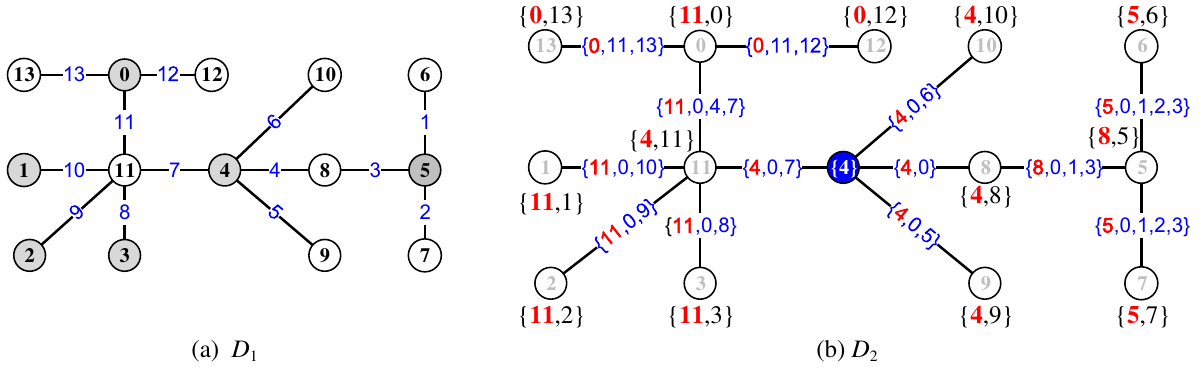}\\
\caption{\label{fig:make-set-coloring-11}{\small (a) A tree $D_1$ admits a graceful labeling $f$; (b) the tree $D_2$ admits a graceful set-coloring $F_1$ generated by $f$ and the VSETC-algorithm introduced in the proof of Theorem \ref{thm:build-hyperedge-set}.}}
\end{figure}

\begin{figure}[h]
\centering
\includegraphics[width=16.4cm]{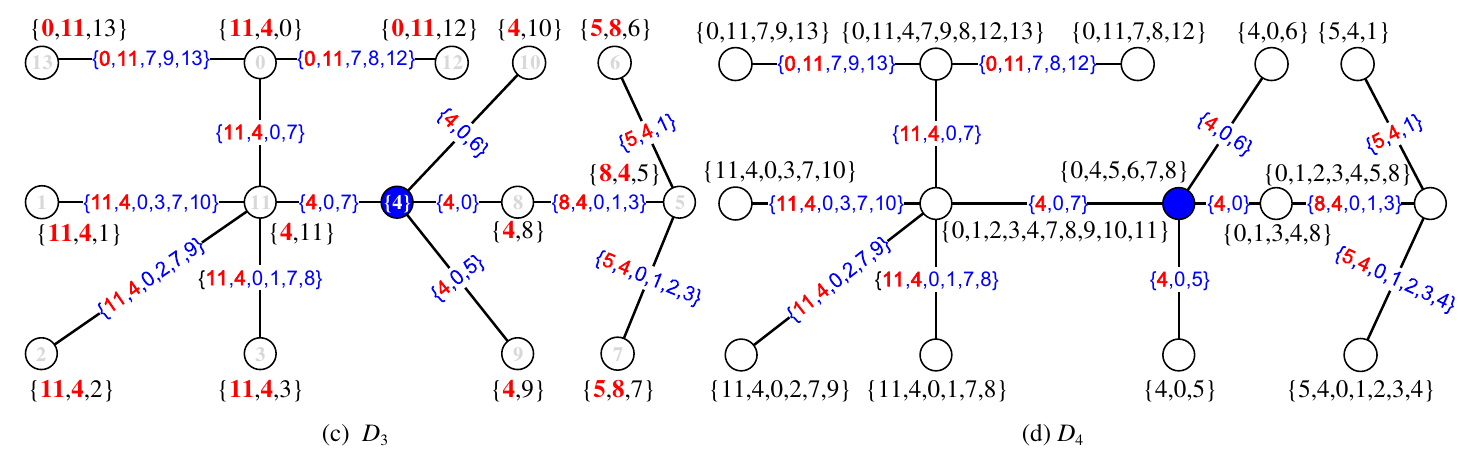}\\
\caption{\label{fig:make-set-coloring-22}{\small (c) The tree $D_3$ admits a graceful set-coloring $F_2$ generated by the VSETC-algorithm and the graceful set-coloring $F_1$ of the tree $D_2$ shown in Fig.\ref{fig:make-set-coloring-11}; (d) the tree $D_4$ admits a graceful set-coloring $F_3$ generated by the VSETC-algorithm and the graceful set-coloring $F_2$ of the tree $D_3$.}}
\end{figure}

\vskip 0.4cm

\textbf{PSCS-algorithm-2.} Suppose that a tree $H$ admits a $W$-constraint coloring or a $W$-constraint labeling $g$ with $g(uv)\neq g(xy)$ for any pair of edges $uv$ and $xy$ of the tree $H$, and $|g(V(H))|=|V(H)|=\Lambda$.

Firstly, by the method introduced in the proof of Theorem \ref{thm:producing-set-colorings-from-labelings}, we define a $W$-constraint total set-labeling $F$ of the tree $H$ in the following way: $F(x)=\{g(xy):y\in N_{ei}(x)\}$ for $x\in V(H)$, where $N_{ei}(x)$ is the set of neighbors of the vertex $x$, and $F(uv)=\{f(uv)\}$ for each edge $uv\in E(H)$, so
$$F(x)=\{g(xy):y\in N_{ei}(x)\}\neq \{g(wz):z\in N_{ei}(w)\}=F(w)
$$ for any two vertices $x,w\in V(H)$ based on the edge color set $g(E(T))$, then $F(V(H))=\mathcal{E}$ is a hyperedge set define on the set $\Lambda=g(V(H))$.

Secondly, we do the VSETC-algorithm introduced in the proof of Theorem \ref{thm:build-hyperedge-set} to the tree $H$ admitting the $W$-constraint total set-labeling $F$ for more complex set-colorings.

\vskip 0.4cm

\textbf{PSCS-algorithm-3.} Suppose that a connected $(p,q)$-graph $G$ contains at least a cycle and admits a $W$-constraint labeling or a $W$-constraint coloring $h$ satisfying $h(uv)\neq h(xy)$ for any pair of distinct edges $uv$ and $xy$ of the graph $G$.

\textbf{Step 3.1.} Do the vertex-splitting operation to the connected $(p,q)$-graph $G$ for getting a tree $T_G$ of $(q+1)$ vertices, first, we select randomly an edge $xy$ in a cycle $C$ of the graph $G$, and vertex-split $x$ into two vertices $x\,'$ and $x\,''$, such that the adjacent neighbor set $N_{ei}(x)=N_{ei}(x\,')\cup N_{ei}(x\,'')$ with $N_{ei}(x\,')\cap N_{ei}(x\,'')=\emptyset$ and cardinality $|N_{ei}(x\,')|=1$, so the resultant graph $G_1$ is connected, and define a total coloring $h_1$ for $G_1$ by setting $h_1(w)=h(w)$ for $w\in V(G-\{x\,',x\,''\})\cup E(G-\{x\,',x\,''\})$, and $h_1(x\,')=h(x)$, $h_1(x\,'')=h(x)$, as well as $h_1(x\,'y)=h(xy)$ for $y\in N_{ei}(x\,')$ and $h_1(ux\,'')=h(ux)$ for $u\in N_{ei}(x\,'')$. Go on in this way, after $q-p+1$ times, we get the desired tree $T_G$ of $(q+1)$ vertices admitting a $W$-constraint coloring $h_{q-p+1}$ satisfying $h_{q-p+1}(uv)\neq h_{q-p+1}(xy)$ for any pair of distinct edges $uv,xy\in E(T_G)$. See for examples shown in Fig.\ref{fig:graceful-coloring-hyper-11} (a) and (b).

\begin{figure}[h]
\centering
\includegraphics[width=16.4cm]{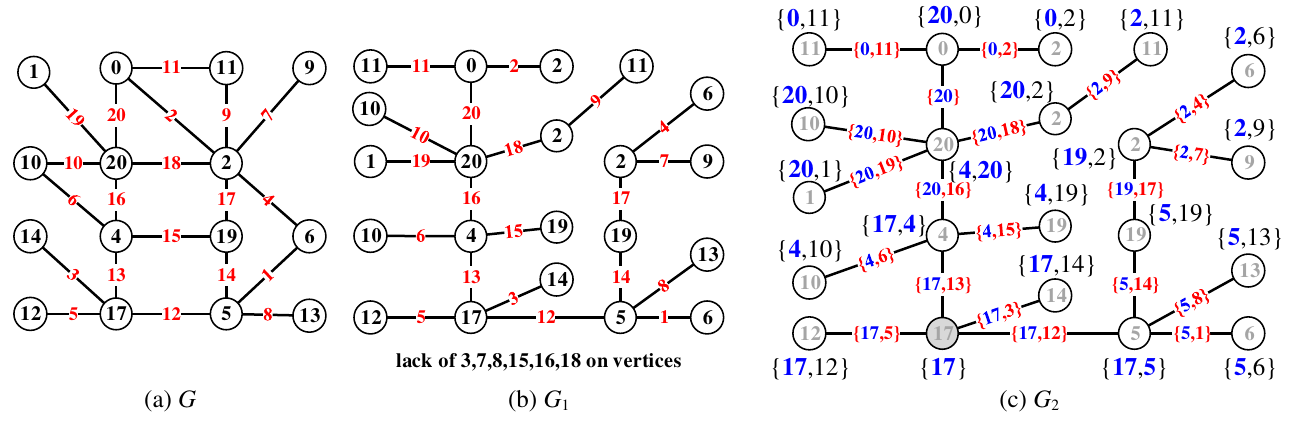}\\
\caption{\label{fig:graceful-coloring-hyper-11}{\small (a) A non-tree graph $G$ admitting a graceful labeling; (b) a tree $G_1$ obtained from $G$ by dong the vertex-splitting operation; (c) the tree $G_2$ admitting a graceful-type set-coloring subject to the constraint set $R_{est}(c_1,c_2)$, where $G_2\cong G_1$.}}
\end{figure}

\textbf{Step 3.2.} Define a set-coloring $F$ for the tree $T_G$ subject to the constraint set $R_{est}(c_0,c_1,c_2$, $\dots $, $c_m)$ by doing the VSETC-algorithm to $T_G$ several times, such that each edge is colored with $F(uv)$ holding the first constraint $c_0:F(uv)\supseteq F(u)\cap F(v)\neq \emptyset$, and one $c_j$ holds $a^j_{uv}=|a^j_{u}-a^j_{v}|$ for some $a^j_{u}\in F(u)$ and $a^j_{v}\in F(v)$, that is,
$$
F(uv)=\big [F(u)\cap F(v)\big ]\cup \big \{a^j_{uv}=|a^j_{u}-a^j_{v}|:j\in [1,m]\big \}
$$ subject to the constraint set $R_{est}(c_0,c_1,c_2,\dots ,c_m)$. See a tree $G_3$ shown in Fig.\ref{fig:graceful-coloring-hyper-11} (c) for understanding this procedure.

\textbf{Step 3.3.} Let $S^*=\{j_1,j_2,\dots,j_s\}=\Lambda \setminus \{F(x):x\in V(T_G)\}$, where $\Lambda=[a,b]$ is a consecutive integer set with integers $a,b$ subject to $0\leq a\leq b$. Making a new set-coloring $F\,'$ by setting $F\,'(uv)=F(uv)$ for $uv\in E(T_G)$, and each vertex $x\in V(T_G)$ is colored with $F\,'(x)=F(x)$ or $F\,'(x)=F(x)\cup S^*_x$ with $S^*_x=\{j_a,j_b,\dots ,j_l\}\subseteq S^*$, such that $F\,'(V(T_G))=\Lambda$ and $F\,'(x)\neq F\,'(w)$ for distinct vertices $x,w\in V(T_G)$. See the vertex colors of a graph $G_{yper}$ shown in Fig.\ref{fig:graceful-coloring-hyper-22}.

\textbf{Step 3.4.} Define a set-coloring $F^*$ subject to the constraint set $R^*_{est}(c^*_0,c^*_1,\dots ,c^*_{m})$ in the following way:
$F^*(y)=F\,'(y)$ for $y\in V(T_G)$,
$$
F^*(uv)=\big [F^*(u)\cap F^*(v)\big ]\cup \{a^i_{uv}=|a^i_{u}-a^i_{v}|:a^i_{u}\in F^*(u), a^i_{v}\in F^*(v),i\in [1,m]\}
$$ where $R^*_{est}(c^*_0,c^*_1,\dots ,c^*_{m})$ consisted of $c^*_0:F^*(uv)\supseteq F^*(u)\cap F^*(v)\neq \emptyset$, and $c^*_i:a^i_{uv}=|a^i_{u}-a^i_{v}|$ for some $a^i_{u}\in F^*(u)$ and $a^i_{v}\in F^*(v)$ with $i\in [1,m]$. The graph $G_{yper}$ shown in Fig.\ref{fig:graceful-coloring-hyper-22} admits such a set-coloring $F^*$ subject to the constraint set $R^*_{est}(c^*_0,c^*_1,\dots ,c^*_{m})$. Thereby, $F^*(V(T_G))=\mathcal{E}^*$ is a hyperedge set defined on a consecutive integer set $\Lambda=[a,b]$.

After vertex-coinciding $u=A_1\bullet A_2$, $v=B_1\bullet B_2\bullet B_2$, $w=C_1\bullet C_2$ and $z=D_1\bullet D_2$ of $G_{yper}$ shown in Fig.\ref{fig:graceful-coloring-hyper-22}, we obtain the original connected graph $G$ shown in Fig.\ref{fig:graceful-coloring-hyper-11}(a) admitting a set-coloring $\varphi^*$ induced by the set-coloring $F^*$ of the tree $G_{yper}$, where
$$\varphi^*(u)=F^*(A_1)\cup F^*(A_2),\varphi^*(v)=F^*(B_1)\cup F^*(B_2)\cup F^*(B_3),\varphi^*(w)=F^*(C_1)\cup F^*(C_2)
$$ and $\varphi^*(z)=F^*(D_1)\cup F^*(D_2)$, and each $w$ of other edges and vertices of the graph $G$ is colored with $\varphi^*(w)=F^*(w)$.

\begin{figure}[h]
\centering
\includegraphics[width=14.4cm]{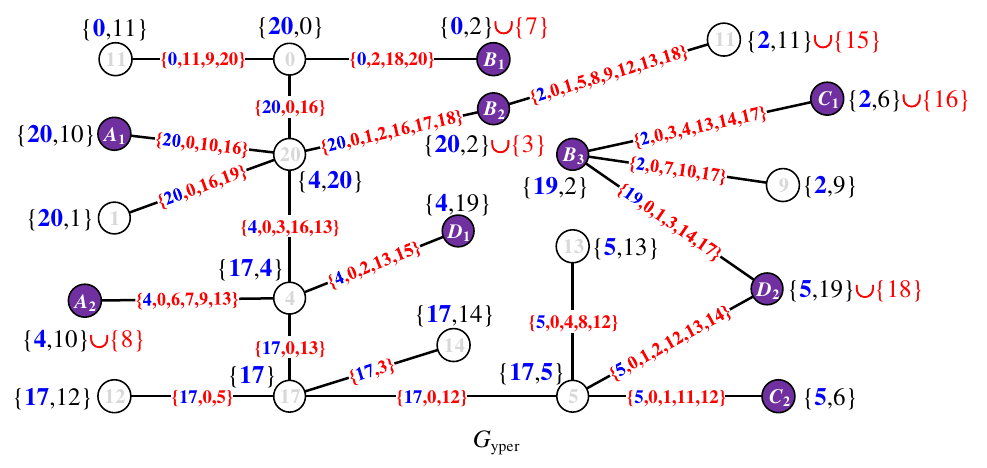}\\
\caption{\label{fig:graceful-coloring-hyper-22}{\small $G_{\textrm{yper}}$ is a subgraph of some vertex-intersected graph admitting the set-coloring $F^*$.}}
\end{figure}

\vskip 0.4cm

\textbf{PSCS-algorithm-4.} By Definition \ref{defn:e-odd-graceful-v-matching-labeling}, an edge-odd-graceful graph base $\textbf{B}=(G_1,G_2,\dots $, $G_m)$ with $G_i\not \subset G_j$ and $G_i\not \cong G_j$ for $i\neq j$ admits an edge-odd-graceful vertex-matching labeling $F=\uplus ^m_{i=1}f_i$, where each connected $(p_i,q_i)$-graph $G_i$ admits a labeling $f_i$ holding
$$f_i(E(G_i))=\{f_i(u_jv_j)=|f_i(u_j)-f_i(v_j)|:~u_jv_j\in E(G_i)\}=[1,2q_i-1]^o
$$ true, and $\bigcup ^m_{i=1}f_i(V(G_i))=[0,M]$ with $m\geq 2$. Then we have a graph $G^*$, denoted as $[\bullet\cup]^m_{i=1}G_i$, obtained by vertex-coinciding those vertices of $G_1,G_2,\dots , G_m$ colored with the same color into one vertex, and union all graphs that have no vertices colored with the same color together. Now, we define a set-coloring for $G^*$ in the following procedure:

\textbf{Step 4.1.} If each graph $G_i\in \textbf{B}$ is a tree, then we get a set-coloring $F_i$ of $G_i$ by the PSCS-algorithm-1, such that $F_i(V(G_i))=\mathcal{E}_i$ defined on a set $\Lambda_i$.

\textbf{Step 4.2.} If each graph $G_j\in \textbf{B}$ is not a tree, then there is a set-coloring $F_j$ of $G_j$ obtained by the PSCS-algorithm-3, like the set-coloring $\varphi^*$ of the graph $G$, such that $F_j(V(G_j)=\mathcal{E}_j$ defined on a set $\Lambda_j$.

\textbf{Step 4.3.} We define a set-coloring $\theta$ of the graph $G^*=[\bullet\cup]^m_{i=1}G_i$ in the following way:

(i) $\theta(w)=\bigcup ^p_{i=1}F_{k_i}(w_{k_i})$ if a vertex $w$ of the graph $G^*$ is $w=[\bullet ]^p_{i=1} w_{k_i}$ obtained by the vertex $w_{k_i}$ of graphs $G_{k_i}$ of the base $\textbf{B}$ with $i\in [1,p]$.

(ii) If an edge $wz$ off the graph $G^*$ is $wz=[\ominus] ^s_{r=1} w_{j_r}z_{j_r}$ generated by some edges $w_{j_r}z_{j_r}$ of graphs $G_{j_r}$ of the base $\textbf{B}$ for $r\in [1,s]$, then we color it as $\theta(wz)=\bigcup ^s_{r=1}F_{j_r}(w_{j_r}z_{j_r})$.

(iii) For other $G^*$'s vertices $x$ and edges $xy$ that do not participate into vertex-coincided vertices $w=[\bullet ]^p_{i=1} w_{k_i}$ and edge-coincided edges $wz=[\ominus] ^s_{r=1} w_{j_r}z_{j_r}$, we color $\theta(x)=F_i(x)$ if $x\in V(G_i)$ and $\theta(xy)=F_j(xy)$ if $xy\in E(G_j)$.

Thereby, $\theta(V(G^*))=\mathcal{E}^*=\bigcup ^m_{j=1}\mathcal{E}_j$ and $\Lambda ^*=\bigcup_{e\in \mathcal{E}^*}e$ for $\Lambda^* =\bigcup ^m_{j=1}\Lambda_j$, such that $G^*$ is a subgraph of a vertex-intersected graph of the hypergraph $\mathcal{H}_{yper}=(\Lambda^*,\mathcal{E}^*)$.

\subsection{Set-colorings with multiple intersections}

\begin{defn} \label{defn:W-join-type-se-total-coloring}
\cite{Yao-Ma-arXiv-2201-13354v1} Let $\mathcal{E}$ be a set of subsets of a finite set $\Lambda$ such that each hyperedge $e\in \mathcal{E}$ satisfies $e\neq \emptyset$ and corresponds another hyperedge $e\,'\in \mathcal{E}$ holding $e\cap e\,'\neq \emptyset$, as well as $\Lambda=\bigcup _{e\in \mathcal{E}}e$. Suppose that a connected graph $H$ admits a \emph{total set-labeling} $\pi: V(H)\cup E(H)\rightarrow \mathcal{E}$ with $\pi(w)\neq \pi(z)$ for distinct vertices $w,z\in V(H)$, and the edge color set $\pi (uv)$ for each edge $uv\in E(G)$ holds $\pi (uv)\neq \pi (uw)$ for each neighbor $w\in N_{ei}(u)$ and each vertex $u\in V(G)$. There are the following intersected-type constraints:
\begin{asparaenum}[\textbf{\textrm{Chyper}}-1.]
\item \label{join:join-edge} $\pi (u)\cap \pi (v)\subseteq \pi (uv)$ and $\pi (u)\cap \pi (v)\neq \emptyset $ for each edge $uv\in E(G)$.
\item \label{join:join-edge-r-rank} $\pi (u)\cap \pi (v)\subseteq \pi (uv)$ and $|\pi (u)\cap \pi (v)|\geq r\geq 2$ for each edge $uv\in E(G)$.
\item \label{join:join-edge-vertex} $\pi (uv)\cap \pi (u)\neq \emptyset$ and $\pi (uv)\cap \pi (v)\neq \emptyset$ for each edge $uv\in E(G)$.
\item \label{join:join-dajacent-edges} $\pi (uv)\cap \pi (uw)\neq \emptyset$ for each neighbor $w\in N_{ei}(u)$ and each vertex $u\in V(G)$.
\item \label{join:dajacent-edges-no-join} $\pi (uv)\cap \pi (uw)=\emptyset$ for each neighbor $w\in N_{ei}(u)$ and each vertex $u\in V(G)$.
\end{asparaenum}
\textbf{Then we have:}
\begin{asparaenum}[\textbf{\textrm{Cgraph}}-1.]
\item If Chyper-\ref{join:join-edge} holds true, then $G$ is called a \emph{subvertex-intersected graph} and $\pi$ is called \emph{subintersected total set-labeling} of the graph $G$.
\item If Chyper-\ref{join:join-edge-r-rank} holds true, then $G$ is called a \emph{$r$-rank subvertex-intersected graph} and $\pi$ is called a \emph{$r$-rank subintersected total set-labeling} of the graph $G$.
\item If Chyper-\ref{join:join-edge} and Chyper-\ref{join:join-edge-vertex} hold true, then $G$ is called an \emph{intersected-edge-intersected graph} and $\pi$ is called an \emph{intersected-edge-intersected total set-labeling} of the graph $G$.
\item If Chyper-\ref{join:join-edge-r-rank} and Chyper-\ref{join:join-edge-vertex} hold true, $G$ is called a \emph{$r$-rank intersected-edge-intersected graph} and $\pi$ is called a \emph{$r$-rank intersected-edge-intersected total set-labeling} of the graph $G$.
\item $G$ is called an \emph{edge-intersected graph} if Chyper-\ref{join:join-edge-vertex} holds true, and $\pi$ is called an \emph{edge-intersected total set-labeling} of the graph $G$.
\item If Chyper-\ref{join:join-edge-vertex} and Chyper-\ref{join:join-dajacent-edges} hold true, then $G$ is called an \emph{adjacent edge-intersected graph}, and $\pi$ is called an \emph{adjacent edge-intersected total set-labeling} of the graph $G$.
\item If Chyper-\ref{join:join-edge-vertex} and Chyper-\ref{join:dajacent-edges-no-join} hold true, then $G$ is called an \emph{individual edge-intersected graph}, and $\pi$ is called an \emph{individual edge-intersected total set-labeling} of the graph $G$.\qqed
\end{asparaenum}
\end{defn}

\begin{problem}\label{qeu:444444}
Since $\Lambda=\bigcup _{e\in \mathcal{E}}e$ for a hyperedge set $\mathcal{E}$ based on a finite set $\Lambda$, and a connected graph $G$ admits a \emph{$W$-constraint-intersected total set-labeling} $F$ defined in Definition \ref{defn:W-join-type-se-total-coloring}, \textbf{characterize} hyperedge sets $\mathcal{E}$ and estimate the extremum number $\min \big \{\max \Lambda :\Lambda=\bigcup _{e\in \mathcal{E}}e \big \}$ when each $\Lambda$ is a consecutive nonnegative integer set.
\end{problem}

\begin{thm}\label{thm:five-edge-join-total-set-labelings}
Each tree admits one of subintersected total set-labeling, intersected-edge-intersected total set-labeling, edge-intersected total set-labeling, adjacent edge-intersected total set-labeling and individual edge-intersected total set-labeling defined in Definition \ref{defn:W-join-type-se-total-coloring}.
\end{thm}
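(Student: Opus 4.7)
The plan is to construct, for an arbitrary tree $T$ with $q$ edges, a single set-labeling realizing the first three types simultaneously and then two additional modifications to cover the two remaining adjacent-edge conditions.

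First I would invoke Theorem \ref{thm:producing-set-colorings-from-labelings} to obtain a graceful-intersection total set-labeling $F$ of $T$ satisfying $F(uv)=F(u)\cap F(v)\neq\emptyset$ for every edge $uv\in E(T)$. Setting $\pi:=F$, condition Chyper-\ref{join:join-edge} is immediate because $F(u)\cap F(v)=F(uv)\subseteq F(uv)$. Moreover $\pi(uv)\cap\pi(u)=F(u)\cap F(v)\cap F(u)=F(u)\cap F(v)\neq\emptyset$ and symmetrically for $v$, so Chyper-\ref{join:join-edge-vertex} is also satisfied. Hence this single $\pi$ is simultaneously a subintersected, an intersected-edge-intersected, and an edge-intersected total set-labeling of $T$.

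For the adjacent edge-intersected type (Chyper-\ref{join:join-edge-vertex} with Chyper-\ref{join:join-dajacent-edges}), I would augment $F$ by attaching to each vertex $u$ a fresh tag $\alpha_u$ drawn outside the ground set $[0,q]$, with the $\alpha_u$ pairwise distinct. Define $\pi'(u):=F(u)\cup\{\alpha_u\}$ and $\pi'(uv):=F(uv)\cup\{\alpha_u,\alpha_v\}$. Any two adjacent edges $uv,uw$ then share $\alpha_u$, delivering Chyper-\ref{join:join-dajacent-edges}; meanwhile $\alpha_u\in\pi'(uv)\cap\pi'(u)$ preserves Chyper-\ref{join:join-edge-vertex}, and both vertex-distinctness and adjacent-edge distinctness are inherited from $F$ together with the freshness of the $\alpha_u$.

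The hard part will be the individual edge-intersected type (Chyper-\ref{join:join-edge-vertex} with Chyper-\ref{join:dajacent-edges-no-join}), where adjacent edges must carry \emph{disjoint} color sets while each still meets both endpoint colors. My plan is to pick a bijective edge labeling $\iota:E(T)\to[1,q]$ and define $\pi''(e):=\{\iota(e)\}$ for edges and $\pi''(u):=\{\iota(e):e\text{ is incident to }u\}\cup\{\beta_u\}$ for vertices, where each $\beta_u$ is a fresh tag from $\{q+1,q+2,\dots\}$ ensuring $\pi''(w)\neq\pi''(z)$ for all distinct vertices (needed to patch degenerate cases such as $T\cong K_2$ or two pendant leaves with symmetric neighborhoods). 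Then $\pi''(uv)\cap\pi''(u)\supseteq\{\iota(uv)\}$ verifies Chyper-\ref{join:join-edge-vertex}, while for adjacent edges $uv,uw$ the injectivity of $\iota$ yields $\pi''(uv)\cap\pi''(uw)=\{\iota(uv)\}\cap\{\iota(uw)\}=\emptyset$, verifying Chyper-\ref{join:dajacent-edges-no-join}. The only bookkeeping subtlety is reconciling the global vertex-distinctness required by Definition \ref{defn:W-join-type-se-total-coloring} with the adjacent-edge distinctness; the $\beta_u$ fix the former and the injectivity of $\iota$ fixes the latter, so once these tags are in place no further combinatorial work is needed.
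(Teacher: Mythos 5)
Your proposal is correct, and for four of the five types it coincides with the paper's construction: the paper also starts from a graceful(ly) total coloring $f$ and sets $F_1(x)=\{f(xv):v\in N(x)\}$, $F_1(uv)=\{f(uv)\}$, then observes in one stroke that this single labeling is subintersected, intersected-edge-intersected, edge-intersected, \emph{and} individual edge-intersected (so your separate $\pi''$ is redundant — your first labeling $\pi=F$ already has $\pi(uv)\cap\pi(uw)=\{f(uv)\}\cap\{f(uw)\}=\emptyset$; your $\beta_u$ tags do, however, honestly patch the degenerate vertex-distinctness failure for $K_2$ that the paper glosses over). The genuine divergence is the adjacent edge-intersected case. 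The paper devotes the bulk of its proof to two iterative algorithms (peeling leaf-sets $L(T_i)$ level by level, or repeatedly stripping the ends of a longest path) that recolor each edge with a union of the original singleton colors of nearby edges so that adjacent edges share a common element of the \emph{original} ground set $\Lambda=[1,q]$. Your construction instead adjoins one fresh tag $\alpha_u$ per vertex and puts $\alpha_u,\alpha_v$ into $\pi'(uv)$, which verifies Chyper-3 and Chyper-4 in two lines. Your route is markedly simpler and clearly correct; what the paper's heavier machinery buys is that the hyperedge set stays over the unchanged ground set $[1,q]$ produced by the graceful coloring (which the paper then exploits, e.g., to exhibit perfect hypermatchings in the resulting intersected-graphs), whereas your labeling lives over an enlarged $\Lambda$ of size $q+|V(T)|$. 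Since Definition \ref{defn:W-join-type-se-total-coloring} places no constraint on $\Lambda$ beyond finiteness, your argument fully proves the theorem as stated.
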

\begin{proof} By Theorem \ref{thm:tree-graceful-total-coloringss}, a tree $T$ admits a gracefully total coloring $f$ with its own vertex color set $f(V(T))\subseteq [0,q]$ and its own edge color set $f(E(T))=[1,q]$, where $|E(T)|=q$. Let $\Lambda=[1,q]$ in the following deduction.

\textbf{F-1.} We, for the tree $T$, define a total set-labeling $F_1$ as: $F_1(x)=\{f(xv):v\in N_{ei}(x)\}$ for each vertex $x\in V(T)$, and each edge $uv$ of $E(T)$ is colored with $F_1(uv)=\{f(uv)\}$ and for each edge $uv\in E(T)$. Clearly, $F_1(x)\neq F_1(y)$ for distinct vertices $x,y\in V(T)$ and $[1,q]=F_1(V(T))$, since $f(E(T))=\Lambda$. So, $F_1(V(T))$ is a \emph{hyperedge set} $\mathcal{E}_1$ defined on a finite set $\Lambda$, and moreover $\{f(uv)\}= F_1(u)\cap F_1(v)\subseteq F_1(uv)$. Thereby, we claim that $F_1$ is a \emph{subintersected total set-labeling} of the tree $T$ according to Definition \ref{defn:W-join-type-se-total-coloring}, and the tree $T$ is a subvertex-intersected graph, as a subgraph of a vertex-intersected graph of a hypergraph $\mathcal{H}_{yper}=(\Lambda,\mathcal{E}_1)$.

\textbf{F-2.} Notice that $F_1(uv)\cap F_1(u)=\{f(uv)\}$ and $F_1(uv)\cap F_1(v)=\{f(uv)\}$, so $F_1$ is an \emph{intersected-edge-intersected total set-labeling} defined in Definition \ref{defn:W-join-type-se-total-coloring}, and the tree $T$ is an intersected-edge-intersected graph.

\textbf{F-3.} We have $F_1$ to be an \emph{edge-intersected total set-labeling} and the tree $T$ to be an edge-intersected graph because $F_1(uv)\cap F_1(u)=\{f(uv)\}$ and $F_1(uv)\cap F_1(v)=\{f(uv)\}$ from Definition \ref{defn:W-join-type-se-total-coloring}.

\textbf{F-4.} Notice that

(i) $F_1(uv)\cap F_1(uw)=\emptyset$ for $w\in N_{ei}(u)$ and each vertex $u\in V(T)$;

(ii) and $F_1(uv)\cap F_1(u)=\{f(uv)\}$ and $F_1(uv)\cap F_1(v)=\{f(uv)\}$.

We claim that the tree $T$ is an individual edge-intersected graph admitting an \emph{individual edge-intersected total set-labeling} $F_1$.

\textbf{F-5.} Let $L(T_i)$ be the leaf-set of leaves of the tree $T_i$, where a leaf is a vertex having degree one. So, we have trees $T_{i+1}=T_{i}-L(T_{i})$ for $i\in [1,m]$ with $m=\big \lfloor \frac{D(T)+1}{2}\big \rfloor -1$, where $T_1=T$ and $D(T)$ is the diameter of the tree $T$. And, for $i\in [1,m]$, each leaf-set $L(T_i)$ can be cut into subsets $L_r(T_i)=\{w_{i,r,1},w_{i,r,2},\dots, w_{i,r,d(v_{i,r})}\}$ for $r\in [1,A_i]$, $w_{i,r,j}\in N_{ei}(v_{i,r})$ for $j\in [1,d(v_{i,r})]$, where $v_{i,r}$ is not a leaf of $T_i$ and has its own degree $d(v_{i,r})=\textrm{deg}_T(v_{i,r})\geq 2$.

We define a new total set-labelling $F_2$ for the tree $T$ as follows: $F_2(x)=F_1(x)$ for each vertex $x\in V(T)$. For coloring edges of the tree $T$, there are the following cases:

\textbf{Case 1.} If the tree $T$ is a star $K_{1,n}$ with its center vertex $x_0$ and leaf set $L(K_{1,n})=\{x_1,x_2,\dots ,x_n\}$, we set $F_2(x_0x_i)=\bigcup^n_{j=1}F_1(x_0x_j)$ for each $i\in [1,n]$. So, $F_2$ is an adjacent edge-intersected total set-labeling of $K_{1,n}=T$.

\textbf{Case 2.} If the diameter $D(T)\geq 3$. We present the following algorithm:

\textbf{Step C2.1.} For the leaf set $L(T_1)$ and $r\in [1,A_1]$, we color each edge $w_{1,r,j}v_{1,r}$ with
\begin{equation}\label{eqa:555555}
F_{2,1}(w_{1,r,j}v_{1,r})=F_1(v_{1,r}z_{1,r})\bigcup \left [\bigcup ^{d(v_{1,r})}_{j=1}F_1(w_{1,r,j}v_{1,r})\right ]
\end{equation} for $j\in [1,d(v_{i,r})]$, and $F_{2,1}(w_{1,r,j}v_{1,r})=F_{2,1}(v_{1,r}z_{1,r})$, where $z_{1,r}\not\in L(T_1)$, however, $v_{1,r}\in L(T_2)$ and $z_{1,r}\in L(T_3)$.

\textbf{Step C2.2.} For leaf set $L(T_2)$ and $r\in [1,A_2]$, notice that $v_{1,r}\in L(T_2)$ for $r\in [1,A_1]$, so $v_{1,r}\in L_r(T_2)=\{w_{2,r,1},w_{2,r,2},\dots, w_{2,r,d(v_{2,r})}\}$, and $v_{1,r}=w_{2,r,s}$ for some $s$ and $z_{1,r}=v_{2,r}$, then each edge $w_{2,r,j}v_{2,r}$ is colored with
\begin{equation}\label{eqa:555555}
F_{2,2}(w_{2,r,j}v_{2,r})=F_{2,1}(v_{1,r}z_{1,r})\bigcup F_1(v_{2,r}z_{2,r})\bigcup \left [\bigcup ^{d(v_{2,r})}_{j=1,j\neq s}F_1(w_{2,r,j}v_{2,r})\right ]
\end{equation} for $j\in [1,d(v_{2,r})]$, and $F_{2,2}(w_{2,r,j}v_{2,r})=F_{2,2}(v_{2,r}z_{2,r})$, where $z_{2,r}\not\in L(T_2)$, however, $v_{2,r}\in L(T_3)$ and $z_{2,r}\in L(T_4)$.

\textbf{Step C2.k.} For leaf set $L(T_k)$ and $r\in [1,A_k]$, we have leaves $v_{k-1,r}\in L(T_k)$ for $r\in [1,A_{k-1}]$, and $v_{k-1,r}\in L_r(T_k)=\{w_{k,r,1},w_{k,r,2},\dots, w_{k,r,d(v_{k,r})}\}$, as well as $v_{k-1,r}=w_{k,r,s}$ for some $s$ and $z_{k-1,r}=v_{k,r}$, and we color each edge $w_{k,r,j}v_{k,r}$ with
\begin{equation}\label{eqa:555555}
F_{2,k}(w_{k,r,j}v_{k,r})=F_{2,k-1}(v_{k-1,r}z_{k-1,r})\bigcup F_1(v_{k,r}z_{k,r})\bigcup \left [\bigcup ^{d(v_{k,r})}_{j=1,j\neq s}F_1(w_{k,r,j}v_{k,r})\right ]
\end{equation} for $j\in [1,d(v_{k,r})]$, and $F_{2,k}(w_{k,r,j}v_{k,r})=F_{2,k}(v_{k,r}z_{k,r})$, where $z_{k,r}\not\in L(T_k)$, however, $v_{k,r}\in L(T_{k+1})$ and $z_{2,r}\in L(T_{k+2})$.

Go on in the above procedure, we meet:

(a) The last tree $T_{m}=T_{m-1}-L(T_{m-1})$ is a star $K_{1,1}$ with $V(K_{1,1})=\{w_{m,1,1}, v_{m,1}\}$ and $E(K_{1,1})=\{w_{m,1,1}v_{m,1}\}$, we set
$$F_{2,m}(w_{m,1,1}v_{m,1})=F_1(w_{m,1,1}v_{m,1})\cup F_{2,m-1}(w_{m,1,1}w_{m-1,i})\cup F_{2,m-1}(v_{m,1}v_{m-1,j})
$$ where $w_{m-1,i},v_{m-1,j}\in L(T_{m-1})$.

(b) The last tree $T_{m}=T_{m-1}-L(T_{m-1})$ is a star $K_{1,p}$ with $V(K_{1,p})=\{v_{m,1},w_{m,1,q}:q\in [1,p]\}$ and $E(K_{1,p})=\{v_{m,1}w_{m,1,q}:q\in [1,p]\}$, so $w_{m,1,q}w_{m-1,j,q}\in E(t)$ and $w_{m-1,j,q}\in L(T_{m-1})$, we color each edge $v_{m,1}w_{m,1,q}$ as
\begin{equation}\label{eqa:555555}
F_{2,m}(v_{m,1}w_{m,1,q})=\left [\bigcup ^{p}_{q=1}F_{2,m-1}(w_{m,1,q}w_{m-1,j,q}) \right]\bigcup \left [\bigcup ^{p}_{q=1}F_1(v_{m,1}w_{m,1,q})\right ]
\end{equation} for $q\in [1,p]$

Thereby, $F_2(xy)$ for each edge $xy\in E(T)$ is defined as $F_2(xy)=F_{2,k}(xy)$ if $x\in L(T_k)$ and $y\in L(T_{k-1})$ for $k\in [1,m]$. It is not difficult to see $F_2(uv)\cap F_2(uw)\neq \emptyset$ for each neighbor $w\in N_{ei}(u)$ and each vertex $u\in V(T)$, so we claim that $F_2$ is an adjacent edge-intersected total set-labeling of the tree $T$, and $F_2(V(T))$ is a hyperedge set of the hypergraph $\mathcal{H}_{yper}=(\Lambda,F_2(V(T)))$. See Fig.\ref{fig:adjacent-edge-join-set-11} for obtaining an adjacent edge-intersected total set-labeling of a tree.

The proof of the theorem is complete.
\end{proof}

\begin{figure}[h]
\centering
\includegraphics[width=15.6cm]{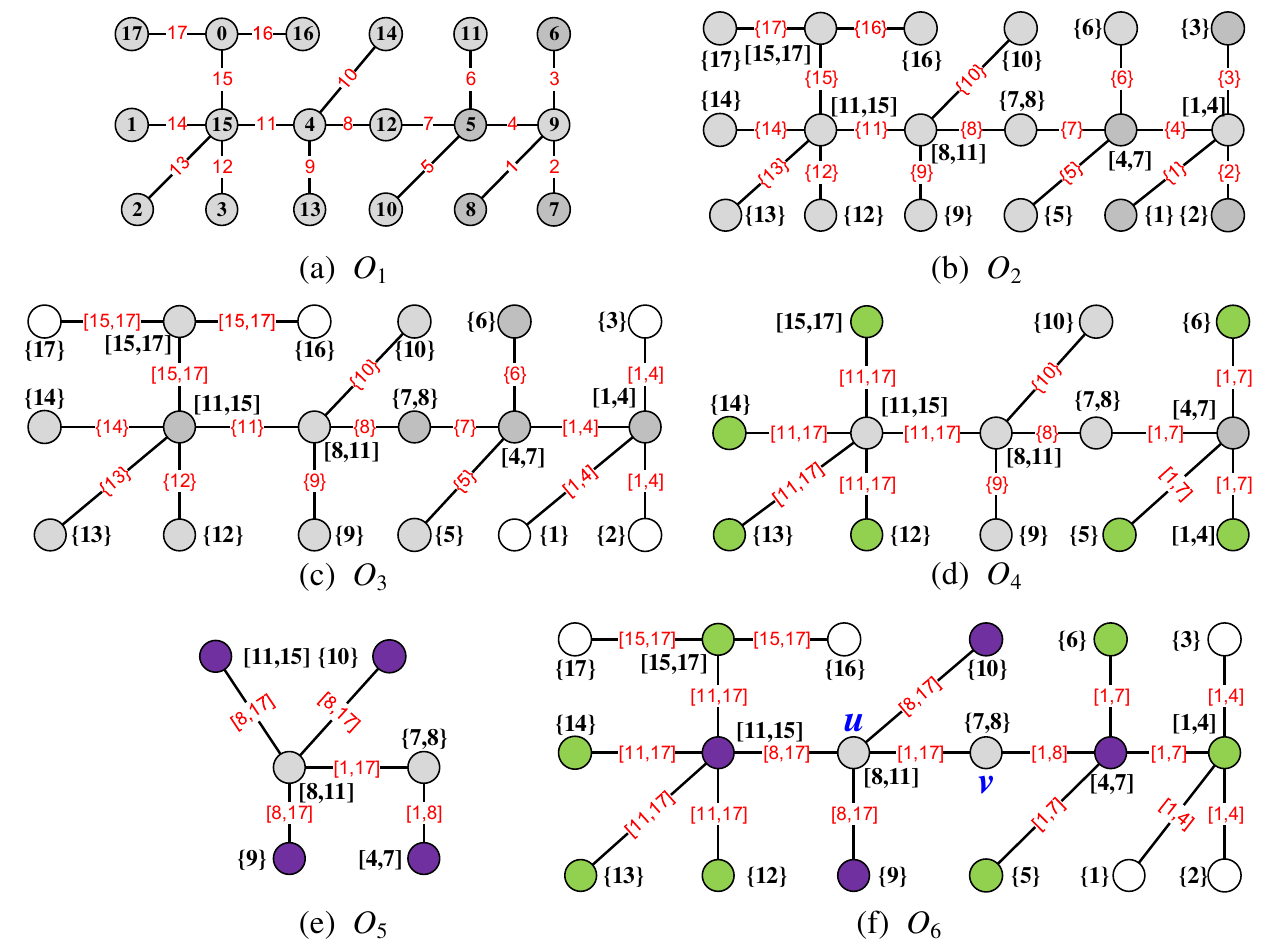}\\
\caption{\label{fig:adjacent-edge-join-set-11}{\small An algorithm for obtaining the adjacent edge-intersected total set-labeling $F$ of the tree $O_1$ shown in (a), where $F(uv)=[1,17]$ in $O_6$.}}
\end{figure}

\subsection{An algorithm for adjacent edge-intersected total set-labelings}

Another algorithm is shown in Fig.\ref{fig:adjacent-edge-join-set-22}. We use a longest path to make an adjacent edge-intersected total set-labeling of a tree at each time. Let $T_1$ be a tree with diameter $D(T_1)\geq 3$, and $T_1$ admit a graceful coloring $g$ with $g(V(T_1))\subseteq [0,q]$ and $g(E(T_1))=[1,q]$, where $|E(T_1)|=q$. Let $\Lambda=[1,q]$, $m=\big \lfloor \frac{D(T)+1}{2}\big \rfloor -1$.

\textbf{Step 1.} Define a total set-labeling $F^*$ as: $F^*(x)=\{g(xy):y\in N_{ei}(x)\}$ for each vertex $x\in V(T)$, and $F^*(uv)=\{g(uv)\}$ for each edge $uv\in E(T_1)$. Clearly, $F^*(x)\neq F^*(y)$ for distinct vertices $x,y\in V(T_1)$ and $F^*(V(T))=\Lambda$, since $g(E(T_1))=\Lambda$. So, $F^*(V(T_1))=\mathcal{E}_1$ is a \emph{hyperedge set} defined on a finite set $\Lambda$, and moreover $\{g(uv)\}= F^*(u)\cap F^*(v)\subseteq F^*(uv)$.

\textbf{Step 2.} We define a new total set-labelling $F^*_1$ for $T_1$ in the following steps:

\textbf{Step 2.1.} $F^*_1(x)=F^*(x)$ for each vertex $x\in V(T_1)$.

\textbf{Step 2.2.} Suppose that $P_1=x_{1,1}x_{1,2}\dots x_{1,n_1}$ is a longest path of the tree $T_1$ with $m\geq 3$, $x_{1,2}$ and $x_{1,n_1-1}$ are not leaves of $T_1$. The adjacent neighbor set $N_{ei}(x_{1,2})$ of the vertex $x_{1,2}$ is of form as $N_{ei}(x_{1,2})=\{x_{1,3}\}\cup L_{eaf}(x_{1,2})$ for leaf set
$$
L_{eaf}(x_{1,2})=\big \{x_{1,1}\big \}\cup \big \{y_{1,j}:j\in [1,\textrm{deg}(x_{1,2})-1],y_{1,j}\in N_{ei}(x_{1,2})\big \}\subset L(T_1)
$$ and the adjacent neighbor set $N_{ei}(x_{1,n_1-1})=\{x_{1,n_1-2}\}\cup L_{eaf}(x_{1,n_1-1})$ for leaf set
$$L_{eaf}(x_{1,n_1-1})=\big \{x_{1,n_1}\big \}\cup \big \{u_{1,i}:i\in [1,\textrm{deg}(x_{1,n_1-1})-1],u_{1,i}\in N_{ei}(x_{1,n_1-1})\big \}\subset L(T_1)
$$ Now, we color these leaf-edges $x_{1,2}y$ and $x_{1,n_1-1}u$ as
\begin{equation}\label{eqa:555555}
{
\begin{split}
&F^*_1(x_{1,2}y)=F^*(x_{1,2}y)\cup F^*(x_{1,2}x_{1,3}),~ y\in L_{eaf}(x_{1,2})\\
&F^*_1(x_{1,n_1-1}u)=F^*(x_{1,n_1-1}u)\cup F^*(x_{1,n_1-1}x_{1,n_1-2}),~ u\in L_{eaf}(x_{1,n_1-1})
\end{split}}
\end{equation}

\textbf{Step 2.3.} Let $T_2=T_1-L_{eaf}(x_{1,2})-L_{eaf}(x_{1,n_1-1})$ with $D(T_2)\geq 3$, and take a longest path $P_2=x_{2,1}x_{2,2}\dots x_{2,n_2}$ of the tree $T_2$. We have two leaf sets $L_{eaf}(x_{2,2})$ and $L_{eaf}(x_{2,n_2-1})$, and two vertices $x_{2,3},x_{2,n_2-2}$ are not in $L(T_2)$. We color those leaf-edges $x_{2,2}y$ and $x_{2,n_2-1}u$ in the following
\begin{equation}\label{eqa:555555}
{
\begin{split}
&F^*_1(x_{2,2}y)=F^*(x_{2,2}y)\cup F^*(x_{2,2}x_{2,3}),~ y\in L_{eaf}(x_{2,2})\\
&F^*_1(x_{2,n_2-1}u)=F^*(x_{2,n_2-1}u)\cup F^*(x_{2,n_2-1}x_{2,n_2-2}),~ u\in L_{eaf}(x_{2,n_2-1})
\end{split}}
\end{equation}

\textbf{Step 2.$k+2$.} We have a tree $T_{k+1}=T_{k}-L_{eaf}(x_{k,2})-L_{eaf}(x_{k,n_k-1})$ with $D(T_{k+1})\geq 3$ and $k\in [1,m-1]$, and take a longest path $P_{k+1}=x_{k+1,1}x_{k+1,2}\dots x_{k+1,n_{k+1}}$ of the tree $T_{k+1}$. There are two leaf sets $L_{eaf}(x_{k+1,2})$ and $L_{eaf}(x_{k+1,n_{k+1}-1})$, and two vertices $x_{k+1,3},x_{k+1,n_{k+1}-2}$ are not in $L(T_{k+1})$. We color those leaf-edges $x_{k+1,2}y$ and $x_{k+1,n_{k+1}-1}u$ as follows
{\small
\begin{equation}\label{eqa:555555}
{
\begin{split}
&F^*_1(x_{k+1,2}y)=F^*(x_{k+1,2}y)\cup F^*(x_{k+1,2}x_{k+1,3}),y\in L_{eaf}(x_{k+1,2})\\
&F^*_1(x_{k+1,n_{k+1}-1}u)=F^*(x_{k+1,n_{k+1}-1}u)\cup F^*(x_{k+1,n_{k+1}-1}x_{k+1,n_{k+1}-2}), u\in L_{eaf}(x_{k+1,n_{k+1}-1})
\end{split}}
\end{equation}
}

Thereby, we have recolored the edges of the tree $T_1$ well, and it is not hard to see $F^*_1(uv)\cap F^*_1(uw)\neq \emptyset$ for $w\in N_{ei}(u)$ and each vertex $u\in V(T_1)$, see examples shown in Fig.\ref{fig:adjacent-edge-join-set-22}. We claim that $F^*_1$ is an adjacent edge-intersected total set-labeling of $T_1$.

\begin{figure}[h]
\centering
\includegraphics[width=16.4cm]{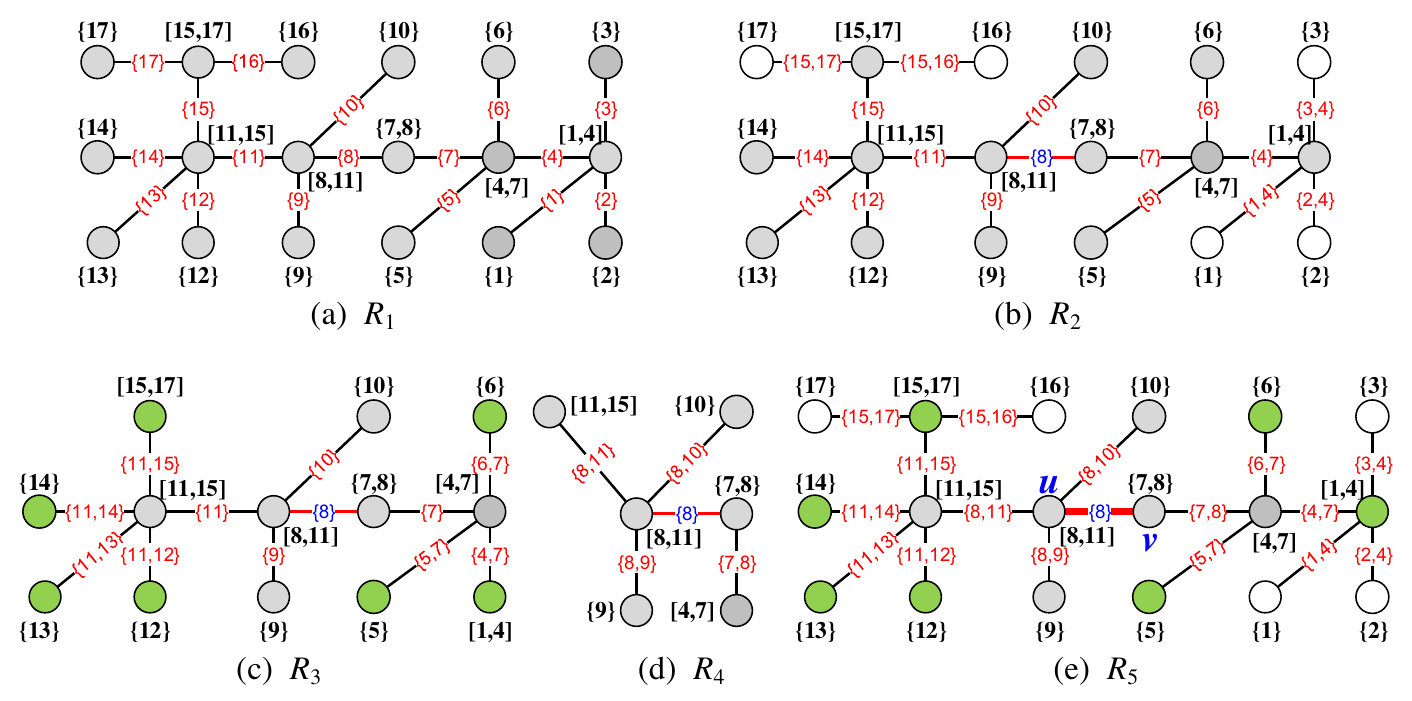}\\
\caption{\label{fig:adjacent-edge-join-set-22}{\small Another algorithm for obtaining the adjacent edge-intersected total set-labeling $F^*$ of the tree $O_1$ shown in Fig.\ref{fig:adjacent-edge-join-set-11} (a), where $F^*(uv)=\{8\}$ in the set-colored tree $R_5$, and there are two perfect hypermatchings in a vertex-intersected graph $R_1$.}}
\end{figure}

\begin{thm}\label{thm:graph-five-edge-join-total-set-labelings}
Each connected graph admits each one of subintersected total set-labeling, intersected-edge-intersected total set-labeling, edge-intersected total set-labeling, adjacent edge-intersected total set-labeling and individual edge-intersected total set-labeling defined in Definition \ref{defn:W-join-type-se-total-coloring}.
\end{thm}
\begin{proof} By the vertex-splitting operation, we vertex-split a connected $(p,q)$-graph $G$ into a tree $T_G$ of $(q+1)$ vertices. Since $T_G$ admits one of subintersected total set-labeling, intersected-edge-intersected total set-labeling, edge-intersected total set-labeling, adjacent edge-intersected total set-labeling and individual edge-intersected total set-labeling according to Theorem \ref{thm:five-edge-join-total-set-labelings}.

Doing the vertex-coinciding operation to $T_G$ produces the original connected $(p,q)$-graph $G$, we define a set-coloring $F^*$ for $G$ in the following: Since $T_G$ admits a total set-coloring $F:V(T_G)\cup E(T_G)\rightarrow \mathcal{E}$ to be one intersected-type total set-labeling defined in Definition \ref{defn:W-join-type-se-total-coloring}, we define the desired set-coloring $F^*$ as: $F^*(z)=F(x)\cup F(y)$ if $z=x\bullet y$ for $z\in V(G)$ and $x,y\in V(T_G)$, $F^*(uv)=F(uv)$ for each edge $uv\in E(G)$ and $uv\in E(T_G)$, then we claim that the connected $(p,q)$-graph $G$ admits each one of the intersected-type total set-labelings defined in Definition \ref{defn:W-join-type-se-total-coloring}.

Notice that there are trees $T^1_{G}, T^2_{G}, \dots,T^M_{G}$ of $(q+1)$ vertices obtained from the connected $(p,q)$-graph $G$ by means of the vertex-splitting operation, so the connected $(p,q)$-graph $G$ admits more set-coloring defined in Definition \ref{defn:W-join-type-se-total-coloring}.

The proof of the theorem is completed.
\end{proof}

\section{Graph Operations Of Vertex-Intersected Graphs}

\subsection{Set-increasing and set-decreasing operations}

Let $\mathcal{E}=\{e_1$, $e_2$, $\dots$, $e_n\}=\big \{e_i\big \}^n_{i=1}$ and $\mathcal{X}=\{e^*_1,e^*_2,\dots,e^*_n\}=\big \{e^*_i\big \}^n_{i=1}$ be two hyperedge sets based on a finite set $\Lambda$. We do a set subtraction operation to $\mathcal{E}$ and $\mathcal{X}$, such that each set $e\,'_i$ of the resultant subset set $\mathcal{E}\,'=\big \{e\,'_i\big \}^n_{i=1}$ holds $e\,'_i=e_i\setminus e^*_{i_j}$ or $e\,'_i=e_i$ with $e_i\in \mathcal{E}$ and $e^*_{i_j}\in \mathcal{X}$ and $i\in [1,n]$, and there is at least a set $e\,'_j$ holding $e\,'_j\neq e_j$ for some $j$. We write $\mathcal{E}\,'=\mathcal{E}[\setminus ]\mathcal{X}$ and $\mathcal{E}=\mathcal{E}\,'[\cup ]\mathcal{X}$, respectively. And we call $\mathcal{H}\,'_{yper}=(\Lambda,\mathcal{E}\,')$ \emph{set-decreased hypergraph} of the hypergraph $\mathcal{H}_{yper}=(\Lambda,\mathcal{E})$ according to $\mathcal{E}\,'=\mathcal{E}[\setminus ]\mathcal{X}$; conversely, $\mathcal{H}_{yper}$ is a \emph{set-increased hypergraph} of the hypergraph $\mathcal{H}\,'_{yper}$ because of $\mathcal{E}=\mathcal{E}\,'[\cup ]\mathcal{X}$.

\begin{defn} \label{defn:111111}
\cite{Yao-Ma-arXiv-2201-13354v1} For the hyperedge set $\mathcal{E}$ of a hypergraph $\mathcal{H}_{yper}=(\Lambda,\mathcal{E})$ if there is no hyperedge set $\mathcal{X}\in \mathcal{E}(\Lambda^2)$ such that $\mathcal{E}[\setminus ]\mathcal{X}$ is the hyperedge set $\mathcal{E}^*$ of some hypergraph $\mathcal{H}^*_{yper}=(\Lambda,\mathcal{E}^*)$, then we say that the hyperedge set $\mathcal{E}$ defined on a finite set $\Lambda$ is not \emph{decreasing}.\qqed
\end{defn}

\begin{example}\label{exa:8888888888}
A hyperedge set $\mathcal{E}$ defined in Eq.(\ref{eqa:example-hypergraph}) and another hyperedge set $\mathcal{X}=\{\{1\}$, $\{2\}$, $\{7\}$, $\{8\}\}$ produce the following hyperedge set
\begin{equation}\label{eqa:reducing-hypergraph-two-operations}
{
\begin{split}
\mathcal{E}\,'=\mathcal{E}[\setminus ]\mathcal{X}=&\big \{\{12\},\{11\},\{10\},\{6,10,11,12\},\{4,5,6\},\{5,7\},\\
& \{7,8,9\},\{8\},\{4,9\},\{3,4\},\{2\},\{1,2,3\},\{1\}\big \}
\end{split}}
\end{equation} and $\bigcup _{e_i\in \mathcal{E}\,'}e_i=[1,12]$, so we get a hypergraph $\mathcal{H}\,'_{yper}=([1,12],\mathcal{E}\,')$ defined by the hyperedge set $\mathcal{E}\,'$ shown in Eq.(\ref{eqa:reducing-hypergraph-two-operations}) and its vertex-intersected graph $L$ shown in Fig.\ref{fig:reduce-splitting-coinciding} (a). The hypergraph $\mathcal{H}\,'_{yper}=([1,12],\mathcal{E}\,')$ has a \emph{perfect hypermatching} $\big \{ \{12\},\{11\},\{10\},\{4,5,6\},\{7,8,9\},\{1,2,3\}\big \}$, which is one of a vertex-intersected graph $L$ too. Clearly, the Graham reduction of the hyperedge set $\mathcal{E}\,'$ is an empty set, so the hypergraph $\mathcal{H}\,'_{yper}$ is acyclic. On the other hands, a vertex-intersected graph $L$ is a tree, which implies that the hypergraph $\mathcal{H}\,'_{yper}$ is acyclic.

Since a vertex-intersected graph $L$ is a tree, so the corresponding hypergraph $\mathcal{H}\,'_{yper}$ is \emph{acyclic}, and $L$ is the unique vertex-intersected graph of the hypergraph $\mathcal{H}\,'_{yper}$. Moreover, the hypergraph $\mathcal{H}\,'_{yper}$ is a set-decreased hypergraph of the hypergraph $\mathcal{H}_{yper}$ defined by the hyperedge set $\mathcal{E}$ shown in Eq.(\ref{eqa:example-hypergraph}), and a vertex-intersected graph $L$ is a subgraph of a vertex-intersected graph $G_{yper}$ shown in Fig.\ref{fig:coloring-hypergraph} (b).\qqed
\end{example}

Notice that the hyperedge set $\mathcal{E}\,'$ defined in Eq.(\ref{eqa:reducing-hypergraph-two-operations}) is not decreasing, we have the following result:

\begin{thm}\label{thm:666666}
If the hyperedge set $\mathcal{E}$ of a hypergraph $\mathcal{H}_{yper}=(\Lambda,\mathcal{E})$ is not decreasing, then this hypergraph $\mathcal{H}_{yper}$ is acyclic, also, its vertex-intersected graph is acyclic too.
\end{thm}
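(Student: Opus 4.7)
My plan is to argue by contrapositive: I will show that whenever $\mathcal{H}_{yper}$ contains a hyperedge cycle (equivalently, whenever its intersected-graph contains an ordinary cycle), the hyperedge set $\mathcal{E}$ is decreasing. Concretely, I must exhibit a hyperedge set $\mathcal{X}\subset\Lambda^2$ and a choice of subtractions such that $\mathcal{E}[\setminus]\mathcal{X}$ is again the hyperedge set of a hypergraph on $\Lambda$, i.e.\ a family of pairwise distinct nonempty subsets whose union remains $\Lambda$; the existence of such $\mathcal{X}$ directly contradicts the non-decreasing hypothesis.

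I would take a shortest hyperedge cycle $\mathcal{C}_m=e_1 e_2\cdots e_m e_1$ with $m\geq 3$, in which, by the definition of a hyperedge cycle, no $e_i$ is an ear. Fixing any $y\in e_1\cap e_2$, I consider the single-set witness $\mathcal{X}=\{\{y\}\}$ applied to $e_1$, i.e.\ the candidate family $e\,'_1=e_1\setminus\{y\}$ and $e\,'_i=e_i$ for $i\geq 2$. Three conditions need checking: (a) each $e\,'_i$ is nonempty; (b) the $e\,'_i$ are pairwise distinct; (c) $\bigcup_i e\,'_i=\Lambda$. Condition (a) is automatic: if $e_1=\{y\}$ then $e_1\subseteq e_2$, which forces $e_1\setminus e_2=\emptyset$ and makes $e_1$ an ear with witness $e_2$, contradicting the cycle hypothesis. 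Condition (c) is immediate since $y\in e_2$, so removing $y$ from $e_1$ does not shrink the union.

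The main obstacle is the distinctness condition (b): the set $e_1\setminus\{y\}$ could accidentally coincide with some $e_k\in\mathcal{E}$. I would handle this by pivoting over the cycle: the pool $\bigcup_{i=1}^m (e_i\cap e_{i+1})$ supplies many candidate pairs $(i,y)$, and any collision $e_i\setminus\{y\}=e_k$ forces $e_k\subsetneq e_i$, so $e_k$ itself becomes an ear witnessed by $e_i$. Since the number of distinct targets $e_k$ is at most $n-1$ while each cycle edge provides independent subtraction options (varying either the edge index or the element inside the intersection), a valid $(i,y)$ must exist, yielding the desired $\mathcal{X}$. The hardest step will be the degenerate regime where $|e_i\cap e_{i+1}|=1$ throughout the cycle and every one-element subtraction lands inside $\mathcal{E}$; there I expect to finish by a short exchange argument, propagating the forced inclusions $e_k\subsetneq e_i$ around the cycle until an ear is located on some cycle edge, again contradicting the non-ear condition built into the definition of $\mathcal{C}_m$.
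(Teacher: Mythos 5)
You should first be aware that the paper states this theorem without any proof at all, so there is nothing of the authors' to compare your argument against; your proposal has to stand on its own. Your contrapositive set-up and the verifications of nonemptiness (via the ear condition) and of union-preservation (since $y\in e_2$) are fine. The genuine gap is the distinctness condition (b), and neither device you offer closes it. The counting argument is invalid: there is no reason the supply of admissible pairs $(i,y)$ should exceed the supply of collision targets, and the observation that a collision $e_i\setminus\{y\}=e_k$ forces $e_k\subsetneq e_i$, hence makes $e_k$ an ear, yields no contradiction, because $e_k$ need not lie on the cycle $\mathcal{C}_m$ and ears off the cycle are perfectly allowed by the definition of a hyperedge cycle.

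In fact the deferred ``exchange argument'' cannot exist in general. Take $\Lambda=\{1,2,3,4,5,6\}$, $e_1=\{1,2,6\}$, $e_2=\{2,3,4\}$, $e_3=\{4,5,6\}$, and let $\mathcal{E}$ consist of $e_1,e_2,e_3$ together with \emph{every} nonempty proper subset of each $e_i$. Every vertex of each $e_i$ lies in some other hyperedge, and no other hyperedge contains $e_i$, so none of $e_1,e_2,e_3$ is an ear and $e_1e_2e_3e_1$ is a hyperedge cycle (a triangle in the intersected-graph). Yet any subtraction applied to any member of $\mathcal{E}$ that keeps it nonempty produces a proper subset of that member, which is again a member of $\mathcal{E}$; and if you try to repair the collision by shrinking several sets simultaneously, a modified set that is minimal under inclusion among the modified ones still lands on an unmodified member, so every candidate $\mathcal{E}[\setminus]\mathcal{X}$ contains a repeated set. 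Hence this $\mathcal{E}$ is not decreasing although a hyperedge cycle exists: under the reading of ``decreasing'' that you adopt (which seems to be the intended one), the theorem is false as written, and no pivoting scheme will rescue step (b). The statement and your proof both become correct the moment one adds the hypothesis that $\mathcal{E}$ is \emph{irreducible} (no hyperedge contained in another, i.e.\ Graham-reduced): then $e_1\setminus\{y\}=e_k$ gives $e_k\subsetneq e_1$, an immediate contradiction, and your checks (a)--(c) finish in three lines with no exchange argument. I would add that hypothesis rather than try to salvage the counting step. A minor side point: a cycle in the intersected-graph is not automatically a hyperedge cycle, since the latter additionally forbids ears among its members, so the equivalence asserted in your opening parenthetical also needs care.
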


\begin{rem}\label{rem:333333}
A set-colored graph $G$ may be not a vertex-intersected graph of any hypergraph, however, some proper subgraph of the graph $G$ is really a vertex-intersected graph of some hypergraph, see a set-colored graph $L_2$ shown in Fig.\ref{fig:reduce-splitting-coinciding}.\qqed
\end{rem}

\begin{figure}[h]
\centering
\includegraphics[width=16.4cm]{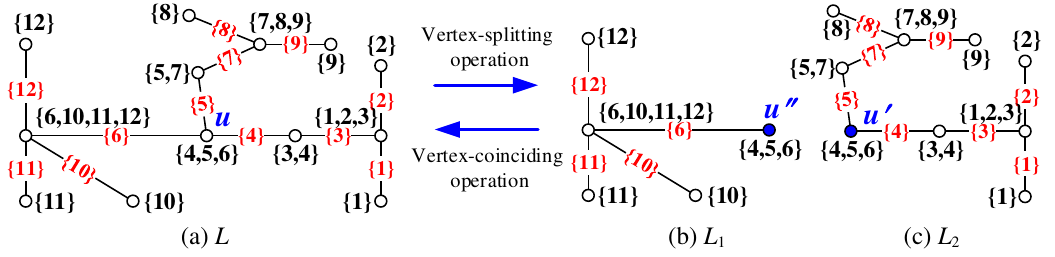}\\
\caption{\label{fig:reduce-splitting-coinciding}{\small (a) A tree $L$ admitting a graceful-intersection total set-labeling defined on the hyperedge set $\mathcal{E}\,'$ shown in Eq.(\ref{eqa:reducing-hypergraph-two-operations}) is a vertex-intersected graph; (b) and (c) are the resultant graphs of doing the vertex-splitting operation to $L$, in which $L_2$ is a vertex-intersected graph based on the hyperedge set $\big\{ \{1,2,3\}$, $\{1\}$, $ \{2\}$, $\{3,4\}$, $\{4,5,6\}$, $\{5,7\}$, $\{7,8,9\}$, $\{8\}$, $\{9\}\big\}\subset [1,9]^2$.}}
\end{figure}

\subsection{Splitting-type and coinciding-type operations}

Suppose that a vertex-intersected graph $H$ of a hypergraph $\mathcal{H}_{yper}=(\Lambda,\mathcal{E})$ subject to the constraint set $R_{est}(c_0,c_1,c_2,\dots ,c_m)$ admits a set-coloring $F:V(H)\rightarrow \mathcal{E}$, such that each edge $uv$ of $E(H)$ is colored with an induced edge color $F(uv)$ defined in Definition \ref{defn:vertex-intersected-graph-hypergraph}. We introduce the following basic splitting and coinciding operations on the vertices and edges of vertex-intersected graphs or graphs.

\subsubsection{Edge-splitting and edge-coinciding operations}

Let $G$ be an intersected-$(p,q)$-graph admitting a $W$-constraint set-coloring $F$ defined on a hypergraph $\mathcal{H}_{yper}=(\Lambda,\mathcal{E})$. We edge-split an edge $uv\in E(G)$ into two edges $u\,'v\,'$ and $u\,''v\,''$, such that the adjacent neighbor sets $N_{ei}(u)=N_{ei}(u\,')\cup N_{ei}(u\,'')$ and $N_{ei}(v)=N_{ei}(v\,')\cup N_{ei}(v\,'')$, the resultant graph is denoted as $G\wedge uv$, which has $|V(G\wedge uv)|=|V(G)|+2=p+2$ vertices and $|E(G\wedge uv)|=|E(G)|+1=q+1$ edges. We define a set-coloring $F^*$ of the edge-split graph $G\wedge uv$ as:
\begin{asparaenum}[\textbf{\textrm{Esc}}-1]
\item $F^*(z)=F(z)$ for $z\in V(G-\{u,uv,v\})\cup E(G-\{u,uv,v\})$.
\item $F^*(xu\,')=F(xu)$ for $x\in N_{ei}(u\,')\subset N_{ei}(u)$, $F^*(yu\,'')=F(yu)$ for $y\in N_{ei}(u\,'')\subset N_{ei}(u)$.
\item $F^*(wv\,')=F(wv)$ for $w\in N_{ei}(v\,')\subset N_{ei}(v)$, $F^*(zv\,'')=F(zv)$ for $z\in N_{ei}(v\,'')\subset N_{ei}(v)$.
\item $F^*(u\,')=F(u)$ and $F^*(u\,'')=F(u)$, $F^*(v\,')=F(v)$ and $F^*(v\,'')=F(v)$.
\item $F^*(u\,'v\,')=F(uv)$ and $F^*(u\,''v\,'')=F(uv)$.
\end{asparaenum}

Thereby, the edge-split $(p+2,q+1)$-graph $G\wedge uv$ admitting a $W$-constraint set-coloring $F^*$ is a vertex-intersected $(p+2,q+1)$-graph of a hypergraph $\mathcal{H}_{yper}=(\Lambda^*,\mathcal{E}^*)$, where $\Lambda^*=\Lambda$ and $\mathcal{E}^*=\mathcal{E}$.

Conversely, we have the original vertex-intersected graph $G=H[u\,'v\,'\ominus u\,''v\,'']$ by the edge-coinciding operation, where $H=G\wedge uv$.

\begin{problem}\label{qeu:444444}
If two hypergraphs $\mathcal{H}_{yper}=(\Lambda,\mathcal{E})$ and $\mathcal{H}_{yper}=(\Lambda^*,\mathcal{E}^*)$ hold $\Lambda^*=\Lambda$ and $\mathcal{E}^*=\mathcal{E}$ true, \textbf{characterize} their vertex-intersected graphs.
\end{problem}

Suppose that $H_1$ and $H_2$ are two vertex-disjoint graphs and each $H_i$ is a vertex-intersected graph of the hypergraph $\mathcal{H}^i_{yper}=(\Lambda_i,\mathcal{E}_i)$ for $i=1,2$. We take edges $u_{i,j}v_{i,j}\in E(H_i)$ for $i=1,2$ and $j\in [1,s]$, and do the edge-coinciding operation to edges $u_{1,j}v_{1,j}$ and $u_{2,j}v_{2,j}$ in order to obtain edge-coincided edges $u_{j}v_{j}=u_{1,j}v_{1,j}\ominus u_{2,j}v_{2,j}$ with vertex-coincided vertices $u_{j}=u_{1,j}\bullet u_{2,j}$, $v_{j}=v_{1,j}\bullet v_{2,j}$ for $j\in [1,s]$, the resultant graph is denoted as $H_1[\bullet ]H_2$.

Suppose that each vertex-intersected graph $H_i$ admits a set-coloring $F_i$ defined on the hyperedge set $\mathcal{E}_i$ based on a finite set $\Lambda_i$ with $i=1,2$, and then we define a set-coloring $F$ for $H_1[\bullet ]H_2$ by setting $F(u_{j})=F_1(u_{1,j})\cup F_2(u_{2,j})$, $F(v_{j})=F_1(v_{1,j})\cup F_2(v_{2,j})$ and $F(u_{j}v_{j})=F_1(u_{1,j}v_{1,j})\cup F_2(u_{2,j}v_{2,j})$ with $j\in [1,s]$, and
$$
F(w)=F_i(w),~w\in \big [V(H_i)\cup E(H_i)\big ]\setminus \big [\{u_{i,j},v_{i,j}:j\in [1,s]\}\cup \{u_{i,j}v_{i,j}:j\in [1,s]\}\big ]
$$ with $i=1,2$. So, the graph $H_1[\bullet ]H_2$ is a vertex-intersected graph of a hypergraph $\mathcal{H}^{\ominus}_{yper}=\big (\Lambda_1\cup \Lambda_2,\mathcal{E}_1\cup \mathcal{E}_2 \big )$, where
\begin{equation}\label{eqa:hypergraphs-hyperedge-coinciding}
\mathcal{H}^{\ominus}_{yper}=\ominus\big \langle \mathcal{H}^1_{yper}, \mathcal{H}^2_{yper}\big \rangle=\big (\Lambda_1\cup \Lambda_2,\mathcal{E}_1\cup \mathcal{E}_2 \big )
\end{equation}

In general, let $\textbf{\textrm{H}}=(H_1,H_2,\dots ,H_m)$ with $H_i\not \subset H_j$ and $H_i\not \cong H_j$ if $i\neq j$ be a \emph{vertex-intersected graph base}, where each vertex-intersected graph $H_i$ admits a set-coloring $F_i$ defined on the hyperedge set $\mathcal{E}_i$ of the hypergraph $\mathcal{H}^i_{yper}=(\Lambda_i,\mathcal{E}_i)$ with $i\in [1,m]$. Let $G_1,G_2,\dots ,G_A$ be a permutation of vertex-intersected graphs $a_1H_1,a_2H_2,\dots ,a_mH_m$, where $A=\sum^m_{k=1} a_k\geq 1$. Now, we do the edge-coinciding operation to edges $x_{k,j}y_{k,j}\in E(G_k)$ and $x_{k+1,j}y_{k+1,j}\in E(G_{k+1})$ for $k\in [1,A-1]$ to obtain a graph
\begin{equation}\label{eqa:555555}
L=\ominus\langle G_1, G_2,\dots ,G_A\rangle=[\ominus] ^m_{k=1} a_kH^k
\end{equation}
to be a subgraph of a vertex-intersected graph of a hypergraph
\begin{equation}\label{eqa:hypergraphs-hyperedge-coinciding11}
\mathcal{H}^{\ominus}_{yper}(L)=\big (\Lambda(L),\mathcal{E}(L) \big )=[\ominus ]^m_{k=1} a_k\mathcal{H}^k_{yper}
 \end{equation}where $\Lambda(L)=\bigcup ^m_{k=1}\Lambda_k$ and $\mathcal{E}(L)=\bigcup ^m_{k=1}\mathcal{E}_k$.

We get an \emph{edge-coincided vertex-intersected graph lattice} as
\begin{equation}\label{eqa:hypergraph-vertex-intersected graph-lattice-edge}
\textbf{\textrm{L}}\big (Z^0[\ominus]\textbf{\textrm{H}}\big )=\big \{[\ominus] ^m_{k=1} a_kH^k:a_k\in Z^0,H_k\in \textbf{\textrm{H}}\big \}
\end{equation} with $\sum^m_{k=1} a_k\geq 1$. Clearly, each set-colored graph $L\in \textbf{\textrm{L}}\big (Z^0[\ominus]\textbf{\textrm{H}}\big )$ can be decomposed into vertex disjoint vertex-intersected graphs $a_1H_1,a_2H_2,\dots ,a_mH_m$ by the vertex-splitting operation.

By Eq.(\ref{eqa:hypergraphs-hyperedge-coinciding11}) and Eq.(\ref{eqa:hypergraph-vertex-intersected graph-lattice-edge}), we have a \emph{hyperedge-coincided hypergraph lattice} as follows
\begin{equation}\label{eqa:hypergraph-lattice-by-hyperedge-coinciding}
\textbf{\textrm{L}}\big (Z^0[\ominus]\textbf{\textrm{H}}_{yper}\big )=\big \{[\ominus ]^m_{k=1} a_k\mathcal{H}^k_{yper}:a_k\in Z^0,\mathcal{H}^k_{yper}\in \textbf{\textrm{H}}_{yper}\big \}
\end{equation} where $\textbf{\textrm{H}}_{yper}=\left (\mathcal{H}^1_{yper},\mathcal{H}^2_{yper},\dots ,\mathcal{H}^m_{yper}\right )$ is \emph{hypergraph lattice base}.

\begin{thm}\label{thm:666666}
\cite{Yao-Ma-arXiv-2201-13354v1} In a vertex-intersected graph base $\textbf{\textrm{H}}$ of the hyperedge-coincided hypergraph lattice $\textbf{\textrm{L}}\big (Z^0[\ominus]\textbf{\textrm{H}}\big )$ defined in Eq.(\ref{eqa:hypergraph-vertex-intersected graph-lattice-edge}), if each vertex-intersected graph $H_i$ is a tree, then $L=[\ominus ]^m_{k=1}a_kH_k $ is a tree too, and moreover if $\Lambda=\Lambda_i$ for $i\in [1,m]$, $\Lambda(L)=\Lambda$, then $L$ is a vertex-intersected graph of the hypergraph $\mathcal{H}^{\ominus}_{yper}=\big (\Lambda,\bigcup ^m_{k=1}\mathcal{E}_k\big )$.
\end{thm}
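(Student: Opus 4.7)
The proof falls naturally into two parts. First I would verify that $L = \ominus^m_{k=1}\langle a_kH_k\rangle$ is a tree by an edge/vertex count together with an induction on $A=\sum^m_{k=1}a_k$. Each single edge-coinciding operation, when performed on two disjoint graphs, identifies one edge pair and the two pairs of its endpoints, hence it reduces the combined vertex count by $2$ and the combined edge count by $1$. Starting from $A$ disjoint trees and performing $A-1$ such operations yields
\begin{equation*}
|V(L)|=\sum^m_{k=1}a_k|V(H_k)|-2(A-1),\quad |E(L)|=\sum^m_{k=1}a_k|E(H_k)|-(A-1).
\end{equation*}
Since $|E(H_k)|=|V(H_k)|-1$ for every tree $H_k$, a short substitution gives $|E(L)|=|V(L)|-1$. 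Connectedness follows by induction: gluing a tree to a previously-built connected graph along a shared edge preserves connectedness. A connected graph on $n$ vertices with $n-1$ edges is a tree, so $L$ is a tree.

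Under the additional hypotheses $\Lambda_i=\Lambda$ for all $i\in[1,m]$ and $\Lambda(L)=\Lambda$, I would next construct the set-coloring of $L$ exhibiting it as the intersected-graph of $\mathcal{H}^{\ominus}_{yper}=(\Lambda,\bigcup^m_{k=1}\mathcal{E}_k)$. The natural candidate $F:V(L)\rightarrow\bigcup^m_{k=1}\mathcal{E}_k$ inherits each $F_k$ from $H_k$ on the vertices that survive intact, and assigns to each coincided vertex $v=v_1\odot v_2\odot\cdots$ the common hyperedge value carried by its constituents. Because every $\mathcal{E}_k$ lies in the same power set $\Lambda^2$, the hypothesis $\Lambda(L)=\Lambda$ forces $F(V(L))=\bigcup^m_{k=1}\mathcal{E}_k$. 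Within each piece $H_k$ the intersected-graph property of Definition \ref{defn:intersected-graph-hypergraph} is inherited from $F_k$, and the edge-coinciding steps add adjacencies precisely at the interfaces where hyperedges from distinct $\mathcal{E}_k$'s have been made to share a common representative. Finally, Theorem \ref{thm:unique-intersected-graph} guarantees that once a graph admitting this set-coloring exists, it is, up to isomorphism, the intersected-graph of $\mathcal{H}^{\ominus}_{yper}$.

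The main obstacle will be verifying compatibility of the inherited coloring at the coincided vertices and ensuring that no ``spurious'' intersecting pair of hyperedges is left without a corresponding edge in $L$. To overcome it, I would restrict the edge-coinciding steps so that the paired edges $x_{k,j}y_{k,j}$ and $x_{k+1,j}y_{k+1,j}$ carry matching hyperedge colors, i.e.\ $F_k(x_{k,j})=F_{k+1}(x_{k+1,j})$ and $F_k(y_{k,j})=F_{k+1}(y_{k+1,j})$. Such pairings are available because $\Lambda_i=\Lambda$ allows the same hyperedge to be a legitimate color in more than one $H_i$. With this compatibility enforced, every pair $e,e'\in\bigcup^m_{k=1}\mathcal{E}_k$ with $e\cap e'\neq\emptyset$ is witnessed by an edge of $L$, either inside a single $H_k$ or created across the interface, and the verification that $L$ is the intersected-graph of $\mathcal{H}^{\ominus}_{yper}$ is complete.
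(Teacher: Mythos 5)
The paper states this theorem without any proof, so there is nothing to compare your route against; judged on its own terms, the first half of your argument is sound. The edge/vertex count plus the induction on connectedness correctly shows $|E(L)|=|V(L)|-1$ with $L$ connected, hence $L$ is a tree — provided each application of $\ominus$ coincides exactly one edge pair. You assume this silently; the paper's definition of $\ominus\langle H_1,H_2\rangle$ allows $s$ edge pairs to be coincided at once, and for $s\geq 2$ the result of gluing two trees is no longer a tree. Since the single-pair reading is the only one under which the claim can hold, your reading is the right one, but the restriction deserves a sentence.

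The second half has a genuine gap. Definition \ref{defn:intersected-graph-hypergraph} requires that \emph{every} pair of hyperedges $e,e\,'\in\bigcup^m_{k=1}\mathcal{E}_k$ with $e\cap e\,'\neq\emptyset$ correspond to an edge of $L$. Your compatibility device (forcing $F_k(x_{k,j})=F_{k+1}(x_{k+1,j})$ at the interface) handles only the pairs that meet at the coincided edge; it does nothing for a hyperedge $e\in\mathcal{E}_i$ carried by a vertex deep inside $H_i$ and a hyperedge $e\,'\in\mathcal{E}_j$ carried by a vertex deep inside $H_j$. Because all the $\mathcal{E}_k$ are defined on the same $\Lambda$ with $\bigcup_{e\in\mathcal{E}_k}e=\Lambda$, such cross pairs with $e\cap e\,'\neq\emptyset$ are generic, yet $L$ is a tree whose only new adjacency is the single coincided edge, so these pairs cannot all be witnessed. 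Your closing sentence asserts the witnessing rather than proving it, and it is false as stated. The missing hypothesis is visible in the paper's own neighbouring Theorem \ref{thm:particular-base-intersected-graph-lattice}, which additionally demands $e_i\cap e_j=\emptyset$ for all $e_i\in\mathcal{E}_i$, $e_j\in\mathcal{E}_j$ with $i\neq j$, making the cross-pair condition vacuous; without some such assumption (or a substantial weakening of what ``intersected-graph'' means), the second claim cannot be closed by your argument. A secondary issue: if you instead follow the paper's convention of coloring a coincided vertex with the union $F_i(u)\cup F_j(u\,')$, that union need not belong to $\mathcal{E}_i\cup\mathcal{E}_j$ at all, so the codomain claim $F(V(L))=\bigcup^m_{k=1}\mathcal{E}_k$ also needs care.
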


\subsubsection{Vertex-splitting and vertex-coinciding operations}

We introduce the concept of vertex-split graphs in the following way: Select randomly a subset $X\subset V(H)$, where the graph $H$ admits a set-coloring $F:V(H)\rightarrow \mathcal{E}$ defined on a hypergraph $\mathcal{H}_{yper}=(\Lambda,\mathcal{E})$, and do the vertex-splitting operation to each vertex $w_i\in X$, such that $w_i$ is split into $w\,'_i$ and $w\,''_i$, and the adjacent neighbor set $N_{ei}(w_i)=N_{ei}(w\,'_i)\cup N_{ei}(w\,''_i)$ with $N_{ei}(w\,'_i)\cap N_{ei}(w\,''_i)=\emptyset$. The resultant graph is denoted as $H\wedge X$, called \emph{vertex-split graph}. Now, we define a set-coloring $F^*$ by $F$ for the vertex-split graph $H\wedge X$ as follows:
\begin{asparaenum}[\textbf{\textrm{Vsc}}-1.]
\item $F^*(z)=F(z)$ for $z\in V(H-X)\cup E(H-X)$.
\item $F^*(xw\,'_i)=F(xw\,'_i)$ for $x\in N_{ei}(w\,'_i)$.
\item $F^*(xw\,''_i)=F(xw\,''_i)$ for $x\in N_{ei}(w\,''_i)$.
\item $F^*(w\,'_i)=F(w_i)$ and $F^*(w\,''_i)=F(w_i)$.

\item $F(w_i)=F^*(w\,'_i)\cup F^*(w\,''_i)$ and $F^*(w\,'_i)\cap F(w\,''_i)=\emptyset$.

\item $F^*(w\,'_i)\subset F(w_i)$ and $F^*(w\,''_i)\subset F(w_i)$, and $F^*(w\,'_i)\cap F^*(w\,''_i)\neq \emptyset$.
\end{asparaenum}

In Fig.\ref{fig:reduce-splitting-coinciding}, we have a vertex-split graph $L\wedge u$, which has two components $L_1$ and $L_2$ shown in Fig.\ref{fig:reduce-splitting-coinciding} (b) and (c), we call $L_1$ and $L_2$ two \emph{partial hypergraphs} of the hypergraph having its vertex-intersected graph $L=L_1[\bullet]L_2$.

If the vertex-split graph $H\wedge X$ is disconnected, that is, the vertex-split graph $H\wedge X$ consists of vertex-disjoint components $H_1,H_2,\dots, H_s$ with $s\geq 2$, and the vertex-split graph $H\wedge X$ admits a set-coloring $F^*$ holding the above \textbf{Vsc}-1, \textbf{Vsc}-2, \textbf{Vsc}-3 and \textbf{Vsc}-4, we call each $H_i$ \emph{partial hypergraph} of the hypergraph
\begin{equation}\label{eqa:555555}
\mathcal{H}_{yper}=(\Lambda,\mathcal{E})=\left (\bigcup ^m_{i=1}\Lambda_i,\bigcup ^s_{i=1}\mathcal{E}_i\right )
\end{equation} where each hypergraph $\mathcal{H}^i_{yper}=(\Lambda_i,\mathcal{E}_i)$ for $i\in [1,s]$.

We do the vertex-coinciding operation to the vertex-split graph $H\wedge X$ admitting the set-coloring $F^*$ by vertex-coinciding each pair of vertices $w\,'_i$ and $w\,''_i$ into one vertex $w_i=w\,'_i\bullet w\,''_i$, and make $F(w_i)=F^*(w\,'_i)\cup F^*(w\,''_i)$, such that the resultant graph is just the original graph $H$, we write
$$H=H_1[\bullet]H_2[\bullet]\cdots [\bullet]H_s=[\bullet]^s_{i=1} H_i,~ \textrm{or}~ H=\odot[H\wedge X]$$

\vskip 0.4cm

\textbf{DC-algorithm for the decomposition and composition of hypergraphs.} We do the vertex-coinciding operation to vertex-disjoint vertex-intersected graphs $G_1,G_2,\dots, G_n$ for getting a new graph
$$
G^*=\odot\big \langle G_1,G_2,\dots,G_n\big \rangle =[\bullet]^n_{k=1} G_k
$$ Suppose that each vertex-intersected graph $G_j$ admits a set-coloring $F_j$ defined on a hyperedge set $\mathcal{E}_j$, we define a set-coloring $F^*$ of $G^*$ as:
\begin{asparaenum}[\textbf{\textrm{Op}}-1]
\item If $z_{i,j}=z_i\bullet z_j$ for $z_i\in V(G_i)$ and $z_j\in V(G_j)$ with $i\neq j$, set $F^*(z_{i,j})=F_i(z_i)\cup F_j(z_j)$.
\item If $z_{i,j}=z_i\bullet z_j$ and $w_{i,j}=w_i\bullet w_j$ for $z_i,w_i\in V(G_i)$ and $z_j,w_j\in V(G_j)$ with $i\neq j$, and $z_iw_i$ is an edge of $G_i$ and $z_jw_j$ is an edge of $G_j$, we obtain a coincided edge $z_{i,j}w_{i,j}=z_iw_i\bullet z_jw_j$ by doing the edge-coinciding operation on these two edges $z_iw_i$ and $z_jw_j$, and moreover we set
 $$F^*(z_{i,j})=F_i(z_i)\cup F_j(z_j),F^*(w_{i,j})=F_i(w_i)\cup F_j(w_j),F^*(z_{i,j}w_{i,j})=F_i(z_iw_i)\cup F_j(z_jw_j)$$
\end{asparaenum}

Since each vertex-intersected graph $G_j$ holds $F_j(V(G_j))=\mathcal{E}_j$ in the hypergraph $\mathcal{H}^j_{yper}=(\Lambda_j,\mathcal{E}_j)$, then $F^*(V(G^*))=\mathcal{E}^*=\bigcup ^n_{j=1}\mathcal{E}_j$. We have the following particular hypergraphs generated from the graph $G^*$:
\begin{asparaenum}[\textbf{\textrm{Hyper}}-1]
\item If $\Lambda=\Lambda_j$ for $j\in [1,n]$, then $\mathcal{E}^*$ is a hyperedge set based on a finite set $\Lambda$, immediately, we get a hypergraph $\mathcal{H}_{yper}=(\Lambda,\mathcal{E}^*)$.

\item If $\Lambda^*=\bigcup _{e\in \mathcal{E}^*}e$, then $G^*$ is a vertex-intersected graph of the hypergraph $\mathcal{H}_{yper}=(\Lambda^*,\mathcal{E}^*)$.
\end{asparaenum}

We have an example $T[\bullet]T_1[\bullet]T_2[\bullet]T_3 \subseteq \mathcal{H}_{yper}$, where the graphs $T,T_1,T_2,T_3$ and a vertex-intersected graph $\mathcal{H}_{yper}$ shown in Fig.\ref{fig:coloring-hypergraph} and Fig.\ref{fig:more-trees-one-hypergraph}.

\vskip 0.4cm

\begin{figure}[h]
\centering
\includegraphics[width=16.4cm]{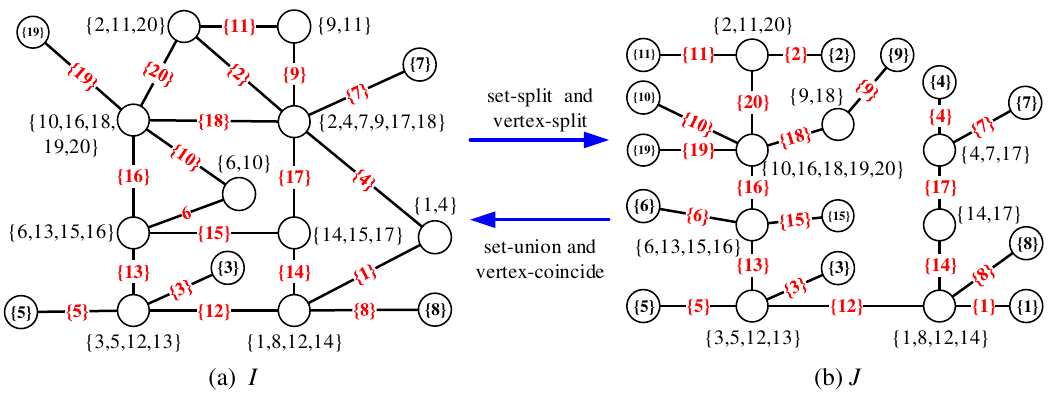}\\
\caption{\label{fig:set-split-vertex-split}{\small An example for the \emph{hyperedgeset-splitting} and \emph{hyperedgeset-coinciding} operations.}}
\end{figure}

\subsection{Vertex-intersected graph lattices}

We consider to build up \emph{vertex-coincided vertex-intersected graph lattices} in this subsection.

Let $\textbf{\textrm{T}}=(T_1,T_2$, $\dots $, $T_n)$ be a \emph{vertex-intersected graph base}, where each vertex-intersected graph $T_j$ admits a set-coloring $\varphi_j$ defined on the hyperedge set $\mathcal{E}_j$ of each hypergraph $\mathcal{T}^j_{yper}=(\Lambda_j,\mathcal{E}_j)$ with $j\in [1,n]$, and let $J_1,J_2,\dots ,J_B$ be a permutation of vertex-intersected graphs $b_1T_1,b_2T_2$, $\dots $, $b_nT_n$, where $B=\sum^n_{k=1} b_k\geq 1$. Now, we do the vertex-coinciding operation to these graphs $J_1,J_2,\dots ,J_B$, such that the resultant graph
\begin{equation}\label{eqa:vertex-coincided-group}
I=[\bullet]\big \langle J_1,J_2,\dots ,J_B\big \rangle=[\bullet]^n_{k=1}b_kT_k
\end{equation}
is a subgraph of a vertex-intersected graph of the hypergraph $\mathcal{H}^{\bullet}_{yper}(I)=(\Lambda(I),\mathcal{E}(I))$, where $\Lambda(I)=\bigcup ^m_{k=1}\Lambda_k$ and $\mathcal{E}(I)=\bigcup ^m_{k=1}\mathcal{E}_k$. We get a \emph{vertex-coincided vertex-intersected graph lattice}
\begin{equation}\label{eqa:hypergraph-vertex-intersected graph-lattice-vertex}
\textbf{\textrm{L}}\big (Z^0\bullet\textbf{\textrm{T}}\big )=\left \{[\bullet]^n_{k=1}b_kT_k :b_k\in Z^0,T_k\in \textbf{\textrm{T}}\right \},~\sum^n_{k=1} b_k\geq 1
\end{equation} such that each set-colored graph $G\in \textbf{\textrm{L}}\big (Z^0\bullet\textbf{\textrm{T}}\big )$ can be decomposed into edge-disjoint vertex-intersected graphs $b_1T_1,b_2T_2$, $\dots $, $b_nT_n$.

However, it is not easy to realize the decomposition of a graph to some particular graphs, in general, since the subgraph isomorphism is NP-complete.

Correspondingly, by Eq.(\ref{eqa:vertex-coincided-group}) and Eq.(\ref{eqa:hypergraph-vertex-intersected graph-lattice-vertex}), we have a \emph{hyperedge-coincided hypergraph lattice}
\begin{equation}\label{eqa:hypergraph-lattice-by-hyperedge-coinciding}
\textbf{\textrm{L}}\big (Z^0\bullet \textbf{\textrm{T}}_{yper}\big )=\big \{[\bullet ]^n_{k=1}a_k\mathcal{T}^k_{yper}:a_k\in Z^0,\mathcal{T}^k_{yper}\in \textbf{\textrm{T}}_{yper}\big \},~\sum ^n_{k=1}a_k\geq 1
\end{equation} where $\textbf{\textrm{T}}_{yper}=(\mathcal{T}^1_{yper},\mathcal{T}^2_{yper},\dots ,\mathcal{T}^n_{yper})$ is a \emph{hyperedge-coincided hypergraph lattice base}.

Moreover, we do mixed operations of the vertex-coinciding operation and the edge-coinciding operation to a permutation $I_1,I_2,\dots ,I_B$ of vertex-intersected graphs $c_1T_1,c_2T_2$, $\dots $, $c_nT_n$, such that the resultant graph is written as
\begin{equation}\label{eqa:555555}
[\bullet\ominus]\big \langle I_1,I_2,\dots ,I_B \big \rangle =[\bullet\ominus]^n_{k=1}c_kT_k
\end{equation} with $\sum^n_{k=1} c_k\geq 1$. We have a \emph{mixed vertex-intersected graph lattice}
\begin{equation}\label{eqa:hypergraph-vertex-intersected graph-lattice-mixed}
\textbf{\textrm{L}}\big (Z^0[\bullet\ominus]\textbf{\textrm{T}}\big )=\big \{[\bullet\ominus]^n_{k=1}c_kT_k :c_k\in Z^0,T_k\in \textbf{\textrm{T}}\big \},~\sum^n_{k=1} c_k\geq 1
\end{equation} and a \emph{mixed-coincided hypergraph lattice}
\begin{equation}\label{eqa:hypergraph-lattice-mixed-hyperedge-coinciding}
\textbf{\textrm{L}}\big (Z^0[\bullet\ominus] \textbf{\textrm{T}}_{yper}\big )=\big \{[\bullet\ominus] ^n_{k=1}a_k\mathcal{T}^k_{yper} :a_k\in Z^0,\mathcal{T}^k_{yper}\in \textbf{\textrm{T}}_{yper}\big \},~\sum^n_{k=1} a_k\geq 1
\end{equation} and the hypergraph lattice base $\textbf{\textrm{T}}_{yper}=\big (\mathcal{T}^1_{yper},\mathcal{T}^2_{yper},\dots ,\mathcal{T}^n_{yper}\big )$ with $\mathcal{T}^i_{yper}\not \subset \mathcal{T}^j_{yper}$ and $\mathcal{T}^i_{yper}\not \cong \mathcal{T}^j_{yper}$ if $i\neq j$.

\vskip 0.4cm

\begin{thm}\label{thm:particular-base-vertex-intersected-graph-lattice}
\cite{Yao-Ma-arXiv-2201-13354v1} In a hyperedge-coincided hypergraph lattice $\textbf{\textrm{L}}\big (Z^0\bullet \textbf{\textrm{T}}_{yper}\big )$ defined in Eq.(\ref{eqa:hypergraph-lattice-by-hyperedge-coinciding}), if $\Lambda=\Lambda_1=\Lambda_2=\cdots =\Lambda_m$, and any pair of two sets $e_i\in \mathcal{E}_i$ and $e_j\in \mathcal{E}_j$ hold $e_i\cap e_j=\emptyset$ as $i\neq j$, each vertex-intersected graph $H_i$ is connected. Then each graph $G=[\ominus ]^m_{k=1}a_kH_k $ with $a_k=1$ or $0$ defined in Eq.(\ref{eqa:hypergraph-vertex-intersected graph-lattice-edge}) is a vertex-intersected graph of the hypergraph $\mathcal{H}_{yper}=\big (\Lambda,\bigcup ^m_{i=1}\mathcal{E}_i \big )$, so is the graph $I=[\bullet ]^n_{k=1}b_kT_k$ with $b_k=1$ or $0$ defined in Eq.(\ref{eqa:hypergraph-vertex-intersected graph-lattice-vertex}), and so is the graph $[\bullet\ominus]^n_{k=1}\langle c_kT_k\rangle $ with $c_k=1$ or $0$ defined in Eq.(\ref{eqa:hypergraph-vertex-intersected graph-lattice-mixed}).
\end{thm}

\begin{rem}\label{rem:333333}
The number of vertex-intersected graphs $G=[\ominus ]^m_{k=1}a_kH_k $ with $a_k=1$ or $0$ is $2^m$ in total, so are the numbers of vertex-intersected graphs $I=[\bullet ]^n_{k=1}b_kT_k$ with $b_k=1$ or $0$ and $[\bullet\ominus]^n_{k=1}c_kT_k$ with $c_k=1$ or $0$ in Theorem \ref{thm:particular-base-vertex-intersected-graph-lattice}. The above operations lead to the \emph{hyperedgeset-splitting} operation and the \emph{hyperedgeset-coinciding} operation of hypergraphs.

It is noticeable, there are many operations on hypergraphs for building up various hypergraph lattices. Planting some results and problems of graphic lattices and traditional lattices into hypergraph lattices is important and meaningful in asymmetric cryptography.\qqed
\end{rem}

\subsection{Isomorphisms of vertex-intersected graphs and hypergraphs }

Recall Kelly-Ulam's Reconstruction Conjecture (1942):
\begin{conj}\label{conj:0000000000}
\cite{Bondy-2008} Let both $G$ and $H$ be graphs with $n$ vertices. If there is a bijection $f: V(G)\rightarrow V(H)$ such that $G-u \cong H-f(u)$ for each vertex $u\in V(G)$, then $G \cong H$.
\end{conj}

\begin{thm}\label{thm:2-vertex-split-graphs-isomorphic}
\cite{Yao-Wang-2106-15254v1} Suppose that two connected graphs $G$ and $H$ admit a coloring $f:V(G)\rightarrow V(H)$. In general, a vertex-split graph $G\wedge u$ with $\textrm{deg}_G(u)\geq 2$ is not unique, so we have a vertex-split graph set $S_G(u)=\{G\wedge u\}$, similarly, we have another vertex-split graph set $S_H(f(u))=\{H\wedge f(u)\}$. If each vertex-split graph $L\in S_G(u)$ corresponds another vertex-split graph $T\in S_H(f(u))$ holding $L\cong T$ true, and vice versa, we write this fact as
\begin{equation}\label{eqa:555555}
G\wedge u\cong H\wedge f(u)
\end{equation} then we claim that $G$ is \emph{isomorphic} to $H$, namely, $G\cong H$.
\end{thm}

\begin{rem}\label{rem:333333}
The vertex-splitting graph $G\wedge u$ with degree $\textrm{deg}_G(u)=m\geq 2$ forms a vertex-split graph set $S_G(u)=\{G\wedge u\}$ in Theorem \ref{thm:2-vertex-split-graphs-isomorphic}. However, determining this graph set $S_G(u)=\{G\wedge u\}$ will meet \emph{Integer Partition Problem}.

For the computational complexity of the Integer Partition Problem, the authors, in \cite{Bing-et-al-arXiv-asymmetric-4520331}, have partitioned a positive integer $m\geq 2$ into a group of $a_i$ parts $m_{i,1},m_{i,2},\dots ,m_{i,a_i}$ holding
$$m=m_{i,1}+m_{i,2}+\cdots +m_{i,a_i}
$$ with each $m_{i,j}>0$ and $a_i\geq 2$. Correspondingly, the vertex $u$ of the graph $G$ is vertex-split into vertices $u_{i,1},u_{i,2},\dots ,u_{i,a_i}$, such that the adjacent neighbor set $N_{ei}(u)=\bigcup^{a_i}_{j=1}N_{ei}(u_{i,j})$ in the vertex-splitting graph $G\wedge u$, where two adjacent neighbor sets $N_{ei}(u_{i,j})\cap N_{ei}(u_{i,k})=\emptyset$ for $j\neq k$. Suppose there are $P_{art}(m)$ groups of such $a_i$ parts. Computing the number $P_{art}(m)$ can be transformed into finding the number $A(m,a_i)$ of solutions of \emph{Diophantine equation} $m=\sum ^k_{i=1}ix_i$. There is a recursive formula
\begin{equation}\label{eqa:c3xxxxx}
A(m,a_i)=A(m,a_i-1)+A(m-a_i,a_i)
\end{equation}
with $0 \leq a_i\leq m$. It is not easy to compute the exact value of $A(m,a_i)$, for example, the authors in \cite{Shuhong-Wu-Accurate-2007} and \cite{WU-Qi-qi-2001} computed exactly
$${
\begin{split}
A(m,6)=&\biggr\lfloor \frac{1}{1036800}(12m^5 +270m^4+1520m^3-1350m^2-19190m-9081)+\\
&\frac{(-1)^m(m^2+9m+7)}{768}+\frac{1}{81}\left[(m+5)\cos \frac{2m\pi}{3}\right ]\biggr\rfloor
\end{split}}
$$

On the other hands, for any odd integer $m\geq 7$, it was conjectured $m=p_1+p_2+p_3$ with three primes $p_1,p_2,p_3$ from the famous Goldbach's conjecture:

\begin{quote}
\emph{Every even integer, greater than 2, can be expressed as the sum of two primes}.
\end{quote}
In other words, determining $A(m,3)$ is difficult, also, it is difficult to express an odd integer $m=\sum^{3n}_{k=1} p\,'_k$ with each $p\,'_k$ is a prime integer.\qqed
\end{rem}

\begin{thm}\label{thm:hypergraphs-vetex-split-isomorphic}
\cite{Yao-Ma-arXiv-2201-13354v1} \textbf{Hypergraph isomorphism.} Suppose that two connected graphs $G$ and $H$ admit a coloring $f:V(G)\rightarrow V(H)$, where $G$ is a vertex-intersected graph of a hypergraph $\mathcal{H}_{yper}=(\Lambda,\mathcal{E})$, and $H$ is a vertex-intersected graph of another hypergraph $\mathcal{H}^*_{yper}=(\Lambda^*,\mathcal{E}^*)$. Vertex-splitting a vertex $u$ of a vertex-intersected graph $G$ with $\textrm{deg}_G(u)\geq 2$ produces a vertex-split graph set $I_G(u)=\{G\wedge u\}$ (resp. a hypergraph set $\{\mathcal{H}_{yper}\wedge e\}$ with $e=F(u)$, since $F:V(G)\rightarrow \mathcal{E}$). Similarly, another vertex-split graph set $I_H(f(u))=\{H\wedge f(u)\}$ is obtained by vertex-splitting a vertex $f(u)$ of a vertex-intersected graph $H$ with $\textrm{deg}_H(f(u))\geq 2$, so we have a hypergraph set $\{\mathcal{H}^*_{yper}\wedge e\,'\}$ with $e\,'=F\,'(f(u))$, since $F\,':V(H)\rightarrow \mathcal{E}^*$. If each vertex-split graph $L\in I_G(u)$ (resp. $\mathcal{L}\in \{\mathcal{H}_{yper}\wedge e\}$) corresponds another vertex-split graph $T\in I_H(f(u))$ (resp. $\mathcal{T}\in \{\mathcal{H}^*_{yper}\wedge e\,'\}$) such that $L\cong T$ (resp. $\mathcal{L}\cong \mathcal{T}$), and vice versa, we write this fact as
\begin{equation}\label{eqa:555555}
G\wedge u\cong H\wedge f(u),\quad (\textrm{resp.} ~\mathcal{H}_{yper}\wedge e\cong \mathcal{H}^*_{yper}\wedge e\,')
\end{equation} then we claim that $G\cong H$ (resp. $\mathcal{H}_{yper}\cong \mathcal{H}^*_{yper}$).
\end{thm}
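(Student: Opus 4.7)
The plan is to reduce the graph-level assertion directly to Theorem \ref{thm:2-vertex-split-graphs-isomorphic}, and then lift the resulting graph isomorphism to the hypergraph side using the uniqueness Theorem \ref{thm:unique-intersected-graph}. Because the intersected-graph of a hypergraph is, by definition, just a graph carrying a set-coloring, everything we need at the graph level is already encoded in the hypothesis $G\wedge u \cong H\wedge f(u)$, so no new combinatorial machinery is required for that part.

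First I would forget the hypergraph structure and observe that $G$ and $H$ are connected graphs with a map $f:V(G)\to V(H)$ satisfying: for every $L\in I_G(u)=\{G\wedge u\}$ there is $T\in I_H(f(u))=\{H\wedge f(u)\}$ with $L\cong T$, and vice versa. This is exactly the hypothesis of Theorem \ref{thm:2-vertex-split-graphs-isomorphic}, so $G\cong H$ follows at once. In particular we obtain a bijection $\varphi:V(G)\to V(H)$ preserving adjacency.

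Next I would interpret the operation $\mathcal{H}_{yper}\wedge e$, where $e=F(u)$: splitting $u$ into $u',u''$ with $N(u)=N(u')\cup N(u'')$ and $N(u')\cap N(u'')=\emptyset$ corresponds to replacing the hyperedge $e\in\mathcal{E}$ by two hyperedges $e',e''$ so that the nonempty intersections $e'\cap F(v)$ for $v\in N(u')$ and $e''\cap F(w)$ for $w\in N(u'')$ recover the induced edge colors of $u'v$ and $u''w$ in $G\wedge u$. By Definition \ref{defn:intersected-graph-hypergraph}, $G\wedge u$ equipped with the inherited set-coloring is then an intersected-graph of $\mathcal{H}_{yper}\wedge e$, so the hypotheses $G\wedge u\cong H\wedge f(u)$ are in natural bijective correspondence with $\mathcal{H}_{yper}\wedge e\cong \mathcal{H}^*_{yper}\wedge e\,'$ through Theorem \ref{thm:unique-intersected-graph}. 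Combining $G\cong H$ with this correspondence, the induced map $F\,'\circ\varphi\circ F^{-1}:\mathcal{E}\to\mathcal{E}^*$ is a bijection on hyperedges preserving pairwise nonempty intersections, which is exactly the hypergraph isomorphism $\mathcal{H}_{yper}\cong\mathcal{H}^*_{yper}$.

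The main obstacle I anticipate is making the local correspondence $\mathcal{H}_{yper}\wedge e\leftrightarrow G\wedge u$ precise: one must verify that every admissible vertex-split on the intersected-graph side lifts to a splitting $e=e'\cup e''$ (or, more subtly, to some decomposition of $e$ compatible with the condition $N(u')\cap N(u'')=\emptyset$) yielding a hypergraph whose intersected-graph is precisely $G\wedge u$, and conversely that every such hyperedge-split gives rise to a vertex-split of $G$ belonging to $I_G(u)$. Once this bijection between $I_G(u)$ and $\{\mathcal{H}_{yper}\wedge e\}$ (and the analogous one for $H$) is confirmed, the theorem reduces to a diagrammatic combination of Theorem \ref{thm:2-vertex-split-graphs-isomorphic} (for graphs) with Theorem \ref{thm:unique-intersected-graph} (to transfer the isomorphism to hypergraphs), and the proof is complete.
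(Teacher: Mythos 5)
The paper states this theorem with no proof at all (it is one of the authors' starred assertions, followed immediately by the Kelly--Ulam discussion), so there is nothing to compare step-by-step. Your first half is fine and is clearly the intended route: stripped of the set-coloring, the hypothesis for $I_G(u)$ and $I_H(f(u))$ is verbatim the hypothesis of Theorem \ref{thm:2-vertex-split-graphs-isomorphic}, which yields $G\cong H$ immediately.

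The gap is in the hypergraph half. First, Theorem \ref{thm:unique-intersected-graph} points in the wrong direction for your purposes: it says a hypergraph determines its intersected-graph up to isomorphism, not that an isomorphism of intersected-graphs can be lifted to an isomorphism of the hypergraphs. Second, your closing claim --- that the bijection $F\,'\circ\varphi\circ F^{-1}:\mathcal{E}\to\mathcal{E}^*$ ``preserving pairwise nonempty intersections'' is ``exactly'' a hypergraph isomorphism --- is false under any standard reading: a hypergraph isomorphism requires a bijection $\Lambda\to\Lambda^*$ carrying $\mathcal{E}$ onto $\mathcal{E}^*$, whereas preserving which pairs of hyperedges meet only says the intersected-graphs are isomorphic. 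For example, $\mathcal{E}=\big\{\{1,2\},\{2,3\}\big\}$ on $\Lambda=[1,3]$ and $\mathcal{E}^*=\big\{\{1,2,3\},\{3,4\}\big\}$ on $\Lambda^*=[1,4]$ both have intersected-graph $K_2$, yet the hypergraphs are not isomorphic (different orders, different hyperedge cardinalities). So deriving $G\cong H$ and then pushing back up to $\mathcal{H}_{yper}\cong\mathcal{H}^*_{yper}$ cannot work. The ``resp.''\ version must instead be argued entirely at the hypergraph level: its hypothesis is $\mathcal{L}\cong\mathcal{T}$ as \emph{hypergraphs} for the members of $\{\mathcal{H}_{yper}\wedge e\}$ and $\{\mathcal{H}^*_{yper}\wedge e\,'\}$, and one should repeat the reconstruction argument of Theorem \ref{thm:2-vertex-split-graphs-isomorphic} with hyperedge splits $e=e\,'\cup e\,''$ in place of vertex splits, producing a vertex bijection $\Lambda\to\Lambda^*$ directly. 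You correctly flagged that the correspondence between $I_G(u)$ and $\{\mathcal{H}_{yper}\wedge e\}$ needs to be made precise, but the more serious defect is that even with that correspondence in hand, your final inference does not deliver a hypergraph isomorphism.
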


We present the following isomorphism conjectures:

\begin{conj}\label{conj:0000000000}
\textbf{Graph isomorphism conjectures.}

(i) $^*$ Assume that there are edge subsets $E_G\subset E(G)$ and $E_H\subset E(H)$ with $|E_G|=|E_H|$ such that two \emph{edge-removed graphs}
$$G-E_G\cong H-E_H
$$ for two connected $(p,q)$-graphs $G$ and $H$ admitting the isomorphic subgraph similarity. Then $G\cong H$ by Kelly-Ulam's Reconstruction Conjecture.

(ii) $^*$ If each spanning tree $T_a$ of a connected $(p,q)$-graph $G_a$ corresponds a spanning tree $T_b$ of another connected graph $G_b$ such that $T_a\cong T_b$, and vice versa, then $G_a\cong G_b$.

(iii) \cite{Yao-Ma-arXiv-2201-13354v1} Let both $G$ and $H$ be graphs with $n$ vertices. If each proper subset $S_G\in V(G)\cup E(G)$ corresponds another proper subset $S_H\in V(H)\cup E(H)$ such that two \emph{subset-removed graphs}
$$
G-S_G \cong H-S_H
$$ then $G \cong H$.
\end{conj}

\begin{conj}\label{conj:0000000000}
\cite{Yao-Ma-arXiv-2201-13354v1} \textbf{Hypergraph isomorphism.} Let $\mathcal{H}_{yper}=(\Lambda,\mathcal{E})$ and $\mathcal{H}^*_{yper}=(\Lambda^*,\mathcal{E}^*)$ both be hypergraphs with two cardinalities $|\mathcal{E}|=|\mathcal{E}^*|$. If there is a bijection $\theta: \mathcal{E}\rightarrow \mathcal{E}^*$ such that each hyperedge $e\in \mathcal{E}$ and $\theta(e)\in \mathcal{E}^*$ hold two isomorphic hypergraphs
\begin{equation}\label{eqa:555555}
(\Lambda\setminus\{e_{\cap}\},\mathcal{E}\setminus e)\cong (\Lambda^*\setminus\{\theta(e_{\cap})\},\mathcal{E}^*\setminus \theta(e))
\end{equation} where $e_{\cap}\subset e\in \mathcal{E}$ and $e_{\cap}\cap e\,'=\emptyset$ for any $e\,'\in \mathcal{E}$, then $\mathcal{H}_{yper} \cong \mathcal{H}^*_{yper}$.
\end{conj}

\begin{thm}\label{thm:666666}
By Theorem \ref{thm:graph-set-graph-isomorphism}, two totally colored and connected graphs $G$ and $H$ correspond two totally colored graph sets $G_{raph}(G)$ and $G_{raph}(H)$, such that each totally colored graph $L\in G_{raph}(G)$ holds $L\rightarrow _{color}G$, and each totally colored graph $T\in G_{raph}(H)$ holds $T\rightarrow _{color}H$. Suppose that each totally colored graph $L\,'\in G_{raph}(G)$ corresponds a totally colored graph $H\,'\in G_{raph}(H)$ holding $L\,'\cong H\,'$ true, and vice versa, then we claim that $G\cong H$.
\end{thm}

\section{Properties Of Hypergraphs}

\subsection{Connectivity of hypergraphs}

By the vertex-splitting operation introduced in Definition \ref{defn:vertex-split-coinciding-operations}, we vertex-split each vertex $w$ in a non-empty vertex subset $S$ of a \emph{hyperedge connected vertex-intersected graph} $G$ into two vertices $w\,'$ and $w\,''$, such that the adjacent neighbor set $N_{ei}(w)=N_{ei}(w\,')\cup N_{ei}(w\,'')$ with $N_{ei}(w\,')\cap N_{ei}(w\,'')=\emptyset$, $|N_{ei}(w\,')|\geq 1$ and $|N_{ei}(w\,'')|\geq 1$, the resultant graph is denoted as $G\wedge S$, and let $S\,'=\{w\,':w\in S\}$ and $S\,''=\{w\,'':w\in S\}$, so $V(G\wedge S)=V(G-S)\cup S\,'\cup S\,''$. Since $G$ is the hyperedge connected vertex-intersected graph of a hyperedge connected hypergraph $\mathcal{H}_{yper}=(\Lambda,\mathcal{E})$ (Ref. Definition \ref{defn:more-terminology-group}), so it admits a total set-coloring $F:V(G)\cup E(G)\rightarrow \mathcal{E}$. We define a total set-coloring $F^*$ of the vertex-split graph $G\wedge S$ as:

(A-1) $F^*(x)=F(x)$ for $x\not\in S\,'\cup S\,''$ and $F^*(w\,')=F(w)$ and $F^*(w\,'')=F(w)$ for $w\in S$ and $w\,',w\,''\in S\,'\cup S\,''$; and

(A-2) $F^*(uv)=F(uv)$ for $uv\in E(G-S)$, $F^*(uw\,')=F(uw)$ and $F^*(uw\,'')=F(uw)$ for edges $uw\,'$, $uw\,''\in E(G\wedge S)$.

Suppose that the vertex-split graph $G\wedge S$ has subgraphs $G_1,G_2,\dots ,G_m$, such that

(i) $V(G\wedge S)=\bigcup ^m_{i=1}V(G_i)$ with $V(G_i)\cap V(G_j)=\emptyset$ if $i\neq j$, and

(ii) $E(G\wedge S)=\bigcup ^m_{i=1}E(G_i)$ with $E(G_i)\cap E(G_j)=\emptyset$ if $i\neq j$,\\
then $G\wedge S$ is not hyperedge connected, also, $G\wedge S$ is disconnected in general, and we call $S$ a \emph{vertex-split set-cut-set}. For each subgraph $G_i$, the set-coloring $F$ induces a total set-coloring $F_i:V(G_i)\cup E(G_i)\rightarrow \mathcal{E}_i$ with $\mathcal{E}_i\subset \mathcal{E}$, and call each subgraph $G_i$ \emph{partial hypergraph}.

Conversely, we do the vertex-coinciding operation defined in Definition \ref{defn:vertex-split-coinciding-operations} to the subgraphs $G_1,G_2,\dots ,G_m$ of the vertex-split graph $G\wedge S$ by vertex-coinciding two vertices $w\,'$ and $w\,''$ into one vertex $w=w\,'\bullet w\,''$, and get the original hyperedge connected vertex-intersected graph $G$, we write this case as $G=[\bullet]^m_{k=1} G_k$.

\begin{defn} \label{defn:hypergraph-connectivity}
\cite{Yao-Ma-arXiv-2201-13354v1} \textbf{Hypergraph connectivity.} Suppose that a vertex-split graph $G\wedge S^*$ of the hyperedge connected vertex-intersected graph $G$ of a hypergraph $\mathcal{H}_{yper}=(\Lambda,\mathcal{E})$ holds $|S^*|\leq |S|$ for any vertex-split graph $G\wedge S$, where $G\wedge S$ is not hyperedge connected, then the number $|S^*|$ is called the \emph{hyperedge split-connected number}, written as $n_{vsplit}(G)=n_{vsplit}(\mathcal{H}_{yper})$. Since $G-S^*$ is a disconnected graph having components $G_1-S^*,G_2-S^*,\dots ,G_m-S^*$, that is, the hyperedge connected vertex-intersected graph $G$ is \emph{vertex $|S^*|$-connectivity}, and the hypergraph $\mathcal{H}_{yper}$ is \emph{hyperedge $|\mathcal{E}^*|$-connectivity}, where $\mathcal{E}^*=\{e:F(w)=e\in \mathcal{E}, w\in S^*\}$ makes the hyperedge set $\mathcal{E}\setminus \mathcal{E}^*$ to be \emph{hyperedge disconnected}, we call the hyperedge set $\mathcal{E}^*$ a \emph{hyperedge set-cut-set} of the hypergraph $\mathcal{H}_{yper}$.\qqed
\end{defn}

\begin{rem}\label{rem:333333}
In the view of decomposition, the hyperedge connected vertex-intersected graph $G$ can be decomposed into vertex-disjoint partial hypergraphs $G_1,G_2,\dots ,G_m$.

For $w\in S$ and $w\,',w\,''\in S\,'\cup S\,''$ in (A-1) above, we redefine $F^*(w\,')=F(uw)$ and $F^*(w\,'')=F(w)\setminus F^*(w\,')$, since $F(uw)\subseteq F(w)\cap F(u)$. Thereby, we get more families of subgraphs like $G_1,G_2,\dots ,G_m$, in other words, a hyperedge connected hypergraph can be decomposed into many groups of hyperedge disjoint partial hypergraphs. \qqed
\end{rem}

As known, the vertex-splitting connectivity of a connected graph is equivalent to its own vertex connectivity proven in \cite{Wang-Su-Yao-2021-computer-science}, so we have the following result:

\begin{thm}\label{thm:666666}
The hyperedge split-connected number of a hyperedge connected hypergraph is equal to its own hyperedge connectivity.
\end{thm}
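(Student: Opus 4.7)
The plan is to reduce the statement to the cited equivalence between vertex-splitting connectivity and ordinary vertex connectivity for connected graphs, by transferring everything through the bijection $F\colon V(G)\to\mathcal{E}$ that accompanies the intersected-graph $G$ of $\mathcal{H}_{yper}=(\Lambda,\mathcal{E})$. Writing $\kappa(G)$ for the vertex connectivity and $n_{vsplit}(G)$ for the vertex-splitting connectivity, the cited result from \cite{Wang-Su-Yao-2021-computer-science} gives $n_{vsplit}(G)=\kappa(G)$. So it suffices to prove that the hyperedge connectivity $\kappa_{hyper}(\mathcal{H}_{yper})$ coincides with $\kappa(G)$, and to verify that the definition of $n_{vsplit}(\mathcal{H}_{yper})$ agrees, through $F$, with $n_{vsplit}(G)$; the equality in the theorem then follows by composing these three identifications.

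First I would show $\kappa_{hyper}(\mathcal{H}_{yper})\le \kappa(G)$. Take a minimum vertex cut $S\subset V(G)$ of size $\kappa(G)$, so that $G-S$ falls apart into components $C_1,\ldots,C_t$. Setting $\mathcal{E}^{*}=F(S)$ and $\mathcal{E}_i=F(V(C_i))$ one obtains a partition $\mathcal{E}\setminus\mathcal{E}^{*}=\bigsqcup_i\mathcal{E}_i$; because no edge of $G$ joins different components, Definition~\ref{defn:intersected-graph-hypergraph} forces $e\cap e'=\emptyset$ whenever $e\in\mathcal{E}_i$ and $e'\in\mathcal{E}_j$ with $i\ne j$, so no hyperedge path in $\mathcal{E}\setminus\mathcal{E}^{*}$ can leave a single $\mathcal{E}_i$. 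Hence $\mathcal{E}^{*}$ is a hyperedge set-cut-set and $\kappa_{hyper}(\mathcal{H}_{yper})\le|\mathcal{E}^{*}|=\kappa(G)$. For the reverse inequality, take a minimum hyperedge set-cut-set $\mathcal{E}^{*}$ with $\mathcal{E}\setminus\mathcal{E}^{*}$ hyperedge disconnected, and let $S^{*}=F^{-1}(\mathcal{E}^{*})$. Using the correspondence in Ter-5 of Definition~\ref{defn:more-terminology-group}, a hyperedge path in $\mathcal{E}\setminus\mathcal{E}^{*}$ between two hyperedges is exactly an ordinary path in $G-S^{*}$ between the two corresponding vertices, so $G-S^{*}$ is disconnected and $\kappa(G)\le|S^{*}|=\kappa_{hyper}(\mathcal{H}_{yper})$. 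Together with the cited result and the observation that $n_{vsplit}(\mathcal{H}_{yper})$ was defined in Definition~\ref{defn:hypergraph-connectivity} to be $n_{vsplit}(G)$, we obtain
\begin{equation*}
n_{vsplit}(\mathcal{H}_{yper})=n_{vsplit}(G)=\kappa(G)=\kappa_{hyper}(\mathcal{H}_{yper}),
\end{equation*}
which is the desired equality.

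The step I expect to be most delicate is the translation between hyperedge connectedness in $\mathcal{E}\setminus\mathcal{E}^{*}$ and ordinary connectedness in $G-S^{*}$. One direction is straightforward from the definition of a hyperedge path, but the other requires checking that a walk in $G-S^{*}$ that witnesses connectivity does not accidentally rely on vertices whose hyperedges were removed, and that the intersections $e_i\cap e_{i+1}$ along the walk are indeed non-empty; these are exactly the defining properties of the set-coloring $F$ on $G$, so the check reduces to invoking $F(uv)\supseteq F(u)\cap F(v)\ne\emptyset$ uniformly. A secondary subtlety is ensuring that the vertex-splitting carried out in the definition of $n_{vsplit}$ genuinely decomposes $G$ into the \emph{partial hypergraphs} in the sense of Definition~\ref{defn:hypergraph-connectivity}, and not merely breaks hyperedge connectivity in some weaker way; but this is exactly the equivalence of vertex-splitting and vertex-deletion connectivities cited from \cite{Wang-Su-Yao-2021-computer-science}, so no new argument is needed beyond appealing to that result.
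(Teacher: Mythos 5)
Your argument is correct in outline and follows the only route the paper itself suggests, but it is worth being clear that the paper supplies no proof at all: the theorem is stated as an immediate consequence of the one-line remark preceding it, namely that vertex-splitting connectivity equals ordinary vertex connectivity for connected graphs (the citation to Wang--Su--Yao). Everything substantive in your write-up --- in particular the two-sided inequality identifying the hyperedge connectivity of $\mathcal{H}_{yper}$ with $\kappa(G)$ for the intersected-graph $G$, using that $F$ is a bijection from $V(G)$ onto $\mathcal{E}$ and that nonempty intersections of hyperedges correspond exactly to edges of $G$ --- is content the paper leaves implicit, and it is precisely the bridge needed for the cited result to yield the stated equality. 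So your proposal is best described as the paper's intended argument with the missing middle actually carried out.

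One caveat, which affects the paper's definitions as much as your proof: a hyperedge path (Ter-4 of Definition \ref{defn:more-terminology-group}) forbids ears as internal hyperedges, so ``hyperedge connected'' is a priori stronger than ``the intersected-graph is connected.'' A path in $G-S^{*}$ could pass through a vertex whose color set is an ear of $\mathcal{E}\setminus\mathcal{E}^{*}$, in which case $\mathcal{E}\setminus\mathcal{E}^{*}$ could be hyperedge disconnected while $G-S^{*}$ remains connected, breaking the clean identification of hyperedge paths with graph paths that your second inequality relies on. Definition \ref{defn:hypergraph-connectivity} itself conflates ``$G\wedge S$ not hyperedge connected'' with ``$G-S^{*}$ disconnected,'' so this is an ambiguity inherited from the paper rather than an error you introduced; but if you want the chain $n_{vsplit}(\mathcal{H}_{yper})=n_{vsplit}(G)=\kappa(G)=\kappa_{hyper}(\mathcal{H}_{yper})$ to be airtight you should either restrict to ear-free hyperedge sets or state explicitly that connectivity is measured by ordinary paths in the intersected-graph.
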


\subsection{Colorings of hypergraphs}

\begin{defn} \label{defn:join-type-edge-set-coloring}
\cite{Yao-Ma-arXiv-2201-13354v1} Let $\mathcal{E}$ be a set of subsets of a finite set $\Lambda$ such that each hyperedge $e\in \mathcal{E}$ satisfies $e\neq \emptyset$ and corresponds some hyperedge $e\,'\in \mathcal{E}$ holding $e\cap e\,'\neq \emptyset$, as well as $\Lambda=\bigcup _{e\in \mathcal{E}}e$. Suppose that a connected graph $H$ admits an \emph{edge set-labeling} $F\,': E(H)\rightarrow \mathcal{E}$ holding $F\,'(uv)\neq F\,'(uw)$ for any two adjacent edges $uv,uw\in E(H)$, and the vertex color set $F\,'(w)$ for each vertex $w\in V(G)$ is induced by one of the following cases:
\begin{asparaenum}[\textbf{\textrm{Edgeinduce}}-1.]
\item $F\,'(w)=\{F\,'(wz):z\in N_{ei}(w)\}\subseteq \Lambda^2$.
\item $F\,'(w)=\bigcup _{z\in N_{ei}(w)}F\,'(wz)\subseteq \Lambda$.
\end{asparaenum}
We call $H$ an \emph{edge-set-colored graph}. The \emph{edge-induced graph} $L_H$ of the edge-set-colored graph $H$ has its own vertex set $V(L_H)= E(H)$, and admits a vertex set-coloring $F\,': V(L_H)\rightarrow \mathcal{E}$, such that two vertices $w_{uv}(:=uv)$ and $w_{xy}(:=xy)$ of $V(L_H)$ are the ends of an edge of $L_H$ if and only if $F\,'(w_{uv})\cap F\,'(w_{xy})\neq \emptyset$ (i.e. $F\,'(uv)\neq F\,'(xy)$ in $H$). So, this edge-induced graph $L_H$ is a \emph{vertex-intersected graph} of the hypergraph $\mathcal{H}_{yper}=(\Lambda,\mathcal{E})$.\qqed
\end{defn}

\begin{thm}\label{thm:hyperedge-vs-vertex-coloring}
A proper hyperedge coloring of a hypergraph $\mathcal{H}_{yper}=(\Lambda,\mathcal{E})$ is equivalent to a proper vertex-coloring of a vertex-intersected graph $G$ of the hypergraph $\mathcal{H}_{yper}$, and vice versa. Thereby, we have $\chi(G)=\chi\,'(\mathcal{H}_{yper})$, where $\chi(G)$ is the \emph{chromatic number} of the graph $G$, and $\chi\,'(\mathcal{E})$ is the \emph{hyperedge chromatic index} of the hypergraph $\mathcal{H}_{yper}$.
\end{thm}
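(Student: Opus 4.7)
\bigskip

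\noindent\textbf{Proof proposal.} The plan is to exhibit a color-preserving bijection between proper vertex colorings of the intersected-graph $G$ and proper hyperedge colorings of $\mathcal{H}_{yper}=(\Lambda,\mathcal{E})$, and then read off the equality of chromatic parameters. The backbone is the set-coloring $F:V(G)\rightarrow \mathcal{E}$ supplied by Definition \ref{defn:intersected-graph-hypergraph}, which is a bijection from $V(G)$ onto $\mathcal{E}$ (existence follows from the definition, and injectivity plus surjectivity is precisely the intersection condition: each hyperedge corresponds to a unique vertex, and each pair $e,e\,'\in\mathcal{E}$ with $e\cap e\,'\neq\emptyset$ is matched by an edge of $G$).

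First I will prove the forward direction. Let $\theta:\mathcal{E}\rightarrow [1,k]$ be a proper hyperedge coloring of $\mathcal{H}_{yper}$, i.e., $\theta(e)\neq \theta(e\,')$ whenever $e\cap e\,'\neq \emptyset$. Define $\theta^*:V(G)\rightarrow [1,k]$ by $\theta^*(x)=\theta(F(x))$. For any edge $xy\in E(G)$, the intersected-graph definition forces $F(x)\cap F(y)\neq \emptyset$, so $\theta(F(x))\neq \theta(F(y))$, i.e.\ $\theta^*(x)\neq \theta^*(y)$. Thus $\theta^*$ is a proper vertex coloring of $G$ using at most $k$ colors, which yields $\chi(G)\leq \chi\,'(\mathcal{H}_{yper})$.

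Next I will prove the reverse direction. Let $\psi:V(G)\rightarrow [1,k]$ be a proper vertex coloring of $G$, and define $\psi^*:\mathcal{E}\rightarrow [1,k]$ by $\psi^*(e)=\psi(F^{-1}(e))$, which is well-defined because $F$ is a bijection. For any two hyperedges $e,e\,'\in\mathcal{E}$ with $e\cap e\,'\neq\emptyset$, set $x=F^{-1}(e)$ and $y=F^{-1}(e\,')$; by the intersected-graph definition, $xy\in E(G)$, hence $\psi(x)\neq \psi(y)$, giving $\psi^*(e)\neq \psi^*(e\,')$. So $\psi^*$ is a proper hyperedge coloring, yielding $\chi\,'(\mathcal{H}_{yper})\leq \chi(G)$. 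Combining the two inequalities gives $\chi(G)=\chi\,'(\mathcal{H}_{yper})$.

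The two assignments $\theta\mapsto \theta^*$ and $\psi\mapsto \psi^*$ are clearly mutual inverses (both are composition with $F$ or $F^{-1}$), so the correspondence is a bijection, establishing the asserted equivalence. There is no substantial obstacle here: the only point requiring mild care is to verify that the intersected-graph definition genuinely forces the biconditional ``$F(x)\cap F(y)\neq \emptyset$ iff $xy\in E(G)$,'' since the forward implication is immediate from the induced edge color being nonempty, while the reverse implication uses the clause in Definition \ref{defn:intersected-graph-hypergraph} asserting that \emph{every} pair of intersecting hyperedges is realized by an edge of $G$. Once this is spelled out, both inequalities above follow directly.
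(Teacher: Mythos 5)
Your proposal is correct, and it is the natural argument: the paper actually states Theorem \ref{thm:hyperedge-vs-vertex-coloring} without any proof, and the intended reasoning is exactly the correspondence you describe (one direction of it is already recorded verbatim as item Ter-9 of Definition \ref{defn:more-terminology-group}, where a proper vertex coloring of the intersected-graph is transported to a proper hyperedge coloring along $F$). The only point deserving a flag is the one you parenthetically assert rather than prove: Definition \ref{defn:intersected-graph-hypergraph} literally requires only $F(x)\neq F(y)$ for \emph{adjacent} $x,y$, so injectivity of $F$ on all of $V(G)$ is not a formal consequence of the definition as written; if $F$ identified two non-adjacent vertices with the same hyperedge, your map $\psi^*=\psi\circ F^{-1}$ would need a choice of representative and properness could fail for a bad choice. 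In practice the paper uses $F$ as a bijection between $V(G)$ and $\mathcal{E}$ (e.g.\ in the proof of Theorem \ref{thm:unique-intersected-graph}), so under that reading your two inequalities $\chi(G)\leq \chi\,'(\mathcal{H}_{yper})$ and $\chi\,'(\mathcal{H}_{yper})\leq \chi(G)$ are both sound and the argument is complete.
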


\begin{rem}\label{rem:333333}
In Definition \ref{defn:join-type-edge-set-coloring}, an edge-set-colored graph $H$ may be a subgraph of a vertex-intersected graph of a hypergraph $\mathcal{H}_{yper}=(\Lambda,\mathcal{E}^*)$ with $\mathcal{E}^*\neq \mathcal{E}$ because of \textbf{Edgeinduce}-1 and \textbf{Edgeinduce}-2 defined in Definition \ref{defn:join-type-edge-set-coloring}. The \emph{edge-induced graph} $L_H$ is not the \emph{line graph} of the edge-set-colored graph $H$.

Theorem \ref{thm:hyperedge-vs-vertex-coloring} tells us: The proper hyperedge coloring problem of a hypergraph is a NP-type problem, since there is a well-known conjecture in the proper vertex-coloring of graphs, that is, Bruce Reed in 1998 conjectured: The \emph{chromatic number} $\chi (G)\leq \left \lceil \frac{\Delta(G)+1+K(G)}{2}\right \rceil $, where $\Delta(G)$ is the \emph{maximum degree} of the graph $G$ and $K(G)$ is the \emph{maximum clique number} of the graph $G$.\qqed
\end{rem}

\begin{problem}\label{qeu:444444}
\textbf{How} to color the vertices of the hypergraph $\mathcal{H}_{yper}=(\Lambda,\mathcal{E})$ such that each edge $e$ of $\mathcal{E}$ contains vertices colored with differen colors from each other.
\end{problem}

\begin{rem}\label{rem:333333}
For a given hyperedge set $\mathcal{E}$, we color the vertices of the vertex set $\Lambda$ with $k$ colors such that each set $e\in \mathcal{E}$ contains two vertices colored with different colors if the cardinality $|e|\geq 2$. Clearly, different hyperedge sets correspond different vertex-colorings of the vertex set $\Lambda$. The number
\begin{equation}\label{eqa:555555}
\chi(\Lambda,\mathcal{E})=\min \{k:\textrm{ each $k$-coloring of $\Lambda$ based on a hyperedge set $\mathcal{E}$}\}
\end{equation} is called the \emph{hypervertex chromatic number} of the hypergraph $\mathcal{H}_{yper}=(\Lambda,\mathcal{E})$.\qqed
\end{rem}

\begin{defn} \label{defn:hyperedge-hyper-total-coloring}
\cite{Yao-Ma-arXiv-2201-13354v1} A \emph{hyper-total coloring} $\theta$ of a hypergraph $\mathcal{H}_{yper}=(\Lambda,\mathcal{E})$ is defined by

(i) $\theta:\mathcal{E}\rightarrow [1,b]$, and $\theta(e_i)\neq \theta(e_j)$ if $e_i\cap e_j\neq \emptyset$;

(ii) $\theta(x_{i,j})\in [a,b]$, and $\theta(x_{i,j})\neq \theta(x_{i,k})$ for some distinct $x_{i,j},x_{i,k}\in e_i$ if $|e_i|\geq 2$.\\
And $\chi\,''(\Lambda,\mathcal{E})$ is the \emph{smallest number} of $b$ for which $\mathcal{H}_{yper}$ admits a hyper-total coloring.

A \emph{hyperedge coloring} $\varphi: \mathcal{E}\rightarrow [1,M]$, such that the elements of hyperedge set
$$
N_{ei}(e_i)=\big \{e_j: e_i\cap e_j\neq \emptyset,e_j\in \mathcal{E}\setminus \{e_i\} \big \}
$$ are colored different colors from each other, and the largest number $\Delta(\mathcal{E}_{\cap})=\max\{|N_{ei}(e_i)|:e_i\in \mathcal{E}\}$ holds the following inequalities
\begin{equation}\label{eqa:555555}
\Delta(\mathcal{E}_{\cap})\leq M\leq \Delta(\mathcal{E}_{\cap})+1
\end{equation} true by the famous Vizing's theorem on the edge coloring of a vertex-intersected graph of the hypergraph $\mathcal{H}_{yper}=(\Lambda,\mathcal{E})$. \qqed
\end{defn}

\subsection{Hyperedge-set colorings}

\begin{defn} \label{defn:distinguishing-hyperedge-set-colorings}
$^*$ Let $G$ be a $(p,q)$-graph, and let $\Lambda$ be a finite set of numbers. There is a \emph{hyperedge-set coloring} $F: S\rightarrow \mathcal{E}$, where $\mathcal{E}\in \mathcal{E}(\Lambda^2)$ holds $\Lambda=\bigcup _{e\in \mathcal{E}}e$, and $S\subseteq V(G)\cup E(G)$. There are the following constraints:
\begin{asparaenum}[\textbf{\textrm{Hyset}}-1.]
\item \label{hyper:vertex} $S=V(G)$.
\item \label{hyper:edge} $S=E(G)$.
\item \label{hyper:total} $V(G)\cup E(G)$.

--- \emph{local distinguishing}

\item \label{hyper:adjacent-vertex} $F(u)\neq F(v)$ for each edge $uv\in E(G)$.
\item \label{hyper:adjacent-edge} $F(uv)\neq F(uw)$ for adjacent edges $uv,uw\in E(G)$ and $u\in V(G)$.
\item \label{hyper:incident-edge-vertex} $F(u)\neq F(uv)$ and $F(v)\neq F(uv)$ for each edge $uv\in E(G)$.

--- \emph{local intersected}

\item \label{hyper:join-oper-verticeice} $F(u)\cap F(v)\neq \emptyset$ for each edge $uv\in E(G)$.
\item \label{hyper:join-oper-adjacent-edges} $F(uv)\cap F(uw)\neq \emptyset$ for adjacent edges $uv,uw\in E(G)$.
\item \label{hyper:join-edge} $F(u)\cap F(v)\subseteq F(uv)$ and $F(u)\cap F(v)\neq \emptyset $ for each edge $uv\in E(H)$.
\item \label{hyper:join-oper-vertex-edge} $F(uv)\cap F(u)\neq \emptyset$ and $F(uv)\cap F(v)\neq \emptyset$ for each edge $uv\in E(G)$.

--- \emph{v-adjacent distinguishing}

\item \label{hyper:adjacent-vertex-union-dis} $\bigcup _{v\in N_{ei}(u)}F(v)\neq \bigcup _{z\in N_{ei}(w)}F(z)$ for each edge $uw\in V(H)$.
\item \label{hyper:adjacent-vertex-intersect-dis} $\bigcap _{v\in N_{ei}(u)}F(v)\neq \bigcap _{z\in N_{ei}(w)}F(z)$ for each edge $uw\in V(H)$.
\item \label{hyper:adjacent-all-vertex-union-dis} $F(u)\cup \big [\bigcup _{v\in N_{ei}(u)}F(v)\big ]\neq F(w)\cup \big [\bigcup _{z\in N_{ei}(w)}F(z)\big ]$ for each edge $uw\in V(H)$.
\item \label{hyper:adjacent-all-vertex-intersect-dis} $F(u)\cap \big [\bigcap _{v\in N_{ei}(u)}F(v)\big ]\neq F(w)\cap \big [\bigcap _{z\in N_{ei}(w)}F(z)\big ]$ for each edge $uw\in V(H)$.

--- \emph{e-adjacent distinguishing}

\item \label{hyper:adjacent-edges-union-dis} $\bigcup _{v\in N_{ei}(u)}F(uv)\neq \bigcup _{z\in N_{ei}(w)}F(wz)$ for each edge $uw\in V(H)$.
\item \label{hyper:adjacent-edges-intersect-dis} $\bigcap _{v\in N_{ei}(u)}F(uv)\neq \bigcap _{z\in N_{ei}(w)}F(wz)$ for each edge $uw\in V(H)$.

--- \emph{ve-adjacent distinguishing}

\item \label{hyper:adjacent-vertex-edges-union-dis} $F(u)\cup \big [\bigcup _{v\in N_{ei}(u)}F(uv)\big ]\neq F(w)\cup \big [\bigcup _{z\in N_{ei}(w)}F(wz)\big ]$ for each edge $uw\in V(H)$.
\item \label{hyper:adjacent-vertex-edges-intersect-dis} $F(u)\cap \big [\bigcap _{v\in N_{ei}(u)}F(uv)\big ]\neq F(w)\cap \big [\bigcap _{z\in N_{ei}(w)}F(wz)\big ]$ for each edge $uw\in V(H)$.
\end{asparaenum}
\textbf{Then, we have:}

\noindent --- \emph{hyperedge-set colorings}

\begin{asparaenum}[\textbf{\textrm{Scolo}}-1.]
\item $F$ is called \emph{proper hyperedge-set coloring} if the constraints Hyset-\ref{hyper:vertex}, and Hyset-\ref{hyper:adjacent-vertex} hold true.
\item $F$ is called \emph{proper edge hyperedge-set coloring} if the constraints Hyset-\ref{hyper:edge}, and Hyset-\ref{hyper:adjacent-edge} hold true.
\item $F$ is called \emph{proper total hyperedge-set coloring} if the constraints Hyset-\ref{hyper:total}, Hyset-\ref{hyper:adjacent-vertex}, Hyset-\ref{hyper:adjacent-edge}, and Hyset-\ref{hyper:incident-edge-vertex} hold true.

--- \emph{intersected hyperedge-set colorings}

\item $F$ is called \emph{$v$-intersected proper hyperedge-set coloring} if the constraints Hyset-\ref{hyper:vertex}, Hyset-\ref{hyper:adjacent-vertex} and Hyset-\ref{hyper:join-oper-verticeice} hold true.
\item $F$ is called \emph{$e$-intersected proper edge hyperedge-set coloring} if the constraints Hyset-\ref{hyper:edge}, Hyset-\ref{hyper:adjacent-edge} and Hyset-\ref{hyper:join-oper-adjacent-edges} hold true.
\item $F$ is called \emph{$ee$-intersected proper total hyperedge-set coloring} if the constraints Hyset-\ref{hyper:total}, Hyset-\ref{hyper:adjacent-vertex}, Hyset-\ref{hyper:adjacent-edge}, Hyset-\ref{hyper:incident-edge-vertex}, Hyset-\ref{hyper:join-oper-verticeice}, Hyset-\ref{hyper:join-oper-adjacent-edges} and Hyset-\ref{hyper:join-oper-vertex-edge} hold true.

--- \emph{vertex-distinguishing}

\item $F$ is called \emph{$v$-union adjacent-$v$ distinguishing proper hyperedge-set coloring} if the constraints Hyset-\ref{hyper:vertex}, Hyset-\ref{hyper:adjacent-vertex} and Hyset-\ref{hyper:adjacent-vertex-union-dis} hold true.
\item $F$ is called \emph{$v$-intersected adjacent-$v$ distinguishing proper hyperedge-set coloring} if the constraints Hyset-\ref{hyper:vertex}, Hyset-\ref{hyper:adjacent-vertex} and Hyset-\ref{hyper:adjacent-vertex-intersect-dis} hold true.
\item $F$ is called \emph{$[v]$-union adjacent-$v$ distinguishing proper hyperedge-set coloring} if the constraints Hyset-\ref{hyper:vertex}, Hyset-\ref{hyper:adjacent-vertex} and Hyset-\ref{hyper:adjacent-all-vertex-union-dis} hold true.
\item $F$ is called \emph{$[v]$-intersected adjacent-$v$ distinguishing proper hyperedge-set coloring} if the constraints Hyset-\ref{hyper:vertex}, Hyset-\ref{hyper:adjacent-vertex} and Hyset-\ref{hyper:adjacent-all-vertex-intersect-dis} hold true.

--- \emph{edge-distinguishing}

\item $F$ is called \emph{$v$-union adjacent-$v$ distinguishing proper edge hyperedge-set coloring} if the constraints Hyset-\ref{hyper:edge}, Hyset-\ref{hyper:adjacent-edge} and Hyset-\ref{hyper:adjacent-edges-union-dis} hold true.
\item $F$ is called \emph{$v$-intersected adjacent-$v$ distinguishing proper edge hyperedge-set coloring} if the constraints Hyset-\ref{hyper:edge}, Hyset-\ref{hyper:adjacent-edge} and Hyset-\ref{hyper:adjacent-edges-intersect-dis} hold true.
\item $F$ is called \emph{$(e,v)$-intersected adjacent-$v$ distinguishing proper edge hyperedge-set coloring} if the constraints Hyset-\ref{hyper:edge}, Hyset-\ref{hyper:adjacent-edge}, Hyset-\ref{hyper:join-oper-adjacent-edges} and Hyset-\ref{hyper:adjacent-edges-intersect-dis} hold true.

--- \emph{total-distinguishing}

\item $F$ is called \emph{$v$-union adjacent-$v$ distinguishing proper total hyperedge-set coloring} if the constraints Hyset-\ref{hyper:total}, Hyset-\ref{hyper:adjacent-vertex}, Hyset-\ref{hyper:adjacent-edge}, and Hyset-\ref{hyper:incident-edge-vertex} and Hyset-\ref{hyper:adjacent-vertex-union-dis} hold true.
\item $F$ is called \emph{$v$-intersected adjacent-$v$ distinguishing proper total hyperedge-set coloring} if the constraints Hyset-\ref{hyper:total}, Hyset-\ref{hyper:adjacent-vertex}, Hyset-\ref{hyper:adjacent-edge}, and Hyset-\ref{hyper:incident-edge-vertex} and Hyset-\ref{hyper:adjacent-vertex-intersect-dis} hold true.

\item $F$ is called \emph{$[v]$-union adjacent-$v$ distinguishing proper total hyperedge-set coloring} if the constraints Hyset-\ref{hyper:total}, Hyset-\ref{hyper:adjacent-vertex}, Hyset-\ref{hyper:adjacent-edge}, and Hyset-\ref{hyper:incident-edge-vertex} and Hyset-\ref{hyper:adjacent-all-vertex-union-dis} hold true.
\item $F$ is called \emph{$[v]$-intersected adjacent-$v$ distinguishing proper total hyperedge-set coloring} if the constraints Hyset-\ref{hyper:total}, Hyset-\ref{hyper:adjacent-vertex}, Hyset-\ref{hyper:adjacent-edge}, and Hyset-\ref{hyper:incident-edge-vertex} and Hyset-\ref{hyper:adjacent-all-vertex-intersect-dis} hold true.
\item $F$ is called \emph{$e$-union adjacent-$v$ distinguishing proper total hyperedge-set coloring} if the constraints Hyset-\ref{hyper:total}, Hyset-\ref{hyper:adjacent-vertex}, Hyset-\ref{hyper:adjacent-edge}, and Hyset-\ref{hyper:incident-edge-vertex} and Hyset-\ref{hyper:adjacent-edges-union-dis} hold true.
\item $F$ is called \emph{$e$-intersected adjacent-$v$ distinguishing proper total hyperedge-set coloring} if the constraints Hyset-\ref{hyper:total}, Hyset-\ref{hyper:adjacent-vertex}, Hyset-\ref{hyper:adjacent-edge}, and Hyset-\ref{hyper:incident-edge-vertex} and Hyset-\ref{hyper:adjacent-edges-intersect-dis} hold true.

\item $F$ is called \emph{$[ve]$-union adjacent-$v$ distinguishing proper total hyperedge-set coloring} if the constraints Hyset-\ref{hyper:total}, Hyset-\ref{hyper:adjacent-vertex}, Hyset-\ref{hyper:adjacent-edge}, and Hyset-\ref{hyper:incident-edge-vertex} and Hyset-\ref{hyper:adjacent-vertex-edges-union-dis} hold true.
\item $F$ is called \emph{$[ve]$-intersected adjacent-$v$ distinguishing proper total hyperedge-set coloring} if the constraints Hyset-\ref{hyper:total}, Hyset-\ref{hyper:adjacent-vertex}, Hyset-\ref{hyper:adjacent-edge}, and Hyset-\ref{hyper:incident-edge-vertex} and Hyset-\ref{hyper:adjacent-vertex-edges-intersect-dis} hold true.

\item $F$ is called \emph{$ee$-intersected $v$-union adjacent-$v$ distinguishing proper total hyperedge-set coloring} if the constraints Hyset-\ref{hyper:total}, Hyset-\ref{hyper:adjacent-vertex}, Hyset-\ref{hyper:adjacent-edge}, Hyset-\ref{hyper:incident-edge-vertex}, Hyset-\ref{hyper:join-oper-verticeice}, Hyset-\ref{hyper:join-oper-adjacent-edges}, Hyset-\ref{hyper:join-oper-vertex-edge} and Hyset-\ref{hyper:adjacent-vertex-union-dis} hold true.
\item $F$ is called \emph{$ee$-$v$-intersected adjacent-$v$ distinguishing proper total hyperedge-set coloring} if the constraints Hyset-\ref{hyper:total}, Hyset-\ref{hyper:adjacent-vertex}, Hyset-\ref{hyper:adjacent-edge}, Hyset-\ref{hyper:incident-edge-vertex}, Hyset-\ref{hyper:join-oper-verticeice}, Hyset-\ref{hyper:join-oper-adjacent-edges}, Hyset-\ref{hyper:join-oper-vertex-edge} and Hyset-\ref{hyper:adjacent-vertex-intersect-dis} hold true.
\item $F$ is called \emph{$ee$-intersected $[v]$-union adjacent-$v$ distinguishing proper total hyperedge-set coloring} if the constraints Hyset-\ref{hyper:total}, Hyset-\ref{hyper:adjacent-vertex}, Hyset-\ref{hyper:adjacent-edge}, Hyset-\ref{hyper:incident-edge-vertex}, Hyset-\ref{hyper:join-oper-verticeice}, Hyset-\ref{hyper:join-oper-adjacent-edges}, Hyset-\ref{hyper:join-oper-vertex-edge} and Hyset-\ref{hyper:adjacent-all-vertex-union-dis} hold true.
\item $F$ is called \emph{$ee$-$[v]$-intersected adjacent-$v$ distinguishing proper total hyperedge-set coloring} if the constraints Hyset-\ref{hyper:total}, Hyset-\ref{hyper:adjacent-vertex}, Hyset-\ref{hyper:adjacent-edge}, Hyset-\ref{hyper:incident-edge-vertex}, Hyset-\ref{hyper:join-oper-verticeice}, Hyset-\ref{hyper:join-oper-adjacent-edges}, Hyset-\ref{hyper:join-oper-vertex-edge} and Hyset-\ref{hyper:adjacent-all-vertex-intersect-dis} hold true.
\item $F$ is called \emph{$ee$-intersected $e$-union adjacent-$v$ distinguishing proper total hyperedge-set coloring} if the constraints Hyset-\ref{hyper:total}, Hyset-\ref{hyper:adjacent-vertex}, Hyset-\ref{hyper:adjacent-edge}, Hyset-\ref{hyper:incident-edge-vertex}, Hyset-\ref{hyper:join-oper-verticeice}, Hyset-\ref{hyper:join-oper-adjacent-edges}, Hyset-\ref{hyper:join-oper-vertex-edge} and Hyset-\ref{hyper:adjacent-edges-union-dis} hold true.
\item $F$ is called \emph{$ee$-$e$-intersected adjacent-$v$ distinguishing proper total hyperedge-set coloring} if the constraints Hyset-\ref{hyper:total}, Hyset-\ref{hyper:adjacent-vertex}, Hyset-\ref{hyper:adjacent-edge}, Hyset-\ref{hyper:incident-edge-vertex}, Hyset-\ref{hyper:join-oper-verticeice}, Hyset-\ref{hyper:join-oper-adjacent-edges}, Hyset-\ref{hyper:join-oper-vertex-edge} and Hyset-\ref{hyper:adjacent-edges-intersect-dis} hold true.
\item $F$ is called \emph{$ee$-intersected $[ve]$-union adjacent-$v$ distinguishing proper total hyperedge-set coloring} if the constraints Hyset-\ref{hyper:total}, Hyset-\ref{hyper:adjacent-vertex}, Hyset-\ref{hyper:adjacent-edge}, Hyset-\ref{hyper:incident-edge-vertex}, Hyset-\ref{hyper:join-oper-verticeice}, Hyset-\ref{hyper:join-oper-adjacent-edges}, Hyset-\ref{hyper:join-oper-vertex-edge} and Hyset-\ref{hyper:adjacent-vertex-edges-union-dis} hold true.
\item $F$ is called \emph{$ee$-$[ve]$-intersected adjacent-$v$ distinguishing proper total hyperedge-set coloring} if the constraints Hyset-\ref{hyper:total}, Hyset-\ref{hyper:adjacent-vertex}, Hyset-\ref{hyper:adjacent-edge}, Hyset-\ref{hyper:incident-edge-vertex}, Hyset-\ref{hyper:join-oper-verticeice}, Hyset-\ref{hyper:join-oper-adjacent-edges}, Hyset-\ref{hyper:join-oper-vertex-edge} and Hyset-\ref{hyper:adjacent-vertex-edges-intersect-dis} hold true.\qqed
\end{asparaenum}
\end{defn}

\subsection{Compound hypergraphs}

From studying relationship between communities in networks, which is the topological structure between hypergraphs, we propose the following two concepts of set-set-coloring and compound hypergraphs:

\begin{defn} \label{defn:set-set-coloring}
\cite{Yao-Ma-arXiv-2201-13354v1} A graph $G$ admits a proper \emph{set-set-coloring} $\theta:V(G)\rightarrow \{\mathcal{S}_i\}^n_{i=1}$ with $\theta(x)\neq \theta(y)$ for each edge $xy\in E(G)$, where each $\mathcal{S}_i$ is a set of subsets of the power set $\Lambda^2$ based on a finite set $\Lambda$, such that each induced edge color set is defined as $\theta(u_iv_j)=\theta(u_i)[\bullet ]\theta(v_j)=\mathcal{S}_i[\bullet ] \mathcal{S}_j$ subject to a constraint set $R_{est}(c_1,c_2,\dots, c_m)$ based on an abstract operation ``$[\bullet ]$''. In particularly, each edge is colored with an induced set $\theta(u_iv_j)$ holding
$$
\theta(u_iv_j)\supseteq \theta(u_i)\cap \theta(v_j)=\mathcal{S}_i\cap \mathcal{S}_j\neq\emptyset
$$ when the operation ``$[\bullet]$'' $=$ ``$\bigcap $'' is the intersection operation on sets.\qqed
\end{defn}

\begin{defn} \label{defn:compound-set-coloring-hypergraph}
\cite{Yao-Ma-arXiv-2201-13354v1} Suppose that a graph $G$ admits a proper \emph{compound set-coloring} $\Gamma:V(G)\rightarrow \{\mathcal{E}_i\}^n_{i=1}$ with $\Gamma(x)\neq \Gamma(y)$ for each edge $xy\in E(G)$, where each $\mathcal{E}_i$ is a set of subsets of a finite set $\Lambda$, and $\Lambda=\bigcup^n_{i=1}\Lambda_i$ with $\Lambda_i=\bigcup_{e_{i,j}\in \mathcal{E}_i}e_{i,j}$, and each $(\Lambda_i, \mathcal{E}_i)$ is a hypergraph. If there are a function $\psi$ and a constraint set $R_{est}(c_1,c_2,\dots, c_m)$ such that each edge $u_iv_j$ is colored with an induced edge color set
\begin{equation}\label{eqa:compound-set-coloring-hypergraph-11}
\Gamma(u_iv_j)=\psi(\Gamma(u_i), \Gamma(v_j))\supseteq \Gamma(u_i)\cap \Gamma(v_j)=\mathcal{E}_{u_i}\cap \mathcal{E}_{u_j}\neq\emptyset
\end{equation} or there is an abstract operation ``$[\bullet ]$'' such that each induced edge color set
\begin{equation}\label{eqa:compound-set-coloring-hypergraph-22}
\Gamma(u_iv_j)\supseteq \Gamma(u_i)[\bullet ]\Gamma(v_j)=\mathcal{E}_{u_i}[\bullet ] \mathcal{E}_{u_j}\neq \emptyset
\end{equation} subject to the constraint set $R_{est}(c_1,c_2,\dots, c_m)$, then we call $\mathcal{H}_{comp}=\big (\Lambda, \bigcup^n_{i=1}\mathcal{E}_i\big )$ \emph{compound hypergraph}, and the graph $G$ \emph{compound vertex-intersected graph} of the compound hypergraph $\mathcal{H}_{comp}$ if each $\mathcal{E}_{u_i}\cap \mathcal{E}_{u_j}\neq\emptyset$ corresponds an edge $u_iv_j$ of the graph $G$. \qqed
\end{defn}

\begin{example}\label{exa:8888888888}
Let $K_{2n}$ be a complete graph of $2n$ vertices. Then $K_{2n}$ has perfect matching groups $\textbf{\textrm{M}}_i=\{M_{i,1},M_{i,2}$, $\dots $, $M_{i,2n-1}\}$ for $i\in [1,m]$. There is a graph $G$ admitting a set-coloring $F:V(G)\rightarrow \mathcal{E}_i$ with each hyperedge set $\mathcal{E}_i=\textbf{\textrm{M}}_i$ such that $F(x)\neq F(y)$ for distinct vertices $x,y\in V(G)$, and each edge $uv\in E(G)$ is colored with an induced set $F(uv)=F(u)\cup F(v)$ if $E(C_k)=F(u)\cup F(v)$, where $C_k$ is a Hamilton cycle of $K_{2n}$. If $G$ is a complete graph, then it shows the \textbf{Perfect 1-Factorization Conjecture} (Anton Kotzig, 1964):

``\emph{For integer $n \geq 2$, $K_{2n}$ can be decomposed into $2n-1$ perfect matchings such that the union of any two matchings forms a hamiltonian cycle of $K_{2n}$}''.

Here, the union operation ``$\bigcup $'' is the abstract operation ``$[\bullet ]$'' appeared in Definition \ref{defn:compound-set-coloring-hypergraph}.\qqed
\end{example}

\begin{defn} \label{defn:hypergraph-type-topcode-matrix}
\cite{Yao-Ma-arXiv-2201-13354v1} According to Definition \ref{defn:compound-set-coloring-hypergraph}, the compound vertex-intersected graph $G$ admits a compound set-coloring $\Gamma:V(G)\rightarrow \{\mathcal{E}_i\}^n_{i=1}$ holding Eq.(\ref{eqa:compound-set-coloring-hypergraph-11}), so the compound vertex-intersected graph $G$ corresponds its own \emph{hypergraph Topcode-matrix}
\begin{equation}\label{eqa:hypergraph-type-Topcode-matrix}
{
\begin{split}
H^{comp}_{yper}(G)&=\left(
\begin{array}{ccccc}
\Gamma(x_1) & \Gamma(x_2) & \cdots & \Gamma(x_q)\\
\Gamma(x_1y_1) & \Gamma(x_2y_2) & \cdots & \Gamma(x_qy_q)\\
\Gamma(y_1) & \Gamma(y_2) & \cdots & \Gamma(y_q)
\end{array}
\right)=\left(
\begin{array}{ccccc}
\mathcal{E}_{x_1} & \mathcal{E}_{x_2} & \cdots & \mathcal{E}_{x_q}\\
\mathcal{E}_{x_1y_1} & \mathcal{E}_{x_2y_2} & \cdots & \mathcal{E}_{x_qy_q}\\
\mathcal{E}_{y_1} & \mathcal{E}_{y_2} & \cdots & \mathcal{E}_{y_q}
\end{array}
\right)\\
&=(X(\mathcal{E}),~E(\mathcal{E}),~Y(\mathcal{E}))^{T}
\end{split}}
\end{equation}
where $E(G)=\{x_iy_i:i\in [1,q]\}$, $X(\mathcal{E})=(\mathcal{E}_{x_1},\mathcal{E}_{x_2}, \cdots ,\mathcal{E}_{x_q})$ and $Y(\mathcal{E})=(\mathcal{E}_{y_1},\mathcal{E}_{y_2}, \cdots ,\mathcal{E}_{y_q})$ are called \emph{v-hypergraph vectors}, and $E(\mathcal{E})=(\mathcal{E}_{x_1y_1}, \mathcal{E}_{x_2y_2}, \cdots ,\mathcal{E}_{x_qy_q})$ is called \emph{e-hypergraph intersection vector}, as well as $\mathcal{E}_{x_iy_i}=\psi(\mathcal{E}_{x_i},\mathcal{E}_{y_i})$ for each edge $x_iy_i\in E(G)$.\qqed
\end{defn}

\begin{rem}\label{rem:333333}
The matrix $H^{comp}_{yper}(G)$ defined in Eq.(\ref{eqa:hypergraph-type-Topcode-matrix}) of Definition \ref{defn:hypergraph-type-topcode-matrix} is a three dimensional matrix, and the compound vertex-intersected graph $G$ has its own vertices as hypergraphs, its own edges as intersections of hypergraphs.\qqed
\end{rem}

\subsection{Graphic groups based on hypergraphs}

\begin{defn} \label{defn:set-colored-graphic-group}
\cite{Yao-Ma-arXiv-2201-13354v1} If a set-colored graph set $F_{\mathcal{E}}(G)=\{G_1,G_2,\dots ,G_n\}$ holds:

(i) Each graph $G_i$ is isomorphic to $G_1$, i.e. $G_i\cong G_1$.

(ii) Each set-colored graph $G_i$ admits a total hyperedge set-coloring $F_i:V(G_i)\cup E(G_i)\rightarrow \mathcal{E}_i$, where $\mathcal{E}_i$ is a hyperedge set belonging to the hypegraph set $\mathcal{E}(\Lambda^2_i)$ defined on a consecutive integer set $\Lambda_i$ for $i\in [1,n]$.

(iii) There is a positive integer $M=\max |\Lambda_1|$, such that the color $F_i(w_s)=\{b_{i,s,1},b_{i,s,2},\dots $, $b_{i,s,c(i,s)}\}$ of each element $w_s$ of $V(G_i)\cup E(G_i)$ is defined as $b_{i,s,r}=b_{1,s,r}+i-1~(\bmod ~M)$ with $r\in [1,c(i,s)]$ and $s\in [1,n]$.

(iv) The finite module Abelian additive operation
$$G_i [+_k] G_j:=G_i [+] G_j[-]G_k
$$ is defined by
\begin{equation}\label{eqa:set-colored-graphic-groups}
b_{i,s,r}+b_{j,s,r}-b_{k,s,r}=b_{\lambda,s,r}
\end{equation} with $\lambda=i+j-k~(\bmod ~M)$ for some $b_{i,s,r}\in F_i(w_s)$, $b_{j,s,r}\in F_j(w_s)$ and $b_{\lambda,s,r}\in F_{\lambda}(w_s)$, as well as $b_{k,s,r}\in F_k(w_s)$, where $G_k$ is a preappointed \emph{zero}.

Thereby, we call the set-colored graph set $F_{\mathcal{E}}(G)$ \emph{every-zero set-colored graphic group}, rewrite it as $\{F_{\mathcal{E}}(G)$; $[+][-]\}$.\qqed
\end{defn}

An every-zero set-colored graphic group is shown in Fig.\ref{fig:set-coloring-group}.

We show the following proofs for the every-zero set-colored graphic group $\{F_{\mathcal{E}}(G);[+][-]\}$ defined in Definition \ref{defn:set-colored-graphic-group}:

\textbf{Zero.} Any set-colored graph $G_k\in \{F_{\mathcal{E}}(G);[+][-]\}$ can be as the \emph{zero}, so that $G_i [+_k] G_k=G_i\in \{F_{\mathcal{E}}(G);[+][-]\}$ according to Eq.(\ref{eqa:set-colored-graphic-groups}).

\textbf{Uniqueness and closureness.} If two set-colored graphs $G_i,G_j\in \{F_{\mathcal{E}}(G);[+][-]\}$ hold $G_i [+_k] G_j=G_s$ and $G_i [+_k] G_j=G_r$, then $G_s=G_r\in \{F_{\mathcal{E}}(G);[+][-]\}$ by Eq.(\ref{eqa:set-colored-graphic-groups}). Closureness stands up by Eq.(\ref{eqa:set-colored-graphic-groups}).

\textbf{Inverse.} The inverse $G_{i^{-1}}$ of each set-colored graph $G_i$ holds $k=i+i^{-1}-k~(\bmod ~M)$ for one of $i^{-1}=2k-i$ and $i^{-1}=M+2k-i$.

\textbf{Associative law.} Three set-colored graphs of $\{F_{\mathcal{E}}(G);[+][-]\}$ satisfy
$$
\big (G_i [+_k] G_j\big )[+_k] G_r=G_i [+_k]\big (G_j[+_k] G_r\big )
$$

\textbf{Commutative law.} Any pair of set-colored graphs $G_i$ and $G_j$ of $\{F_{\mathcal{E}}(G);[+][-]\}$ holds
$$G_i [+_k] G_j=G_j[+_k] G_i
$$

\begin{figure}[h]
\centering
\includegraphics[width=16.4cm]{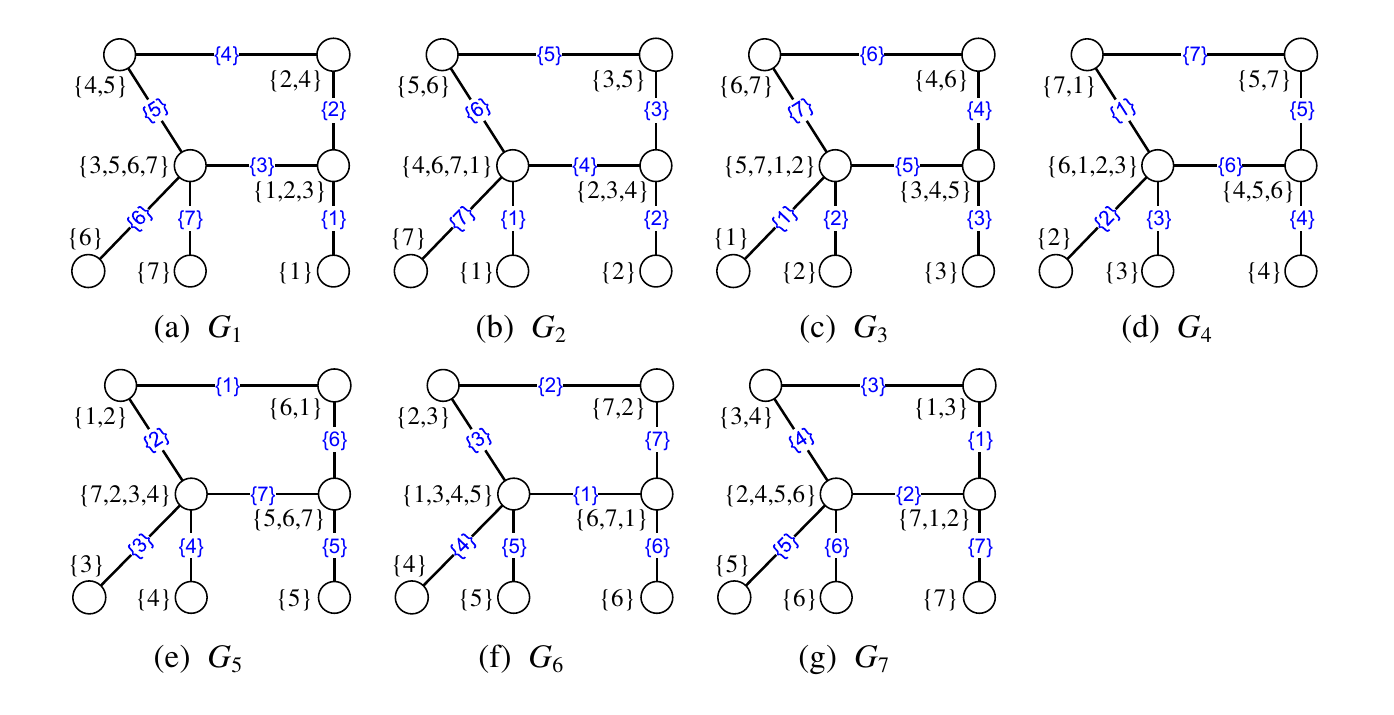}\\
\caption{\label{fig:set-coloring-group}{\small An every-zero set-colored graphic group for illustrating Definition \ref{defn:set-colored-graphic-group}.}}
\end{figure}

\begin{thm}\label{thm:every-zero-set-colored-graphic-group-preserve}
\cite{Yao-Ma-arXiv-2201-13354v1} Suppose that $\{F_{\mathcal{E}}(G);[+][-]\}$ is an every-zero set-colored graphic group defined on a set-colored graph $G$ admitting a total hyperedge set-coloring $F:V(G)\cup E(G)\rightarrow \mathcal{E}$, where $\mathcal{E}$ is a hyperedge set defined on a consecutive integer set $\Lambda$, then

(i) Each set-colored graph $G_i\in \{F_{\mathcal{E}}(G);[+][-]\}$ admitting a total hyperedge set-coloring $F_i:V(G)\cup E(G)\rightarrow \mathcal{E}_i$ is a vertex-intersected graph of some hypergraph if $G$ is a vertex-intersected graph of the hypergraph $\mathcal{H}_{yper}=(\Lambda,\mathcal{E})$, also, $F_i$ is a total intersected-hyperedge set-coloring defined in Definition \ref{defn:general-definition-set-colorings}.

(ii) Each hyperedge set $\mathcal{E}_i$ contains a perfect hypermatching if the hyperedge set $\mathcal{E}$ contains a perfect hypermatching.

(iii) Each set-colored graph $G_i\in \{F_{\mathcal{E}}(G);[+][-]\}$ contains a hyperedge path (res. hyperedge cycle) if the set-colored graph $G$ contains a hyperedge path (res. hyperedge cycle).

(iv) Each set-colored graph $G_i\in \{F_{\mathcal{E}}(G);[+][-]\}$ is set-colored graph homomorphism to a set-colored graph $H_i$ if $G$ is set-colored graph homomorphism to $H$, so that $H_i\cong H$.
\end{thm}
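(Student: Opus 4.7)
The plan is to reduce all four statements to a single observation: the additive shift $\sigma_i: \mathbb{Z}/M\mathbb{Z} \to \mathbb{Z}/M\mathbb{Z}$ defined by $\sigma_i(b) = b+i \pmod{M}$ is a bijection on the underlying finite integer set. Since $\sigma_i$ is a bijection, its elementwise extension to the power set $\Lambda^2$ preserves all Boolean relations: intersection, disjointness, cardinality, and covering. Because the set-coloring $F_i$ of $G_i$ is obtained from $F$ by applying $\sigma_i$ componentwise to every element of every vertex and edge color set, every combinatorial property of $(G, F)$ expressible purely through set intersections transfers automatically to $(G_i, F_i)$.

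First, for (i), I would verify that $\sigma_i$ induces a bijection $\hat{\sigma}_i: \mathcal{E} \to \mathcal{E}_i$ by $\hat{\sigma}_i(e) = \{\sigma_i(b) : b \in e\}$, and that $\hat{\sigma}_i(e \cap e') = \hat{\sigma}_i(e) \cap \hat{\sigma}_i(e')$. Hence if $G$ is the intersected-graph of $\mathcal{H}_{yper}=(\Lambda,\mathcal{E})$, meaning each pair $e,e'\in\mathcal{E}$ with $e\cap e'\neq\emptyset$ corresponds to an edge of $G$ colored by the intersection, then the shifted hypergraph $\mathcal{H}^i_{yper}=(\Lambda_i,\mathcal{E}_i)$ with $\Lambda_i = \sigma_i(\Lambda)$ has exactly the same incidence pattern, and so $G_i \cong G$ serves as its intersected-graph (appealing to Theorem \ref{thm:unique-intersected-graph} for uniqueness up to isomorphism).

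For (ii) and (iii), the same bijection argument applies directly. A perfect hypermatching $\{M_1,\dots,M_s\}\subseteq\mathcal{E}$ with $M_j\cap M_k=\emptyset$ and $\bigcup M_j=\Lambda$ maps under $\hat{\sigma}_i$ to $\{\hat{\sigma}_i(M_1),\dots,\hat{\sigma}_i(M_s)\}\subseteq\mathcal{E}_i$ with pairwise empty intersection and union equal to $\Lambda_i$, establishing (ii). For (iii), a hyperedge path $\mathcal{P}=e_1e_2\cdots e_m$ or hyperedge cycle $\mathcal{C}=e_1e_2\cdots e_me_1$ is defined by the chain of non-empty intersections $e_j\cap e_{j+1}\neq\emptyset$ and the non-ear condition on intermediate hyperedges. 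Applying $\hat{\sigma}_i$ term by term yields a hyperedge path or cycle in $\mathcal{E}_i$; the ear condition transfers because $\hat{\sigma}_i$ is a hypergraph isomorphism and ``being an ear'' is defined entirely in terms of intersections with other hyperedges.

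For (iv), given the set-colored graph homomorphism $G[\mathcal{E}]\rightarrow H[\mathcal{E}^*]$ realized by $\varphi:V(G)\to V(H)$, define $H_i$ to be $H$ equipped with the shifted set-coloring $F^*_i := \hat{\sigma}_i \circ F^*$, so that $H_i\cong H$ as abstract graphs and as set-colored graphs via $\hat{\sigma}_i$. The same map $\varphi$ then witnesses $G_i[\mathcal{E}_i]\rightarrow H_i[\mathcal{E}^*_i]$, since the edge incidence relation $\varphi(u)\varphi(v)\in E(H_i)$ is unchanged and the compatibility $\hat{\sigma}_i(F(uv)) = F^*_i(\varphi(u)\varphi(v))$ follows by applying $\hat{\sigma}_i$ to both sides of the original homomorphism equations.

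The main obstacle I anticipate is bookkeeping the modular reduction carefully: one must check that after shifting, no two originally distinct elements of a single color set $F(w_s)$ collapse to the same residue, and that distinct vertex color sets remain distinct. Since $\sigma_i$ is a bijection on $\mathbb{Z}/M\mathbb{Z}$ and every $b_{s,r}\in F(w_s)$ lies in $\Lambda\subseteq [0,M-1]$, both properties are immediate: $\sigma_i$ preserves cardinalities within each set and preserves distinctness between sets, which is exactly what is needed to promote the Boolean preservation arguments above into genuine preservation of the set-coloring structures stipulated by Definition \ref{defn:intersected-graph-hypergraph} and Definition \ref{defn:more-terminology-group}.
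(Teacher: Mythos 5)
The paper states this theorem without any proof: the argument that precedes it only verifies the group axioms (zero, closure, inverse, associativity, commutativity) for $\{F_{\mathcal{E}}(G);\oplus\}$, and the theorem itself is left as a bare assertion. So there is no proof of record to compare against, and your proposal supplies the missing argument. The route you take is the natural one and is correct in substance: by Definition \ref{defn:set-colored-graphic-group} each $F_i$ is obtained from $F$ by applying the translation $\sigma_i(b)=b+i \pmod{M}$ to every element of every color set, and a bijection of the ground set commutes with intersection and union and preserves cardinality and disjointness; hence every property of $(G,F)$ that is expressible purely through the intersection pattern of hyperedges --- being an intersected-graph (condition $c_0$ of Definition \ref{defn:intersected-graph-hypergraph}), containing a perfect hypermatching, containing a hyperedge path or cycle together with the ear conditions, and admitting a set-colored graph homomorphism --- transfers verbatim to $(G_i,F_i)$ via $\hat{\sigma}_i$. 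Your treatment of (iv), reusing the same vertex map $\varphi$ with the shifted coloring on $H$, is exactly what is needed.

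Two small caveats. First, your claim that $\Lambda\subseteq[0,M-1]$ is not literally compatible with the definition, which sets $M=\max\Lambda_1$, so $M\in\Lambda$; if $0\in\Lambda$ as well, then $\sigma_i$ identifies $0$ and $M$ modulo $M$ and is no longer injective on $\Lambda$, which would break the cardinality- and distinctness-preservation you rely on. This is really a defect of Definition \ref{defn:set-colored-graphic-group} (the same ambiguity appears in the paper's earlier graphic-group definitions), and it is repaired by adopting the convention that residues are represented in $[1,M]$, but you should state that hypothesis rather than assert the false containment. Second, for part (i) note that the additional constraints $c_k$ with $k\geq 1$ in $R_{est}(m+1)$ (e.g.\ graceful-type equations $|a_u-b_v|=c_{uv}$) are generally \emph{not} preserved by the shift, since the edge entries are shifted along with the vertex entries; your conclusion survives only because the theorem asks merely for ``the intersected-graph of some hypergraph,'' for which $R_{est}(1)=\{c_0\}$ suffices. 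It is worth making that restriction explicit so the reader does not think the full constraint set is carried over.
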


\begin{defn} \label{defn:matrix-graph-type-topcode-matrices}
\cite{Yao-Ma-arXiv-2201-13354v1} Suppose that a $(p,q)$-graph $J$ admits a total graphic group coloring $\phi:V(J)\cup E(J)\rightarrow \{F_{\mathcal{E}}(G);[+][-]\}$, where the very-zero set-colored graphic group $\{F_{\mathcal{E}}(G);[+][-]\}$ is defined in Definition \ref{defn:set-colored-graphic-group}, such that each edge $xy$ holds $\phi(x)\neq \phi(y)$ and $\phi(xy)=\phi(x)[+_k] \phi(y)$ under a preappointed \emph{zero} $G_k\in \{F_{\mathcal{E}}(G);[+][-]\}$, and we get a \emph{graph-type Topcode-matrix} of the $(p,q)$-graph $J$ as follows:
\begin{equation}\label{eqa:graph-type-topcode-matrix-1}
{
\begin{split}
T^{graph}_{code}(J)&=\left(
\begin{array}{ccccc}
\phi(u_1) & \phi(u_2) & \cdots & \phi(u_q)\\
\phi(u_1v_1) & \phi(u_2v_2) & \cdots & \phi(u_qv_q)\\
\phi(v_1) & \phi(v_2) & \cdots & \phi(v_q)
\end{array}
\right)=\left(
\begin{array}{ccccc}
G_{u_1} & G_{u_2} & \cdots & G_{u_q}\\
G_{u_1v_1} & G_{u_2v_2} & \cdots & G_{u_qv_q}\\
G_{v_1} & G_{v_2} & \cdots & G_{v_q}
\end{array}
\right)\\
&=(X(F_{\mathcal{E}}),~E(F_{\mathcal{E}}),~Y(F_{\mathcal{E}}))^{T}
\end{split}}
\end{equation}
where $u_iv_i\in E(J)=\{u_iv_i:i\in [1,q]\}$. We call two vectors $X(F_{\mathcal{E}})=(G_{u_1}, G_{u_2}, \cdots , G_{u_q})$ and $Y(F_{\mathcal{E}})=(G_{v_1}, G_{v_2}, \cdots , G_{v_q})$ \emph{v-graph vectors}, and $E(F_{\mathcal{E}})=(G_{u_1v_1}, G_{u_2v_2}, \cdots , G_{u_qv_q})$ \emph{e-graph vector}. Correspondingly, the above graph-type Topcode-matrix $T^{graph}_{code}(J)$ of the $(p,q)$-graph $J$ induces a \emph{matrix-type Topcode-matrix} defined as
\begin{equation}\label{eqa:graph-type-topcode-matrix-2}
{
\begin{split}
T^{matrix}_{code}(J)&=\left(
\begin{array}{ccccc}
T_{code}(G_{u_1}) & T_{code}(G_{u_2}) & \cdots & T_{code}(G_{u_q})\\
T_{code}(G_{u_1v_1}) & T_{code}(G_{u_2v_2}) & \cdots & T_{code}(G_{u_qv_q})\\
T_{code}(G_{v_1}) & T_{code}(G_{v_2}) & \cdots & T_{code}(G_{v_q})
\end{array}
\right)_{3\times q}\\
&=(X(T_{code}),~E(T_{code}),~Y(T_{code}))^{T}
\end{split}}
\end{equation} with an \emph{e-Topcode-matrix vector} $E(T_{code})=(T_{code}(G_{u_1v_1})$, $T_{code}(G_{u_2v_2})$, $ \cdots $, $T_{code}(G_{u_qv_q}))$, and two \emph{v-Topcode-matrix vectors} $X(T_{code})=(T_{code}(G_{u_1})$, $ T_{code}(G_{u_2})$, $ \cdots $, $T_{code}(G_{u_q}))$ and $Y(T_{code})=(T_{code}(G_{v_1})$, $ T_{code}(G_{v_2}), \cdots , T_{code}(G_{v_q}))$.\qqed
\end{defn}

\begin{rem}\label{rem:333333}
The matrix-type Topcode-matrix $T^{matrix}_{code}(J)$ defined in Eq.(\ref{eqa:graph-type-topcode-matrix-2}) in Definition \ref{defn:matrix-graph-type-topcode-matrices} is a \emph{three dimensional matrix} having the elements as Topcode-matrices $T_{code}(G_{u_i})$, $T_{code}(G_{v_i})$ and $T_{code}(G_{u_iv_i})$ of $3\times q$ rank, more or less, like some things in \emph{Tensor}. \qqed
\end{rem}

By a vertex-intersected graphs defined in Definition \ref{defn:vertex-intersected-graph-hypergraph}, we have:
\begin{defn}\label{defn:set-type-topcode-matrix-definition}
\cite{Yao-Ma-arXiv-2201-13354v1} Since a vertex-intersected $(p,q)$-graph $H$ of a hypergraph $\mathcal{H}_{yper}=(\Lambda,\mathcal{E})$ subject to a constraint set $R_{est}(c_0,c_1,c_2,\dots ,c_m)$ admits a $W$-constraint set-coloring $F:V(H)\rightarrow \mathcal{E}$, so the vertex-intersected $(p,q)$-graph $H$ has its own \emph{set-type Topcode-matrix}
\begin{equation}\label{eqa:set-type-topcode-matrix}
\centering
{
\begin{split}
T^{set}_{code}(H)= \left(
\begin{array}{ccccc}
F(x_1) & F(x_2) & \cdots & F(x_q)\\
F(e_1) & F(e_2) & \cdots & F(e_q)\\
F(y_1) & F(y_2) & \cdots & F(y_q)
\end{array}
\right)_{3\times q}=
\left(\begin{array}{c}
X^{set}\\
E^{set}\\
Y^{set}
\end{array} \right)=(X^{set},~E^{set},~Y^{set})^{T}_{3\times q}
\end{split}}
\end{equation} with \emph{v-set-vector} $X^{set}=(F(x_1), F(x_2), \dots, F(x_q))$, \emph{e-set-vector} $E^{set}=(F(e_1)$, $F(e_2)$, $ \dots $, $F(e_q))$ and \emph{v-set-vector} $Y^{set}=(F(y_1), F(y_2), \dots, F(y_q))$ such that all sets $F(x_i), F(y_k)\in \mathcal{E}$, and there is a function $\phi_s$ for some $s$th constraint $c_s\in R_{est}(c_0,c_1,c_2,\dots ,c_m)$ holding $F(e_j)=\phi_s(F(x_j),F(y_j))$, as well as $F(e_j)\supseteq F(x_j)\cap F(y_j)\neq \emptyset$ for each $j\in [1,q]$.\qqed
\end{defn}

\subsection{Constructing hypergraphs}

\subsubsection{$G$-hypergraphs}

Suppose that a connected $(p,q)$-graph $G$ admits a proper vertex coloring $f:V(G)\rightarrow [1,p]$ such that $f(V(G))=[1,p]$. We vertex-split the connected $(p,q)$-graph $G$ into connected graphs $G_i$ with $i\in [1,n_{vs}(G)]$, where $n_{vs}(G)$ is the number of connected graphs, such that $G_i\not\cong G_j$ and $E(G_i)\cap E(G_j)=\emptyset$ if $i\neq j$, and $E(G)=\bigcup ^{n_{vs}(G)}_{i=1}E(G_i)$, as well as each graph $G_i$ admits a proper vertex coloring $f_i:V(G_i)\rightarrow [1,p]$, such that $f_i$ is induced by the proper vertex coloring $f$.

Moreover, we get integer sets $e_i=\{f(w):~w\in V(G_i)\}$ with $i\in [1,n_{vs}(G)]$, clearly, each set $e_i$ is a subset of the power set $[1,p]^2$. For a hyperedge set $\mathcal{E}\in \mathcal{E}\big ([1,p]^2\big )$, we get a \emph{$G$-hypergraph} $H_{yper}=([1,p],\mathcal{E})$ since $[1,p]=\bigcup_{e\in \mathcal{E}}e$. Especially, a hyperedge set $\mathcal{E}=\{e_s\}$ containing one element only corresponds to a graph $G_s$ of the connected $(p,q)$-graph $G$, since $e_s=V(G_s)$.

For a hyperedge set $\mathcal{E}=\{e_{i_1},e_{i_2},\dots ,e_{i_n}\}\in \mathcal{E}\big ([1,p]^2\big )$, correspondingly, we have the graphs $G_{i_1},G_{i_2},\dots ,G_{i_n}$ by doing vertex-splitting operation to the connected $(p,q)$-graph $G$, and we do the vertex-coincide operation to the graphs $G_{i_1},G_{i_2},\dots ,G_{i_n}$, and get a vertex-coincided graph $[\bullet]^{i_n}_{j=1}G_{i_j}$ of the connected $(p,q)$-graph $G$, where the vertex-coincide operation is to vertex-coincide a vertex $u$ of $G_{i_j}$ with a vertex $v$ of $G_{i_t}$ with $i_j\neq i_t$ into one vertex $u\bullet v$ if $f_{i_j}(u)=f_{i_t}(v)$.

See the subsection ``Assembling graphs with hypergraphs'' for other $G$-hypergraphs.

\subsubsection{$C_{olor}$-hypergraphs}

\begin{defn} \label{defn:111111}
Suppose that a graph $G$ admits colorings $f_1,f_2,\dots ,f_n$, and each coloring $f_i$ corresponds to another coloring $f_j$ with $i\neq j$ such that there is a transformation $\theta_{i,j}$ holding $f_j=\theta_{i,j}(f_i)$. Let $\Lambda_{C}=\{f_1,f_2,\dots ,f_n\}$, each $\mathcal{E}_{C}\in \mathcal{E}(\Lambda^2_{C})$ holding $\Lambda_{C}=\bigcup_{e\in \mathcal{E}_{C}}e$, we get a \emph{$C_{olor}$-hypergraph} $H_{yper}=(\Lambda_{C},\mathcal{E}_{C})$ if $e\in \mathcal{E}_{C}$ corresponds to another subset $e\,'\in \mathcal{E}_{C}$ such that $f\in e$ and $f\,'\in e\,'$ hold $f\,'=\theta(f\,')$.

The Topcode-matrix set $\Lambda_{matrix}=\{T_{code}(G,f_i):i\in [1,n]\}$ forms a Topcode-matrix hypergraph $H^{matrix}_{yper}=(\Lambda_{matrix},\mathcal{E})$ for $\mathcal{E}\in \mathcal{E}(\Lambda^2_{matrix})$.

The Topcode-matrix graph set $\Lambda_{graph}=\{G_{raph}(T_{code}(G,f_i)):i\in [1,n]\}$ forms a graph-set hypergraph $H^{graph}_{yper}=(\Lambda_{graph},\mathcal{E})$ for $\mathcal{E}\in \mathcal{E}(\Lambda^2_{graph})$.\qqed
\end{defn}

\begin{thm}\label{thm:666666}
$^*$ Suppose that $\Lambda=\{g_i:i\in [1,m]\}$ is a coloring set, and there is transformation, such that $theta_{i,j}$ $g_j=\theta_{i,j}(g_i)$ for any pair of two colorings $g_i,g_j\in \Lambda$. Then each connected graph $G$ admits a set-coloring $F:V(G)\cup E(G)\rightarrow \mathcal{E}$, where $\mathcal{E}\in \mathcal{E}(\Lambda^2)$.
\end{thm}

Theorem \ref{thm:graceful-total-sequence-coloring} tells us: Every tree $T$ with diameter $D(T)\geq 3$ and $s+1=\left \lceil \frac{D(T)}{2}\right \rceil $ admits at least $2^{s}$ different \emph{gracefully total sequence colorings} if two sequences $A_M, B_q$ holding $0<b_j-a_i\in B_q$ for $a_i\in A_M$ and $b_j\in B_q$. So, we get $2^{s}$ different $C_{olor}$-hypergraph $H^i_{yper}=(A_M\cup B_q,\mathcal{E}^i_{C})$.

In Problem \ref{question:assembl-graphs-hypergraphs}, there are some hyperedge sets $\mathcal{E}^a\in \mathcal{E}\big (\Lambda^2(G)\big )$ with $a\in [1,m]$, such that each hypergraph $\mathcal{H}^a_{yper}=(\Lambda(G),\mathcal{E}^a)$ produces a proper total coloring of the graph $G$, where each hyperedge set $\mathcal{E}^a=\bigcup^k_{i=1} S^a_i=\bigcup^k_{i=1} (V^a_i\cup E^a_i)$ with $a\in [1,m]$, and moreover subsets $V^a_i\subset V(G)$ and $E^a_i\subset E(G)$ are independent sets of the graph $G$, and each vertex of $V^a_i$ is colored with the $i$th color, and each edge of $E^a_i$ is colored with the $i$th color.

\begin{problem}\label{qeu:444444}
A $k$-level $T$-tree $H$ is a tree, where the tree $T$ is the root tree, such that each tree $H_{i-1}=H_{i}-L(H_{i})$ for $I\in [1,k]$ with $k\geq 1$, and $H_{0}=T$, \textbf{characterize} $k$-level $T$-trees with $k\geq 1$.
\end{problem}

Suppose that a connected graph $G$ admits a proper total coloring $f:V(G)\cup E(G)\rightarrow [a,b]$, and holds $f(V(G)\cup E(G))=[a,b]$. Then we have subsets $e_u=\{f(u),f(ux_i):x_i\in N_{ei}(u)\}$, $e_{uv}=\{f(uv)\}$ and $e_v=\{f(v),f(vy_j):y_j\in N_{ei}(v)\}$ for each edge $uv\in E(G)$. There are hyeredge sets $\mathcal{E}\in \mathcal{E}([a,b]^2)$ with $\bigcup_{e\in \mathcal{E}}e=[a,b]$, and we have some total set-coloring $F:V(G)\cup E(G)\rightarrow \mathcal{E}$ holding $f(u)\in F(u)$, $f(uv)\in F(uv)$ and $f(v)\in F(v)$ for each edge $uv\in E(G)$ and the $C_{olor}$-hypergraph $H_{yper}=([a,b],\mathcal{E})$. In other words, the hypergraph set $\mathcal{E}([a,b]^2)$ contains all proper total colorings of the connected graph $G$.

\subsubsection{$K_{tree}$-hypergraphs, $G_{tree}$-hypergraphs}

\begin{defn} \label{defn:complete-graph-spanning-trees}
$^*$ Let $S_{pan}(K_n)$ be the set of spanning trees of a complete graph $K_n$ of $n$ vertices admitting a vertex coloring $f:V(K_n)\rightarrow [1,n]$ holding $f(V(K_n))=[1,n]$ true. We have:

(i) The cardinality $|S_{pan}(K_n)|=n^{n-2}$ by the famous Cayley's formula $\tau (K_{n})=n^{n-2}$.

(ii) Each spanning tree $T_i\in S_{pan}(K_n)$ admits a vertex coloring $f_i:V(T_i)\rightarrow [1,n]$ holding $f_i(V(T_i))=[1,n]$.

(iii) $S_{pan}(K_n)=\bigcup ^{A_n}_{k=1}N^k_{pan}(K_n)$, where $A_n$ is the number of non-isomorphic spanning tree classes in the $n^{n-2}$ spanning trees, such that

\quad (3-1) any pair of spanning trees $T_{k,i},T_{k,j}\in N^k_{pan}(K_n)$ holds $T_{k,i}\not\cong T_{k,j}$ if $i\neq j$;

\quad (3-2) any spanning tree $T_{k,i}\in N^k_{pan}(K_n)$ is isomorphic to some spanning tree $T_{l,t}\in N^l_{pan}(K_n)$ for each $l\in [1,A_n]\setminus \{k\}$.

(iv) $S_{pan}(K_n)=\bigcup ^{B_n}_{k=1}I^k_{pan}(K_n)$, where $B_n$ is the number of isomorphic spanning tree classes in the $n^{n-2}$ spanning trees, such that

\quad (4-1) any pair of spanning trees $T_{i,j},T_{i,t}\in I^i_{pan}(K_n)$ holds $T_{i,j}\cong T_{i,t}$;

\quad (4-2) any spanning tree $T_{k,i}\in I^k_{pan}(K_n)$ is not isomorphic to any spanning tree $T_{l,s}\in I^p_{pan}(K_n)$ for each $p\in [1,B_n]\setminus \{k\}$.\qqed
\end{defn}

\begin{prop}\label{prop:99999}
Each spanning tree set $I^k_{pan}(K_n)$ in (iv) of Definition \ref{defn:complete-graph-spanning-trees} is the union of several graphic groups, that is, $I^k_{pan}(K_n)=\bigcup ^{m_k}_{i=1}\{F(G_i);[+][-]\}$, where each spanning tree set $F(G_i)=\{T_{i,1},T_{i,2},\dots ,T_{i,n}\}$ with $T_{i,j}\cong T_{i,y}$ and each $T_{i,s}$ admits a vertex coloring $f_{i,s}$ defined in (ii) of Definition \ref{defn:complete-graph-spanning-trees}, such that $f_{i,j}(x)=f_{i,1}(x)+j-1~(\bmod ~n)$ for $x\in V(T_{i,1})=V(T_{i,j})$ with $j\in [1,n]$.
\end{prop}

\begin{example}\label{exa:8888888888}
By the notation of Definition \ref{defn:complete-graph-spanning-trees}, the complete graph $K_{26}$ has
$$26^{24}=9,106,685,769,537,220,000,000,000,000,000,000$$ colored spanning trees in $S_{pan}(K_{26})$. By Table-1 in Appendix A, there are $t_{26}=279,793,450$ non-isomorphic spanning trees of 26 vertices, and there are
\begin{equation}\label{eqa:555555}
26^{24}\div t_{26}=32,547,887,627,595,300,000,000,000
\end{equation} trees being isomorphic to others.

For (iii) of Definition \ref{defn:complete-graph-spanning-trees}, we have $A_{26}~(\geq 26^{24}\div t_{26})$ classes $N^k_{pan}(K_{26})$ of non-isomorphic spanning trees, and each cardinality $|N^k_{pan}(K_{26})|\leq t_{26}$ for $k\in [1,A_{26}]$.

Clearly, $B_{26}=t_{26}$ in (iv) of Definition \ref{defn:complete-graph-spanning-trees}.\qqed
\end{example}

\begin{defn} \label{defn:vertex-coinciding-two-spanning-trees}
$^*$ We define an operation $[\bullet^{coin}_{colo}]$ between two spanning trees $T_j,T_k\in S_{pan}(K_n)$ defined in Definition \ref{defn:complete-graph-spanning-trees} by vertex-coinciding a vertex $u_i$ of $T_j$ with a vertex $x_i$ of $T_k$ into one vertex $u_i\bullet x_i$ if $f_j(u_i)=f_k(x_i)=i$ for $i\in [1,n]$, and the resultant graph after removing multiple-edge is denoted as $T_j[\bullet^{coin}_{colo}]T_k$.\qqed
\end{defn}

\begin{thm}\label{thm:666666}
$^*$ By Definition \ref{defn:complete-graph-spanning-trees} and Definition \ref{defn:vertex-coinciding-two-spanning-trees}, any spanning tree $T_i\in S_{pan}(K_n)$ corresponds two spanning trees $T_j,T_k\in S_{pan}(K_n)$, such that the spanning tree $T_i$ is a subgraph of the graph $T_j[\bullet^{coin}_{colo}]T_k$, here, $T_i\not \cong T_j$, $T_i\not \cong T_k$ and $T_j\not \cong T_k$.
\end{thm}
\begin{proof}Obviously, the graph $T_j[\bullet^{coin}_{colo}]T_k$ is connected and has at least a cycle $C=x_1x_2\cdots x_mx_1$ with $m\geq 3$, because of $T_j\not \cong T_k$. Without loss of generality, $x_mx_1\in E(T_j)$ and $x_1x_2\in E(T_k)$, we remove an edge $x_2x_3$ from the graph $T_j[\bullet^{coin}_{colo}]T_k$, the resultant graph is denoted as $G_1=\big (T_j[\bullet^{coin}_{colo}]T_k\big )-x_2x_3$, clearly, $G_1$ is connected and $x_1x_2,x_mx_1\in E(G_1)$.

If $G_1$ has a cycle $C_1$, we remove an edge $u_1v_1\in E(C_1)\setminus \{x_1x_2,x_mx_1\}$ from $G_1$, and get a connected graph $G_2=G_1-u_1v_1$, go on in this way, we obtain a spanning tree $T_i\in S_{pan}(K_n)$ holding $x_1x_2,x_mx_1\in E(T_i)$ true. Notice that $f(V(K_n))=[1,n]=f(V(T_i))=f(V(T_j))=f(V(T_j))$, so $T_i\not \cong T_j$, $T_i\not \cong T_k$ and $T_j\not \cong T_k$.

The proof of the theorem is complete.
\end{proof}

\begin{problem}\label{qeu:444444}
Determine the number $A_n$ defined in Definition \ref{defn:complete-graph-spanning-trees}.
Since the number $\tau (K_{m,n})$ of all spanning trees of a bipartite complete graph $K_{m,n}$ is $\tau (K_{m,n})=m^{n-1}n^{m-1}$, do researching works like that of the complete graph $K_{n}$.
\end{problem}

\begin{defn} \label{defn:spanning-tree-hyperedge-set}
$^*$ A hyperedge set $\mathcal{E}_i=\{e_{i,1},e_{i,2},\dots ,e_{i,i_a}\}$ is a set of subsets of $S_{pan}(K_n)$ holds $S_{pan}(K_n)=\bigcup_{e_{i,s}\in \mathcal{E}_i}e_{i,s}$, and has one of the following properties:
\begin{asparaenum}[\textbf{\textrm{Prop}}-1.]
\item \label{Prop:11} Each subset $e_{i,s}\in \mathcal{E}_i$ corresponds some subset $e_{i,t}\in \mathcal{E}_i$, such that $e_{i,s}\cap e_{i,t}\neq \emptyset$.
\item \label{Prop:22} A spanning tree $T_{i,k}\in e_{i,k}$ is a proper subgraph of the graph $T_{i,s}[\bullet^{coin}_{colo}]T_{i,t}$ for some spanning trees $T_{i,s}\in e_{i,s}$ and $T_{i,t}\in e_{i,t}$.
\item \label{Prop:33} Two spanning trees $T_{i,s}\in e_{i,s}$ and $T_{i,t}\in e_{i,t}$ hold
$$
T_{i,t}=T_{i,s}+x_iy_i-u_iv_i\quad (T_{i,t}-x_iy_i\cong T_{i,s}-u_iv_i)
$$ for $x_iy_i\not\in E(T_{i,s})$ and $u_iv_i\in E(T_{i,s})$, we write this fact by $T_{i,t}=\pm_e[T_{i,t}]$.\qqed
\end{asparaenum}
\end{defn}

\begin{thm}\label{thm:adding-edge-removing-sets}
\cite{Yao-Su-Ma-Wang-Yang-arXiv-2202-03993v1} Let $T_{\pm e}(\leq n)$ be the set of trees of $p$ vertices with $p\leq n$. Then each tree $H\in T_{\pm e}(\leq n)$ is a star $K_{1,p-1}$, or corresponds another tree $T\in T_{\pm e}(\leq n)$ holding $H-uv\cong T-xy$ for $xy\in E(T)$ and $uv \in E(H)$.
\end{thm}

\begin{thm}\label{thm:adding-edge-removing-w-type-sets}
\cite{Yao-Su-Ma-Wang-Yang-arXiv-2202-03993v1} Let $G_{tree}(\leq n)$ be the set of trees of $p$ vertices with $p\leq n$. Then each tree $H\in G_{tree}(\leq n)$ admits a $W$-type coloring and corresponds another tree $T\in G_{tree}(\leq n)$ admitting a $W$-type coloring holding $H-uv\cong T-xy$ (or $H+xy\cong T+uv$) for some edges $xy\in E(T)$ and $uv \in E(H)$.
\end{thm}

\begin{defn} \label{defn:three-vertex-coincided-intersected-graphs}
$^*$ Suppose that a graph $H$ admits a total coloring $F:V(H)\cup E(H)\rightarrow \mathcal{E}_i$, where $\mathcal{E}_i\in \mathcal{E}(S^2_{pan}(K_n))$ (Ref. Definition \ref{defn:complete-graph-spanning-trees}).

(i) If each edge $uv\in E(H)$ holds $F(uv)\supseteq F(u)\cap F(v)\neq \emptyset$, also, holds \textbf{\textrm{Prop}}-\ref{Prop:11} in Definition \ref{defn:spanning-tree-hyperedge-set}, then $H$ is called \emph{$K_{tree}$-spanning vertex-intersected graph} of the $K_{tree}$-hypergraph $\mathcal{H}_{yper}=(S_{pan}(K_n),\mathcal{E}_i)$.

(ii) If each edge $uv\in E(H)$ holds $T_{i,k}\subset T_{i,s}[\bullet^{coin}_{colo}]T_{i,t}$ for some spanning trees $T_{i,k}\in F(uv)$, $T_{i,s}\in F(u)$ and $T_{i,t}\in F(v)$ (also, \textbf{\textrm{Prop}}-\ref{Prop:22} in Definition \ref{defn:spanning-tree-hyperedge-set}), then $H$ is called \emph{$K_{tree}$-spanning vertex-coincided graph} of the $K_{tree}$-hypergraph $\mathcal{H}_{yper}=(S_{pan}(K_n),\mathcal{E}_i)$.

(iii) If each edge $uv\in E(H)$ holds $T_{i,k}=\pm_e[T_{i,s}]$ and $T_{i,t}=\pm_e[T_{i,k}]$ (also, \textbf{\textrm{Prop}}-\ref{Prop:33} in Definition \ref{defn:spanning-tree-hyperedge-set}) for some spanning trees $T_{i,k}\in F(uv)$, $T_{i,s}\in F(u)$ and $T_{i,t}\in F(v)$, then $H$ is called \emph{$K_{tree}$-spanning added-edge-removed graph} of the $K_{tree}$-hypergraph $\mathcal{H}_{yper}=(S_{pan}(K_n),\mathcal{E}_i)$.\qqed
\end{defn}

\begin{rem}\label{rem:333333}
$^*$ We can generalize Definition \ref{defn:complete-graph-spanning-trees}, Definition \ref{defn:vertex-coinciding-two-spanning-trees}, Definition \ref{defn:spanning-tree-hyperedge-set} and Definition \ref{defn:three-vertex-coincided-intersected-graphs}, to connected graphs. Let $S_{pan}(G)$ be the set of all spanning trees of a connected graph $G$. There are the $G_{tree}$-spanning vertex-intersected graph, $G_{tree}$-spanning vertex-coincided graph and the $G_{tree}$-spanning added-edge-removed graph of the $G_{tree}$-hypergraph $\mathcal{H}_{yper}=(S_{pan}(G),\mathcal{E})$.

However, vertex-splitting a graph $G$ into edge-disjoint spanning trees is a NP-complete problem, since ``Counting trees in a graph is \#P-complete'' \cite{Jerrum-Mark-1994-information}.\qqed
\end{rem}

\textbf{$K_{tree}$-spanning lattice.} We select randomly spanning trees $T^c_1,T^c_2,\dots ,T^c_m$ from $S_{pan}(K_n)$ to form a spanning tree base $\textbf{\textrm{T}}^c=(T^c_1,T^c_2,\dots $, $T^c_m)$ with $T^c_i\not \subset T^c_j$ and $T^c_i\not \cong T^c_j$ if $i\neq j$ and have a permutation $J_1,J_2,\dots ,J_A$ of edge-disjoint spanning trees $a_1T^c_1,a_2T^c_2,\dots ,a_mT^c_m$, where $A=\sum ^m_{k=1}a_k\geq 1$. By Definition \ref{defn:vertex-split-coinciding-operations}, we do the vertex-coinciding operation ``$[\bullet ]$'' to two spanning trees $J_1$ and $J_2$ by vertex-coinciding a vertex $u$ of the spanning tree $J_1$ with a vertex $v$ of the spanning tree $J_2$ into one vertex $u\bullet v$ if these two vertices are colored the same color, and then get a connected graph $H_1=J_1[\bullet]J_2$, next we get another connected graph $H_2=H_1[\bullet]J_3$ by the same action for obtaining the connected graph $H_1=J_1[\bullet]J_2$; go on in this way, we get connected graphs $H_k=H_{k-1}[\bullet]J_{k+1}$ for $k\in [1,A]$ with $H_0=J_1$. We write $H_A=[\bullet]^m_{k=1}a_kT^c_k$, and call the following set
\begin{equation}\label{eqa:K-tree-spanning-lattice}
\textbf{\textrm{L}}(Z^0[\bullet]\textbf{\textrm{T}}^c)=\big \{[\bullet]^m_{k=1}a_kT^c_k:~a_k\in Z^0, T^c_k\in \textbf{\textrm{T}}^c \big \}
\end{equation} \emph{$K_{tree}$-spanning lattice} based on the spanning tree base $\textbf{\textrm{T}}^c \subseteq S_{pan}(K_n)$ which is the set of spanning trees of a complete graph $K_n$.

Thereby, each connected graph $G\in \textbf{\textrm{L}}(Z^0[\bullet]\textbf{\textrm{T}}^c)$ can be vertex-split into the edge-disjoint spanning trees $a_1T^c_1$, $a_2T^c_2$, $\dots $, $a_mT^c_m$ with $\sum ^m_{k=1}a_k\geq 1$, and forms a $G_{tree}$-hypergraph $\mathcal{H}_{yper}=(S_{pan}(G),\mathcal{E})$ for each hyperedge set $\mathcal{E}\in \mathcal{E}(S^2_{pan}(G))$ (Ref. Remark \ref{rem:hypergraph-terminology-notations} and Definition \ref{defn:hypergraph-basic-definition}).

\begin{problem}\label{problem:99999}
Let $H=\{G_1,G_2,\dots ,G_m\}$ be a set of connected graphs, where $G_1=K_{1,q}$, and each connected graph $G_i$ for $i\in [2,m]$ is not a tree and has $q=|E(G_i)|$ edges, and moreover $S_{plit}(G_i)$ for $i\in [2,m]$ is a tree set of $q$ edges obtained by vertex-splitting $G_i$ into trees of $q$ edges. We have a tree set $T_{ree}(q)=\bigcup^m_{i=1}S_{plit}(G_i)$, where $S_{plit}(G_1)=\{G_1=K_{1,q}\}$, such that $T_{ree}(q)$ contains all trees of $q$ edges. Then, we want to

(i) \textbf{How many} groups like the connected graph set $H$ are there?

(ii) \textbf{Determine} a smallest integer $m\geq 2$ about the connected graph set $H$.

(iii) By the \emph{$q$-tree base} $\textbf{\textrm{S}}_{plit}=(S_{plit}(G_1),S_{plit}(G_2),\dots ,S_{plit}(G_m))$, we, by the vertex-coinciding operation, get a \emph{vertex-coinciding tree-lattice} as follows
\begin{equation}\label{eqa:123}
\textbf{\textrm{L}}(Z^0[\bullet]\textbf{\textrm{S}}_{plit})=\Big \{[\bullet]^m_{k=1}a_kS_{plit}(G_k):~a_k\in Z^0, S_{plit}(G_k)\in \textbf{\textrm{S}}_{plit}\Big \},~\sum ^m_{k=1}a_k\geq 1
\end{equation} and an \emph{edge-joining tree-lattice}
\begin{equation}\label{eqa:44455}
\textbf{\textrm{L}}(Z^0[\ominus]\textbf{\textrm{S}}_{plit})=\Big \{[\ominus]^m_{k=1}b_kS_{plit}(G_k):~b_k\in Z^0, S_{plit}(G_k)\in \textbf{\textrm{S}}_{plit}\Big \},~\sum ^m_{k=1}b_k\geq 1
\end{equation}
\end{problem}

\subsubsection{$F_{orest}$-hypergraphs}

\begin{defn} \label{defn:produce-forests-spanning-trees}
$^*$ Removing the edges of one edge set $E_i=\{e_{i,1},e_{i,2}, \dots,e_{i,a_i}\}\subset E(T_i)$ from a spanning tree $T_i$ of a complete graph $K_n$, such that the resultant graph $F_{i}=T_i-E_i$ is just a forest having component trees $T_{i,1},T_{i,2}, \dots,T_{i,b_i}$ with $|V(T_{i,j})|\geq 2$ for $j\in [1,b_i]$. Notice that
$$
V(K_n)=V(T_i)=V(F_i)=\bigcup^{b_i}_{j=1}V(T_{i,j})
$$

Let $F_{orest}(T_i)=\{F_{j}:j\in[1,n_{orest}(T_i)]\}$ be the set of distinct forests produced from a spanning tree $T_i$ of $S_{pan}(K_n)$, where $n_{orest}(T_i)$ is the number of distinct forests produced by the spanning tree $T_i$. We, next, have the forest set $F_{orest}(K_n)=\{F_{orest}(T_i):T_i\in S_{pan}(K_n)\}$ of distinct forests produced from the spanning trees of $S_{pan}(K_n)$.\qqed
\end{defn}

\begin{defn} \label{defn:two-operations-forests}
$^*$ \textbf{Adding and removing edge set operation.} Since the spanning tree $T_i=F_{i}+E_i$ obtained by adding the edges of an edge set $E_i\subset E(K_n)$ in Definition \ref{defn:produce-forests-spanning-trees}, then we can add the edges of other edge set $E_j\subset E(K_n)$ to the forest $F_{i}$, such that the resultant graph $F_{i}+E_j$ is just a spanning tree $T_j\in S_{pan}(K_n)$, that is, $T_j=F_{i}+E_j$. We get graphs $T_i-E_i=F_{i}=T_j-E_j$, thus
\begin{equation}\label{eqa:add-eset-move-2-spanning-trees}
T_j=T_i-E_i+E_j,~E_i\subset E(T_i),~E_j\cap E(T_i)=\emptyset
\end{equation} denoted as $T_j=\pm_E[T_i]$.

\textbf{Forest vertex-coinciding operation.} We define an operation $[\bullet^{coin}_{forest}]$ between two forests $F_j,F_k\in F_{orest}(K_n)$ defined in Definition \ref{defn:produce-forests-spanning-trees} by vertex-coinciding a vertex $x_i$ of $F_j$ with a vertex $y_i$ of $F_k$ into one vertex $x_i\bullet y_i$ if $f_j(x_i)=f_k(y_i)=i$ for $i\in [1,n]$ (it is guaranteed by (ii) in Definition \ref{defn:complete-graph-spanning-trees}), and the resultant graph after removing multiple-edge is denoted as $F_j[\bullet^{coin}_{forest}]F_k$.\qqed
\end{defn}

\begin{thm}\label{thm:666666}
$^*$ A forest of a connected graph $G$ can be produced by many spanning trees of the spanning tree set $S_{pan}(G)$ by the removing edge operation defined in Definition \ref{defn:two-operations-forests}.
\end{thm}

\begin{thm}\label{thm:666666}
$^*$ Two forests $F_j,F_k\in F_{orest}(K_n)$ produce a forest $F_i\subset F_j[\bullet^{coin}_{forest}]F_k$ under the operation $[\bullet^{coin}_{forest}]$ on the forests of $F_{orest}(K_n)$ defined in Definition \ref{defn:produce-forests-spanning-trees} and Definition \ref{defn:two-operations-forests}.
\end{thm}

\begin{defn} \label{defn:coincided-intersected-graphs-by-hyperedge-sets}
$^*$ A hyperedge set $\mathcal{E}_i=\{e_{i,1},e_{i,2},\dots ,e_{i,i_a}\}$ is a set of subsets of the forest set $F_{orest}(K_n)$ defined in Definition \ref{defn:produce-forests-spanning-trees} holds $F_{orest}(K_n)=\bigcup_{e_{i,s}\in \mathcal{E}_i}e_{i,s}$, and has one of the following properties:
\begin{asparaenum}[\textbf{\textrm{Forest}}-1.]
\item \label{Spantree:11} Each subset $e_{i,s}\in \mathcal{E}_i$ corresponds some subset $e_{i,t}\in \mathcal{E}_i$, such that $e_{i,s}\cap e_{i,t}\neq \emptyset$.
\item \label{Spantree:22} A forest $F_{i,k}\in e_{i,k}$ is a proper subgraph of the graph $F_{i,s}[\bullet^{coin}_{forest}]F_{i,t}$ defined in Definition \ref{defn:two-operations-forests} for some forests $F_{i,s}\in e_{i,s}$ and $F_{i,t}\in e_{i,t}$.
\item \label{Spantree:33} Two spanning trees $T_{i,s}\in e_{i,s}$ and $T_{i,t}\in e_{i,t}$ hold $T_{i,t}=T_{i,s}+E^*_i-E_i$ for two edge sets $E^*_i\cap E(T_{i,s})=\emptyset$ and $E_i\subset E(T_{i,s})$, that is, $T_{i,t}=\pm_E[T_{i,t}]$ according to Definition \ref{defn:two-operations-forests}.\qqed
\end{asparaenum}
\end{defn}

Similarly with the graphs defined in Definition \ref{defn:three-vertex-coincided-intersected-graphs}, then Definition \ref{defn:coincided-intersected-graphs-by-hyperedge-sets} enables us to have the following graphs:

\begin{defn} \label{defn:111111}
$^*$ A graph $H$ admits a total coloring $g:V(H)\cup E(H)\rightarrow \mathcal{E}_i$, where the hyperedge set $\mathcal{E}_i\in \mathcal{E}\big (F^2_{orest}(K_n)\big )$.

(i) If each edge $uv\in E(H)$ holds $g(uv)\supseteq g(u)\cap g(v)\neq \emptyset$, also, holds \textbf{\textrm{Spantree}}-\ref{Spantree:11} in Definition \ref{defn:coincided-intersected-graphs-by-hyperedge-sets}, then $H$ is called \emph{$K$-forest vertex-intersected graph} of the $K_{tree}$-hypergraph $\mathcal{H}_{yper}=(F_{orest}(K_n),\mathcal{E}_i)$.

(ii) If each edge $uv\in E(H)$ holds $F_{i,s}[\bullet^{coin}_{forest}]F_{i,t}$ for some forests $F_{i,k}\in g(uv)$, $F_{i,s}\in g(u)$ and $F_{i,t}\in g(v)$ (also, \textbf{\textrm{Spantree}}-\ref{Spantree:22} in Definition \ref{defn:coincided-intersected-graphs-by-hyperedge-sets}), then $H$ is called \emph{$K$-forest vertex-coincided graph} of the $K_{tree}$-hypergraph $\mathcal{H}_{yper}=(F_{orest}(K_n),\mathcal{E}_i)$.

(iii) If each edge $uv\in E(H)$ holds $T_{i,k}=\pm_e[T_{i,s}]$ and $T_{i,t}=\pm_e[T_{i,k}]$ (also, \textbf{\textrm{Spantree}}-\ref{Spantree:33} in Definition \ref{defn:coincided-intersected-graphs-by-hyperedge-sets}) for some spanning trees $F_{i,k}\in g(uv)$, $F_{i,s}\in g(u)$ and $F_{i,t}\in g(v)$, then $H$ is called \emph{$K$-forest added-edgeset-removed graph} of the $K_{tree}$-hypergraph $\mathcal{H}_{yper}=(F_{orest}(K_n),\mathcal{E}_i)$.\qqed
\end{defn}

\begin{rem}\label{rem:333333}
A forest $S(k)=\{T_1,T_2,\dots, T_k\}$ with $k\in [1,n]$ has its vertex number $n=\sum^{k} _{i=1}|V(T_i)|$. Takacs, in \cite{L-Takacs-574-781-1990}, showed the number $F_{orest}(n)$ of distinct forests $S(k)$ with $k\in [1,n]$ to be
\begin{equation}\label{eqa:Forest-number-Takacs}
F_{orest}(n)=\frac{n!}{n+1}\sum^{\lfloor\frac{n}{2}\rfloor}_{k=0}(-1)^k\frac{(2k+1)(n+1)^{n-2k}}{2^k\cdot k!\cdot (n-2k)!}
\end{equation} and
\begin{equation}\label{eqa:555555}
F_{orest}(n)=H_n(n+1)-nH_{n-1}(n+1)
\end{equation} where $H_n(x)$ is the $n$-th Hermite polynomial such that
\begin{equation}\label{eqa:555555}
H_n(x)=n!\sum^{\lfloor\frac{n}{2}\rfloor}_{k=0}\frac{(-1)^k x^{n-2k}}{2^k\cdot k!\cdot (n-2k)!}=\frac{1}{\sqrt{2\pi}}\int ^{+\infty}_{-\infty}e^{-\frac{u^2}{2}}(x-iu)^n\textrm{d}u
\end{equation}

As $k=0$ in the formula (\ref{eqa:Forest-number-Takacs}), we get the famous Caley's formula $F_{orest}(n)=\tau(K_{n+1})=(n+1)^{n-1}$. As $k=1$ in the formula (\ref{eqa:Forest-number-Takacs}), we have \begin{equation}\label{eqa:555555}
F_{orest}(n)=(n+1)^{n-1}+\frac{n!}{n+1}\frac{(-1)3(n+1)^{n-2}}{2(n-2)!}=(n+1)^{n-1}-\frac{3n(n-1)(n+1)^{n-3}}{2}
\end{equation}

However, partitioning an positive integer $n$ as a sum $n=\sum^{k} _{i=1}|V(T_i)|$ with $|V(T_i)|\geq 2$ for $i\in [1,k]$ is the Integer Partition Problem which is a difficult problem since no polynomial algorithm is for solving it. \qqed
\end{rem}

\subsubsection{$E_{hami}$-hypergraphs}

Let $G$ be an edge-hamiltonian $(p,q)$-graph (Ref. Problem \ref{qeu:characterize-edge-hamiltonian}), and $f$ be a vertex coloring from $V(G)$ to $[1,p]$, such that $f(V(G))=[1,p]$. We have a Hamilton-cycle set $E_{hami}(G)$, which contains all Hamilton-cycles of the edge-hamiltonian $(p,q)$-graph $G$. Each hyperedge set $\mathcal{E}\in \mathcal{E}(E^2_{hami}(G))$ with $\bigcup_{e\in \mathcal{E}}e=E_{hami}(G)$ forms a hypergraph $\mathcal{H}_{yper}=(E_{hami}(G),\mathcal{E})$. We define the following operations:
\begin{asparaenum}[\textbf{Ehami}-1]
\item Each subset $e_i\in \mathcal{E}$ corresponds another subset $e\,'_i\in \mathcal{E}$, such that $e_i\cap e\,'_i=\{H_{i,1},H_{i,2}$, $\dots $, $H_{i,a_i}\}$ with $a_i\geq 1$, where each $H_{i,j}$ for $j\in [1,a_i]$ is a Hamilton-cycle of the edge-hamiltonian $(p,q)$-graph $G$.
\item A Hamilton-cycle $H_{i}\in e_i$ corresponds another Hamilton-cycle $H_{j}\in e_j$ holding $H_{k}\subset H_{i}\cup H_{j}$ for some Hamilton-cycle $H_{k}\in e_k\in \mathcal{E}$.
\item A Hamilton-cycle $H_{i}\in e_i$ corresponds another Hamilton-cycle $H_{j}\in e_j$ holding
$$H_{i}-\{x_1y_1,x_2y_2\}\cong H_{j}-\{u_1v_1,u_2v_2\}
$$ for $x_1y_1,x_2y_2\in E(H_{i})$, and $u_1v_1,u_2v_2\in E(H_{j})$.
\end{asparaenum}

Thereby, the hypergraph $\mathcal{H}_{yper}=(E_{hami}(G),\mathcal{E})$ has its own $\Gamma$-operation graph $H$ admitting a total set-coloring $F:V(H)\cup E(H)\rightarrow \mathcal{E}\in \mathcal{E}(E^2_{hami}(G))$, such that edge color $F(\alpha\beta)$ for each edge $\alpha\beta\in E(H)$, vertex color $F(\alpha)$ and vertex color $F(\beta)$ hold one operation \textbf{Ehami}-$t$ with $t\in [1,3]$.

\subsubsection{$W$-constraint hyperedge sets}

For constructing $W$-constraint hyperedge sets, we show an example first as follows:

\begin{example}\label{exa:8888888888}
Suppose that a bipartite $(p,q)$-graph $H$ admits a total coloring $f:V(H)\cup E(H)\rightarrow [0,q]$ subject to a constraint set $R_{est}(c_1,c_2,c_3,c_4)$. Since $V(H)=X\cup Y$ and $X\cap Y=\emptyset $, the coloring $f$ holds the constraints of the constraint set $R_{est}(c_1,c_2,c_3,c_4)$ as follows:

$c_1:$ The \emph{labeling constraint} $f(u)\neq f(w)$ for distinct vertices $u,w\in V(H)$, and $|f(V(H))|=p$;

$c_2:$ the \emph{set-ordered constraint} $\max f(X)<\min f(X)$;

$c_3:$ the \emph{$W$-constraint} $W[f(x),f(xy),f(y)]=\big [f(y)-f(x)\big ]-f(xy)=0$ for each edge $xy\in E(H)$ with $x\in X$ and $y\in Y$; and

$c_4:$ the edge color set holds the graceful constraint
$$
f(E(H))=[1,q]=\{f(xy)=f(y)-f(x):x\in X,y\in Y,xy\in E(H)\}
$$ true.

Also, the coloring $f$ is called \emph{set-ordered graceful labeling}. \qqed
\end{example}

The following Definition \ref{defn:odd-even-separable-6C-labeling} shows us a constraint set $R_{est}(c_1,c_2,\dots, c_m)$ with $m\geq 7$ as follows:

\begin{defn}\label{defn:odd-even-separable-6C-labeling}
\cite{Yao-Sun-Zhang-Mu-Sun-Wang-Su-Zhang-Yang-Yang-2018arXiv} A total labeling $f:V(G)\cup E(G)\rightarrow [1,p+q]$ for a bipartite $(p,q)$-graph $G$ is a bijection and holds the following constraints:

(i) (e-magic) $f(uv)+|f(u)-f(v)|=k$;

(ii) (ee-difference) each edge $uv$ matches with another edge $xy$ holding one of $f(uv)=|f(x)-f(y)|$ and $f(uv)=2(p+q)-|f(x)-f(y)|$ true;

(iii) (ee-balanced) let $s(uv)=|f(u)-f(v)|-f(uv)$ for $uv\in E(G)$, then there exists a constant $k\,'$ such that each edge $uv$ matches with another edge $u\,'v\,'$ holding one of $s(uv)+s(u\,'v\,')=k\,'$ and $2(p+q)+s(uv)+s(u\,'v\,')=k\,'$ true;

(iv) (EV-ordered) $\min f(V(G))>\max f(E(G))$ (resp. $\max f(V(G))<\min f(E(G))$, or $f(V(G))$ $\subseteq f(E(G))$, or $f(E(G))$ $\subseteq f(V(G))$, or $f(V(G))$ is an odd-set and $f(E(G))$ is an even-set);

(v) (ve-matching) there exists a constant $k\,''$ such that each edge $uv$ matches with one vertex $w$ such that $f(uv)+f(w)=k\,''$, and each vertex $z$ matches with one edge $xy$ such that $f(z)+f(xy)=k\,''$, except the \emph{singularity} $f(x_0)=\lfloor \frac{p+q+1}{2}\rfloor $;

(vi) (set-ordered constraint) $\max f(X)<\min f(Y)$ (resp. $\min f(X)>\max f(Y)$) for the bipartition $(X,Y)$ of $V(G)$.

(vii) (odd-even separable) $f(V(G))$ is an odd-set containing only odd numbers, as well as $f(E(G))$ is an even-set containing only even numbers.

We call $f$ \emph{odd-even separable 6C-labeling}.\qqed
\end{defn}

Suppose that a bipartite $(p,q)$-graph $H$ admits a $W$-constraint total coloring $f:V(H)\cup E(H)\rightarrow [a,b]$, such that the total color set $f(V(H)\cup E(H))=[a,b]$. Since $V(H)=X\cup Y$, then we have a hyperedge set $\mathcal{E}^*=\{f(X), f(E(H)),f(Y)\}$ with
\begin{equation}\label{eqa:makk-hyperedge-sets}
[a,b]=\bigcup_{e\in \mathcal{E}^*}e=f(X)\bigcup f(E(H))\bigcup f(Y)
\end{equation} such that each $f(x_i)\in f(X)$ corresponds to some $f(e_i)\in f(E(H))$ and $f(y_i)\in f(Y)$ holding a $W$-constraint $W[f(x_i),f(e_i),f(y_i)]=0$ and some other constraints of a constraint set $R_{est}(c_1,c_2,\dots, c_m)$.

\begin{defn} \label{defn:set-orderedw-hyperedge-set}
$^*$ A \emph{set-ordered $W$-constraint hyperedge set}:
\begin{equation}\label{eqa:555555}
\mathcal{E}_i=\big \{e^x_{i,j}:j\in [1,a_x]\big\}\bigcup \big \{e^E_{i,j}:j\in [1,b_E]\big\}\bigcup \big \{e^y_{i,j}:j\in [1,c_y]\big\}
\end{equation} holds:
\begin{asparaenum}[\textbf{\textrm{Sochs}}-1.]
\item \textbf{(Hyperedge set)} $\mathcal{E}_i\in \mathcal{E}\big([a,b]^2\big)$ and $[a,b]=\bigcup_{e\in \mathcal{E}_i}e$;
\item \textbf{(Set-ordered constraint)} $\max \big \{\max e^x_{i,j}:j\in [1,a_x]\big \}<\min \big \{\min e^y_{i,j}:j\in [1,c_y]\big \}$;
\item \textbf{($W$-constraint)} each $\gamma \in e^E_{i,k}$ with $k\in [1,b_E]$ corresponds $\alpha\in e^x_{i,s}$ for some $s\in [1,a_x]$ and $\beta\in e^y_{i,t}$ for some $t\in [1,c_y]$ holding the $W$-constraint $W[\alpha,\gamma,\beta]=0$.
\end{asparaenum}

And moreover, we say a set-ordered $W$-constraint hyperedge set $\mathcal{E}_i$ to be \emph{\textbf{full}} if

(i) each $\alpha\in e^x_{i,s}$ for $s\in [1,a_x]$ corresponds $\gamma \in e^E_{i,k}$ for some $k\in [1,b_E]$ and $\beta\in e^y_{i,t}$ for some $t\in [1,c_y]$ holding the $W$-constraint $W[\alpha,\gamma,\beta]=0$; and

(ii) each $\beta\in e^y_{i,t}$ with $t\in [1,c_y]$ corresponds $\alpha\in e^x_{i,s}$ for some $s\in [1,a_x]$ and $\gamma \in e^E_{i,k}$ for some $k\in [1,b_E]$ holding the $W$-constraint $W[\alpha,\gamma,\beta]=0$.\qqed
\end{defn}

\begin{example}\label{exa:8888888888}
In the hyperedge set $\mathcal{E}^*=\{f(X), f(E(H)),f(Y)\}$, three color sets $f(X)$, $f(E(H))$ and $f(Y)$ defined in Eq.(\ref{eqa:makk-hyperedge-sets}) can form small color subsets, so there are many hyperedge sets $\mathcal{E}\in \mathcal{E}([a,b]^2)$, like the above hyperedge set $\mathcal{E}^*=\{f(X), f(E(H))$, $f(Y)\}$.

In Fig.\ref{fig:W-constraint-hypergraphs}, the bipartite Hanzi-graph $H_1$ admits a total set-ordered graceful labeling $h:V(H_1)\cup E(H_1)\rightarrow [0,9]$, such that the edge color set $h(V(H_1)\cup E(H_1))=[0,9]$.

(i) Notice that $V(H_1)=X\cup Y$, we have a hyperedge set $\mathcal{E}_1=\{e_{1,1},e_{1,2},e_{1,3}\}$, where $h(X)=e_{1,1}=\{0,2,3,4\}$, $h(Y)=e_{1,2}=\{5,7,8,9\}$ and $h(E(H_1))=e_{1,3}=[1,9]$, such that each $\alpha\in e_{1,1}$ corresponds $\beta\in e_{1,2}$ and $\gamma\in e_{1,3}$ holding the set-ordered graceful-constraint
\begin{equation}\label{eqa:set-ordered-graceful-constraints}
\{\gamma=\beta-\alpha:\alpha\in e_{1,1},\beta\in e_{1,2},\gamma\in e_{1,3}\}=[1,9]
\end{equation} and vice versa. So, $\mathcal{E}_1$ is full, and moreover, $[0,9]=\bigcup_{e_{1,i}\in \mathcal{E}_1}e_{1,i}=\bigcup ^3_{i=1}e_{1,i}$.

(ii) The second hyperedge set is $\mathcal{E}_2=\{e_{2,1},e_{2,2},e_{2,3}, e_{2,4},e_{2,5}\}$, where $e_{2,1}=\{0,2\}$, $e_{2,2}=\{3,4\}$, $e_{2,3}=\{5,7\}$, $e_{2,4}=\{8,9\}$ and $e_{2,5}=[1,9]$. Clearly, the hyperedge set $\mathcal{E}_2$ is full and holds the set-ordered graceful-constraint like that shown in Eq.(\ref{eqa:set-ordered-graceful-constraints}), as well as $[0,9]=\bigcup_{e_{2,i}\in \mathcal{E}_2}e_{2,i}=\bigcup ^5_{i=1}e_{2,i}$.

(iii) The third hyperedge set $\mathcal{E}_3=\{e_{3,1},e_{3,2},e_{3,3}, e_{3,4},e_{3,5}\}$ (see $H_2$ shown in Fig.\ref{fig:W-constraint-hypergraphs}), where $e_{3,1}=\{0,2\}$, $e_{3,2}=\{0,3\}$, $e_{3,3}=\{0,4\}$, $e_{3,4}=\{5,7\}$, $e_{3,5}=\{5,8\}$, $e_{3,6}=\{5,9\}$, $e_{3,7}=[1,5]$ and $e_{3,8}=[6,9]$. It is not hard to verify that the hyperedge set $\mathcal{E}_3$ is full and holds the set-ordered graceful-constraint like that shown in Eq.(\ref{eqa:set-ordered-graceful-constraints}), as well as $[0,9]=\bigcup_{e_{3,i}\in \mathcal{E}_3}e_{3,i}=\bigcup ^8_{i=1}e_{3,i}$.

The above three hyperedge sets $\mathcal{E}_1,\mathcal{E}_2,\mathcal{E}_3$ are the full set-ordered $W$-constraint hyperedge sets according to Definition \ref{defn:set-orderedw-hyperedge-set}. \qqed
\end{example}

\begin{figure}[h]
\centering
\includegraphics[width=16.4cm]{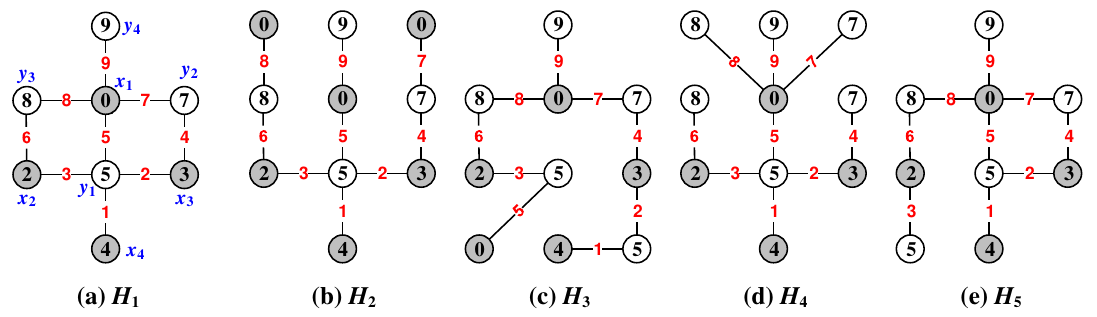}\\
\caption{\label{fig:W-constraint-hypergraphs}{\small A scheme for illustrating the $W$-constraint hyperedge sets.}}
\end{figure}

\begin{defn} \label{defn:hyperedge-set-generator}
$^*$ Suppose that a bipartite graph $G$ admits a set-ordered $W$-constraint total coloring $f:V(G)\cup E(G)\rightarrow [a,b]$, and $V(G)=X\cup Y$ holding the set-ordered constraint $\max f(X)<\min f(Y)$. If a set-ordered $W$-constraint hyperedge set $\mathcal{E}_i$ defined in Definition \ref{defn:set-orderedw-hyperedge-set} holds $f(X)=\big \{e^x_{i,j}:j\in [1,a_x]\big \}$, $f(E)=\big \{e^E_{i,j}:j\in [1,b_E]\big \}$ and $f(Y)=\big \{e^y_{i,j}:j\in [1,c_y]\big \}$, then the bipartite graph $G$ is called \emph{topological generator} of the set-ordered $W$-constraint hyperedge set $\mathcal{E}_i$, we use a symbol $G_{ener}(G,\mathcal{E})$ to denote the set containing all set-ordered $W$-constraint hyperedge sets generated topologically by the bipartite graph $G$.\qqed
\end{defn}

\begin{thm}\label{thm:666666}
$^*$ If a bipartite graph $G$ admits a set-ordered $W$-constraint total coloring $f:V(G)\cup E(G)\rightarrow [a,b]$, or a $(k,d)$-$W$-constraint total coloring defined in Definition \ref{defn:kd-w-type-colorings}, then there exists a topological generator $G_{ener}(G,\mathcal{E})$ defined in Definition \ref{defn:hyperedge-set-generator}.
\end{thm}

\begin{defn} \label{defn:w-constraint-hypergraphs}
$^*$ Let $[a,b]$ be a consecutive integer set. For a hyperedge set $\mathcal{E}\in \mathcal{E}([a,b]^2)$, if each subset $e\in \mathcal{E}$ corresponds to two subsets $e\,'\in \mathcal{E}$ and $e\,''\in \mathcal{E}$, such that there are numbers $\alpha\in e$, $\beta\in e\,'$ and $\gamma\in e\,''$ holding the $W$-constraint $W[\alpha,\beta,\gamma]=0$ and some other constraints of a constraint set $R_{est}(c_1,c_2,\dots, c_m)$. We call the hypergraph $\mathcal{H}_{yper}=([a,b],\mathcal{E})$ \emph{$W$-constraint hypergraph}, and the hyperedge set $\mathcal{E}$ to be \emph{$W$-constraint hyperedge set}.

Moreover, if each number $\alpha\in e\in \mathcal{E}$ corresponds to two numbers $\beta\in e\,'\in \mathcal{E}$ and $\gamma\in e\,''\in \mathcal{E}$ holding the $W$-constraint $W[\alpha,\beta,\gamma]=0$ and some other constraints of the constraint set $R_{est}(c_1,c_2,\dots, c_m)$, then we say the $W$-constraint hyperedge set $\mathcal{E}$ to be \emph{full}, and the hypergraph $\mathcal{H}_{yper}=([a,b],\mathcal{E})$ to be \emph{full $W$-constraint hypergraph}.\qqed
\end{defn}

\begin{problem}\label{question:444444}
$^*$ \textbf{(A)} By Definition \ref{defn:set-orderedw-hyperedge-set} and Definition \ref{defn:w-constraint-hypergraphs}, \textbf{make} $W$-constraint hyperedge sets based on a consecutive integer set $[a,b]$. For example, three numbers $\alpha\in e\in \mathcal{E}$, $\beta\in e\,'\in \mathcal{E}$ and $\gamma\in e\,''\in \mathcal{E}$ hold the following \emph{magic-constraints}:

(A-i) The \emph{edge-magic constraint} $\alpha+\gamma+\beta=k$ for a non-negative constant $k$.

(A-ii) The \emph{edge-difference constraint} $\gamma+|\alpha-\beta|=k$ for a non-negative constant $k$.

(A-iii) The \emph{graceful-difference constraint} $\big||\alpha-\beta|-\gamma \big|=k$ for a non-negative constant $k$.

(A-iv) The \emph{felicitous-difference constraint} $\big|\alpha+\beta-\gamma \big|=k$ for a non-negative constant $k$.

\textbf{(B)} \textbf{Determine} the topological generator $G_{ener}(G,\mathcal{E})$ for a bipartite graph $G$ shown in Definition \ref{defn:hyperedge-set-generator} and one of the edge-magic constraint, the edge-difference constraint, the graceful-difference constraint and the felicitous-difference constraint.
\end{problem}

\begin{defn} \label{defn:111111}
$^*$ \textbf{Edge-$W$-constraint graphs of $W$-constraint hypergraphs.} For a $W$-constraint hypergraph $\mathcal{H}_{yper}=([a,b],\mathcal{E})$, a graph $G$ admits a \emph{total hyperedge-set coloring} $F:V(G)\cup E(G)\rightarrow \mathcal{E}$, such that there are $c\in F(uv)$, $a\in F(u)$ and $b\in F(v)$ holding the $W$-constraint $W[a,c,b]=0$ and some other constraints of a constraint set $R_{est}(c_1,c_2,\dots, c_m)$.

Conversely, three numbers $\alpha\in e\in \mathcal{E}$, $\beta\in e\,'\in \mathcal{E}$ and $\gamma\in e\,''\in \mathcal{E}$ holding the $W$-constraint $W[\alpha,\beta,\gamma]=0$ and some other constraints of the constraint set $R_{est}(c_1,c_2,\dots, c_m)$ correspond always an edge $xy\in E(G)$ such that $\gamma\in F(xy)$, $\alpha\in F(x)$ and $\beta\in F(y)$, then we call the graph $G$ to be \emph{edge-$W$-constraint graph} of the $W$-constraint hypergraph $\mathcal{H}_{yper}=([a,b],\mathcal{E})$.\qqed
\end{defn}

Notice that some hyperedge-set colorings have been defined in Definition \ref{defn:distinguishing-hyperedge-set-colorings}.

\subsubsection{$D_{nei}$-hypergraphs}

\begin{defn} \label{defn:dist-nei-hypergraphs}
$^*$ Suppose that a $(p,q)$-graph $G$ is $G$ admits a total coloring $f:V(G)\cup E(G)\rightarrow S_{thing}$ holding $f(V(G)\cup E(G))=S_{thing}$, such that $f(u)$, $f(uv)$ and $f(v)$ for each edge $uv\in E(G)$ satisfy the constraint set $R_{est}(c_1,c_2,\dots ,c_m)$. There are four neighbor sets of a vertex $u$ of the graph $G$ as follows:
\begin{equation}\label{eqa:555555}
{
\begin{split}
C_v(u)&=\{f(v):v\in N_{ei}(u)\},~C_v[u]=C_v(u)\cup \{f(u)\}\\
C_e(u)&=\{f(uv):v\in N_{ei}(u)\},~C_e[u]=C_e(u)\cup \{f(u)\}
\end{split}}
\end{equation}
We have a particular hyperedge set $\mathcal{E}_{ei}\in \mathcal{E}\big (S^2_{thing}\big )$ as follows
\begin{equation}\label{eqa:555555}
\mathcal{E}_{ei}=\{C_v(u),C_v[u],C_e(u),C_e[u]:u\in V(G)\}
\end{equation} Clearly, $f(V(G)\cup E(G))=S_{thing}=\bigcup_{e\in \mathcal{E}_{ei}}e$. Then the graph $G$ admits a total hyperedge-set coloring $F:V(G)\rightarrow \mathcal{E}_{ei}$ holding one of the following neighbor cases
\begin{asparaenum}[\textbf{\textrm{Nei}}-1.]
\item (v-neighbor) $F(u)=C_v[u]$ for each vertex $u\in V(G)$
\item (e-neighbor) $F(u)=C_e(u)$ for each vertex $u\in V(G)$
\item (ve-neighbor) $F(u)=C_e(u)\cup C_v(u)$ for each vertex $u\in V(G)$
\item (all-neighbor) $F(u)=C_e(u)\cup C_v(u)\cup \{f(u)\}$ for each vertex $u\in V(G)$
\end{asparaenum}
and obey the constraint set $R^*_{est}(a_1,a_2,\dots ,a_n)$, then we say that $G$ is the \emph{neighbor-set graph} of the \emph{$D_{nei}$-hypergraph} $H_{yper}=(S_{thing},\mathcal{E}_{ei})$.\qqed
\end{defn}

\begin{thm}\label{thm:666666}
$^*$ Each $W$-constraint coloring $f$ of a graph $G$ corresponds a $D_{nei}$-hypergraph $H_{yper}=(\bigwedge_f,\mathcal{E}_{f})$ with its own vertex set $\bigwedge_f=f(S)$ for $S\subseteq V(G)\cup E(G)$ and hyperedge set $\mathcal{E}_{f}\in \mathcal{E}\big (\bigwedge^2_f\big )$, such that each subset $e\in \mathcal{E}_{f}$ corresponds another subset $e\,'\in \mathcal{E}_{f}$ holding $e\cap e\,'\neq\emptyset$.
\end{thm}

\begin{example}\label{exa:8888888888}
Suppose that the $(p,q)$-graph $G$ is a graph appeared in Definition \ref{defn:dist-nei-hypergraphs}, we present the following examples for constructing $D_{nei}$-hypergraphs:
\begin{asparaenum}[\textbf{\textrm{Dnei}}-1.]
\item $S_{thing}=[1,M]$ is a consecutive integer set. About the constraint set $R_{est}(c_1,c_2$, $\dots $, $c_m)$, there are:

\qquad $c_1:f(u)\neq f(v)$ for each edge $uv\in E(G)$;

\qquad $c_2:f(uv)\neq f(uw)$ for $v,w\in N_{ei}(u)$ and $u\in V(G)$.

\qquad About the constraint set $R^*_{est}(a_1,a_2,\dots ,a_n)$, there are:

\qquad $a_1:F(x)\supset C_v(x)\neq C_v(y)\subset F(y)$ for each edge $xy\in E(G)$ with degrees $\textrm{deg}_G(u)\geq 2$ and $\textrm{deg}_G(v)\geq 2$;

\qquad $a_2:F(x)\supset C_e(x)\neq C_e(y)\subset F(y)$ for each edge $xy\in E(G)$ with degrees $\textrm{deg}_G(u)\geq 2$ and $\textrm{deg}_G(v)\geq 2$.

\qquad Hence, the graph $G$ is called \emph{adjacent ve-neighbor distinguishing graph} of the \emph{$D_{nei}$-hypergraph} $H_{yper}=([1,M]$, $\mathcal{E}_{ei})$ based on the hyperedge-set coloring $F$ defined in Definition \ref{defn:dist-nei-hypergraphs}. As $M=\chi\,''(G)$, it is not easy to determine the total coloring $f$ in Definition \ref{defn:dist-nei-hypergraphs}.

\item $S_{thing}=[1,p+q]$ is a consecutive integer set, and $G$ is a $(p,q)$-graph. About the constraint set $R_{est}(c_1,c_2,\dots ,c_m)$, there are:

\qquad $c_1:f(u)\neq f(x)$ for distinct vertices $u,x\in V(G)$;

\qquad $c_1:f(uv)\neq f(xy)$ for distinct edges $uv,xy\in E(G)$;

\qquad $c_3:f(u)+f(uv)+f(v)=k$ for each edge $uv\in E(G)$.

\qquad About the constraint set $R^*_{est}(a_1,a_2,\dots ,a_n)$, there are:

\qquad $a_1:F(u)\cap F(v)\neq \emptyset$ for each edge $uv\in E(G)$.

\qquad $a_2:C_e(u)=F(u)\neq F(v)=C_e(v)$ for each edge $uv\in E(G)$ with degrees $\textrm{deg}_G(u)\geq 2$ and $\textrm{deg}_G(v)\geq 2$.

\qquad So, the $(p,q)$-graph $G$ is called the \emph{adjacent e-neighbor distinguishing edge-magic graph} of the \emph{$D_{nei}$-hypergraph} $H_{yper}=([1,p+q]$, $\mathcal{E}_{ei})$ based on the hyperedge-set coloring $F$ defined in Definition \ref{defn:dist-nei-hypergraphs}.

\item $S_{thing}=[0,q]$ is a consecutive integer set, and $G$ is a $(p,q)$-graph. About the constraint set $R_{est}(c_1,c_2,\dots ,c_m)$, there are:

\qquad $c_1:f(u)\neq f(x)$ for distinct vertices $u,x\in V(G)$;

\qquad $c_2:f(uv)\neq f(xy)$ for distinct edges $uv,xy\in E(G)$;

\qquad $c_3:f(uv)=|f(u)-f(v)|$ for each edge $uv\in E(G)$;

\qquad $c_4:f(E(G))=\{f(uv):uv\in E(G)\}=[1,q]$.

\qquad About the constraint set $R^*_{est}(a_1,a_2,\dots ,a_n)$, there are:

\qquad $a_1:F(u)\cap F(v)\neq \emptyset$ for each edge $uv\in E(G)$.

\qquad $a_2:C_e(u)\cup C_v(u)\cup \{f(u)\}=F(u)\neq F(v)=C_e(v)\cup C_v(v)\cup \{f(v)\}$ for each edge $uv\in E(G)$ with degrees $\textrm{deg}_G(u)\geq 2$ and $\textrm{deg}_G(v)\geq 2$.

\qquad Thereby, we say the $(p,q)$-graph $G$ to be the \emph{adjacent all-neighbor distinguishing graceful graph} of the \emph{$D_{nei}$-hypergraph} $H_{yper}=([0,q]$, $\mathcal{E}_{ei})$ based on the hyperedge-set coloring $F$ defined in Definition \ref{defn:dist-nei-hypergraphs}.

\item $S_{thing}=[0,q]$ is a consecutive integer set, and $G$ is a bipartite $(p,q)$-graph with $V(G)=X\cup Y$ and $X\cup Y=\emptyset$. About the constraint set $R_{est}(c_1,c_2,\dots ,c_m)$, there are:

\qquad $c_1:u\in X$ and $v\in Y$ for each edge $uv\in E(G)$;

\qquad $c_2:f(u)\neq f(x)$ for distinct vertices $u,x\in V(G)$;

\qquad $c_3:f(uv)\neq f(xy)$ for distinct edges $uv,xy\in E(G)$;

\qquad $c_4:f(uv)=|f(u)-f(v)|$ for each edge $uv\in E(G)$;

\qquad $c_5:f(E(G))=\{f(uv):uv\in E(G)\}=[1,q]$;

\qquad $c_6:$ the set-ordered constraint $\max f(X)<\min f(Y)$ holds true.

\qquad About the constraint set $R^*_{est}(a_1,a_2,\dots ,a_n)$, there are:

\qquad $a_1:F(u)\cap F(v)\neq \emptyset$ for each edge $uv\in E(G)$.

\qquad $a_2:C_e(u)\cup C_v(u)\cup \{f(u)\}=F(u)\neq F(v)=C_e(v)\cup C_v(v)\cup \{f(v)\}$ for each edge $uv\in E(G)$ with degrees $\textrm{deg}_G(u)\geq 2$ and $\textrm{deg}_G(v)\geq 2$.

\qquad Thereby, the bipartite $(p,q)$-graph $G$ is called the \emph{adjacent all-neighbor distinguishing set-ordered graceful graph} of the \emph{$D_{nei}$-hypergraph} $H_{yper}=([0,q]$, $\mathcal{E}_{ei})$ based on the hyperedge-set coloring $F$ defined in Definition \ref{defn:dist-nei-hypergraphs}.\qqed
\end{asparaenum}
\end{example}

\begin{cor}\label{cor:666666}
$^*$ Since each tree $T$ of $q$ edges admits a set-ordered gracefully total coloring $f$ by Theorem \ref{thm:tree-graceful-total-coloringss}, then the tree $T$ induces a $D_{nei}$-hypergraph $H_{yper}=([0,q],\mathcal{E}_{ei})$ by Definition \ref{defn:dist-nei-hypergraphs}.
\end{cor}

Since every tree $T$ with diameter $D(T)\geq 3$ and $s+1=\left \lceil \frac{D(T)}{2}\right \rceil $ admits at least $2^{s}$ different \emph{gracefully total sequence colorings} if two sequences $A_M, B_q$ holding $0<b_j-a_i\in B_q$ for $a_i\in A_M$ and $b_j\in B_q$ according to Theorem \ref{thm:graceful-total-sequence-coloring}, by Definition \ref{defn:dist-nei-hypergraphs}, we have:

\begin{cor}\label{cor:666666}
$^*$ Each tree $T$ with diameter $D(T)\geq 3$ and $s+1=\left \lceil \frac{D(T)}{2}\right \rceil $ induces at least $2^{s}$ different $D_{nei}$-hypergraph $H^i_{yper}=(A_M\cup B_q,\mathcal{E}^i_{ei})$ for $i\in [1,2^{s}]$.
\end{cor}

\begin{quote}
\textbf{Hypergraph-string problem}: For a given $[0,9]$-string $s=c_1c_2\cdots c_n$ with $c_i\in [0,9]=\{0,1,2,\dots ,9\}$, \textbf{find} a $(p,q)$-graph $G$ admitting a $W$-constraint set-coloring $f$, such that the $W$-constraint set-coloring $f$ induces a $D_{nei}$-hypergraph $H_{yper}=(\bigwedge_f,\mathcal{E}_{ei})$, and the graph $G$ admits a set-coloring $F:S\rightarrow \mathcal{E}_{ei}$ with $S\subseteq V(G)\cup E(G)$ and $f(S)=\bigwedge_f$, then the Topcode-matrix $T_{code}(G,F)$ produces just the given number-based string $s$, we call the number-based string $s$ \emph{hypergraph-string}.
\end{quote}

About the complexity of the Hypergraph-string problem, we point:

\begin{asparaenum}[\textbf{\textrm{NPs}}-1.]
\item \textbf{Finding} the $(p,q)$-graph $G$ will meet Subgraph Isomorphic NP-complete problem.
\item \textbf{Finding} the $W$-constraint set-coloring $f$ for the $(p,q)$-graph $G$ having its own Topcode-matrix $T_{code}(G,F)$ based on the hyperedge-set coloring $F$ defined in Definition \ref{defn:dist-nei-hypergraphs} is sharp-P-hard.
\item \textbf{Partitioning} the given number-based string $s$ to $(3q)!$ segments, which can be produced from the Topcode-matrix $T_{code}(G,F)$, is a work having no polynomial, even a NP-type problem.
\end{asparaenum}

Our goal is: The above Hypergraph-string problem can resist attacks equipped AI technology and quantum computing.

\subsubsection{Hyper-hypergraphs}

\begin{defn} \label{defn:hyper-hypergraphs}
$^*$ By Definition \ref{defn:hypergraph-basic-definition} and a finite set $\Lambda=\{x_1,x_2,\dots ,x_n\}$, we defined a \emph{hyper-hypergraph} as follows: Let $\Phi=\mathcal{E}\big (\Lambda^2\big )=\{\mathcal{E}_i:~i\in [1,n(\Lambda)]\}$ with $n(\Lambda)=|\mathcal{E}\big (\Lambda^2\big )|$, each hyper-hyperedge set $P_i\in \mathcal{E}(\Phi^2)$ is a hyperedge-set set $P_i=\{S_{i,j}:j\in [1,a_i]\}$ with
$$
S_{i,j}=\big \{\mathcal{E}_{i,j,s}\in \mathcal{E}\big (\Lambda^2\big ):s\in [1,b_{i,j}],~j\in [1,a_i]\big \}
$$ and $\Phi=\bigcup _{S_{i,j}\in P_i}S_{i,j}$. We get hypergraphs $H^i_{yper}=(\Phi,P_i)$ with $i\in [1,n(\Phi)]$ and $n(\Phi)=|\mathcal{E}\big (\Phi^2\big )|$, called \emph{hyper-hypergraph}, or written as \emph{hyper$(2)$hypergraph}.\qqed
\end{defn}

By Definition \ref{defn:hyper-hypergraphs}, we have hyper$(n)$hypergraphs for $n\geq 1$.

\section{Overall Topological Encryption Of Networks}

\subsection{Topological groups}

In \cite{Yao-Zhao-Mu-Sun-Zhang-Zhang-Yang-IAEAC2019, Yao-Mu-Sun-Zhang-Wang-Su-Ma-IAEAC-2018, Yao-Sun-Zhao-Li-Yan-2017, Sun-Zhang-Zhao-Yao-2017}, the authors have investigated new-type groups (Abelian additive finite group), called \emph{every-zero graphic groups}. For network overall topological encryption, we will define new topological groups of topology code theory, such as every-zero graphic group, every-zero Topcode-matrix group, every-zero parameterized Topcode-matrix group, every-zero adjacent-matrix group, every-zero topological string group, every-zero topological encoding graph set group, every-zero mixed-graphic group, every-zero graphic groups based on hypergraph, every-zero hypergraph group and pan-group \emph{etc.}

\subsubsection{Graphic groups and their homomorphisms}

\begin{defn} \label{defn:topo-code-graph-groups}
\cite{Yao-Su-Ma-Wang-Yang-arXiv-2202-03993v1, Wang-Su-Yao-submitted-ITOEC2020} \textbf{Graphic group.} Suppose that a $(p,q)$-graph $G$ admits a total $W$-constraint coloring $f: V(G)\cup E(G)\rightarrow [a, b]$, and $M=\max\{f(w): w\in V(G)\cup E(G)\}$. Then we have a graph set $F_f(G)=\{G_1, G_2, \dots , G_M\}$, each graph $G_i$ admits a total $W$-constraint coloring $f_i$ defined by $f_i(w)=f(w)+i-1 ~(\bmod ~M)$ for $w\in V(G)\cup E(G)$, and $G_k \cong G$ and $f_0=f$. The finite module Abelian additive operation
\begin{equation}\label{eqa:555555}
G_i[+_k]G_j:=G_i[+]G_j[-]G_k=G_{\lambda}
\end{equation} is defined by
\begin{equation}\label{eqa:abelian-additive-operation-group}
h_i(w)+h_j(w)-h_k(w)=h_{\lambda}(w),~w\in V(G)\cup E(G)
\end{equation} with $\lambda=i+j-k~(\bmod ~M)$, and $G_k$ is an arbitrarily preappointed \emph{zero}, such that $G_{\lambda}\in F_f(G)$. It is not hard to verify the following facts:
\begin{asparaenum}[(i) ]
\item \textbf{Zero.} Each graph $G_{\lambda}\in F_f(G)$ can be selected as zero in the finite module Abelian additive operation, such that $G_i[+_k]G_j:=G_{\lambda}\in F_f(G)$.

\item \textbf{Inverse.} For $i+j=2k~(\bmod ~M)$, then $G_i[+_k]G_j:=G_k$.
\item \textbf{Uniqueness.} If two graphs $G_i,G_j\in F_f(G)$ hold $G_i [+_k] G_j=G_s$ and $G_i [+_k] G_j=G_r$, then $G_s=G_r\in F_f(G)$ by Eq.(\ref{eqa:abelian-additive-operation-group}).
\item \textbf{Closureness.} For $\lambda=i+j-k~(\bmod ~M)$, then $G_i[+_k]G_j:=G_{\lambda}\in F_f(G)$.

\item \textbf{Associative law.} $G_i[+_k] \big (G_j[+_k]G_r\big )=\big (G_i[+_k]G_j \big )G_i[+_k]G_r$.

\item \textbf{Commutative law.} $G_i[+_k]G_j=G_j[+_k]G_i$.
\end{asparaenum} Then the graph set $F_f(G)$ is called an \emph{every-zero graphic group}, denoted as $\{F_f(G);[+][-]\}$. \qqed
\end{defn}

\begin{problem}\label{qeu:444444}
Suppose that a $(p,q)$-graph $G$ admits another total $W$-constraint coloring $h: V(G)\cup E(G)\rightarrow [c, d]$, $M^*=\max\{h(w): w\in V(G)\cup E(G)\}$. So the graph set $F_h(G)=\{G_1, G_2, \dots , G_M\}$ with each graph $G_i$ admitting a total $W$-constraint coloring $h_i$ defined by
$$h_i(w)=h(w)+i-1 ~(\bmod ~M),~w\in V(G)\cup E(G)
$$ forms another every-zero graphic group $\{F_h(G);[+][-]\}$. \textbf{Consider} relationship between two every-zero graphic groups $\{F_f(G);[+][-]\}$ and $\{F_h(G);[+][-]\}$.
\end{problem}

Let $\{F_\theta(H);[+][-]\}$ be an every-zero graphic group based on a graph set $F_\theta(H)=\{H_1, H_2, \dots $, $H_M\}$. If there are graph homomorphisms $H_i\rightarrow G_i$ for each $i\in [1,M]$, we have defined a \emph{graphic group homomorphism}
\begin{equation}\label{eqa:555555}
\{F_\theta(H);[+][-]\}\rightarrow \{F_f(G);[+][-]\}
\end{equation} from a graph set $F_\theta(H)$ to another graph set $F_f(G)$, also, the \emph{graph set homomorphism} $F_\theta(H)\rightarrow F_f(G)$, introduced in \cite{Bing-Yao-2020arXiv}.

Since the Topcode-matrix $T_{code}(G,f)$ corresponds a graph set $G_{raph}(G)$, such that each graph $H\in G_{raph}(G)$ has its own Topcode-matrix $T_{code}(H,f)=T_{code}(G,f)$ (Ref. Remark \ref{rem:topcode-matrix-homomorphisms-isomor} and Definition \ref{defn:total-coloring-Topcode-matrixs}), then we have the following result:

\begin{thm}\label{thm:666666}
$^*$ \textbf{Graphic group homomorphism.} For a $(p,q)$-graph $G$ admitting a total $W$-constraint coloring $f: V(G)\cup E(G)\rightarrow [a, b]$, there are two or more colored graphs $H$ admitting total $W$-constraint coloring $h$ to form a \emph{graphic group homomorphism}
$$
\{F_h(H);[+][-]\}\rightarrow \{F_f(G);[+][-]\}
$$
\end{thm}

\begin{defn} \label{defn:graphic-group-coincide-general}
$^*$ \textbf{Vertex-coincided graphic group lattice.} For an every-zero graphic group $\{F_f(G);[+][-]\}$ defined in Definition \ref{defn:topo-code-graph-groups}, we vertex-coincide a vertex $u_{i,j}$ of $G_i$ and a vertex $u_{i+1,s}$ of $G_{i+1}$ if $h_i(u_{i,j})=h_{i+1}(u_{i+1,s})$ into one vertex $w_{i,i+1}=u_{i,j} \bullet u_{i+1,s}$ such that the vertex-coincided graph $G_i[\bullet]G_{i+1}$ holds $|E(G_i[\bullet]G_{i+1})|=|E(G_i)|+|E(G_{i+1})|$, so we have a vertex-coincided graph
\begin{equation}\label{eqa:graph-group-vertex-coincide}
H=G_1[\bullet]G_{2}[\bullet]G_{3}\cdots [\bullet]G_{M}=[\bullet ^{vertex}_{color}]^M_{k=1}G_{k}
\end{equation} with $|E(H)|=\sum ^M_{k=1}|E(G_{k})|$. For graphs $a_1G_1,a_2G_2,\dots ,a_{M}G_{M}$ with $G_{i}\in F_f(G)$, $a_i\in Z^0$ and $\sum a_i=A\geq 1$, we can defined a vertex-coincided graph like that defined in Eq.(\ref{eqa:graph-group-vertex-coincide}) as follows
\begin{equation}\label{eqa:graph-group-vertex-coincide-general}
T=T_1[\bullet]T_2[\bullet]T_3\cdots [\bullet]T_{A}=[\bullet ^{vertex}_{color}]^A_{k=1}T_{k}=[\bullet ^{vertex}_{color}]^M_{k=1}a_kG_{k}
\end{equation} with the edge numbers
$$|E(T)|=\sum ^A_{k=1}|E(T_{k})|=\sum ^M_{k=1}a_k|E(G_{k})|
$$ where $T_1,T_2,\dots,T_A$ is a permutation of the graphs $a_1G_1$, $a_2G_2$, $\dots $, $a_{M}G_{M}$. Therefore, we get a \emph{vertex-coincided graphic group lattice}
\begin{equation}\label{eqa:graphic-group-lattices}
\textbf{\textrm{L}}\big (Z^0[\bullet]F_f(G)\big )=\big \{[\bullet ^{vertex}_{color}]^M_{k=1}a_kG_{k}:a_k\in Z^0,G_{i}F_f(G)\big \}
\end{equation} under the vertex-coinciding operation, where $F_f(G)=\{G_1, G_2, \dots , G_M\}$ in Definition \ref{defn:topo-code-graph-groups} is the \emph{vertex-coincided graphic group lattice base}.\qqed
\end{defn}

\begin{rem}\label{rem:v-coin-graph-group-lattice}
About Definition \ref{defn:graphic-group-coincide-general}, vertex-splitting a vertex-coincided graph $T=[\bullet]^M_{k=1}a_kG_{k}$ defined in Eq.(\ref{eqa:graph-group-vertex-coincide-general}) into the graphs $a_1G_1$, $a_2G_2$, $\dots $, $a_{M}G_{M}$ is not easy, since it will meet the Subgraph isomorphic NP-complete problem.

For the generalization of Definition \ref{defn:graphic-group-coincide-general}, suppose that a connected graph $G$ admits a $W$-constraint coloring $f$, each connected graph $H_i$ admits a $W_i$-constraint coloring $g_i$ for $i\in [1,m]$, and $f(V(G))\cap g_i(V(H_i))\neq \emptyset$. Then we vertex-coincide some vertices of the connected graph $H_i$ and some vertices of the connected graph $G$ together as if these vertices are colored with the same colors, the resultant graph denoted as $B_G=G[\bullet ^{vertex}_{color}]^m_{k=1}H_k$ holds the edge number formula
$$|E(B_G)|=|E(G)|+\sum ^m_{k=1}|E(H_k)|
$$ and is connected too. Clearly, $B_G$ is like a ``book'', the connected graph $G$ is the \emph{spine} of the book, and each connected graph $H_i$ is a \emph{page} of the book.

In a \emph{graph network} proposed by 27 scientists from DeepMind, GoogleBrain, MIT and University of Edinburgh \cite{Peter-Battaglia-Jessica-et-al-arXiv-2018}, the above connected graph $G$ is the \emph{graph network framework}, and each connected graph $H_i$ is a \emph{graph block}.\qqed
\end{rem}

\subsubsection{Topcode-matrix groups and topological string groups}

Notice that Proposition \ref{prop:translated-number-based-string} shows us:

(i) Each simple graph can be translated into a number-based string.

(ii) A number-based string can be generated by the Topcode-matrices of two colored graphs $G$ and $H$, such that $G\not\cong H$.

Similarly with Definition \ref{defn:every-zero-total-graphic-group} and Definition \ref{defn:graphic-group-definition}, we redefine the every-zero graphic group and the every-zero Topcode-matrix group as follows:

\begin{defn} \label{defn:topocode-matrices-groups}
$^*$ \textbf{Topcode-matrix group.} An every-zero graphic group $\{F_f(G);[+][-]\}$ (Ref. Definition \ref{defn:topo-code-graph-groups}) is based on a graph set $F_f(G)=\{G_1,G_2,\dots ,G_m\}$ with $G_i\cong G_1$ for $i\in [1,m]$, where each colored graph $G_i$ admits a coloring $f_i$ and has its own Topcode-matrix $T_{code}(G_i,f_i)$, these Topcode-matrices form a Topcode-matrix set
$$F(T_{code}(G))=\big \{T_{code}(G_1,f_1),T_{code}(G_2,f_2),\dots ,T_{code}(G_m,f_m)\big \}
$$ and holding the finite module Abelian additive operation
\begin{equation}\label{eqa:topcode-matrix-abelian-additive-operation}
T_{code}(G_i,f_i)[+]T_{code}(G_j,f_j)[-]T_{code}(G_k,f_k)=T_{code}(G_\lambda,f_\lambda)
\end{equation} with $\lambda=i+j-k~(\bmod~m)$ for any preappointed \emph{zero} $T_{code}(G_k,f_k)\in F(T_{code}(G))$, where Eq.(\ref{eqa:topcode-matrix-abelian-additive-operation}) is defined by
\begin{equation}\label{eqa:topcode-matrix-abelian-additive-operation11}
f_i(w)[+]f_j(w)[-]f_k(w)=f_\lambda(w),~w\in V(G_1)\cup E(G_1)
\end{equation} with $\lambda=i+j-k~(\bmod~m)$, so we get a \emph{Topcode-matrix group} denoted as $\{F(T_{code}(G));[+][-]\}$.\qqed
\end{defn}

\begin{thm}\label{thm:666666}
\textbf{Topological group homomorphism.} By Definition \ref{defn:topocode-matrices-groups}, we have:

(i) \emph{Topcode-matrix group homomorphism}
\begin{equation}\label{eqa:555555}
\{F(T_{code}(H));[+][-]\}\rightarrow \{F(T_{code}(G));[+][-]\}
\end{equation}

(ii) \emph{Topological string group homomorphism}.

(iii) \emph{Topological encoding graph set group homomorphism}.

(iv) \emph{Mixed-graphic group homomorphism} by Definition \ref{defn:mixed-graphic-groups-11}.
\end{thm}

\begin{defn} \label{defn:topocode-matrices-groups11}
$^*$ Suppose that a $(p,q)$-graph $G$ admits a $W$-constraint total coloring $f: V(G)\cup E(G)\rightarrow [a,b]$, and $M=\max \{f(w): w\in V(G)\cup E(G)\}$. By the every-zero graphic group $\{F_f(G);[+][-]\}$ defined in Definition \ref{defn:topo-code-graph-groups}, we present the following topological groups:

\begin{asparaenum}[\textrm{\textbf{Topogroup}}-1. ]
\item \textbf{Adjacent-matrix group.} Under the finite module Abelian additive operation, the total coloring adjacent-matrix set $T_{adj}(G, F_{adj})=\{T_{adj}(G_i, h_i): i\in [1, M]\}$ forms an every-zero adjacent-matrix group $\{T_{adj}(G, F_{adj}); [+][-]\}$.
\item \textbf{Topcode-matrix group.} The Topcode-matrix group $\{F(T_{code}(G));[+][-]\}$ defined in Definition \ref{defn:topocode-matrices-groups}.
\item \textbf{Topological string group.} For $i\in [1,m]$, each Topcode-matrix $T_{code}(G_i,f_i)$ of the Topcode-matrix group $\{F(T_{code}(G))$; $[+][-]\}$ defined in Definition \ref{defn:topocode-matrices-groups} can induces $(3q)!$ number-based strings. We use a fixed algorithm $\pi$ to take a number-based string $s_i$ from $T_{code}(G_i,f_i)$ with $i\in [1,m]$, so the sequence $\{s_i\}^m_{i=1}$ forms a \emph{topological string group} under the finite module Abelian additive operation $s_i[+]s_j[-]s_k=s_\lambda$ with $\lambda=i+j-k~(\bmod~m)$. Thereby, the Topcode-matrix group $\{F(T_{code}(G));[+][-]\}$ produces $(3q)!$ topological string groups.
\item \textbf{Topological encoding graph set group.} Since each Topcode-matrix $T^i_{code}\in \{F(T_{code}(G));[+][-]\}$ corresponds a topological encoding graph set $G_{raph}(T^i_{code})$, then the topological encoding graph set $G_{raph}=\{G_{raph}(T^i_{code}): i\in [1, M]\}$ forms an \emph{every-zero topological encoding graph set group} $\{F(G_{raph});[+][-]\}$ based on the finite module Abelian additive operation.

\item \textbf{Parameterized Topcode-matrix group.} A parameterized Topcode-matrix set
\begin{equation}\label{eqa:555555}
P_{code}(G, L|k,d)=\big \{P_{code}(G_1, L_1 | k, d), P_{code}(G_2, L_2 | k, d), \dots, P_{code}(G_M, L_M | k, d)\big \}
\end{equation} each parameterized Topcode-matrix $P_{code}(G_i, L_i|k, d)=k\cdot I^0+d\cdot T_{code}(G_i, h_i)$ with $i\in [1, M]$, $k\geq 1$ and $d\geq 1$. Under the finite module Abelian additive operation
\begin{equation}\label{eqa:555555}
P_{code}(G_i, h_i)[+]P_{code}(G_j, h_j)[-]P_{code}(G_k, h_k)=P_{code}(G_{\lambda}, h_{\lambda})
\end{equation} the parameterized Topcode-matrix set $P_{code}(G, L|k, d)$ forms an \emph{every-zero parameterized Topcode-matrix group} $\{P_{code}(G, L|k, d);[+][-]\}$. \qqed
\end{asparaenum}
\end{defn}

\begin{thm}\label{thm:666666}
$^*$ Each $(p,q)$-graph admitting $m$ total colorings forms:

(i) $m$ every-zero graphic groups;

(ii) $m$ every-zero Topcode-matrix groups;

(iii) $m\cdot (3q)!$ every-zero number-based string groups.
\end{thm}

Definition \ref{defn:matrix-graph-type-topcode-matrices} and Definition \ref{defn:set-type-topcode-matrix-definition} have defined: graph-type Topcode-matrix, matrix-type Topcode-matrix, set-type Topcode-matrix. By Definition \ref{defn:topo-code-graph-groups} and Definition \ref{defn:topocode-matrices-groups11}, we present the following $W$-group colorings:

\begin{defn} \label{defn:total-W-group-coloring-5}
$^*$ Suppose that a graph $H$ admits a total $W$-group coloring $F_X: V(H)\cup E(H)\rightarrow W\textrm{-group}$, such that each edge $uv\in E(H)$ holds $F_X(u)\neq F_X(v)$, and holds the finite module Abelian additive operation
\begin{equation}\label{eqa:555555}
F^i_X[+]F^j_X[-]F^k_X= F^{\lambda}_X\in W\textrm{-group}
\end{equation}
with $\lambda=i+j-k~(\bmod~M^*)$, where $F^k_X(w)$ is a preappointed \emph{zero}, $F_X$ is one of colorings $F_{adj}, F_{gra}, F_{mat}, F_{string}, F_{param}, F_{set}$ and $F_{thing}$. We define the $W$-group colorings as follows:
\begin{asparaenum}[\textrm{\textbf{Groupc}}-1. ]
\item The graph $H$ admits a \emph{total-colored adjacent matrix group coloring} $F_{adj}: V(H)\cup E(H)\rightarrow \{T_{adj}(G, F_{adj}); [+][-]\}$, where $T_{adj}(G, F_{adj})$ is a total-colored adjacent matrix, such that each element of Topcode-matrix $T_{code}(H$, $F_{adj})$ is a \emph{total-colored adjacent matrix} of the total-colored adjacent matrix set $T_{adj}(G$, $F_{adj})$, like \emph{Tensor matrix}.

\item The graph $H$ admits a \emph{total-colored graph matrix group coloring} $F_{gra}: V(H)\cup E(H)\rightarrow \{F_f(G); [+][-]\}$, such that each element of the Topcode-matrix $T_{code}(H, F_{gra})$ is a colored graph $G_i$, so $T_{code}(H, F_{gra})$ is a \emph{colored graph Topcode-matrix} of the total-colored graph matrix set $F_f(G)$.

\item The graph $H$ admits a \emph{total-colored Topcode-matrix group coloring}
$$F_{mat}: V(H)\cup E(H)\rightarrow \{T_{code}(G, F_{mat}); [+][-]\}
$$ such that each element of the total-colored Topcode-matrix $T_{code}(H, F_{mat})$ is a total-colored Topcode-matrix, so $T_{code}(H, F_{mat})$ is a \emph{Tensor Topcode-matrix} of the total-colored Topcode-matrix set $T_{code}(G, F_{mat})$.
\item Since each total-colored Topcode-matrix $T_{code}(G_i,f_i)$ of the Topcode-matrix group $\{F(T_{code}(G))$; $[+][-]\}$ defined in Definition \ref{defn:topocode-matrices-groups} induces $(3q)!$ number-based strings. We use a fixed algorithm $\pi$ to take a number-based string $s_i$ from each total-colored Topcode-matrix $T_{code}(G_i,f_i)$ with $i\in [1,m]$, so the sequence $\{s_i\}^m_{i=1}$ forms a \emph{topological string group}. The graph $H$ admits a \emph{topological string group coloring} $F_{string}: V(H)\cup E(H)\rightarrow \{s_i\}^m_{i=1}$, such that each element of the Topcode-matrix $T_{code}(H, F_{string})$ is a topological string. Since $G$ is a total-colored $(p,q)$-graph, so there are $(3q)!$ topological string group colorings for the graph $H$.
\item The graph $H$ admits a \emph{total-colored parameterized matrix group coloring}
$$F_{param}: V(H)\cup E(H)\rightarrow \{P_{code}(G,L|k,d); [+][-]\}
$$ such that each element of the Topcode-matrix $T_{code}(H$, $F_{param})$ is a total-colored parameterized Topcode-matrix of the total-colored parameterized matrix set $P_{code}(G,L|k,d)$.
\item The graph $H$ admits a \emph{total-colored graph-set group coloring}
$$F_{set}: V(H)\cup E(H)\rightarrow \{F(G_{raph}); [+][-]\}
$$ such that each element of the Topcode-matrix $T_{code}(H, F_{set})$ is a total-colored graph set of the total-colored graph-set set $F(G_{raph})$.
\item The graph $H$ admits a \emph{thing group coloring}
$$F_{thing}: V(H)\cup E(H)\rightarrow \{F(S_{thing});[+][-]\}
$$ such that each element of the Topcode-matrix $T_{code}(H, F_{thing})$ is a thing set of the thing set $F(S_{thing})$.\qqed
\end{asparaenum}
\end{defn}

\begin{thm}\label{thm:total-W-group-coloring-6}
$^*$ Suppose that a graph $H$ admits a total graphic group coloring $F: V(H)\cup E(H)\rightarrow \{F_f(G); [+][-]\}$, by Definition \ref{defn:total-W-group-coloring-5}, then the graph $H$ admits:
\begin{asparaenum}[(i)]
\item a total-colored adjacent matrix group coloring
$$F_{adj}: V(H)\cup E(H)\rightarrow \{T_{adj}(G, F_{adj}); [+][-]\}$$

\item a total-colored graph matrix group coloring $F_{gra}: V(H)\cup E(H)\rightarrow \{F_f(G); [+][-]\}$;

\item a total-colored Topcode-matrix group coloring
$$F_{mat}: V(H)\cup E(H)\rightarrow \{T_{code}(G, F_{mat});[+][-]\}$$
\item a topological string group coloring $F_{string}: V(H)\cup E(H)\rightarrow \{s_i\}^m_{i=1}$;
\item a total-colored parameterized matrix group coloring

$$F_{param}: V(H)\cup E(H)\rightarrow \{P_{code}(G,L|k,d); [+][-]\}$$

\item a total-colored graph set group coloring
$$F_{param}: V(H)\cup E(H)\rightarrow \{F(G_{raph}); [+][-]\}$$

\item a thing group coloring $F_{thing}: V(H)\cup E(H)\rightarrow \{F(S_{thing}); [+][-]\}$.
\end{asparaenum}
\end{thm}

\begin{cor}\label{cor:total-W-group-coloring-7}
$^*$ If a graph admits a graphic group coloring, then there exists a graph set, such that each graph in this graph set produces a graphic group coloring admitted by the graph.
\end{cor}

\subsubsection{Mixed-graphic groups}

The mixed-graphic group has been defined and investigated in \cite{Wang-Su-Yao-submitted-ITOEC2020}, we will do more researching works on it in this subsection.

\begin{defn} \label{defn:mixed-graphic-groups-11}
\cite{Wang-Su-Yao-submitted-ITOEC2020} Suppose that a $(p,q)$-graph $G$ admits a total $W$-constraint coloring $f: V(G)\cup E(G)\rightarrow [1, R]$, such that the vertex color set $f(V(G))=\{f(x): x\in V(G)\}$ and the edge color set $f(E(G))=\{f(uv): uv\in E(G)\}$ hold the $W$-constraint. The graph set $R_f(G)=\{G_{s,k}: s\in [1,p], k\in [1,q]\}$ is called \emph{mixed colored graph set}, each colored graph $G_{s, k}$ is isomorphic to the graph $G$, and admits a total $W$-constraint coloring $h_{s, k}$, such that $h_{s, k}(x)=f(x)+s~(\bmod ~p)$ for $x\in V(G_{s, k})$, and $h_{s, k}(uv)=f(uv)+k~ (\bmod ~q)$ for $uv\in E(G_{s, k})$.
Selecting arbitrarily a zero $G_{a,b}\in R_f(G)$, we define the finite module Abelian additive operation
\begin{equation}\label{eqa:mixed-abelian-additive-operation}
G_{s,k}[+] G_{i,j}[-]G_{a,b}=G_{\lambda, \mu}
\end{equation} for the graph set $R_f(G)$ as follows:

(i) $h_{s,k}(x)+h_{i,j}(x)-h_{a,b}(x)=h_{\lambda, \mu}(x)$ with $\lambda=s+i-a~(\bmod ~p)$ for each $x\in V(G)=V(G_{s,k})=V(G_{i,j})=V(G_{a,b})$;

(ii) $h_{s,k}(e)+h_{i,j}(e)-h_{a,b}(e)=h_{\lambda, \mu}(e)$ with $\mu=k+j-b~(\bmod ~q)$ for each edge $e\in E(G)=E(G_{s,k})=E(G_{i,j})=E(G_{a,b})$;

(iii) $G_{\lambda, \mu}\in R_f(G)$.

Under the finite module Abelian additive operation Eq.(\ref{eqa:mixed-abelian-additive-operation}), the mixed colored graph set $R_f(G)$ holds: Each $G_{s,k}\in R_f(G)$ can be appointed as zero; Each $G_{s,k}\in R_f(G)$ has its own inverse; Uniqueness; Closureness; Associative law and Commutative law. So we call $R_f(G)$ \emph{every-zero mixed-graphic group}, denoted as $\{R_f(G);[+][-]\}$.\qqed
\end{defn}

\begin{defn} \label{defn:mixed-graphic-groups-22}
$^*$ Suppose that a $(p,q)$-graph $G$ admits a total $W$-constraint coloring $f: V(G)\cup E(G)\rightarrow [1, B]$. By the every-zero mixed-graphic group $\{R_f(G);[+][-]\}$ defined in Definition \ref{defn:mixed-graphic-groups-11}, the $(p,q)$-graph $G$ admits a total graphic group coloring $F:V(G)\cup E(G)\rightarrow \{R_f(G);[+][-]\}$, so we have a Topcode-matrix $R_{at}(w)=T_{code}(G_{s,k}, h_{s,k})$ with $F(w)=G_{s,k}$ for each $w\in V(G)\cup E(G)$. We get an \emph{every-zero mixed Topcode-matrix group} $\{R_{at}(G);[+][-]\}$, where the graph set $R_{at}(G)=\{R_{at}(w)=T_{code}(G_{s,k}, h_{s,k}):~G_{s,k}\in R_f(G)\}$.\qqed
\end{defn}

\begin{defn} \label{defn:mixed-graphic-groups-33}
$^*$ Suppose that a $(p,q)$-graph $G$ admits a total $W$-constraint coloring $f: V(G)\cup E(G)\rightarrow [1, B]$, by Definition \ref{defn:mixed-graphic-groups-11} and the finite module Abelian additive operation Eq.(\ref{eqa:mixed-abelian-additive-operation}), then we have the every-zero mixed Topcode-matrix group $\{R_{at}(G);[+][-]\}$.

Since each Topcode-matrix $T_{code}(G_{s,k}, h_{s,k})$ is $3\times q$ rank, and produces $(3q)!$ strings, so there are $(3q)!$ algorithms, such that each algorithm $C_{\theta}$ for $\theta\in [1, (3q)!]$ induces a string $s(C_{\theta}, T_{code}(G_{s,k}, h_{s,k})$ from the Topcode-matrix $T_{code}(G_{s,k}, h_{s,k})$.
For each integer $\theta\in [1, (3q)!]$, the string set
\begin{equation}\label{eqa:mixed-string-groups-00}
S_{tring}(\theta)=\{s(C_{\theta}, T_{code}(G_{s,k}, h_{s,k}): G_{s,k}\in R_f(G)\}
\end{equation} forms an \emph{every-zero mixed string group}, denoted as $\{S_{tring}(\theta);[+][-]\}$ for $\theta\in [1, (3q)!]$.\qqed
\end{defn}

\begin{thm}\label{thm:666666}
$^*$ Suppose that a $(p,q)$-graph $G$ admits a total $W$-constraint coloring $f$ based on the mixed colored graph set $R_f(G)=\{G_{s,k}: s\in [1,p], k\in [1,q]\}$ (Ref. Definition \ref{defn:mixed-graphic-groups-11}) and the finite module Abelian additive operation Eq.(\ref{eqa:mixed-abelian-additive-operation}), then there are the every-zero mixed-graphic group $\{R_f(G);[+][-]\}$, the every-zero mixed Topcode-matrix group $\{R_{at}(G);[+][-]\}$, and $(3q)!$ every-zero mixed string groups $\{S_{tring}(\theta);[+][-]\}$ with $\theta\in [1, (3q)!]$ (Ref. Definition \ref{defn:mixed-graphic-groups-33}).
\end{thm}

\subsubsection{Infinite mixed-graphic groups}

\begin{defn} \label{defn:mixed-graphic-groups-44}
$^*$ Suppose that a $(p,q)$-graph $G$ admits a total $W$-constraint coloring $h$. By Definition \ref{defn:mixed-graphic-groups-11}, we get an infinite mixed graph set $I^{+\infty}_{-\infty}(G, h; [+][-])=\{G_{s,k}:~-\infty <s, k<+\infty \}$, where each colored graph $G_{s,k}\cong G$ and $G_{0,0}\cong G$, and each colored graph $G_{s,k}$ admits a total $W$-constraint coloring $h_{s,k}$ defined by $h_{s,k}(x)=h(x)+s$ for $ x\in V(G_{s,k})$, $h_{s,k}(uv)=h(uv)+k$ for $uv\in E(G_{s,k})$, such that the finite module Abelian additive operation Eq.(\ref{eqa:mixed-abelian-additive-operation}) holds true for any preappointed zero $G_{a,b}\in I^{+\infty}_{-\infty}(G, h; [+][-])$, call the infinite mixed graph set $I^{+\infty}_{-\infty}(G, h; [+][-])$ \emph{every-zero infinite mixed-graphic group}.\qqed
\end{defn}

By Definition \ref{defn:mixed-graphic-groups-11}, Definition \ref{defn:mixed-graphic-groups-22}, Definition \ref{defn:mixed-graphic-groups-33} and Definition \ref{defn:mixed-graphic-groups-44},we get the following topological groups:

(i) \textbf{Every-zero infinite mixed-graphic group.} The every-zero mixed-graphic group $\{R_f(G)$; $[+][-]\}$ is a proper subset of the every-zero infinite mixed-graphic group $I^{+\infty}_{-\infty}(G, h; [+][-])$.

(ii) \textbf{Every-zero infinite mixed Topcode-matrix group.} The every-zero mixed Topcode-matrix group $\{R_{at}(G);[+][-]\}$ is a proper subset of the every-zero infinite Topcode-matrix group
\begin{equation}\label{eqa:555555}
M^{\infty}_{at}(G)=\big \{T_{code}(G_{s,k}, h_{s,k}): G_{s,k}\in I^{+\infty}_{-\infty}(G, h; [+][-])\big \}
\end{equation}

(iii) \textbf{Every-zero infinite mixed string group.} For $-\infty<\theta<+\infty$, we have the every-zero infinite mixed string group
\begin{equation}\label{eqa:mixed-string-groups-11}
S^{\infty}_{tring}(\theta)=\big \{s(C_{\theta}, T_{code}(G_{s,k}, h_{s,k}): G_{s,k}\in I^{+\infty}_{-\infty}(G, h; [+][-])\big \}
\end{equation}

(iv) Each integer point $(s,k)$ in xOy-plane corresponds a colored graph $H_{s,k}$, and the Topcode-matrix $T_{code}(G_{s,k}, h_{s,k})$, as well as $(3q)!$ strings.

Fig.\ref{fig:infinite-mixed-graphic-group} shows us an example of the infinite mixed-graphic group.

\begin{figure}[h]
\centering
\includegraphics[width=16.4cm]{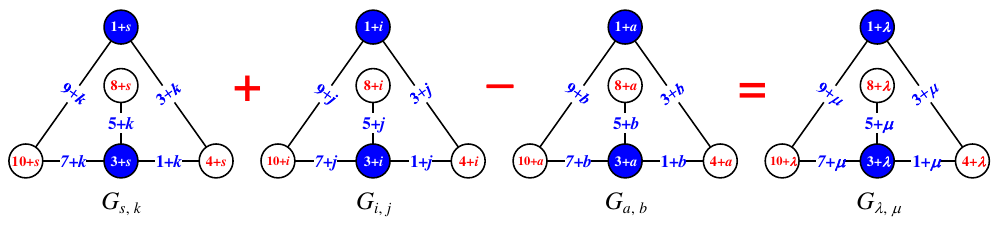}\\
\caption{\label{fig:infinite-mixed-graphic-group}{\small An infinite mixed-graphic group.}}
\end{figure}

\begin{rem}\label{rem:333333}
Since the every-zero mixed-graphic group $\{R_f(G)$; $[+][-]\}$ is a proper subset of the every-zero infinite mixed-graphic group $I^{+\infty}_{-\infty}(G, h; [+][-])$, we regard it \emph{every-zero mixed-graphic root-group}. Definition \ref{defn:mixed-graphic-groups-11} and Definition \ref{defn:mixed-graphic-groups-44} can help us to realize large scale of network overall topological encryption.\qqed
\end{rem}

\begin{defn} \label{defn:infinite-mixed-graphic-groups}
$^*$ \textbf{Every-zero infinite 3-dimension mixed Topcode-matrix group.} Let a bipartite graph $T$ have its own vertex set $V(T)=(X, Y)$, and $T$ admit a set-ordered total $W$-constraint coloring $F$, such that
\begin{equation}\label{eqa:555555}
\max\{F(x): x\in X\}=\max F(X)<\min F(Y)=\min\{F(y): y\in Y\}
\end{equation} Suppose that each colored graph $T_{i,j,k}$ of the set $I^{\infty}_{3-dime}(T, F; [+][-])=\{T_{i,j,k}:-\infty < i, j, k <+\infty\}$ holds $T_{i,j,k}\cong T$, and admits a total coloring $f_{i,j,k}$ defined by $f_{i,j,k}(x)= F(x)+i$ for $x\in X$, $f_{i,j,k}(y)= F(y)+j$ for $y\in Y$, and $f_{i,j,k}(uv)= F(uv)+k$ for $uv\in E(T)$, such that $f_{i,j,k}(x)<f_{i,j,k}(y)$, the Abelian additive operation
\begin{equation}\label{eqa:555555}
T_{i,j,k}[+]T_{r,s,t}[-]T_{a,b,c}=T_{\lambda,\mu,\gamma}
\end{equation} is defined by

(i) $f_{i,j,k}(x)+f_{r,s,t}(x)-f_{a,b,c}(x)=f_{\lambda,\mu,\gamma}(x), ~x\in X$;

(ii) $f_{i,j,k}(y)+f_{r,s,t}(y)-f_{a,b,c}(y)=f_{\lambda,\mu,\gamma}(y), ~y\in Y$;

(iii) $f_{i,j,k}(uv)+f_{r,s,t}(uv)-f_{a,b,c}(uv)=f_{\lambda,\mu,\gamma}(uv), ~uv\in E(T)$,\\
where $T_{a,b,c}$ is a preappointed zero, and $T_{0,0,0}\cong T$. We call the set $I^{\infty}_{3-dime}(T, F; [+][-])$ \emph{every-zero infinite 3-dimension mixed-graphic group}. \qqed
\end{defn}

\begin{rem}\label{rem:333333}
Similarly with Definition \ref{defn:infinite-mixed-graphic-groups}, we can define

(i) every-zero infinite 3-dimension mixed Topcode-matrix group; and

(ii) every-zero infinite 3-dimension mixed string group.

And moreover Definition \ref{defn:set-colored-graphic-group} is for graphic groups based on hypergraphs, as well as Hypergraph groups are defined in Definition \ref{defn:hypergraph-group-definition} and Definition \ref{defn:general-defi-hypergraph-groups}.\qqed
\end{rem}

\subsubsection{Pan-groups}

Since a thing set $S_{thing}=\{t_a,t_{a+1},\dots, t_b\}$ corresponds the integer set $[a,b]$, we can build up an \emph{every-zero thing group} $\{S_{thing};[+][-]\}$ by the every-zero graphic group $\{F(G);[+][-]\}$ defined in Definition \ref{defn:topo-code-graph-groups}. Here, each thing $t_i\in S_{thing}$ is a vector, or a set, or a number-based string, or a Topcode-matrix, or a hyperedge set, or a hypergraph, or a set-set, or a graph set, or a matrix set, or a $W$-constraint group, \emph{etc}. The colored graph $G$ in the every-zero graphic group $\{F(G);[+][-]\}$ shows a topological relationship of things of the thing set $S_{thing}$.

\begin{defn} \label{defn:pan-groups-pan-element-sets}
$^*$ An \emph{every-zero pan-group} $\{P_{an}(G);[+][-]\}$ has its own \emph{pan-element set} $P_{an}(G)=\{G_1,G_2,\dots, G_m\}$ with each pan-element $G_i$ admitting a total pan-coloring $h_i$, and $G_i\cong G_1$. The finite module Abelian additive operation
$$G_i[+]G_j[-]G_k=G_\lambda
$$ based on the pan-element set $P_{an}(G)$ is defined by
\begin{equation}\label{eqa:3333333}
h_i(w)[+]h_j(w)[-]h_k(w)=h_\lambda(w),~w\in V(G_1)\cup E(G_1)
\end{equation} with $\lambda=i+j-k~(\bmod~M)$ for any preappointed \emph{zero} $G_k\in \{P_{an}(G);[+][-]\}$.\qqed
\end{defn}

\subsection{The strength index of asymmetric topological keys}

The technical indicators of strength of asymmetric topology keys should include the following basic requirements:
\begin{asparaenum}[\textbf{\textrm{Sindex}}-1.]
\item Asymmetric topology keys consist of \emph{topological structures} and \emph{mathematical constraints}.
\item \label{Sindex:Byte-lengths} The lengths of asymmetric topological keys are:

\qquad (\ref{Sindex:Byte-lengths}-1) The byte lengths of asymmetric topological keys satisfy practical application requirements.

\qquad (\ref{Sindex:Byte-lengths}-2) The byte lengths of asymmetric topological keys can withstand AI attacks with quantum computing capabilities, such that deciphering them exceeds beyond existing computer computing power. For example, some byte lengths are not less than $1024^n$MB with $n\geq 4$.

\item \label{Sindex:Multiple-constraints} Asymmetric topology keys have:

\qquad (\ref{Sindex:Multiple-constraints}-1) \textbf{Multiple constraints.} Mathematical constraint set $R_{est}(c_1, c_2, \dots , c_m)$ with $m\geq 2$ on topological keys are numerous and complex.

\qquad (\ref{Sindex:Multiple-constraints}-2) \textbf{Multiple properties.}Topology keys not only have mathematical constraints, but also have randomness, one-to-many, and many-to-many properties.

\qquad (\ref{Sindex:Multiple-constraints}-3) \textbf{More matchings.} Each topological key has at least one of its own matching topological keys which are not easy to find out (see Fig.\ref{fig:one-vs-more}).
\item \label{Sindex:structures-spaces} \textbf{Complex topological structures and huge topological space.} The topological structures for topological keys are:

\qquad (\ref{Sindex:structures-spaces}-1) Considering the resistance to quantum computing, the cardinality of the topological structure space for constructing topological keys is at least $2e=2^{200}$; and

\qquad (\ref{Sindex:structures-spaces}-2) Since two numbers $n_{23}$ and $n_{24}$ of different topological structures of graphs on 23 vertices and 24 vertices hold $n_{23}\approx 2^{179}$ and $n_{24}\approx 2^{197}$, then the vertex number of a graph for making some asymmetric topological keys is not less than 50.

\item \label{Sindex:increased-randoms}\textbf{Randomness.} Asymmetric topology keys can be increased randomly (see Fig.\ref{fig:increase-random}) based on the following techniques:

\qquad (\ref{Sindex:increased-randoms}-1) Topological structures can be increased randomly (see Fig.\ref{fig:increase-random}).

\qquad (\ref{Sindex:increased-randoms}-2) Parameterized asymmetric topology keys can be made by Definition \ref{defn:kd-w-type-colorings} and a parameterized Topcode-matrix $P_{ara}(G,F|k,d)$ defined in (\ref{eqa:definition-parameterized-topcode-matrix}).

\qquad (\ref{Sindex:increased-randoms}-3) Parameterized number-based $(k,d)$-strings induced from parameterized Topcode-matrix $P_{ara}(G,F|k,d)$ can rely on an arbitrary function $d=f(k)$ having infinite integer points $(k,d)$ in the xOy-plane.

\item \label{Sindex:More-NP-type problem} \textbf{NP-type problems.} Each topological key has a property $P$ which is related with NP-type problem, and there is no polynomial method to realize the property $P$.

\qquad (\ref{Sindex:More-NP-type problem}-1) Many mathematical conjectures are NP-hard, or NP-complete. For example, computing the chromatic number is NP-hard (Ref. \cite{Garey-M-R-Johnson-1979} and \cite{Garey-Johnson-Stockmeyer-1974}). For edge coloring, the proof of Vizing's result gives an algorithm that uses at most $\Delta(G)+1$ colors. However, deciding between the two candidate values $\Delta(G)$ and $\Delta(G)+1$ for the edge chromatic number is NP-complete (Ref. \cite{Holyer-I-1981}). The harmonious coloring problem is NP-hard (Ref. \cite{M-Kubale-harmonious-2000}). Determining the chromatic index $\chi \,'(G)\leq 3$ is NP-complete (Ref. \cite{Holyer-I-1981}, \cite{Hanna-Andrzej-Marek-2016}).

\qquad (\ref{Sindex:More-NP-type problem}-2) \textbf{Sharp-P-hard (\#P-hard) and sharp-P-complete (\#P-complete)} problems. For example, computing the coefficients of the chromatic polynomial is Sharp-P-hard, and Problem \ref{question:sharp-NP-complete-total-coloring} is just \#P-hard, because of determining the total chromatic number is NP-hard. Some Sharp-P-complete problems are:

\qquad \textbf{How many} different variable assignments will satisfy a given 2-satisfiability problem?

\qquad \textbf{How many} perfect matchings are there for a given bipartite graph?

\qquad \textbf{How many} graph colorings using $k$ colors are there for a particular graph?

\qquad \textbf{How many} edge colorings using $\chi \,'(G)$ colors are there for a connected graph $G$?

\qquad \textbf{How many} total colorings using $\chi \,''(G)$ colors are there for a connected graph $G$?

\qquad (\ref{Sindex:More-NP-type problem}-3) \textbf{Finding} colored graphs is the Subgraph isomorphism NP-complete problem.

\item \textbf{Infinite keys.} Asymmetric topology keys have the one-time pad system (key's infiniteness), since the Shannon theory has already proven that one-time pad system is a perfect secret system, see techniques from parameterized Topcode-matrices, Parameterized number-based $(k,d)$-strings, parameterized hypergraphs, \emph{etc.}
\item \textbf{Chinese character graphs.} The topological structure of topological keys are Hanzi-graphs (Chinese character graphs). It is difficult to vertex-split a (total colored) graph to obtain a group of (total colored) Hanzi-graphs, see Fig.\ref{fig:G-v-splt-Hanzi-graphs} and Fig.\ref{fig:G-v-splt-Hanzi-graphs-11}.
\item The above techniques for asymmetric topology keys can handle this worst-case scenario: if the deciphers are very familiar with the creation of asymmetric topology keys and have obtained the number-based strings of encryption.
\end{asparaenum}

\begin{problem}\label{question:sharp-NP-complete-total-coloring}
If a connected graph $G$ admits a proper total coloring $f:V(G)\cup E(G)\rightarrow [1,\chi\,''(G)]$, \textbf{how many} such proper total colorings does $G$ admit?
\end{problem}

However, we have proved: It is impossible to find out a total colored graph or a Topcode-matrix from a number-based string.

\begin{figure}[h]
\centering
\includegraphics[width=16.4cm]{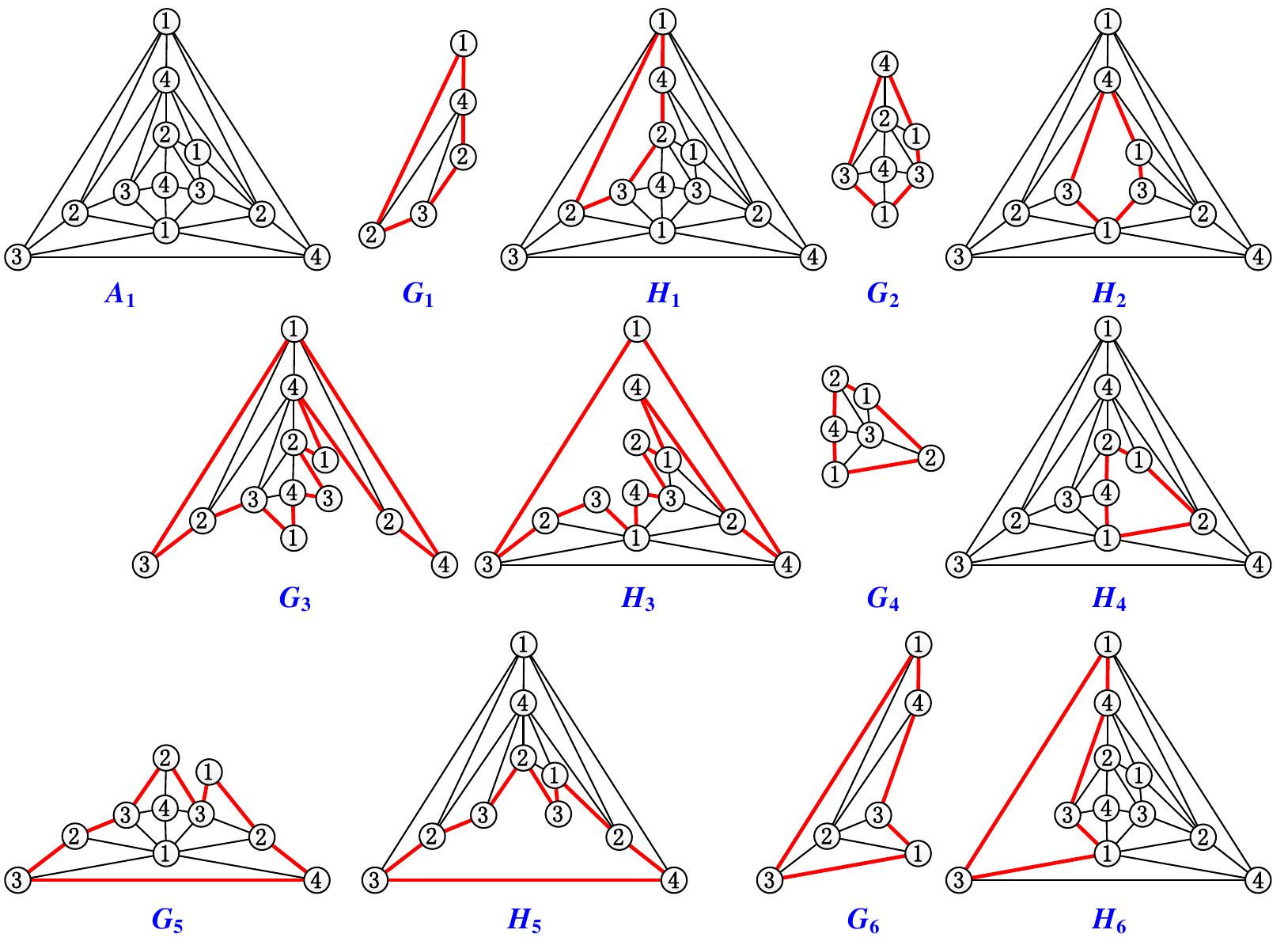}\\
\caption{\label{fig:one-vs-more}{\small One maximal planar graph $A_1$ produces more matchings of public-keys $G_1,G_2,\dots ,G_6$ and private-keys $H_1,H_2,\dots ,H_6$, such that $A_1=G_i\big [\ominus^{cyc}_k\big ]H_i$ for each $i\in [1,6]$ and cycle length $k=5,8,12$.}}
\end{figure}

\begin{thm}\label{thm:666666}
A maximal planar graph of $p$ vertices with $p\geq 10$ produces two or more matchings of public-keys and private-keys under the cycle-coinciding operation, like that shown in Fig.\ref{fig:one-vs-more}.
\end{thm}

\begin{figure}[h]
\centering
\includegraphics[width=13cm]{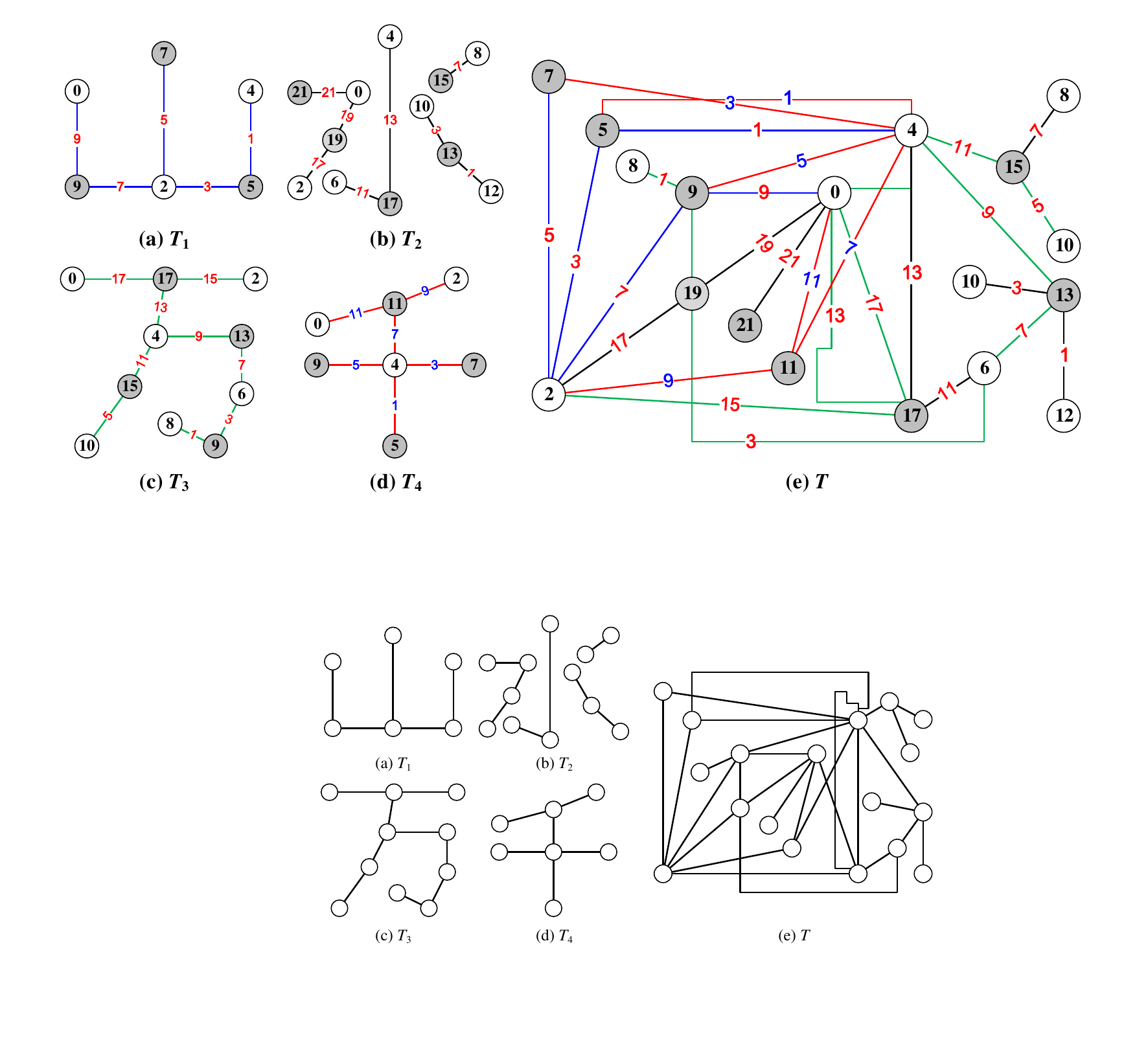}\\
\caption{\label{fig:G-v-splt-Hanzi-graphs}{\small A group of Hanzi-graphs $T_1,T_2,T_3,T_4$ obtained by vertex-splitting the total colored connected graph $T$, where $T=[\bullet]^4_{k=1}T_k$.}}
\end{figure}

\begin{problem}\label{question:444444}
\textbf{How many} groups of Hanzi-graphs \textbf{can} a connected graph be vertex-split apart?
\end{problem}

\begin{figure}[h]
\centering
\includegraphics[width=16.4cm]{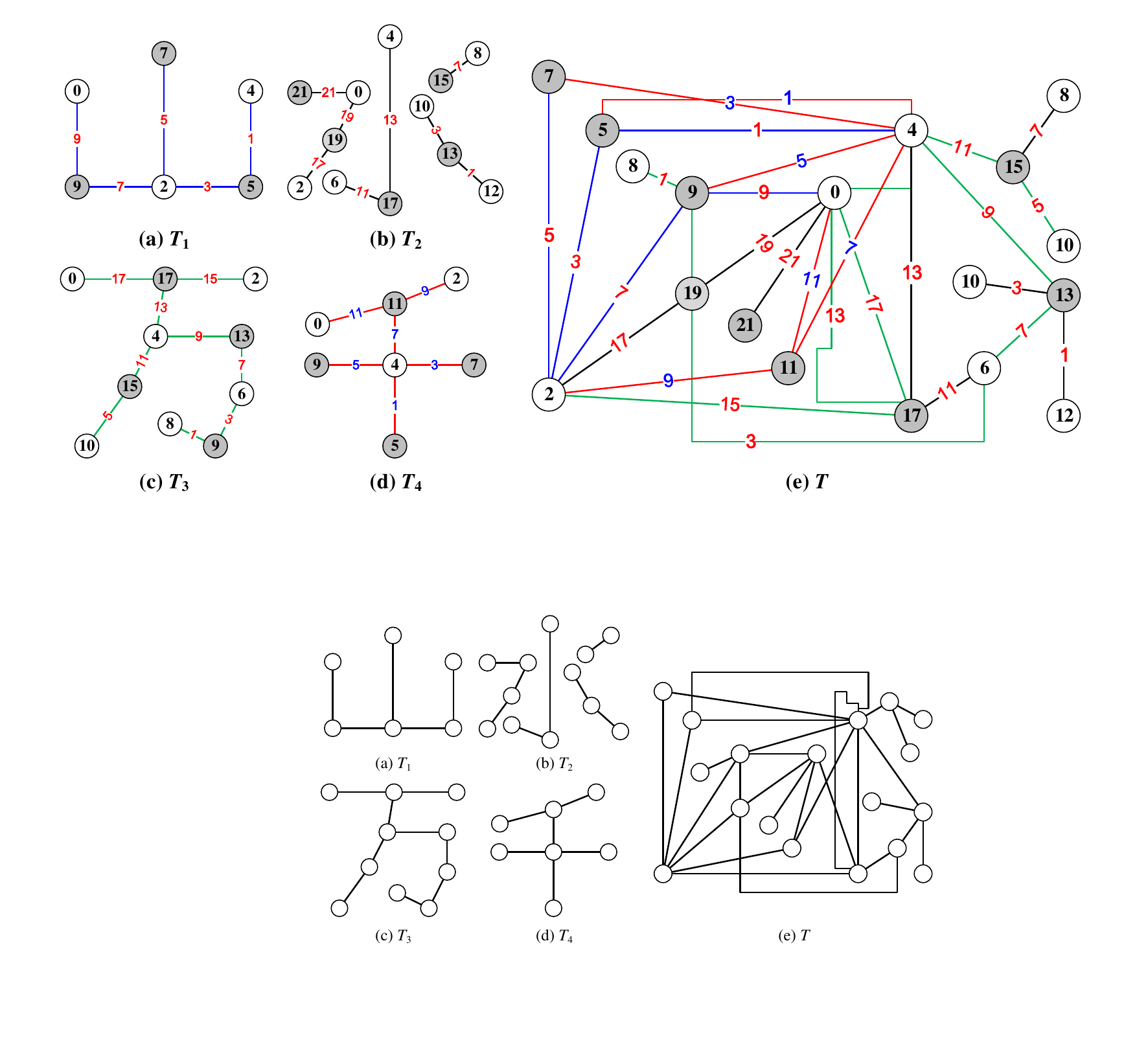}\\
\caption{\label{fig:G-v-splt-Hanzi-graphs-11}{\small Five total colored graphs $T_1,T_2,T_3,T_4$ and $T$, refer to Fig.\ref{fig:G-v-splt-Hanzi-graphs}, where $T=[\bullet]^4_{k=1}T_k$.}}
\end{figure}

\subsection{Topological authentication}

\begin{defn} \label{defn:topo-authentication-number-string}
\cite{Yao-Su-Ma-Wang-Yang-arXiv-2202-03993v1} A \emph{topological authentication $T_{auth}$ of number-based strings} is defined as
\begin{equation}\label{eqa:topo-authentication-00}
{
\begin{split}
T_{auth}=&F\big [ G\leftarrow \lambda(H),\quad g\leftarrow \varphi(f),\quad T_{code}(G,g)\leftarrow \varphi (T_{code}(H,f),\\
&S_{tring}(m)=\theta(T_{code}(G,g))\leftarrow S_{tring}(n)=\theta(T_{code}(H,f))\big ]
\end{split}}
\end{equation} based on a \emph{topological private-key} $P_{ri}[G,g,T_{code}(G,g), S_{tring}(m)] $ and a \emph{topological public-key} $P_{ub}[H$, $f$, $T_{code}(H,f)$, $S_{tring}(n)] $, also
\begin{equation}\label{eqa:555555}
P_{ri}[G,g,T_{code}(G,g), S_{tring}(m)]\leftarrow P_{ub}[H,f,T_{code}(H,f),S_{tring}(n)]
\end{equation} where $\lambda$ is a \emph{graph operation}, and $\varphi$ is a \emph{transformation function} on two colorings or two Topcode-matrices, and $\theta$ is a \emph{$W$-constraint function} for producing number-based strings, as well as the \emph{public-key graph} $H$ admits a $W_i$-constraint coloring $f$ which induces a Topcode-matrix $T_{code}(H,f)$, and the \emph{private-key graph} $G$ admits a $W_j$-constraint coloring $g$ which induces a Topcode-matrix $T_{code}(G,g)$ (Ref. Fig.\ref{fig:authentication-system}).\qqed
\end{defn}

\begin{defn} \label{defn:topo-authentication-multiple-variables}
\cite{Yao-Su-Ma-Wang-Yang-arXiv-2202-03993v1} A \emph{topological authentication} $\textbf{T}_{\textbf{a}}\big [\textbf{X},\textbf{Y}\big ]$ \emph{of multiple variables} is defined as follows
\begin{equation}\label{eqa:topo-authentication-11}
\textbf{T}_{\textbf{a}}\big [\textbf{X},\textbf{Y}\big ] =P_{ub}(\textbf{X}) \rightarrow _{\textbf{F}} P_{ri}(\textbf{Y})
\end{equation} where $P_{ub}(\textbf{X})=(\alpha_1,\alpha_2,\dots ,\alpha_m)$ and $P_{ri}(\textbf{Y})=(\beta_1,\beta_2,\dots ,\beta_m)$ both are \emph{variable vectors}, in which both $\alpha_1$ and $\beta_1$ are two graphs or sets of graphs (resp. colored graphs, uncolored graphs), and $\textbf{F}=(\theta_1,\theta_2,\dots $, $\theta_m)$ is an \emph{operation vector}, $P_{ub}(\textbf{X})$ is a \emph{topological public-key vector} and $P_{ri}(\textbf{Y})$ is a \emph{topological private-key vector}, such that $\theta_k(\alpha_k)\rightarrow \beta_k$ for each $k\in [1,m]$ with $m\geq 1$.\qqed
\end{defn}

\begin{figure}[h]
\centering
\includegraphics[width=16cm]{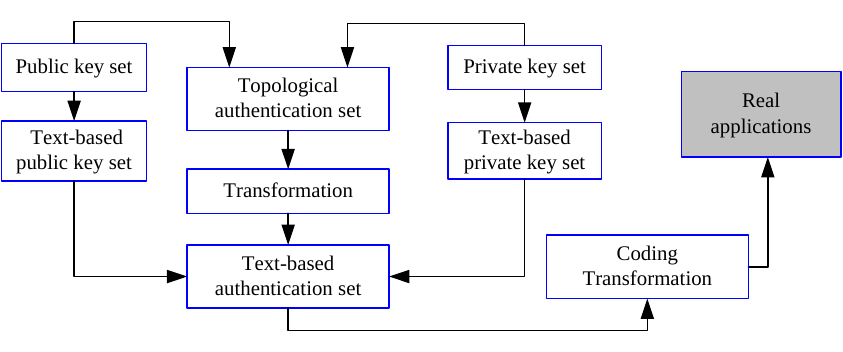}
\caption{\label{fig:authentication-system}{\small A topological authentication system, cited from \cite{Yao-Su-Ma-Wang-Yang-arXiv-2202-03993v1}.}}
\end{figure}

\begin{defn} \label{defn:4-coloring-characterized-graph}
\cite{Jin-Xu-55-56-configurations-arXiv-2107-05454v1} Let $C_4(G)=\{f_1, f_2,\dots ,f_n\}$ be the set of all different proper vertex $4$-colorings of a planar graph $G$. A \emph{$4$-coloring characterized graph} $C_c(G)$ has its own vertex set $V(C_c(G))=C_4(G)$, and $C_c(G)$ has an edge $f_if_j$ with two ends $f_i,f_j\in V(C_c(G))$ if $f_j$ is obtained by doing the Kempe transformation to $f_i$, or exchange the colors of some vertices of the planar graph $G$ under $f_i$ to make the proper vertex $4$-coloring $f_j$ of the planar graph $G$.\qqed
\end{defn}

\begin{example}\label{exa:8888888888}
In Fig.\ref{fig:44-more-public-keys}, we have four topological authentications $A_i=P_i\big [\ominus ^{cyc}_{m_i}\big ]T_i$ for $i\in [1,4]$, where each public-key $P_i\in G_{pub}$ and each private-key $T_i\in G_{pri}$, and the operation ``$\big [\ominus ^{cyc}_{m_i}\big ]$'' is defined in Definition \ref{defn:W-splitting-coinciding-operation} and Remark \ref{rem:edge-split-coinciding-operation}. The 4-coloring $f_i$ of each \emph{topological authentication} $A_i=P_i\big [\ominus ^{cyc}_{m_i}\big ]T_i$ for $i\in [1,4]$ is determined by the 4-coloring $g_i$ of each \emph{public-key} $P_i$ and the 4-coloring $h_i$ of each \emph{private-key} $T_i$. In Fig.\ref{fig:55-more-public-keys}, each \emph{topological authentication} $A_i=P_i\big [\ominus ^{cyc}_{m_i}\big ]T_i$ for $i\in [1,4]$.

In a topological authentication $\textbf{T}_{\textbf{a}}\langle\textbf{X},\textbf{Y}\rangle $ defined in Definition \ref{defn:topo-authentication-multiple-variables}, we have two variable vectors
$$
P_{ub}(\textbf{X})=(G_{pub}, g_1,g_2, g_3,g_4),\quad P_{ri}(\textbf{Y})=(G_{pri}, h_1,h_2,h_3 ,h_4)
$$ and an operation vector $\textbf{F}=\big (\big [\ominus^{cyc}_{m_i}\big ],f_1,f_2,f_3 ,f_4 \big )$, such that $G_{pub}\rightarrow G_{pri}$ made by $A_i=P_i\big [\ominus^{cyc}_{m_i}\big ]T_i$ with 4-coloring $f_i=g_i\uplus h_i$ for $m_i\geq 3$ and $i\in [1,4]$.
\end{example}

\begin{figure}[h]
\centering
\includegraphics[width=15.4cm]{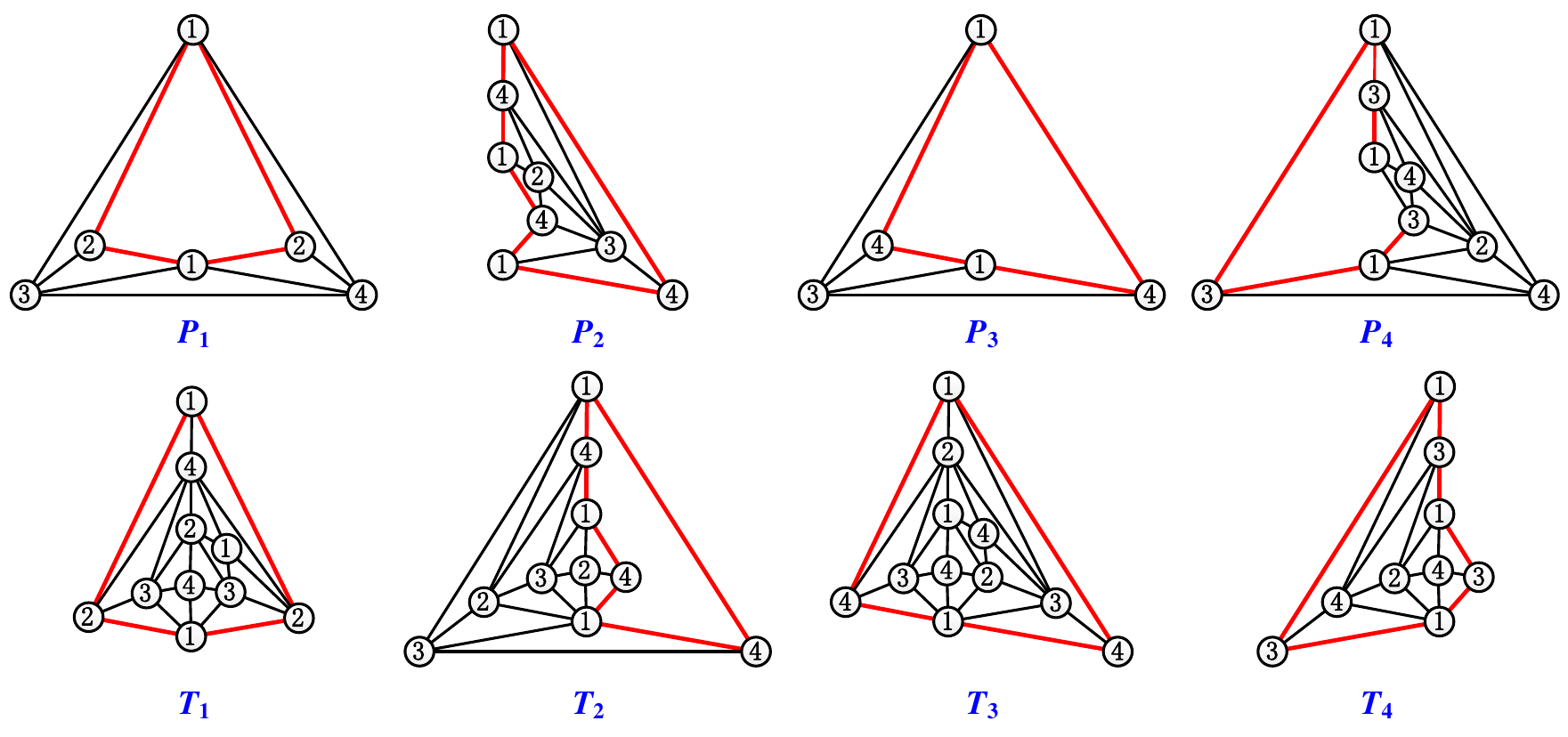}\\
\caption{\label{fig:44-more-public-keys} {\small A public-key group $G_{pub}=(P_1,P_2,P_3 ,P_4)$ and a private-key group $G_{pri}=(T_1,T_2,T_3 ,T_4)$, cited from \cite{Yao-Su-Ma-Wang-Yang-arXiv-2202-03993v1}.}}
\end{figure}

\begin{figure}[h]
\centering
\includegraphics[width=15.4cm]{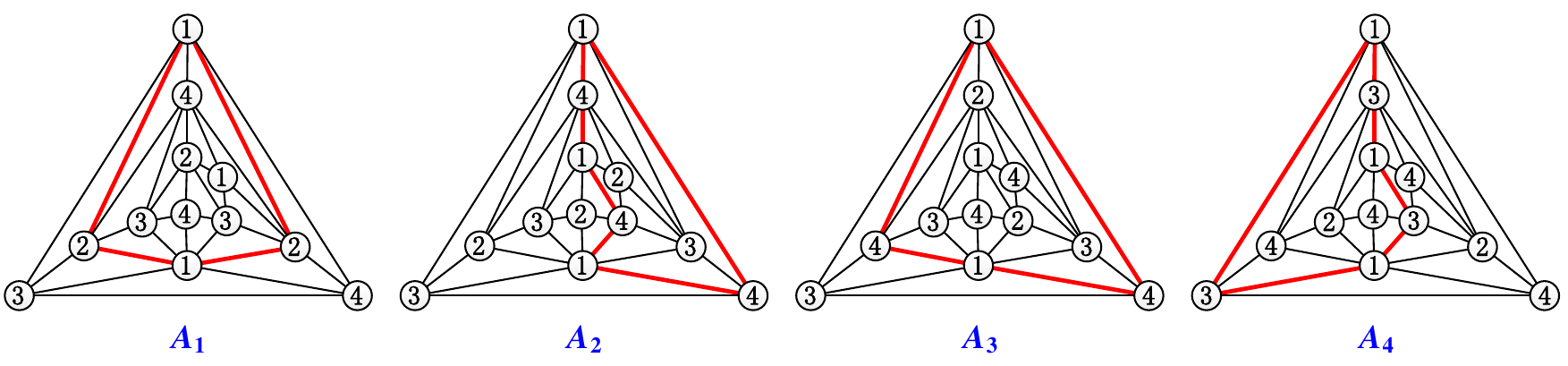}\\
\caption{\label{fig:55-more-public-keys} {\small A topological authentication group $A_{uth}=(A_1,A_2,A_3 ,A_4)$, cited from \cite{Yao-Su-Ma-Wang-Yang-arXiv-2202-03993v1}.}}
\end{figure}

\begin{example}\label{exa:semi-maximal-planar-graphs-4-coloring}
Using the notation and the terminology appeared in Definition \ref{defn:topo-authentication-multiple-variables}. A maximal planar graph $G=G^C_{out}\big [\ominus^C_k\big ]G^C_{in}$ admits a proper vertex $4$-coloring $f$, so the semi-maximal planar graph $G^C_{out}$ admits a proper vertex $4$-coloring $f_{out}$ induced by $f$, and the semi-maximal planar graph $G^C_{in}$ admits a proper vertex $4$-coloring $f_{in}$ induced by $f$. So, we have a topological public-key
\begin{equation}\label{eqa:semi-maximal-planar-graph-au-11}
P_{ub}(\textbf{X})=\big (G^C_{out}, f_{out}, T_{code}(G^C_{out}),A(G^C_{out})\big )
\end{equation}
and a topological private-key
\begin{equation}\label{eqa:semi-maximal-planar-graph-au-22}
P_{ri}(\textbf{Y})=\big (G^C_{in},f_{in}, T_{code}(G^C_{in}),A(G^C_{in})\big )
\end{equation}
and an operation vector
\begin{equation}\label{eqa:semi-maximal-planar-graph-au-33}
\textbf{F}=\left (\theta_1,\theta_2,\theta_3,\theta_4\right)=\Big (\big [\ominus^C_k\big ],f_{out}\cup f_{in},T_{code}(G^C_{out})\uplus T_{code}(G^C_{in}), A(G)\Big )
\end{equation}
where $A(G^C_{out})$, $A(G^C_{in})$ and $A(G)$ are the adjacent matrices of three graphs $G$, $G^C_{out}$ and $G^C_{in}$. Notice that two adjacent matrices $A(T)$ and $A(L)$ are not similar from each other if two non-isomorphic graphs $T$ and $L$, there is no matrix $B$ holding $B^{-1}A(T)B=A(L)$ true.

By Eq.(\ref{eqa:semi-maximal-planar-graph-au-11}), Eq.(\ref{eqa:semi-maximal-planar-graph-au-22}) and Eq.(\ref{eqa:semi-maximal-planar-graph-au-33}), we get a topological authentication
\begin{equation}\label{eqa:555555}
P_{ub}(\textbf{X}) \rightarrow _{\textbf{F}} P_{ri}(\textbf{Y})
\end{equation} made by

$G=\theta_1(G^C_{out},G^C_{in})=G^C_{out}\big [\ominus^C_k\big ]G^C_{in}$, $\theta_2(f_{out}, f_{in})=f_{out}\cup f_{in}$,

$\theta_3(T_{code}(G^C_{out}), T_{code}(G^C_{in}))=T_{code}(G^C_{out})\uplus T_{code}(G^C_{in})$, and

$A(G)=\theta_4(A(G^C_{out},A(G^C_{in}))=A(G^C_{out}\cup A(G^C_{in})$.

\vskip 0.4cm

Moreover, we may meet a set of topological public-keys consisted of maximal planar graphs $G_i=G^C_{out}\big [\ominus^C_k\big ]G^C_{in}(i)$ for $i\in [1,m]$ with $m\geq 2$, and each semi-maximal planar graph $G^C_{in}(i)$ admits a proper vertex $4$-coloring $f^i_{in}$, so we get a group of topological private-key vectors
\begin{equation}\label{eqa:555555}
P_{ri}(\textbf{Y}_i)=\Big (G^C_{in}(i),f^i_{in}, T_{code}(G^C_{in}(i)),A(G^C_{in}(i))\Big )
\end{equation}
and a group of operation vectors
\begin{equation}\label{eqa:555555}
\textbf{F}_i=\left (\theta^i_1,\theta^i_2,\theta^i_3,\theta^i_4\right)=\Big (\big [\ominus^C_k\big ],f_{out}\cup f^i_{in},T_{code}(G^C_{out})\uplus T_{code}(G^C_{in}(i)), A(G_i)\Big )
\end{equation} for $i\in [1,m]$. Thereby, we get a group of topological authentications
\begin{equation}\label{eqa:topo-authentication-group}
\textbf{T}_{\textbf{a}}\langle\textbf{X},\textbf{Y}_i\rangle =P_{ub}(\textbf{X}) \rightarrow _{\textbf{F}_i} P_{ri}(\textbf{Y}_i),~i\in [1,m]
\end{equation}
\end{example}

\begin{example}\label{exa:characterized-graph-vs-mpgs}
Using the concepts and notations in Definition \ref{defn:topo-authentication-multiple-variables} and Definition \ref{defn:4-coloring-characterized-graph}. Let $C_c(G)$ be a proper vertex $4$-coloring characterized graph of a planar graph $G$ with its coloring set $C_4(G)=\{f_1, f_2,\dots ,f_n\}$ of all different proper vertex $4$-colorings. We select randomly a graph $J$ admitting a proper vertex coloring $F$, and make a topological public-key
\begin{equation}\label{eqa:555555}
P_{ub}(\textbf{X}_{\textrm{characg}})=(J, F)
\end{equation}
where ``characg = characterized graph'', and \textbf{find} out a maximal planar graph $H$ with its 4-coloring set $C_4(H)$ as a topological private-key below
\begin{equation}\label{eqa:555555}
P_{ri}(\textbf{Y}_{\textrm{mpg}})=(H,C_4(H))
\end{equation} where ``mpg = maximal planar graph'', and \textbf{determine} an operation vector $\textbf{F}=(\theta_1,\theta_2)$, where $\theta_1(J)\rightarrow H$, $\theta_2(F)\rightarrow C_4(H)$, that is, $J=C_c(H)$ and $F(J)=C_4(H)$. Thereby, we obtain a topological authentication
\begin{equation}\label{eqa:characterized-graph-vs-mpgs}
\textbf{T}_{\textbf{a}}\langle\textbf{X}_{\textrm{characg}},\textbf{Y}_{\textrm{mpg}}\rangle =P_{ub}(\textbf{X}_{\textrm{characg}}) \rightarrow _{\textbf{F}} P_{ri}(\textbf{Y}_{\textrm{mpg}})
\end{equation} However, realizing the topological authentication $\textbf{T}_{\textbf{a}}\langle\textbf{X}_{\textrm{characg}},\textbf{Y}_{\textrm{mpg}}\rangle$ defined in Eq.(\ref{eqa:characterized-graph-vs-mpgs}) is not easy.
\end{example}

\begin{example}\label{exa:characterized-graph-vs-general-graph}
Let $C^W_{\textrm{harac}}$ be a $W$-type coloring characterized graph admitting a coloring $F$ defined on a $W$-type coloring set $C_{olor}=\{f_1, f_2,\dots ,f_m\}$. We have a topological public-key vector $P_{ub}(\textbf{X}_W)=(C^W_{\textrm{harac}}, F)$, and we will \textbf{determine} a topological private-key graph $H$ admitting $W$-type colorings to form a topological private-key vector $
P_{ri}(\textbf{Y}_W)=(H,C_W(H))$, where $C_W(H)$ is the set of all different $W$-type colorings of $H$; and \textbf{find} an operation vector $\textbf{F}_W=(\theta_1,\theta_2)$ with
$$
\theta_1 \big (C^W_{\textrm{harac}}\big )\rightarrow H,~\theta_2(F)\rightarrow C_W(H)
$$ that is, $C^W_{\textrm{harac}}=C^W_{\textrm{harac}}(H)$ and $F\big (C^W_{\textrm{harac}}\big )=C_{olor}=C_W(H)$, we get a topological authentication
\begin{equation}\label{eqa:characterized-graph-vs-general-graph}
\textbf{T}_{\textbf{a}}\langle\textbf{X}_W,\textbf{Y}_W\rangle =P_{ub}(\textbf{X}_W) \rightarrow _{\textbf{F}_W} P_{ri}(\textbf{Y}_W)
\end{equation}
\end{example}

\begin{defn}\label{defn:99-4-color-star-graphic-lattice}
\cite{Bing-Yao-2020arXiv} \textbf{The 4-color ice-flower system.} Each star $K_{1,d}$ with $d\in [2,M]$ admits a proper vertex-coloring $f_i$ with $i\in [1,4]$ defined as $f_i(x_0)=i$, $f_i(x_j)\in [1,4]\setminus\{i\}$, and $f_i(x_s)\neq f_i(x_t)$ for some $s,t\in [1,d]$, where $V(K_{1,d})=\{x_0,x_j:j\in [1,d]\}$. For each pair of $d$ and $i$, $K_{1,d}$ admits $n(d,i)$ proper vertex-colorings like $f_i$ defined above. Such colored star $K_{1,d}$ is denoted as $P_dS_{i,k}$, we have a set $(P_{d}S_{i,k})^{n(a,i)}_{k=1}$ with $i\in [1,4]$ and $d\in [2,M]$, and moreover we obtain a \emph{4-color ice-flower system} $\textbf{\textrm{I}}_{ce}(PS,M)=I_{ce}(P_{d}S_{i,k})^{n(a,i)}_{k=1})^{4}_{i=1})^{M}_{d=2}$, which induces a \emph{$4$-color star-graphic lattice}
\begin{equation}\label{eqa:4-color-star-system-lattices}
\textbf{\textrm{L}}(\Delta\overline{\ominus} \textbf{\textrm{I}}_{ce}(PS,M))=\big \{[\Delta\overline{\ominus}]^{A}_{(d,i,k)} a_{d,i,k}P_{d}S_{i,j}:~a_{d,i,k}\in Z^0,~P_{d}S_{i,j}\in \textbf{\textrm{I}}_{ce}(PS,M)\big \}
\end{equation} for $\sum ^{A}_{(d,i,k)} a_{d,i,k}\geq 3$ with $A=|\textbf{\textrm{I}}_{ce}(PS$, $M)|$, and the operation ``$[\Delta\overline{\ominus}]$'' is doing a series of leaf-coinciding operations to colored stars $a_{d,i,k}P_{d}S_{i,j}$ such that the resultant graph to be a planar graph with each inner face being a triangle.\qqed
\end{defn}

See two examples shown in Fig.\ref{fig:process-leaf-splitting} and Fig.\ref{fig:4-color-star-system-example} about the 4-color ice-flower systems defined in Definition \ref{defn:99-4-color-star-graphic-lattice}.

\begin{figure}[h]
\centering
\includegraphics[width=16cm]{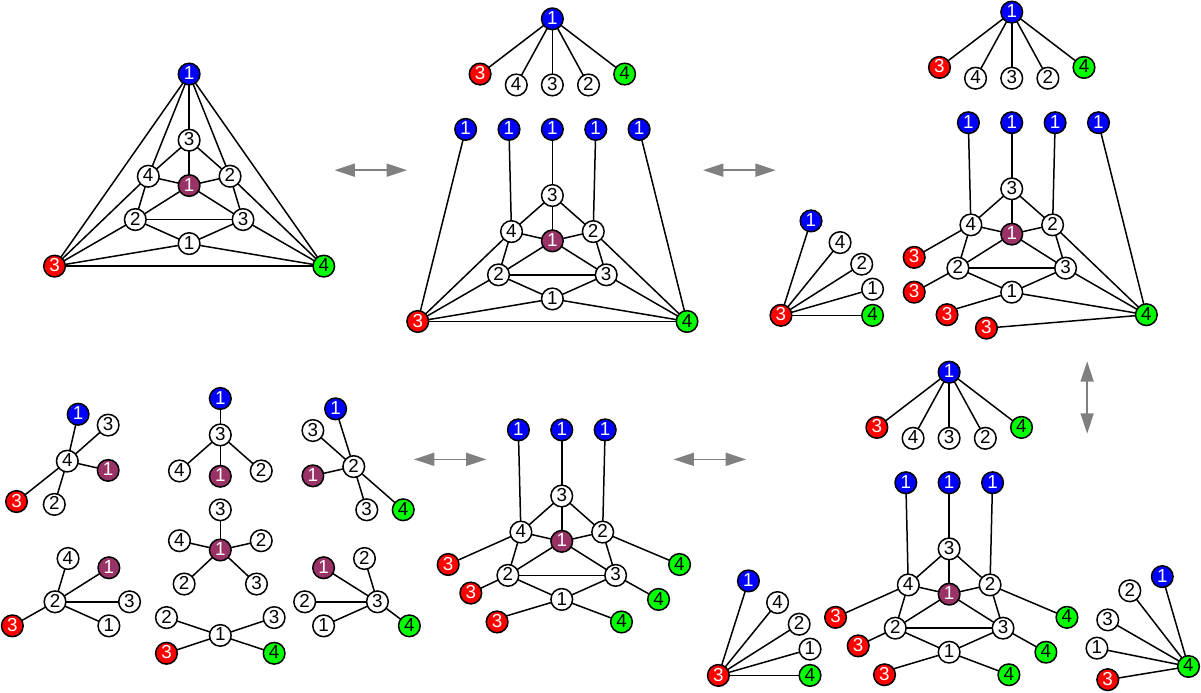}
\caption{\label{fig:process-leaf-splitting}{\small A process of doing leaf-splitting operations, cited from \cite{Bing-Yao-2020arXiv}.}}
\end{figure}

\begin{figure}[h]
\centering
\includegraphics[width=16cm]{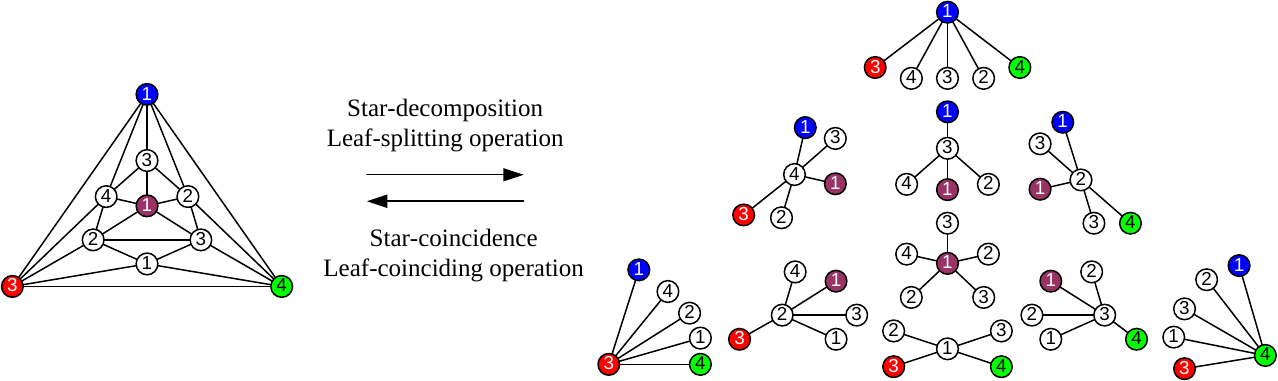}
\caption{\label{fig:4-color-star-system-example}{\small An example for understanding the $4$-coloring ice-flower system, cited from \cite{Bing-Yao-2020arXiv}.}}
\end{figure}

\begin{problem}\label{question:444444}
\textbf{Does} the $4$-color star-graphic lattice $\textbf{\textrm{L}}(\Delta\overline{\ominus} \textbf{\textrm{I}}_{ce}(PS,M))$ defined in Eq.(\ref{eqa:4-color-star-system-lattices}) of Definition \ref{defn:99-4-color-star-graphic-lattice} contain each maximal planar graph?
\end{problem}

\begin{conj}\label{conj:0000000000}
\cite{Yao-Su-Ma-Wang-Yang-arXiv-2202-03993v1} The planar dual graph $G^*$ of a maximal planar graph $G$ can be decomposed into a spanning tree $T$ and a perfect matching $M$ such that $E(G^*)=E(T)\cup M$.
\end{conj}

\begin{problem}\label{qeu:444444}
Adding the edges of an edge set $E^*\not \subset E(G)$ to a maximal planar graph $G$ admitting 4-colorings produces an edge-added graph $G+E^*$ admitting a 4-coloring. \textbf{Determine} the edge-added graph set $\{G+E^*\}$ over all possible edge sets $E^*$.
\end{problem}

\subsection{TKPDRA-center}

\subsubsection{Functions of TKPDRA-center}

For the goal to topologically encrypt overall networks and manage keys, we set up TKPDRA-center (the center of topology key-pairs deduced from
asymmetric topology code theory). With the help of graphic groups, Topcode-matrix groups, topological string groups and hypergraph groups in topology code theory, TKPDRA-center conducts overall network topological encryption, key generation, key distribution, and key management to networks, and maintain the security of communities and local networks all the time, in which the key management contains: Key registration, key production, key distribution, key usage, key authentication, key revocation and key update \emph{etc}.

TKPDRA-center algorithm is abbreviated as ``TKPDRAC-algorithm'' hereafter.

\vskip 0.4cm

\textbf{TKPDRA-center has the following main functions:}
\begin{asparaenum}[\textbf{\textrm{Func}}-1.]
\item TKPDRA-center uses Topocode-groups to create keys and distribute key-packages to users in networks, and can randomly adjust the \emph{zeros} of Topocode-groups.
\item TKPDRA-center is responsible for verifying the topology signature or number-based string authentication between each pair of users in networks.
\item TKPDRA-center help users to avoid creating keys by high-tech and to memorize various keys, even keys with ultra long bytes. Chinese users can request Hanzi-graphs (Chinese character graphs) from TKPDRA-center and randomly select a group of Hanzi-graphs or topological string groups according to the TKPDRA-center's instructions. In this way, they can obtain their favorite and easy to remember Hanzi-graphs topology signature key-pairs and string key-pairs.
\item TKPDRA-center can provide users with a protection mechanism for multiple topology authentications.
\item TKPDRA-center, for resisting attacks out of networks, can randomly replace the \emph{zeros} of graphic groups, topological string groups and hypergraph groups for users in networks, or TKPDRA-center can replace running Topocode-groups by other Topocode-groups.
\item TKPDRA-center can timely change the security system of the entire networks, resist damage and attacks out of networks.
\item TKPDRA-center can customize the private authentications for particular users for distinguishing them from other users in networks.
\item TKPDRA-center is equipped with anti-key-destruction software, just like ordinary antivirus software, which constantly monitors TKPDRA-center and its service objects, protecting the keys of network users.
\end{asparaenum}

\subsubsection{TKPDRAC-algorithms}

We have designed the following algorithms of TKPDRA-center for using asymmetric topological encryption.

\textbf{TKPDRAC-algorithm-I:} The files of users in the network are transmitted through TKPDRA-center.

\textbf{Step-I-1.} Alice encrypts a plaintext $F$ by her private topological signature $G_{\textrm{Apri}}$ and private number-based string $s_{\textrm{Apri}}$, the resultant ciphertext is denoted as $F_2$ which has a 2-level protection. And Alice sends the ciphertext $F_2$ to TKPDRA-center, and requests TKPDRA-center to perform technical processing on the ciphertext $F_2$ and send it to Bob.

\textbf{Step-I-2.} TKPDRA-center makes a package $P(F_2)$ containing the ciphertext $F_2$, Alice's public topological signature $G_{\textrm{Apub}}$ and Alice's public number-based string $s_{\textrm{Apub}}$, and then uses Bob's public topological signature $G_{\textrm{Bpub}}$ and Bob's public number-based string $s_{\textrm{Bpub}}$ to encode this package $P(F_2)$ to produce the resultant ciphertext $F_4$, which has a 4-level protection, and sends it to Bob.

\textbf{Step-I-3.} After receiving the ciphertext $F_4$ from TKPDRA-center, Bob uses his private topological signature $G_{\textrm{Bpri}}$ (recognizing that this ciphertext was sent to himself) and his private number-based string $s_{\textrm{Bpri}}$ to decrypt the ciphertext $F_4$, such that he obtains the ciphertext $F_2$, Alice's public topological signature $G_{\textrm{Apub}}$ and Alice's public number-based string $s_{\textrm{Apub}}$. Notice that Bob has the authentications of Bob's topological signature and Bob's number-based string.

\textbf{Step-I-4.} Bob uses Alice's public topological signature $G_{\textrm{Apub}}$ to the ciphertext $F_2$, so he knows this ciphertext $F_1$ came from Alice, and he uses Alice's public number-based string $s_{\textrm{Apub}}$ to decode $F_1$, finally, Bob can read the plaintext $F$. Notice that two authentications of Alice's topological signature and Alice's number-based string must pass through the authentications of TKPDRA-center.

The advantages of TKPDRAC-algorithm-I are as follows:

(i) \textbf{High security.} Four keys including Alice's public topological signature $G_{\textrm{Apub}}$, Alice's public number-based string $s_{\textrm{Apub}}$, Bob's public topological signature $G_{\textrm{Bpub}}$ and Bob's public number-based string $s_{\textrm{Bpub}}$ are in TKPDRA-center, not publicly disclosed.

(ii) \textbf{Simplicity and convenience.} TKPDRA-center help network users to make keys and encrypt files, simplify the encryption process of files for network, ensure the authenticity and completeness of the ciphertexts.

(iii) \textbf{Multiple users.} TKPDRA-center can help Alice to send the files to Alice's other Communication users Bob-1, Bob-2, $\dots$, Bob-$m$, such that Alice doesn't need to have access to the public topological signature and public number-based string of each of her users. This reduces the technical requirements for network users and achieves secure and efficient communication.

(iv) \textbf{Multiple communication methods.} Since TKPDRA-center help that Alice and Bob both have mastered the non-public topological signatures and non-public number-based strings, such that Bob and Alice, in future communication, are possible to communicate without relying on TKPDRA-center.

\vskip 0.4cm

There are three schemes Fig.\ref{fig:Topological-signature-22}, Fig.\ref{fig:string-graph-group-center-2} and Fig.\ref{fig:string-graph-group-center-11} for illustrating the functions of TKPDRA-center.

\vskip 0.4cm

\textbf{TKPDRAC-algorithm-II:} Users transferring files and various topology authentications in the network must pass TKPDRA-center (see Fig.\ref{fig:TKPDRAC-algorithm-2}).

\begin{figure}[h]
\centering
\includegraphics[width=16.4cm]{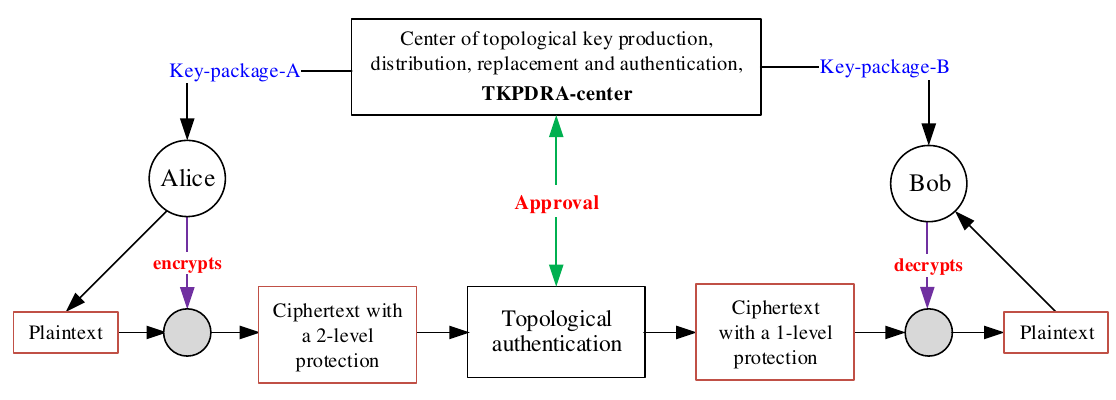}\\
\caption{\label{fig:TKPDRAC-algorithm-2}{\small A scheme for illustrating TKPDRAC-algorithm-II.}}
\end{figure}

\textbf{Step-II-1.} Alice use her private topological signature $G_{\textrm{Apri}}$ and her private number-based string $s_{\textrm{Apri}}$ to encode a plaintext $F$, and obtains a ciphertext $F_2$ having a 2-level protection. Alice sends this ciphertext $F_2$ to TKPDRA-center, ask TKPDRA-center for dealing with it and sends to Bob.

\textbf{Step-II-2.} TKPDRA-center makes a package containing the ciphertext $F_2$, Alice's public topological signature $G_{\textrm{Apub}}$ and Alice's public number-based string $s_{\textrm{Apub}}$, and uses Bob's public topological signature $G_{\textrm{Bpub}}$ and Bob's public number-based string $s_{\textrm{Bpub}}$ to encrypt the package, finally, produces a ciphertext $F_4$ with a 4-level protection, and sends to Bob.

After receiving the ciphertext $F_4$ from TKPDRA-center, Bob uses his private topological signature $G_{\textrm{Bpri}}$ (recognizing that this ciphertext was sent to himself) and his private number-based string $s_{\textrm{Bpri}}$ to decrypt the ciphertext $F_4$, such that he obtains the ciphertext $F_2$, Alice's public topological signature $G_{\textrm{Apub}}$ and Alice's public number-based string $s_{\textrm{Apub}}$. Notice that Bob has the authentications of Bob's topological signature and Bob's number-based string.

\textbf{Step-II-3.} Bob encrypts the ciphertext $F_4$ by his own private topological signature $G_{\textrm{Bpri}}$ for recognizing that this ciphertext was sent to himself, and uses his private number-based string $s_{\textrm{Bpri}}$ to decrypt continually the ciphertext. After the authentications from TKPDRA-center to Bob's topological signature and Bob's number-based string, then Bob obtains the ciphertext $F_2$, Alice's public topological signature $G_{\textrm{Apub}}$ and Alice's public number-based string $s_{\textrm{Apub}}$.

\textbf{Step-II-4.} Bob uses Alice's public topological signature $G_{\textrm{Apub}}$ to the ciphertext $F_2$, and know that the ciphertext $F_2$ was really sent by Alice, so Bob uses Alice's public number-based string $s_{\textrm{Apub}}$ to decrypt continually the ciphertext. After the authentications from TKPDRA-center to Alice's topological signature and Alice's number-based string, Bob can see the plaintext $F$.

\subsubsection{TKPDRA-center group-algorithms}

Due to the thousands of nodes in a network, it is necessary to solve the following problems: Key's topological structure spaces, key's quantity, key's kinds, key's matching, key's infiniteness. TKPDRA-center uses various topological groups to encrypt topologically overall networks.

By Definition \ref{defn:total-W-group-coloring-5}, Theorem \ref{thm:total-W-group-coloring-6} and Corollary \ref{cor:total-W-group-coloring-7}, as well as mixed-graphic groups and infinite mixed-graphic groups. we have the following TKPDRA-center group-algorithms:

\vskip 0.2cm

\begin{asparaenum}[\textrm{\textbf{Group-algo}}-I.]
\item The overall topological encryption algorithm for local area networks based on topological code matrix groups:

\qquad \textbf{\textrm{I-1}}. \textbf{Group-algo}-I based on user private topology signatures and the overall zero.

\qquad \textbf{\textrm{I-2}}. \textbf{Group-algo}-I based on user private topology signatures and the community private zero.

\qquad \textbf{\textrm{I-3}}. \textbf{Group-algo}-I based on user private topology signatures and the user private zero.

\vskip 0.2cm

\item The overall topological encryption algorithm based on the total-colored adjacent matrix group:

\qquad \textbf{\textrm{II-1}}. \textbf{Group-algo}-II based on user private topology signatures and the overall zero.

\qquad \textbf{\textrm{II-2}}. \textbf{Group-algo}-II based on user private topology signatures and the community private zero.

\qquad \textbf{\textrm{II-3}}. \textbf{Group-algo}-II based on user private topology signatures and the user private zero.

\item In the overall topological encryption of a network, different communities use different topological groups, and the communication between communities relies on the coloring transformation of the topological groups.
\end{asparaenum}

\begin{figure}[h]
\centering
\includegraphics[width=16.4cm]{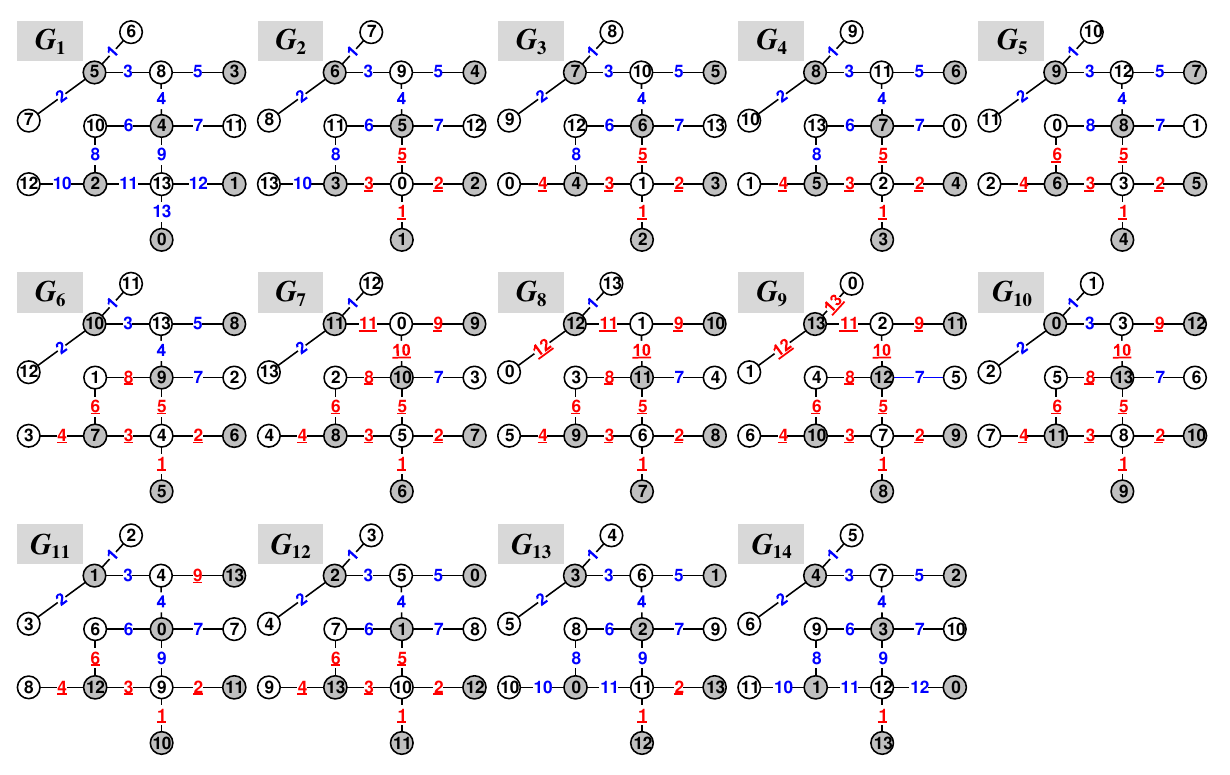}\\
\caption{\label{fig:Hanzi-nian-groups}{\small An every-zero Hanzi-graph group.}}
\end{figure}

\begin{example}\label{exa:8888888888}
Fig.\ref{fig:network-key-11} and Fig.\ref{fig:network-key-22} show us eight networks $B_1,B_2,\dots, B_8$, and they admit graphic group colorings, and $B_i\cong B_j$, as well as $V(B_i)=V(B_j)$.

In Fig.\ref{fig:network-key-11}, each $B_i$ of four networks $B_1,B_2,B_3,B_4$ is encrypted integrally by the every-zero Hanzi-graph group shown in Fig.\ref{fig:Hanzi-nian-groups}, and $B_i$ admits a total graphic-group coloring $F_i$ for $i\in [1,4]$ holding $F_1(V(B_1))=F_2(V(B_2))=F_3(V(B_3))=F_4(V(B_4))$.

In Fig.\ref{fig:network-key-22}, each $B_i$ of four networks $B_1,B_2,B_3,B_4$ is encrypted integrally by the every-zero Hanzi-graph group shown in Fig.\ref{fig:Hanzi-nian-groups}, and $B_i$ admits a total graphic-group coloring $F_i$ for $i\in [1,4]$ holding $F_i(V(B_i))\neq F_j(V(B_j))$ for $i\neq j$.\qqed
\end{example}

\begin{figure}[h]
\centering
\includegraphics[width=16.4cm]{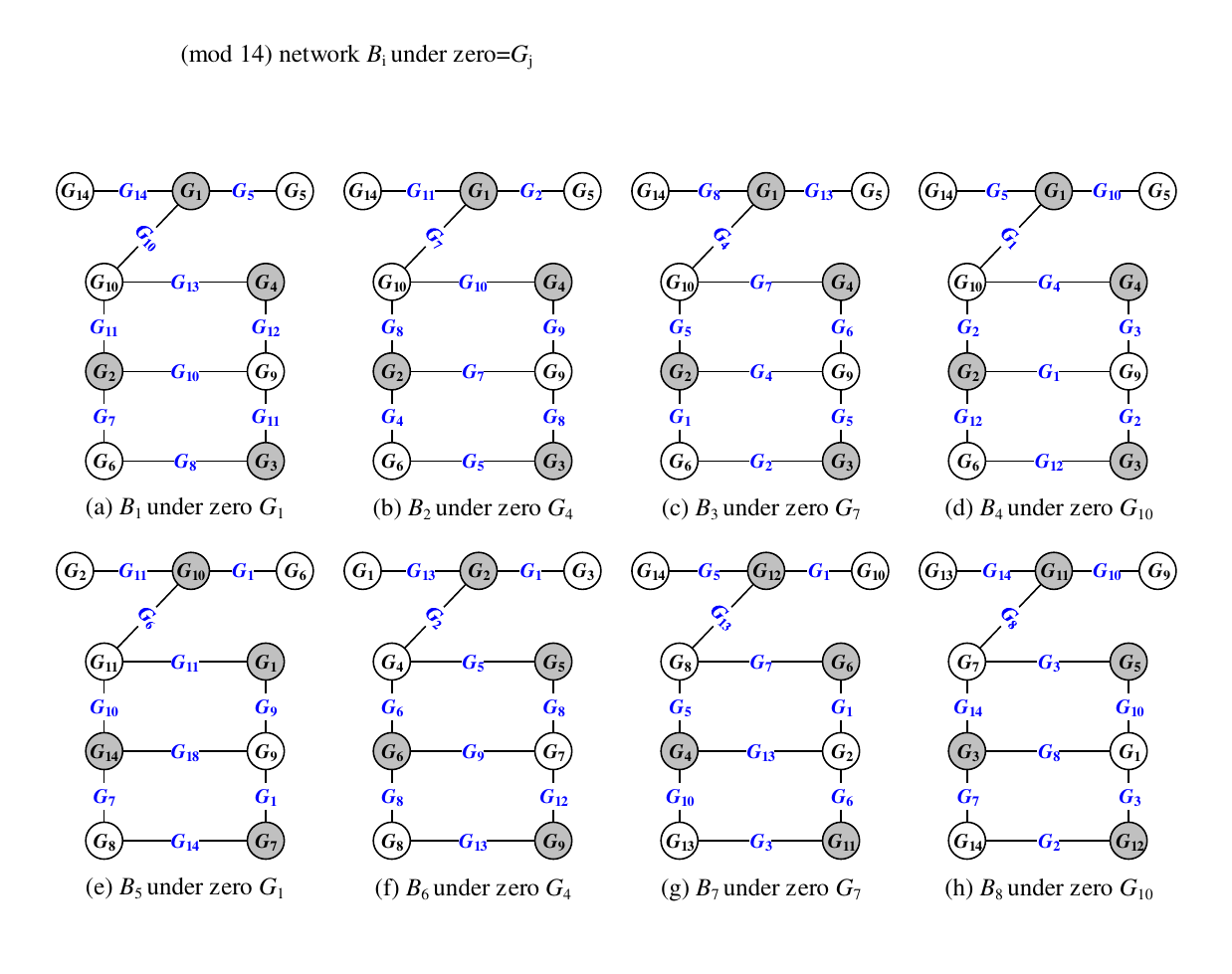}\\
\caption{\label{fig:network-key-11}{\small The first scheme for illustrating TKPDRA-center group-algorithms.}}
\end{figure}

\begin{figure}[h]
\centering
\includegraphics[width=16.4cm]{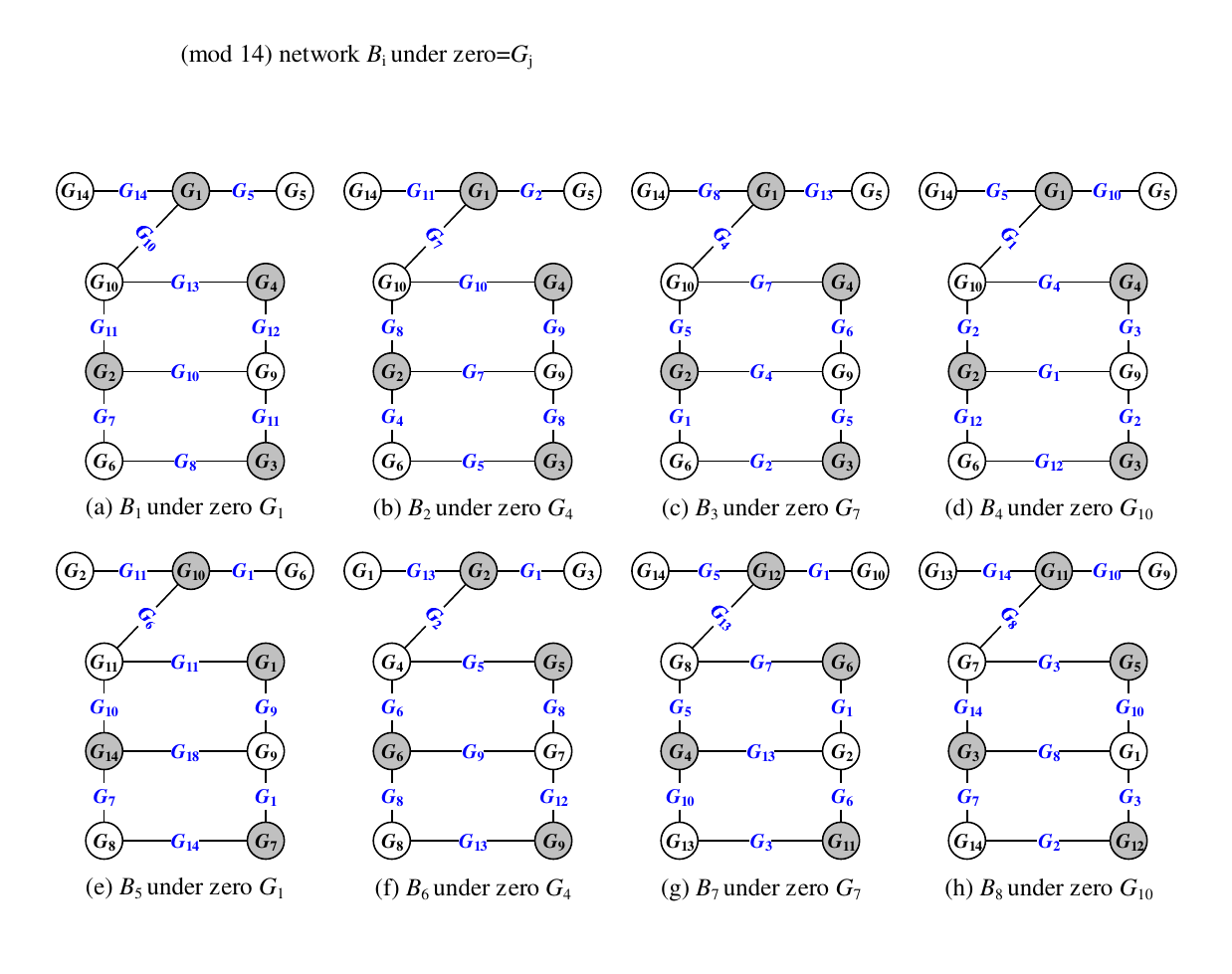}\\
\caption{\label{fig:network-key-22}{\small The second scheme for illustrating TKPDRA-center group-algorithms: Each one of four networks is encrypted integrally by the every-zero Hanzi-graph group shown in Fig.\ref{fig:Hanzi-nian-groups}.}}
\end{figure}

\section{Miscellaneous Topics}

\subsection{Operation-colorings of graphs}

\begin{defn} \label{defn:pan-vertex-coloring-thing-set}
$^*$ A graph $\xi$ admits a \emph{vertex operation-coloring} $F:V(\xi)\rightarrow S_{thing}$ if $F(x)=[\bullet^v_{W}]F(y)$ for each edge $xy \in E(\xi)$, where $[\bullet^v_{W}]$ is a $W$-constraint operation based on the thing set $S_{thing}$.\qqed
\end{defn}

\begin{example}\label{exa:pan-coloring-thing-set-examples}
For illustrating Definition \ref{defn:pan-vertex-coloring-thing-set}, we present the following examples:
\begin{asparaenum}[\textbf{\textrm{PanVc}}-1.]
\item \textbf{Vertex-splitting operation.} The thing set $S_{thing}$ is a graph set $S_{plit}(G)$ obtained by vertex-splitting a connected $(p,q)$-graph into graphs of $q$ edges, and the $W$-constraint operation $[\bullet^v_{W}]$ is the \emph{graph vertex-coinciding operation}. We get the graph $\xi$ with its own vertex set $V(\xi)=S_{plit}(G)$, such that two vertices $u$ and $v$ of the graph $\xi$ is adjacent if and only if the vertex color $F(u)=T_i\in S_{plit}(G)$ can be vertex-split into the vertex color $F(v)=T_j\in S_{plit}(G)$, and $\big | |V(T_i)|-|V(T_i)|\big |=1$.
\item \textbf{Intersection operation.} The thing set $S_{thing}$ is a hyperedge set $\mathcal{E}$ of the hypergraph set $\mathcal{E}(\Lambda^2)$ of a finite set $\Lambda$ (Ref. Remark \ref{rem:hypergraph-terminology-notations}), and the $W$-constraint operation $[\bullet^v_{W}]$ is the \emph{intersection operation}. The graph $\xi$ has its own vertex set $V(\xi)=\mathcal{E}$, such that two vertices $u$ and $v$ of the graph $\xi$ is adjacent if and only if two vertex colors hold $F(u)\cap F(v)=e_i\cap e_j\neq \emptyset$ (or $m\leq |F(u)\cap F(v)|$ with $m\geq 1$).
\item \textbf{Graph homomorphism operation.} The thing set $S_{thing}$ is a graph set $G_{raph}(T_{code})$, and the $W$-constraint operation $[\bullet^v_{W}]$ is the (colored) \emph{graph homomorphism}. Since a graphable Topcode-matrix $T_{code}$ corresponds many graphs, we collect these graphs into a graph set $G_{raph}(T_{code})$. Hence, there is a graph $\xi$ with its own vertex set $V(\xi)=G_{raph}(T_{code})$, such that two vertices $u$ and $v$ of the graph $\xi$ is adjacent if and only if two vertex colors $F(u)=G_i\in G_{raph}(T_{code})$ and $F(v)=G_j\in G_{raph}(T_{code})$ hold the (colored) graph homomorphism $F(u)=G_i\rightarrow G_j=F(v)$, and $\big | |V(G_i)|-|V(G_i)|\big |=1$.
\item \label{eqa:tres-trees}\textbf{Graph add-edge-remove operation.} The thing set $S_{thing}$ is the set $S_{pan}(K_n)$ of spanning trees of a complete graph $K_n$ admitting a labeling defined on $[1,n]$, so the cardinality $|S_{pan}(K_n)|=n^{n-2}$ by the famous Cayley's formula. And the $W$-constraint operation $[\bullet^v_{W}]$ is the \emph{add-edge-remove operation} ``$[\pm_e]$'' on $S_{pan}(K_n)$. The graph $\xi$ with its own vertex set $V(\xi)=S_{pan}(K_n)$, such that two vertices $u$ and $v$ of the graph $\xi$ is adjacent if and only if two vertex colors $F(u)=T_i$ and $F(v)=T_j$ hold the add-edge-remove operation ``$[\pm_e]$'' defined by $T_j=T_i+x_iy_i-u_iv_i$ for $x_iy_i\not\in E(T_i)$ and $u_iv_i\in E(T_i)$.
\item The thing set $S_{thing}=S_{plit}(G)$ is the set of trees of $q$ edges obtained by doing the vertex-splitting operation to a connected $(p,q)$-graph $G$; and the $W$-constraint operation $[\bullet^v_{W}]$ is the \emph{add-edge-remove operation} $[\pm_e]$ on $S_{plit}(G)$. The remainder is as the same as \textbf{\textrm{PanVc}}-\ref{eqa:tres-trees}.
\item The thing set $S_{thing}=S_{pan}(G)$ is the set of all spanning trees of a connected graph $G$; and the $W$-constraint operation $[\bullet^v_{W}]$ is the \emph{add-edge-remove operation} $[\pm_e]$ on $S_{pan}(G)$. The remainder is as the same as \textbf{\textrm{PanVc}}-\ref{eqa:tres-trees}.
\item The thing set $S_{thing}=S_{pan}(G)$ is the set of all spanning trees of a connected graph $G$; and the $W$-constraint operation $[\bullet^v_{W}]$ is the operation $[\bullet^{coin}_{colo}]$ between two spanning trees $T_j,T_k\in S_{pan}(G)$ defined in Definition \ref{defn:vertex-coinciding-two-spanning-trees}. The remainder is as the same as \textbf{\textrm{PanVc}}-\ref{eqa:tres-trees}.

\item The thing set $S_{thing}=S_{uniq}(C)$ is the set of unique cycle graphs of $p$ vertices; and the $W$-constraint operation $[\bullet^v_{W}]$ is the \emph{add-edge-remove operation} ``$[\pm_e]$'' on $S_{uniq}(C)$. The remainder is as the same as \textbf{\textrm{PanVc}}-\ref{eqa:tres-trees}.
\item \textbf{Finite module Abelian additive operation on graphs.} The thing set $S_{thing}=\{G(\mathcal{E})$; $[+][-]\}$ is an every-zero hgypergraph group; and the $W$-constraint operation $[\bullet^v_{W}]$ is the finite module Abelian additive operation
\begin{equation}\label{eqa:555555}
\mathcal{E}_i[+_k]\mathcal{E}_j:=\mathcal{E}_i[+]\mathcal{E}_j[-]\mathcal{E}_k
\end{equation} based on an every-zero hypergraph group $\big \{G(\mathcal{E});[+][-]\big \}$ defined in Definition \ref{defn:general-defi-hypergraph-groups}. The graph $\xi$ admits a \emph{total hypergraph-group coloring} $F:V(\xi)\cup E(\xi)\rightarrow \big \{G(\mathcal{E});[+][-]\big \}$, such that each edge $uv\in E(\xi)$ satisfies that hypergraph $F(u)=\mathcal{E}_i$, $F(v)=\mathcal{E}_j$ and holds the finite module Abelian additive operation
\begin{equation}\label{eqa:555555}
F(uv)=\mathcal{E}_\lambda=\mathcal{E}_i[+]\mathcal{E}_j[-]\mathcal{E}_k=F(u)[+]F(v)[-]\mathcal{E}_k
\end{equation} with $\lambda=i+j-k~(\bmod~N)$ for any preappointed \emph{zero} $\mathcal{E}_k\in \big \{G(\mathcal{E});[+][-]\big \}$.
\item \textbf{Finite module Abelian additive operation on matrices.} The thing set $S_{thing}=\{F(T_{code});[+][-]\}$, and the $W$-constraint operation $[\bullet^v_{W}]$ is the finite module Abelian additive operation. The graph $\xi$ admits a total matrix-group coloring $F:V(\xi)\cup E(\xi)\rightarrow \{F(T_{code})$; $[+][-]\}$, such that each edge $uv\in E(\xi)$ is colored with an induced edge color $F(uv)=T_{code}(G_\lambda,f_\lambda)$ defined by Eq.(\ref{eqa:topcode-matrix-abelian-additive-operation}) and Eq.(\ref{eqa:topcode-matrix-abelian-additive-operation11}) in Definition \ref{defn:topocode-matrices-groups}.\qqed
\end{asparaenum}
\end{example}

\begin{problem}\label{question:444444}
\textbf{Determine} the various graph parameters and topological structures for the graph $\xi$ appeared in Definition \ref{defn:pan-vertex-coloring-thing-set} and Example \ref{exa:pan-coloring-thing-set-examples}.
\end{problem}

\begin{defn} \label{defn:pan-total-coloring-thing-set}
$^*$ Let $S_{thing}$ be a thing set. A graph $\phi$ admits a \emph{total operation-coloring} $F:V(\phi)\cup E(\phi)\rightarrow S_{thing}$ if $F(uv)=F(u)[\bullet^T_{W}]F(v)$ for each edge $uv \in E(\phi)$, where $[\bullet^T_{W}]$ is a $W$-constraint operation on $S_{thing}$, and the graph $\phi$ is called \emph{topen-graph}.\qqed
\end{defn}

\begin{rem}\label{rem:333333}
About Definition \ref{defn:pan-total-coloring-thing-set}, there are many $W$-constraint operations $[\bullet^T_{W}]$ in graph theory and other mathematical regions.

The topological structure of a topen-graph is a tool of \emph{semi-structured data}, and a \emph{soft mathematical expression} in the field of graph theory. Notice that a graph is a natural representation of the encoding relationship structure, and calculations defined through graph structured data are widely used in various fields.

The thing set $S_{thing}$ in Definition \ref{defn:pan-vertex-coloring-thing-set} and Definition \ref{defn:pan-total-coloring-thing-set} may be a number set, or a coloring set, or a set-set, or a graph set, or a matrix set, or a hyperedge set, or a vector set, or a string set, or a function set, or a set of elements of a topological group, or a set of any things, \emph{etc}.\qqed
\end{rem}

\begin{example}\label{exa:8888888888}
For understanding Definition \ref{defn:pan-total-coloring-thing-set}, we have the following examples:
\begin{asparaenum}[\textbf{\textrm{PanTc}}-1.]
\item \textbf{Magic-constraint operation.} The thing set $S_{thing}=\textbf{\textrm{S}}_{et}(\leq n)$ is the set of integer sets of form $\{\alpha_{1},\alpha_{2},\dots ,\alpha_{m}\}$ with each element $\alpha_{j}\in Z^0$ for $j\in [1,m]$ and $m\leq n$. The $W$-constraint operation $[\bullet^T_{W}]$ is the homogeneous $(abc)$-magic operation. The graph $\phi$ admits a \emph{$\{W_i\}^A_{i=1}$-constraint total set-coloring} based on the set $\textbf{\textrm{S}}_{et}(\leq n)$ (Ref. Definition \ref{defn:n-di-set-colorings-definition}).
\item \textbf{Parameterized operation.} Definition \ref{defn:more-string-total-coloring} shows us the parameterized total string-colorings, parameterized total set-colorings and a parameterized total vector-colorings for the operation-colorings of graphs.

\qquad The graph $\phi$ admits each one of $W$-constraint $(k_s,d_s)$-colorings including gracefully $(k_s,d_s)$-total coloring, odd-gracefully $(k_s,d_s)$-total coloring, edge anti-magic $(k_s,d_s)$-total coloring, harmonious $(k_s,d_s)$-total coloring, odd-elegant $(k_s,d_s)$-total coloring, edge-magic $(k_s,d_s)$-total coloring, edge-difference $(k_s,d_s)$-total coloring, felicitous-difference $(k_s,d_s)$-total coloring, graceful-difference $(k_s,d_s)$-total coloring, odd-edge edge-magic $(k_s,d_s)$-total coloring, odd-edge edge-difference $(k_s,d_s)$-total coloring, odd-edge felicitous-difference $(k_s,d_s)$-total coloring, odd-edge graceful-difference $(k_s,d_s)$-total coloring in Definition \ref{defn:more-string-total-coloring}.

\qquad The $W$-constraint operation $[\bullet^T_{W}]$ is one of $F(uv)=|F(u)-F(v)|$, $F(uv)=F(u)+F(v)~(\bmod~M)$, $F(uv)+|F(u)-F(v)|=k$, $\big ||F(u)-F(v)|-F(uv)\big |=k$, $F(u)+F(uv)+F(v)=k$ and $\big |F(u)+F(v)-F(uv)\big |=k$ for each edge $uv \in E(\phi)$.

\item \textbf{Topological group operation.} Definition \ref{defn:total-W-group-coloring-5} shows us:

\qquad (i) The total-colored adjacent matrix group coloring $F_{\textrm{adj}}$;

\qquad (ii) the total-colored graph matrix group coloring $F_{gra}$;

\qquad (iii) the Topcode-matrix group coloring $F_{mat}$;

\qquad (iv) the topological string group coloring $F_{string}$;

\qquad (v) the total-colored parameterized matrix group coloring $F_{param}$;

\qquad (vi) the total-colored graph-set group coloring $F_{set}$; and

\qquad (vii) the thing group coloring $F_{\textrm{thing}}$.\qqed
\end{asparaenum}
\end{example}

\subsection{PCTSMGHS-string problem}

\begin{defn} \label{defn:generalization-pan-coloring-thing-set}
$^*$ \textbf{A generalization for Definition \ref{defn:pan-total-coloring-thing-set}.} Each graph $\phi_i$ with $i\in [1,n(S_{thing})]$ admits a \emph{total operation-coloring} $F_i:V(\phi_i)\cup E(\phi_i)\rightarrow S_{thing}$ based on the $W_i$-constraint operation defined on a thing set $S_{thing}$, such that each edge $uv \in E(\phi)$ is colored with $F(uv)=F(u)[\bullet^T_{W_i}]F(v)$, and moreover if
\begin{equation}\label{eqa:555555}
S_{thing}=\bigcup^{n(S_{thing})}_{i=1} F_i(V(\phi_i)\cup E(\phi_i))
\end{equation} then the coloring family of $F_1,F_2,\dots ,F_{n(S_{thing})}$ is \emph{full}.\qqed
\end{defn}

\begin{defn} \label{defn:compound-pan-coloring-thingsets}
$^*$ \textbf{Compound operation-coloring.} Since $S_{thing}=\bigcup^{n(S_{thing})}_{i=1} F_i(V(\phi_i)\cup E(\phi_i))$ in Definition \ref{defn:generalization-pan-coloring-thing-set}, let $e^{F}_i=F_i(V(\phi_i)\cup E(\phi_i))$ be a subset of the power set $S^2_{thing}$ with $i\in [1,n(S_{thing})]$, we get a hyperedge set $\{e^{F}_i:~i\in [1,n(S_{thing})]\}=\mathcal{E}^{F}\in \mathcal{E}\big (S_{thing}^2\big )$ (Ref. Remark \ref{rem:hypergraph-terminology-notations}). A $(p,q)$-graph $L$ admits a \emph{total compound-set-coloring} $\theta:V(L)\cup E(L)\rightarrow \mathcal{E}^{F}$ holding $\theta(xy)\supseteq \theta(x)\cap \theta(y)\neq \emptyset$ for each edge $xy \in E(L)$. And moreover, we say that the total set-coloring $\theta$ to be \emph{full} if the total color set $\theta(V(L)\cup E(L))=\mathcal{E}^{F}$.\qqed
\end{defn}

\begin{quote}
\textbf{PCTSMGHS-string problem.} Suppose that each thing $t_i\in S_{thing}=\{t_1,t_2,\dots, t_n\}$ in Definition \ref{defn:pan-total-coloring-thing-set} corresponds a number-based string $s_{nbs}(t_i)$ defined on $[0,9]$ with $i\in [1,n]$.

By Definition \ref{defn:generalization-pan-coloring-thing-set}, each subset $e^{F}_i=F_i(V(\phi_i)\cup E(\phi_i))=\{t_{i,1},t_{i,2},\dots ,t_{i,c_i}\}$ with $t_{i,j}$ for $j\in[1,c_i]$ $i\in [n(S_{thing})]$, each $t_{i,j}$ corresponds a number-based string $s_{nbs}(t_{i,j})$, so each subset $e^{F}_i\in \mathcal{E}^{F}$ corresponds a number-based string $s_{nbs}\big (e^{F}_i\big )$ which is a permutation of number-based strings $s_{nbs}(t_{i,1}), s_{nbs}(t_{i,2})$, $\dots $, $s_{nbs}(t_{i,c_i})$, there are $(c_i)!$ number-based strings like the number-based string $s_{nbs}\big (e^{F}_i\big )$.

In Definition \ref{defn:compound-pan-coloring-thingsets}, the $(p,q)$-graph $L$ has its own Topcode-matrix $T_{code}(L,\theta)_{3\times q}=(X,E$, $Y)^T$, where $X=(\theta(x_1)$, $\theta(x_2)$, $\dots $, $\theta(x_q))$, $E=(\theta(x_1y_1),\theta(x_2y_2),\dots ,\theta(x_qy_q))$, and $y=(\theta(Y_1)$, $\theta(Y_2)$, $\dots $, $\theta(Y_q))$. We have:

$\theta(x_i)=e^{F}_{\alpha_i}$ corresponds a number-based string $s_{nbs}\big (e^{F}_{\alpha_i}\big )$ for some $\alpha_i\in [1,n(S_{thing})]$

$\theta(x_j y_j)=e^{F}_{\beta_j}$ corresponds a number-based string $s_{nbs}\big (e^{F}_{\beta_j}\big )$ for some $\beta_j\in [1,n(S_{thing})]$,

$\theta(y_k)=e^{F}_{\gamma_k}$ corresponds a number-based string $s_{nbs}\big (e^{F}_{\gamma_k}\big )$ for some $\gamma_k\in [1,n(S_{thing})]$

Thereby, the Topcode-matrix $T_{code}(L,\theta)_{3\times q}=(X,E,Y)^T$ induces $(3q)!$ number-based strings, in which each number-based string $s=c_1c_2\cdots c_m$ with $c_i\in [0,9]$ is a permutation of number-based strings $s_{nbs}\big (e^{F}_{\alpha_1}\big )$, $s_{nbs}\big (e^{F}_{\alpha_2}\big )$, $\dots $, $s_{nbs}\big (e^{F}_{\alpha_q}\big )$, $s_{nbs}\big (e^{F}_{\beta_1}\big )$, $s_{nbs}\big (e^{F}_{\beta_2}\big )$, $\dots $, $s_{nbs}\big (e^{F}_{\beta_q}\big )$, $s_{nbs}\big (e^{F}_{\gamma_1})$, $s_{nbs}(e^{F}_{\gamma_2}\big )$, $\dots $, $s_{nbs}\big (e^{F}_{\gamma_q}\big )$. We say the number-based string $s$ to be \emph{PCTSMGHS-string}, since $s$ was generated from operation-colorings based on thing set, two or more graphs, hyperedge sets.
\end{quote}

If someone want to decode a PCTSMGHS-string $s=c_1c_2\cdots c_m$ with $c_i\in [0,9]$ introduced in the PCTSMGHS-string problem, then he will have to

\begin{asparaenum}[(1)]
\item \textbf{Find} out a $(p,q)$-graph $L$ (as a \emph{topological signature}), has its own authentication matching $L^*$ admitting a coloring $\theta$ (Ref. Definition \ref{defn:compound-pan-coloring-thingsets}), and use the Topcode-matrix $T_{code}(L,\theta)_{3\times q}$ to induce the number-based string $s$.
\item \textbf{Find} out $\theta(x_i)=e^{F}_{\alpha_i}$, $\theta(x_j y_j)=e^{F}_{\beta_j}$ and $\theta(y_k)=e^{F}_{\gamma_k}$ with $i,j,k\in [1,q]$ from the number-based string $s$ in order to determine the hyperedge set $\mathcal{E}^{F}$ for the total compound-set-coloring $\theta:V(L)\cup E(L)\rightarrow \mathcal{E}^{F}$.
\item \textbf{Find} out each graph $\phi_i$ (as a \emph{topological signature}, has its own authentication matching $\phi_i^*$) with $i\in [1,n(S_{thing})]$ admitting a total operation-coloring $F_i:V(\phi_i)\cup E(\phi_i)\rightarrow S_{thing}$ based on the $W_i$-constraint operation (Ref. Definition \ref{defn:generalization-pan-coloring-thing-set})
\item \textbf{Find} out the thing set $S_{thing}$ (as a \emph{public-key}), and use it to make an authentication with a pregiven thing set $S^*_{thing}$ (as a \emph{private-key}).
\item \textbf{Finding} the graphs $L$ and $\phi_i$ with $i\in [1,n(S_{thing})]$ is NP-complete, since the subgraph isomorphic problem is NP-complete.
\item \textbf{Seeking} the total compound-set-coloring $\theta$ and total operation-colorings $F_i$ with $i\in [1$, $n(S_{thing})]$ also is not only NP-hard, bur also \#P-hard.
\item \textbf{Determining} the elements of the thing set $S_{thing}$ is simply impossible, since each element of the thing set $S_{thing}$ has been replaced by a number-based string.
\end{asparaenum}
\vskip 0.2cm

\textbf{We claim}: PCTSMGHS-strings cannot be deciphered, even with quantum computation when as the graphs $L$ and $\phi_i$ with $i\in [1,n(S_{thing})]$ have huge numbers of vertices and edges.

\subsection{Assembling graphs with hypergraphs}

Hypergraphs are in many where of graph theory, an example is as follows:

\begin{defn} \label{defn:equitable-total-coloring-def}
An $i$-\emph{color ve-set} $S_i=V_i\cup E_i$ consists of $V_i=\{u: f(u)=i,u\in V(G)\}$ and $E_i=\{xy: f(xy)=i,xy\in E(G)\}$ for a graph $G$ admitting a proper total coloring $f:V(G)\cup E(G)\rightarrow [1,k]$. If $\big ||S_i|-|S_j|\big |\leq 1$ for any pair of color ve-sets $S_i$ and $S_j$ when $i\neq j$, we say $f$ to be $k$-\emph{equitable total coloring} of the graph $G$.
Straightly, the number $$\chi\,''_{e}(G)=\min\{k: \text{ over all
$k$-equitable total colorings of }G \}$$ is called \emph{equitable total chromatic number} of the graph $G$.\qqed
\end{defn}

\begin{conj} \label{conj:c4-Weifan-Wang-2002}
(Weifan Wang, 2002) For every graph $G$, then $\chi\,''_e(G)\leq \Delta(G)+2$.
\end{conj}

\begin{rem}\label{rem:333333}
Since the $i$-color ve-set $S_i=V_i\cup E_i$ with $i\in [1,k]$, we get a hypergraph $\mathcal{H}_{yper}=(\Lambda,\mathcal{E})$ with its own vertex set $\Lambda(G)=V(G)\cup E(G)$ and the hyperedge set $\mathcal{E}=\bigcup^k_{i=1}S_i$ holding $\Lambda(G)=\bigcup_{S_i\in \mathcal{E}}S_i$.

However, assembling the original graph $G$ admitting a proper total coloring $f$ by the hyperedge set $\mathcal{E}=\bigcup^k_{i=1}S_i$ is not easy, and produces other graphs differing from the original graph $G$.\qqed
\end{rem}

\begin{problem}\label{question:assembl-graphs-hypergraphs}
A total coloring is a partitioning of the vertices and edges of the graph into \emph{total independent sets}. For a graph $G$, we

(i) let $V_i=\{u_{i,1},u_{i,2},\dots ,u_{i,a_i}\}$ be an independent vertex set of $V(G)$, and each vertex $u_{i,j}\in V_i$ with $j\in [1,a_i]$ is colored with color $i$ (Ref. Eq.(\ref{eqa:edges-not-colored})).

(ii) let $E_i=\{e_{i,1},e_{i,2},\dots ,e_{i,b_i}\}$ be an independent edge set of $E(G)$, and each edge $e_{i,j}\in E_i$ with $j\in [1,b_i]$ is colored with $i$th color, but two ends of the edge $e_{i,j}$ are not colored (Ref. Eq.(\ref{eqa:vertices-not-colored})).

For the pregiven $i$-color ve-set $S_i=V_i\cup E_i$ with $i\in [1,k]$ defined above, we get a hypergraph $\mathcal{H}_{yper}=(\Lambda(G),\mathcal{E})$ with its own vertex set $\Lambda(G)=V(G)\cup E(G)$ and its own hyperedge set $\mathcal{E}=\bigcup^k_{i=1}S_i$ holding $\Lambda(G)=\bigcup_{S_i\in \mathcal{E}}S_i$. \textbf{Dose} $G$ admit a proper total coloring $f:V(G)\cup E(G)\rightarrow [1,k]$, such that $\{w:~w\in V(G)\cup E(G)\textrm{ and }f(w)=i\}=S_i$ for each $i\in [1,k]$?

Without doubt, there are some hyperedge sets $\mathcal{E}^a\in \mathcal{E}\big (\Lambda^2(G)\big )$ with $a\in [1,m]$, such that each hypergraph $\mathcal{H}^a_{yper}=(\Lambda(G),\mathcal{E}^a)$ produces a proper total coloring of the graph $G$, where each hyperedge set $\mathcal{E}^a=\bigcup^k_{i=1} S^a_i=\bigcup^k_{i=1} (V^a_i\cup E^a_i)$ with $a\in [1,m]$, and moreover subsets $V^a_i\subset V(G)$ and $E^a_i\subset E(G)$ are independent sets of the graph $G$.
\end{problem}

\begin{example}\label{exa:8888888888}
(i) A colored graph $H$ can be vertex-split into edge-disjoint colored graphs $H_1,H_2$, $\dots$, $H_m$. We can use these colored graphs $H_1,H_2,\dots,H_m$ to encrypt digital files, and use the colored graph $H$ to decrypt the encrypted files, in other words, we can assemble the colored graphs $H_1,H_2,\dots,H_m$ into the original colored graph $H$. In general, it is quite difficult to assemble the colored graphs $H_1,H_2,\dots,H_m$ into the original colored graph $H$, since no polynomial algorithm is for assembling action.

(ii) By a Topcode-matrix $T_{code}(G,f)$ defined in Definition \ref{defn:total-coloring-Topcode-matrixs}, we can make the following two incomplete Topcode-matrices

\begin{equation}\label{eqa:edges-not-colored}
{
\begin{split}
T_{code}(G,f_v)= \left(
\begin{array}{cccccccccc}
f(x_1) & f(x_2) & \cdots & f(x_q)\\
ue_1 & ue_2 & \cdots & ue_q\\
f(y_1) & f(y_2) & \cdots & f(y_q)
\end{array}
\right)_{3\times q}=(f(X),un(E),f(Y))^T_{3\times q}
\end{split}}
\end{equation}
and
\begin{equation}\label{eqa:vertices-not-colored}
{
\begin{split}
T_{code}(G,f_e)= \left(
\begin{array}{cccccccccc}
uw_1 & uw_2 & \cdots & uw_q\\
f(e_1) & f(e_2) & \cdots & f(e_q)\\
uv_1 & uv_2 & \cdots & uv_q
\end{array}
\right)_{3\times q}=(un(X),f(E),un(Y))^T_{3\times q}
\end{split}}
\end{equation} where $un(X),un(Y)$ and $un(E)$ are unknown vectors.

Since two incomplete Topcode-matrices $T_{code}(G,f_e)$ and $T_{code}(G,f_v)$ can be assembled into $T_{code}(G,f)$, we can consider $T_{code}(G,f_e)$ as a \emph{private-key}, and $T_{code}(G,f_v)$ as a \emph{public-key}, then the Topcode-matrix $T_{code}(G,f)$ is just the topological authentication of the private-key $T_{code}(G,f_e)$ and the public-key $T_{code}(G,f_v)$.\qqed
\end{example}

\subsection{Applying topological lattices}

We, in the previous sections, have introduced the following topological lattices:

\begin{asparaenum}[\textbf{\textrm{Latt}}-1.]
\item The tree-graph lattice $\textbf{\textrm{L}}(Z^0[\bullet]T_{splt}(G))$ defined in Eq.(\ref{eqa:tree-graph-lattice5});
\item the edge-coincided vertex-intersected graph lattice $\textbf{\textrm{L}}\big (Z^0[\ominus]\textbf{\textrm{H}}\big )$ defined in Eq.(\ref{eqa:hypergraph-vertex-intersected graph-lattice-edge});
\item the hyperedge-coincided hypergraph lattice $\textbf{\textrm{L}}\big (Z^0[\ominus]\textbf{\textrm{H}}_{yper}\big )$ defined in Eq.(\ref{eqa:hypergraph-lattice-by-hyperedge-coinciding});
\item the vertex-coincided vertex-intersected graph lattice $\textbf{\textrm{L}}\big (Z^0[\bullet]\textbf{\textrm{T}}\big )$ defined in Eq.(\ref{eqa:hypergraph-vertex-intersected graph-lattice-vertex});
\item the mixed vertex-intersected graph lattice $\textbf{\textrm{L}}\big (Z^0[\bullet\ominus]\textbf{\textrm{T}}\big )$ defined in Eq.(\ref{eqa:hypergraph-vertex-intersected graph-lattice-mixed});
\item the $K_{tree}$-spanning lattice $\textbf{\textrm{L}}(Z^0[\bullet]\textbf{\textrm{T}}^c)$ defined in Eq.(\ref{eqa:K-tree-spanning-lattice});
\item the vertex-coincided graphic group lattice $\textbf{\textrm{L}}\big (Z^0[\bullet]F_f(G)\big )$ defined in Eq.(\ref{eqa:graphic-group-lattices}) of Definition \ref{defn:graphic-group-coincide-general};
\item the $4$-color star-graphic lattice $\textbf{\textrm{L}}(\Delta\overline{\ominus} \textbf{\textrm{I}}_{ce}(PS,M))$ defined in Eq.(\ref{eqa:4-color-star-system-lattices}).
\item the operation vertex-intersected network lattice $\textbf{\textrm{L}}\big (\textbf{\textrm{F}}(t)\lhd Z^0\textbf{\textrm{O}}_i\big )$ defined in Eq.(\ref{eqa:operation-base-graphic-latticess});
\item the scale-free operation vertex-intersected network meta-lattice $\textbf{\textrm{L}}\big (\textbf{\textrm{F}}_{scale}(t)\lhd Z^0\textbf{\textrm{O}}_i\big )$ defined in Eq.(\ref{eqa:scale-free-network-lattice11});
\item the hypernetwork meta-lattices $\textbf{\textrm{L}}^{sf}\big (\Lambda^{p(t)}_{\cup}(t),\mathcal{E}_{\cup}(t)\mid \textbf{\textrm{B}}(H(t),p(t))\big )$ defined in Eq.(\ref{eqa:scale-free-meta-lattices-11}), and $\overline{\textbf{\textrm{L}}}^{sf}\big (\Lambda^{p(t)}_{\cup}(t),\overline{\mathcal{E}}_{\cup}(t)\mid \textbf{\textrm{B}}(H(t),p(t))\big )$ defined in Eq.(\ref{eqa:scale-free-meta-lattices-22});
\end{asparaenum}

We try to apply topological lattices in some problems of graph theory in this subsection.

\subsubsection{Hamiltonian graph lattices}

\textbf{Cycle-joining operation.} In \cite{Yao-Wang-Ma-Wang-Degree-sequences-2021}, the \emph{cycle-joining operation} $[\overline{\circ \circ}]$ on hamiltonian graphs is defined as: Let $xy$ be an edge of a Hamilton cycle of a Hamilton graph $G$, $uv$ be an edge of a Hamilton cycle of a Hamilton graph $H$, and $12$ be an edge of a Hamilton cycle of a Hamilton graph $L$ show in Fig.\ref{fig:cycle-hamilton}. Remove the edge $xy$ from $G$ and remove the edge $uv$ from $H$, and join the vertex $x$ with the vertex $u$ by a new edge $xu$, next join the vertex $y$ with the vertex $v$ by a new edge $yv$, the resultant graph is denoted as $G[\overline{\circ \circ}]H$, so there is a Hamilton cycle in the \emph{Hamilton cycle-joining graph} $G[\overline{\circ \circ}]H$ (see Fig.\ref{fig:cycle-hamilton}). Moreover, we can do the cycle-joining operation on three Hamilton graphs $L$, $F$ and $G[\overline{\circ \circ}]H$, and get a Hamilton cycle-joining graph $\{[G[\overline{\circ \circ}]H][\overline{\circ \circ}]L\}[\overline{\circ \circ}]F$.

\begin{figure}[h]
\centering
\includegraphics[width=16cm]{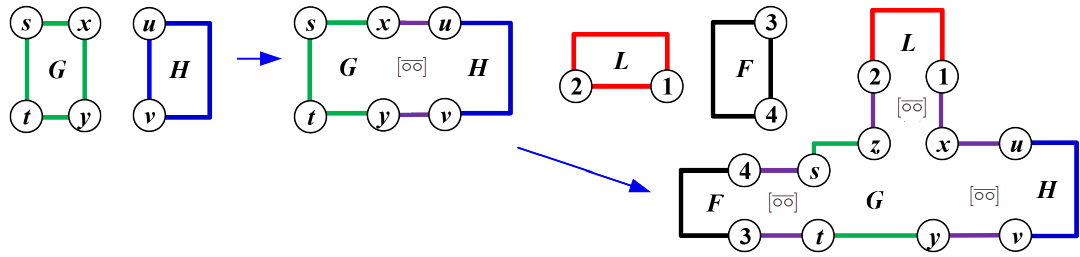}\\
\caption{\label{fig:cycle-hamilton}{\small A scheme for illustrating the cycle-joining operation.}}
\end{figure}

Suppose that $G_k$ is a hamiltonian graph satisfied a hamiltonian condition $k$ with $k\in [1, m]$, the \emph{hamiltonian graph base} $\textbf{\textrm{H}}=(G_k)^m_{k=1}$ is linear independent from each other, that is, each hamiltonian graph $G_k$ is not the result of some $G_{i_1},G_{i_2},\dots ,G_{i_j}$ under the cycle-joining operation ``$[\overline{\circ \circ}]$''. Each graph of the following set
\begin{equation}\label{eqa:555555}
\textbf{\textrm{L}}(Z^0[\overline{\circ \circ}]\textbf{\textrm{H}})=\big \{[\overline{\circ \circ}]^m_{k=1}a_kG_k:a_k\in Z^0,G_k\in \textbf{\textrm{H}}\big \}
\end{equation} is a hamiltonian graph, so we call $\textbf{\textrm{L}}(Z^0[\overline{\circ \circ}]\textbf{\textrm{H}})$ a \emph{hamiltonian graph lattice}. Obviously, each graph $H\in \textbf{\textrm{L}}(Z^0[\overline{\circ \circ}]\textbf{\textrm{H}})$ does not obey any one of these $m$ hamiltonian conditions when $A=\sum ^m_{k=1}a_k\geq 2$. Thereby, we have at least $B=2^m-1=\sum ^m_{k=1}{A \choose k}$ different hamiltonian graphs $H_i\in \textbf{\textrm{L}}(Z^0[\overline{\circ \circ}]\textbf{\textrm{H}})$ with $i\in [1,B]$, such that $(H_k)^{B}_{k=1}$ is a new hamiltonian graph base.

The above argue means that no necessary and sufficient condition exists for judging whether graphs are hamiltonian.

\begin{rem}\label{rem:ABC-conjecture}
Let $C$ be a Hamilton cycle of a graph $T$ in a hamiltonian graph lattice $\textbf{\textrm{L}}(Z^0[\overline{\circ \circ}]\textbf{\textrm{H}})$. We add some edges $x_iy_i$ to join some pairs of non-adjacent vertices $x_i$ and $y_i$ of the Hamilton cycle $C$ for $i\in [1,s]$, the resultant graph $T+E_{s}$ has its own Hamilton cycle $C$, so that $G_k\subseteq T+E_{s}$ if $a_k\neq 0$.\qqed
\end{rem}

\begin{thm}\label{thm:hamiltonian-iff}
\cite{Yao-Wang-2106-15254v1} A connected $(p,q)$-graph $G$ with $p\geq 4$ and $q\geq p+1$ is hamiltonian if and only if there exists an edge $x_ix_j$ of the connected graph $G$, such that doing an edge-splitting operation to the edge $x_ix_j$ produces a graph $G\wedge x_ix_j$ having two vertex-disjoint Hamilton graphs $G_1,G_2$, and an edge set $E(V(G_1),V(G_2))$ between $G_1$ and $G_2$.
\end{thm}

\begin{rem}\label{rem:hamiltonian-iff}
\cite{Yao-Wang-2106-15254v1} The result in Theorem \ref{thm:hamiltonian-iff} is constructional, since the graph $G$ is not a cycle. Some one of $G_1$ and $G_2$ maybe a cycle, so we can use Theorem \ref{thm:hamiltonian-iff} to another one, until, each Hamilton graph is a cycle only, we stop to apply Theorem \ref{thm:hamiltonian-iff} to graphs.\qqed
\end{rem}

\subsubsection{Edge-hamiltonian lattices}

\begin{problem}\label{qeu:characterize-edge-hamiltonian}
Suppose that a graph $G$ contains a particular subgraph $L$ with $|E(L)|\geq 1$, we say $G$ to be \emph{edge-graph-$L$}, if each edge $uv\in E(G)$ is in some subgraph $L$ of the graph $G$. Obviously, any connected graph is \emph{edge-spanning-tree}. If each edge $uv\in E(G)$ is in a Hamilton cycle of a connected graph $G$, we say $G$ to be \emph{edge-hamiltonian}. \textbf{Characterize} edge-hamiltonian graphs.
\end{problem}

\begin{example}\label{exa:8888888888}
In Fig.\ref{fig:edge-hamiltonian}, the complete graph $K_4$ is edge-hamiltonian defined in Problem \ref{qeu:characterize-edge-hamiltonian}, so are $G_1$ and $G_2$. However, the graph $G_3$ shown in Fig.\ref{fig:edge-hamiltonian}(d) is not edge-hamiltonian, since two edges $uw$ and $xy$ of $G_3$ are not in any Hamilton cycle of $G_3$.\qqed
\end{example}

\begin{figure}[h]
\centering
\includegraphics[width=15cm]{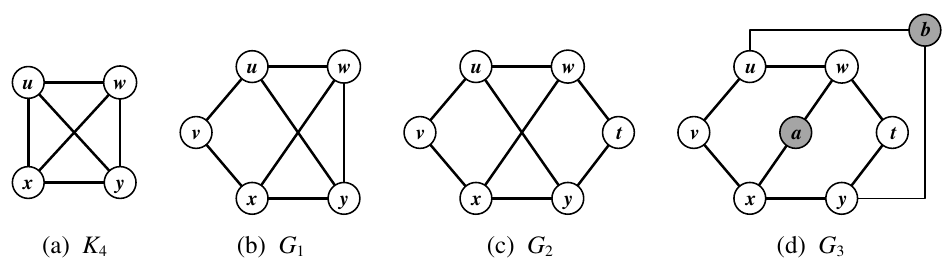}\\
\caption{\label{fig:edge-hamiltonian}{\small (a), (b) and (c) are edge-hamiltonian, (d) the graph $G_3$ is not edge-hamiltonian since two edges $uw$ and $xy$ of the graph $G_3$ are not in any hamiltonian cycle of the graph $G_3$.}}
\end{figure}

\begin{example}\label{exa:8888888888}
There are three Hamilton cycles $C^*_1=uwyxu$, $C^*_2=uywxu$ and $C^*_3=uwxyu$ in the complete graph $K_4$ shown in Fig.\ref{fig:edge-hamiltonian}(a), and we let $e_i=E(C^*_i)$ with $i\in [1,3]$, thus, we get the hypergraph $\mathcal{H}^{K_4}_{yper}=(\Lambda,\mathcal{E}^*)$, where $\mathcal{E}^*=\{e_1,e_2,e_3\}$. Clearly, a vertex-intersected graph $G(K_4)$ of the hypergraph $\mathcal{H}^{K_4}_{yper}$ is a complete graph $K_3$, since $e_1\cap e_2=\{ux,wy\}$, $e_1\cap e_3=\{xy,uw\}$ and $e_2\cap e_3=\{uy,wx\}$.\qqed
\end{example}

\begin{conj}\label{conj:0000000000}
$^*$ Let $H$ be an edge-hamiltonian graph, and let $E_{dha}(H)$ be the set of Hamilton cycles of the graph $H$. For each hyperedge set $\mathcal{E}\in \mathcal{E}(E^2_{dha}(H))$ with $\bigcup _{e\in \mathcal{E}}e=E_{dha}(H)$, we have a hypergraph $\mathcal{H}_{yper}=(E_{dha}(H),\mathcal{E})$. We say: The number of Hamilton cycles of the set $E_{dha}(H)$ holds $|E_{dha}(H)|\leq \big \lceil \frac{\Delta(H)}{2}\big \rceil$.
\end{conj}

For considering Problem \ref{qeu:characterize-edge-hamiltonian}, we show the concept of \emph{edge-hamiltonian graphic lattice}.

\begin{defn} \label{defn:definition-edge-hamiltonian-lattices}
$^*$ Suppose that $\textbf{\textrm{H}}=(H_1,H_2,\dots ,H_m)$ with $H_i\not \subset H_j$ and $H_i\not \cong H_j$ if $i\neq j$ is an \emph{edge-hamiltonian graph base}, where each graph $H_i$ with $i\in [1,m]$ is an edge-hamiltonian graph, and there is an \emph{operation base} $\textbf{\textrm{O}}=(O_1,O_2,\dots ,O_n)$ with $O_i\neq O_j$ if $i\neq j$ such that doing each operation $O_k$ to two edge-hamiltonian graphs $H_i$ and $H_j$ produces a new edge-hamiltonian graph, denoted as $O_k\langle H_i,H_j \rangle$.
We take a permutation $T_1,T_2,\dots ,T_A$ of edge-hamiltonian graphs $a_1H_1,a_2H_2,\dots ,a_mH_m$, where $A=\sum ^m_{i=1}a_i\geq 1$ for $a_i\in Z^0$ and $H_i\in \textbf{\textrm{H}}$, and do an operation $O_{i_1}$ selected randomly from $\textbf{\textrm{O}}$ to $T_1,T_2$, the resultant graph is a new edge-hamiltonian graph $L_1=O_{i_1}\langle T_1,T_2\rangle$ with selecting randomly $O_{i_1}\in \textbf{\textrm{O}}$. Again we get a new edge-hamiltonian graph $L_2=O_{i_2}\langle L_1,T_3\rangle$ with selecting randomly $O_{i_2}\in \textbf{\textrm{O}}$, in general, we have edge-hamiltonian graphs $L_k=O_{i_k}\langle L_{k-1},T_{k+1}\rangle$ with $O_{i_k}\in \textbf{\textrm{O}}$ and $k\in [1,A-1]$. For simplicity, we write the last edge-hamiltonian graph $L_{A-1}$ as follows
\begin{equation}\label{eqa:555555}
L_{A-1}=O_{i_{A-1}}\langle L_{A-2},T_{A}\rangle =\big [\textbf{\textrm{O}}\big ]^A_{j=1}T_j=\big [\textbf{\textrm{O}}\big ]^M_{i=1}a_iH_i,
\end{equation} so we get an \emph{edge-hamiltonian graphic lattice}
\begin{equation}\label{eqa:edge-hamiltonian-graphic-lattice}
\textbf{\textrm{L}}(Z^0[\textbf{\textrm{O}}]\textbf{\textrm{H}})=\left \{\big [\textbf{\textrm{O}}\big ]^M_{i=1}a_iH_i:a_i\in Z^0,H_i\in \textbf{\textrm{H}}\right \}
\end{equation} with $\sum ^m_{i=1}a_i\geq 1$. \qqed
\end{defn}

\begin{example}\label{exa:operation-base-not-empty}
Let $\textbf{\textrm{H}}=(H_1,H_2,\dots ,H_m)$ be an edge-hamiltonian graph base. Since each Hamilton cycle of an edge-hamiltonian graph $H_i$ runs over each vertex $u_i\in V(H_i)$, we vertex-split the vertex $u_i$ into $a_i$ and $b_i$ such that $u_i=a_i\bullet b_i$ (see Fig.\ref{fig:edge-hamiltonian-graphs}(a)). Similarly, we vertex-split a vertex $v_j$ of another edge-hamiltonian graph $H_j$ with $i\neq j$ into $s_j$ and $t_j$ such that $v_j=s_j\bullet t_j$ (see Fig.\ref{fig:edge-hamiltonian-graphs}(b)). We show the following operations for constructing edge-hamiltonian graphs as follows:

$O_1$: We join vertex $a_i$ with vertex $s_j$ together by an edge $a_is_j$, and join $b_i$ with $t_j$ together by an edge $b_it_j$, the resultant graph is denoted as $\ominus_2\langle H_i\wedge u_i, H_j\wedge v_j\rangle$, which is an edge-hamiltonian graph too (see Fig.\ref{fig:edge-hamiltonian-graphs}(c)).

$O_2$: We vertex-coincide two vertices $a_i$ and $s_j$ into one vertex $w=a_i\bullet s_j$, and vertex-coincide $b_i$ and $t_j$ into one vertex $z=b_i\bullet t_j$, the resultant graph is denoted as $[\bullet_2 ]\langle H_i\wedge u_i,H_j\wedge v_j\rangle $, which is clearly an edge-hamiltonian graph (see Fig.\ref{fig:edge-hamiltonian-graphs}(d)).

$O_3$: We vertex-coincide two vertices $a_i$ and $s_j$ into one vertex $w=a_i\bullet s_j$, and join $b_i$ with $t_j$ together by an edge $b_it_j$, the resultant graph is denoted as $[\bullet\ominus]\langle H_i\wedge u_i, H_j\wedge v_j\rangle $, which is just an edge-hamiltonian graph (see Fig.\ref{fig:edge-hamiltonian-graphs}(e)).\qqed
\end{example}

\begin{figure}[h]
\centering
\includegraphics[width=13cm]{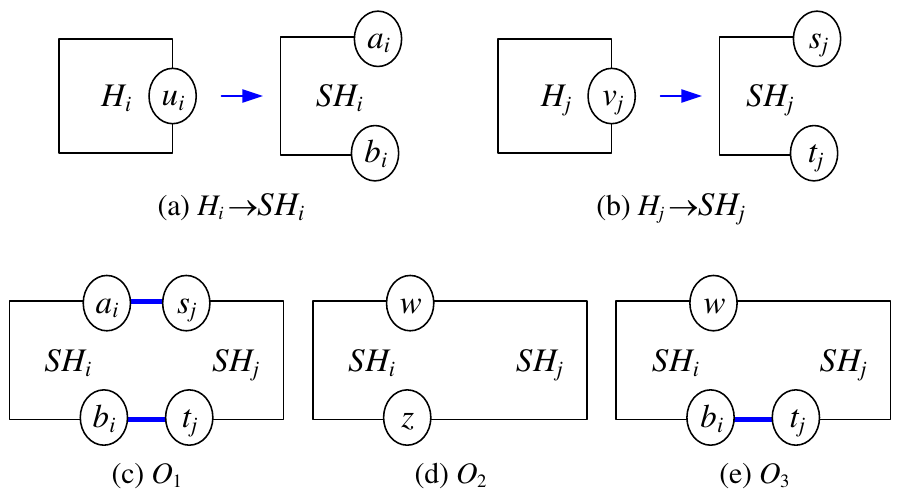}\\
\caption{\label{fig:edge-hamiltonian-graphs}{\small A scheme for illustrating three operations $O_1,O_2,O_3$ shown in Example \ref{exa:operation-base-not-empty}.}}
\end{figure}

\begin{example}\label{exa:8888888888}
There are many ways for constructing edge-hamiltonian graphs: We vertex-split a vertex $u$ of the graph $G$ into two vertices $u\,'$ and $u\,''$, and then

(i) add a new edge $u\,'u\,''$ to the vertex-split graph $G\wedge u$, the resultant graph is denoted as $H=G\wedge u+u\,'u\,''$;

(ii) add a path $P(u\,', u\,'')$ to the vertex-split graph $G\wedge u$ to join two vertices $u\,'$ and $u\,''$, the resultant graph is denoted as $H\,'=G\wedge u+P(u\,', u\,'')$; and

(iii) add a complete graph $K_m$ to vertex-coincide a vertex $x\in V(K_m)$ with the vertex $u\,'$ into one, and to vertex-coincide a vertex $y\in V(K_m)$ with the vertex $u\,''$ into one, the resultant graph is denoted as $H^*=G\wedge u+u\,'[\bullet]K_m+u\,''[\bullet]K_m$.\qqed
\end{example}

We have the following results:

\begin{prop}\label{qeu:99999}
\cite{Yao-Ma-arXiv-2201-13354v1} A connected graph $G$ is edge-hamiltonian if and only if each of $H=G\wedge u+u\,'u\,''$, $H\,'=G\wedge u+P(u\,', u\,'')$ and $H^*=G\wedge u+u\,'[\bullet]K_m+u\,''[\bullet]K_m$ is edge-hamiltonian.
\end{prop}

\begin{rem}\label{rem:333333}
Notice that the operation base $\textbf{\textrm{O}}$ appeared in Definition \ref{defn:definition-edge-hamiltonian-lattices} holds $|\textbf{\textrm{O}}|=n\geq 1$ according to Example \ref{exa:operation-base-not-empty}. It is not hard to see that each graph $G\in \textbf{\textrm{L}}(Z^0[\textbf{\textrm{O}}]\textbf{\textrm{H}})$ is just an edge-hamiltonian graph, so the edge-hamiltonian graphic lattice $\textbf{\textrm{L}}(Z^0[\textbf{\textrm{O}}]\textbf{\textrm{H}})$ is a generator of edge-hamiltonian graphs. In other word, each graph $G\in \textbf{\textrm{L}}(Z^0[\textbf{\textrm{O}}]\textbf{\textrm{H}})$ with $a_i\geq 1$ for $i\in [1,M]$ does not obey any judging condition of any one of $H_1,H_2,\dots ,H_m$ to be edge-hamiltonian graph, which means that there is no necessary and sufficient condition for judging edge-hamiltonian graphs, also, for judging a hypergraph having \emph{hyperedge Hamilton cycle}.\qqed
\end{rem}

\begin{problem}\label{qeu:444444}
By Problem \ref{qeu:characterize-edge-hamiltonian} and the edge-hamiltonian graphic lattice defined in Definition \ref{defn:definition-edge-hamiltonian-lattices}, so there is no necessary and sufficient condition for judging edge-hamiltonian graphs, since each graph may not satisfy any one of the judging conditions of $G=\big [\textbf{\textrm{O}}\big ]^M_{i=1}a_iH_i$ defined in Eq.(\ref{eqa:edge-hamiltonian-graphic-lattice}) to be edge-hamiltonian graphs, and moreover new operations for constructing edge-hamiltonian graphs and new conditions of judging edge-hamiltonian graphs will probably arise everyday.

In general, if some sparse graphs $G^1_{sparse},G^2_{sparse},\dots ,G^A_{sparse}$ and dense graphs $G^1_{dense}$, $G^2_{dense}$, $\dots $, $G^B_{dense}$ of $p$ vertices hold a \emph{topological structure property} $P$ consisted of topological structures and mathematical constraints, and each graph $H$ defined as
\begin{equation}\label{eqa:no-iff-conditions}
H=\big [\textbf{\textrm{O}}\big ]^A_{i=1}a_iG^i_{sparse}+\big [\textbf{\textrm{O}}\big ]^B_{j=1}b_iG^j_{dense},\quad \sum ^A_{i=1}a_i\geq 1, ~\sum ^B_{j=1}b_j\geq 1
\end{equation} has the topological structure property $P$, so we conjecture: If this graphic property $P$ has been proved to be sharp-P-hard, then we conjecture: There is \textbf{\emph{no necessary and sufficient condition}} for determining graphs shown in Eq.(\ref{eqa:no-iff-conditions}) with the graphic property $P$, where $G^1_{sparse}$ is a cycle, and $G^B_{dense}$ is a complete graph.
\end{problem}

\begin{problem}\label{qeu:444444}
(i) \cite{Yao-Ma-arXiv-2201-13354v1} Suppose that $G$ is a connected graph containing some Hamilton cycles. If each $k$-path of $k$ edges with $k\geq 1$ is in a Hamilton cycle of the graph $G$, we say that $G$ is \emph{$k$-path-hamiltonian}. \textbf{Characterize} $k$-path-hamiltonian graphs.

(ii) $^*$ \textbf{Determine} a positive real number $\mu$, such that each connected graph $G$ is edge-hamiltonian if minimal degree $\delta(G)\geq \mu \cdot |V(G)|$.

(iii) $^*$ Let $T$ be a preappointed connected graph. If each connected subgraph $T$ of a connected graph $G$ corresponds a cycle $C$ of the connected graph $G$ holding $|V(T)\cap V(C)|\geq 1$, we say the connected graph $G$ to be $T$-cycle. For example, $T$ is a $k$-path with $k\geq 1$. \textbf{Characterize} a $T$-cycle graph for a preappointed connected graph $T$.
\end{problem}

\subsubsection{Maximal planar graphic lattices}

\begin{defn} \label{defn:111111}
$^*$ Suppose that $\textbf{\textrm{T}}=(T_1,T_2,\dots ,T_m)$ is an \emph{unavoidable and reducible configuration base}, where each graph $T_i$ with $i\in [1,m]$ is a maximal planar graph containing just one unavoidable and reducible configuration, and no two $T_i$ and $T_j$ with $i\neq j$ contain the same unavoidable and reducible configuration, and there is an \emph{operation base} $\textbf{\textrm{O}}=(O_1,O_2,\dots ,O_n)$ such that doing each operation $O_k$ to two maximal planar graphs $T_i$ and $T_j$ produces a maximal planar graph, denoted as $O_k\langle T_i,T_j \rangle$; also, one operation $O_k$ can be applied to a maximal planar graphs $T_i$, for example, the 4-colored rhombus operation introduced in \cite{Hongyu-Wang-2018-Doctor-thesis}. More operations of maximal planar graphs can be found in \cite{Jin-Xu-Maximal-Science-Press-2019}, \cite{Jin-Xu-55-56-configurations-arXiv-2107-05454v1} and \cite{Hongyu-Wang-2018-Doctor-thesis}.

About the operation base $\textbf{\textrm{O}}$, refer to the subsection ``$W$-coinciding and $W$-splitting operations''.

For a permutation $G_1,G_2,\dots ,G_B$ of maximal planar graphs $b_1T_1,b_2T_2,\dots ,b_mT_m$, where $B=\sum ^m_{i=1}b_i$ for $a_i\in Z^0$ and $T_i\in \textbf{\textrm{T}}$, we do an operation $O_{i_1}$ selected randomly from $\textbf{\textrm{O}}$ to $G_1$ and $G_2$, we get a maximal planar graph $H_1=O_{i_1}\langle G_1,G_2\rangle$, in the same way, we get the second maximal planar graph $H_2=O_{i_2}\langle H_1,G_3\rangle$, and maximal planar graphs $H_{k}=O_{i_k}\langle H_{k-1},G_{k+1}\rangle$ with $k\in [2,m-1]$. In general, we write
$$H_{m-1}=O_{i_{m-1}}\langle H_{m-2},G_{m}\rangle=\big [\textbf{\textrm{O}}\big ]^m_{i=1}b_iT_i
$$ and get a \emph{maximal planar graphic lattice}
\begin{equation}\label{eqa:maximal-planar-graph-configuration}
\textbf{\textrm{L}}\big (Z^0[\textbf{\textrm{O}}]\textbf{\textrm{T}}\big )=\Big \{\big [\textbf{\textrm{O}}\big ]^m_{i=1}b_iT_i:b_i\in Z^0,T_i\in \textbf{\textrm{T}}\Big \}
\end{equation} with $\sum ^m_{i=1}b_i\geq 1$.\qqed
\end{defn}

\begin{problem}\label{question:444444}
\textbf{Do} some maximal planar graphs of the maximal planar graphic lattice $\textbf{\textrm{L}}\big (Z^0[\textbf{\textrm{O}}]\textbf{\textrm{T}}\big )$ contain new unavoidable and reducible configurations, which differ completely from the known $1936$ unavoidable and reducible configurations proposed in \cite{Kenneth-Appel-Wolfgang-Haken1976}, and differ completely from the known $633$ unavoidable and reducible configurations proposed in \cite{Robertson-Sanders-Daniel-Seymour-Thomas1996}?
\end{problem}

\begin{rem}\label{rem:333333}
In \cite{Robertson-Sanders-Daniel-Seymour-Thomas1996}, Robertson, Daniel, Seymour, and Thomas created a quadratic-time algorithm, improving on a quartic-time algorithm based on Appel and Haken's proof. This new proof is similar to Appel and Haken's but more efficient because it reduces the complexity of the problem and requires checking only 633 unavoidable and reducible configurations. Appel and Haken's method needed to check 1936 unavoidable and reducible configurations. However, both the unavoidability and reducibility parts of this new proof must be executed by computer and are impractical to check by hand \cite{RThomas-Robin-1998}.

If nothing new unavoidable and reducible configurations is discovered, then the graphic lattice $\textbf{\textrm{L}}\big (Z^0[\textbf{\textrm{O}}]\textbf{\textrm{T}}\big )$ shown in Eq.(\ref{eqa:maximal-planar-graph-configuration}) can be as a mathematical proof of the 4-color conjecture; otherwise, the computer proofs of the 4-color conjecture given in \cite{Kenneth-Appel-Wolfgang-Haken1976} and \cite{Robertson-Sanders-Daniel-Seymour-Thomas1996} can only be considered as some experiments.\qqed
\end{rem}

Jin Xu has done many meaningful works on the 4-color conjecture of maximal planar graphs (Ref. \cite{Jin-Xu-Maximal-Science-Press-2019}), and he has shown ``55-configurations and 56-configurations are reducible'' in \cite{Jin-Xu-55-56-configurations-arXiv-2107-05454v1} for achieving the mathematical proof of the 4-color conjecture of maximal planar graphs.

However, if every planar graph can be colored with four colors, but it is NP-complete in complexity to decide whether an arbitrary planar graph can be colored with just three colors (Ref. \cite{Dailey-D-P-1980}), and if a planar graph $G$ can be colored with just three colors, \textbf{how many} 3-colorings does $G$ admit? In other words, it is a sharp-P-complete problem.

\vskip 0.2cm

\begin{conj} \label{conj:c2-Albertson-Chappell}
(1) (Albertson-Berman \cite{Albertson-Berman-large-forest-1979}) Every planar graph $G$ has an induced subgraph with at least half of the vertices that is a forest that is, the acyclic number $a(G)\geq \frac{1}{2}|G|$.

(2) (M. Albertson and R. Haas, 1998) Every bipartite planar graph $G$ has an induced subgraph with at least $\frac{5}{8}$ of the vertices that is a forest, namely, the acyclic number $a(G)\geq \frac{5}{8}|G|$.

(3) (Chappell) Every planar graph $G$ has an induced subgraph with more than $\frac{4}{9}$ of the vertices that is a linear forest, that is, the acyclic number $a(G)\geq
\frac{4}{9}|G|$.
\end{conj}

\begin{conj}\label{conj:2-color-tree-forest}
\cite{Yao-Su-Ma-Wang-Yang-arXiv-2202-03993v1} There is a proper vertex $4$-coloring $f$ of a maximal planar graph $G$ of $p$ vertices such that a 2-color tree (or forest) $T$ holds $|V(T)|\geq \frac{p}{2}$ true. It is related with the Big Forest Conjecture of planar graphs \cite{Albertson-Berman-large-forest-1979}.
\end{conj}

\begin{rem}\label{rem:333333}
About Conjecture \ref{conj:2-color-tree-forest}, we recall the Big Forest Conjecture of planar graphs is proposed in \cite{Albertson-Berman-large-forest-1979}: ``Every planar graph of order $n$ contains an induced forest of order at least $\frac{n}{2}$.'' In 1986, Erd\"{o}s \emph{et al.}, in \cite{Erdos-Saks-Sos-1986}, proved that the sum of the dwindling number of a graph $G$ and the order of the maximum induced forest of the graph $G$ is exactly equal to the order of $G$. If the Big Forest Conjecture of planar graphs is true, then any planar graph of order $n$ contains an independent set having at least $\frac{n}{4}$ vertices, which can be obtained directly from the Four Color Problem on planar graphs.
\end{rem}

\section{Exploring Hypernetworks}

Tom Siegfried said \cite{Tom-Siegfried-Wolfram-hypergraph-2020}: ``\emph{Wolfram's hypergraphs reproduce many of the consequences of various physical theories, such as Einstein's special theory of relativity. Traveling rapidly slows down time (as special relativity says) because \emph{hypergraph structures} corresponding to moving objects make an angle through the hypergraph that extends the distance between updates (or time steps). The speed of light is a maximum velocity, as relativity states, because it represents the maximum rate that information can spread through the hypergraph as it updates. And gravity described by Einstein's general theory of relativity emerges in the relationship between features in the hypergraph that can be interpreted as matter particles. Particles would be small sets of linked points that persist as the hypergraph updates, something like ``little lumps of space'' with special properties}.''

Three physical scientists Newman, Barab\'{a}si and Watts, in \cite{Newman-Barabasi-Watts2006}, pointed:``\emph{Pure graph theory is elegant and deep, but it is not especially relevant to networks arising in the real world. Applied graph theory, as its name suggests, is more concerned with real-world network problems, but its approach is oriented toward design and engineering.}''

\subsection{Concepts of hypernetworks}

Hypernetworks have been mentioned or studied by scholars for a long time. However, there is no recognized definition of the hypernetwork in our memory, and moreover some articles consider hypernetworks as hypergraphs.

We try do some researching works on \emph{hypernetworks}, using dynamic vertex-intersected networks to observe indirectly hypernetworks, and try to plant some researching results of dynamic networks into hypernetworks.

\begin{defn} \label{defn:dynamic-hypernetworks}
\cite{Yao-Ma-arXiv-2201-13354v1} \textbf{Dynamic hypernetwork.} At each time step $t\in [a,b]$ for two integers $a$ and $b$ subject $0\leq a<b$, a \emph{dynamic hypernetwork} $\mathcal{N}_{yper}(t)=(\Lambda(t),\mathcal{E}(t))$ holds:

\textbf{Hynet-1.} Each hyperedge $e(t)\in \mathcal{E}(t)$ is not an empty set and corresponds another hyperedge $e\,'(t)\in \mathcal{E}(t)\setminus\{e(t)\}$ to hold $e(t)\cap e\,'(t)\neq \emptyset$ true;

\textbf{Hynet-2.} $\Lambda(t)=\bigcup _{e(t)\in \mathcal{E}(t)}e(t)$ is a finite set.\qqed
\end{defn}

\begin{rem}\label{rem:333333}
There are many dynamic hypernetworks of form $\mathcal{N}_{yper}(t)=(\Lambda(t),\mathcal{E}(t))$ for each time step $t\in [a,b]$, since there are many hyperedge sets $\mathcal{E}\in \mathcal{E}\big (\Lambda^2(t)\big )$ holding \textbf{Hynet}-1 and \textbf{Hynet}-2 of Definition \ref{defn:dynamic-hypernetworks}.

Hereafter, we will omit ``dynamic'' from ``dynamic hypernetwork'' for the simplicity of statement. In Definition \ref{defn:dynamic-hypernetworks}, each hyperedge $e(t)\in \mathcal{E}(t)$ describes a local network (community), blockchain, logistics network, cybergroup, \emph{etc}. Hypernetworks describe dynamic changes among groups and communities in networks, and compound hypernetworks depict correlations between two or more hypernetworks.\qqed
\end{rem}

Motivated from Proposition \ref{prop:hyperedge-sets-unions}, we define a proper increasing hypernetwork as follows:

\begin{defn} \label{defn:hypernetworks-00}
$^*$ A hypernetwork $\mathcal{N}_{yper}(t)=(\Lambda(t),\mathcal{E}(t))$ is proper increasing if $\Lambda(t_i)\subset \Lambda(t_{i+1})$ with $t_{i}<t_{i+1}$, and the vertex set $\Lambda(t)=\Lambda(t_i)\cup \Lambda(t_{i+1})$, as well as hyperedge set $\mathcal{E}(t)=\mathcal{E}(t_i)\cup \mathcal{E}(t_{i+1})$ with $\mathcal{E}(t_i)\in \mathcal{E}(\Lambda^2(t_i))$ and $\mathcal{E}(t_{i+1})\in \mathcal{E}(\Lambda^2(t_{i+1}))$ at each time step $t_{i+1}\in [a,b]$ for two integers $a$ and $b$ subject $0\leq a<b$.\qqed
\end{defn}

\begin{defn} \label{defn:hypernetwork-vertex-intersected graphs}
$^*$ We say that $N_{int}(t)$ is a \emph{vertex-intersected network} of a hypernetwork $\mathcal{N}_{yper}(t)=(\Lambda(t)$, $\mathcal{E}(t))$ defined in Definition \ref{defn:dynamic-hypernetworks} subject to a constraint set $R_{est}(c_0,c_1,c_2,\dots ,c_m)$ with $m\geq 0$ at each time step $t\in [a,b]$, then the vertex-intersected network $N_{int}(t)$ admits a total set-coloring $F:V(N_{int}(t))\cup E(N_{int}(t))\rightarrow \mathcal{E}(t)$ such that

(i) the first constraint $c_0$ holds $F(uv)\supseteq F(u)\cap F(v)\neq \emptyset$ true; and

(ii)) for some $k\in [1,m]$, the $k$th constraint $c_k$ holds the constraint equation $c_k[a_u,c_{uv},b_v]=0$ for some $c_{uv}\in F(uv)$, $a_u\in F(u)$ and $b_v\in F(v)$.

Conversely, is a hyperedge $e(t)\in \mathcal{E}(t)$ corresponds anther hyperedge $e\,'(t)\in \mathcal{E}(t)$ holding $e(t)\cap e\,'(t)\neq \emptyset $, then there is an edge $xy\in E(N_{int}(t))$ holding $F(xy)\supseteq F(x)\cap F(y)=e(t)\cap e\,'(t)$.\qqed
\end{defn}

Similarly with Theorem \ref{thm:infinite-v-intersected-graphs}, we have the following result:

\begin{thm}\label{thm:666666}
A hypernetwork $\mathcal{N}_{yper}(t)=(\Lambda(t),\mathcal{E}(t))$ defined in Definition \ref{defn:dynamic-hypernetworks} has infinite vertex-intersected networks.
\end{thm}

\begin{rem}\label{rem:333333}
At each time step $t\in [a,b]$, some relations between vertex-intersected networks defined in Definition \ref{defn:hypernetwork-vertex-intersected graphs} are as follows: At each time step $t\in [a,b]$ for two integers $a$ and $b$ subject $0\leq a<b$,

(i) $N_{int}(t-1)\subset N_{int}(t)$, $\Lambda(t-1)\subset \Lambda(t)$ and $\mathcal{E}(t-1)\subset \mathcal{E}(t)$;

(ii) $N_{int}(t)=[O_{j_1},O_{j_2},\dots ,O_{j_{m_i}}]\langle N_{int}(t-1),N_{int}(t-2),\dots ,N_{int}(t-k)\rangle$ for operations $O_{j_i}\in \textbf{\textrm{O}}$ with $k\geq 1$, where $\textbf{\textrm{O}}=\{O_k\}^n_{k=1}$ is a \emph{network operation base};

(iii) We are interesting that each hypernetwork $N_{int}(t)$ obeys some topological structure properties, such as self-similarity, scale-free distribution, coefficient, small wold, homogeneousness, average, develop velocity \emph{etc}.\qqed
\end{rem}

\begin{defn} \label{defn:hypernetworks-11}
$^*$ \textbf{Multi-hyperedge hypernetworks, hyperedge-intersected networks.} Let $\mathcal{H}_{yper}(t)=(\Lambda(t),\mathcal{E}(t))$ for each time step $t\in [a,b]$ be a hypergraph and $\mathcal{E}(\Lambda^2(t))=\{\mathcal{E}_j(t):~j\in [1, n(\Lambda (t))]\}$ be the hypergraph set defined in Definition \ref{defn:hypergraph-basic-definition}. A \emph{hyperedge-set sequence}
$$\{\mathcal{E}_i(t)\}^m_{i=1}=\{\mathcal{E}_i(t)\in \mathcal{E}(\Lambda^2(t)):i\in [1,m]\}
$$ with $\mathcal{E}_i(t)\not \subseteq \mathcal{E}_j(t)$ if $i\neq j$ forms a \emph{hyperedge-intersected network} $N(t)$ of a \emph{multi-hyperedge hypernetwork} $\mathcal{C}_{hynet}(t)=\big (\Lambda(t),\{\mathcal{E}_i(t)\}^m_{i=1}\big )$ admitting a set-coloring
\begin{equation}\label{eqa:555555}
F:V(N(t))\cup E(N(t))\rightarrow \{\mathcal{E}_i(t)\}^m_{i=1}
\end{equation} such that $F(x)\neq F(y)$ for each edge $xy\in E(N(t))$, and each edge $uv\in E(N(t))$ is colored by
\begin{equation}\label{eqa:555555}
F(uv)=\mathcal{E}_{uv}(t)=\mathcal{E}_{u}(t)[\bullet]\mathcal{E}_{v}(t)=F(u)[\bullet]F(v)
\end{equation} under an operation ``$[\bullet]$'' on the hypergraph set $\mathcal{E}\big (\Lambda^2(t)\big )$. So, each hyperedge set $\mathcal{E}_i(t)\in \mathcal{E}\big (\Lambda^2(t)\big )$ is a \emph{local network} of the multi-hyperedge hypernetwork $\mathcal{C}_{hynet}(t)$ for each time step $t\in [a,b]$.\qqed
\end{defn}

\begin{problem}\label{question:444444}
Since the number of subsets of the set $\Lambda(t)=\{x_1,x_2,\dots,x_{p(t)}\}$ is $|\Lambda^2(t)|=2\,^{p(t)}-1$ for each time step $t\in [a,b]$, so we want to know the value $|\mathcal{E}(\Lambda^2(t))|$ and the topological structure of the hypergraph set $\mathcal{E}(\Lambda^2(t))$.
\end{problem}

\subsection{Scale-free vertex-intersected networks}

In 1999, Barabasi and Albert in \cite{Barabasi-Albert1999} have shown that a \emph{scale-free network} $N(t)$ has its own \emph{degree distribution}
\begin{equation}\label{eqa:Barabasi-Albert1999}
P(k)=\textrm{Pr}(x=k)\sim k^{-\lambda}, ~2<\lambda<3
\end{equation}
where $P(k)$ is the \emph{power-law distribution} of a vertex joined with $k$ vertices in the scale-free network $N(t)$ at each time step $t\in [a,b]$.

Since a vertex-intersected network $N_{int}(t)$ admits a total set-coloring
$$
F_t:V(N_{int}(t))\cup E(N_{int}(t))\rightarrow \mathcal{E}(t),~t\in [a,b]
$$ where $\mathcal{E}(t)$ is the hyperedge set of a hypernetwork $\mathcal{N}_{yper}(t)=(\Lambda(t),\mathcal{E}(t))$ at each time step $t\in [a,b]$. If $N_{int}(t)$ is \emph{scale-free}, then a vertex $w\in V(N_{int}(t))$ joined with $k$ vertices in $N_{int}(t)$ corresponds a hyperedge $e\in \mathcal{E}(t)$ holding $F_t(w)=\{e\}$, which means that the \emph{hyperedge-degree distribution} of the hypernetwork $\mathcal{N}_{yper}(t)$ obeys the \emph{power-law distribution} defined in Eq.(\ref{eqa:Barabasi-Albert1999}) too, thus, we say $\mathcal{N}_{yper}(t)$ to be a \emph{scale-free hypernetwork}.

According to Definition \ref{defn:hypernetwork-vertex-intersected graphs} and some fundamental characteristics of a \emph{scale-free network model} summarized in \cite{Yao-Ma-Su-Wang-Zhao-Yao-2016}, a \emph{scale-free vertex-intersected network} $N_{int}(t)$ at each time step $t\in [a,b]$ having $v_{net}(t)$ vertices and $e_{net}(t)$ edges holds the following topological properties:

\textbf{Fc-1.} \cite{Barabasi-Albert1999} Two \emph{common mechanisms}. \emph{Growth} is $v_{net}(t)>v_{net}(t-1)$ and $e_{net}(t)>e_{net}(t-1)$, and \emph{Preferential attachment} is defined by $\Pi_i(t)=k_i(t)/\sum_jk_j(t)$, where $k_i(t)$ and $k_ j(t)$ are hyperedge-degrees of hyperedges $e_i$ and $e_ j$ of the hyperedge set $\mathcal{E}(t)$.

\textbf{Fc-2.} \cite{Barabasi-Albert1999} There is a \emph{dynamic equation}
\begin{equation}\label{eqa:aa-dynamic-equations}
\frac{\partial k_i(t)}{\partial t}=m\Pi_i(t),~t\in [a,b]
\end{equation} Using the initial condition $k_i(t_i)=m$ solves the hyperedge-degree function $k_i(t)$ from the dynamic equation Eq.(\ref{eqa:aa-dynamic-equations}).

\textbf{Fc-3.} A \emph{sum} $\sum n_i(k_i(t))\frac{\partial k_i(t)}{\partial t}$, where $n_i(k_i(t))$ is the number of hyperedges having hyperedge-degree $k_i(t)$ in $N_{int}(t)$.

\textbf{Fc-4.} \cite{Barabasi-Albert1999} A \emph{hyperedge-degree distribution} $P(k)\sim k^{-\gamma}$ and a \emph{hyperedge cumulative distribution} $P_{cum}(k)\sim k^{1-\gamma}$ with $2<\gamma<3$ hold
\begin{equation}\label{eqa:Barabasi-hyperedge-degree-distribution}
P(k)=\frac{\partial P(k_i(t)<k)}{\partial k},\quad P(k)=-\frac{\partial P_{cum}(k)}{\partial k},~t\in [a,b]
\end{equation}The \emph{hyperedge cumulative distribution} is defined as
\begin{equation}\label{eqa:Dorogovstev-Goltsev2002}
P_{cum}(k)=\sum_{k\,'\geq k}\frac{|V(k\,',t)|}{v_{net}(t)}\sim k^{1-\lambda},~t\in [a,b]
\end{equation}
with $2<\lambda<3$, where $|V(k\,',t)|$ is the number of hyperedges of hyperedge-degree $k\,'$ (Ref. \cite{Dorogovstev-2002}).

The \emph{hyperedge hyperedge-cumulative distribution} is
\begin{equation}\label{eqa:edge-cumulative-distribution}
P_{ecum}(k) =\sum _{k\,'\geq k}\frac{E(k\,',t)}{e_{net}(t)}\sim k^{1-\delta},~t\in [a,b]
\end{equation} with $2<\delta<3$, where $E(k\,',t)$ is the number of hyperedges adjacent with hyperedges of hyperedge-degree $k\,'$ greater than $k$ at each time step $t\in [a,b]$ (Ref. \cite{Liu-Yao-Zhang-Chen-Liu-Yao-ICT2014}), and the \emph{$d$-hyperedge-cumulative distribution} is
\begin{equation}\label{eqa:d-edge-cumulative-distribution}
P^{deg}_{ecum}(k) =\sum _{k\,'\geq k}\frac{k\,'E(k\,',t)}{e_{net}(t)}\sim k^{1-\varepsilon},~t\in [a,b]
\end{equation} with $2<\varepsilon<3$ (Ref. \cite{Yao-Wang-Su-Ma-Yao-Zhang-Xie2016}).

\textbf{Fc-5.} We define the \emph{velocity} $V_{elo}(N_{int}(t))$ of a vertex-intersected network $N_{int}(t)$ as
\begin{equation}\label{eqa:555555}
V_{elo}(N_{int}(t))=\sqrt{\left [\frac{\partial v_{net}(t)}{\partial t}\right ]^2+ \left [\frac{\partial e_{net}(t)}{\partial t}\right ]^2},~t\in [a,b]
\end{equation}
where $\partial v_{net}(t)/\partial t$ is the \emph{v-velocity} and $\partial e_{net}(t)/\partial t$ is the \emph{e-velocity}. The \emph{speed ratio} of the vertex-intersected network $N_{int}(t)$ is defined by $\partial v_{net}(t)/\partial t\div \partial e_{net}(t)/\partial t$, so the \emph{average hyperedge-degree} $\langle k\rangle (t)$ of the hyperedge set $\mathcal{E}(t)$ holds
\begin{equation}\label{eqa:c3xxxxx}
\frac{\partial e_{net}(t)}{\partial t}\sim \langle k\rangle (t)\cdot \gamma \cdot \frac{\partial v_{net}(t)}{\partial t},~t\in [a,b]
\end{equation} where $\gamma$ is a constant (Ref. \cite{Yao-Wang-Su-Ma-Yao-Zhang-Xie2016}).

\textbf{Fc-6.} The scale-free hypernetwork $N_{int}(t)$ is \emph{sparse} (Ref. \cite{Genio-Gross-Bassler2011}), so its average hyperedge-degree $\langle k\rangle (t)$ is approximate to a constant, or
\begin{equation}\label{eqa:sparse-edge-vertex-numbers}
e_{net}(t) \sim \textrm{O}(v_{net}(t)\ln[v_{net}(t)]),~t\in [a,b]
\end{equation}

\textbf{Fc-7.} $^*$ Let $R_{eceive}(u,t)=R^{out}_{eceive}(u,t)+R^{inner}_{eceive}(u,t)$ be the amount of information received for a vertex $u$, where $R^{out}_{eceive}(u,t)$ is the amount of information received from out of $N_{int}(t)$, and $R^{inner}_{eceive}(u,t)$ is the amount of information received from inside of $N_{int}(t)$. And let $S_{end}(u,t)=S^{out}_{end}(u,t)+S^{inner}_{end}(u,t)$ be the amount of information sent out from a vertex $u$, where $S^{out}_{end}(u,t)$ is the amount of information sent out of $N_{int}(t)$, and $S^{inner}_{end}(u,t)$ is the amount of information sent to the vertices of $N_{int}(t)$.

Thereby, at each time step $t\in [a,b]$, the amount $R_{eceive}(N_{int}(t))$ of information received by the hypernetwork $N_{int}(t)$ is
\begin{equation}\label{eqa:555555}
R_{eceive}(N_{int}(t))=\sum _{u\in N_{int}(t)}R_{eceive}(u,t)=\sum _{u\in N_{int}(t)}R^{out}_{eceive}(u,t)+\sum _{u\in N_{int}(t)}R^{inner}_{eceive}(u,t)
\end{equation} and the amount $S_{end}(N_{int}(t))$ of information sent out from the hypernetwork $N_{int}(t)$ is
\begin{equation}\label{eqa:555555}
S_{end}(N_{int}(t))=\sum _{u\in N_{int}(t)}S_{end}(u,t)=\sum _{u\in N_{int}(t)}S^{out}_{end}(u,t)+\sum _{u\in N_{int}(t)}S^{inner}_{end}(u,t)
\end{equation} The amounts of information received and sent describes the active level of the hypernetwork $N_{int}(t)$. Such methods can be used to Internet of Things (Ref. \cite{Yao-Liu-Zhang-Chen-Zhang-Yao-Zhao-2013}).

\vskip 0.2cm

\textbf{Fc-8.} \textbf{DGN-vertex-intersected networks.} Suppose that a vertex-intersected network $N_{int}(t)$ is a \emph{deterministic growing vertex-intersected network } having $v_{net}(t)$ vertices and $e_{net}(t)$ edges at each time step $t\in [a,b]$, we call $N_{int}(t)$ \emph{DGN-vertex-intersected network}.

\vskip 0.2cm

In \cite{Yao-Ma-Su-Wang-Zhao-Yao-2016}, the authors have considered: A DGN-vertex-intersected network $N_{int}(t)$ satisfies a system of linear equations (\emph{linear growth})
\begin{equation}\label{eqa:linear-growth}
v_{net}(t)=a_v t+b_v,\quad e_{net}(t)=a_e t+b_e,~t\in [a,b]
\end{equation}
with $a_v>0$ and $a_e>0$. As known, Barabasi-Albert's model holds Eq.(\ref{eqa:linear-growth}) true (Ref. \cite{Barabasi-Albert1999}). Very often, a system of non-linear equations (\emph{exponential growth}) appeared in many literature is
\begin{equation}\label{eqa:Only-exponent}
v_{net}(t)=a_v r\,^t+b_v,\quad e_{net}(t)=a_e s\,^t+b_e,~t\in [a,b]
\end{equation}
with $a_v>0$, $a_e>0$, $|r|\neq 1$ and $|s|\neq 1$.

There are deterministic growing network models holding Eq.(\ref{eqa:Only-exponent}), such as the Sierpinski model $S(t)$ with $r=3$ (Ref. \cite{Zhang-Zhou-Fang-Guan-Zhang2007}), the Recursive tree model $R(t)$ with $r=q+1$ for $q\geq 2$ (Ref. \cite{Francesc-2004}), and the Apollonian model $A(t)$ with $r=m(d+1)$ (Ref. \cite{Zhang-Comellas-Fertin-Rong-2006}), we call them \emph{$r$-rank models}. For example, a tripartite model introduced in \cite{Lu-Su-Guo2013} and a 4-partite model appeared in \cite{Zhang-Rong-Guo2006} both are \emph{2-rank models}.

\vskip 0.2cm

\textbf{Fc-9.} \textbf{Random growth networks.} If a network $N(t)$ holds the vertex sets and edge sets $|V(N(t))|<|V(N(t+1))|$ and $|E(N(t))|<|E(N(t+1))|$ for each time step $t\in[a,b]$ for two integers $a$ and $b$ subject $1\leq a<b$, then we say that the network $N(t)$ is randomly \emph{ve-all increasing}.

In Fig.\ref{fig:increase-random}, each connected graph $G_i$ admits a graceful-difference total labeling $f_i$ holding $\big ||f_i(u)-f_i(v)| -f_i(uv)\big |=8$ for each edge $uv\in E(G_i)$ with $i\in [0,3]$. Since vertex numbers $|V(G_j)|<|V(G_{j+1})|$ for $j\in [0,2]$, each network $G_i$ is a random growing network. Notice that each growing network $G_i$ is \emph{colored graph homomorphism} to another network, for example, $G_3\rightarrow H$ shown in Fig.\ref{fig:increase-random}(e), so the total coloring $h$ admitted by the network $H$ is a felicitous-difference total labeling too, such that $\big ||h(x)-h(y)| -h(xy)\big |=8$ for each edge $xy\in E(H)$. Conversely, we can vertex-split the network $H$ (as a public-key) to obtain the network $G_3$ (as a private-key).

\begin{figure}[h]
\centering
\includegraphics[width=16.4cm]{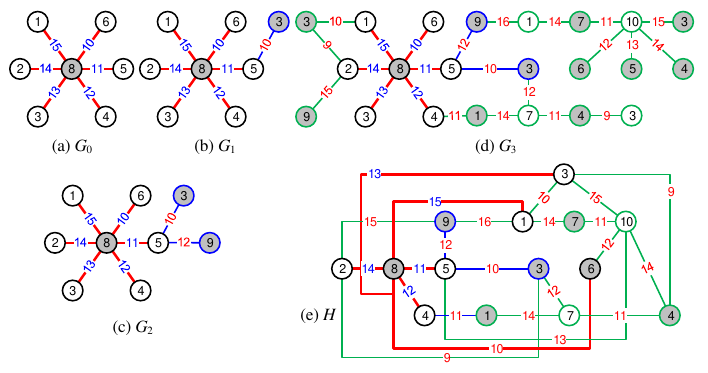}\\
\caption{\label{fig:increase-random}{\small A random growth network model from (a) to (d).}}
\end{figure}

\begin{problem}\label{question:444444}
Vertex-splitting a total colored connected graph $H$ to another total colored connected graph $G_m$, such that each total colored connected graph $G_k$ is the result of adding leaves to the total colored connected graph $G_{k-1}$ for $k\in [1,m]$ with $|V(G_{k-1})|<|V(G_k)|$ and $|V(G_{0})|=1$, refer to Fig.\ref{fig:increase-random}.
\end{problem}

\subsection{Lattices based on vertex-intersected networks}

In the article \cite{Yao-Ma-arXiv-2201-13354v1}, the authors have researched hypernetworks, here, we will to conduct more researching hypernetworks.

Let $\textbf{\textrm{O}}=\{O_1,O_2,\dots ,O_n\}$ be a set of graph operations \cite{Yao-Su-Sun-Wang-Graph-Operations-2021}. So, we call a subset $\textbf{\textrm{O}}_i=\{O_{i,1}$, $O_{i,2}$, $\dots ,O_{i,m_i}\}\subset \textbf{\textrm{O}}$ to be \emph{$W$-operation base} if each graph operation $O_{i,j}$ is not a compound of others, and obeys the \emph{$W$-property}. We use symbol $a_kO_{i,k}$ to indicate $a_k$ copies of $O_{i,k}\in \textbf{\textrm{O}}_i$ for $k\in [1,m_i]$, and $O_{j_1},O_{j_2},\dots ,O_{j_{B}}$ is a permutation of these graph operations $a_1O_{i,1}$, $a_2O_{i,2}$, $\dots$, $a_{m_i}O_{i,m_i}$, where $B=\sum ^{m_i}_{k=1}a_k$.

We implement these graph operations $O_{j_1},O_{j_2},\dots ,O_{j_{B}}$ to a vertex-intersected network $N_{int}(t)$ at each time step $t\in [a,b]$, one by one, in the following way: The first graph operation $O_{j_1}$ is implemented to a vertex-intersected network $N_{int}(t)$, the resultant vertex-intersected network is denoted as $G_1(t)=O_{j_1}(N_{int}(t))$, and we have $G_2(t)=O_{j_2}(G_1(t))$, $G_3(t)=O_{j_3}(G_2(t))$, $\dots$, $G_{B}(t)=O_{j_{B}}(G_{B-1}(t))$. We write the last vertex-intersected network $G_{B}(t)=N_{int}(t) \lhd |^{m_i}_{k=1} a_kO_{i,k}$ for integrity. Simulating with the lattices introduced in \cite{Yao-Su-Sun-Wang-Graph-Operations-2021}, we define an \emph{operation vertex-intersected network lattice} as follows
\begin{equation}\label{eqa:operation-base-graphic-latticess}
{
\begin{split}
\textbf{\textrm{L}}\big (\textbf{\textrm{F}}(t)\lhd Z^0\textbf{\textrm{O}}_i\big )=&\big \{N_{int}(t)\lhd |^{m_i}_{k=1} a_kO_{i,k} :~ a_k\in Z^0,O_{i,k}\in \textbf{\textrm{O}}_i,N_{int}(t)\in \textbf{\textrm{F}}(t)\big \}
\end{split}}
\end{equation} with $\sum ^{m_i}_{k=1}a_k\geq 1$, where $\textbf{\textrm{F}}(t)$ is the set of vertex-intersected networks at each time step $t\in [a,b]$, and $\textbf{\textrm{O}}_i$ is called \emph{$\Gamma$-operation base}.

A \emph{$\Gamma$-operation base} $\textbf{\textrm{O}}_i$ is called \emph{scale-free operation base} if each operation $O_{i,k}\in \textbf{\textrm{O}}_i$ is a \emph{scale-free operation} defined by $P(k) \sim k^{-\lambda_{i,k}}$ with $\lambda_{i,k}>2$. We call the following set
\begin{equation}\label{eqa:scale-free-network-lattice11}
{
\begin{split}
&\textbf{\textrm{L}}\big (\textbf{\textrm{F}}_{scale}(t)\lhd Z^0\textbf{\textrm{O}}_i\big )=\big \{N_{int}(t)\lhd |^{m_i}_{k=1} b_kO_{i,k}:~ b_k\in Z^0, O_{i,k}\in \textbf{\textrm{O}}_i,N_{int}(t)\in \textbf{\textrm{F}}_{scale}(t)\big \}
\end{split}}
\end{equation} to be \emph{scale-free operation vertex-intersected network meta-lattice} with $\sum ^{m_i}_{k=1}b_k\geq 1$, where $\textbf{\textrm{F}}_{scale}(t)$ is the set of scale-free vertex-intersected networks $N_{int}(t)$ at each time step $t\in [a,b]$, and $\textbf{\textrm{O}}_i$ is a \emph{scale-free operation base} such that each network of $\textbf{\textrm{L}}\big (\textbf{\textrm{F}}_{scale}(t)\lhd Z^0\textbf{\textrm{O}}_i\big )$ is scale-free, in other word, $\textbf{\textrm{L}}\big (\textbf{\textrm{F}}_{scale}(t)\lhd Z^0\textbf{\textrm{O}}_i\big )$ is really a \emph{scale-free network generator}.

\begin{problem}\label{question:444444}
Since $N_{int}(t)$ is a \emph{vertex-intersected network} of a hypernetwork $\mathcal{N}_{yper}(t)=(\Lambda(t)$, $\mathcal{E}(t))$ defined in Definition \ref{defn:dynamic-hypernetworks}, if the vertex-intersected network $N_{int}(t)$ holds a $W$-property of networks, we say that $N_{int}(t)$ is a $W$-property vertex-intersected network with $\Lambda(t-1)\subseteq \Lambda(t)$ and $\mathcal{E}(t-1)\subseteq \mathcal{E}(t)$ at each time step $t\in [a,b]$. \textbf{Investigate} various $W$-property vertex-intersected networks.
\end{problem}

\subsection{Meta-lattices based on hypernetworks}

Suppose that a connected graph $H(t)$ has its own vertex set $V(H)=\{\Lambda_i(t):i\in [1,m]\}$, where each $\Lambda_i(t)$ is a finite set, and each edge $\Lambda_i(t)\Lambda_j(t)$ of $H(t)$ holds $\Lambda_i(t)\cap \Lambda_j(t)\neq \emptyset$ for each time step $t\in [a,b]$, as well as $\Lambda_i(t)\not \subseteq \Lambda_j(t)$ for $i\neq j$. We have a finite set $\Lambda_{\cup}(t)=\bigcup^m_{i=1}\Lambda_i(t)$ and a multi-hyperedge set
\begin{equation}\label{eqa:multi-hyperedge-sets-11}
\mathcal{E}_{\cup}(t)=\big \{\mathcal{E}_{i,j}(t)\in \mathcal{E}\big (\Lambda^2_i(t)\big ):~j\in [1,n_i(t)],i\in [1,m]\big \}
\end{equation} where $\mathcal{E}_{i,j}(t)=\big \{e_{i,j,s}(t)\big \}^{n_{i,j}(t)}_{s=1}$ with $e_{i,j,s}(t)\in \Lambda^2_i(t)$, $n_i(t)\geq 1$ and $\mathcal{E}\big (\Lambda^2_i(t)\big )$ is the hypergraph set for $i\in [1,m]$ (Ref. Definition \ref{defn:hypergraph-basic-definition}). We call the following set

\begin{equation}\label{eqa:hypergraph-555555}
\textbf{\textrm{B}}(H(t),m)=(\Lambda_1(t),\Lambda_2(t),\dots ,\Lambda_m(t),H(t)),~\Lambda_i(t)\not \subset \Lambda_j(t), \Lambda_i(t)\not \cong \Lambda_j(t),i\neq j
\end{equation} \emph{hypernetwork meta-lattice base}, and we get a \emph{hypernetwork meta-lattice} based on $\textbf{\textrm{B}}(H(t),m)$ at each time step $t\in [a,b]$, denoted by $\textbf{\textrm{L}}(\Lambda_{\cup}(t),\mathcal{E}_{\cup}(t)\mid\textbf{\textrm{B}}(H(t),m))$.

Furthermore, we have the hyperedge set $\overline{\mathcal{E}}_{i,j}(t)\in \mathcal{E}\big (\Lambda^2_i(t)\big )$, where $\overline{\mathcal{E}}_{i,j}(t)=\big \{\overline{e}_{i,j,s}(t)\big \}^{n_{i,j}(t)}_{s=1}$ with $\overline{e}_{i,j,s}(t)\in \Lambda^2_i(t)$, such that $\overline{e}_{i,j,s}(t)=\Lambda^2_i(t)\setminus e_{i,j,s}(t)$ for $s\in [1,n_{i,j}(t)]$, $j\in [1,n_i(t)]$, $i\in [1,m]$ at each time step $t\in [a,b]$, and obtain another multi-hyperedge set
\begin{equation}\label{eqa:multi-hyperedge-sets-22}
\overline{\mathcal{E}}_{\cup}(t)=\big \{\overline{\mathcal{E}}_{i,j}(t)\in \mathcal{E}\big (\Lambda^2_i(t)\big ):~j\in [1,n_i(t)],i\in [1,m]\big \}
\end{equation} as well as obtain another hypernetwork meta-lattice $\overline{\textbf{\textrm{L}}}(\Lambda_{\cup}(t),\overline{\mathcal{E}}_{\cup}(t)\mid \textbf{\textrm{B}}(H(t),m))$ based on the hypernetwork meta-lattice base $\textbf{\textrm{B}}(H(t),m)$ shown in Eq.(\ref{eqa:hypergraph-555555}) for each time step $t\in [a,b]$.
In real applications, people can use two hypernetwork meta-lattices $\textbf{\textrm{L}}(\Lambda_{\cup}(t),\mathcal{E}_{\cup}(t)\mid\textbf{\textrm{B}}(H(t),m))$ and $\overline{\textbf{\textrm{L}}}(\Lambda_{\cup}(t),\overline{\mathcal{E}}_{\cup}(t)\mid \textbf{\textrm{B}}(H(t),m))$ as a pair of \emph{hypernetwork meta-lattice-keys} at each time step $t\in [a,b]$, and each $\Lambda_i(t)$ with $i\in [1,m]$ forms a community of hypernetwork meta-lattices.

If the connected graph $H(t)$ is a \emph{scale-free network}, then the vertex set $V(H(t))=\{\Lambda_i(t):i\in [1,p(t)]\}$ with $p(t)\geq 2$ is changing over all time step $t\in [a,b]$, we get a dynamic set $\Lambda^{p(t)}_{\cup}(t)=\bigcup^{p(t)}_{j=1}\Lambda_j(t)$ with $\Lambda_i(t)\not \subseteq \Lambda_j(t)$ for $i\neq j$, and the set
\begin{equation}\label{eqa:555555}
\textbf{\textrm{B}}(H(t),p(t))=(\Lambda_1(t),\Lambda_2(t),\dots ,\Lambda_{p(t)}(t),H(t)),~t\in [a,b]
\end{equation} is called \emph{scale-free hypernetwork meta-lattice base} for each time step $t\in [a,b]$ and we obtain two \emph{scale-free hypernetwork meta-lattices}
\begin{equation}\label{eqa:scale-free-meta-lattices-11}
{
\begin{split}
&\textbf{\textrm{L}}^{sf}\big (\Lambda^{p(t)}_{\cup}(t),\mathcal{E}_{\cup}(t)\mid \textbf{\textrm{B}}(H(t),p(t))\big ),\\
&\mathcal{E}_{\cup}(t)=\big \{\mathcal{E}_{i,j}(t)\in \mathcal{E}\big (\Lambda^2_i(t)\big ):j\in [1,n_i(t)],i\in [1,p(t)]\big \},~t\in [a,b]
\end{split}}
\end{equation} and
\begin{equation}\label{eqa:scale-free-meta-lattices-22}
{
\begin{split}
&\overline{\textbf{\textrm{L}}}^{sf}\big (\Lambda^{p(t)}_{\cup}(t),\overline{\mathcal{E}}_{\cup}(t)\mid \textbf{\textrm{B}}(H(t),p(t))\big ), \\
&\overline{\mathcal{E}}_{\cup}(t)=\big \{\overline{\mathcal{E}}_{i,j}(t)\in \mathcal{E}\big (\Lambda^2_i(t)\big ):j\in [1,n_i(t)],i\in [1,p(t)]\big \},~t\in [a,b]
\end{split}}
\end{equation} where $n_i(t)\geq 1$ for $i\in [1,p(t)]$.

\begin{rem}\label{rem:333333}
Since, at each time step $t\in [a,b]$, $1\leq n_i(t)\leq \big |\mathcal{E}\big (\Lambda^2_i(t)\big )\big |$ for each $i\in [1,p(t)]$ in Eq.(\ref{eqa:multi-hyperedge-sets-11}), Eq.(\ref{eqa:multi-hyperedge-sets-22}), Eq.(\ref{eqa:scale-free-meta-lattices-11}) and Eq.(\ref{eqa:scale-free-meta-lattices-22}), although we do not know the exact value of each number $\big |\mathcal{E}\big (\Lambda^2_i(t)\big )\big |$ with $i\in [1,p(t)]$, however, the (scale-free) hypernetwork meta-lattices introduced in this subsection are fundamentally different from other graphic lattices.\qqed
\end{rem}

\subsection{Graph networks from DeepMind and GoogleBrain}

\subsubsection{Definition, functions and questions of graph networks}

\begin{defn} \label{defn:graph-network-DeepMind}
\cite{Peter-Battaglia-Jessica-et-al-arXiv-2018} A \emph{graph network framework} (GNF) is a set of functions organized according to the graph structure in a topological space, used for relational reasoning and combinatorial generalization. \qqed
\end{defn}

\begin{defn} \label{defn:graph-network-DeepMind22}
\cite{Peter-Battaglia-Jessica-et-al-arXiv-2018} Within graph network framework (can be used to implement a wide variety of architectures), a \emph{graph} $G$ is defined as $G=(u,V,E)$, where $u$ is the global attributes, $V =\big \{v_i\big \}^{N^v}_{i=1}$ is the set of nodes ($|V|=N^v$), and $E =\big \{(e_k, r_k,s_k)\big \}^{N^e}_{k=1}$ is the set of edges ($|E|=N^e$).\qqed
\end{defn}

\begin{rem}\label{rem:333333}
About Definition \ref{defn:graph-network-DeepMind22}, the authors use ``\emph{graph}'' to mean a directed, attributed multi-graph with a global attribute, where a node is denoted as $v_i$, an edge as $e_k$, and the global attributes as $u$. The authors also use $s_k$ and $r_k$ to indicate the indices of the sender and receiver nodes (see below), respectively, for edge $k$. To be more precise, these terms are defined as:

\textbf{Directed}: one-way edges, from a ``sender'' node to a ``receiver'' node.

\textbf{Attribute}: properties that can be encoded as a vector, set, or even another graph.

\textbf{Attributed}: edges and vertices have attributes associated with them.

\textbf{Global attribute}: a graph-level attribute.

\textbf{Multi-graph}: there can be more than one edge between vertices, including self-edges.

The authors in \cite{Peter-Battaglia-Jessica-et-al-arXiv-2018} argue that \emph{combinatorial generalization} must be a top priority for AI to achieve human-like abilities, and that structured representations and computations are key to realizing this objective. And they present a new building block for the AI toolkit with a \emph{strong relational inductive bias} -- \emph{the graph network}, which generalizes and extends various approaches for neural networks that operate on graphs, and provides a straightforward interface for manipulating \emph{structured knowledge} and producing \emph{structured behaviors}. The \emph{principle} of combinatorial generalization supported by \emph{graph networks} is to construct new inferences, predictions, and behaviors from known building blocks.\qqed
\end{rem}

\begin{rem}\label{rem:relational-reasoning-generalization}
In \cite{Peter-Battaglia-Jessica-et-al-arXiv-2018}, the authors from DeepMind, GoogleBrain, MIT and University of Edinburgh have shown the following main functions of graph networks:
\begin{asparaenum}[\textbf{Func}-1. ]
\item Graph networks support \emph{relational reasoning} and \emph{combinatorial generalization}, becoming more complex, interpretable, and flexible reasoning patterns.
\item The graph network framework defines a class of relational inference functions for representing graphical structures, summarizes and extends various MPNN, graph neural network, and NLNN methods, and supports the construction of complex structures from simple building blocks.
\item The main computing unit of the graph network framework is the graph to graph module, which takes the graph as input, performs calculations on the structure, and returns the graph as output.
\item Graph networks generalizes and extends various approaches for neural networks that operate on graphs, and provides a straightforward interface for manipulating \emph{structured knowledge} and producing \emph{structured behaviors}.\qqed
\end{asparaenum}
\end{rem}

\begin{rem}\label{rem:graph-network-open-questions}
And moreover, the authors in \cite{Peter-Battaglia-Jessica-et-al-arXiv-2018} have asked for the solutions of the following open questions:
\begin{asparaenum}[\textbf{GNQ}-1. ]
\item \label{GNQ:graphs-come-from11} \textbf{Where} do the graphs come from that graph networks operate over? Examples of data with more explicitly specified entities and relations include knowledge graphs, social networks, parse trees, optimization problems, chemical graphs, road networks, and physical systems with known interactions.
\item \label{GNQ:graphs-come-from22} Many underlying graph structures are much more sparse than a fully connected graph, and it is an open question how to induce this sparsity.
\item \label{GNQ:graphs-come-from33} One of the hallmarks of deep learning has been its ability to perform complex computations over raw sensory data, such as images and text, yet it is unclear the best ways to convert sensory data into more structured representations like graphs.
\item \label{GNQ:modify-graph-structures} \textbf{How} to adaptively modify graph structures during the course of computation? For example, if an object fractures into multiple pieces, a node representing that object also ought to split into multiple nodes. Similarly, it might be useful to only represent edges between objects that are in contact, thus requiring the ability to add or remove edges depending on context. The question of how to support this type of adaptivity is also actively being researched, and in particular, some of the methods used for identifying the underlying structure of a graph may be applicable.\qqed
\end{asparaenum}
\end{rem}

\begin{problem}\label{qeu:444444}
Let $S_{hyper}=\bigcup ^m_{k=1}A_k$ be a set-set, where $A_k=\{e_{k,1},e_{k,2},\dots ,e_{k,c(k)}\}$ with $c(k)\geq 1$, such that there is no relational function $\varphi_{i,j}$ for any pair of sets $A_i$ and $A_j$ holding $A_j=\varphi_{i,j}(A_i)$, also there is no relational reasoning between elements of the set-set $S_{hyper}$. A graph $G$ admits a set-set-coloring $F:V(G)\rightarrow \mathcal{E}\in \mathcal{E}(S^2_{hyper})$ with $\bigcup_{e\in \mathcal{E}}e=S_{hyper}$, such that each edge $uv\in E(G)$ holds $F(u)\cap F(v)\neq \emptyset$ true. Conversely, each pair of subsets $e,e\,'\in \mathcal{E}$ with $e\cap e\,'\neq \emptyset$ corresponds an edge $xy\in E(G)$, so that $F(x)\cap F(y)\neq \emptyset$. Then the complementary graph $\overline{G}$ of the graph $G$ admits a set-set-coloring $\overline{F}:V(\overline{G})\rightarrow \mathcal{E}\in \mathcal{E}(S^2_{hyper})$, such that $\overline{F}(x)\cap \overline{F}(y)=\emptyset$ for each edge $xy\in E(\overline{G})$, and $\overline{F}(V(\overline{G}))=F(V(G))$.

For each edge $uv\in E(G)$, there is no relational function $\varphi_{a,b}$ for any pair of sets $A_a\in F(u)$ and $A_b\in F(v)$ holding $A_b=\varphi_{a,b}(A_a)$ if $A_a\neq A_b$.

For each edge $xy\in E(\overline{G})$, there is no relational function $\varphi_{s,t}$ for any pair of sets $A_s\in \overline{F}(x)$ and $A_t\in \overline{F}(y)$ holding $A_t=\varphi_{s,t}(A_s)$ if $A_s\neq A_t$.

\textbf{Consider} the impact of the graphs $G$ and $\overline{G}$ to some artificial intelligences.
\end{problem}

\subsubsection{Techniques of topology code theory for graph networks}

We present a definition of graph networks contrasting Definition \ref{defn:graph-network-DeepMind} as follows:

\begin{defn} \label{defn:coloring-define-graph-network}
$^*$ Let $S_{func}(t)=\{f_1(t),f_2(t),\dots ,f_m(t)\}$ be a reasoning function set for $t\in [a,b]$. A network $N(t)$ admits a total coloring $F: M(t)\rightarrow S_{func}(t)$, where $M(t)\subseteq V(N(t))\cup E(N(t))$ for each time step $t\in [a,b]$. Then there are the following constraints:

(i) As $M(t)=V(N(t))$, there is a relational reasoning transformation $\theta_{uv}$ for each edge $uv\in E(N(t))$, such that $F(u)=f_u(t)\in S_{func}(t)$ and $F(v)=f_v(t)\in S_{func}(t)$ hold $f_v(t)=\theta_{uv}[f_u(t)]$, and $F(V(N(t)))=S_{func}(t)$.

(ii) As $M(t)=E(N(t))$, there is a relational reasoning transformation $\theta_{uv,uw}$ for two adjacent edges $uv,uw\in E(N(t))$, such that $F(uv)=f_{uv}(t)\in S_{func}(t)$ and $F(uw)=f_{uw}\in S_{func}(t)$ hold $f_{uw}=\theta_{uv,uw}[f_{uv}(t)]$, and $F(E(N(t)))=S_{func}(t)$.

(iii) As $M(t)=V(N(t))\cup E(N(t))$, such that $F(u)=f_u(t)\in S_{func}(t)$, $F(uv)=f_{uv}(t)\in S_{func}(t)$ and $F(v)=f_v(t)\in S_{func}(t)$ hold
$$
F(uv)=f_{uv}(t)=\theta_u(f_u(t))=\theta_u(F(u)),~ F(uv)=f_{uv}(t)=\eta_v(f_v(t))=\eta_v(F(v))
$$ and $\theta_v(F(u))=\eta_u(F(v))$, as well as $F(N(t))\cup E(N(t))=S_{func}(t)$.

Conversely, each pair of functions $f_i(t)$ and $f_j(t)$ of $S_{func}(t)$ holding $f_i=\theta_{i,j}(f_j)$ corresponds an edge $xy\in E(N(t))$, such that $F(x)=f_i(t)$ and $F(y)=f_j(t)$, or $F(x)=f_i(t)$ and $F(xy)=f_{j}(t)$.

Then we call $N(t)$ \emph{reasoning graph network} based on the reasoning function set $S_{func}(t)$.\qqed
\end{defn}

\begin{defn} \label{defn:hypergraph-data-sequence}
$^*$ For a \emph{thing-data set} $D_a(t)=\big \{D^1_{a}(t),D^2_{a}(t),\dots ,D^{m(t)}_{a}(t)\big \}$, we have a hypergraph set $\mathcal{E}(D^2_a(t))=\{\mathcal{E}_i(t): i\in [1,n(t)]\}$, where each hyperedge set $\mathcal{E}_i(t)$ holds $\bigcup _{e\in \mathcal{E}_i(t)}e=D_a(t)$. Suppose that each graph $G_i(t)$ admits a total set-coloring $\theta _i:V(G_i(t))\cup E(G_i(t))\rightarrow \mathcal{E}_i(t)$, such that each edge $uv\in E(G_i(t))$ holds $\theta_i(u)\cap \theta_i(v)\neq \emptyset $ and $\theta(u)_i\cap \theta_i(v)\subseteq \theta_i(uv)$. Conversely, if there are hyperedges $e,e\,'\in \mathcal{E}_i(t)$ holding $e\cap e\,'\neq \emptyset $, then there is an edge $xy\in E(G_i(t))$ holds $\theta_i(x)=e$ and $\theta_i(y)=e\,'$, as well as $\theta_i(x)\cap \theta_i(y)\subseteq \theta_i(xy)$. The sequence $\{G_i(t)\}^{n(t)}_{i=1}$ is called \emph{hypergraph data-functional network sequence}.\qqed
\end{defn}

\begin{defn} \label{defn:coloring-define-graph-network22}
$^*$ Let $S_{func}(t)=\{f_1(t),f_2(t),\dots ,f_m(t)\}$ be a dynamic function set for $t\in [a,b]$. A hyperedge set $\mathcal{E}\in \mathcal{E}(S^2_{func}(t))$ forms a vertex-intersected graph $H$ of a hypergraph $\mathcal{H}_{yper}=(S_{func}(t),\mathcal{E})$ admitting a set-coloring $h:V(H)\rightarrow \mathcal{E}$, such that each edge $uv\in E(H)$ corresponds a hyperedge $h(u)=e_u\in \mathcal{E}$ and another hyperedge $h(v)=e_v\in \mathcal{E}$ and there is a \emph{relational reasoning transformation} $O_{uv}$ hold $f_k(t)=O_{uv}[f_{i_1}(t), f_{i_2}(t),\dots ,f_{i_a}(t)]$ with $i_a\geq 1$, where $f_k(t)\in e_v$ and $f_{i_j}(t)\in e_u$ for each $i_j\in [i_1,i_a]$.

If there are two subsets $e_x\in \mathcal{E}$ and $e_y\in \mathcal{E}$, such that $f_y(t)\in e_y$ and hold
$$f_y(t)=O_{xy}[f_{x_1}(t),f_{x_2}(t),\dots ,f_{x_b}(t)],~x_j\in [x_1,x_b]
$$ then there is an edge $xy\in E(H(t))$ holds $h(x)=e_x$ and $h(y)=e_y$. We say that the set $\mathcal{E}\in \mathcal{E}(S^2_{func}(t))$ determines \emph{vertex-relational reasoning graphs} of a hypergraph $\mathcal{H}_{yper}=(S_{func}(t),\mathcal{E}(t))$.

Suppose that $\mathcal{E}(S^2_{func}(t))=\{\mathcal{E}_k(t): k\in [1,n_{f}(t)]\}$ with $t\in [a,b]$, where each hyperedge set $\mathcal{E}_k(t)$ holds $\bigcup _{e\in \mathcal{E}_k(t)}e=S_{func}(t)$. We get the \emph{vertex-relational reasoning sequence} $\{H_k(t)\}^{n_{f}(t)}_{k=1}$.\qqed
\end{defn}

\begin{rem}\label{rem:333333}
If the vertex number $|V(N(t))|>m=|S_{func}(t)|$ in Definition \ref{defn:coloring-define-graph-network}, then there exists a proper subnetwork $H(t)$ of the network $N(t)$ holding the vertex color number $|F(V(H(t)))|=|V(H(t))|=m=|S_{func}(t)|$, and moreover we say that the proper subnetwork $H(t)$ is an \emph{adaptation graph} of the function set $S_{func}(t)$, denoted as $H(t)\sim S_{func}(t)$. Clearly, if there are two adaptation graphs $H(t)\sim S_{func}(t)$ and $T(t)\sim S_{func}(t)$, then we claim that $H(t)\cong T(t)$.

In Definition \ref{defn:coloring-define-graph-network22}, we get a one-vs-more relational reasoning $f_k(t)=O_{uv}[f_{i_1}(t), f_{i_2}(t),\dots ,f_{i_a}(t)]$ about the relational reasoning between functions of the dynamic function set $S_{func}(t)$.\qqed
\end{rem}

\begin{example}\label{exa:8888888888}
In the subsection $C_{olor}$-hypergraphs, a graph $G$ admits colorings $f_1,f_2,\dots ,f_n$, and each coloring $f_i$ corresponds to another coloring $f_j$ with $i\neq j$ such that there is a transformation $\theta_{i,j}$ holding $f_j=\theta_{i,j}(f_i)$. So, we have a non-dynamic function set $S_{func}=\{f_1,f_2,\dots ,f_n\}$ for a graph network $N(t)$ defined in Definition \ref{defn:coloring-define-graph-network}.

In Definition \ref{defn:graphs-vs-total-graphs}, a total coloring $f_t$ of a graph $G$ becomes a (proper) vertex coloring $g_v$ of its own total graph $T(G)$, that is, $f_t\sim g_v$. If a non-dynamic function set $S^T_{func}=\{f_{t,1},f_{t,2},\dots ,f_{t,n}\}$ is made by the total colorings $f_{t,1},f_{t,2},\dots ,f_{t,n}$ of a graph $G$, then $S^T_{func}$ corresponds to another non-dynamic function set $S^v_{func}=\{g_{v,1},g_{v,2},\dots ,v_{t,n}\}$, in which each $g_{v,i}$ is a (proper) vertex coloring of the total graph $T(G)$. Thereby, this non-dynamic function set $S^T_{func}$ forms a graph network, so does $S^v_{func}$.\qqed
\end{example}

\subsubsection{Dynamic networks and active subnetworks for graph networks}

With the help of techniques of topology code theory, we try to approaches, or indirectly studies the functionality of graph networks proposed by \cite{Peter-Battaglia-Jessica-et-al-arXiv-2018}.

\textbf{A. }\textbf{Graph network framework and graph network block.} By means of Definition \ref{defn:pan-total-coloring-thing-set}, a graph $\phi(t)$ for each time step $t\in [a,b]$ admits a \emph{total operation-coloring} $F_t:V(\phi(t))\cup E(\phi(t))\rightarrow S_{thing}$ if $F(uv)=F(u)[\bullet_{W}]F(v)$ for each edge $uv \in E(\phi(t))$, where $[\bullet_{W}]$ is a $W$-constraint operation on a \emph{thing set} $S_{thing}$, and the graph $\phi(t)$ is called \emph{topological encoding graph} (topen-graph) at each time step $t\in [a,b]$. The topen-graph $\phi(t)$ is the graph network framework, and each topen-graph $\pi_{\varphi}(t)$ admits a $W$-constraint total coloring $h_i$, and operation-graph homomorphism $\pi_{\varphi}(t)\rightarrow _{oper}\phi(t)$, where $\varphi\in \{data, func, adapt, gener\}$ as follows:
\begin{asparaenum}[\textbf{\textrm{GNs}}-1.]
\item the topen-graph $\pi_{data}(t)$ is a data graph block used for assigning values to vertices of $\phi(t)$;

\item the topen-graph $\pi_{func}(t)$ is a function graph block used for the one-vs-one or more-vs-one relational reasoning of $\phi(t)$ (Ref. Definition \ref{defn:coloring-define-graph-network22});

\item the topen-graph $\pi_{adapt}(t)$ is an adjusting structure graph block used for the adaptive adjustment of $\phi(t)$;

\item the topen-graph $\pi_{gener}(t)$ is a generalized graph block used for the combinatorial generalization of $\phi(t)$.
\end{asparaenum}

Here, two or more graph networks can be stacked together (each network is isomorphic, but they represent different information), allowing us to construct multi-layer graph networks and fit complex computational processes.

\textbf{B. }\textbf{Dynamic networks and active subnetworks.} At each time step $t\in [a,b]$, a dynamic network $N(t)$ has its own vertex set $V(N(t))=V_+(t)\cup V_0(t)$, each node $x\in V_+(t)$ is active, each node $y\in V_0(t)$ is sleeping, and its own edge set $E(N(t))=E_+(t)\cup E_0(t)\cup E_{0+}(t)$, each edge $uv\in E_+(t)$ has its own ends $u,v\in V_+(t)$, each edge $st\in E_0(t)$ has its own ends $s,t\in V_o(t)$, $E_{0+}(t)$ contains those edges with one end in $V_+(t)$ and anther end in $V_0(t)$.

After doing a series of \emph{operation-graph homomorphisms} (\emph{weighted graph homomorphisms})
\begin{equation}\label{eqa:operation-graph-homomorphisms-GNQs}
\pi_{\varphi}(t)\rightarrow_{oper} N(t),\quad \varphi\in \{data, func, adapt, gener\},~t\in [a,b]
\end{equation} including adjusting structures for adaptivity, assigning data values to vertices and edges, realizing relational reasoning (Ref. Definition \ref{defn:coloring-define-graph-network} and Definition \ref{defn:coloring-define-graph-network22}), doing combinatorial generalization, so as to success \textbf{Func-1} in Remark \ref{rem:relational-reasoning-generalization}, thereby, we get an active network $N_+(t)=(u(t),V_+(t),E_+(t))$, with the global attribute $u(t)$ which is the composition of multiple-operation graph homomorphisms.

The properties of the graph network are updated in time steps during computation, including synchronous and asynchronous methods. When updating synchronously, the properties of all nodes in one time step are updated, while when updating asynchronously, only some nodes in one time step have their properties updated.

\textbf{C. }\textbf{Construction of topological structures.} We are able to construct complex topological structures $N(t)$ from simple building blocks $G_k(t)$ by various graph operations of topology code theory, such as $N(t)=\big [\bullet_{W}\big ]^m_{k=1}G_k(t)$, $N(t)=\big [\bullet_{W}\big ]^m_{k=1}a_kG_k(t)$, so as to implement \textbf{Func-2} in Remark \ref{rem:relational-reasoning-generalization}.

\textbf{D. }\textbf{Computation.} Notice that each topen-graph $G$ can be imputed into computer by its own adjacent matrix and its own Topcode-matrix, so as to achieve \textbf{Func-3} in Remark \ref{rem:relational-reasoning-generalization}.

\subsubsection{Discussing questions from graph networks}

We have noticed that no one of hyper-node, hyper-edge and hyergraph was mentioned in \cite{Peter-Battaglia-Jessica-et-al-arXiv-2018}, however, we have used the hypergraph $\mathcal{H}_{yper}=(S_{func},\mathcal{E})$ defined in Definition \ref{defn:coloring-define-graph-network22} for building up the active network $N_+(t)=(u(t),V_+(t),E_+(t))$ with the global attribute $u(t)$ at each time step $t\in [a,b]$ (Ref. Definition \ref{defn:graph-network-DeepMind22} and Eq.(\ref{eqa:operation-graph-homomorphisms-GNQs})).

\begin{example}\label{exa:8888888888}
According to Definition \ref{defn:coloring-define-graph-network}, the $n$ discrete entities in the real world can be related to each other according to a certain rule $P$, naturally, they themselves form a $(n,q)$-graph $G\subseteq K_n$ with this rule $P$, so we answer fully \textbf{GNQ}-\ref{GNQ:graphs-come-from11} of Remark \ref{rem:graph-network-open-questions}.

We can provide a part solution for \textbf{GNQ}-\ref{GNQ:graphs-come-from22} of Remark \ref{rem:graph-network-open-questions} as follows: Many underlying graph structures of graph networks $N(t)$ are \emph{scale-free networks} (\emph{scale-free vertex-intersected networks}) obeying a \emph{degree distribution} $P(k)\sim k^{-\gamma}$ and a \emph{cumulative distribution} $P_{cum}(k)\sim k^{1-\gamma}$ with $2<\gamma<3$ hold Eq.(\ref{eqa:Barabasi-hyperedge-degree-distribution}), since many dynamic networks are scale-free \cite{Barabasi-Albert1999}.

Moreover, since a scale-free network $N(t)$ is sparse, then the edge number $e_{net}(t)$ and vertex number $v_{net}(t)$ of the scale-free network $N(t)$ hold Eq.(\ref{eqa:sparse-edge-vertex-numbers}).
\end{example}

\begin{example}\label{exa:8888888888}
\textbf{Scale-free graph network lattice. }We can design the scale-free graph network lattice based on non-multi-edge vertex-coinciding operation.

\textbf{Input: } A \emph{scale-free graph network block base} is $\textbf{\textrm{S}}_{sf}(t)=\{S_1(t)$, $S_2(t)$, $\dots$, $S_m(t)\}$ ($t\in [a,b]$) with $S_i(t)\not \subset S_j(t)$ and $S_i(t)\not \cong S_j(t)$ if $i\neq j$, where each graph $S_k(t)$ is a non-data connected scale-free graph network block.

\textbf{Output: } A non-data scale-free graph network lattice $\textbf{\textrm{L}}_{sf}\big (Z^0[\bullet_{\pi}]\textbf{\textrm{S}}_{sf}(t)\big )$ shown in Eq.(\ref{eqa:v-co-oper-scale-free-net-graph-lattice}).

\textbf{Step-1} Suppose that $Q_1(t),Q_2(t),\dots, Q_I(t)$ is a permutation of $I$ scale-free graph network blocks $e_1S_1(t),e_2S_2(t),\dots, e_mS_m(t)$, where $I=\sum^m_{i=1} e_i\geq 1$.

\textbf{Step-2} Do the non-common neighbor vertex-coinciding operation defined in Definition \ref{defn:vertex-split-coinciding-operations} to the permutation $Q_1(t),Q_2(t),\dots, Q_I(t)$. We vertex-coincide a vertex $u_{i,j}\in V(Q_i(t))$ having higher degree with a vertex $v_{k,j}\in V(Q_k(t))$ having higher degree
into a vertex $w_{i,k,j}=u_{i,j}\bullet v_{k,j}$ for $j\in [1,k_{i,j}]$, such that the resultant graph $Q_i(t)[\bullet_{k_{i,j}}]Q_j(t)$ is a non-multi-edge scale-free graph network block holding
\begin{equation}\label{eqa:555555}
\big |E\big (Q_i(t)[\bullet_{k_{i,j}}]Q_j(t)\big )\big |=|E(Q_i(t))|+|E(Q_j(t))|
\end{equation}

Thereby, we get a non-multi-edge scale-free graph network block $R_1(t)=Q_1(t)[\bullet_{k_1}]Q_2(t)$, such that $|E(R_1(t))|=|E(Q_1(t))|+|E(Q_2(t))|$ after doing the non-common neighbor vertex-coinciding operation. Again, we get another non-multi-edge scale-free graph network block $R_2(t)=R_1(t)[\bullet_{k_2}]Q_3(t)$ holding $|E(R_2(t))|=|E(R_1(t))|+|E(Q_3)|$. Go on in this way, we have scale-free graph network blocks
\begin{equation}\label{eqa:555555}
R_i(t)=R_{i-1}(t)[\bullet_{k_{i}}]Q_{i+1}(t),~|E(R_i(t))|=|E(R_{i-1}(t))|+|E(Q_{i+1}(t))|,~i\in [1,I-1]
\end{equation} as well as $R_{0}(t)=Q_1(t)$. For simplicity's sake, we write
\begin{equation}\label{eqa:block-base-v-coinciding-oper11}
{
\begin{split}
R_{I-1}(t)&=R_{I-2}(t)[\bullet_{k_{I-1}}]Q_{I}(t)\\
&=Q_1(t)[\bullet_{k_1}]Q_2(t)[\bullet_{k_2}]\cdots[\bullet_{k_{I-1}}]Q_I(t)\\
&=\big [\bullet_{\pi}\big ]^m_{k=1}e_kS_k(t),~t\in [a,b]
\end{split}}
\end{equation} where $\pi=(k_1,k_2,\dots, k_{I-1})$.

\textbf{Step-3} The following set of scale-free graph network blocks
\begin{equation}\label{eqa:v-co-oper-scale-free-net-graph-lattice}
\textbf{\textrm{L}}_{sf}\big (Z^0[\bullet_{\pi}]\textbf{\textrm{S}}_{sf}(t)\big )=\Big \{ \big [\bullet_{\pi}\big ]^m_{k=1}e_kS_k(t):~e_k\in Z^0,S_k(t)\in \textbf{\textrm{S}}_{sf}(t)\Big \},~\sum^m_{k=1} e_k\geq 1,~t\in [a,b]
\end{equation} is called \emph{non-data scale-free graph network lattice} based on the scale-free graph network block base $\textbf{\textrm{S}}_{sf}(t)$, or abbreviated as \emph{scale-free graph network lattice}.

It has been confirmed: The use of \emph{linear preferential attachment} method will result in a scale-free network, and moreover Eq.(\ref{eqa:block-base-v-coinciding-oper11}) is just a linear preferential attachment. The preferential attachment plays a leading role in the development of scale-free networks.
\end{example}

\begin{problem}\label{question:answer-GNQ-1234}
\textbf{Is} a graph network defined in Definition \ref{defn:graph-network-DeepMind} a hypernetwork, or a scale-free vertex-intersected network? If it is so, some problems proposed in \textbf{GNQ-$k$} of Remark \ref{rem:graph-network-open-questions} for $k\in [1,4]$ can be partly answered.
\end{problem}

About Problem \ref{question:answer-GNQ-1234}, we have partly answered \textbf{GNQ-3} of Remark \ref{rem:graph-network-open-questions} by the technique of the operation-graph homomorphisms $\pi_{\varphi}(t)\rightarrow_{oper} N(t)$ in Eq.(\ref{eqa:operation-graph-homomorphisms-GNQs}).

\begin{problem}\label{question:444444}
Since graph networks support relational reasoning and combinatorial generalization, becoming more complex, interpretable, and flexible reasoning patterns in \cite{Peter-Battaglia-Jessica-et-al-arXiv-2018}. \textbf{Can} hypernetworks, or scale-free vertex-intersected networks be used for relational reasoning and combinatorial generalization?
\end{problem}

\begin{rem}\label{rem:333333}
Replacing ``graph'' with ``hypergraph'' yields possible hypergraph neural networks, hypergraph convolutional networks, hypergraph attention networks, hypergraph embedding, hypergraph generation networks, hypergraph spatiotemporal networks, etc.\qqed
\end{rem}

\textbf{What} is the accurate definition of a graph network? We did not find it in \cite{Peter-Battaglia-Jessica-et-al-arXiv-2018}.

The team led by Yu Shilun from Tsinghua University has summarized the many advances in deep learning graph processing into five sub-directions:

1) graph convolution networks;

2) graph attention networks;

3) graph embedding;

4) graph generative networks; and

5) graph spatialtemporal networks.

The DeepMind team focuses on solving the last four directions in five sub directions, namely graph attention network, graph embedding, graph generation network, and graph spatiotemporal network. They integrated the results of these four directions into a unified framework and named it as \emph{graph network}.

Based on numerous facts, we provide the definition of a graph network as follows:

\begin{defn} \label{defn:graph-network-definition}
$^*$ A \emph{graph network} is a network that can organically associate research objects according to attribute rules by using a topological structure $G$ and a mathematical (attribute) constraint set $R_{est}(c_1,c_2,\dots, c_m)$, such that each research object is a vertex of the topological structure $G$, the attributes of two ends $x$ and $y$ of each edge $xy\in E(G)$ and the attribute of the edge $xy$ hold the constraint set $R_{est}(c_1,c_2,\dots, c_m)$. \qqed
\end{defn}

About definition \ref{defn:graph-network-definition}, the specific manifestation in topology code theory is a graph pan-coloring, as defined in the definition \ref{defn:pan-all-thing-coloringss}.

\begin{defn} \label{defn:pan-all-thing-coloringss}
\cite{Bing-et-al-arXiv-asymmetric-4520331} A graph $G$ admits a coloring $F:S\rightarrow X$, where $S\subseteq V(G)\cup E(G)$, such that $F$ holds a group of $W_1$-constraint, $W_2$-constraint, $\dots$, $W_n$-constraint, denoted as $\{W_i\}^n_{i=1}$-constraint with $n\geq 1$. Then
\begin{asparaenum}[\textbf{\textrm{Pancolor}}-1.]
\item $F$ is called \emph{$\{W_i\}^n_{i=1}$-constraint string-coloring} if $X$ is a set of strings.
\item $F$ is called \emph{$\{W_i\}^n_{i=1}$-constraint coloring-coloring} if $X$ is a set of colorings.
\item $F$ is called \emph{$\{W_i\}^n_{i=1}$-constraint set-coloring} if $X$ is a set of sets.
\item $F$ is called \emph{$\{W_i\}^n_{i=1}$-constraint vector-coloring} if $X$ is a set of vectors.
\item $F$ is called \emph{$\{W_i\}^n_{i=1}$-constraint matrix-coloring} if $X$ is a set of matrices.
\item $F$ is called \emph{$\{W_i\}^n_{i=1}$-constraint graph-coloring} if $X$ is a set of graphs.
\item $F$ is called \emph{$\{W_i\}^n_{i=1}$-constraint string-lattice coloring} if $X$ is a string lattice.
\item $F$ is called \emph{$\{W_i\}^n_{i=1}$-constraint graphic-lattice coloring} if $X$ is a graphic lattice.
\item $F$ is called \emph{$\{W_i\}^n_{i=1}$-constraint graph set-coloring} if $X$ is a set of graph sets.
\item $F$ is called \emph{$\{W_i\}^n_{i=1}$-constraint group-coloring} if $X$ is an every-zero graphic group.
\item $F$ is called \emph{$\{W_i\}^n_{i=1}$-constraint string-group coloring} if $X$ is an every-zero string group.
\item $F$ is called \emph{$\{W_i\}^n_{i=1}$-constraint graphic-group coloring} if $X$ is an every-zero graphic group.
\item $F$ is called \emph{$\{W_i\}^n_{i=1}$-constraint thing-coloring} if $X$ is a set of things having a particular property or a group of particular properties.\qqed
\end{asparaenum}
\end{defn}

We have summarized the research points of the literature \cite{Peter-Battaglia-Jessica-et-al-arXiv-2018} as follows:
\begin{asparaenum}[\textbf{Point-}1. ]
\item \textbf{Graph networks attempt to unify various networks in deep learning.} Graph networks are the generalization of graph neural networks (GNN) in deep learning theory and probabilistic graphical model (PGM). A graph network is composed of graph network blocks and has a flexible topology structure, and it can be transformed into various forms of connectionism models including feedforward neural networks (FNN) and recursive neural network (RNN) \emph{etc}.

\item \textbf{Undirected graph networks and directed graph networks of graph networks.} The properties of nodes and edges in a graph network are the same as the graph structure, which can be divided into directed graph and undirected graph. The examples of directed graphs are recurrent neural networks, the examples of directed graphs are Hopfield neural networks, Markov networks \emph{etc}. More general graph networks are suitable for processing data with graph structures, such as knowledge graphs, social networks, molecular networks, \emph{etc}.

\item \textbf{Graph network framework.} A graph network framework based on graph network blocks defines a class of functions for relational reasoning on graph structure representations. The graph network framework summarizes and extends various graph neural networks, MPNN, and NLNN methods, and supports building complex architectures from simple building blocks.

\item \textbf{Relational reasoning in intelligent evolution.} The ability of humans to summarize combinations mainly depends on their \emph{cognitive mechanisms} for expressing structures and reasoning relationships, structured knowledge, and structured behavior. The graph network precisely reflects the principle of combinatorial induction, which constructs new inferences, predictions, and behaviors from known building blocks.

\item \textbf{Inductive bias.} In model learning, inductive biases make parameters tend to adjust to a certain state, and the model tends to learn to a certain standard format.

\item \textbf{Combinatorial generalization.} Each node in the graph network has internal and system states, called \emph{attribute}. The attributes of the graph network are updated in time-steps during computation, including synchronous and asynchronous methods. When updating synchronously, the attributes of all nodes in one time step are updated, while when updating asynchronously, only some nodes in one time step have their attributes updated.
\item \textbf{Structured intelligent technology.} Graph networks are suitable for processing data with graph structures. To evolve AI from ``perceptual intelligence'' to ``cognitive intelligence'', graph networks advocate for actionable structured knowledge, generated structured behavior, structured representations, and structured computing.
\item \textbf{Artificial intelligence implementation and intelligent evolution of graph networks.} Combination generalization is the primary task for artificial intelligence to achieve similar abilities to humans, and the structured representation and computation are the key to achieving this goal, and the key to achieving this goal is to represent data in a structured manner, as well as structured computation.
\end{asparaenum}

\section{Conclusion}

The research works of this article have the following key-points:
\begin{asparaenum}[\textbf{\textrm{Point}}-1 ]
\item Try finding new objects, ideas, problems, and theories for topology code theory, such as, proposing edge-hamiltonian graph; constructing hypergraphs; exploring hypernetwork; considering some problems similarly with Kelly-Ulam's Reconstruction Conjecture: edge-hamiltonian problem, 4-colorable problem on maximal planar graphs, hypergraph isomorphism conjecture.
\item Algebraic methods, including topological knowledge, topological action, topological expression and topological computation, are used in researching hypergraphs and their vertex-intersected graphs, and enrich

\qquad (i) topological groups including every-zero graphic group, every-zero Topcode-matrix group, every-zero parameterized Topcode-matrix group, every-zero adjacent-matrix group, every-zero topological string group, every-zero topen-graph set group, every-zero mixed-graphic group, every-zero hyperedge-set group, every-zero graphic group based on hypergraph, every-zero hypergraph group and pan-group.

\qquad (ii) tree-graph lattice, edge-coincided vertex-intersected graph lattice, hyperedge-coincided hypergraph lattice, vertex-coincided vertex-intersected graph lattice, $K_{tree}$-spanning lattice, mixed vertex-intersected graph lattice, operation vertex-intersected network lattice, scale-free operation vertex-intersected network meta-lattice and hypernetwork meta-lattice, edge-hamiltonian lattice and maximal planar graphic lattice.
\item Topological groups in topology code theory are applied to network overall topology encryption.
\item Try solving some difficult problems by Topological lattices of topology code theory.
\item Applying the technology of topology code theory to the investigation of graph networks from DeepMind and GoogleBrain is a new attempt.

\item Although a hypergraph is a subset system of a finite set, however, our research findings confirm: A key signature of human intelligence is the ability to make ``infinite use of finite means'' (Humboldt, 1836; Chomsky, 1965), in which a small set of elements (such as words) can be productively composed in limitless ways (such as into new sentences) in \cite{Peter-Battaglia-Jessica-et-al-arXiv-2018}.
\item Inspired by the functions of graph networks proposed by DeepMind and GoogleBrain, and ``\textbf{how} to adaptively modify graph structures during the course of computation in \textbf{GNQ}-\ref{GNQ:modify-graph-structures} of Remark \ref{rem:graph-network-open-questions}?'', we \textbf{raise} the following questions:

\qquad What is an intelligent (dynamic) graph network?

\qquad What is an intelligent Internet of Things?

\qquad What is an intelligent dynamic network?

\qquad What is an intelligent dynamic hypernetwork?

since AI technology will surpass ordinary researchers in many fields of scientific research in the near future.
\end{asparaenum}

\vskip 0.2cm

\textbf{We hope:} The FCGSC-problem, the Hypergraph-string problem and the PCTSMGHS-string problem proposed in this article can resist attacks equipped with AI technology and quantum computing in the future.

\textbf{We know:} There is a lack of applications of hypergraphs in other fields, such as graph theory, information security, privacy protection. And topology code theory requires more profound theories and powerful applications, and develops topological knowledge, topological representation, topological behavior and topological computation.

\textbf{We hope:} The vertex-intersected graphs mentioned here can be applied to real situations, such as network security, privacy protection, as well as anti-quantum computing in asymmetric cryptography. Predictably, hypergraph theory will be an important application in the future resisting AI attacks equipped quantum computer, although the research achievements and application reports of hypergraphs are far less than that of popular graphs.

\textbf{We think:} In future research, metaverse will be an important scene for the theoretical and application research of hypergraphs and hypernetworks. It can also be said that the development of Metaverse will greatly promote the theoretical and application research of hypergraphs and hypernetworks.

\textbf{We imagine:} It is not difficult to imagine that people use quantum computers, quantum phones, and quantum cryptography for communication and work in a short period of time in future bioelectronic devices. The human body can embed bioelectronic computer chips and other bioelectronic devices, and the DNA of the human body stores information data and provides power for embedded bioelectronic devices.

\textbf{We declare:} Part of contents mentioned in this article have been applied the invention patents of CHINA.

\emph{Graphs are codes, and codes are graphs. There's endless stuff to explore, it's interesting and beautiful, and a source of a lot of great questions} \cite{how-big-data-graph-theory-20210819}.

\section*{Acknowledgment}

The author, \emph{Bing Yao}, was supported by the National Natural Science Foundation of China under grants No. 61163054, No. 61363060 and No. 61662066.

{\footnotesize

}

\newpage

\textbf{Appendix A}

\begin{center}
\textbf{Table-1.} The numbers of trees of $p$ vertices \cite{Harary-Palmer-1973}
\end{center}
\begin{center}
\begin{tabular}{c|ll}
$p$&$t_p$&$T_p$\\
\hline
7&11&48\\
8&23&115\\
9&47&286\\
10&106&719\\
11&235&1,842\\
12&551&4,766\\
13&1,301&12,486\\
14&3,159&32,973\\
15&7,741&87,811\\
16&19,320&235,381\\
\end{tabular}\qquad
\begin{tabular}{c|ll}
$p$&$t_p$&$T_p$\\
\hline
17&48,629&634,847\\
18&123,867&1,721,159\\
19&317,955&4,688,676\\
20&823,065&12,826,228\\
21&2,144,505&35,221,832\\
22&5,623,756&97,055,181\\
23&14,828,074&268,282,855\\
24&39,299,897&743,724,984\\
25&104,636,890&2,067,174,645\\
26&279,793,450&5,759,636,510\\
\end{tabular}
\end{center}
where $t_p$ is the number of trees of $p$ vertices, and $T_p$ is the number of rooted trees of $p$ vertices.

%%\newpage

\textbf{Appendix B}

\begin{figure}[h]
\centering
\includegraphics[width=16.4cm]{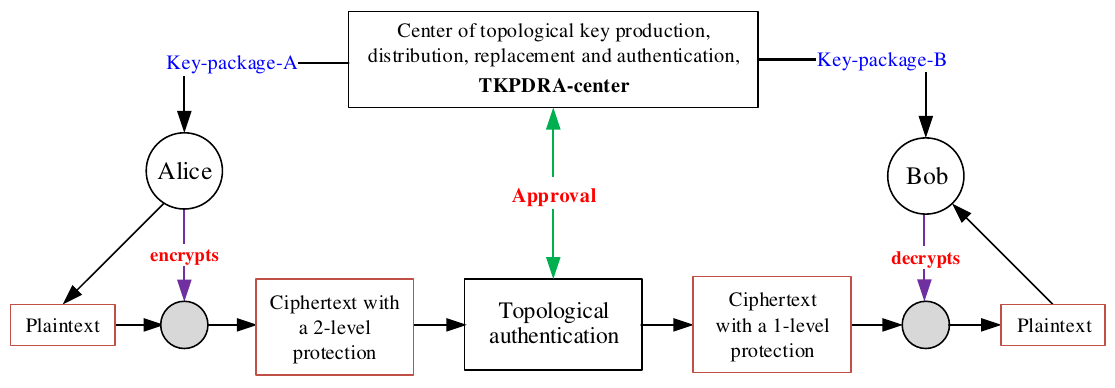}\\
\caption{\label{fig:string-graph-group-center-11}{\small The single topological authentication mechanism of TKPDRA-center, cited from \cite{Bing-et-al-arXiv-asymmetric-4520331}.}}
\end{figure}

\begin{figure}[h]
\centering
\includegraphics[width=16.4cm]{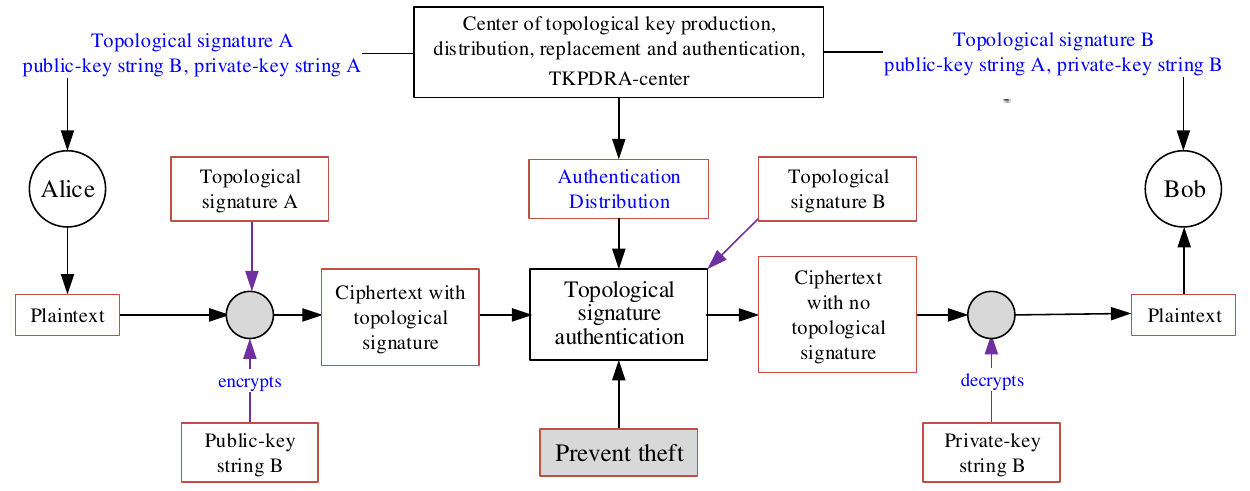}\\
\caption{\label{fig:Topological-signature-22}{\small The center of key-encryption distribution for local area networks and communities, cited from \cite{Bing-et-al-arXiv-asymmetric-4520331}.}}
\end{figure}

\begin{figure}[h]
\centering
\includegraphics[width=16.4cm]{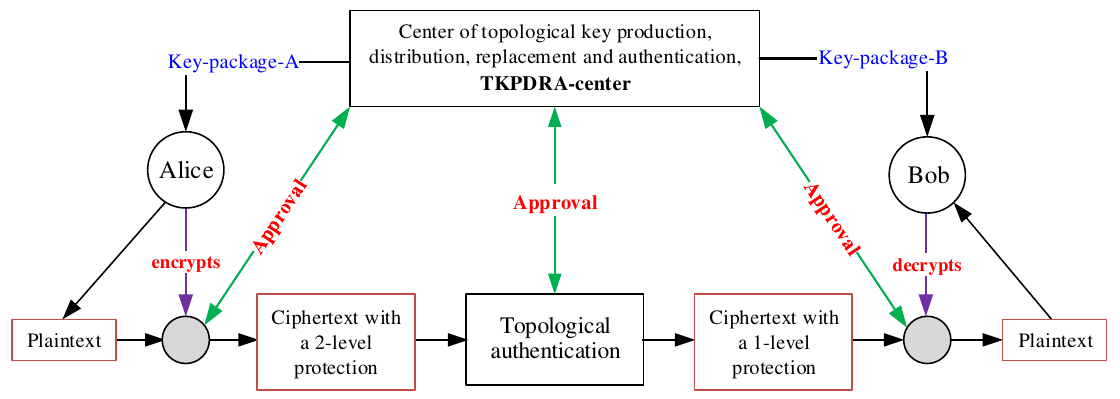}\\
\caption{\label{fig:string-graph-group-center-2}{\small The multiple topological authentication mechanism of TKPDRA-center, cited from \cite{Bing-et-al-arXiv-asymmetric-4520331}.}}
\end{figure}

%%\end{CJK}


\begin{thebibliography}{9}
\setlength{\parskip}{0pt}
\bibitem{Erdos-Saks-Sos-1986} Erd\"{o}s P., Saks M. and Sos V. Maximum induced trees in graphs. J. Combin. Ser. B, 1986, \textbf{41}: 61-79.
\bibitem{Albertson-Berman-large-forest-1979} Albertson M.O. and Berman D.M. A conjecture on planar graphs. in: J. A. Bondy, U.S.R.Murty (Eds.), Graph Theory and Related Topics, \textbf{357}, 1979.
\bibitem{Balister-Gyoi-Schelp-2011}P. N. Balister, E. Gy\H{o}i, R. H. Schelp. Coloring vertices and edges of a graph by nonempty subsets of a set. European Journal of Combinatorics 32 (2011) 533-537.
\bibitem{Bang-Jensen-Gutin-digraphs-2007}Jorgen Bang-Jensen, Gregory Gutin. Digraphs Theory, Algorithms and Applications. Springer-Verlag, 2007.
\bibitem{Barabasi-Albert1999} Albert-L\'{a}szl\'{o} Barab\'{a}si and Reka Albert. Emergence of scaling in random networks. \emph{Science} \textbf{286} (1999) 509-512.
\bibitem{Bondy-2008} J. A. Bondy, U. S. R. Murty. Graph Theory. Springer London, 2008. DOI: 10.1007/978-1-84628-970-5. J. Adrian Bondy and U. S. R. Murty, Graph Theory with Application. The MaCmillan Press Ltd., London, 1976.

\bibitem{Bela-Bollobas} B\'{e}la. Bollob$\acute{a}$s. The Modern Graph Theory, Springer-Verlag, New Tork, Inc., 1998.

\bibitem{Bollobas-Thomason-2006}B\'{e}la Bollob\'{a}s, Andrew Thomason. Set colourings of graphs. Discrete Mathematics 306 (2006) 948-952.

\bibitem{Dailey-D-P-1980} Dailey, D. P. (1980). Uniqueness of colorability and colorability of planar 4-regular graphs are NP-complete. Discrete Mathematics, 30 (3): 289-293, doi:10.1016/0012-365X(80)90236-8
\bibitem{Gallian2022} Joseph A. Gallian. A Dynamic Survey of Graph Labeling. The electronic journal of combinatorics, \# DS6, Twenty-fifth edition, December 2, 2022. (623 pages, over 200 graph labelings from 3295 reference papers)
\bibitem{Gena-Hahn-Claude-Tardif-1997} Ge\v{n}a Hahn and Claude Tardif. Graph homomorphisms structure and symmetry. Graph Symmetry. NATO Adv. Sci. Inst. Ser. C. Math. Phys. Sci. 497, 107-166 (1997). (60 pages, 142 reference papers)

\bibitem{Garey-M-R-Johnson-1979} Garey, M. R.; Johnson, D. S. (1979), Computers and Intractability: A Guide to the Theory of NP-Completeness, W.H. Freeman, ISBN 0-7167-1045-5.
\bibitem{Jerrum-Mark-1994-information}Jerrum, Mark (1994). Counting trees in a graph is \#P-complete. Information Processing Letters, 51 (3): 111-116, doi:10.1016/0020-0190(94)00085-9, ISSN 0020-0190.
\bibitem{M-Kubale-harmonious-2000} M. Kubale. Problem of harmonious coloring of graph vertices (in Polsh). Zesz. Nauk. Pol.Sl. Ser. Automatyka 131 (2000) 103-110.

\bibitem{Garey-Johnson-Stockmeyer-1974} Garey, M. R.; Johnson, D. S.; Stockmeyer, L. (1974), Some simplified NP-complete problems, Proceedings of the Sixth Annual ACM Symposium on Theory of Computing, pp. 47-63, doi:10.1145/800119.803884, S2CID 207693360
\bibitem{Holyer-I-1981}Holyer, I. (1981), The NP-completeness of edge-coloring, SIAM Journal on Computing, 10 (4): 718-720, doi:10.1137/0210055
\bibitem{Harary-Palmer-1973}Harary F. and Palmer E. M. Graphical enumeration. Academic Press, 1973.
\bibitem{S-M-Hegde-2009}S. M. Hegde. Set colorings of graphs. European Journal of Combinatorics 30 (2009) 986-995.

\bibitem{Chris-Peikert-decade} Chris Peikert. A Decade of Lattice Cryptography. Found. Trends Theor. Comput. Sci. 10(4) (2016): 283-424. (90 pages)

\bibitem{how-big-data-graph-theory-20210819} https://www. quantamagazine. org/how-big-data-carried-graph-theory-into-new-dimensions-20210819/
\bibitem{Hanna-Andrzej-Marek-2016} Furmanczyk Hanna, Jastrzebski Andrzej, Kubale Marek. Equitable coloring of graphs. Recent theoretical results and new practical algorithms. Archives of Control Sciences, 2016, 26(3): 281-295.

\bibitem{Tom-Siegfried-Wolfram-hypergraph-2020}Tom Siegfried. Stephen Wolfram's hypergraph project aims for a fundamental theory of physics. APRIL 14, 2020. https://www. sciencenews. org/article/stephen-wolfram-hypergraph-project-fundamental-theory-physics.

\bibitem{Robertson-Sanders-Daniel-Seymour-Thomas1996} Robertson, Neil; Sanders, Daniel P.; Seymour, Paul; Thomas, Robin (1996). Efficiently four-coloring planar graphs. Proceedings of the 28th ACM Symposium on Theory of Computing (STOC 1996), 571-575.

\bibitem{RThomas-Robin-1998} Thomas, Robin (1998). An Update on the Four-Color Theorem. Notices of the American Mathematical Society, 45 (7), 848-859, MR 1633714.
\bibitem{Kenneth-Appel-Wolfgang-Haken1976}Kenneth Appel and Wolfgang Haken at the University of Illinois announced, on June 21, 1976, that they had proved the 4-color theorem.
\bibitem{Greenwell-Kronk-Uniquely-4c-1973} Greenwell D. and Kronk H V. Uniquely Line-colorable Graphs, Canadian Mathematics Bulletin, 1973, \textbf{16} (4): 525-529. DOI: 10.4153/CMB-1973-086-2.


%%--------------------network

\bibitem{Dorogovstev-2002}S. N. Dorogovstev, A. V. Goltsev, J. F. F. Mendes. Pseufractal scale-free web. Physiacal review 2002, 65, 066122-066125.
\bibitem{Genio-Gross-Bassler2011}Charo I. Del Genio, Thilo Gross, and Kevin E. Bassler, Physical Review Letters 107, 178701 (2011)
\bibitem{Francesc-2004}Francesc Comellas. Recursive graphs with small-world scale-free properties. physical review E, \textbf{69}, (2004), 037101-037104.
\bibitem{Lu-Su-Guo2013}Zhe-Ming Lu, Yu-Xin Su and Shi-Ze Guo(2013). Deterministic scale-free small-world networks of arbitrary order. Physica A. 392(17):3555-3562.
\bibitem{Newman-Barabasi-Watts2006} M. E. J. Newman, A. -L. Barab\'{a}si, and D. J. Watts, The Structure and Dynamics of Networks. Princeton University Press, Princeton (2006).
\bibitem{L-Takacs-574-781-1990}L. Takacs. On the number of distinct forests. SIAM J. Disc. Math. 1990, 5 (4): 574-781.


%%-------------------------hypergraph application

\bibitem{Uriel-Kim-Han-Eran-2006} Feige, Uriel; Kim, Jeong Han; Ofek, Eran (2006). Witnesses for non-satisfiability of dense random 3CNF formulas. IEEE.

\bibitem{Beeri-Fagin-Maier-Yannakakis-1983} Beeri, C.; Fagin, R.; Maier, D.; Yannakakis, M. (1983). On the Desirability of Acyclic Database Schemes. Journal of the ACM. 30 (3): 479-513. DOI:10.1145/2402.322389.

\bibitem{Huang-Zhang-Yu-Jeffrey-Xu-2015} Huang, Jin; Zhang, Rui; Yu, Jeffrey Xu (2015), Scalable Hypergraph Learning and Processing (PDF), Proceedings of the IEEE International Conference on Data Mining

\bibitem{Brazil-M-Zachariasen-2015} Brazil, M; Zachariasen, M (2015). Steiner Trees in Graphs and Hypergraphs. Algorithms and Combinatorics. Springer. 29. DOI:10.1007/978-3-319-13915-9-5. ISBN 978-3-319-13915-9.

\bibitem{Zhou-Dengyong-Huang-Jiayuan-Scholkopf-Bernhard-2006} Zhou, Dengyong; Huang, Jiayuan; Scholkopf, Bernhard (2006), Learning with hypergraphs: clustering, classification, and embedding, Advances in Neural Information Processing Systems (2): 1601-1608.

\bibitem{Tan-Bu-Chen-Xu-Wang-He-2013} Tan, Shulong; Bu, Jiajun; Chen, Chun; Xu, Bin; Wang, Can; He, Xiaofei (2013), Using rich social media information for music recommendation via hypergraph model, ACM Transactions on Multimedia Computing, Communications, and Applications (1), Bibcode:2011smma. book.

\bibitem{Liu-Qingshan-Huang-Metaxas-Dimitris-2013} Liu, Qingshan; Huang, Yuchi; Metaxas, Dimitris N. (2013), Hypergraph with sampling for image retrieval, Pattern Recognition, 44 (10-11): 2255-2262, DOI:10.1016/j.patcog.2010.07.014

\bibitem{Patro-Rob-Kingsoford-Carl-2013} Patro, Rob; Kingsoford, Carl (2013), Predicting protein interactions via parsimonious network history inference, Bioinformatics, 29 (10-11): 237-246, DOI:10.1093/bioinformatics/btt224, PMC 3694678, PMID 23812989

\bibitem{Gao-Wang-Zha-Shen-Li-Wu-Xindong-2013}Gao, Tue; Wang, Meng; Zha, Zheng-Jun; Shen, Jialie; Li, Xuelong; Wu, Xindong (2013), Visual-textual joint relevance learning for tag-based social image search, IEEE Transactions on Image Processing, 22 (1): 363-376. DOI:10.1109/tip.2012.2202676, PMID 22692911, S2CID 7432373

\bibitem{Tian-Hwang-TaeHyun-Kuang-Rui-2009} Tian, Ze; Hwang, TaeHyun; Kuang, Rui (2009), A hypergraph-based learning algorithm for classifying gene expression and arrayCGH data with prior knowledge, Bioinformatics, 25 (21): 2831-2838. DOI:10.1093/bioinformatics/btp467, PMID 19648139

\bibitem{Goldstein-A-1982} Goldstein, A (1982). A Directed Hypergraph Database: A Model for the Local Loop Telephone Plant. The Bell System Technical Journal. 61.

\bibitem{Ranshous-Stephen-more-2017} Ranshous, Stephen; Joslyn, Cliff; Kreyling, Sean; Nowak, Kathleen; Samatova, Nagiza; West, Curtis; Winters, Samuel (2017). Exchange Pattern Mining in the Bitcoin Transaction Directed Hypergraph (PDF). Financial Cryptography and Data Security. Springer. DOI: 10.1007/978-3-319-70278-0-16.

\bibitem{Ausiello-Giorgio-Laura-Luigi-2017} Ausiello, Giorgio; Laura, Luigi (2017). Directed hypergraphs: Introduction and fundamental algorithms -- A survey. Theoretical Computer Science. 658: 293-306. DOI:10.
 1016/j.tcs.2016.03.016.

\bibitem{Gallo-Longo-Pallottino-Nguyen-1993} Gallo, G.; Longo, G.; Pallottino, S.; Nguyen, S. (1993). Directed hypergraphs and applications. Discrete Applied Mathematics. 42 (2-3): 177-201. DOI:10.1016/0166-218X(93)90045-P.

\bibitem{Peter-Battaglia-Jessica-et-al-arXiv-2018} Peter W. Battaglia, Jessica B. Hamrick, et al. Relational inductive biases, deep learning, and graph networks. arXiv: 1806. 01261v2 [cs. LG] 11 Jun 2018.1806.01261v2 [cs. LG] 11 Jun 2018.

\bibitem{N-K-Sudev-2015}N. K. Sudev. On The Sumset-Labeling Of Graphs. submitted to Discrete Mathematics, Algorithms and Applications, 2015. arXiv: 1508.00319vl [math. GM] 3 Aug. 2015.

%%--------------------Jin Xu

\bibitem{Jin-Xu-Maximal-Science-Press-2019}Jin Xu. Maximal Planar Graph Theory-Structure, Construction, Coloring (Volume One, Chinses), Science Press, Beijing, January 2019.
\bibitem{Jin-Xu-55-56-configurations-arXiv-2107-05454v1} Jin Xu. 55- and 56-configurations are reducible. arXiv:2107.05454v1 [math.CO] 4 Jul 2021.

%%--------------------Xia Liu

\bibitem{Liu-Yao-Zhang-Chen-Liu-Yao-ICT2014}Xia Liu, Bing Yao, Wanjia Zhang, Xiang'en Chen, Xinsheng Liu, Ming Yao. Uniformly Bound-Growing Network Models And Their Spanning Trees. The 2014 International conference on information and Communication technologies ICT 2014, page 714-718.

%%--------------------Hui Sun

\bibitem{Sun-Zhang-Zhao-Yao-2017}Hui Sun, Xiaohui Zhang, Meimei Zhao and Bing Yao. New Algebraic Groups Produced By Graphical Passwords Based On Colorings And Labellings. ICMITE 2017, MATEC Web of Conferences \textbf{139}, 00152 (2017), DOI: 10. 1051/matecconf/201713900152

%%--------------------Jing Su

\bibitem{su-yan-yao-2018}Jing Su, Guanghui Yan, Bing Yao. Generalized Edge Magic Labellings Of Apollonian Network Models. Journal of Jilin University, 2018, 56(3): 567-572.

%%--------------------Hongyu Wang

\bibitem{Wang-Su-Yao-submitted-ITOEC2020}Hongyu Wang, Jing Su, Bing Yao. Graphic Groups Towards Cryptographic Systems Resisting Classical And Quantum Computers. 2020 IEEE 5th Information Technology and Mechatronics Engineering Conference (ITOEC 2020), 1804-1808.

%%--------------------Chao Yang

\bibitem{Yang-Yao-Ren-2016}Chao Yang, Bing Yao, Han Ren. A Note on Graph Proper Total Colorings with Many Distinguishing Constraints. Information processing letters V 16,6 396-400. (2016) DOI: 10.1016/j/ipl.2015.11.04, ISSN:0020-0190.

%%--------------------Bing Yao, arXiv

\bibitem{Yao-Su-Ma-Wang-Yang-arXiv-2202-03993v1}Bing Yao, Jing Su, Fei Ma, Hongyu Wang, Chao Yang. Topological Authentication Technique In Topologically Asymmetric Cryptosystem. arXiv: 2202.03993v1 [cs. CR] 8 Feb 2022. 1-205.
 https: //doi.org/10.48550/arXiv.2202.03993
\bibitem{Bing-Yao-arXiv:2207-03381}Bing Yao, Xiaohui Zhang, Hui Sun, Jing Su, Fei Ma, Hongyu Wang. Parameterized Colorings And Labellings Of Graphs In Topological Coding. arXiv: 2207.03381v1 [cs. IT], 7 Jul 2022. 1-149.
https: //doi.org/10.48550/arXiv.2207.03381
\bibitem{Bing-et-al-arXiv-asymmetric-4520331}Bing Yao, Chao Yang, Xia Liu, Fei Ma, Jing Su, Hui Sun, Xiaohui Zhang and Yarong Mu. Strings And Colorings Of Topological Coding Towards Asymmetric Topology Cryptography. arXiv:2209.15312v1 [cs. IT] 30 Sep 2022. 1-209. https://doi.org/10.48550/arXiv.2209.15312
\bibitem{Yao-Ma-arXiv-2201-13354v1}Bing Yao, Fei Ma. Graph Set-Colorings And Hypergraphs In Topological Coding. arXiv: 2201.13354v1 [cs. CR] 31 Jan 2022. 1-70. https://doi.org/10. 48550/arXiv.2201.13354
\bibitem{Bing-Yao-2020arXiv}Bing Yao. Graphic Lattices and Matrix Lattices Of Topological Coding. arXiv: 2005.03937v1 [cs. IT] 8 May 2020. 1-111. https://doi.org/10.48550/arXiv.2005.03937

\bibitem{Bing-Yao-Hongyu-Wang-arXiv-2020-homomorphisms}Bing Yao, Hongyu Wang. Graph Homomorphisms Based On Particular Total Colorings of Graphs and Graphic Lattices. arXiv: 2005.02279v1 [math. CO] 5 May 2020. 1-13. https://doi.org/10.48550/arXiv.2005.02279
\bibitem{Yao-Wang-2106-15254v1}Bing Yao, Hongyu Wang. Recent Colorings And Labelings In Topological Coding. arXiv: 2106.15254v1[cs. IT] 29 Jun 2021. 1-247. https://doi.org/10.48550/arXiv.2106.15254 (242 pages, 123 reference papers, over 150 graph labelings and colorings)

\bibitem{Yao-Mu-Sun-Sun-Zhang-Wang-Su-Zhang-Yang-Zhao-Wang-Ma-Yao-Yang-Xie2019}Bing Yao, Yarong Mu, Yirong Sun, Hui Sun, Xiaohui Zhang, Hongyu Wang, Jing Su, Mingjun Zhang, Sihua Yang, Meimei Zhao, Xiaomin Wang, Fei Ma, Ming Yao, Chao Yang, Jianming Xie. Using Chinese Characters To Generate Text-Based Passwords For Information Security. arXiv:1907.05406v1 [cs. IT] 11 Jul 2019. 1-59. https://doi.org/10.48550/arXiv.1907.05406

\bibitem{Yao-Zhang-Sun-Mu-Sun-Wang-Wang-Ma-Su-Yang-Yang-Zhang-2018arXiv}Bing Yao, Xiaohui Zhang, Hui Sun, Yarong Mu, Yirong Sun, Xiaomin Wang, Hongyu Wang, Fei Ma, Jing Su, Chao Yang, Sihua Yang, Mingjun Zhang. Text-based Passwords Generated From Topological Graphic Passwords. arXiv: 1809.04727v1 [cs. IT] 13 Sep 2018. 1-35. https://doi.org/10.48550/arXiv.1809.04727
\bibitem{Yao-Sun-Zhang-Mu-Sun-Wang-Su-Zhang-Yang-Yang-2018arXiv}Bing Yao, Hui Sun, Xiaohui Zhang, Yarong Mu, Yirong Sun, Hongyu Wang, Jing Su, Mingjun Zhang, Sihua Yang, Chao Yang. Topological Graphic Passwords And Their Matchings Towards Cryptography. arXiv: 1808.03324v1 [cs. CR] 26 Jul 2018. 1-205. https://doi.org/10.48550/arXiv.1808.03324.

%%--------------------Bing Yao


\bibitem{Yao-Wang-Ma-Wang-Degree-sequences-2021}Bing Yao, Xiaomin Wang, Fei Ma, Hongyu Wang. Number-Based Strings And Degree-sequences Of Topological Cryptography. 2021 IEEE 5th Advanced Information Technology, Electronic and Automation Control Conference (IAEAC 2021) will be held on March 12-14, 2021 in Chongqing China.
\bibitem{Yao-Su-Sun-Wang-Graph-Operations-2021} Bing Yao, Jing Su, Hui Sun, Hongyu Wang. Graph Operations For Graphic Lattices and Graph Homomorphisms In Topological Cryptosystem. submitted to 2021 IEEE 5th Information Technology, Networking, Electronic and Automation Control Conference (ITNEC 2021).
\bibitem{Yao-Yang-Yao-2020-distinguishing}Bing Yao, Chao Yang, Ming Yao. Coding Techniques From Distinguishing Colorings In Topological Coding. 2020 IEEE 9th Joint International Information Technology and Artificial Intelligence Conference (ITAIC 2020): 77-83.
\bibitem{Bing-Yao-Hongyu-Wang-graph-homomorphisms-2020}Bing Yao, Hongyu Wang. Graph Homomorphisms Based On Particular Total Colorings of Graphs and Graphic Lattices. 2020 IEEE International Conference On Information Technology, Big Data And Artificial Intelligence (ICIBA 2020): 348-353.
\bibitem{Yao-Su-Wang-Hui-Sun-ITAIC2020}Bing Yao, Jing Su, Hongyu Wang, Hui Sun. Sequence-type Colorings of Topological Coding Towards Information Security. 2020 IEEE 9th Joint International Information Technology and Artificial Intelligence Conference (ITAIC 2020): 1226-1231.

\bibitem{Yao-Wang-Ma-Su-Wang-Sun-2020ITNEC}Bing Yao, Hongyu Wang, Fei Ma, Jing Su, Xiaomin Wang, Hui Sun. On Real-Valued Total Colorings Towards Topological Authentication In Topological Coding. 2020 IEEE 4th Information Technology, Networking, Electronic and Automation Control Conference (ITNEC 2020). pp 1751-1756.
\bibitem{Yao-Zhao-Mu-Sun-Zhang-Zhang-Yang-IAEAC2019}Bing Yao, Meimei Zhao, Yarong Mu, Yirong Sun, Xiaohui Zhang, Mingjun Zhang, Sihua Yang. Matrices From Topological Graphic Coding of Network Security. 2019 IEEE 4th Advanced Information Technology, Electronic and Automation Control Conference (IAEAC 2019)pp 1992-1996.
\bibitem{Yao-Sun-Zhang-Mu-Wang-Jin-Xu-2018} Bing Yao, Hui Sun, Xiaohui Zhang, Yarong Mu, Hongyu Wang, Jin Xu. New-type Graphical Passwords Made By Chinese Characters With Their Topological Structures. Procceding of 2018 2nd IEEE Advanced Information Management,Communicates,Electronic and AutoMation Control Conference(IMCEC 2018), 1606-1610.

\bibitem{Yao-Mu-Sun-Zhang-Wang-Su-Ma-IAEAC-2018}Bing Yao, Yarong Mu, Hui Sun, Xiaohui Zhang, Hongyu Wang, Jing Su, Fei Ma. Algebraic Groups For Construction Of Topological Graphic Passwords In Cryptography. 2018 IEEE 3rd Advanced Information Technology, Electronic and Automation Control Conference (IAEAC 2018), 2211-2216.
\bibitem{Yao-Sun-Zhang-Li-Zhao2017}Bing Yao, Hui Sun, Xiaohui Zhang, Jingwen Li, Meimei Zhao. Applying Graph Set-Labellings Having Constraint Sets Towards New Graphical Passwords. 2017 IEEE 3rd Information Technology and Mechatronics Engineering Conference (IEEE ITOEC 2017): 155-158.
\bibitem{Yao-Sun-Zhang-Li-Yan-Zhang-Wang-ITOEC-2017}Bing Yao, Hui Sun, Xiaohui Zhang, Jingwen Li, Guanghui Yan, Mingjun Zhang, Guoxing Wang. Connections Between Traditional Coloring/labeliings And New Set-coloring/labeliings On Graphical Passwords of Communication. 2017 IEEE 3rd Information Technology and Mechatronics Engineering Conference (IEEE ITOEC 2017): 1117-1121.
\bibitem{Yao-Sun-Zhao-Li-Yan-2017}Bing Yao, Hui Sun, Meimei Zhao, Jingwen Li, Guanghui Yan. On Coloring/Labelling Graphical Groups For Creating New Graphical Passwords. (ITNEC 2017) 2017 IEEE 2nd Information Technology, Networking, Electronic and Automation Control Conference, (2017) 1371-1375.

\bibitem{Yao-Ma-Su-Wang-Zhao-Yao-2016}Bing Yao, Fei Ma, Jing Su, Xiaomin Wang, Xiyang Zhao, Ming Yao. Scale-Free Multiple-Partite Models Towards Information Networks. Proceedings of 2016 IEEE Advanced Information Management, Communicates, Electronic and Automation Control Conference (IMCEC 2016) pp 549-554.
\bibitem{Yao-Wang-Su-Ma-Yao-Zhang-Xie2016}Bing Yao, Xiaomin Wang, Jing Su, Fei Ma, Ming Yao, Mingjun Zhang, and Jianmin Xie. Methods And Problems Attempt in Scale-Free Models From Complex Networks. Joint International Information Technology, Mechanical and Electronic Engineering Conference (JIMEC 2016), ISSN 2352-5401, Volume 59, pp57-61. ISBN: 978-94-6252-234-3. DOI: 10.2991/jimec-16.2016.11.
\bibitem{Yao-Liu-Zhang-Chen-Zhang-Yao-Zhao-2013}Bing Yao, Xia Liu, Wan-jia Zhang, Xiang'en Chen, Xiao-min Zhang, Ming Yao, Zheng-xue Zhao. Applying Graph Theory To The Internet of Things.2013 IEEE International Conference on High Performance Computing and ComMunications and 2013 IEEE International Conference on Embedded and Ubiquitous Computing, 2354-2361. DOI: 10.1109/HPCC.and.EUC.2013.339


%%--------------------Hongyu Wang

\bibitem{Hongyu-Wang-2018-Doctor-thesis} Hongyu Wang. The Structure And Theoretical Analysis On Topological Graphic Passwords. Doctor's thesis. School of Electronics Engineering and Computer Science, Peking University, 2018.

%%--------------------Jianfang Wang

\bibitem{Jianfang-Wang-Hypergraphs-2008} Jianfang Wang. The Information Hypergraph Theory. Science Press, Beijing, 2008.

%%--------------------Shuhong Wu

\bibitem{Shuhong-Wu-Accurate-2007}Shuhong Wu. The Accurate Formulas of $A(n,k)$ and $P(n,k)$. Journal Of Mathematical Research And Exposition, \textbf{27}, NO.2 (2007) 437-444.

\bibitem{WU-Qi-qi-2001}Qiqi Wu. On the law of left shoulder and law of oblique line for constructing a large table of $P(n,k)$ quickly (in Chinese). Sinica (Chin. Ser.), 2001, \textbf{44} (5): 891-897.

%%--------------------Wang Xiaoming

\bibitem{Wang-Wang-Yao2019-Euler-Split}Xiaomin Wang, Hongyu Wang, Bing Yao. Applying Divided Operations Towards New Labellings Of Euler's Graphs. submitted, 2019.
\bibitem{Wang-Su-Yao-2021-computer-science}Xiaoming Wang, Jing Su, Bing Yao. Algorithms Based on Lattice Thought for Graph Structure Similarity. Computer Science (Cinese) Vol. 48, No. 6A, 2021, 543-551. DOI: 10.11896/jsjkx.201100167.

%%-------------------- Wang Xiao-Yun

\bibitem{Wang-Xiao-Yun-Liu-2014}Xiao-Yun Wang and Ming-Jie Liu. Survey of Lattice-based Cryptography. Journal of Cryptologic Research, 2014, \textbf{1} (1):13-27.

%%--------------------Zhang Zhongzhi

\bibitem{Zhang-Zhou-Fang-Guan-Zhang2007}Zhang Zhongzhi, Zhou Shuigeng, Fang Lujun, Guan Jihong, Zhang Yichao. Maximal planar scale-free Sierpinski networks with small-world effect and power-law strength-degree correlation. EPL (Europhysics Letters), 2007, 79: 38007.
\bibitem{Zhang-Comellas-Fertin-Rong-2006}Zhang Zhongzhi, Comellas Francesc, Fertin Guillaume, Rong Lili. High dimensional Apollonian networks. Journal of Physics A: Mathematical and General, 2006, 39(8): 1811-1818.
\bibitem{Zhang-Rong-Guo2006}Zhongzhi Zhang, Lili Rong, Chonghui Guo. A deterministic small-world network created by edge iterations. Physica A 363 (2006) 567-572.

%%--------------------Xiangqian Zhou

\bibitem{Zhou-Yao-Chen-Tao2012}Xiangqian Zhou, Bing Yao, Xiang'en Chen and Haixia Tao. A proof to the odd-gracefulness of all lobsters. Ars Combinatoria \textbf{103} (2012): 13-18.

\end{thebibliography}
\end{document}